\definecolor{left}{HTML}{000080}
\definecolor{alizarin}{rgb}{0.82,0.1,0.26}
\definecolor{midnightblue}{rgb}{0,0.2,0.4}
\newcounter{summ}
\newcounter{publ}
\newcounter{bibl}
\newcounter{appe}
\newcommand{\arXiv}[2]{arXiv:\href{http://arxiv.org/abs/#1}{#1 #2}}
\newcommand{\be}{\begin{equation}}
\newcommand{\ee}{\end{equation}}
\titleformat{\chapter}[hang]{\LARGE\bfseries}{\thechapter\hspace{20pt}}{0pt}{\LARGE\bfseries}
\numberwithin{equation}{chapter}
\newcounter{thmcounter}
\numberwithin{thmcounter}{chapter}
\theoremstyle{definition}
\newtheorem{definition}[thmcounter]{Definition}
\newtheorem*{definition*}{Definition}
\theoremstyle{remark}
\newtheorem{remark}[thmcounter]{Remark}
\newtheorem*{remark*}{Remark}
\theoremstyle{plain}
\newtheorem{corollary}[thmcounter]{Corollary}
\newtheorem{lemma}[thmcounter]{Lemma}
\newtheorem{proposition}[thmcounter]{Proposition}
\newtheorem{theorem}[thmcounter]{Theorem}
\newtheorem*{lathm*}{Liouville-Arnold theorem}
\renewcommand{\bibname}{References}
\def\BC{\mathrm{BC}}                        %
\def\1{{\boldsymbol 1}}                     %
\def\0{{\boldsymbol 0}}                     %
\def\blambda{{\boldsymbol\lambda}}          %
\def\bJ{{\boldsymbol J}}      %
\def\bP{{\boldsymbol P}}      %
\def\bQ{{\boldsymbol Q}}      %
\def\bR{{\boldsymbol R}}      %
\def\bU{{\boldsymbol U}}      %
\def\bV{{\boldsymbol V}}      %
\def\bW{{\boldsymbol W}}      %
\def\bX{{\boldsymbol X}}      %
\def\cA{{\mathcal A}}                       %
\def\cC{{\mathcal C}}                       %
\def\cD{{\mathcal D}}                       %
\def\cE{\mathcal{E}}
\def\cF{{\mathcal F}}                       %
\def\cG{{\mathcal G}}                       %
\def\cH{{\mathcal H}}                       %
\def\cI{\mathcal{I}}      %
\def\cJ{\mathcal{J}}      %
\def\cL{{\mathcal L}}                       %
\def\cO{{\mathcal O}}                       %
\def\cP{{\mathcal P}}                       %
\def\cZ{{\mathcal Z}}                       %
\def\fc{\mathfrak{c}}     %
\def\fL{{\mathfrak L}}                  %
\def\fQ{{\mathfrak Q}}                  %
\def\tr{\mathrm{tr}}                        %
\def\diag{\mathrm{diag}}                    %
\def\ri{{\rm i}}                            %
\def\rA{{\rm A}}                            %
\def\rB{{\rm B}}                            %
\def\rC{{\rm C}}                            %
\def\rD{{\rm D}}                            %
\DeclareMathOperator{\sgn}{sgn}
\DeclareMathOperator{\ws}{s}
\def\adj{\mathrm{adj}}
\def\id{{\rm id}}                           %
\def\vD{\mathrm{vD}}                         %
\def\Pu{\mathrm{P}}                         %
\def\red{{\rm red}}                         %
\def\vreg{{\rm vreg}}                         %
\def\b{\mathrm{b}}
\def\loc{\mathrm{loc}}
\def\C{\mathbb{C}}                          %
\def\CP{\mathbb{CP}}
\def\D{\mathbb{D}}                          %
\def\N{\mathbb{N}}                          %
\def\R{\mathbb{R}}                          %
\def\T{\mathbb{T}}                          %
\def\Z{\mathbb{Z}}                          %
\def\SL{{\rm SL}}                           %
\def\SB{{\rm SB}}                           %
\def\GL{{\rm GL}}                           %
\def\UN{{\rm U}}                            %
\def\SO{{\rm SO}}                           %
\def\SU{{\rm SU}}                           %
\def\gl{\mathfrak{gl}}                      %
\def\sl{\mathfrak{sl}}                      %
\def\un{\mathfrak{u}}                       %
\def\su{\mathfrak{su}}                      %
\def\fH{\mathfrak{H}}                       %
\def\sV{{\mathsf V}}                        %
\def\sv{{\mathsf v}}                        %
\def\sw{{\mathsf w}}                        %
\newcommand{\bC}{\mathbb{C}}
\newcommand{\bN}{\mathbb{N}}
\newcommand{\cB}{\mathcal{B}}
\newcommand{\cK}{\mathcal{K}}
\newcommand{\cM}{\mathcal{M}}
\newcommand{\cN}{\mathcal{N}}
\newcommand{\cR}{\mathcal{R}}
\newcommand{\cS}{\mathcal{S}}
\newcommand{\cU}{\mathcal{U}}
\newcommand{\cW}{\mathcal{W}}
\newcommand{\mfa}{\mathfrak{a}}
\newcommand{\mfc}{\mathfrak{c}}
\newcommand{\mfg}{\mathfrak{g}}
\newcommand{\mfgl}{\mathfrak{gl}}
\newcommand{\mfk}{\mathfrak{k}}
\newcommand{\mfm}{\mathfrak{m}}
\newcommand{\mfp}{\mathfrak{p}}
\newcommand{\mfu}{\mathfrak{u}}
\newcommand{\mfX}{\mathfrak{X}}
\newcommand{\bsone}{\boldsymbol{1}}
\newcommand{\bsLambda}{\boldsymbol{\Lambda}}
\newcommand{\bsTheta}{\boldsymbol{\Theta}}
\newcommand{\bsX}{\boldsymbol{X}}
\newcommand{\dd}{\mathrm{d}}
\newcommand{\reg}{\mathrm{reg}}
\newcommand{\Id}{\mathrm{Id}}
\newcommand{\ad}{\mathrm{ad}}
\newcommand{\wad}{\widetilde{\mathrm{ad}}}
\renewcommand{\Re}{\mathrm{Re}}
\renewcommand{\Im}{\mathrm{Im}}
\newcommand{\Real}{\mathrm{Re}}
\newcommand{\Imag}{\mathrm{Im}}
\newcommand{\End}{\mathrm{End}}
\newcommand{\htheta}{{\hat{\theta}}}
\newcommand{\hTheta}{{\hat{\Theta}}}
\newcommand{\hbsTheta}{{\hat{\bsTheta}}}
\newcommand{\eps}{\varepsilon}
\newcommand{\half}{\frac{1}{2}}
\newcommand{\PD}[2]{\frac{\partial #1}{\partial #2}}
\def\pgf@sys@pdf@mark@pos@pgfpageorigin{\pgfqpoint{0sp}{-3281837sp}}
\begin{document}
\thispagestyle{empty}
\setcounter{page}{1}
\renewcommand\thepage{\Alph{page}}
\pdfbookmark{Cover}{Cover}
\begin{tikzpicture}[remember picture,overlay,shift={(current page.center)}]
\node[shading=axis,rectangle,left color=left,right color=left!40!white,shading angle=135, anchor=north,minimum width=\paperwidth,minimum height=\paperheight] (box) at (current page.north){};

\begin{scope}[white,scale=1.8,shift={(-3,-2)}]
\draw(3.12,6.2)node{\Large\bf\contour{black}{INTEGRABLE MANY-BODY SYSTEMS}};
\draw(3.12,5.7)node{\Large\bf\contour{black}{OF CALOGERO-RUIJSENAARS TYPE}};
\draw(3.12,4.35)node{\large\sc\bfseries\contour{black}{Tam\'as~F.~G\"orbe}};

\draw(0,0)--(4,0)--(6,2)--(2,2)--cycle (1,0)--(3,2) (2,0)--(4,2) (3,0)--(5,2) (1,1)--(5,1);
\draw(4.5,.5)node[below,rotate=45]{\small Classical}
(5.5,1.5)node[below,rotate=45]{\small Quantum}
(2.5,2)node[above]{I} (3.5,2)node[above]{II}  (4.5,2)node[above]{III}
(5.5,2)node[above]{IV};
\draw[->,thick](3.4,2.5)--(2.6,2.5) node[midway,yshift=1em]{$\alpha\to 0$};
\draw[->,thick](4.4,2.5)--(3.6,2.5) node[midway,yshift=1em]{$\alpha\to\mathrm{i}\alpha$};
\draw(5.4,2.5)--(4.6,2.5) node[midway,yshift=2em]{$\omega\to\pi/2\alpha$};
\draw[->,thick](5.4,2.5)--(4.6,2.5) node[midway,yshift=1em]{$\omega'\to\mathrm{i}\infty$};
\draw(1,1)node[above,xshift=-.7em,yshift=.7em,rotate=45]{Relativistic};
\draw[->,thick](0,-.25)--(0,-1.5) node[midway,right,yshift=1em]{$\beta\to 0$};
\draw[thick,<-](5,0)--(6,1.) node[midway,below,xshift=.7em,yshift=-.7em]{$\hbar\to 0$};
\draw[thick,yellow] (1,-2.5) to [out=45,in=135,looseness=300] (1.01,-2.5);
\draw[thick,yellow] (2,.5) to [out=45,in=135,looseness=300] (2.01,.5);
\draw[thick,yellow] (3,-2.5) to [out=90,in=240,looseness=1] (1,.5);
\draw[thick,yellow] (2,-2.5) to [out=90,in=240,looseness=1] (1,.5);
\draw[thick,yellow] (3,-2.5) to [out=90,in=250,looseness=1] (3,.5);
\draw[thick,yellow] (3,.5) to [out=90,in=170,looseness=.8] (4,1.5);
\draw[thick,yellow] (3,.5) to [out=60,in=120,looseness=1] (4,.5);
\draw[black,fill=white] (1,-2.5)circle(.05) (2,-2.5)circle(.05) (3,-2.5)circle(.05)
(1,.5)circle(.05) (2,.5)circle(.05) (3,.5)circle(.05) (4,.5)circle(.05) (4,1.5)circle(.05);

\begin{scope}[shift={(0,-3)}]
\draw(0,0)--(4,0)--(6,2)--(2,2)--cycle (1,0)--(3,2) (2,0)--(4,2) (3,0)--(5,2) (1,1)--(5,1);
\draw(4.5,.5)node[below,rotate=45]{\small Classical}
(5.5,1.5)node[below,rotate=45]{\small Quantum}
(.5,0)node[below]{I} (1.5,0)node[below]{II}  (2.5,0)node[below]{III}
(3.5,0)node[below]{IV};
\draw(1,1)node[above,xshift=-.7em,yshift=.7em,rotate=45]{Nonrelativistic};
\end{scope}
\end{scope}
\end{tikzpicture}

\titlepage
\setcounter{page}{2}
\renewcommand\thepage{\Alph{page}}
\pdfbookmark{Titlepage}{Title page}
\null\vspace{\stretch{1}}
\begin{center}
{\Large\bf
INTEGRABLE MANY-BODY SYSTEMS\\[.5em]OF CALOGERO-RUIJSENAARS TYPE}\\[1em]
Ph.D. thesis\\[2em]
{\large\sc\bfseries Tam\'{a}s~F.~G\"{o}rbe}\\[1.5em]
\rm Department of Theoretical Physics, University of Szeged\\
Tisza Lajos krt 84-86, H-6720 Szeged, Hungary\\
website: \href{http://www.staff.u-szeged.hu/~tfgorbe/}{www.staff.u-szeged.hu/$\sim$tfgorbe}\\
e-mail: \href{mailto:tfgorbe@physx.u-szeged.hu}{tfgorbe@physx.u-szeged.hu}\\[2em]
Supervisor: \textsc{Prof. L\'{a}szl\'{o}~Feh\'{e}r}\\
Department of Theoretical Physics, University of Szeged

\vspace{\stretch{3}}

\includegraphics[width=.4\textwidth]{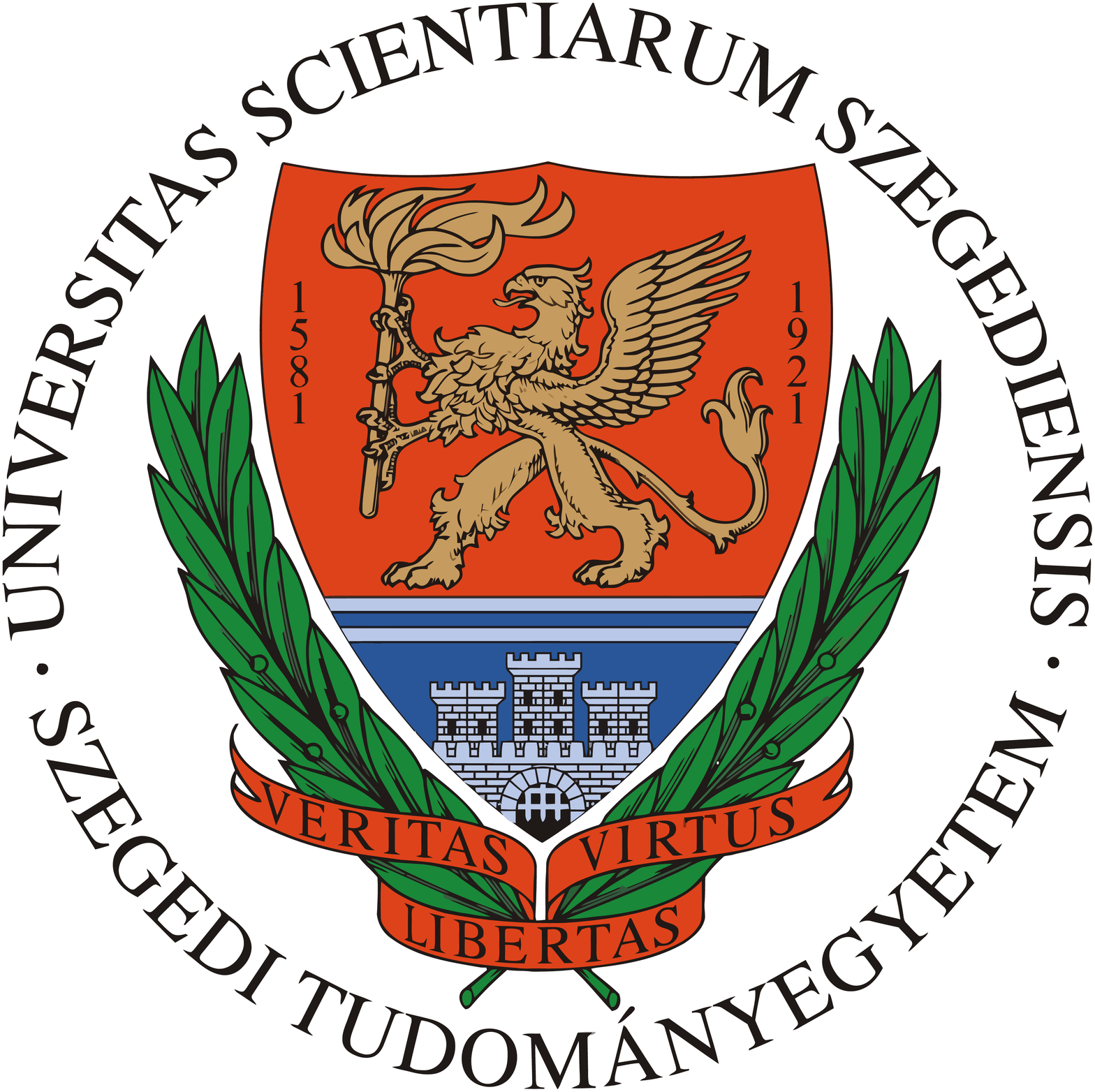}\vspace{\stretch{5}}

Doctoral School of Physics\\
Department of Theoretical Physics\\
Faculty of Science and Informatics\\
University of Szeged\\
Szeged, Hungary\\
2017
\null
\end{center}

\newpage

\thispagestyle{empty}

\noindent
\textbf{Author's declaration}

\medskip\noindent
This thesis is submitted in accordance with the regulations for the Doctor of Philosophy
degree at the University of Szeged. The results presented in the thesis are the author's
original work (see \ref{chap:publ}) with the exceptions of Section \ref{sec:1.1}, which
reviews some pre-existing material, and Subsections \ref{subsec:4.2.2}, \ref{subsec:4.3.1},
\ref{subsec:4.3.3}, \ref{subsec:4.3.4}, \ref{subsec:4.3.5} that contain results obtained
by B.G.~Pusztai. These are included to make the exposition self-contained.

The research was carried out within the Ph.D. programme
``Geometric and field-theoretic aspects of integrable systems''
at the Department of Theoretical Physics, University of Szeged
between September 2013 and August 2016.

\bigskip\bigskip\noindent
\textbf{On the cover}

\medskip\noindent
A schematic diagram of the various versions of Calogero-Ruijsenaars type integrable systems
with dots and lines indicating the ones studied in the thesis.

\bigskip\bigskip\noindent
\textbf{Eprint}

\medskip\noindent
An eprint of the thesis is freely available in the SZTE Repository of Dissertations:\\
\url{http://doktori.bibl.u-szeged.hu/3595/}

\bigskip\bigskip\noindent
\textbf{Keywords}

\medskip\noindent
\emph{integrable systems},
\emph{many-body systems},
\emph{Hamiltonian reduction},
\emph{action-angle duality},
\emph{action-angle variables},
\emph{Calogero-Moser-Sutherland},
\emph{Ruijsenaars-Schneider-van Diejen},
\emph{Poisson-Lie group},
\emph{Heisenberg double},
\emph{Lax matrix},
\emph{spectral coordinates},
\emph{compact phase space},
\emph{root system}

\bigskip\bigskip\noindent
\textbf{2010 Mathematics Subject Classification (MSC2010)}

\medskip\noindent
14H70, 37J15, 37J35, 53D20, 70E40, 70G65, 70H06

\bigskip\bigskip\noindent
\textbf{2010 Physics and Astronomy Classification Scheme (PACS2010)}

\medskip\noindent
02.30.Ik, 05.45.-a, 45.20.Jj, 47.10.Df

\bigskip\bigskip\noindent
\textbf{Please cite this thesis as}

\medskip\noindent
T.F.~G\"{o}rbe,
\emph{Integrable many-body systems of Calogero-Ruijsenaars type},
PhD thesis (2017); \doi{10.14232/phd.3595}

\vfill

\noindent
Copyright \copyright\ 2017 Tam\'{a}s~F.~G\"{o}rbe

\newpage

\thispagestyle{empty}
\null\vspace{\stretch{1}}
\begin{center}
\Large\emph{Orsinak}
\end{center}
\vspace{\stretch{3}}\null

\newpage

\chapter*{Acknowledgements}
\thispagestyle{empty}

I consider myself very lucky for I have so many people to thank.

First of all, it is with great pleasure that I express my deepest gratitude to my Ph.D. supervisor L\'{a}szl\'{o} Feh\'{e}r for the persistent support, guidance, and inspiration he provided me throughout my studies. He introduced me to the fascinating world of Integrable Systems and taught me the importance of honesty, modesty, and having high standards when it comes to research. I feel privileged to have worked with him.

I am grateful to my academic brother G\'{a}bor Pusztai for the work we did together. His superb lectures on Functional Analysis gave me a great appreciation of not only the subject, but also what didactic skill can achieve.

Thanks are due to Martin Halln\"{a}s for making my research visit to Loughborough University possible and for being such a great host. Our joint project helped me to delve into the related area of Multivariate Orthogonal Polynomials.

I am very thankful to Simon Ruijsenaars for hosting me at the University of Leeds. His insightful comments improved my work substantially.

I also want to thank everyone at the Department of Theoretical Physics for creating such a stimulating work environment. I especially enjoyed working alongside my fellow inhabitants of office 231, Gerg\H{o} Ro\'{o}sz and L\'{o}r\'{a}nt Szab\'{o}.

I am indebted to my high school maths teacher J\'{a}nos Mike for the unforgettable classes, which shifted my interest towards mathematics and physics.

The work was supported in part by the Hungarian Scientific Research Fund (OTKA) under the grant no. K-111697, and was sponsored by the EU and the State of Hungary, co-financed by the European Social Fund in the framework of T\'{A}MOP-4.2.4.A/2-11/1-2012-0001 National Excellence Program. The support by the \'{U}NKP-16-3 New National Excellence Program of the Ministry of Human Capacities is also acknowledged. Several short term study programs, made possible by Campus Hungary Scholarships and the Hungarian Templeton Program, contributed to my work.

Last but most important, I am grateful to my wonderful parents and brothers for their encouragement and support. I want to thank all my family and friends. Finally, this thesis is dedicated to my fianc\'{e}e Orsi, without whom it would not have been possible.

\frontmatter

\pdfbookmark{\contentsname}{Contents}
\tableofcontents

\mainmatter

\chapter*{Introduction}
\markboth{Introduction}{}
\addcontentsline{toc}{chapter}{Introduction}

Integrable Systems is a broad area of research that joins seemingly unrelated problems of natural sciences amenable to exact mathematical treatment\footnote{For those who are unfamiliar with Integrable Systems, we recommend reading the survey \cite{Ru15}.}. It serves as a busy crossroad of many subjects ranging from pure mathematics to experimental physics. As a result, the notion of `integrability' is hard to pinpoint as, depending on context, it can refer to different phenomena, and ``where you have two scientists you have (at least) three different definitions of integrability''\footnote{A quote from another good read, the article \emph{Integrability -- and how to detect it} \cite[pp. 31-94]{KGT04}.}. Fortunately, the systems of our interest are integrable in the Liouville sense, which has a precise definition (see below). Loosely speaking, in such systems an abundance of conservation laws restricts the motion and allows the solutions to be exactly expressed with integrals, hence the name.

\section{The golden age of integrable systems}

Studying integrable systems is by no means a new activity as its origins can be traced back to the early days of modern science, when Newton solved the gravitational two-body problem and derived Kepler's laws of planetary motion (for more, see \cite{Si12}). With hindsight, one might say that the solution of the Kepler problem was possible due to the existence of many conserved quantities, such as energy, angular momentum, and the Laplace-Runge-Lenz vector. In fact, the Kepler problem is a prime example of a (super)integrable system (also to be defined). As the mathematical foundations of Newtonian mechanics were established through work of Euler, Lagrange, and Hamilton, more and more examples of integrable/solvable mechanical problems were discovered. Just to name a few, these systems include the harmonic oscillator, the ``spinning tops''/ rigid bodies \cite{Au96} of Euler (1758), Lagrange (1788), and Kovalevskaya (1888), the geodesic motion on the ellipsoid solved by Jacobi (1839), and Neumann's oscillator model (1859). This golden age of integrable systems was ended abruptly in the late 1800s, when Poincar\'{e}, while trying to correct his flawed work on the three-body problem, realized that integrability is a fragile property, that even small perturbations can destroy \cite{Ch07}. This subsided scientific interest and the subject went into a dormant state for more than half a century.

\section{Definition of Liouville integrability}

In the Hamiltonian formulation of Classical Mechanics the state of a physical system, which has $n$ degrees of freedom, is encoded by $2n$ real numbers. These numbers consist of (generalised) positions $q=(q_1,\dots,q_n)$ and (generalised) momenta $p=(p_1,\dots,p_n)$ and are collectively called canonical coordinates of the space of states, the phase space. The time evolution of an initial state $(q_0,p_0)\in\R^{2n}$ is governed by Hamilton's equations of motion, a first-order system of ordinary differential equations that can be written as
\begin{equation*}
\dot{q}_j=\frac{\partial H}{\partial p_j},\quad
\dot{p}_j=-\frac{\partial H}{\partial q_j},\quad j=1,\dots,n,
\end{equation*}
where $H$ is the Hamiltonian, i.e. the total energy of the system. In modern terminology, a Hamiltonian system is a triple $(M,\omega,H)$, where the phase space $(M,\omega)$ is a $2n$-dimensional symplectic manifold\footnote{A symplectic manifold $(M,\omega)$ is a manifold $M$ equipped with a non-degenerate, closed $2$-form $\omega$.} and $H$ is a sufficiently smooth real-valued function on $M$. An initial state $x_0\in M$ evolves along integral curves of the Hamiltonian vector field $X_H$ of $H$ defined via $\omega(X_H,\cdot)=dH$. Darboux's theorem \cite[3.2.2 Theorem]{AM78} guarantees the existence of canonical coordinates\footnote{Notice the slight and customary abuse of notation as we use the symbols $q_j,p_j$ for representing real numbers as well as coordinate functions on $M$. Hopefully, this does not cause any confusion.} $(q,p)$ locally, in which by definition the symplectic form $\omega$ can be written as
\begin{equation*}
\omega=\sum_{j=1}^ndq_j\wedge dp_j,
\end{equation*}
and the equations of motion take the canonical form displayed above. The symplectic form $\omega$ gives rise to a Poisson structure on $M$, which is a handy device that takes two observables $f,g\colon M\to\R$ and turns them into a third one $\{f,g\}$, the Poisson bracket of $f$ and $g$ given by $\{f,g\}=\omega(X_f,X_g)$. In canonical coordinates, we have
$$\{f,g\}=\sum_{j=1}^n\bigg(
\frac{\partial f}{\partial q_j}\frac{\partial g}{\partial p_j}-
\frac{\partial f}{\partial p_j}\frac{\partial g}{\partial q_j}\bigg).$$
It is bilinear, skew-symmetric, satisfies the Jacobi identity and the Leibniz rule. The equations of motion, for any $f\colon M\to\R$, can be rephrased using the Poisson bracket
\begin{equation*}
\dot{f}=\{f,H\}.
\end{equation*}
Consequently, if $\{f,H\}=0$, that is $f$ Poisson commutes with the Hamiltonian $H$, then $f$ is a constant of motion. In fact, this relation is symmetric, since $\{f,H\}=0$ ensures that $H$ is constant along the integral curves of the Hamiltonian vector field $X_f$.

Having conserved quantities can simplify things, since it restricts the motion to the intersection of their level surfaces, selected by the initial conditions. Thus one should aim at finding as many independent Poisson commuting functions as possible. By independence we mean that at generic points (on a dense open subset) of the phase space the functions have linearly independent derivatives. Of course, the non-degeneracy of the Poisson bracket limits the maximum number of independent functions in involution to $n$. If this maximum is reached, we found a Liouville integrable system.

\begin{definition*}
A Hamiltonian system $(M,\omega,H)$, with $n$ degrees of freedom, is called \emph{Liouville integrable}, if there exists a family of independent functions $H_1,\dots,H_n$ in involution, i.e. $\{H_j,H_k\}=0$ for all $j,k$, and $H$ is a function of $H_1,\dots,H_n$.
\end{definition*}

The most prominent feature of Liouville integrable systems is the existence of action-angle variables. This is a system of canonical coordinates $I=(I_1,\dots,I_n)$, $\varphi=(\varphi_1,\dots,\varphi_n)$, in which the (transformed) Hamiltonians $H_1,\dots,H_n$ depend only on the action variables $I$, which are themselves first integrals, while the angle variables $\varphi$ evolve linearly in time. An important result is the following

\begin{lathm*}{\rm \cite[5.2.24 Theorem]{AM78}}
Consider $(M,\omega,H)$ to be a Liouville integrable system with the Poisson commuting functions $H_1,\dots,H_n$. Then the level set
\begin{equation*}
M_c=\{x\in M\mid H_j(x)=c_j,\ j=1,\dots,n\}
\label{}
\end{equation*}
is a smooth $n$-dimensional submanifold of $M$, which is invariant under the Hamiltonian flow of the system. Moreover, if $M_c$ is compact and connected, then it is diffeomorphic to an $n$-torus $\T^n=\{(\varphi_1,\dots,\varphi_n)\mod 2\pi\}$, and the Hamiltonian flow is linear on $M_c$, i.e. the angle variables $\varphi$ on $M_c$ satisfy $\dot{\varphi_j}=\nu_j$, for some constants $\nu_j$, $j=1,\dots,n$. 
\end{lathm*}

The action variables $I$ are also encoded in the level set $M_c$. Roughly speaking, they determine the size of $M_c$, since $I_j$ is obtained by integrating the canonical $1$-form the phase space over the $j$-th cycle of the torus $M_c$.

Another relevant notion is superintegrability, which requires the existence of extra constants of motion.

\begin{definition*}
A Liouville integrable system is called \emph{superintegrable}, if in addition to the Hamiltonians $H_1,\dots,H_n$ there exist independent first integrals $f_1,\dots,f_k$ ($1\leq k<n$). If $k=n-1$, then the system is \emph{maximally superintegrable}.
\end{definition*}

Examples of maximally superintegrable systems include the Kepler problem, the harmonic oscillator with rational frequencies, and the rational Calogero-Moser system considered in Chapter \ref{chap:1}. For more on the theory of integrable systems, see \cite{BBT03}.

\begin{remark*}
It should be noted that, although there is no generally accepted notion of integrability at the quantum level, there are quantum mechanical systems that are called \emph{integrable}.
\end{remark*}

\section{Solitary splendor: The renascence of integrability}

About fifty years ago a revival has taken place in the field of Integrable Systems, when Zabusky and Kruskal \cite{ZK65} conducted a numerical study of the Korteweg-de Vries (KdV) equation\footnote{The motivation for Zabusky and Kruskal's work was to understand the recurrent behaviour in the Fermi-Pasta-Ulam-Tsingou problem \cite{FPUT55}, which turns into the KdV equation in the continuum limit.}, that is the nonlinear $(1+1)$-dimensional partial differential equation
\begin{equation*}
u_t+6uu_x+u_{xxx}=0,
\end{equation*}
and re-discovered its stable solitary wave solutions\footnote{Korteweg and de Vries \cite{KdV1895} devised their equation to reproduce such stable travelling waves, that were first observed by Russell \cite{R1845} in the canals of Edinburgh.}, whose interaction resembled that of colliding particles, hence they gave them the name \emph{solitons}. Subsequently, Kruskal et al. \cite{GGKM67} started a detailed investigation of the KdV equation and found an infinite number of conservation laws associated to it. More explicitly, they showed that the eigenvalues of the Schr\"{o}dinger operator
\begin{equation*}
L=\partial_x^2+u
\end{equation*}
are invariant in time if the `potential' $u$ is a solution of the KdV equation. Moreover, they used the Inverse Scattering Method to reconstruct the potential from scattering data. Lax showed \cite{La68} that the KdV equation is equivalent to an equation involving a pair of operators, now called \emph{Lax pair}, of the form
\begin{equation*}
\dot{L}=[B,L],
\end{equation*}
where $L$ is the Schr\"{o}dinger operator above, and $B$ is a skew-symmetric operator. The commutator form of the Lax equation explains the isospectral nature of the operator $L$. The connection to integrable systems was made by Faddeev and Zakharov \cite{ZF71}, who showed that the KdV equation can be viewed as a completely integrable Hamiltonian system with infinitely many degrees of freedom. These initial findings renewed interest in integrable systems and their applications. For example, Lax pairs associated to other integrable systems were found and used to generate conserved quantities.

The ideas and developments presented so far were all about the KdV equation. However, there are other physically relevant nonlinear PDEs with soliton solutions, which have been solved using the Inverse Scattering Method. For example, the sine-Gordon equation \cite{AKNS73}
\begin{equation*}
\varphi_{tt}-\varphi_{xx}+\sin\varphi=0,
\end{equation*}
which can be interpreted as the equation that describes the twisting of a continuous chain of needles attached to an elastic band. It has different kinds of soliton solutions, called \emph{kink}, \emph{antikink}, and \emph{breather}, that can interact with one another. It is a relativistic equation, since its solutions are invariant under the action of the Poincar\'{e} group of $(1+1)$-dimensional space-time.

The nonlinear Schr\"{o}dinger equation \cite{ZS72} is another famous example. It reads
\begin{equation*}
\ri\psi_t+\frac{1}{2}\psi_{xx}-\kappa|\psi|^2\psi=0,
\end{equation*}
where $\psi$ is a complex-valued wave function and $\kappa$ is constant. It is also an exactly solvable Hamiltonian system \cite{ZM74}.
The equation is nonrelativistic (Galilei invariant).

Now let us list some applications of these soliton equations. The Korteweg-de Vries equation can be applied to describe shallow-water waves with weakly non-linear restoring forces and  long internal waves in a density-stratified ocean. It is also useful in modelling ion acoustic waves in a plasma and acoustic waves on a crystal lattice. The kinks and breathers of the sine-Gordon equation can used as models of nonlinear excitations in complex systems in physics and even in cellular structures. The nonlinear Schr\"{o}dinger equation is of central importance in fluid dynamics, plasma physics, and nonlinear optics as it appears in the Manakov system, a model of wave propagation in fibre optics.

Parallel to soliton theory, various exactly solvable quantum many-body systems appeared, that describe the interaction of quantum particles in one spatial dimension. These models proved to be a fruitful source of ideas and a great influence on the development of mathematical physics. Earlier important milestones include Bethe's solution of the one-dimensional Heisenberg model (Bethe Ansatz, 1931), Pauling's work on the $6$-vertex model (1935), Onsager's solution of the planar Ising model (1944), and the delta Bose gas of Lieb-Liniger (1963). At the level of classical mechanics, a crucial step was Toda's discovery of a nonlinear, one-dimensional lattice model \cite{To67} with soliton solutions. The Toda lattice is an infinite chain of particles interacting via exponential nearest neighbour potential. The nonperiodic and periodic Toda chains are $n$ particles with such interaction put on a line and a ring, and have the Hamiltonians
\begin{equation*}
H_{\mathrm{np}}=\frac{1}{2}\sum_{j=1}^np_j^2+\sum_{j=1}^{n-1}e^{2(q_{j+1}-q_j)},\quad\text{and}\quad
H_{\mathrm{per}}=\frac{1}{2}\sum_{j=1}^np_j^2+\sum_{j=1}^{n-1}e^{2(q_{j+1}-q_j)}+g^2e^{2(q_1-q_n)},
\end{equation*}
respectively. H\'{e}non \cite{He74} found $n$ conserved quantities for both of these systems, and Flashka \cite{Fl74,Fl74-2} and Manakov \cite{Ma75} found Lax pairs giving rise to these first integrals and proved them to be in involution. Therefore the Toda lattices are completely integrable. The scattering theory of the nonperiodic Toda lattice was examined by Moser \cite{Mo75-2}. Bogoyavlensky \cite{Bo76} generalised the Toda lattice to root systems of simple Lie algebras. Olshanetsky, Perelomov \cite{OP79,OP80} and Kostant \cite{Ko79} initiated group-theoretic treatments.

\section{Calogero-Ruijsenaars type systems}

In the early 1970s further exactly solvable quantum many-body systems were found by Calogero \cite{Ca69,Ca71} and Sutherland \cite{Su71,Su72}. Calogero considered particles on a line in harmonic confinement with a pairwise interaction inversely proportional to the square of their relative distances (rational case). Sutherland solved the corresponding problem of particles on a ring, i.e. interacting via a periodic pair-potential (trigonometric case). The classical versions were examined by Moser \cite{Mo75}, who provided Lax pairs, analysed the particle scattering in the rational case, which he proved to be Liouville integrable\footnote{The rational three-body system was treated by Marchioro \cite{Ma70} and to some extent by Jacobi \cite{J1886}.}. Models with short-range interaction (hyperbolic case) \cite{CRM75} and with elliptic potentials (elliptic case) \cite{Ca75} were also formulated (see Figures \ref{fig:1} and \ref{fig:2}).

We give a short description of the classical systems. Let the number of particles $n$ be fixed, $q=(q_1,\dots,q_n)\in\R^n$ collect the particle-positions and $p=(p_1,\dots,p_n)\in\R^n$ the conjugate momenta. The configuration space is usually some open domain $C\subseteq\R^n$, and the phase space $M$ is its cotangent bundle
\begin{equation*}
M=T^\ast C=\{(q,p)\mid q\in C,\ p\in\R^n\},
\end{equation*}
equipped with the canonical symplectic form
\begin{equation*}
\omega=\sum_{j=1}^ndq_j\wedge dp_j.
\end{equation*}
The Hamiltonian of the models can be written in the general form
\begin{equation*}
H_{\mathrm{nr}}=\frac{1}{2m}\sum_{j=1}^np_j^2+\frac{g^2}{m}\sum_{j<k}V(q_j-q_k),
\end{equation*}
where $m>0$ denotes the mass of particles, $g$ is a positive coupling constant regulating the strength of particle repulsion\footnote{The interaction is attractive, if $g^2<0$. Setting $g=0$ yields free particles.}, and the pair-potential $V$ can be one of four types:
\begin{equation*}
V(q)=\begin{cases}
1/q^2,&\text{rational (I)},\\
\alpha^2/\sinh^2(\alpha q),&\text{hyperbolic (II)},\\
\alpha^2/\sin^2(\alpha q),&\text{trigonometric (III)},\\
\wp(q;\omega,\omega'),&\text{elliptic (IV)}.\\
\end{cases}
\label{}
\end{equation*}
Here $\wp$ stands for Weierstrass's elliptic function with half-periods $(\omega,\omega')\in\R_+\times\ri\R_+$. By taking the parameter $\alpha\to\ri\alpha$, II and III are exchanged, while $\alpha\to 0$ produces I.

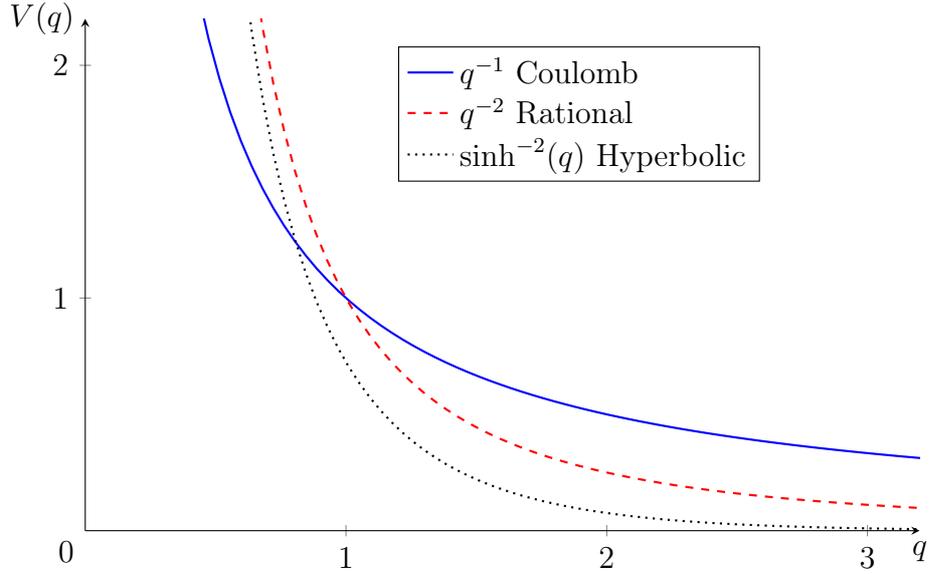
\begin{figure}[h!]
\centering
\begin{tikzpicture}
\begin{axis}[width=.81\textwidth,height=.54\textwidth,axis lines=center,
xlabel=$q$,ylabel = {$V(q)$},xlabel style={below},ylabel style={left},
legend cell align={left},xmin=0,xmax=3.2,ymin=0,ymax=2.2,xtick={1,2,3},ytick={1,2},
legend style={at={(axis cs:1.2,1.5)},anchor=south west},
after end axis/.code={\path (axis cs:0,0) node [anchor=north east] {0};}
]

\addplot[domain=.1:4.2,samples=100,color=blue,thick]{1/x};
\addlegendentry{$q^{-1}$ Coulomb}

\addplot[domain=.1:4.2,samples=100,color=red,dashed,thick]{1/x^2};
\addlegendentry{$q^{-2}$ Rational}

\addplot[domain=.1:4.2,samples=100,color=black,dotted,thick]{4/(exp(x)-exp(-x))^2};
\addlegendentry{$\sinh^{-2}(q)$ Hyperbolic}

\end{axis}
\end{tikzpicture}
\caption{Three repulsive potential functions. The Coulomb potential $V(q)=q^{-1}$ (solid blue) and rational potential $V(q)=q^{-2}$ (dashed red) express long-range interaction in comparison to the hyperbolic potential $V(q)=\sinh^{-2}(q)$ (dotted black).}
\label{fig:1}
\end{figure}

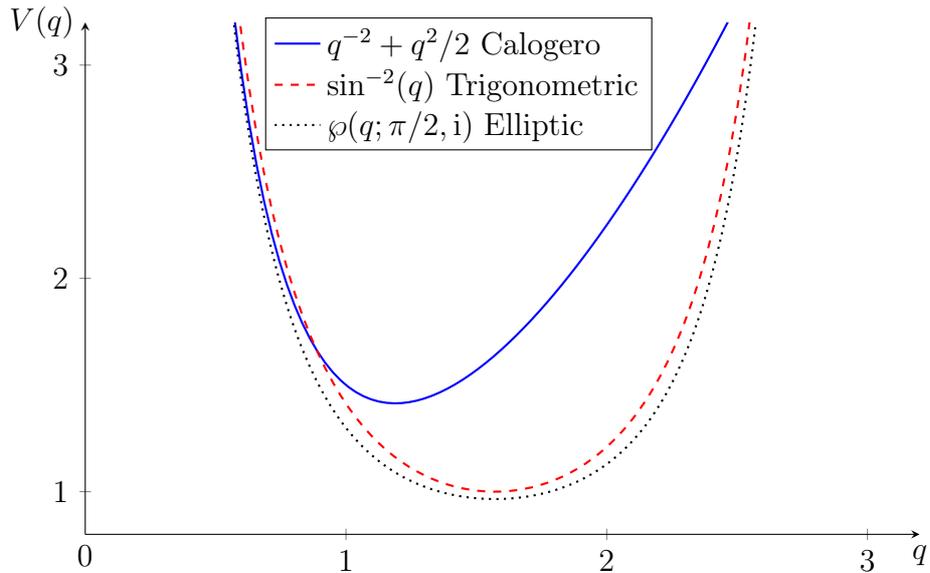
\begin{figure}[h!]
\centering
\begin{tikzpicture}
\begin{axis}[width=.81\textwidth,height=.54\textwidth,axis lines=center,
xlabel=$q$,ylabel={$V(q)$},xlabel style={below},ylabel style={left},
legend cell align={left},xmin=0,xmax=3.2,ymin=.8,ymax=3.2,xtick={1,2,3},ytick={1,2,3},
legend style={at={(axis cs:.69,2.6)},anchor=south west},
after end axis/.code={\path (axis cs:0,.8) node [anchor=north] {0};}
]

\addplot[domain=.1:3.,samples=100,color=blue,thick]{1/x^2+.5*x^2};
\addlegendentry{$q^{-2}+q^2/2$ Calogero}

\addplot[domain=.1:3.,samples=100,color=red,dashed,thick]{1/sin(deg(x))^2};
\addlegendentry{$\sin^{-2}(q)$ Trigonometric}

\addplot[domain=.1:3.,color=black,dotted,thick]
coordinates{
(0.1,100.004)(0.11,82.6496)(0.12,69.4503)(0.13,59.1785)(0.14,51.0284)(0.15,44.4536)(0.16,39.0729)(0.17,34.6138)(0.18,30.8774)(0.19,27.7155)(0.2,25.0162)(0.21,22.6936)(0.22,20.6807)(0.23,18.9249)(0.24,17.3843)(0.25,16.0251)(0.26,14.82)(0.27,13.7466)(0.28,12.7864)(0.29,11.9241)(0.3,11.1469)(0.31,10.4439)(0.32,9.80614)(0.33,9.22572)(0.34,8.69604)(0.35,8.21138)(0.36,7.76682)(0.37,7.35809)(0.38,6.98148)(0.39,6.63373)(0.4,6.312)(0.41,6.01379)(0.42,5.73689)(0.43,5.47935)(0.44,5.23942)(0.45,5.01557)(0.46,4.80642)(0.47,4.61072)(0.48,4.42738)(0.49,4.25539)(0.5,4.09387)(0.51,3.942)(0.52,3.79904)(0.53,3.66434)(0.54,3.53729)(0.55,3.41735)(0.56,3.304)(0.57,3.19679)(0.58,3.09531)(0.59,2.99916)(0.6,2.90801)(0.61,2.82152)(0.62,2.7394)(0.63,2.66137)(0.64,2.58718)(0.65,2.5166)(0.66,2.44941)(0.67,2.3854)(0.68,2.32441)(0.69,2.26624)(0.7,2.21075)(0.71,2.15778)(0.72,2.10719)(0.73,2.05886)(0.74,2.01266)(0.75,1.96847)(0.76,1.92621)(0.77,1.88576)(0.78,1.84703)(0.79,1.80994)(0.8,1.77441)(0.81,1.74035)(0.82,1.7077)(0.83,1.67639)(0.84,1.64635)(0.85,1.61753)(0.86,1.58987)(0.87,1.56332)(0.88,1.53782)(0.89,1.51333)(0.9,1.4898)(0.91,1.46719)(0.92,1.44547)(0.93,1.42459)(0.94,1.40451)(0.95,1.38521)(0.96,1.36665)(0.97,1.34879)(0.98,1.33162)(0.99,1.3151)(1.,1.29921)(1.01,1.28392)(1.02,1.26921)(1.03,1.25505)(1.04,1.24143)(1.05,1.22832)(1.06,1.21571)(1.07,1.20357)(1.08,1.1919)(1.09,1.18066)(1.1,1.16985)(1.11,1.15944)(1.12,1.14944)(1.13,1.13981)(1.14,1.13056)(1.15,1.12166)(1.16,1.1131)(1.17,1.10488)(1.18,1.09698)(1.19,1.08939)(1.2,1.0821)(1.21,1.0751)(1.22,1.06839)(1.23,1.06195)(1.24,1.05577)(1.25,1.04986)(1.26,1.04419)(1.27,1.03877)(1.28,1.03358)(1.29,1.02863)(1.3,1.02389)(1.31,1.01938)(1.32,1.01508)(1.33,1.01098)(1.34,1.00709)(1.35,1.0034)(1.36,0.999895)(1.37,0.996582)(1.38,0.993455)(1.39,0.990509)(1.4,0.987741)(1.41,0.985147)(1.42,0.982725)(1.43,0.980471)(1.44,0.978383)(1.45,0.976459)(1.46,0.974696)(1.47,0.973091)(1.48,0.971645)(1.49,0.970353)(1.5,0.969216)(1.51,0.968232)(1.52,0.967399)(1.53,0.966716)(1.54,0.966184)(1.55,0.965801)(1.56,0.965566)(1.57,0.96548)(1.58,0.965542)(1.59,0.965753)(1.6,0.966113)(1.61,0.966621)(1.62,0.96728)(1.63,0.968089)(1.64,0.969049)(1.65,0.970162)(1.66,0.971429)(1.67,0.972851)(1.68,0.974429)(1.69,0.976167)(1.7,0.978066)(1.71,0.980127)(1.72,0.982355)(1.73,0.98475)(1.74,0.987316)(1.75,0.990056)(1.76,0.992974)(1.77,0.996072)(1.78,0.999355)(1.79,1.00283)(1.8,1.00649)(1.81,1.01035)(1.82,1.01441)(1.83,1.01868)(1.84,1.02316)(1.85,1.02786)(1.86,1.03278)(1.87,1.03793)(1.88,1.04331)(1.89,1.04894)(1.9,1.05482)(1.91,1.06095)(1.92,1.06735)(1.93,1.07402)(1.94,1.08097)(1.95,1.08821)(1.96,1.09575)(1.97,1.1036)(1.98,1.11177)(1.99,1.12027)(2.,1.12912)(2.01,1.13832)(2.02,1.14788)(2.03,1.15783)(2.04,1.16816)(2.05,1.17891)(2.06,1.19008)(2.07,1.20168)(2.08,1.21375)(2.09,1.22628)(2.1,1.23931)(2.11,1.25285)(2.12,1.26692)(2.13,1.28154)(2.14,1.29674)(2.15,1.31253)(2.16,1.32895)(2.17,1.34601)(2.18,1.36376)(2.19,1.3822)(2.2,1.40139)(2.21,1.42134)(2.22,1.44209)(2.23,1.46368)(2.24,1.48614)(2.25,1.50952)(2.26,1.53385)(2.27,1.55919)(2.28,1.58557)(2.29,1.61305)(2.3,1.64168)(2.31,1.67152)(2.32,1.70262)(2.33,1.73506)(2.34,1.76888)(2.35,1.80418)(2.36,1.84102)(2.37,1.87948)(2.38,1.91965)(2.39,1.96162)(2.4,2.00549)(2.41,2.05136)(2.42,2.09934)(2.43,2.14956)(2.44,2.20215)(2.45,2.25723)(2.46,2.31496)(2.47,2.37549)(2.48,2.439)(2.49,2.50567)(2.5,2.5757)(2.51,2.6493)(2.52,2.7267)(2.53,2.80815)(2.54,2.89393)(2.55,2.98432)(2.56,3.07964)(2.57,3.18025)(2.58,3.28652)(2.59,3.39886)(2.6,3.51773)(2.61,3.64361)(2.62,3.77705)(2.63,3.91865)(2.64,4.06905)(2.65,4.22899)(2.66,4.39925)(2.67,4.58072)(2.68,4.77438)(2.69,4.98131)(2.7,5.20273)(2.71,5.44)(2.72,5.69463)(2.73,5.96832)(2.74,6.26299)(2.75,6.58082)(2.76,6.92424)(2.77,7.29605)(2.78,7.69943)(2.79,8.13801)(2.8,8.61597)(2.81,9.13812)(2.82,9.71004)(2.83,10.3382)(2.84,11.0302)(2.85,11.7949)(2.86,12.6429)(2.87,13.5865)(2.88,14.6407)(2.89,15.8235)(2.9,17.1564)(2.91,18.6661)(2.92,20.385)(2.93,22.3538)(2.94,24.623)(2.95,27.2571)(2.96,30.3386)(2.97,33.9747)(2.98,38.3069)(2.99,43.5248)(3.,49.8873)};
\addlegendentry{$\wp(q;\pi/2,\ri)$ Elliptic}
\end{axis}
\end{tikzpicture}
\caption{Three confining potential functions. Calogero potential $V(q)=q^{-2}+q^2/2$ (solid blue), trigonometric potential $V(q)=\sin^{-2}(q)$ (dashed red), and elliptic potential $V(q)=\wp(q;\omega,\omega')$ (dotted black) with half-periods $\omega=\pi/2$, $\omega'=\ri$.}
\label{fig:2}
\end{figure}

The elliptic potential degenerates to the other ones in various limits\footnote{It is worth mentioning that the Toda lattices (both periodic and nonperiodic) can be also obtained from the elliptic model. For details, see \cite{In89,Ru90-2,Ru99}.}
\begin{equation*}
\wp(q;\omega,\omega')\to\begin{cases}
1/q^2,&\text{if}\ \omega\to\infty,\ \omega'\to\ri\infty,\\
\alpha^2/3+\alpha^2/\sinh^2(\alpha q),&\text{if}\ \omega\to\infty,\ \omega'\to\ri\pi/2\alpha,\\
-\alpha^2/3+\alpha^2/\sin^2(\alpha q),&\text{if}\ \omega\to\pi/2\alpha,\ \omega'\to\ri\infty.
\end{cases}
\end{equation*}
These models are nonrelativistic, that is invariant under the Galilei group of $(1+1)$-dimensional space-time. Relativistic (i.e. Poincar\'{e}-invariant) integrable deformations were constructed\footnote{With the motivation to reproduce the scattering of sine-Gordon solitons using interacting particles.} by Ruijsenaars and Schneider \cite{RS86}, and Ruijsenaars \cite{Ru87}. The Hamiltonians of the relativistic systems read
\begin{equation*}
H_{\mathrm{rel}}=\frac{1}{\beta^2m}\sum_{j=1}^n\cosh(\beta p_j)\prod_{k\neq j}f(q_j-q_k),
\end{equation*}
where $\beta=1/mc>0$ is the deformation parameter ($c$ can be interpreted as the speed of light), and the function $f$ can be one of the following
\begin{equation*}
f(q)=\begin{cases}
(1+\beta^2g^2/q^2)^{1/2},&\text{rational (I)},\\
(1+\sin^2(\alpha\beta g)/\sinh^2(\alpha q))^{1/2},&\text{hyperbolic (II)},\\
(1+\sinh^2(\alpha\beta g)/\sin^2(\alpha q))^{1/2},&\text{trigonometric (III)},\\
(\sigma^2(\ri\beta g;\omega,\omega')[\wp(\ri\beta g;\omega,\omega')-\wp(q;\omega,\omega')])^{1/2},&\text{elliptic (IV)}.\\
\end{cases}
\end{equation*}
Here $\sigma$ is the Weierstrass sigma function. In the nonrelativistic limit $\beta\to 0$ we get
\begin{equation*}
\lim_{\beta\to 0}(H_{\mathrm{rel}}-\frac{n}{\beta^2m})=H_{\mathrm{nr}}.
\end{equation*}
The quantum Hamiltonians at the nonrelativistic level consist of commuting partial differential operators, obtained from classical Hamiltonians by canonical quantization. For example, the Hamiltonian operator can be written as
\begin{equation*}
\hat{H}_{\mathrm{nr}}=-\frac{\hbar^2}{2m}\sum_{j=1}^n\frac{\partial^2}{\partial q_j^2}+\frac{g(g-\hbar)}{m}\sum_{j<k}V(\hat{q}_j-\hat{q}_k).
\end{equation*}
The corresponding Hilbert space, on which these operators act, is the space $L^2(C,dq)$ of square integrable complex-valued functions over the classical configuration space $C$.
In contrast, the relativistic quantum Hamiltonians have an exponential dependence on the momentum operators, resulting in analytic differential operators, such as
\begin{equation*}
\hat{H}_{\mathrm{rel}}=\frac{1}{2\beta^2m}(S_1+S_{-1}),\quad\text{with}\quad
S_{\pm1}=\sum_{j=1}^n\bigg[\prod_{k\neq j}f_\mp(\hat{q}_j-\hat{q}_k)\bigg]
e^{\mp\ri\hbar\beta\partial_j}
\bigg[\prod_{k\neq j}f_\pm(\hat{q}_j-\hat{q}_k)\bigg].
\end{equation*}
In the elliptic case $f_\pm(q)=\sigma(\ri\beta g+q)/\sigma(q)$ and the other cases are obtained as limits. Therefore these operators act on functions that have an analytic continuation to an at least $2\hbar\beta$ wide strip in the complex plane. For more details on these models, the reader is referred to the articles \cite{Ca08,Ru06,Ru09} or the exhaustive surveys \cite{Ru90-2,Ru99}.

A scheme of the Calogero-Ruijsenaars type systems is depicted in Figure \ref{fig:3}.

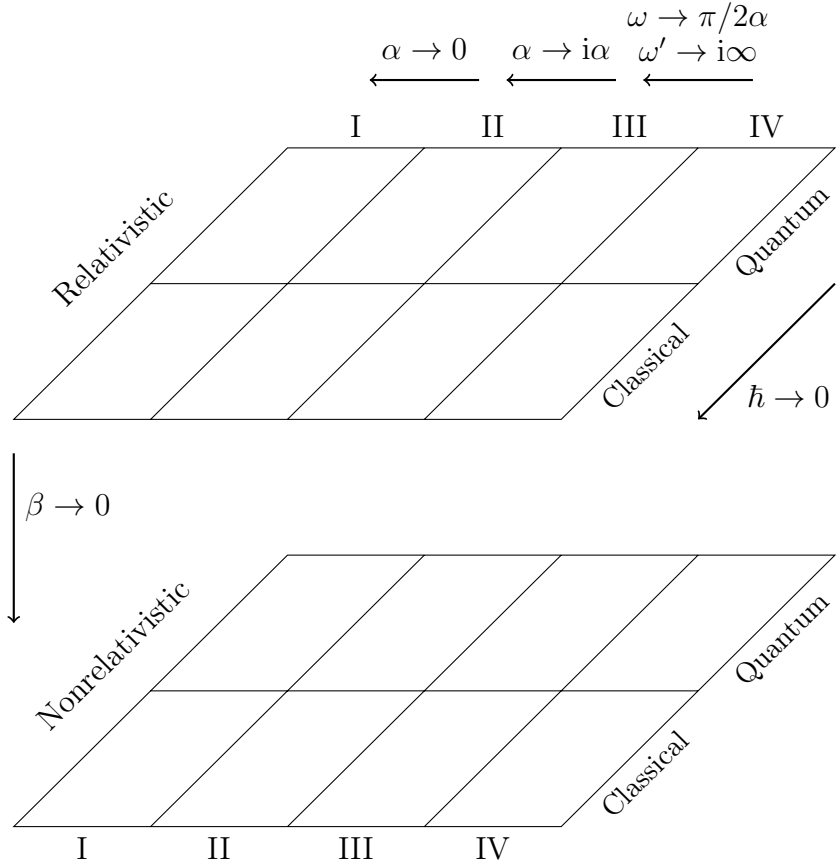
\begin{figure}[h!]
\centering
\begin{tikzpicture}[scale=1.8]
\draw(0,0)--(4,0)--(6,2)--(2,2)--cycle (1,0)--(3,2) (2,0)--(4,2) (3,0)--(5,2) (1,1)--(5,1);
\draw(4.5,.5)node[below,rotate=45]{\small Classical}
(5.5,1.5)node[below,rotate=45]{\small Quantum}
(2.5,2)node[above]{I} (3.5,2)node[above]{II}  (4.5,2)node[above]{III}
(5.5,2)node[above]{IV};
\draw[->,thick](3.4,2.5)--(2.6,2.5) node[midway,yshift=1em]{$\alpha\to 0$};
\draw[->,thick](4.4,2.5)--(3.6,2.5) node[midway,yshift=1em]{$\alpha\to\mathrm{i}\alpha$};
\draw(5.4,2.5)--(4.6,2.5) node[midway,yshift=2em]{$\omega\to\pi/2\alpha$};
\draw[->,thick](5.4,2.5)--(4.6,2.5) node[midway,yshift=1em]{$\omega'\to\mathrm{i}\infty$};
\draw(1,1)node[above,xshift=-.7em,yshift=.7em,rotate=45]{Relativistic};
\draw[->,thick](0,-.25)--(0,-1.5) node[midway,right,yshift=1em]{$\beta\to 0$};
\draw[thick,<-](5,0)--(6,1.) node[midway,below,xshift=.7em,yshift=-.7em]{$\hbar\to 0$};
\begin{scope}[shift={(0,-3)}]
\draw(0,0)--(4,0)--(6,2)--(2,2)--cycle (1,0)--(3,2) (2,0)--(4,2) (3,0)--(5,2) (1,1)--(5,1);
\draw(4.5,.5)node[below,rotate=45]{\small Classical}
(5.5,1.5)node[below,rotate=45]{\small Quantum}
(.5,0)node[below]{I} (1.5,0)node[below]{II}  (2.5,0)node[below]{III}
(3.5,0)node[below]{IV};
\draw(1,1)node[above,xshift=-.7em,yshift=.7em,rotate=45]{Nonrelativistic};
\end{scope}
\end{tikzpicture}
\caption{Schematics of Calogero-Ruijsenaars type systems.}
\label{fig:3}
\end{figure}

The above-mentioned models have generalisations formulated using root systems\footnote{A short summary of facts about root systems can be found in \cite{Sa06}. For more details, see \cite{Hu72}.}. To this end, notice that in the Hamiltonians presented above $q_j-q_k=a\cdot q$ are the inner product of $q$ and the root vectors $a\in\rA_{n-1}$ of the simple Lie algebra $\sl(n,\C)$. It turns out that if $\rA_{n-1}$ is replaced with any root system the resulting system is still integrable. Such root system generalisations were introduced by Olshanetsky and Perelomov \cite{OP81,OP83}, who found Lax pairs and proved integrability for models attached to the classical root systems $\rB_n,\rC_n,\rD_n$ (and $\BC_n$). For arbitrary root systems, the integrability of non-elliptic quantum systems was showed by Heckman and Opdam \cite{HO87}, and Sasaki et al. \cite{KPS00}, and for classical systems (including the elliptic case) by Khastgir and Sasaki \cite{KS01}. Integrable Ruijsenaars-Schneider models attached to non-A type root systems were found by van Diejen \cite{vD94,vD94-2,vD95,vD95-2}. It is a remarkable fact that the eigenfunctions of these generalised Calogero-Ruijsenaars type operators are multivariate orthogonal polynomials, and the equilibrium positions of the classical systems are given by the zeros of classical orthogonal polynomials \cite{Ca77,OS06}.

There are other ways to generalise the Calogero-Ruijsenaars type systems, e.g. by allowing internal degrees of freedom (spins) \cite{GH84,Re17} or supersymmetry \cite{SS93,BTW98,BDM15}.

\section{Basic idea of Hamiltonian reduction}

In their pioneering work, Kazhdan, Kostant, and Sternberg \cite{KKS78} offered a key insight into the origin of the Poisson commuting first integrals of Calogero-Moser-Sutherland models. In a nutshell, they derived the complicated motion of these many-body systems by applying Marsden-Weinstein reduction \cite{MW74} to a higher dimensional free particle. The reduction framework and its application to Hamiltonian systems have undergone considerable development since then \cite{OR04,MMOPR07}. Here we only present a description of the reduction machinery that is tailored to our purposes. Part \ref{part:1} of the thesis contains specific implementations of this approach.

The reduction procedure starts with choosing a `big phase space' of group-theoretic origin. This might be, say, the cotangent bundle $P=T^\ast X$ of a matrix Lie group or algebra $X$. The natural symplectic structure $\Omega$ of the cotangent bundle $P$ permits one to define a Hamiltonian system $(P,\Omega,\cH)$ by specifying a Hamiltonian $\cH\colon P\to\R$. If $\cH$ is simple enough, then the equations of motion can be solved, or even better, a family of Poisson commuting functions $\{\cH_j\}$ be found, which $\cH$ is a member/function of. Then by choosing an appropriate group action (of some group $G$) on $X$ (hence $P$), under which $\cH_j$ are invariant\footnote{It can go the other way around, that is have a group action first, then find invariant functions.}, one can construct the momentum map $\Phi\colon P\to\mathfrak{g}^\ast$ corresponding to this action. Fixing the value $\mu$ of the momentum map $\Phi$ produces a level surface $\Phi^{-1}(\mu)$ in the `big phase space'. This constraint surface is foliated by the orbits of the isotropy/gauge group $G_\mu\subset G$ of the momentum value. The reduced phase space $(P_\red,\omega_\red)$ consists of these orbits. The point is that the flows of the commuting `free' Hamiltonians $\{\cH_j\}$ preserve the momentum surface and are constant along obits. Therefore they admit reduced versions $H_j\colon P_\red\to\R$, which still Poisson commute\footnote{With respect to the Poisson bracket induced by the reduced symplectic form $\omega_\red$.} and the resulting Hamiltonian system $(P_\red,\omega_\red,H)$ is Liouville integrable. In practice, we model the reduced phase space by a smooth slice $S$ of the gauge orbits (see Figure \ref{fig:4}). This slice $S$ is obtained by solving the momentum equation $\Phi=\mu$. Systems in action-angle duality (see below) can emerge in this picture if one has two sets of invariant functions and two models $S,\tilde{S}$ of the reduced phase space.

\begin{figure}[h!]
\centering
\begin{tikzpicture}[scale=1.3,vector/.style={thick, ->, >=stealth'}]
\draw[thick] (0,.5) to[out=40, in=160]
     (8,1) to[out=190, in=30]
     (5,0) to[out=160, in=5] (0,.5);
\shade[left color=lightgray,right color=white] (0,.5) to[out=40, in=160]
     (8,1) to[out=190, in=30]
     (5,0) to[out=160, in=5] (0,.5);
\draw (7.4,1.2) to[out=190, in=30] (4.5,.17);
\draw (6.8,1.4) to[out=190, in=35] (4.1,.3);
\draw (6.2,1.57) to[out=185, in=40] (3.7,.4);
\draw (5.6,1.71) to[out=185, in=40] (3.2,.48);
\draw (4.95,1.83) to[out=185, in=45] (2.7,.55);
\draw (4.13,1.89) to[out=195, in=47] (2.2,.58);
\draw (3.4,1.88) to[out=200, in=50] (1.7,.6);
\draw (2.7,1.8) to[out=205, in=55] (1.2,.57);
\draw (1.7,1.53) to[out=210, in=60] (.7,.55);
\draw (.8,1.07) to[out=215, in=60] (.3,.52);
\def\myshift1#1{\raisebox{-2.5ex}}
\draw[postaction={decorate,decoration={text along path,text align=center,text={|\small\myshift1|momentum level surface}}}] (5,0) to[out=30, in=190] (8,1);
\def\myshift2#1{\raisebox{1.5ex}}
\draw[postaction={decorate,decoration={text along path,text align=center,text={|\small\myshift2|
orbits of isotropy group = points of reduced phase space}}}]
(0,.5) to[out=40, in=160] (8,1);
\draw[thick,red] (.31,.7) to[out=30, in=165] (6.8,.95)
node[xshift=-2ex,yshift=-1ex]{$S$};;
\draw[thick,blue] (.29,.6) to[out=20, in=150] (5.4,.4) node[xshift=-2ex,yshift=-1ex]{$\tilde S$};
\end{tikzpicture}
\caption{The geometry of reduction and action-angle duality.}
\label{fig:4}
\end{figure}

\section{Action-angle dualities}

Action-angle duality is a relation between two Liouville integrable systems, say $(M,\omega,H)$ and $(\tilde{M},\tilde{\omega},\tilde{H})$, requiring the existence of canonical coordinates $(q,p)$ on $M$ and $(\tilde{q},\tilde{p})$ on $\tilde M$ (or on dense open submanifolds of $M$ and $\tilde{M}$) and a global symplectomorphism $\cR\colon M\to\tilde{M}$, the \emph{action-angle map}, such that $(\tilde{q},\tilde{p})\circ\cR$ are action-angle variables for the Hamiltonian $H$ and $(q,p)\circ\cR^{-1}$ are action-angle variables for the Hamiltonian $\tilde H$. This means that $H\circ\cR^{-1}$ depends only on $\tilde{q}$ and $\tilde H \circ \cR$ only on $q$. Then one says that the systems $(M,\omega,H)$ and $(\tilde{M},\tilde{\omega},\tilde{H})$ are in \emph{action-angle duality}. In addition, for the systems of our interest it also happens that the Hamiltonian $H$, when expressed in the coordinates $(q,p)$, admits interpretation in terms of interacting `particles' with position variables $q$, and similarly, $\tilde{H}$ expressed in $(\tilde{q},\tilde{p})$ describes the interacting points with positions $\tilde{q}$. Thus $q$ are particle positions for $H$ and action variables for $\tilde{H}$, and the $\tilde{q}$ are positions for $\tilde{H}$ and actions for $H$. The significance of this curious property is clear for instance from the fact that it persists at the quantum mechanical level as the bispectral character of the wave functions \cite{DG86,Ru90}, which are important special functions.\vspace{-1em}

\begin{figure}[h!]
\centering
\begin{tikzpicture}[%
xscale=2,yscale=1.8,%
vector/.style={very thick, ->, >=stealth'},%
axis/.style={->, >=stealth'},%
angle/.style={thick, ->, >=stealth'},%
length/.style={thick, <->, >=stealth'}]
\draw 
(-1,0) node[rectangle,draw]{Rational Calogero-Moser} (3.2,0) node[rectangle,draw]{Rational Calogero-Moser}
(-1,1) node[rectangle,draw]{Hyperbolic Calogero-Moser} (3.2,1) node[rectangle,draw]{Rational Ruijsenaars-Schneider}
(-1,2) node[rectangle,draw]{Hyperbolic Ruijsenaars-Schneider} (3.2,2) node[rectangle,draw]{Hyperbolic Ruijsenaars-Schneider};
\draw[<-,thick] (-1,1.3)--(-1,1.7) node[midway,left]{$\beta\to 0$};
\draw[<->,thick] (.7,2)--(1.5,2) node[midway,above]{$\cR$};
\draw[<-,thick] (3.2,1.3)--(3.2,1.7) node[midway,right]{$\alpha\to 0$};
\draw[<->,thick] (.7,1)--(1.5,1)  node[midway,above]{$\cR$};
\draw[<-,thick] (-1,.3)--(-1,.7) node[midway,left]{$\alpha\to 0$};
\draw[<->,thick] (.7,0)--(1.5,0)  node[midway,above]{$\cR$};
\draw[<-,thick] (3.2,.3)--(3.2,.7) node[midway,right]{$\beta\to 0$};
\end{tikzpicture}
\caption{Action-angle dualities among Calogero-Ruijsenaars type systems.}
\label{fig:5}
\end{figure}
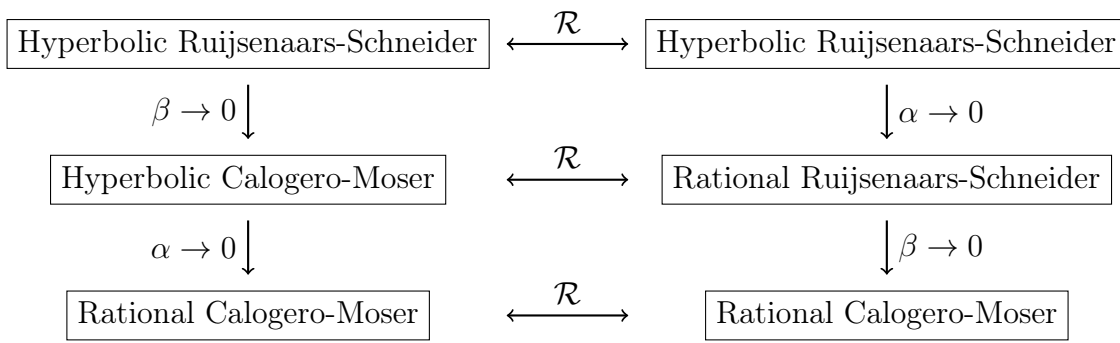

\noindent
Dual pairs of many-body systems were exhibited by Ruijsenaars (see Figure \ref{fig:5}) in the course of his direct construction \cite{Ru88,Ru90-2,Ru95,Ru99} of action-angle variables for the many-body systems (of non-elliptic Calogero-Ruijsenaars type and non-periodic Toda type) associated with the root system $\rA_{n-1}$. The idea that dualities can be interpreted in terms of Hamiltonian reduction can be distilled from \cite{KKS78} and was put forward explicitly in several papers in the 1990s, e.g.~\cite{FGNR00,GN95}. These papers contain a wealth of interesting ideas and results, but often stated without full proofs. In the last decade or so, Feh\'{e}r and collaborators undertook the systematic study of these dualities within the framework of reduction \cite{FK09,FA10,AF10,FK11,FK12,Fe13,FKl14}. It seems natural to expect that action-angle dualities exist for many-body systems associated with other root systems. Substantial evidence in favour of this expectation was given by Pusztai \cite{Pu11,Pu11-2,Pu12,Pu13,Pu15}. This thesis presents results (see \ref{chap:publ}) that were obtained in connection to these earlier developments.

\section{Outline of the thesis}

The main content of the thesis is divided into two parts with a total of five chapters.

Part \ref{part:1} takes the reduction approach to Calogero-Ruijsenaars type systems. In each of its chapters the basic idea of reduction that we just sketched is put into practice, only at an increasing level of complexity. In particular, Chapter \ref{chap:1} presents a streamlined derivation of the rational Calogero-Moser system using reduction. Section \ref{sec:1.2} exhibits the utility of the reduction perspective, as we give a simple proof of a formula providing action-angle coordinates. Chapter \ref{chap:2} is a study of the trigonometric $\BC_n$ Sutherland system. We provide a physical interpretation of the model in Section \ref{sec:2.1} and prepare the ingredients of reduction in Section \ref{sec:2.2}. In Section \ref{sec:2.3}, we solve the momentum equations and obtain the action-angle dual of the $\BC_n$ Sutherland system. In Section \ref{sec:2.4}, we apply our duality map to various problems, such as equilibrium configurations, proving superintegrability, and showing the equivalence of two sets of Hamiltonians. Chapter \ref{chap:3} generalises certain results of the previous chapter as it derives a $1$-parameter deformation of the trigonometric $\BC_n$ Sutherland system using Hamiltonian reduction of the Heisenberg double of $\SU(2n)$. We define the pertinent reduction in Section \ref{sec:3.1}, solve the momentum constraints in Section \ref{sec:3.2}, and characterize the reduced system in Section \ref{sec:3.3}. In Section \ref{sec:3.4}, we complete a recent derivation of the hyperbolic analogue.

Part \ref{part:2} is a collection of work motivated by, but not involving reduction techniques. Chapter \ref{chap:4} reports our discovery of a Lax pair for the hyperbolic van Diejen system with two independent coupling parameters. The preparatory Section \ref{sec:4.1} is followed by the explicit formulation of our Lax matrix in Section \ref{sec:4.2}. In Section \ref{sec:4.3}, we show that the dynamics can be solved by a projection method, which in turn allows us to initiate the study of the scattering
properties. We prove the equivalence between the first integrals provided by the eigenvalues of the Lax matrix and the family of van Diejen's commuting Hamiltonians in Section \ref{sec:4.4}. Chapter \ref{chap:5} is concerned with the explicit construction of compactified versions of trigonometric and elliptic Ruijsenaars-Schneider systems. In Section \ref{sec:5.1}, we embed the local phase space of the model into the complex projective space $\CP^{n-1}$. Section \ref{sec:5.2} contains our proof of the global extension of the trigonometric Lax matrix to $\CP^{n-1}$. We use our direct construction to introduce new compactified elliptic systems in Section \ref{sec:5.3}.

The chapters are complemented by \ref{part:app} collecting supplementary material (alternative proofs, detailed derivations, etc.). A \ref{chap:sum} presents the most important results in a concise form. A list of \ref{chap:publ}, on which this thesis is based, and a \ref{chap:bibl} are also included.

\cleardoublepage
\part{Reduction approach, action-angle duality, applications}
\label{part:1}

\chapter{A pivotal example}
\label{chap:1}

We start this chapter by describing the rational Calogero-Moser system and
recalling how it originates from Hamiltonian reduction \cite{KKS78}.
Then we use reduction treatment to simplify Falqui and Mencattini's recent
proof \cite{FM16} of Sklyanin's expression \cite{Sk09} providing spectral
Darboux coordinates of the rational Calogero-Moser system.

\section{Rational Calogero-Moser system}
\label{sec:1.1}

The Hamiltonian $H$ \eqref{1.1} with rational potential models equally massive interacting particles
moving along a line with a pair potential inversely proportional to the square of
the distance. The model was introduced and solved at
the quantum level by Calogero \cite{Ca71}. The complete integrability of its classical
version was established by Moser \cite{Mo75}, who employed the Lax formalism \cite{La68} to identify
a complete set of commuting integrals as coefficients of the characteristic polynomial
of a certain Hermitian matrix function, called the Lax matrix.

These developments might prompt one to consider the Poisson commuting eigenvalues
of the Lax matrix and be interested in searching for an expression of conjugate variables.
Such an expression was indeed formulated by Sklyanin \cite{Sk09} in his work on
bispectrality, and worked out in detail for the open Toda chain \cite{Sk13}.
Sklyanin's formula for the rational Calogero-Moser model was recently confirmed
within the framework of bi-Hamiltonian geometry by Falqui and Mencattini \cite{FM16}
in a somewhat circuitous way, although a short-cut was pointed out in the form of a
conjecture. The purpose of this chapter is to prove this conjecture and offer
an alternative simple proof of Sklyanin's formula using results of Hamiltonian
reduction.

\subsection{Description of the model}
\label{subsec:1.1.1}

For $n$ particles, let the $n$-tuples $q=(q_1,\dots,q_n)$ and $p=(p_1,\dots,p_n)$ collect
their coordinates and momenta, respectively. Then the Hamiltonian of the model reads
\begin{equation}
H(q,p)=\frac{1}{2}\sum_{j=1}^np_j^2
+\sum_{\substack{j,k=1\\(j<k)}}^n\frac{g^2}{(q_j-q_k)^2},
\label{1.1}
\end{equation}
where $g$ is a real coupling constant tuning the strength of particle interaction.
The pair potential is singular at $q_j=q_k$ $(j\neq k)$, hence any initial ordering
of the particles remains unchanged during time-evolution. The configuration space is
chosen to be the domain $\cC=\{q\in\R^n\mid q_1>\dots>q_n\}$, and the phase space is
its cotangent bundle
\begin{equation}
T^\ast\cC=\{(q,p)\mid q\in\cC,\ p\in\R^n\},
\label{1.2}
\end{equation}
endowed with the standard symplectic form
\begin{equation}
\omega=\sum_{j=1}^n dq_j\wedge dp_j.
\label{1.3}
\end{equation}

\subsection{Calogero particles from free matrix dynamics}
\label{subsec:1.1.2}

The Hamiltonian system $(T^\ast\cC,\omega,H)$, called the rational Calogero-Moser
system, can be obtained as an appropriate Marsden-Weinstein reduction of the free
particle moving in the space of $n\times n$ Hermitian matrices as follows.

Consider the manifold of pairs of $n\times n$ Hermitian matrices
\begin{equation}
M=\{(X,P)\mid X,P\in\mathfrak{gl}(n,\mathbb{C}),\ X^\dag=X,\ P^\dag=P\},
\label{1.4}
\end{equation}
equipped with the symplectic form
\begin{equation}
\Omega=\tr(dX\wedge dP).
\label{1.5}
\end{equation}
The Hamiltonian of the analogue of a free particle reads
\begin{equation}
\cH(X,P)=\frac{1}{2}\tr(P^2).
\label{1.6}
\end{equation}
The equations of motion can be solved explicitly for this Hamiltonian system
$(M,\Omega,\cH)$, and the general solution is given by $X(t)=tP_0+X_0$,
$P(t)=P_0$. Moreover, the functions $\cH_k(X,P)=\frac{1}{k}\tr(P^k)$, $k=1,\dots,n$
form an independent set of commuting first integrals.

The group of $n\times n$ unitary matrices $\UN(n)$ acts on $M$ \eqref{1.4} by conjugation
\begin{equation}
(X,P)\to (UXU^\dag,UPU^\dag),\quad U\in\UN(n),
\label{1.7}
\end{equation}
leaves both the symplectic form $\Omega$ \eqref{1.5} and the Hamiltonians $\cH_k$
invariant, and the matrix commutator $(X,P)\to[X,P]$ is a momentum map for this
$\UN(n)$-action. Consider the Hamiltonian reduction performed by factorizing the
momentum constraint surface
\begin{equation}
[X,P]=\ri g(vv^\dag-\1_n)=\mu, \quad v=(1\dots 1)^\dag\in\R^n,\quad g\in\R,
\label{1.8}
\end{equation}
with the stabilizer subgroup $G_\mu\subset\UN(n)$ of $\mu$, e.g. by diagonalization of
the $X$ component. This yields the gauge slice $S=\{(Q(q,p),L(q,p))\mid q\in\cC,\ p\in\R^n\}$,
where
\begin{equation}
Q_{jk}=(UXU^\dag)_{jk}=q_j\delta_{jk},\quad
L_{jk}=(UPU^\dag)_{jk}=p_j\delta_{jk}+\ri g\frac{1-\delta_{jk}}{q_j-q_k},\quad
j,k=1,\dots,n.
\label{1.9}
\end{equation}
This $S$ is symplectomorphic to the reduced phase space and to $T^\ast\cC$
\eqref{1.2} since it inherits the reduced symplectic form $\omega$ \eqref{1.3}.
The unreduced Hamiltonians project to a commuting set of independent integrals
$H_k=\frac{1}{k}\tr(L^k)$, $k=1,\dots,n$, such that $H_2=H$ \eqref{1.1} and what's more,
the completeness of Hamiltonian flows follows automatically from the reduction.
Therefore the rational Calogero-Moser system is completely integrable.

The similar role of matrices $X$ and $P$ in the derivation above can be exploited to
construct action-angle variables for the rational Calogero-Moser system. This is done by
switching to the gauge, where the $P$ component is diagonalized by some matrix
$\tilde{U}\in G_\mu$, and it boils down to the gauge slice
$\tilde{S}=\{(\tilde{Q}(\phi,\lambda),\tilde{L}(\phi,\lambda))\mid \phi\in\R^n,\ \lambda\in\cC\}$, where
\begin{equation}
\tilde{Q}_{jk}=(\tilde{U}X\tilde{U}^\dag)_{jk}
=\phi_j\delta_{jk}-\ri g\frac{1-\delta_{jk}}{\lambda_j-\lambda_k},\quad
\tilde{L}_{jk}=(\tilde{U}P\tilde{U}^\dag)_{jk}=\lambda_j\delta_{jk},\quad
j,k=1,\dots,n.
\label{1.10}
\end{equation}
By construction, $\tilde{S}$ with the symplectic form
$\tilde{\omega}=\sum_{j=1}^nd\phi_j\wedge d\lambda_j$
is also symplectomorphic to the reduced phase space, thus a canonical transformation
$(q,p)\to(\phi,\lambda)$ is obtained, where the reduced Hamiltonians depend only on
$\lambda$, viz. $H_k=\frac{1}{k}(\lambda_1^k+\dots+\lambda_n^k)$, $k=1,\dots,n$.

\section{Application: Canonical spectral coordinates}
\label{sec:1.2}

Now, we turn to the question of variables conjugate to the Poisson commuting eigenvalues
$\lambda_1,\dots,\lambda_n$ of $L$ \eqref{1.9}, i.e. such functions
$\theta_1,\dots,\theta_n$ in involution that
\begin{equation}
\{\theta_j,\lambda_k\}=\delta_{jk},\quad j,k=1,\dots,n.
\label{1.11}
\end{equation}
At the end of Subsection \ref{subsec:1.1.2} we saw that the variables $\phi_1,\dots,\phi_n$ are
such functions. These action-angle variables $\lambda,\phi$ were already obtained by
Moser \cite{Mo75} using scattering theory, and also appear in Ruijsenaars's proof of
the self-duality of the rational Calogero-Moser system \cite{Ru88}.

Let us define the following functions over the phase space $T^\ast\cC$
\eqref{1.2} with dependence on an additional variable $z$:
\begin{equation}
A(z)=\det(z\1_n-L),\quad
C(z)=\tr(Q\,\adj(z\1_n-L)vv^\dag),\quad
D(z)=\tr(Q\,\adj(z\1_n-L)),
\label{1.12}
\end{equation}
where $Q$ and $L$ are given by \eqref{1.9}, $v=(1\dots1)^\dag\in\R^n$ and $\adj$ denotes the
adjugate matrix, i.e. the transpose of the cofactor matrix.
Sklyanin's formula \cite{Sk09} for $\theta_1,\dots,\theta_n$ then reads
\begin{equation}
\theta_k=\frac{C(\lambda_k)}{A'(\lambda_k)},\quad k=1,\dots,n.
\label{1.13}
\end{equation}
In \cite{FM16} Falqui and Mencattini have shown that
\begin{equation}
\mu_k=\frac{D(\lambda_k)}{A'(\lambda_k)},\quad k=1,\dots,n
\label{1.14}
\end{equation}
are conjugate variables to $\lambda_1,\dots,\lambda_n$, and
\begin{equation}
\theta_k=\mu_k+f_k(\lambda_1,\dots,\lambda_n),\quad k=1,\dots,n,
\label{1.15}
\end{equation}
with such $\lambda$-dependent functions $f_1,\dots,f_n$ that
\begin{equation}
\frac{\partial f_j}{\partial\lambda_k}=\frac{\partial f_k}{\partial\lambda_j},\quad
j,k=1,\dots,n
\label{1.16}
\end{equation}
thus $\theta_1,\dots,\theta_n$ given by
Sklyanin's formula \eqref{1.13} are conjugate to $\lambda_1,\dots,\lambda_n$.
This was done in a roundabout way, although the explicit form of
relation \eqref{1.15} was conjectured.

Here we take a different route by making use of the reduction viewpoint of Subsection
\ref{subsec:1.1.2}. From this perspective, the problem becomes transparent and can be solved
effortlessly. First, we show that $\mu_1,\dots,\mu_n$ \eqref{1.14} are nothing else
than the angle variables $\phi_1,\dots,\phi_n$.

\begin{lemma}
\label{lem:1.1}
The variables $\mu_1,\dots,\mu_n$ defined in \eqref{1.14} are the angle variables
$\phi_1,\dots,\phi_n$ of the rational Calogero-Moser system.
\end{lemma}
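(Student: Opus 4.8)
\textit{Proof proposal.} The plan is to read off both the angle variable $\phi_j$ and the function $\mu_j$ from the two gauge slices $S,\tilde S$ of Subsection~\ref{subsec:1.1.2}, exploiting that the Lax matrix $L$ in \eqref{1.9} is Hermitian. Recall that a generic point of the reduced phase space is represented in $S$ by the pair $(Q,L)$ with $Q=\diag(q_1,\dots,q_n)$, and in $\tilde S$ by $(\tilde Q,\tilde L)$ with $\tilde L=\diag(\lambda_1,\dots,\lambda_n)$; the two representatives are related by $\tilde Q=gQg^\dag$ and $\tilde L=gLg^\dag$ for some $g\in\UN(n)$ that diagonalizes $L$, namely $g=\tilde U U^\dag$ in the notation of Subsection~\ref{subsec:1.1.2}. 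Since $L$ is Hermitian with (generically) pairwise distinct eigenvalues, I may take $\xi^{(j)}\in\C^n$ to be the $j$-th column of $g^\dag$, so that $L\xi^{(j)}=\lambda_j\xi^{(j)}$ and $\|\xi^{(j)}\|=1$. Then, by \eqref{1.10}, the angle variable is $\phi_j=\tilde Q_{jj}=(gQg^\dag)_{jj}=(\xi^{(j)})^\dag Q\,\xi^{(j)}$.

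Next I would rewrite the adjugate appearing in the definition \eqref{1.12} of $D$ via the spectral resolution $L=\sum_{k=1}^n\lambda_k\,\xi^{(k)}(\xi^{(k)})^\dag$. Starting from $\adj(z\1_n-L)=A(z)(z\1_n-L)^{-1}$ with $A(z)=\prod_{l=1}^n(z-\lambda_l)$, one gets $\adj(z\1_n-L)=\sum_{k=1}^n\xi^{(k)}(\xi^{(k)})^\dag\prod_{l\neq k}(z-\lambda_l)$, an identity of matrix polynomials in $z$ that continues to the eigenvalues. Evaluating at $z=\lambda_j$ annihilates every term except $k=j$, giving $\adj(\lambda_j\1_n-L)=A'(\lambda_j)\,\xi^{(j)}(\xi^{(j)})^\dag$. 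Hence $D(\lambda_j)=\tr\!\big(Q\,\adj(\lambda_j\1_n-L)\big)=A'(\lambda_j)\,(\xi^{(j)})^\dag Q\,\xi^{(j)}$, and therefore $\mu_j=D(\lambda_j)/A'(\lambda_j)=(\xi^{(j)})^\dag Q\,\xi^{(j)}=\phi_j$, which is the claim.

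The argument is essentially immediate once the reduction picture is set up; the only points needing a little care are the genericity assumption that makes the slice $\tilde S$ and the rank-one projectors $\xi^{(j)}(\xi^{(j)})^\dag$ well defined, the Hermiticity of $L$ (which is exactly what makes $g$ unitary and the $\xi^{(j)}$ an orthonormal eigenbasis), and the elementary residue identity $\adj(\lambda_j\1_n-L)=A'(\lambda_j)\,\xi^{(j)}(\xi^{(j)})^\dag$. None of these is a genuine obstacle, so I expect the proof to be short; the main ``work'' is simply translating between the abstract reduced point and its two coordinate descriptions in $S$ and $\tilde S$.
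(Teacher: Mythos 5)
Your proof is correct and follows essentially the same route as the paper: both hinge on passing to the gauge where $L$ is diagonal (equivalently, on gauge/conjugation invariance of $D(z)/A'(z)$) and on the fact that at $z=\lambda_j$ only the $j$-th rank-one term of $\adj(z\1_n-L)$ survives, giving $\mu_j=\tilde Q_{jj}=\phi_j$. The paper phrases this by computing the ratio \eqref{1.17} directly in the $\tilde S$ gauge and substituting $z=\lambda_k$, while you package the same computation via the eigenprojector identity $\adj(\lambda_j\1_n-L)=A'(\lambda_j)\,\xi^{(j)}(\xi^{(j)})^\dag$; the content is identical.
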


\begin{proof}
Notice that, by definition, $\mu_1,\dots,\mu_n$ are gauge invariant, thus by working in
the gauge, where the $P$ component is diagonal, that is with the matrices $\tilde{Q}$,
$\tilde{L}$ \eqref{1.10}, we get
\begin{equation}
\frac{D(z)}{A'(z)}
=\frac{\sum_{j=1}^n\phi_j\prod_{\substack{\ell=1\\(\ell\neq j)}}^n(z-\lambda_\ell)}
{\sum_{j=1}^n\prod_{\substack{\ell=1\\(\ell\neq j)}}^n(z-\lambda_\ell)}.
\label{1.17}
\end{equation}
Substituting $z=\lambda_k$ into \eqref{1.17} yields $\mu_k=\phi_k$, for each $k=1,\dots,n$.
\end{proof}

Next, we prove the relation of functions $A$, $C$, $D$ \eqref{1.12}, that was
conjectured in \cite{FM16}.

\begin{theorem}
\label{thm:1.2}
For any $n\in\N$, $(q,p)\in T^\ast\cC$ \eqref{1.2}, and $z\in\C$ we have
\begin{equation}
C(z)=D(z)+\frac{\ri g}{2}A''(z).
\label{1.18}
\end{equation}
\end{theorem}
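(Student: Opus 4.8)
The plan is to prove the identity by a direct computation that exploits the explicit form of the matrices $Q$ and $L$ in the diagonal-$X$ gauge \eqref{1.9}, since all three functions $A$, $C$, $D$ in \eqref{1.12} are gauge invariant (they are expressed through traces, determinants, and the fixed vector $v$, all of which are preserved by the residual gauge action). Write $R(z)=\adj(z\1_n-L)$, so that $A(z)=\det(z\1_n-L)$ and $(z\1_n-L)R(z)=A(z)\1_n$. Then $C(z)=\tr(QR(z)vv^\dag)=v^\dag QR(z)v$ and $D(z)=\tr(QR(z))$, while by Jacobi's formula $A'(z)=\tr(R(z))$ and, differentiating once more, $A''(z)=\tr(R'(z))$ and also $A''(z)$ can be related to $\sum_{j\neq k}$ of $2\times 2$ minors.

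First I would express everything in terms of the spectral decomposition of $L$. Since $L$ is Hermitian, write $L=\sum_{\ell=1}^n\lambda_\ell\, \xi_\ell\xi_\ell^\dag$ with orthonormal eigenvectors $\xi_\ell$; then $R(z)=\sum_{\ell}\big(\prod_{m\neq\ell}(z-\lambda_m)\big)\xi_\ell\xi_\ell^\dag$, so $C(z)=\sum_\ell\big(\prod_{m\neq\ell}(z-\lambda_m)\big)\,(v^\dag Q\xi_\ell)(\xi_\ell^\dag v)$ and $D(z)=\sum_\ell\big(\prod_{m\neq\ell}(z-\lambda_m)\big)\,\xi_\ell^\dag Q\xi_\ell$. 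Thus \eqref{1.18} will follow if I can show, coefficient by coefficient in $z$, that $v^\dag Q\xi_\ell\,\xi_\ell^\dag v - \xi_\ell^\dag Q\xi_\ell$ matches the contribution of $\tfrac{\ri g}{2}A''(z)$. The cleaner route, however, is to avoid eigenvectors and instead use the commutator constraint \eqref{1.8}: $[X,P]=\ri g(vv^\dag-\1_n)$, i.e. $[Q,L]=\ri g(vv^\dag-\1_n)$ in this gauge. This is the algebraic heart of the matter. From $QL-LQ=\ri g\,vv^\dag-\ri g\,\1_n$ I would multiply by $R(z)$ and take traces: $\tr(QLR(z))-\tr(LQR(z))=\ri g\,v^\dag R(z)v-\ri g\,\tr R(z)$. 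Using $LR(z)=zR(z)-A(z)\1_n$ and cyclicity of the trace, the left side telescopes into something expressible through $D(z)$ and $\tr(Q)$, while $v^\dag R(z)v$ is exactly what appears in $C(z)$ once one accounts for the extra factor of $Q$.

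The key step, and the main obstacle, is bookkeeping the difference between $C(z)=v^\dag QR(z)v$ and the bare quadratic form $v^\dag R(z)v$: the constraint naturally produces $v^\dag R(z)v$, not $v^\dag QR(z)v$, so I expect to need a second relation obtained by multiplying the constraint $[Q,L]=\ri g(vv^\dag-\1_n)$ on one side by $R(z)$ and on the other by $Q$, or equivalently by differentiating the resolvent identity $(z-L)^{-1}$ with respect to $z$ to generate the second derivative $A''$. Concretely, the factor $\tfrac12 A''(z)$ suggests that the term $\ri g\,v^\dag R(z)v$ should be re-summed using $\sum_\ell \prod_{m\neq\ell}(z-\lambda_m)\cdot(\text{something symmetric})$ and that the ``$\tfrac12$'' arises from $A''(z)=\sum_{j\neq k}\prod_{m\neq j,k}(z-\lambda_m)$ being a double sum counted twice. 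I would therefore compute $C(z)-D(z)$ directly in the gauge \eqref{1.9} by plugging in $Q_{jk}=q_j\delta_{jk}$ and $L_{jk}=p_j\delta_{jk}+\ri g(1-\delta_{jk})/(q_j-q_k)$, expand $\adj(z\1_n-L)$ as a sum over cofactors, and verify that the off-diagonal entries of $L$ (which carry all the $\ri g$-dependence) produce precisely $\tfrac{\ri g}{2}A''(z)$ after the dust settles. The identity is polynomial in $z$ of degree $n-2$ on both sides, so it suffices to check it at $n-1$ points or to match leading coefficients and then induct; matching the top few coefficients in $z$ already pins down the structure, and the constraint \eqref{1.8} guarantees the lower-order terms agree. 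I expect the computation to be short once the constraint is used, but the indexing in the double sum defining $A''$ is where care is needed.
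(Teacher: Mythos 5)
Your starting observation -- that $A$, $C$, $D$ are gauge invariant, so one may compute in a convenient gauge -- is exactly the right structural fact, but the proposal then stalls because you pick the wrong gauge and never close the computation. Working in the gauge \eqref{1.9}, where $Q$ is diagonal and $L$ is the Moser Lax matrix, the quantity $C(z)-D(z)=\sum_{j\neq k}q_j\,[\adj(z\1_n-L)]_{jk}$ requires the \emph{off-diagonal} cofactors of $z\1_n-L$, which are not elementary; ``expand the adjugate and verify after the dust settles'' is not an argument, and ``match leading coefficients and then induct'' does not explain why the constraint forces agreement of the lower-order terms -- that is precisely what needs proof. Your fallback route also collapses: since $R(z)=\adj(z\1_n-L)$ is a polynomial in $L$, it commutes with $L$, so $\tr([Q,L]R(z))=0$ and tracing the constraint \eqref{1.8} against $R(z)$ yields only the side identity $v^\dag R(z)v=\tr R(z)=A'(z)$; as you yourself note, the constraint produces $v^\dag R(z)v$ rather than $v^\dag QR(z)v$, and the ``second relation'' that would bring $Q$ into play is never supplied. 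So the key step is missing.

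The paper's proof uses the same gauge-invariance principle but in the opposite gauge: it evaluates $C(z)-D(z)=\tr\big(Q\,\adj(z\1_n-L)(vv^\dag-\1_n)\big)$ with the matrices $\tilde Q,\tilde L$ of \eqref{1.10}, where $\tilde L=\diag(\lambda_1,\dots,\lambda_n)$ is diagonal. Then $\adj(z\1_n-\tilde L)$ is diagonal with entries $\prod_{\ell\neq k}(z-\lambda_\ell)$, so $C(z)-D(z)$ is just the sum of off-diagonal entries of $\tilde Q\,\adj(z\1_n-\tilde L)$, namely $\ri g\sum_{j\neq k}\frac{-1}{\lambda_j-\lambda_k}\prod_{\ell\neq k}(z-\lambda_\ell)$, and pairing the $(j,k)$ and $(k,j)$ terms gives $\ri g\sum_{j<k}\prod_{\ell\neq j,k}(z-\lambda_\ell)=\frac{\ri g}{2}A''(z)$ because $A(z)=\prod_\ell(z-\lambda_\ell)$. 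If you want to salvage your draft, replace the gauge \eqref{1.9} by \eqref{1.10} and the two-line calculation above completes the proof; in the $Q$-diagonal gauge no comparably short argument is available.
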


\begin{proof}
Pick any point $(q,p)$ in the phase space $T^\ast\cC$ and consider the
corresponding point $(\lambda,\phi)$ in the space of action-angle variables.
Since $A(z)=(z-\lambda_1)\dots(z-\lambda_n)$ we have
\begin{equation}
\frac{\ri g}{2}A''(z)
=\ri g\sum_{\substack{j,k=1\\(j<k)}}^n
\prod_{\substack{\ell=1\\(\ell\neq j,k)}}^n(z-\lambda_\ell).
\label{1.19}
\end{equation}
The difference of functions $C$ and $D$ \eqref{1.12} reads
\begin{equation}
C(z)-D(z)=\tr\big(Q\,\adj(z\1_n-L)(vv^\dag-\1_n)\big).
\label{1.20}
\end{equation}
Since this is a gauge invariant function, we are allowed to work with $\tilde{Q},\tilde{L}$ \eqref{1.10}
instead of $Q,L$ \eqref{1.9}. Therefore \eqref{1.20} can be written as the sum of all
off-diagonal components of $\tilde{Q}\,\adj(z\1_n-\tilde{L})$, that is
\begin{equation}
C(z)-D(z)=\ri g\sum_{\substack{j,k=1\\(j\neq k)}}^n\frac{-1}{\lambda_j-\lambda_k}
\prod_{\substack{\ell=1\\(\ell\neq k)}}^n(z-\lambda_\ell)=
\ri g\sum_{\substack{j,k=1\\(j<k)}}^n
\prod_{\substack{\ell=1\\(\ell\neq j,k)}}^n(z-\lambda_\ell).
\label{1.21}
\end{equation}
This concludes the proof.
\end{proof}

Our theorem confirms that indeed relation \eqref{1.15} is valid with
\begin{equation}
f_k(\lambda_1,\dots,\lambda_n)
=\frac{\ri g}{2}\frac{A''(\lambda_k)}{A'(\lambda_k)}
=\ri g\sum_{\substack{\ell=1\\(\ell\neq k)}}^n\frac{1}{\lambda_k-\lambda_\ell}
,\quad
k=1,\dots,n,
\label{1.22}
\end{equation}
for which \eqref{1.16} clearly holds. An immediate consequence, as we indicated before,
is that $\theta_1,\dots,\theta_n$ \eqref{1.13} are conjugate variables to
$\lambda_1,\dots,\lambda_n$, thus Sklyanin's formula is verified.

\begin{corollary}[Sklyanin's formula]
\label{cor:1.3}
The variables $\theta_1,\dots,\theta_n$ defined by
\begin{equation}
\theta_k=\frac{C(\lambda_k)}{A'(\lambda_k)},\quad k=1,\dots,n
\label{1.23}
\end{equation}
are conjugate to the eigenvalues $\lambda_1,\dots,\lambda_n$ of the Lax
matrix $L$.
\end{corollary}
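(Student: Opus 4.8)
The plan is to read Sklyanin's formula directly off Theorem~\ref{thm:1.2} and Lemma~\ref{lem:1.1}, and then to verify the defining relations \eqref{1.11} by a one-line Poisson-bracket computation in the action-angle coordinates $(\lambda,\phi)$ of Subsection~\ref{subsec:1.1.2}. First I would evaluate the polynomial identity \eqref{1.18} at $z=\lambda_k$. Since $\lambda_1,\dots,\lambda_n$ are the roots of $A(z)=\det(z\1_n-L)$, we have $A(\lambda_k)=0$, while $A'(\lambda_k)\neq 0$ wherever the eigenvalues of $L$ are pairwise distinct --- which for $g\neq 0$ is all of $T^\ast\cC$, and is in any case the natural domain of \eqref{1.13}. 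Dividing \eqref{1.18} by $A'(\lambda_k)$ then gives
\[
\theta_k=\frac{C(\lambda_k)}{A'(\lambda_k)}=\frac{D(\lambda_k)}{A'(\lambda_k)}+\frac{\ri g}{2}\,\frac{A''(\lambda_k)}{A'(\lambda_k)}.
\]
By Lemma~\ref{lem:1.1} the first summand is $\mu_k=\phi_k$, and the second summand is precisely the function $f_k(\lambda_1,\dots,\lambda_n)$ of \eqref{1.22}, which depends on $\lambda$ alone. Hence $\theta_k=\phi_k+f_k(\lambda)$, i.e. relation \eqref{1.15} --- conjectured in \cite{FM16} --- holds with this explicit $f_k$.

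Next I would compute the brackets. On the gauge slice $\tilde S$ of \eqref{1.10} the coordinates $(\phi,\lambda)$ are canonical for $\tilde\omega=\sum_j d\phi_j\wedge d\lambda_j$, so $\{\phi_j,\lambda_k\}=\delta_{jk}$ and $\{\lambda_j,\lambda_k\}=0$. Since $f_j$ is a function of $\lambda$ only, the Leibniz rule yields $\{f_j(\lambda),\lambda_k\}=\sum_\ell(\partial f_j/\partial\lambda_\ell)\{\lambda_\ell,\lambda_k\}=0$, and therefore
\[
\{\theta_j,\lambda_k\}=\{\phi_j,\lambda_k\}+\{f_j(\lambda),\lambda_k\}=\delta_{jk}.
\]
The same bookkeeping, together with the closure condition \eqref{1.16}, gives $\{\theta_j,\theta_k\}=\{\phi_j,\phi_k\}+(\partial f_k/\partial\lambda_j)-(\partial f_j/\partial\lambda_k)=0$, so the $\theta_k$ are also in involution; thus $\theta_1,\dots,\theta_n$ are genuine Darboux partners of the spectral variables $\lambda_1,\dots,\lambda_n$.

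There is no real obstacle left: the substantive work has already been packaged into Theorem~\ref{thm:1.2} (the algebraic identity $C=D+\frac{\ri g}{2}A''$) and Lemma~\ref{lem:1.1} (the identification of $D(\lambda_k)/A'(\lambda_k)$ with the angle variable $\phi_k$). The only point deserving a word of care is the domain of validity: \eqref{1.13} presupposes $A'(\lambda_k)\neq 0$ (simple spectrum of $L$) and uses the action-angle chart of \eqref{1.10}; both are available on the dense open subset of $T^\ast\cC$ --- all of it for $g\neq 0$ --- on which the corollary should be stated, and since the $\lambda_k$, $\theta_k$ are smooth and the bracket continuous there, no further argument is needed.
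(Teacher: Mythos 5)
Your proof is correct and follows essentially the same route as the paper: evaluate the identity of Theorem \ref{thm:1.2} at $z=\lambda_k$, identify $D(\lambda_k)/A'(\lambda_k)$ with the angle variable $\phi_k$ via Lemma \ref{lem:1.1}, and observe that adding the $\lambda$-dependent functions $f_k$ satisfying \eqref{1.16} does not spoil the canonical relations. The only difference is that you spell out the final Poisson-bracket bookkeeping in the $(\lambda,\phi)$ chart, which the paper treats as an immediate consequence of \eqref{1.15}--\eqref{1.16}; this is a harmless and correct elaboration.
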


\section{Discussion}
\label{sec:1.3}

There seem to be several ways for generalisation. For example, one might consider
rational Calogero-Moser models associated to root systems other than type A.
The hyperbolic Calogero-Moser systems as well as, the `relativistic' Calogero-Moser
systems, also known as Ruijsenaars-Schneider systems, are also of considerable interest.

In Appendix \ref{sec:A.1}, we give another proof for Theorem \ref{thm:1.2} based on the scattering theory of particles in the rational Calogero-Moser system.

\chapter{Trigonometric BC${}_{\textit{n}}$ Sutherland system}
\label{chap:2}

In this chapter, we present a new case of action-angle duality
between integrable many-body systems of Calogero-Ruijsenaars type.
This chapter contains our results reported in \citepalias{FG14,Go14,GF15}.

The two systems live on the action-angle phase spaces of each other in such a way that
the action variables of each system serve as the particle positions of the other one.
Our investigation utilizes an idea that was exploited previously to provide group-theoretic
interpretation for several dualities discovered originally by Ruijsenaars.
In the group-theoretic framework one applies Hamiltonian reduction to two Abelian
Poisson algebras of invariants on a higher dimensional phase space and identifies
their reductions as action and position variables of two integrable systems living
on two different models of the single reduced phase space. Taking the cotangent bundle
of $\UN(2n)$ as the upstairs space, we demonstrate how this mechanism leads to a new
dual pair involving the $\BC_n$ trigonometric Sutherland system. Thereby we generalise
earlier results pertaining to the $\rA_{n-1}$ trigonometric Sutherland system \cite{FA10}
as well as a recent work by Pusztai \cite{Pu12} on the hyperbolic $\BC_n$ Sutherland
system.

The specific goal in this chapter is to find out how this result can be generalised
if one replaces the hyperbolic $\BC_n$ system with its trigonometric analogue. A similar
problem has been studied previously in the $\rA_{n-1}$ case, where it was found that the
dual of the trigonometric Sutherland system possesses intricate global structure
\cite{FA10,Ru95}. The global description of the duality necessitates a separate
investigation also in the $\BC_n$ case, since it cannot be derived by naive analytic
continuation between trigonometric and hyperbolic functions. This problem turns out to
be considerably more complicated than those studied in \cite{FA10,Pu12}.

The trigonometric $\BC_n$ Sutherland system is defined by the Hamiltonian
\begin{equation}
H=\frac{1}{2}\sum_{j=1}^np_j^2
+\sum_{1\leq j<k\leq n}\bigg[\frac{\gamma}{\sin^2(q_j-q_k)}
+\frac{\gamma}{\sin^2(q_j+q_k)}\bigg]
+\sum_{j=1}^n \frac{\gamma_1}{\sin^2(q_j)}
+\sum_{j=1}^n \frac{\gamma_2}{\sin^2(2q_j)}.
\label{2.1}
\end{equation}
Here $(q,p)$ varies in the cotangent bundle $M=T^\ast C_1=C_1\times\R^n$ of the domain
\begin{equation}
C_1=\bigg\{q\in\R^n\bigg|\frac{\pi}{2}>q_1>\dots>q_n>0\bigg\},
\label{2.2}
\end{equation}
and the three independent real coupling constants $\gamma,\gamma_1,\gamma_2$
are supposed to satisfy
\begin{equation}
\gamma>0,\quad
\gamma_2>0,\quad
4\gamma_1+\gamma_2>0.
\label{2.3}
\end{equation}
The inequalities in \eqref{2.3} guarantee that the $n$ particles with coordinates $q_j$
cannot leave the open interval $(0,\pi/2)$ and they cannot collide. At a `semi-global'
level, the dual system will be shown to have the Hamiltonian
\begin{align}
\tilde{H}^0=&\sum_{j=1}^n\cos(\vartheta_j)
\bigg[1-\frac{\nu^2}{\lambda_j^2}\bigg]^{\tfrac{1}{2}}
\bigg[1-\frac{\kappa^2}{\lambda_j^2}\bigg]^{\tfrac{1}{2}}
\prod_{\substack{k=1\\(k\neq j)}}^n
\bigg[1-\frac{4\mu^2}{(\lambda_j-\lambda_k)^2}\bigg]^{\tfrac{1}{2}}
\bigg[1-\frac{4\mu^2}{(\lambda_j+\lambda_k)^2}\bigg]^{\tfrac{1}{2}}\nonumber\\
&-\frac{\nu\kappa}{4\mu^2}\prod_{j=1}^n
\bigg[1-\frac{4\mu^2}{\lambda_j^2}\bigg]
+\frac{\nu\kappa}{4\mu^2}.
\label{2.4}
\end{align}
Here $\mu>0,\nu,\kappa$ are real constants, $\vartheta_1,\dots,\vartheta_n$
are angular variables, and $\lambda$ varies in the Weyl chamber with thick walls:
\begin{equation}
C_2=\bigg\{\lambda\in\R^n\bigg|
\begin{matrix}\lambda_a-\lambda_{a+1}>2\mu,\\
(a=1,\dots,n-1)\end{matrix}
\quad\text{and}\quad
\lambda_n>\max\{|\nu|,|\kappa|\}\bigg\}.
\label{2.5}
\end{equation}
The inequalities defining $C_2$ ensure the reality and the smoothness of $\tilde{H}^0$ on
the phase space $\tilde{M}^0= C_2\times\T^n$, which is equipped with the symplectic form
\begin{equation}
\tilde\omega^0=\sum_{k=1}^nd\lambda_k\wedge d\vartheta_k.
\label{2.6}
\end{equation}
Duality will be established under the following relation between the couplings,
\begin{equation}
\gamma=\mu^2,\quad
\gamma_1=\frac{\nu\kappa}{2},\quad
\gamma_2=\frac{(\nu-\kappa)^2}{2},
\label{2.7}
\end{equation}
where in addition to $\mu>0$ we also adopt the condition
\begin{equation}
\nu>|\kappa|\geq 0.
\label{2.8}
\end{equation}
This entails that equation \eqref{2.7} gives a one-to-one correspondence of
the parameters $(\gamma,\gamma_1,\gamma_2)$ subject to \eqref{2.3} and $(\mu,\nu,\kappa)$,
and also simplifies our analysis. In the above, the qualification `semi-global'
indicates that $\tilde{M}^0$ represents a dense open submanifold of the full dual phase
space $\tilde{M}$. The completion of $\tilde{M}^0$ into $\tilde{M}$ guarantees both the
completeness of the Hamiltonian flows of the dual system and the global nature of the
symplectomorphism between $M$ and $\tilde{M}$. The structure of $\tilde{M}$ will also be
clarified. For example, we shall see that the action variables of the
Sutherland system fill the closure of the domain $C_2$, with the boundary
points corresponding to degenerate Liouville tori.

The integrable systems $(M,\omega,H)$ and $(\tilde{M},\tilde{\omega},\tilde{H})$
as well as their duality relation will emerge from an appropriate Hamiltonian reduction.
Specifically, we will reduce the cotangent bundle $T^\ast\UN(2n)$ with respect to the
symmetry group $G_+\times G_+$, where $G_+\cong\UN(n)\times\UN(n)$ is the fix-point
subgroup of an involution of $\UN(2n)$. This enlarges the range of the reduction approach
to action-angle dualities \cite{FGNR00,Go95,Ne99}.

\section{Physical interpretation}
\label{sec:2.1}

The trigonometric $\BC_n$ Sutherland model has the following physical interpretation. Consider a circle of radius $1/2$ with centre $O$. First, put one particle on the circle to an arbitrary point $Q_0$, hence creating reference direction $\overrightarrow{OQ_0}$, which coordinates a point $Q$ on the circle with the angle $\phi(Q)=\angle QOQ_0\in(-\pi,\pi]$, i.e. $\phi(Q_0)=0$. Next, place $n$ particles on the circle at some points $Q_1,\dots,Q_n$, such that their angles $\phi_j=\phi(Q_j)$ ($j=1,\dots,n$) satisfy $\pi>\phi_1>\dots>\phi_n>0$. Put $n$ additional particles on the circle at `mirror images' $Q_{-j}$ of $Q_j$ with respect to the point $Q_0$, that is $\phi(Q_j)=-\phi(Q_{-j})$.

\begin{figure}[h!]
\centering
\begin{tikzpicture}[%
scale=4.7,%
vector/.style={very thick, ->, >=stealth'},%
axis/.style={->, >=stealth'},%
angle/.style={thick, ->, >=stealth'},%
length/.style={thick, <->, >=stealth'}]

\draw (0,0)node[left]{$O$};
\draw (1,0) arc(0:360:1cm);
\draw[decorate,decoration={brace,mirror,amplitude=7pt}]
 (0,0) -- 
 ({cos(deg(-2*pi/3))},{sin(deg(-2*pi/3))}) 
 node[midway,anchor=south,yshift=0.2cm,rotate={deg(pi/3)}] {$R=1/2$};
\draw[axis,thin] 
 (0,0) -- 
 (1.2,0);
\filldraw[black] 
(1,0) node[anchor=north west]{$Q_0$} circle [radius=0.015]
({cos(deg(pi/9))},{sin(deg(pi/9))}) circle [radius=0.015]
({cos(deg(-pi/9))},{sin(deg(-pi/9))}) circle [radius=0.015]
({cos(deg(2*pi/3))},{sin(deg(2*pi/3))}) circle [radius=0.015]
({cos(deg(-2*pi/3))},{sin(deg(-2*pi/3))}) circle [radius=0.015];
\draw[dashed] 
 (0,0) -- 
 ({cos(deg(pi/9))},{sin(deg(pi/9))}) 
 node[anchor=south west] {$Q_k$};
\draw[dashed] 
 (0,0) -- 
 ({cos(deg(2*pi/3))},{sin(deg(2*pi/3))}) 
 node[anchor=south east] {$Q_j$};
\draw 
 ({cos(deg(-pi/9))},{sin(deg(-pi/9))}) 
 node[anchor=north west] {$Q_{-k}$};
\draw[dashed] 
 (0,0) -- 
 ({cos(deg(-2*pi/3))},{sin(deg(-2*pi/3))}) 
 node[anchor=north] {$Q_{-j}$};
\draw[angle] (.5,0) node[anchor=south east]{$\phi_k$} arc(0:20:.5cm);
\draw[angle] (.3,0) node[anchor=south east,xshift=-.7cm,yshift=.3cm]{$\phi_j$} arc(0:120:.3cm);
\draw[angle] (.3,0) node[anchor=north east,xshift=-.5cm,yshift=-.4cm]{$\phi_{-j}$} arc(0:-120:.3cm);
\draw[length] (1.05,0) arc(0:120:1.05cm);
\draw ({1.05*cos(deg(pi/3))},{1.05*sin(deg(pi/3))}) node[fill=white,circle,inner sep=0pt]{$q_j$};
\draw[length] (1,0) arc(0:20:1cm);
\draw ({cos(deg(pi/18))},{sin(deg(pi/18))}) node[fill=white,circle,inner sep=0pt]{$q_k$};
\draw[thick] 
 ({cos(deg(pi/9))},{sin(deg(pi/9))}) -- 
 ({cos(deg(2*pi/3))},{sin(deg(2*pi/3))})
 node[midway,above,rotate={-deg(pi/9)}]{$\sin(q_j-q_k)$};
\draw[thick] 
 ({cos(deg(-2*pi/3))},{sin(deg(-2*pi/3))}) -- 
 ({cos(deg(pi/9))},{sin(deg(pi/9))}) 
 node[pos=.45,below,rotate={deg(2*pi/9)}]{$\sin(q_j+q_k)$};
\draw[thick] 
 ({cos(deg(2*pi/3))},{sin(deg(2*pi/3))}) -- 
 ({cos(deg(-2*pi/3))},{sin(deg(-2*pi/3))}) 
 node[midway,above,rotate={deg(pi/2)}]{$\sin(2q_j)$};
 \draw[thick] 
  ({cos(deg(2*pi/3))},{sin(deg(2*pi/3))}) -- 
  (1,0) 
  node[midway,below,rotate={deg(-pi/6)}]{$\sin(q_j)$};
\end{tikzpicture}
\caption{The schematics of trigonometric BC$_n$ Sutherland model.}
\label{fig:6}
\end{figure}
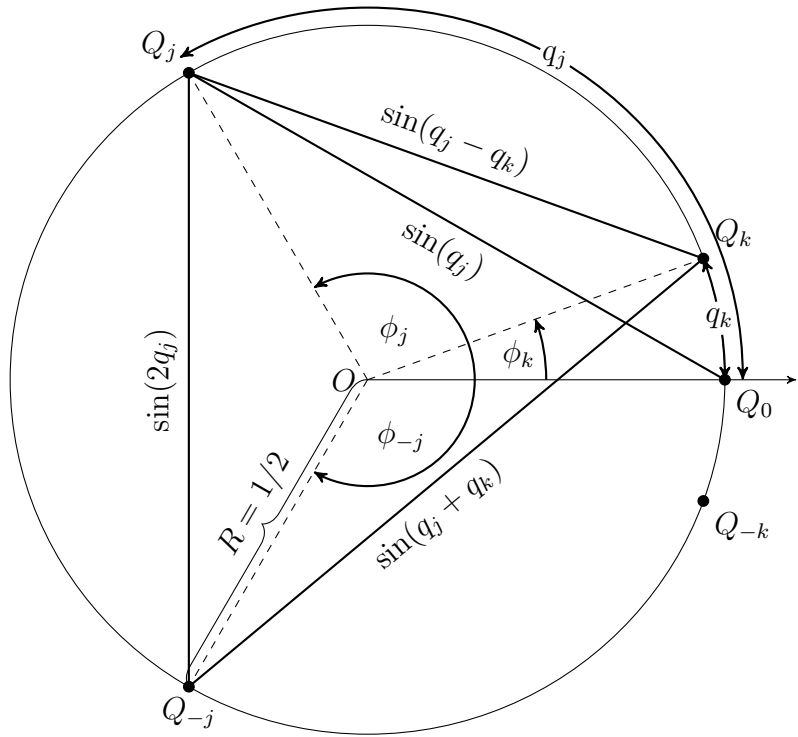

\noindent
Now, let these particles interact via a pair-potential that is inversely proportional to the square of the chord-distance. This interaction clearly preserves the initial symmetric configuration. Therefore $Q_0$ is fixed and acts as a boundary. Let us use the arc lengths $q_j=\phi_j/2$ instead of the angles. Due to the symmetry, the configuration is specified by $q_1,\dots,q_n$, which satisfy the inequalities in $C_1$ \eqref{2.2}. Let $\gamma_1,\gamma_2,\gamma$ be particle-boundary, particle-mirror particle, and bulk interaction couplings, respectively.

One can distinguish four types of chord-distances corresponding to these couplings (see Figure \ref{fig:6}), namely
\begin{equation}
\gamma_1:\ \sin(q_j),\quad
\gamma_2:\ \sin(2q_j),\quad
\gamma:\ \sin(q_j-q_k),\quad
\sin(q_j+q_k).
\label{2.9}
\end{equation}
Let $p_1,\dots,p_n$ stand for the generalised momenta of the particles at $q_1,\dots,q_n$. Then the total energy of the system is given by the Hamiltonian $H$ \eqref{2.1}, which exhibits symmetry under the Weyl group of the $\BC_n$ root system.

\section{Definition of the Hamiltonian reduction}
\label{sec:2.2}

Next we describe the starting data which will lead to integrable many-body systems
in duality by means of the mechanism outlined in the Introduction. We also collect
some group-theoretic facts that will be used in the demonstration of this claim.

Our investigation requires the unitary group of degree $2n$, i.e.
\begin{equation}
G=\UN(2n)=\{y\in\GL(2n,\C)\mid y^\dag y=\1_{2n}\},
\label{2.10}
\end{equation}
and its Lie algebra
\begin{equation}
\cG=\un(2n)=\{Y\in\gl(2n,\C)\mid Y^\dag+Y=\0_{2n}\},
\label{2.11}
\end{equation}
where $\1_{2n}$ and $\0_{2n}$ denote the identity and null matrices of size $2n$,
respectively. We endow the Lie algebra $\cG$
with the Ad-invariant bilinear form
\begin{equation}
\langle\cdot,\cdot\rangle\colon\cG\times\cG\to\R,\quad
(Y_1,Y_2)\mapsto\langle Y_1,Y_2\rangle=\tr(Y_1Y_2),
\label{2.12}
\end{equation}
and identify $\cG$ with the dual space $\cG^\ast$ in the usual manner. By using
left-translations to trivialize the cotangent bundle $T^\ast G$, we also adopt
the identification
\begin{equation}
T^\ast G\cong G\times\cG^\ast\cong G\times\cG
=\{(y,Y)\mid y\in G,\ Y\in\cG\}.
\label{2.13}
\end{equation}
Then the canonical symplectic form of $T^\ast G$ can be written as
\begin{equation}
\Omega^{T^\ast G}=-d\langle y^{-1}dy,Y\rangle.
\label{2.14}
\end{equation}
It can be evaluated according to the formula
\begin{equation}
\Omega^{T^\ast G}_{(y,Y)}(\Delta y\oplus\Delta Y,\Delta'y\oplus\Delta'Y)
=\langle y^{-1}\Delta y,\Delta'Y\rangle
-\langle y^{-1}\Delta'y,\Delta Y\rangle
+\langle[y^{-1}\Delta y,y^{-1}\Delta'y],Y\rangle,
\label{2.15}
\end{equation}
where $\Delta y\oplus\Delta Y,\Delta'y\oplus\Delta'Y\in T_{(y,Y)} T^\ast G$
are tangent vectors at a point $(y,Y)\in T^\ast G$.

\noindent
We introduce the $2n\times 2n$ Hermitian, unitary matrix partitioned into
four $n\times n$ blocks
\begin{equation}
C=\begin{bmatrix}
\0_n&\1_n\\\1_n&\0_n
\end{bmatrix}\in G,
\label{2.16}
\end{equation}
and the involutive automorphism of $G$ defined as conjugation with $C$
\begin{equation}
\Gamma\colon G\to G,\quad
y\mapsto\Gamma(y)=CyC^{-1}.
\label{2.17}
\end{equation}
The set of fix-points of $\Gamma$ forms the subgroup of $G$ consisting of $2n\times 2n$
unitary matrices with centro-symmetric block structure,
\begin{equation}
G_+=\{y\in G\mid\Gamma(y)=y\}=\bigg\{
\begin{bmatrix}
a&b\\b&a
\end{bmatrix}\in G\bigg\}\cong\UN(n)\times\UN(n).
\label{2.18}
\end{equation}
We also introduce the closed submanifold $G_-$ of $G$ by the definition
\begin{equation}
G_-=\{y\in G\mid\Gamma(y)=y^{-1}\}=\bigg\{
\begin{bmatrix}
a&b\\c&a^\dag
\end{bmatrix}\in G\bigg\vert\ b,c\in\ri\un(n)\bigg\}.
\label{2.19}
\end{equation}
By slight abuse of notation, we let $\Gamma$ stand for the induced
involution of the Lie algebra $\cG$, too. We can decompose $\cG$ as
\begin{equation}
\cG=\cG_+\oplus\cG_-,\quad Y=Y_++Y_-,
\label{2.20}
\end{equation}
where $\cG_\pm$ are the eigenspaces of $\Gamma$
corresponding to the eigenvalues $\pm 1$, respectively, i.e.
\begin{equation}
\begin{split}
\cG_+&=\ker(\Gamma-\id)=\bigg\{
\begin{bmatrix}
A&B\\
B&A
\end{bmatrix}
\bigg\vert\ A,B\in\un(n)\bigg\},\\
\cG_-&=\ker(\Gamma+\id)=\bigg\{
\begin{bmatrix}
A&B\\
-B&-A
\end{bmatrix}
\bigg\vert\ A\in\un(n),\ B\in\ri\un(n)\bigg\}.
\end{split}
\label{2.21}
\end{equation}
We are interested in a reduction of $T^\ast G$ based on the symmetry group
$G_+ \times G_+$. We shall use the shifting trick of symplectic reduction \cite{OR04},
and thus we first prepare a coadjoint orbit of the symmetry group.
To do this, we take any vector $V\in \C^{2n}$ that satisfies $CV+V=0$, and associate
to it the element $\upsilon_{\mu,\nu}^\ell(V)$ of $\cG_+$ by the definition
\begin{equation}
\upsilon_{\mu,\nu}^\ell(V)=\ri\mu\big(VV^\dag-\1_{2n}\big)+\ri(\mu-\nu)C,
\label{2.22}
\end{equation}
where $\mu,\nu\in\R$ are real parameters.
The set
\begin{equation}
\cO^\ell=\big\{\upsilon^\ell\in\cG_+\mid
\exists\ V\in\C^{2n},\ V^\dag V=2n,\ CV+V=0,\
\upsilon^\ell=\upsilon_{\mu,\nu}^\ell(V)
\big\}
\label{2.23}
\end{equation}
represents a coadjoint orbit of $G_+$ of dimension $2(n-1)$.
We let $\cO^r=\{\upsilon^r\}$ denote the one-point coadjoint orbit of $G_+$
containing the element
\begin{equation}
\upsilon^r=-\ri\kappa C\quad\text{with some constant}\;\kappa\in\R,
\label{2.24}
\end{equation}
and consider
\begin{equation}
\cO=\cO^\ell\oplus\cO^r\subset\cG_+\oplus\cG_+\cong(\cG_+\oplus\cG_+)^\ast,
\label{2.25}
\end{equation}
which is a coadjoint orbit\footnote{The same coadjoint orbit was used in \cite{Pu12}.}
of $G_+ \times G_+$. Our starting point for symplectic reduction will be the phase space
$(P,\Omega)$ with
\begin{equation}
P =T^\ast G\times\cO
\quad\mbox{and}\quad
\Omega=\Omega^{T^\ast G}+\Omega^{\cO},
\label{2.26}
\end{equation}
where $\Omega^{\cO}$ denotes the Kirillov-Kostant-Souriau symplectic form on $\cO$.
The natural symplectic action of $G_+\times G_+$ on $P$ is defined by
\begin{equation}
\Phi_{(g_L,g_R)}(y,Y,\upsilon^\ell\oplus\upsilon^r)=
\big(g_L^{\phantom{1}}yg_R^{-1},g_R^{\phantom{1}}Yg_R^{-1},
g_L^{\phantom{1}}\upsilon^\ell g_L^{-1}\oplus\upsilon^r \big).
\label{2.27}
\end{equation}
The corresponding momentum map $J\colon P\to\cG_+\oplus\cG_+$ is given by the formula
\begin{equation}
J(y,Y,\upsilon^\ell\oplus\upsilon^r)=
\big((yYy^{-1})_++\upsilon^\ell\big)\oplus\big(-Y_++\upsilon^r\big).
\label{2.28}
\end{equation}
We shall see that the reduced phase space
\begin{equation}
P_\red=P_0/(G_+\times G_+),\quad P_0=J^{-1}(0),
\label{2.29}
\end{equation}
is a smooth symplectic manifold, which inherits two Abelian Poisson algebras from $P$.

Using the identification $\cG^\ast\cong\cG$, the invariant functions $C^\infty(\cG)^G$
form the center of the Lie-Poisson bracket. Denote by $C^\infty(G)^{G_+\times G_+}$
the set of smooth functions on $G$ that are invariant under the $(G_+\times G_+)$-action
on $G$ that appears in the first component of \eqref{2.27}. Let us also introduce the maps
\begin{equation}
\pi_1\colon P\to G,\quad (y,Y, \upsilon^\ell, \upsilon^r)\mapsto y,
\label{2.30}
\end{equation}
and
\begin{equation}
\pi_2\colon P\to\cG,\quad (y,Y, \upsilon^\ell, \upsilon^r)\mapsto Y.
\label{2.31}
\end{equation}
It is clear that
\begin{equation}
\fQ^1= \pi_1^\ast(C^\infty(G)^{G_+\times G_+})
\quad\text{and}\quad
\fQ^2= \pi_2^\ast(C^\infty(\cG)^G)
\label{2.32}
\end{equation}
are two Abelian subalgebras in the Poisson algebra of smooth functions on $(P,\Omega)$
and these Abelian Poisson algebras descend to the reduced phase space $P_\red$.

Later we shall construct two models of $P_\red$ by exhibiting two global cross-sections
for the action of $G_+\times G_+$ on $P_0$. For this, we shall apply two different
methods for solving the constraint equations that, according to \eqref{2.28}, define
the level surface $P_0\subset P$:
\begin{equation}
(yYy^{-1})_++\upsilon^\ell=\0_{2n}
\quad\text{and}\quad
-Y_++\upsilon^r=\0_{2n},
\label{2.33}
\end{equation}
where $\upsilon^\ell=\upsilon^\ell_{\mu,\lambda}(V)$ \eqref{2.22} for some vector
$V\in\C^{2n}$ subject to $CV+V=0$, $V^\dag V=2n$ and $\upsilon^r=-\ri\kappa C$.
We below collect the group-theoretic results needed for our constructions.
To start, let us associate the diagonal $2n\times 2n$ matrix
\begin{equation}
Q(q)=\diag(q,-q)
\label{2.34}
\end{equation}
with any $q\in\R^n$. Notice that the set
\begin{equation}
\cA=\{\ri Q(q)\mid q\in\R^n\}\subset\cG_-
\label{2.35}
\end{equation}
is a maximal Abelian subalgebra in $\cG_-$. The corresponding subgroup of $G$ has the form
\begin{equation}
\exp(\cA)=\big\{
e^{\ri Q(q)}=
\diag\big(e^{\ri q_1},\dots,e^{\ri q_n},e^{-\ri q_1},\dots,e^{-\ri q_n}\big)
\mid q\in\R^n\big\}.
\label{2.36}
\end{equation}
The centralizer of $\cA$ inside $G_+$ \eqref{2.18} (with respect to conjugation)
is the Abelian subgroup
\begin{equation}
Z=Z_{G_+}(\cA)=\big\{
e^{\ri \xi}=
\diag\big(e^{\ri x_1},\dots,e^{\ri x_n},e^{\ri x_1},\dots,e^{\ri x_n}\big)
\mid x\in\R^n\big\}<G_+.
\label{2.37}
\end{equation}
The Lie algebra of $Z$ is
\begin{equation}
\cZ=\{\ri \xi=\ri\, \diag(x,x)\mid x\in\R^n\} <\cG_+.
\label{2.38}
\end{equation}

The results that we now recall (see e.g. \cite{He78,Ma97,Sc94}) will be used later.
First, for any $y\in G$ there exist elements $y_L$, $y_R$ from $G_+$ and
unique $q\in\R^n$ satisfying
\begin{equation}
\frac{\pi}{2}\geq q_1\geq\dots\geq q_n\geq 0
\label{2.39}
\end{equation}
such that
\begin{equation}
y=y_L^{\phantom{1}}e^{\ri Q(q)}y_R^{-1}.
\label{2.40}
\end{equation}
If all components of $q$ satisfy strict inequalities, then the pair $y_L, y_R$ is
unique precisely up to the replacements $(y_L,y_R)\to(y_L\zeta,y_R\zeta)$ with arbitrary
$\zeta\in Z$. The decomposition \eqref{2.40} is referred to as the generalised Cartan
decomposition corresponding to the involution $\Gamma$.

Second, every element $g\in G_-$ can be written in the form
\begin{equation}
g=\eta e^{2\ri Q(q)}\eta^{-1}
\label{2.41}
\end{equation}
with some $\eta\in G_+$ and uniquely determined $q\in\R^n$ subject to \eqref{2.39}.
In the case of strict inequalities for $q$, the freedom in $\eta$ is given precisely
by the replacements $\eta\to\eta\zeta$, $\forall\,\zeta\in Z$.

Third, every element $Y_-\in\cG_-$ can be written in the form
\begin{equation}
Y_-=g_R\ri D g_R^{-1},\quad
D=\diag(d_1,\dots,d_n,-d_1,\dots,-d_n),
\label{2.42}
\end{equation}
with $g_R \in G_+$ and uniquely determined real $d_i$ satisfying
\begin{equation}
d_1\geq\dots\geq d_n\geq 0.
\label{2.43}
\end{equation}
If the $d_j$ ($j=1,\dots,n$) satisfy strict inequalities, then the freedom in $g_R$ is exhausted
by the replacements $g_R\to g_R\zeta$, $\forall\,\zeta\in Z$.

The first and the second statements are essentially equivalent since the map
\begin{equation}
G\to G_-,\quad y\mapsto y^{-1}CyC
\label{2.44}
\end{equation}
descends to a diffeomorphism from
\begin{equation}
G/G_+ = \{ G_+ g \mid g\in G\}
\label{2.45}
\end{equation}
onto $G_-$ \cite{He78}.

\section{Action-angle duality}
\label{sec:2.3}

\subsection{The Sutherland gauge}
\label{subsec:2.3.1}

We here exhibit a symplectomorphism between the reduced phase space $(P_\red,\Omega_\red)$
and the Sutherland phase space
\begin{equation}
M=T^\ast C_1=C_1\times\R^n
\label{2.46}
\end{equation}
equipped with its canonical symplectic form, where $C_1$ was defined in \eqref{2.2}.
As preparation, we associate with any $(q,p)\in M$ the $\cG$-element
\begin{equation}
Y(q,p)=K(q,p)-\ri\kappa C,
\label{2.47}
\end{equation}
where $K(q,p)$ is the $2n\times 2n$ matrix
\begin{equation}
\begin{gathered}
K_{j,k}=-K_{n+j,n+k}=\ri p_j\delta_{j,k}-\mu(1-\delta_{j,k})/\sin(q_j-q_k),\\
K_{j,n+k}=-K_{n+j,k}=(\nu/\sin(2q_j)+\kappa\cot(2q_j))\delta_{j,k}
+\mu(1-\delta_{j,k})/\sin(q_j+q_k),\\
\end{gathered}
\label{2.48}
\end{equation}
with $j,k=1,\dots,n$. We also introduce the $2n$-component vector
\begin{equation}
V_\R=(\underbrace{1,\dots,1}_{n\ {\rm times}},
\underbrace{-1,\dots,-1}_{n\ {\rm times}})^\top.
\label{2.49}
\end{equation}
Notice from \eqref{2.21} that $K(q,p) \in \cG_-$.

Throughout the chapter we adopt the conditions \eqref{2.8} and take $\mu>0$, although
the next result requires only that the real parameters $\mu,\nu,\kappa$ satisfy
\begin{equation}
\mu\neq 0\quad\text{and}\quad|\nu|\neq|\kappa|.
\label{2.50}
\end{equation}

\begin{theorem}
\label{thm:2.1}
Using the notations introduced in \eqref{2.22}, \eqref{2.34} and \eqref{2.47},
the subset $S$ of the phase space $P$ \eqref{2.26} given by
\begin{equation}
S=\left\{(e^{\ri Q(q)},Y(q,p),\upsilon_{\mu,\nu}^\ell(V_\R),\upsilon^r)
\mid (q,p)\in M \right\},
\label{2.51}
\end{equation}
is a global cross-section for the action of $G_+\times G_+$ on $P_0=J^{-1}(0)$.
Identifying $P_\red$ with $S$, the reduced symplectic form is equal to the Darboux
form $\omega=\sum_{k=1}^ndq_k \wedge dp_k$. Thus the obvious identification between
$S$ and $M$ provides a symplectomorphism
\begin{equation}
(P_\red,\Omega_\red)\simeq(M,\omega).
\label{2.52}
\end{equation}
\end{theorem}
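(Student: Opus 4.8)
The statement packages three assertions: $S\subseteq P_0=J^{-1}(0)$; that $S$ meets every $(G_+\times G_+)$-orbit through $P_0$ exactly once; and that the form induced on $S\cong P_\red$ from $\Omega$ is $\omega=\sum_k dq_k\wedge dp_k$. I would treat them in this order. The inclusion $S\subseteq P_0$ is a direct substitution: since $K(q,p)\in\cG_-$ by \eqref{2.21}, the $\cG_+$-part of $Y(q,p)$ equals $-\ri\kappa C=\upsilon^r$, so the second equation of \eqref{2.33} holds; for the first, conjugate $Y(q,p)$ by $e^{\ri Q(q)}$ — this multiplies the off-diagonal blocks by phases, producing the factors $\sin(q_j\pm q_k)$ and $\sin(2q_j)$ — and project onto $\cG_+$ via $Z\mapsto\tfrac12(Z+CZC)$; the outcome is exactly $-\upsilon^\ell_{\mu,\nu}(V_\R)$, which is in fact how \eqref{2.48} is found in the first place.

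The heart of the matter is the cross-section property. Given $(y,Y,\upsilon^\ell,\upsilon^r)\in P_0$, the generalised Cartan decomposition \eqref{2.40} provides $(g_L,g_R)\in G_+\times G_+$ carrying $y$ to $e^{\ri Q(q)}$ with $q$ in the closed alcove \eqref{2.39}, while the $\cO^r$-component is untouched since $C$ is central in $G_+$ by \eqref{2.18}. After this move the second constraint forces $Y_+=-\ri\kappa C$, and, writing $\upsilon^\ell=\upsilon^\ell_{\mu,\nu}(V)$ with $V=(w,-w)^\top$, the first constraint $\big(e^{\ri Q(q)}Ye^{-\ri Q(q)}\big)_+=-\upsilon^\ell_{\mu,\nu}(V)$ can be read off block by block and entry by entry. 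The diagonal entries force $|w_j|=1$ for every $j$ (here $\mu\neq0$ is used); the off-diagonal entries of the first block and the diagonal entries of the second block carry denominators $\sin(q_j-q_k)$ and $\sin(2q_j)$, so solvability requires $\sin(q_j-q_k)\neq0$ $(j\neq k)$ — again because $\mu\neq0$ — and $\sin(2q_j)\neq0$ — because $q_j=0$ or $q_j=\pi/2$ would need $\nu=\mp\kappa$, excluded by $|\nu|\neq|\kappa|$. Together with \eqref{2.39} this means precisely $q\in C_1$. The same relations pin down $Y_-$ uniquely in terms of $q$ and the phases $w_j$, except for the diagonal of its $\un(n)$-block, which ranges freely over $(\ri\R)^n$ and is what we name $\ri p$. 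Since $q\in C_1$, the residual gauge freedom in \eqref{2.40} is exactly the torus $Z$ of \eqref{2.37}, acting on $V$ by $w_j\mapsto e^{\ri x_j}w_j$; choosing $x_j=-\arg w_j$ normalises $V$ to $V_\R$ and lands the point on $S$. The stabiliser that then remains is the subgroup of $Z$ fixing $V_\R$, namely the scalars $e^{\ri c}\1_{2n}$, which act trivially on all of $P$; thus $q$ and $\ri p$ are intrinsic (the former from \eqref{2.40}, the latter as the diagonal of the $\un(n)$-block of $Y_-$), the orbit meets $S$ in one point, and the parametrisation \eqref{2.51} descends to a bijection $M\to P_\red$ which, being smooth with smooth inverse, gives $P_\red\cong S\cong M$. (The omnipresent $\UN(1)$ stabiliser is also what makes the dimension count come out to $\dim P_\red=2n$.)

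For the reduced form, recall that for a global cross-section it corresponds, under $P_\red\cong S$, to $\iota_S^\ast\Omega$. Since $\upsilon^\ell$ and $\upsilon^r$ are constant along $S$ we have $\iota_S^\ast\Omega^{\cO}=0$, so $\iota_S^\ast\Omega^{T^\ast G}=-d\big(\iota_S^\ast\langle y^{-1}dy,Y\rangle\big)$ is all that contributes. On $S$ one has $y^{-1}dy=\ri\,dQ(q)$ with $dQ(q)=\diag(dq,-dq)$, and since $dQ(q)\,C$ is traceless, $\langle\ri\,dQ(q),Y(q,p)\rangle=\ri\,\tr\!\big(dQ(q)K(q,p)\big)$ involves only the diagonal $\diag(\ri p,-\ri p)$ of $K$ by \eqref{2.48}; hence this one-form is proportional to $\sum_k p_k\,dq_k$, and applying $-d$ yields the Darboux form $\omega=\sum_k dq_k\wedge dp_k$.

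The step I expect to dominate the work is the entry-by-entry analysis of $\big(e^{\ri Q(q)}Ye^{-\ri Q(q)}\big)_+=-\upsilon^\ell_{\mu,\nu}(V)$: from it one must simultaneously read off the genericity of $q$ (so that the cross-section lies over the open domain $C_1$ and the leftover gauge is exactly $Z$), the closed formula \eqref{2.48} for the off-diagonal part of $Y_-$, and the fact that the remaining gauge freedom is spent precisely in setting $V=V_\R$ with only trivially acting scalars left over. Keeping the $\cG_+$/$\cG_-$ block bookkeeping straight, and verifying that the constraint itself rules out every degenerate configuration, is where the genuine effort lies.
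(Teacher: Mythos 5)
Your proposal follows essentially the same route as the paper's proof: gauge-fix $y=e^{\ri Q(q)}$ via the generalised Cartan decomposition \eqref{2.40}, use the second constraint to fix $Y_+$, read the first constraint blockwise so that $\mu\neq 0$ forces $|u_j|=1$ and, together with $|\nu|\neq|\kappa|$, forces $q$ into the open domain $C_1$, spend the residual $Z$-gauge on normalising $V$ to $V_\R$ with only the trivially acting scalar $\UN(1)$ remaining, identify the free diagonal of the $\un(n)$-block with $\ri p$, and finish by pulling $\Omega$ back to $S$ (where the paper likewise leaves this last step as a direct computation). The argument and the use of the hypotheses \eqref{2.50} match the paper's; only the bookkeeping of the overall constant in the one-form $\langle y^{-1}dy,Y\rangle$ is left implicit in your sketch, at the same level of detail as the paper itself.
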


\begin{proof}
We saw in Section \ref{sec:2.2} that the points of the level surface $P_0$
satisfy the equations
\begin{equation}
(yYy^{-1})_++\upsilon^\ell_{\mu,\nu}(V)=\0_{2n}
\quad\text{and}\quad
-Y_+ -\ri \kappa C=\0_{2n},
\label{2.53}
\end{equation}
for some vector $V\in\C^{2n}$ subject to $CV+V=0$, $V^\dag V=2n$.
Remember that the block-form of any Lie algebra element $Y\in\cG$ is
\begin{equation}
Y=\begin{bmatrix}A&B\\-B^\dag&D\end{bmatrix}
\quad\text{with}\quad
A+A^\dag=\0_n=D+D^\dag,\quad
B\in\C^{n\times n}.
\label{2.54}
\end{equation}
Now the second constraint equation in \eqref{2.53} can be written as
\begin{equation}
2Y_+=\begin{bmatrix}A+D&B-B^\dag\\B-B^\dag&A+D\end{bmatrix}
=\begin{bmatrix}\0_n&-2\ri\kappa\1_n\\-2\ri\kappa\1_n&\0_n\end{bmatrix}
=-2\ri\kappa C,
\label{2.55}
\end{equation}
which implies that
\begin{equation}
D=-A\quad\text{and}\quad B^\dag=B+2\ri\kappa\1_n.
\label{2.56}
\end{equation}
Thus every point of $P_0$ has $\cG$-component $Y$ of the form
\begin{equation}
Y=\begin{bmatrix}A&B\\-B-2\ri\kappa\1_n&-A\end{bmatrix}
\quad\text{with}\quad
A+A^\dag=\0_n,
\quad
B\in\C^{n\times n}.
\label{2.57}
\end{equation}
By using the generalised Cartan decomposition \eqref{2.40} and applying a gauge
transformation (the action of $G_+ \times G_+$ on $P_0$), we may assume that
$y=e^{\ri Q(q)}$ with some $q$ satisfying \eqref{2.38}. Then the first equation
of the momentum map constraint \eqref{2.53} yields the matrix equation
\begin{equation}
\frac{1}{2\ri}\big(e^{\ri Q(q)}Ye^{-\ri Q(q)}+e^{-\ri Q(q)}CYCe^{\ri Q(q)}\big)
+\mu(VV^\dag-\1_{2n})+(\mu-\nu)C=\0_{2n}.
\label{2.58}
\end{equation}
If we introduce the notation $V=(u,-u)^\top$, $u\in \C^n$, and assume that $Y$
has the form \eqref{2.57} then \eqref{2.58} turns into the following equations
for $A$ and $B$
\begin{equation}
\frac{1}{2\ri}\big(e^{\ri q}Ae^{-\ri q}-e^{-\ri q}Ae^{\ri q}\big)
+\mu(uu^\dag-\1_n)=\0_n,
\label{2.59}
\end{equation}
and
\begin{equation}
\frac{1}{2\ri}\big(e^{\ri q}Be^{\ri q}-e^{-\ri q}Be^{-\ri q}\big)
-\kappa e^{-2\ri q}-\mu uu^\dag+(\mu-\nu)\1_n=\0_n.
\label{2.60}
\end{equation}
Since $\mu \neq 0$, equation \eqref{2.59} implies that $|u_j|^2=1$ for all $j=1,\dots,n$.
Therefore we can apply a `residual' gauge transformation by an element
$(g_L,g_R)=(e^{\ri\xi(x)},e^{\ri\xi(x)})$, with suitable $e^{\ri\xi(x)}\in Z$ \eqref{2.37}
 to transform $\upsilon_{\mu,\nu}^\ell(V)$ into $\upsilon_{\mu,\nu}^\ell(V_\R)$.
This amounts to setting $u_j=1$ for all $j=1,\dots,n$. After having done this, we return
to equations \eqref{2.59} and \eqref{2.60}. By writing out the equations entry-wise,
we obtain that the diagonal components of $A$ are arbitrary imaginary numbers (which
we denote by $\ri p_1,\dots,\ri p_n$) and we also obtain the following system of equations
\begin{equation}
\begin{split}
A_{j,k}\sin(q_j-q_k)=-\mu=-B_{j,k}\sin(q_j+q_k),&\quad j\neq k,\\
B_{j,j}\sin(2q_j)=\nu+\kappa\cos(2q_j)-\ri\kappa\sin(2q_j),&\quad j,k=1,\dots,n.
\end{split}
\label{2.61}
\end{equation}
So far we only knew that $q$ satisfies $\pi/2\geq q_1\geq\dots\geq q_n\geq 0$.
By virtue of the conditions \eqref{2.50}, the system \eqref{2.61} can be solved
if and only if $\pi/2>q_1>\dots>q_n>0$. Substituting the unique solution for $A$
and $B$ back into \eqref{2.57} gives the formula $Y=Y(q,p)$ as displayed in \eqref{2.47}.

The above arguments show that every gauge orbit in $P_0$ contains a point of $S$
\eqref{2.51}, and it is immediate by turning the equations backwards that every
point of $S$ belongs to $P_0$. By using that $q$ satisfies strict inequalities
and that all components of $V_\R$ are non-zero, it is also readily seen that no
two different points of $S$ are gauge equivalent. Moreover, the effectively acting
symmetry group, which is given by
\begin{equation}
(G_+ \times G_+)/\UN(1)_{\diag}
\label{2.62}
\end{equation}
where $\UN(1)$ contains the scalar unitary matrices, acts \emph{freely} on $P_0$.

It follows from the above that $P_\red$ is a smooth manifold diffeomorphic to $M$.
Now the proof is finished by direct computation of the pull-back of the symplectic
form $\Omega$ of $P$ \eqref{2.26} onto the global cross-section $S$.
\end{proof}

Let us recall that the Abelian Poisson algebras $\fQ^1$ and $\fQ^2$ \eqref{2.32}
consist of $(G_+ \times G_+)$-invariant functions on $P$, and thus descend to
Abelian Poisson algebras on the reduced phase space $P_\red$. In terms of the
model $M\simeq S\simeq P_\red$, the Poisson algebra $\fQ^2_\red$ is obviously
generated by the functions $(q,p)\mapsto\tr((-\ri Y(q,p)))^m$ for $m=1,\dots,2n$.
It will be shown in the following section\footnote{In fact, we shall see that
$Y(q,p)$ is conjugate to a diagonal matrix $\ri\Lambda$ of the form in equation
\eqref{2.71}.} that these functions vanish identically for the odd integers,
and functionally independent generators of $\fQ^2_\red$ are provided by the functions
\begin{equation}
H_k(q,p)=\frac{1}{4k}\tr(-\ri Y(q,p))^{2k},\quad k=1,\dots,n.
\label{2.63}
\end{equation}

The first of these functions reads
\begin{equation}
\begin{split}
H_1(q,p)=\frac{1}{4}\tr(-\ri Y(q,p))^2=&
\frac{1}{2}\sum_{j=1}^np_j^2
+\sum_{1\leq j<k\leq n}\bigg(\frac{\mu^2}{\sin^2(q_j-q_k)}
+\frac{\mu^2}{\sin^2(q_j+q_k)}\bigg)\\
&+\frac{1}{2}\sum_{j=1}^n\frac{\nu\kappa}{\sin^2(q_j)}
+\frac{1}{2}\sum_{j=1}^n\frac{(\nu-\kappa)^2}{\sin^2(2q_j)}.
\end{split}
\label{2.64}
\end{equation}
That is, upon the identification \eqref{2.7} it coincides with the Sutherland Hamiltonian
\eqref{2.1}. This implies the Liouville integrability of the Hamiltonian \eqref{2.1}.
Since its spectral invariants yield a commuting family of $n$ independent functions in
involution that include the Sutherland Hamiltonian, the Hermitian matrix function
$-\ri Y(q,p)$ \eqref{2.47} serves as a Lax matrix for the Sutherland system $(M,\omega,H)$.

As for the reduced Abelian Poisson algebra $\fQ^1_\red$, we notice that the cross-section
$S$ permits to identify it with the Abelian Poisson algebra of the smooth functions of the
variables $q_1,\dots,q_n$. This is so since the level set $P_0$ lies completely in the
`regular part' of the phase space $P$, where the $G$-component $y$ of
$(y,Y, \upsilon^\ell,\upsilon^r)$ is such that $Q(q)$ in its decomposition \eqref{2.40}
satisfies strict inequalities $\pi/2>q_1>\dots>q_n>0$. It is a well-known fact
that in the regular part the components of $q$ are smooth (actually real-analytic)
functions of $y$ (while globally they are only continuous functions). To see that every
smooth function depending on $q\in C_1$ is contained in $\fQ^1_\red$, one may further use
that every $(G_+\times G_+)$-invariant smooth function on $P_0$ can be extended to an
invariant smooth function on $P$. Indeed, this holds since $G_+ \times G_+$ is compact and
$P_0 \subset P$ is a regular submanifold, which itself follows from the free action
property established in the course of the proof of Theorem \ref{thm:2.1}.

We can summarize the outcome of the foregoing discussion as follows. Below, the generators
of Poisson algebras are understood in the functional sense, i.e. if some $f_1,\dots, f_n$
are generators then all smooth functions of them belong to the Poisson algebra.

\begin{corollary}
\label{cor:2.2}
By using the model $(M,\omega)$ of the reduced phase space $(P_\red,\Omega_\red)$ provided
by Theorem \ref{thm:2.1}, the Abelian Poisson algebra $\fQ^2_\red$ \eqref{2.31} can be
identified with the Poisson algebra generated by the spectral invariants \eqref{2.62} of
the `Sutherland Lax matrix' $-\ri Y(q,p)$ \eqref{2.47}, which according to \eqref{2.64}
include the many-body Hamiltonian $H(q,p)$ \eqref{2.1}, and $\fQ^1_\red$ can be identified
with the algebra generated by the corresponding position variables $q_j$ $(j=1,\dots,n)$.
\end{corollary}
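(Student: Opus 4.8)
The plan is to assemble the verification announced in the preceding discussion into a two-part argument, working throughout with the model $(M,\omega)\simeq S\simeq P_\red$ supplied by Theorem \ref{thm:2.1}. First I would treat $\fQ^2_\red$. By the definition \eqref{2.32}, the elements of $\fQ^2$ are the pullbacks by $\pi_2$ of the $\mathrm{Ad}$-invariant functions on $\cG\cong\cG^\ast$, so restricting them to the cross-section $S$ \eqref{2.51}, whose $\cG$-component at the point labelled by $(q,p)$ is exactly $Y(q,p)$ \eqref{2.47}, turns them into conjugation-invariant functions of the matrix $Y(q,p)$. Since the $\mathrm{Ad}$-invariant smooth functions on $\un(2n)$ are functionally generated by the power traces, $\fQ^2_\red$ is generated in the functional sense by $(q,p)\mapsto\tr(-\ri Y(q,p))^m$, $m=1,\dots,2n$. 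I would then invoke the fact, to be established in the next section, that $Y(q,p)$ is conjugate to $\ri\Lambda$ for a $\Lambda$ of the special block form exhibited there: this makes the odd power traces vanish identically and leaves the $n$ independent even ones, namely the Hamiltonians $H_k$ \eqref{2.63}, as functional generators of $\fQ^2_\red$. The explicit computation \eqref{2.64} then identifies $H_1$ with the Sutherland Hamiltonian \eqref{2.1} under the parameter dictionary \eqref{2.7}, so $\fQ^2_\red$ is indeed the algebra generated by the spectral invariants of the Lax matrix $-\ri Y(q,p)$.

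Next I would handle $\fQ^1_\red$. One inclusion is immediate: by \eqref{2.32}, the elements of $\fQ^1$ are pullbacks by $\pi_1$ of $(G_+\times G_+)$-invariant functions on $G$, and on $S$ the $G$-component is $e^{\ri Q(q)}$, so every element of $\fQ^1_\red$ restricts to a smooth function of $q$ alone. The substantive direction is that \emph{every} smooth function of $q\in C_1$ arises in this way. For this I would use that $P_0$ lies entirely in the regular part of $P$, where the generalised Cartan decomposition \eqref{2.40} assigns to the $G$-component a unique $q$ with strict inequalities $\pi/2>q_1>\dots>q_n>0$ and, on this locus, $q$ is a real-analytic function of $y$; hence any smooth $h$ of $q$ defines a smooth $(G_+\times G_+)$-invariant function on $P_0$. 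It then remains to extend such an invariant function off $P_0$ to an invariant smooth function on all of $P$, whose pullback by $\pi_1$ lands in $\fQ^1$ and restricts to $h(q)$ on $S$. This extension is available because $G_+\times G_+$ is compact and $P_0\subset P$ is a closed embedded submanifold: one first extends arbitrarily using a tubular neighbourhood and a partition of unity, then averages the extension over the compact group to restore invariance.

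The one point that demands genuine care --- and hence the main obstacle --- is the assertion that $P_0=J^{-1}(0)$ is a regular embedded submanifold of $P$, since the invariant-extension step rests on it. This is exactly where I would feed in the freeness of the action of the effective symmetry group $(G_+\times G_+)/\UN(1)_{\diag}$ established in the course of the proof of Theorem \ref{thm:2.1}: triviality of the point stabilizers along $J^{-1}(0)$ forces $P_0$ to be a regular submanifold. Everything else --- the identification of $\fQ^2_\red$ with the spectral invariants, the vanishing of the odd power traces, and the match \eqref{2.64} with the many-body Hamiltonian --- is then routine bookkeeping once Theorem \ref{thm:2.1} and the structural results of Section \ref{sec:2.2} are in hand.
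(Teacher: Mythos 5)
Your argument is correct and follows essentially the same route as the paper: identify $\fQ^2_\red$ via the restriction of the Ad-invariant trace functions to the cross-section $S$ (with the odd traces vanishing by the conjugacy of $Y(q,p)$ to $\ri\Lambda$ established in the next section, and $H_1$ matching \eqref{2.64}), and identify $\fQ^1_\red$ by using that $P_0$ lies in the regular locus where $q$ depends real-analytically on $y$, together with the extension of $(G_+\times G_+)$-invariant smooth functions from $P_0$ to $P$, which rests on the compactness of the group and on $P_0$ being a regular submanifold as a consequence of the free action of the effective symmetry group. Your added details (averaging a tubular-neighbourhood extension to restore invariance, and the stabilizer-triviality argument for regularity of $P_0$) merely spell out what the paper cites in passing.
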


\subsection{The Ruijsenaars gauge}
\label{subsec:2.3.2}

It follows from the group-theoretic results quoted in Section \ref{sec:2.2} that the Abelian
Poisson algebra $\fQ^1$ is generated by
the functions
\begin{equation}
\tilde\cH_k(y,Y,\upsilon^\ell,\upsilon^r)=\frac{(-1)^k}{2k}\tr\big(y^{-1}CyC\big)^k,
\quad k=1,\dots,n,
\label{2.65}
\end{equation}
and thus the unitary and Hermitian matrix
\begin{equation}
L=-y^{-1}CyC
\label{2.66}
\end{equation}
serves as an `unreduced Lax matrix'. It is readily seen in the Sutherland gauge \eqref{2.51}
that these $n$ functions remain functionally independent after reduction. Here, we shall
prove that the evaluation of the invariant function $\tilde\cH_1$ in another gauge
reproduces the dual Hamiltonian \eqref{2.4}. The reduction of the matrix function $L$ will
provide a Lax matrix for the corresponding integrable system. Before turning to details,
we advance the group-theoretic interpretation of the dual position variable $\lambda$
that features in the Hamiltonian \eqref{2.4}, and sketch the plan of this section.

To begin, recall that on the constraint surface $Y= Y_--\ri\kappa C$, and
for any $Y_-\in\cG_-$ there is an element $g_R\in G_+$ such that
\begin{equation}
g_R^{-1}Y_- g_R^{\phantom{1}}
=\diag(\ri d_1,\dots,\ri d_n,-\ri d_1,\dots,-\ri d_n)=\ri D\in\cA
\quad\text{with}\quad d_1\geq\dots\geq d_n\geq 0.
\label{2.67}
\end{equation}

Then introduce the real matrix $\blambda=\diag(\lambda_1,\dots,\lambda_n)$ whose diagonal
components are\footnote{From now on we frequently use the notations $\N_n=\{1,\dots,n\}$
and $\N_{2n}=\{1,\dots,2n\}$.}
\begin{equation}
\lambda_j=\sqrt{d_j^2+\kappa^2},\quad j\in\N_n.
\label{2.68}
\end{equation}
One can diagonalize the matrix $D-\kappa C$ by conjugation with the unitary matrix
\begin{equation}
h(\lambda)=\begin{bmatrix}
\alpha(\blambda)&\beta(\blambda)\\
-\beta(\blambda)&\alpha(\blambda)
\end{bmatrix},
\label{2.69}
\end{equation}
where the real functions $\alpha(x),\beta(x)$ are defined on the interval
$[|\kappa|,\infty)\subset\R$ by the formulae
\begin{equation}
\alpha(x)=\frac{\sqrt{x+\sqrt{x^2-\kappa^2}}}{\sqrt{2x}},\quad
\beta(x)=\kappa\frac{1}{\sqrt{2x}}\frac{1}{\sqrt{x+\sqrt{x^2-\kappa^2}}},
\label{2.70}
\end{equation}
at least if $\kappa\neq 0$. If $\kappa=0$, then we set $\alpha(x)=1$ and $\beta(x)=0$.
Indeed, it is easy to check that
\begin{equation}
h(\lambda)\Lambda h(\lambda)^{-1}=D-\kappa C
\quad\text{with}\quad
\Lambda=\diag(\lambda_1,\dots,\lambda_n,-\lambda_1,\dots,-\lambda_n).
\label{2.71}
\end{equation}
Note that $h(\lambda)$ belongs to the subset $G_-$ of $G$ \eqref{2.19}.

The above diagonalization procedure can be used to define the map
\begin{equation}
\fL\colon P_0\to \R^n,\quad
(y,Y,\upsilon^\ell,\upsilon^r)\mapsto\lambda.
\label{2.72}
\end{equation}
This is clearly a continuous map, which descends to a continuous map
$\fL_\red\colon P_\red\to\R^n$. One readily sees also that these maps are smooth
(even real-analytic) on the open submanifolds $P_0^\reg\subset P_0$ and
$P_\red^\reg\subset P_\red$, where the $2n$ eigenvalues of $Y_-$ are pairwise different.

The image of the constraint surface $P_0$ under the map $\fL$ will turn out to be the
closure of the domain
\begin{equation}
C_2=\bigg\{\lambda\in\R^n\bigg|
\begin{matrix}\lambda_a-\lambda_{a+1}>2\mu,\\
(a=1,\dots,n-1)\end{matrix}
\quad\text{and}\quad
\lambda_n>\nu\bigg\}.
\label{2.73}
\end{equation}

By solving the constrains through the diagonalization of $Y$, we shall construct a
model of the open submanifold of $P_\red$ corresponding to the open submanifold
$\fL^{-1}(C_2) \subset P_0$. This model will be symplectomorphic to the semi-global
phase-space $C_2 \times \T^n$ of the dual Hamiltonian \eqref{2.4}.

In the rest of this section, we present the construction of the aforementioned model
of $\fL_\red^{-1}(C_2)\subset P_\red$. We demonstrate that $\fL_\red^{-1}(C_2)$ is a
dense subset of $P_\red$ and present the global characterization of the dual model of
$P_\red$.

Many of the local formulae that appear in this section have analogues in
\cite{Pu11,Pu11-2,Pu12}, which inspired our considerations. However, the
global structure is different.

\subsubsection{The dual model of the open subset $\fL^{-1}_\red(C_2)\subset P_\red$}

We first prepare some functions on $C_2\times \T^n$. Denoting
the elements of this domain as pairs
\begin{equation}
(\lambda,e^{\ri \vartheta})
\quad\text{with}\quad
\lambda=(\lambda_1,\dots,\lambda_n)\in C_2,\quad
e^{\ri\vartheta}=(e^{\ri\vartheta_1},\dots,e^{\ri\vartheta_n})\in\T^n,
\label{2.74}
\end{equation}
we let
\begin{eqnarray}
&&\phantom{+c} f_{c} =\bigg[1-\frac{\nu}{\lambda_c}\bigg]^{\frac{1}{2}}
\prod_{\substack{a=1\\(a\neq c)}}^n
\bigg[1-\frac{2\mu}{\lambda_c-\lambda_a}\bigg]^{\frac{1}{2}}
\bigg[1-\frac{2\mu}{\lambda_c+\lambda_a}\bigg]^{\frac{1}{2}}, \quad \forall c\in \N_n,
\nonumber\\
&&f_{n+c}=e^{\ri \vartheta_c} \bigg[1+\frac{\nu}{\lambda_c}\bigg]^{\frac{1}{2}}
\prod_{\substack{a=1\\(a\neq c)}}^n
\bigg[1+\frac{2\mu}{\lambda_c-\lambda_a}\bigg]^{\frac{1}{2}}
\bigg[1+\frac{2\mu}{\lambda_c+\lambda_a}\bigg]^{\frac{1}{2}}.
\label{2.75}
\end{eqnarray}
For $\lambda\in C_2$ \eqref{2.73}, all factors under the square roots are positive.
Using the column vector $f= (f_1,\dots, f_{2n})^\top$ together with $\Lambda_c=\lambda_c$
and $\Lambda_{c+n}=-\lambda_c$ for $c\in \N_n$, we define the $2n\times 2n$ matrices
$\check{A}(\lambda,\vartheta)$ and $B(\lambda,\vartheta)$ by
\begin{equation}
\check A_{j,k}=\frac{2\mu f_j\overline{(Cf)}_k-
2(\mu-\nu)C_{j,k}}{2\mu+\Lambda_k-\Lambda_j},\quad
j,k\in\N_{2n},
\label{2.76}
\end{equation}
and
\begin{equation}
B(\lambda,\vartheta)=-\big(h(\lambda)
\check A(\lambda,\vartheta)h(\lambda)\big)^\dagger.
\label{2.77}
\end{equation}
We shall see that these are unitary matrices from $G_-\subset G$ \eqref{2.19}.
Then we write $B$ in the form
\begin{equation}
B=\eta e^{2\ri Q(q)}\eta^{-1}
\label{2.78}
\end{equation}
with some $\eta\in G_+$ and unique $q= q(\lambda,\vartheta)$ subject to
\eqref{2.39}. (It turns out that $q(\lambda,\vartheta)\in C_1$ \eqref{2.2}
and thus $\eta$ is unique up to replacements $\eta\to\eta\zeta$ with
arbitrary $\zeta \in Z$ \eqref{2.37}.) Relying on \eqref{2.78}, we set
\begin{equation}
y(\lambda,\vartheta)=\eta e^{\ri Q(q(\lambda,\vartheta))}\eta^{-1}
\label{2.79}
\end{equation}
and introduce the vector $V(\lambda,\vartheta)\in\C^{2n}$ by
\begin{equation}
V(\lambda,\vartheta)=y(\lambda,\vartheta)h(\lambda)f(\lambda,\vartheta).
\label{2.80}
\end{equation}
It will be shown that $V+CV=0$ and $|V|^2=2n$, which ensures that
$\upsilon^\ell_{\mu,\nu}(V)\in\cO^\ell$ \eqref{2.23}.

Note that $\check A$, $y$ and $V$ given above depend on $\vartheta$ only through
$e^{\ri \vartheta}$ and are $C^\infty$ functions on $C_2\times\T^n$. It should be
remarked that although the matrix element $\check A_{n,2n}$ \eqref{2.76} has an
apparent singularity at $\lambda_n=\mu$, the zero of the denominator cancels.
Thus $\check A$ extends by continuity to $\lambda_n=\mu$ and remains smooth there,
which then also implies the smoothness of $y$ and $V$.

\begin{theorem}
\label{thm:2.3}
By using the above notations, consider the set
\begin{equation}
\tilde{S}^0=\{(y(\lambda,\vartheta),\ri h(\lambda)\Lambda(\lambda)h(\lambda)^{-1},
\upsilon^\ell_{\mu,\nu}(V(\lambda,\vartheta)),
\upsilon^r)\mid(\lambda,e^{\ri \vartheta})\in
C_2\times\T^n\}.
\label{2.81}
\end{equation}

This set is contained in the constraint surface $P_0=J^{-1}(0)$ and it provides a
cross-section for the $G_+\times G_+$-action restricted to $\fL^{-1}(C_2)\subset P_0$.
In particular, $C_2\subset\fL(P_0)$ and $\tilde S^0$ intersects every gauge orbit in
$\fL^{-1}(C_2)$ precisely in one point. Since the elements of $\tilde S^0$ are
parametrized by $C_2\times\T^n$ in a smooth and bijective manner, we obtain the
identifications
\begin{equation}
\fL^{-1}_\red(C_2)\simeq\tilde S^0\simeq C_2\times\T^n.
\label{2.82}
\end{equation}
Letting $\tilde\sigma_0\colon\tilde S^0\to P$ denote the tautological injection,
the pull-backs of the symplectic form $\Omega$ \eqref{2.26} and the function
$\tilde\cH_1$ \eqref{2.65} obey
\begin{equation}
\tilde\sigma_0^\ast(\Omega)=\sum_{c=1}^n d\lambda_c\wedge d\vartheta_c,
\quad
(\tilde \cH_1 \circ \tilde \sigma_0)(\lambda,\vartheta)
=\frac{1}{2}\tr\big(h(\lambda)\check A(\lambda,\vartheta)h(\lambda)\big)
=\tilde H^0(\lambda,\vartheta)
\label{2.83}
\end{equation}
with the RSvD type Hamiltonian $\tilde H^0$ in \eqref{2.4}. Consequently, the Hamiltonian
reduction of the system $(P,\Omega,\tilde\cH_1)$ followed by restriction to the open
submanifold $\fL_\red^{-1}(C_2)\subset P_\red$ reproduces the system
$(\tilde M^0,\tilde\omega^0,\tilde H^0)$ defined in \eqref{2.4}-\eqref{2.5}.
\end{theorem}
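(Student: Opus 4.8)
The proof follows the pattern of Theorem~\ref{thm:2.1} and of Pusztai's treatment of the hyperbolic case in \cite{Pu11,Pu12}, the essential difference being that one now diagonalizes the Lie-algebra component $Y$ (equivalently $Y_-$) rather than the group component $y$. The plan is to establish, in turn: (i) $\tilde S^0\subset P_0$ together with $V+CV=0$ and $|V|^2=2n$; (ii) that $\tilde S^0$ meets every $(G_+\times G_+)$-orbit through $\fL^{-1}(C_2)$ in exactly one point, so that $C_2\subseteq\fL(P_0)$ and the identifications \eqref{2.82} hold; and (iii) the two formulae in \eqref{2.83}; the final assertion about the reduced system is then automatic from the general set-up of Section~\ref{sec:2.2}.

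For step (i), the second constraint in \eqref{2.33} is immediate: by \eqref{2.71} the element $Y=\ri h(\lambda)\Lambda h(\lambda)^{-1}=\ri(D-\kappa C)$ has $Y_+=-\ri\kappa C=\upsilon^r$ and $Y_-=\ri D\in\cA\subset\cG$. The first constraint and the two properties of $V$ form the technical core. The point is that the defining relation \eqref{2.76} for $\check A$ is equivalent to the Cauchy-type identity
\begin{equation*}
2\mu\,\check A+[\check A,\Lambda]=2\mu\, f\,(Cf)^{\dagger}-2(\mu-\nu)C,
\end{equation*}
and that \eqref{2.77}--\eqref{2.80} are arranged precisely so that $B=y^2$ and, once $B$ is known to be unitary, $L=-y^{-1}CyC=-y^{-2}=-B^{-1}=-B^{\dagger}=h(\lambda)\check A(\lambda,\vartheta)h(\lambda)$. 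Using the displayed identity one then checks, in order, that $\check A$ (hence $B$) is unitary, that $C\check A^{\dagger}C=\check A$, so that $B\in G_-$ and the decomposition \eqref{2.78} applies with the resulting $y$ lying automatically in $G_-$, that $V+CV=0$ and $|V|^2=2n$ (whence $\upsilon^\ell_{\mu,\nu}(V)\in\cO^\ell$), and finally that $(yYy^{-1})_++\upsilon^\ell_{\mu,\nu}(V)=0$. Each step is an algebraic computation powered by the Cauchy identity, for which the analogous formulae in \cite{Pu11,Pu11-2,Pu12} provide a reliable template; I expect this to be the main obstacle of the proof. (The $C^\infty$ dependence of $\check A$, $y$, $V$ on all of $C_2\times\T^n$, including across $\lambda_n=\mu$, has already been dealt with in the paragraph preceding the theorem.)

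For step (ii), injectivity is easy: on $\tilde S^0$ the map $\fL$ returns $\lambda$ by construction, the angles $e^{\ri\vartheta_c}$ are read off from \eqref{2.75}, and the effectively acting symmetry group acts freely on $P_0$ (as established inside the proof of Theorem~\ref{thm:2.1}), so distinct parameter values give gauge-inequivalent points and $\tilde S^0$ meets each orbit it touches at most once. For surjectivity onto $\fL^{-1}(C_2)$ I would start from an arbitrary point of $P_0$ whose $\fL$-value lies in $C_2$, use \eqref{2.42}--\eqref{2.43} to gauge $Y_-$ into the diagonal form $\ri D$, exhaust the residual gauge freedom to normalize the relevant components of $V$ as in \eqref{2.75}--\eqref{2.80}, and then run the construction backwards. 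The hypothesis $\lambda\in C_2$ is exactly what guarantees, via \eqref{2.41} and the strict inequalities in \eqref{2.39}, that the associated $q$ satisfies $q(\lambda,\vartheta)\in C_1$, so that $\eta$ in \eqref{2.78} is unique up to $Z$ and the reconstructed point indeed lies in $\tilde S^0$. This is the second delicate point, since it is here that the precise shape of the chamber $C_2$ in \eqref{2.73} gets pinned down.

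For step (iii), the Hamiltonian identity is short: from $L=h(\lambda)\check A(\lambda,\vartheta)h(\lambda)$ one gets $(\tilde\cH_1\circ\tilde\sigma_0)(\lambda,\vartheta)=-\tfrac12\tr(y^{-1}CyC)=\tfrac12\tr\big(h(\lambda)\check A(\lambda,\vartheta)h(\lambda)\big)=\tfrac12\tr\big(\check A\,h(\lambda)^2\big)$, and computing $h(\lambda)^2$ from \eqref{2.70} (its $n\times n$ blocks being $\lambda_c^{-1}\sqrt{\lambda_c^2-\kappa^2}$ and $\pm\kappa\lambda_c^{-1}$) and substituting \eqref{2.75}--\eqref{2.76}, the diagonal entries of $\check A$ reproduce the first line of \eqref{2.4} and the entries coupled to $C$ reproduce the remaining two terms after a partial-fraction (residue) identity in the variables $\lambda_1,\dots,\lambda_n$. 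The symplectic-form identity $\tilde\sigma_0^{\ast}\Omega=\sum_c d\lambda_c\wedge d\vartheta_c$ is the last genuine calculation: pull back $\Omega=\Omega^{T^\ast G}+\Omega^{\cO}$ using \eqref{2.15} for the three pieces of $\Omega^{T^\ast G}$ together with the Kirillov--Kostant--Souriau form on $\cO$, just as in Theorem~\ref{thm:2.1} and in the hyperbolic computation of \cite{Pu12}; the same Cauchy identity dictates the cancellations and the Darboux form drops out. Combining (i)--(iii) with the discussion of $\fQ^1$ and $\fQ^2$ in Section~\ref{sec:2.2} then yields the stated reduction of $(P,\Omega,\tilde\cH_1)$ to $(\tilde M^0,\tilde\omega^0,\tilde H^0)$.
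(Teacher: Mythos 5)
Your overall skeleton (show $\tilde S^0\subset P_0$ via the commutation identity \eqref{2.95}, argue that $\tilde S^0$ meets each orbit over $\fL^{-1}(C_2)$ exactly once, then compute the pull-backs) is the paper's architecture, and your step (i), including the identification $B=y^2$, $L=-B^{-1}=h\check A h$ and the reduction of everything to unitarity of $\check A$ plus Lemma \ref{lem:2.5}-type reconstruction, is essentially what the paper does. However, your step (ii) has a genuine gap at its core. "Run the construction backwards" after normalizing $V$ presupposes that at an \emph{arbitrary} point of $\fL^{-1}(C_2)$ the gauge-invariant moduli $\cF_k=|F_k|^2$ coincide with $|f_k(\lambda,e^{\ri\vartheta})|^2$, so that only phases remain to be absorbed by the residual gauge group and the choice of $\vartheta$. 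This is not automatic: one must derive from the unitarity of $\check A$ (in its Cauchy form \eqref{2.101}) the quadratic system \eqref{2.103}--\eqref{2.104} for the $W_c,W_{n+c}$ (the paper does this via Jacobi's theorem on complementary minors, following Pusztai), observe that it has exactly two solution branches $\cF_k^\pm$, and then exclude the minus branch on $C_2$ by a sign argument (Lemmas \ref{lem:2.7} and \ref{lem:2.8}). That determination of the moduli is the crux of surjectivity and is also where the chamber $C_2$ actually enters; your attribution of the role of $C_2$ to ensuring $q(\lambda,\vartheta)\in C_1$ is a side remark in the paper, not the mechanism.

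Your step (iii) for the symplectic form is also problematic as proposed. A direct pull-back of $\Omega$ via \eqref{2.15} requires $y^{-1}dy$ along $\tilde S^0$, but $y(\lambda,\vartheta)$ is only defined implicitly through the decomposition $B=\eta e^{2\ri Q(q)}\eta^{-1}$, $y=\eta e^{\ri Q(q)}\eta^{-1}$ (a "square root" of the explicitly known $B$), so its differential is not available in closed form -- in contrast to the Sutherland gauge of Theorem \ref{thm:2.1}, where $y=e^{\ri Q(q)}$ is explicit. For precisely this reason the paper (like Pusztai in \cite{Pu11-2}, which also does \emph{not} proceed by direct pull-back) uses an indirect method: it introduces the invariant functions $\varphi_m$ \eqref{2.113} and $\chi_k$ \eqref{2.114}, computes their Poisson brackets on $P$ and restricts, compares with the direct computation on $\tilde S^0$ for an a priori unknown reduced bracket, and pins down $\{\lambda_a,\lambda_b\}=0$, $\{\lambda_a,\vartheta_b\}=\delta_{a,b}$, $\{\vartheta_a,\vartheta_b\}=0$ through Vandermonde-type invertibility arguments (Lemmas \ref{lem:2.9}--\ref{lem:2.11}). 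If you want to keep a direct pull-back computation you must explain how to handle the implicit dependence of $y$ on $(\lambda,\vartheta)$; as written, this part of your plan would not go through.
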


\begin{remark}
\label{rem:2.4}
Referring to \eqref{2.66}, we have the Lax matrix
\begin{equation}
L(y(\lambda, \vartheta))=h(\lambda)\check A(\lambda,\vartheta)h(\lambda).
\label{2.84}
\end{equation}
Later we shall also prove that $\fL_{\red}^{-1}(C_2)$ is a dense subset of $P_\red$,
whereby the reduction of $(P,\Omega,\tilde\cH_1)$ may be viewed as a completion of
$(\tilde M^0,\tilde\omega^0,\tilde H^0)$.
\end{remark}

\subsubsection{Proof of Theorem \ref{thm:2.3}}

The proof will emerge from a series of lemmas. Our immediate aim is to construct gauge
invariant functions that will be used for parametrizing the orbits of $G_+\times G_+$
in (an open submanifold of) $P_0$. For introducing gauge invariants we can restrict
ourselves to the submanifold $P_1\subset P_0$ where $Y$ in $(y,Y,\upsilon^\ell,\upsilon^r)$
has the form
\begin{equation}
Y=h(\lambda)\ri\Lambda(\lambda)h(\lambda)^{-1}
\label{2.85}
\end{equation}
with some $\lambda\in\R^n$ for which
\begin{equation}
\lambda_1\geq\dots\geq\lambda_n\geq|\kappa|.
\label{2.86}
\end{equation}
Indeed, every element of $P_0$ can be gauge transformed into $P_1$. It will be
advantageous to further restrict attention to $P_1^\reg\subset P_1$ where we have
\begin{equation}
\lambda_1>\dots>\lambda_n>|\kappa|.
\label{2.87}
\end{equation}
The residual gauge transformations that map $P_1^\reg$ to itself belong to the group
$G_+\times Z<G_+\times G_+$ with $Z$ defined in \eqref{2.37}. Since $\upsilon^r$ is
constant and $\upsilon^\ell=\upsilon^\ell_{\mu,\nu}(V)$, we may label the elements
of $P_1$ by triples $(y,Y,V)$, with the understanding that $V$ matters up to phase.
Then the gauge action of $(g_L,\zeta)\in G_+ \times Z$ operates by
\begin{equation}
(y,V)\mapsto(g_Ly\zeta^{-1},g_LV),
\label{2.88}
\end{equation}
while $Y$ is already invariant. Now we can factor out the residual $G_+$-action by
introducing the $G_-$-valued function
\begin{equation}
\check A(y,Y,V)=h(\lambda)^{-1}L(y)h(\lambda)^{-1}
\label{2.89}
\end{equation}
and the $\C^{2n}$-valued function
\begin{equation}
F(y,Y,V)=h(\lambda)^{-1}y^{-1}V.
\label{2.90}
\end{equation}
Here $\lambda=\fL(y,Y,V)$, which means that \eqref{2.85} holds, and we used $L(y)$ in
\eqref{2.66}. Like $V$, $F$ is defined only up to a $\UN(1)$ phase. We obtain the
transformation rules
\begin{equation}
\check A(g_Ly\zeta^{-1},Y,g_L V)=\zeta\check A(y,Y,V)\zeta^{-1},
\label{2.91}
\end{equation}
\begin{equation}
F(g_Ly\zeta^{-1},Y,g_L V)=\zeta F(y,Y,V),
\label{2.92}
\end{equation}
and therefore the functions
\begin{equation}
\cF_k(y,Y,V)=|F_k(y,Y,V)|^2,\quad k=1,\dots,2n
\label{2.93}
\end{equation}
are well-defined, gauge invariant, smooth functions on $P_1^\reg$. They represent
$(G_+\times G_+)$-invariant smooth functions on $P_0^\reg$. We shall see shortly
that the functions $\cF_k$ depend only on $\lambda=\fL(y,Y,V)$ and shall derive
explicit formulae for this dependence. Then the non-negativity of $\cF_k$ will be
used to gain information about the set $\fL(P_0)$ of $\lambda$ values that actually occurs.

Before turning to the inspection of the functions $\cF_k$, we present a crucial lemma.

\begin{lemma}
\label{lem:2.5}
Fix $\lambda\in\R^n$ subject to \eqref{2.86} and set
$\Lambda=\diag(\lambda,-\lambda)$ and $Y=h\ri\Lambda h^{-1}$.
If $y\in G$ and $\upsilon^\ell_{\mu,\nu}(V)\in\cO^\ell$ solve the momentum map
constraint given according to the first equation in \eqref{2.53} by
\begin{equation}
yYy^{-1}+CyYy^{-1}C+2\upsilon^\ell_{\mu,\nu}(V)=0,
\label{2.94}
\end{equation}
then $\check A\in G_-$ and $F\in\C^{2n}$ defined by \eqref{2.89} and \eqref{2.90}
solve the following equation:
\begin{equation}
2\mu\check{A}+\check{A}\Lambda-\Lambda\check{A}
=2\mu F(CF)^\dag-2(\mu-\nu)C.
\label{2.95}
\end{equation}
Conversely, for any $\check A\in G_-$, $F\in\C^{2n}$ that satisfy $|F|^2=2n$ and
equation \eqref{2.95}, pick $y\in G$ such that $L(y)=h(\lambda)\check Ah(\lambda)$
and define $V=yh(\lambda)F$. Then $CV+V=0$ and $(y,Y,\upsilon^\ell_{\mu,\nu}(V))$
solve the momentum map constraint \eqref{2.94}.
\end{lemma}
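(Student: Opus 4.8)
The plan is to handle both implications by turning the momentum map constraint \eqref{2.94} into an equivalent equation ``downstairs'' and then identifying that equation with \eqref{2.95}.

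For the forward implication I would left-translate and conjugate \eqref{2.94} by $g:=yh(\lambda)$. Since $Y=h(\lambda)\ri\Lambda h(\lambda)^{-1}$ and $V=gF$ (which is \eqref{2.90} read backwards), conjugating \eqref{2.94} by $g^{-1}$ and dividing by $\ri$ produces $\Lambda+W\Lambda W+2\mu FF^\dagger-2\mu\1_{2n}+2(\mu-\nu)W=0$, where $W:=g^{-1}Cg$. The only algebraic facts needed are $h(\lambda)Ch(\lambda)=C$ and $C\Lambda C=-\Lambda$, together with the observation that $\check A=h(\lambda)^{-1}L(y)h(\lambda)^{-1}$ lies in $G_-$ (because $L(y)=-y^{-1}CyC$ is unitary with $CL(y)C=L(y)^{-1}$, and $h(\lambda)\in G_-$); these give $W=-\check AC$, hence $W\Lambda W=-\check A\Lambda\check A^{-1}$, and, using $C\check A=\check A^{-1}C$ and $\check AC\check A=C$, the displayed identity becomes $2\mu\check A+\check A\Lambda-\Lambda\check A=2\mu FF^\dagger\check A-2(\mu-\nu)C$ after right multiplication by $\check A$. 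Finally $\upsilon^\ell_{\mu,\nu}(V)\in\cO^\ell$ forces $CV+V=0$, which is equivalent to $\check ACF=F$, i.e. to $F^\dagger(\check A-C)=0$; so $FF^\dagger\check A=FF^\dagger C=F(CF)^\dagger$, and one lands on \eqref{2.95}.

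Every step above is reversible, so for the converse it suffices to produce $y$ and to prove $CV+V=0$, i.e. $\check ACF=F$; the constraint \eqref{2.94} then follows by running the computation backwards (and $|V|^2=|F|^2=2n$ makes $\upsilon^\ell_{\mu,\nu}(V)$ a genuine point of $\cO^\ell$). A suitable $y$ exists because $-h(\lambda)\check Ah(\lambda)\in G_-$ and $y\mapsto y^{-1}CyC$ maps $G$ onto $G_-$. To establish $\check ACF=F$ I would set $T:=\check AC$, which is a Hermitian involution (as $\check A\in G_-$ is unitary), and multiply \eqref{2.95} on the right by $C$, obtaining the clean relation $(\mu\1_{2n}-\Lambda)T+T(\mu\1_{2n}-\Lambda)=2\mu FF^\dagger-2(\mu-\nu)\1_{2n}$. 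Split $\C^{2n}=V_+\oplus V_-$ into the orthogonal $\pm1$-eigenspaces of $T$ and write $F=F_++F_-$ accordingly. The $V_-$–$V_+$ block of this relation reads $\mu F_-F_+^\dagger=0$, so $F_+=0$ or $F_-=0$. Its diagonal blocks, together with $|F|^2=2n$, express the compressions $P_\pm\Lambda P_\pm$ explicitly; combined with $\tr\Lambda=0$ they force a linear relation between $\dim V_+$ and $\dim V_-$. In the unwanted case $F_+=0$ that relation, together with $\mu>0$ and $\nu>0$, yields both $\dim V_+>n$ and the fact that $P_+\Lambda P_+$ equals the negative scalar $(2\mu-\nu)\Id_{V_+}$; but $\Lambda=\diag(\lambda_1,\dots,\lambda_n,-\lambda_1,\dots,-\lambda_n)$ with $\lambda_j\ge|\kappa|\ge0$ has at most $n$ negative eigenvalues, hence cannot be negative definite on a subspace of dimension $>n$ — a contradiction. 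Therefore $F=F_+$, i.e. $\check ACF=F$, and the lemma follows.

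The step I expect to be the real obstacle is exactly this exclusion of the branch $\check ACF=-F$ in the converse: it is the one place where the argument must use the symmetric spectrum of $\Lambda$ (equivalently $\tr\Lambda=0$) and the positivity of the coupling parameters, rather than purely formal manipulations with $C$, $h(\lambda)$ and $G_-$. Everything else — the conjugation identities, the interchange of $FF^\dagger\check A$ and $F(CF)^\dagger$, and the bookkeeping that $V=yh(\lambda)F$ sits on the correct coadjoint orbit — should go through mechanically once the relations $h(\lambda)Ch(\lambda)=C$, $C\Lambda C=-\Lambda$ and the $G_-$-identities for $\check A$ and $T$ are recorded.
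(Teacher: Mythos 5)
Your proof is correct, and most of it runs along the same lines as the paper: the forward computation is the paper's one-step manipulation of \eqref{2.94} (multiplying by $h(\lambda)^{-1}y^{-1}$ on the left and $CyCh(\lambda)^{-1}$ on the right) reorganized as conjugation by $g=yh(\lambda)$ followed by right multiplication by $\check A$, and in the converse you pick $y$ and set $V=yh(\lambda)F$ exactly as the paper does. The genuine divergence is at the step you correctly identified as the crux, namely ruling out the branch $CV=+V$. The paper works upstairs: from \eqref{2.95} it reconstructs the equation \eqref{2.98}, notes that Hermiticity of its left-hand side forces $(CV)V^\dagger=V(CV)^\dagger$, hence $CV=\alpha V$ with $\alpha^2=1$, and then a single trace of \eqref{2.98} (using $\tr\Lambda=0$, $\tr C=0$, $|V|^2=2n$ and $\mu\neq 0$) yields $\alpha=-1$. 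You instead stay downstairs: you diagonalize the Hermitian involution $T=\check AC$, get $F_-F_+^\dagger=0$ from the off-diagonal block, and eliminate the unwanted branch $F_+=0$ by the trace of the compressions ($\tr\Lambda=0$ forces $2\mu-\nu<0$ and $\dim V_+>n$) combined with the inertia bound that $\Lambda=\diag(\lambda,-\lambda)$ with $\lambda_j\geq|\kappa|\geq 0$ has at most $n$ negative eigenvalues, so it cannot be negative definite on a subspace of dimension exceeding $n$. Both arguments are sound; the paper's is shorter and uses only $\mu\neq0$ plus tracelessness, while yours additionally invokes $\mu>0$, $\nu>0$ and a Courant--Fischer count, but has the advantage of never leaving the reduced data $(\check A,F,\Lambda)$ and of making explicit where the symmetric spectrum of $\Lambda$ enters.
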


\begin{proof}
If eq.~\eqref{2.94} holds, then we multiply it by $h(\lambda)^{-1}y^{-1}$ on
the left and by $CyCh(\lambda)^{-1}$ on the right.
Using \eqref{2.58}, with $CV+V=0$ and $|V|^2=2n$, and the notations \eqref{2.89} and \eqref{2.90},
this immediately gives \eqref{2.95}.
Conversely, suppose that \eqref{2.95} holds for some $\check A\in G_-$ and $F\in\C^{2n}$ with
$|F|^2=2n$. Since $h(\lambda)\check Ah(\lambda)$ belongs to $G_-$, there exists $y\in G$
such that
\begin{equation}
h(\lambda)\check Ah(\lambda)=L(y).
\label{2.96}
\end{equation}
Such $y$ is unique up to left-multiplication by an arbitrary element of $G_+$
(whereby one may bring $y$ into $G_-$ if one wishes to do so).
Picking $y$ according to \eqref{2.96}, and then setting
\begin{equation}
V=yh(\lambda)F,
\label{2.97}
\end{equation}
it is an elementary matter to show that \eqref{2.95} implies the following equation:
\begin{equation}
yYy^{-1}+CyYy^{-1}C+2\ri\mu(-V(CV)^\dagger-\1_{2n})+2\ri(\mu-\nu)C=0.
\label{2.98}
\end{equation}
It is a consequence of this equation that
\begin{equation}
(V(CV)^\dagger)^\dagger=(CV)V^\dagger=V(CV)^\dagger.
\label{2.99}
\end{equation}
This entails that $CV=\alpha V$ for some $\alpha\in\UN(1)$. Then
$V^\dagger=\alpha(CV)^\dagger$ also holds, and thus we must have $\alpha^2=1$.
Hence $\alpha$ is either $+1$ or $-1$. Taking the trace of the equality \eqref{2.98},
and using that $|V|^2=2n$ on account of $|F|^2=2n$, we obtain that $\alpha =-1$, i.e.
$CV+V=0$. This means that equation \eqref{2.98} reproduces \eqref{2.94}.
\end{proof}

To make progress, now we restrict our attention to the subset of $P_1^\reg$, where
the eigenvalue-parameter $\lambda$ of $Y$ verifies in addition to \eqref{2.87} also
the conditions
\begin{equation}
|\lambda_a\pm\lambda_b|\neq 2\mu
\quad\text{and}\quad
(\lambda_a-\nu)(\lambda_a-\vert 2\mu-\nu\vert)\neq 0,
\quad
\forall a,b\in\N_n.
\label{2.100}
\end{equation}

We call such $\lambda$ values `strongly regular', and let
$P_1^{\vreg}\subset P_1$ and $P_0^{\vreg}\subset P_0$ denote the corresponding
open subsets. Later we shall prove that $P_0^\vreg$ is \emph{dense} in $P_0$.
The above conditions will enable us to perform calculations that will lead to
a description of a dense subset of the reduced phase space. They ensure that we
never divide by zero in relevant steps of our arguments. The first such step is
the derivation of the following consequence of equation \eqref{2.95}.

\begin{lemma}
\label{lem:2.6}
The restriction of the matrix function $\check A$ \eqref{2.89} to $P_1^{\vreg}$
has the form
\begin{equation}
\check A_{j,k}=\frac{2\mu F_j\overline{(C F)}_k-
2(\mu-\nu)C_{j,k}}{2\mu+\Lambda_k-\Lambda_j},\quad
j,k\in\N_2n,
\label{2.101}
\end{equation}
where $F\in\C^{2n}$ satisfies $|F|^2=2n$ and $\Lambda=\diag(\lambda,-\lambda)$
varies on $P_1^\vreg$ according to \eqref{2.85}.
\end{lemma}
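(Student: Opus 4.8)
The plan is to read \eqref{2.101} off directly from the Sylvester-type identity \eqref{2.95} furnished by Lemma \ref{lem:2.5}. On $P_1^\vreg\subset P_0$ one is on the momentum constraint surface with $Y=h(\lambda)\ri\Lambda h(\lambda)^{-1}$ for a strongly regular $\lambda$, which in particular obeys \eqref{2.86}; hence Lemma \ref{lem:2.5} applies, and the matrix $\check A$ from \eqref{2.89} together with the vector $F$ from \eqref{2.90} satisfies \eqref{2.95} with $\Lambda=\diag(\lambda,-\lambda)$. So the whole content of the lemma is to solve \eqref{2.95} entry by entry.

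The first step is to use that $\Lambda$ is diagonal, so that $(\check A\Lambda-\Lambda\check A)_{j,k}=(\Lambda_k-\Lambda_j)\check A_{j,k}$, while $\big(F(CF)^\dag\big)_{j,k}=F_j\overline{(CF)_k}$. Reading \eqref{2.95} componentwise therefore gives
\begin{equation*}
(2\mu+\Lambda_k-\Lambda_j)\,\check A_{j,k}=2\mu F_j\overline{(CF)_k}-2(\mu-\nu)C_{j,k},\qquad j,k\in\N_{2n},
\end{equation*}
and dividing by the scalar $2\mu+\Lambda_k-\Lambda_j$ produces exactly \eqref{2.101}. The only thing one must justify is that this scalar is everywhere non-zero on $P_1^\vreg$, and this is precisely where the strong regularity conditions \eqref{2.100} were imposed — it is the one (mild) obstacle in the argument. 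Splitting the indices $j,k$ between the two $n\times n$ blocks and recalling $\Lambda_c=\lambda_c$, $\Lambda_{c+n}=-\lambda_c$ for $c\in\N_n$, the coefficient $2\mu+\Lambda_k-\Lambda_j$ equals one of $2\mu\pm(\lambda_a-\lambda_b)$ or $2\mu\pm(\lambda_a+\lambda_b)$ for suitable $a,b\in\N_n$; since $\mu>0$ and $\lambda_c\geq|\kappa|\geq 0$, such a quantity can vanish only if $|\lambda_a\pm\lambda_b|=2\mu$ for some $a,b$, which is ruled out by \eqref{2.100} (the case $a=b$ there also excludes $\lambda_a=\mu$, consistent with the removable singularity of $\check A_{n,2n}$ noted after \eqref{2.80}).

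It remains only to record the normalisation $|F|^2=2n$ appearing in the statement: since $F=h(\lambda)^{-1}y^{-1}V$ with $h(\lambda)$ and $y$ unitary, one has $|F|^2=|V|^2$, and $V^\dag V=2n$ by the definition of $\cO^\ell$ in \eqref{2.23}. No lengthy computation is needed beyond this bookkeeping; the substance of the lemma is the passage from the operator equation \eqref{2.95} to its entrywise solution, enabled by the diagonality of $\Lambda$ and the non-vanishing of the denominators on the strongly regular locus.
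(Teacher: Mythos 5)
Your proposal is correct and is exactly the argument the paper intends (the paper leaves Lemma \ref{lem:2.6} without a written proof precisely because it is this immediate consequence of Lemma \ref{lem:2.5}): since $\Lambda$ is diagonal, \eqref{2.95} reads entrywise as $(2\mu+\Lambda_k-\Lambda_j)\check A_{j,k}=2\mu F_j\overline{(CF)}_k-2(\mu-\nu)C_{j,k}$, and the strong regularity conditions \eqref{2.100} guarantee the non-vanishing of all these coefficients, so one may divide; the normalisation $|F|^2=2n$ follows from unitarity of $h(\lambda)$ and $y$ together with $V^\dag V=2n$. Your bookkeeping of which denominators are excluded by \eqref{2.100}, including the $\lambda_a=\mu$ case, is accurate.
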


\begin{lemma}
\label{lem:2.7}
For any strongly regular $\lambda$ and $a\in\N_n$ define
\begin{equation}
w_a=\prod_{\substack{b=1\\(b\neq a)}}^n
\frac{(\lambda_a-\lambda_b)(\lambda_a+\lambda_b)}
{(2\mu-(\lambda_a-\lambda_b))(2\mu-(\lambda_a+\lambda_b))},\quad
w_{a+n}=\prod_{\substack{b=1\\(b\neq a)}}^n
\frac{(\lambda_a-\lambda_b)(\lambda_a+\lambda_b)}
{(2\mu+\lambda_a-\lambda_b)(2\mu+\lambda_a+\lambda_b)},
\label{2.102}
\end{equation}
and set $W_k=w_k\cF_k$ with $\cF_k=|F_k|^2$. Then the unitarity of the matrix
$\check A$ as given by \eqref{2.101} implies the following system of equations for
the pairs of functions $W_c$ and $W_{c+n}$ for any $c\in\N_n$:
\begin{equation}
(\mu+\lambda_c)W_c+(\mu-\lambda_c)W_{n+c}-2(\mu-\nu)=0,
\label{2.103}
\end{equation}
\begin{equation}
\lambda_c^2W_cW_{n+c}
-\mu(\mu-\nu)(W_c+W_{n+c})
+(\mu-\nu)^2+\mu^2-\lambda_c^2=0.
\label{2.104}
\end{equation}
For fixed $c\in\N_n$ and strongly regular $\lambda$, this system of equations admits two solutions, which are given by
\begin{equation}
(W_c,W_{n+c})=(W_c^+,W_{n+c}^+)=(w_c\cF_c^+,w_{c+n}\cF_{c+n}^+)
=(1-\frac{\nu}{\lambda_c},1+\frac{\nu}{\lambda_c}),
\label{2.105}
\end{equation}
and by
\begin{equation}
(W_c,W_{n+c})=(W_c^-,W_{n+c}^-)=(w_c\cF_c^-,w_{c+n}\cF_{c+n}^-)
=(-1+\frac{2\mu-\nu}{\lambda_c},-1-\frac{2\mu-\nu}{\lambda_c}).
\label{2.106}
\end{equation}
The functions $\cF_k^\pm$ satisfy the identities
\begin{equation}
\sum_{k=1}^{2n}\cF_k^+(\lambda)=2n
\quad\text{and}\quad
\sum_{k=1}^{2n}\cF_k^-(\lambda)=-2n.
\label{2.107}
\end{equation}
\end{lemma}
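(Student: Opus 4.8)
The plan is to establish the three assertions in turn: that unitarity of the Cauchy‑like matrix $\check A$ forces the pair of relations \eqref{2.103}--\eqref{2.104}, that this algebraic system has exactly the two stated solutions, and that the two branches obey the sum rules \eqref{2.107}.

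For the first (and hardest) step I would start from the explicit form \eqref{2.101} of $\check A$ on $P_1^{\vreg}$ and exploit $\check A\check A^\dagger=\1_{2n}$ together with $|F|^2=2n$. Writing $\bar k$ for the block‑transposed index, so that $(Cf)_k=f_{\bar k}$, $C_{j,k}=\delta_{k,\bar j}$ and $\Lambda_{\bar k}=-\Lambda_k$, the entries read $\check A_{j,k}=\big(2\mu F_j\overline{F_{\bar k}}-2(\mu-\nu)\delta_{k,\bar j}\big)/(2\mu+\Lambda_k-\Lambda_j)$, i.e. a Cauchy‑type matrix perturbed by a scalar term supported on the block anti‑diagonal. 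I would then evaluate, for a fixed $c\in\N_n$, the matrix elements $(\check A\check A^\dagger)_{c,c}=1$ and $(\check A\check A^\dagger)_{c,c+n}=0$: after clearing the products of Cauchy denominators and applying the standard partial‑fraction (Lagrange‑interpolation) identities for the resulting rational functions of $\lambda_c$, holding the other $\lambda_a$ fixed, the dependence on $\cF_m$ with $m\notin\{c,c+n\}$ cancels, and what remains — after multiplying by the weights $w_c,w_{c+n}$ of \eqref{2.102} — is precisely the bilinear relation \eqref{2.104} from the diagonal element and the linear relation \eqref{2.103} from the off‑diagonal one (or vice versa). The strong‑regularity conditions \eqref{2.100} are exactly what guarantees that no denominator encountered in these manipulations vanishes on $P_1^{\vreg}$. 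An equivalent and possibly shorter route is to contract the Sylvester‑type equation \eqref{2.95} with $\check A^\dagger$ on the right, use $\check A^\dagger\check A=\1_{2n}$, $\check A^\dagger=C\check A C$ and $|F|^2=2n$, and read off the same two entries.

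For the second step, fix $c$ and strongly regular $\lambda$. Then $\mu-\lambda_c\neq 0$ — this is part of \eqref{2.100}, since $|\lambda_c+\lambda_c|=2\lambda_c\neq 2\mu$ — so \eqref{2.103} solves uniquely for $W_{n+c}$ in terms of $W_c$; substituting into \eqref{2.104} and clearing $\mu-\lambda_c$ yields a quadratic in $W_c$ whose leading coefficient $-\lambda_c^2(\mu+\lambda_c)$ is nonzero (recall $\lambda_c>0$ on $P_1^{\vreg}$ and $\mu>0$). Hence there are at most two solutions, and a direct substitution shows the pairs \eqref{2.105} and \eqref{2.106} both satisfy \eqref{2.103}--\eqref{2.104}; since $W_c^+-W_c^-=2(\lambda_c-\mu)/\lambda_c\neq 0$ they are distinct, so these are exactly the two solutions. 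For the third step, I would use $\cF_k^\pm=W_k^\pm/w_k$ and insert \eqref{2.105}--\eqref{2.106} together with \eqref{2.102}. The identity $\sum_{k=1}^{2n}\cF_k^+=2n$ follows from $\sum_{k=1}^{2n}\cF_k=|F|^2=2n$, which holds identically on $P_0^{\vreg}$, because by positivity of $\cF_k=|F_k|^2$ and the negativity of the total of the ``$-$'' branch, the realized branch is the ``$+$'' one (equivalently one verifies it as a partial‑fraction identity, affine in $\nu$). The ``$-$'' identity is then immediate: the substitution $\nu\mapsto 2\mu-\nu$ turns \eqref{2.105} entrywise into the negatives of \eqref{2.106} while leaving the $\nu$‑independent weights $w_k$ unchanged, so $\sum_k\cF_k^-(\lambda;\mu,\nu)=-\sum_k\cF_k^+(\lambda;\mu,2\mu-\nu)=-2n$.

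I expect the main obstacle to be the first step: distilling precisely the two clean equations \eqref{2.103}--\eqref{2.104} out of the unitarity of the perturbed Cauchy matrix $\check A$ demands careful control of the block/conjugation combinatorics and of the correct partial‑fraction identities, and it is here that the strong‑regularity hypothesis genuinely enters; steps two and three are then elementary algebra and verification.
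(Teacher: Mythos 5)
Your step 2 (solving the linear equation for $W_{n+c}$, substituting to get a quadratic in $W_c$, and checking the two displayed roots) is exactly what the paper does and is fine. The problem is step 1. You propose to extract \eqref{2.103}--\eqref{2.104} from the two entries $(\check A\check A^\dagger)_{c,c}=1$ and $(\check A\check A^\dagger)_{c,c+n}=0$, claiming that after partial fractions ``the dependence on $\cF_m$ with $m\notin\{c,c+n\}$ cancels''. It does not: writing out either entry from \eqref{2.101} one gets, besides terms in $\cF_c,\cF_{c+n}$, the sums $\sum_{m}\cF_m\,(2\mu+\Lambda_{\bar m}-\Lambda_c)^{-1}(2\mu+\Lambda_{\bar m}-\Lambda_d)^{-1}$ (with $d=c$ or $d=c+n$), in which every $\cF_m$ appears linearly with a nonvanishing coefficient. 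The $\cF_m$ are unknowns at this stage, not functions of $\lambda_c$ with prescribed poles, so no Lagrange-interpolation identity in $\lambda_c$ can remove them; a single row of the unitarity relation is one scalar equation coupling all $2n$ unknowns and cannot decouple into relations in $W_c,W_{n+c}$ alone. The same objection applies to your alternative of contracting \eqref{2.95} with $\check A^\dagger$. What actually produces the decoupling in the paper (Appendix \ref{sec:B.1}, following Pusztai) is Jacobi's theorem on complementary minors applied to the unitary, determinant-one matrix $\check A$: the relevant minors of $\overline{\check A}$ and $\check A$ are determinants of rank-one and rank-two perturbations of Cauchy-type matrices, computable in closed form, and all $F_b$ with $b\neq$ the distinguished indices enter only through the common prefactor $D_a$, which cancels between the two sides (and is nonzero precisely because of the strong regularity conditions \eqref{2.100}). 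This determinantal mechanism is the missing idea; without it your first step fails.

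A secondary point: your primary argument for $\sum_k\cF_k^+=2n$ — ``the realized branch is the $+$ one, and $\sum_k\cF_k=|F|^2=2n$'' — is circular here. Which branch is realized is only settled later (Lemma \ref{lem:2.8}, and only for $\lambda\in C_2$), and moreover for each $c$ separately, so at this stage one cannot assume the physical $\cF_k$ coincide with $\cF_k^+$ for all $k$; besides, \eqref{2.107} is asserted as an identity of the explicit functions $\cF_k^\pm(\lambda)$ on the whole strongly regular set, independently of any realization. Your parenthetical fallback is the correct route and is essentially the paper's: extend $\lambda$ to complex values, observe that the sums are meromorphic with only simple apparent poles whose residues vanish, hence constant, and fix the constants by the asymptotic regime in $C_2$ where all $w_k\to1$ and the prefactors tend to $\pm1$. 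Your $\nu\mapsto 2\mu-\nu$ trick to deduce the ``$-$'' identity from the ``$+$'' one is valid once the ``$+$'' identity is established as an identity in the parameters by such an argument.
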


\begin{proof}
The derivation of equations \eqref{2.103}, \eqref{2.104} follows a similar derivation
due to Pusztai \cite{Pu11}, and is summarized in the appendix. We then solve the linear
equation \eqref{2.103} say for $W_{c+n}$ and substitute it into \eqref{2.104}.
This gives a quadratic equation for $W_c$ whose two solutions we can write down.
We note that the derivation of the equations \eqref{2.103} and \eqref{2.104}
presented in the appendix utilizes the full set of the conditions \eqref{2.100}.

To verify the identities \eqref{2.107}, we first extend $\lambda$ to vary in the open
subset of $\C^n$ subject to the conditions $\lambda_a^2\neq\lambda_b^2$ and
$\lambda_c \neq 0$, and then consider the sums that appear in \eqref{2.107} as
functions of a chosen component of $\lambda$ with the other components fixed.
These explicitly given sums are meromorphic functions having only first order poles,
and one may check that all residues at the apparent poles vanish. Hence the sums are
constant over $\C^n$, and the values of the constants can be established by looking at
a suitable asymptotic limit in the domain $C_2$ \eqref{2.73}, whereby all $w_k$ tend to $1$
and the pre-factors in \eqref{2.105} and \eqref{2.106} tend to $1$ and $-1$, respectively.
\end{proof}

Observe that neither any $w_k$ nor any $\cF_k^\pm$ ($k\in\N_{2n}$) can vanish if
$\lambda$ is strongly regular.
We know that the value of $\cF_k$ \eqref{2.93} is uniquely defined at every point of $P_1^\reg$.
Therefore only one of the solutions $(\cF_c^\pm,\cF_{c+n}^\pm)$ can be acceptable at any
$\lambda\in\fL(P_1^\vreg)$.
The identities in \eqref{2.107} and analyticity arguments strongly suggest that the acceptable solutions
are provided by $\cF_k^+$.
The first statement of the following lemma confirms that this is the case
for $\lambda\in C_2$ \eqref{2.73}.

\begin{lemma}
\label{lem:2.8}
The formulae \eqref{2.105} and \eqref{2.106} can be used to define $\cF_k^\pm$ as smooth real functions
on the domain $C_2$, and none of these functions vanishes at any $\lambda\in C_2$.
Then for any $\lambda\in C_2$ and $c\in\N_n$ at least one out of $\cF_c^-$ and $\cF^-_{c+n}$ is negative,
while $\cF_k^+>0$ for all $k\in\N_{2n}$.
Hence for $\lambda\in C_2\cap\fL(P_0)$
only $\cF_k^+(\lambda)$ can give the value of the function $\cF_k$ as defined in
\eqref{2.93}.
Taking any
$\lambda\in C_2$ and any $F\in\C^{2n}$ satisfying $|F_k|^2=\cF_k^+(\lambda)$,
the formula \eqref{2.101}
yields a unitary matrix that belongs to $G_-$ \eqref{2.19}.
This matrix $\check A$ and vector $F\in\C^{2n}$ solve equation \eqref{2.95}.
\end{lemma}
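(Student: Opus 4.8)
The lemma bundles four assertions: smoothness and non-vanishing of the $\cF_k^\pm$ on $C_2$; the sign pattern $\cF_k^+>0$ together with a negative member in each pair $(\cF_c^-,\cF_{c+n}^-)$; the resulting uniqueness of the admissible solution on $C_2\cap\fL(P_0)$; and the reconstruction of a unitary $\check A\in G_-$ from any data $(\lambda,F)$ with $|F_k|^2=\cF_k^+(\lambda)$. The plan is to dispatch the first three by an elementary sign analysis of the rational functions $w_k$ in \eqref{2.102}, and then to reduce the reconstruction claim to the single point that the Cauchy-type matrix \eqref{2.101} built from such data is unitary.

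First I would analyse $w_k$ on $C_2$. For $\lambda\in C_2$ one has $\lambda_a>\nu>0$ for every $a$, $|\lambda_a-\lambda_b|>2\mu$ for $a\neq b$ by the defining inequalities, and $\lambda_a+\lambda_b>2\mu$ for $a\neq b$ (take $a<b$: $\lambda_a+\lambda_b=2\lambda_b+(\lambda_a-\lambda_b)>2\mu$). Reading off \eqref{2.102}, in each of the $2(n-1)$ factors of $w_a$ the numerator and denominator have opposite signs, so $w_a>0$; in each factor of $w_{a+n}$ they have the same sign, so $w_{a+n}>0$. Hence every $w_k$ is smooth and nowhere zero on $C_2$, and $\cF_k^\pm=W_k^\pm/w_k$ are smooth there. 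Since $\lambda_c>\nu>0$, the numerators $1\mp\nu/\lambda_c$ in \eqref{2.105} are positive, so $\cF_k^+>0$ throughout $C_2$. For \eqref{2.106} I would split on the sign of $2\mu-\nu$: if $\nu\le 2\mu$ then $-1-(2\mu-\nu)/\lambda_c\le -1$ forces $\cF_{c+n}^-<0$, while if $\nu>2\mu$ then $-1+(2\mu-\nu)/\lambda_c<-1$ forces $\cF_c^-<0$; thus each pair $(\cF_c^-,\cF_{c+n}^-)$ has a strictly negative component on all of $C_2$ (the only possible vanishing of an $\cF_k^-$ being that of the partner of the automatically negative component, which is irrelevant below). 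For the third assertion I would invoke Lemma~\ref{lem:2.7}: for strongly regular $\lambda$ the pair $(\cF_c,\cF_{c+n})$ obtained from $F$ via \eqref{2.93} equals $(\cF_c^+,\cF_{c+n}^+)$ or $(\cF_c^-,\cF_{c+n}^-)$, and since $\cF_k=|F_k|^2\ge 0$ while the minus pair has a negative entry, it must be the plus pair; as strongly regular $\lambda$ are dense in $C_2$ and $\cF_k,\cF_k^+$ depend continuously on $\lambda$, the equality $\cF_k=\cF_k^+$ persists on all of $C_2\cap\fL(P_0)$.

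Next, for the reconstruction, take $\lambda\in C_2$ and $F$ with $|F_k|^2=\cF_k^+(\lambda)$; then $|F|^2=\sum_k\cF_k^+(\lambda)=2n$ by the first identity in \eqref{2.107}. The entries \eqref{2.101} are well defined on $C_2$: the only vanishing denominator occurs at $\check A_{n,2n}$ when $\lambda_n=\mu$, but there each factor of $w_n$ equals $(\mu-\lambda_b)(\mu+\lambda_b)/[(\mu+\lambda_b)(\mu-\lambda_b)]=1$, so $w_n=1$, $\cF_n^+=1-\nu/\mu$, and the numerator $2\mu|F_n|^2-2(\mu-\nu)$ vanishes too, so the entry extends smoothly. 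Clearing denominators in \eqref{2.101} then gives \eqref{2.95} at once (both sides vanishing at the removable point), and the identity $C\check AC=\check A^\dagger$ follows immediately from $C_{Cj,Ck}=C_{j,k}$, $(CF)_{Cj}=F_j$ and $\Lambda_{Cj}=-\Lambda_j$ (with $C$ acting on indices by $j\leftrightarrow j\pm n$); hence, once $\check A$ is shown unitary, it lies in $G_-$ \eqref{2.19}.

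The one substantive step, and the expected main obstacle, is the unitarity of $\check A$. I would obtain it by running in reverse the Cauchy-matrix computation behind Lemma~\ref{lem:2.7} (the one summarised in the appendix): for a matrix of the form \eqref{2.101} satisfying $C\check AC=\check A^\dagger$, I expect the conditions $(\check A^\dagger\check A)_{jk}=\delta_{jk}$ to be equivalent to the system \eqref{2.103}-\eqref{2.104} for $W_k=w_k|F_k|^2$, which $|F_k|^2=\cF_k^+$ solves by construction. The delicate part is precisely this equivalence: Lemma~\ref{lem:2.7} records only the forward implication, so I must verify that solving the extracted $W$-equations actually forces the remaining off-diagonal unitarity identities for a Cauchy matrix of this specific shape, and I must separately cover the removable point $\lambda_n=\mu$ by a continuity argument.
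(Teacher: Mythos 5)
Your handling of the elementary parts is correct and essentially coincides with the paper's: the positivity of the $w_k$ on $C_2$, the positivity of the $\cF_k^+$, and the existence of a negative member in each pair $(\cF_c^-,\cF_{c+n}^-)$ (the paper packages your case split on the sign of $2\mu-\nu$ into the single identity $W_c^-+W_{n+c}^-=-2$, which is the same computation). Your identification of the unique apparent singularity $\check A_{n,2n}$ at $\lambda_n=\mu$, the cancellation of its numerator when $|F_n|^2=\cF_n^+(\lambda)$, the equivalence of \eqref{2.101} with \eqref{2.95} after clearing denominators (with continuity at the exceptional point), and the relation $\check A^\dagger=C\check A C$ giving $\check A\in G_-$ once unitarity is known, all match the paper. (A minor remark: like the paper's own proof, you do not actually establish the literal claim that no $\cF_k^-$ vanishes on $C_2$; you flag this and it is harmless, since only $\cF_k^+>0$ is used later.)

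The genuine gap is the unitarity of $\check A$, which is the only substantive content of the last assertion, and your proposal does not prove it. You suggest obtaining it by ``running in reverse'' the derivation of the system \eqref{2.103}--\eqref{2.104}, but, as you yourself concede, Lemma \ref{lem:2.7} and the appendix computation based on Jacobi's theorem give only the forward implication: unitarity of $\check A$ forces those $2n$ scalar relations, which are read off from a few specially chosen complementary minors. Nothing in that computation shows the converse, namely that a matrix of the Cauchy-like form \eqref{2.101} with $C\check AC=\check A^\dagger$ whose moduli $|F_k|^2$ solve the $W$-equations is automatically unitary; unitarity comprises on the order of $n^2$ independent identities, and reversing the minor argument would at best reproduce the handful of relations you started from. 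The paper closes this step by a different route: it verifies unitarity directly, by an almost verbatim adaptation of the proof of Proposition 6 in \cite{Pu11-2}, i.e. a generalized Cauchy-matrix computation carried out for the explicit data $|F_k|^2=\cF_k^+(\lambda)$ (with the point $\lambda_n=\mu$ then covered by continuity, as you also propose). Without such a direct verification, or a genuine proof that your extracted equations imply full unitarity for matrices of this specific shape, the final assertion of the lemma remains unproved at precisely its one nontrivial point.
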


\begin{proof}
It is easily seen that $w_k(\lambda)>0$ for all $\lambda\in C_2$ and $k\in\N_{2n}$.
The statement about the negativity of either $\cF_c^-$ or $\cF^-_{c+n}$ thus follows from
the identity $W_c^-+W_{n+c}^-=-2$. The positivity of
$\cF_k^+$ is easily checked.
It is also readily verified that $\check A^\dagger=C\check AC$,
which entails that $\check A \in G_-$ once we know that $\check A$ is unitary.
For $\lambda \in C_2$ and $|F_k|^2=\cF_k^+(\lambda)$, the unitarity of $\check A$ \eqref{2.101}
can be shown by almost verbatim adaptation of the arguments proving Proposition 6 in \cite{Pu11-2}.

If $\lambda\in C_2$ is such that the denominators in \eqref{2.101} do not vanish,
then the formula \eqref{2.101} is plainly
equivalent to \eqref{2.95}.
Observe that only those elements $\lambda\in C_2$ for which $\lambda_n=\mu$ fail to satisfy
this condition.
At such $\lambda$ the matrix element
$\check A_{n,2n}$ has an apparent `first order pole', but one can check
by inspection of the formula \eqref{2.76} that $\check A_{n,2n}$ actually
remains finite and smooth even at such exceptional points,
and thus solves also \eqref{2.95} because of continuity.
\end{proof}

Before presenting the proof of Theorem \ref{thm:2.3}, note that at the point of
$\tilde S^0$ labelled by $(\lambda,e^{\ri\vartheta})$ the value of the function
$F$ \eqref{2.90} is equal to $f(\lambda,e^{\ri\vartheta})$ given in \eqref{2.75}.

\begin{proof}[Proof of Theorem \ref{thm:2.3}]
It follows from Lemma \ref{lem:2.5} and Lemma \ref{lem:2.8} that $\tilde S^0$ is
a subset of $P_1^\reg$ and $\fL(\tilde S^0)= C_2$. Taking into account Theorem
\ref{thm:2.1}, this implies that $y(\lambda,\vartheta)$ \eqref{2.79} and
$V(\lambda, \vartheta)$ \eqref{2.80} are well-defined smooth functions on
$C_2\times\T^n$. We next show that $\tilde S^0$ is a cross-section for the
residual gauge action on $\fL^{-1}(C_2) \cap P_1$. To do this, pick an arbitrary element
\begin{equation}
(\tilde y, h(\lambda) \ri\Lambda h(\lambda)^{-1},
\upsilon_{\mu,\nu}^\ell(\tilde V), \upsilon^r)
\in \fL^{-1}(C_2) \cap P_1.
\label{2.108}
\end{equation}
Because $\cF_k(\lambda)\neq 0$, we can find a unique element $e^{\ri \vartheta}\in \T^n$
and an element $\zeta \in Z$ \eqref{2.37} (which is unique up to scalar multiple) such that
\begin{equation}
F_k(\tilde y\zeta^{-1},h(\lambda)\ri\Lambda h(\lambda)^{-1},\tilde{V})
=f_k(\lambda,e^{\ri \vartheta}),\quad\forall k\in\N_{2n},
\label{2.109}
\end{equation}
up to a $k$-independent phase. We then see from \eqref{2.95} that
$L(\tilde y \zeta^{-1})=L(y(\lambda,\vartheta))$, which in turn
implies the existence of some (unique after $\zeta$ was chosen)
$\eta_+ \in G_+$ for which
\begin{equation}
\eta_+ \tilde y \zeta^{-1} = y(\lambda, \vartheta).
\label{2.110}
\end{equation}
Using also that $\zeta^{-1}h(\lambda)\zeta=h(\lambda)$,
we conclude from the last two equations that
\begin{equation}
\eta_+\tilde{V}=\eta_+\tilde y h(\lambda)F(\tilde y,h(\lambda)\ri\Lambda h(\lambda)^{-1},
\tilde{V})=y(\lambda,\vartheta)h(\lambda)f(\lambda,\vartheta)
=V(\lambda,e^{\ri\vartheta}).
\label{2.111}
\end{equation}
Thus we have shown that the element \eqref{2.108} can be gauge transformed into a point
of $\tilde S^0$, and this point is uniquely determined since \eqref{2.109} fixes
$e^{\ri\vartheta}$ uniquely. In other words, $\tilde S^0$ intersects every orbit of
the residual gauge action on $\fL^{-1}(C_2) \cap P_1$ in precisely one point.

The map from $C_2$ into $P$, given by the parametrization of $\tilde S^0$,
is obviously smooth, and hence we obtain the identifications
\begin{equation}
C_2 \simeq\tilde S^0\simeq(\fL^{-1}(C_2)\cap P_1)/(G_+\times Z)
\simeq\fL^{-1}(C_2)/(G_+\times G_+)=\fL_\red^{-1}(C_2).
\label{2.112}
\end{equation}
To establish the formula \eqref{2.83} of the reduced symplectic structure, we proceed
as follows. We define $G_+ \times G_+$ invariant real functions on $P$ by
\begin{equation}
\varphi_m(y,Y,V)=\dfrac{1}{m}\Re\big(\tr(Y^m)\big),\quad m\in\N,
\label{2.113}
\end{equation}
and
\begin{equation}
\chi_k(y,Y,\upsilon)=\Re\big(\tr(Y^ky^{-1} V V^\dagger yC)\big),
\quad k\in\N \cup\{0\}.
\label{2.114}
\end{equation}
The restrictions of these functions to $\tilde S^0$ are the respective
functions $\varphi_m^\red$ and $\chi_k^\red$:
\begin{equation}
\varphi_m^\red(\lambda,\vartheta)=
\begin{cases}
0,&\text{if}\ m\ \text{is odd},\\
\displaystyle(-1)^{\tfrac{m}{2}}\frac{2}{m}\sum_{j=1}^n\lambda_j^m,
&\text{if}\ m\ \text{is even},
\end{cases}
\label{2.115}
\end{equation}
and
\begin{equation}
\chi_k^\red(\lambda,\vartheta)=
\begin{cases}
\displaystyle - 2 (-1)^{\tfrac{k-1}{2}} \sum_{j=1}^n\lambda_j^k
\bigg[1-\frac{\kappa^2}{\lambda_j^2}\bigg]^{\frac{1}{2}}
X_j\sin(\vartheta_j),&\text{if }k\text{ is odd},\\
\displaystyle 2(-1)^{\tfrac{k}{2}} \sum_{j=1}^n\lambda_j^k
\bigg[1-\frac{\kappa^2}{\lambda_j^2}\bigg]^{\frac{1}{2}}
X_j\cos(\vartheta_j)
-\kappa\lambda_j^{k-1}\big(\cF_j-\cF_{n+j}\big),&\text{if }k\text{ is even},
\end{cases}
\label{2.116}
\end{equation}
where
\begin{equation}
X_j=\sqrt{\cF_j \cF_{n+j}}
=e^{-\ri\vartheta_j}\bigg[1-\frac{\nu^2}{\lambda_j^2}\bigg]^{\tfrac{1}{2}}
\prod_{\substack{k=1\\(k\neq j)}}^n
\bigg[1-\frac{4\mu^2}{(\lambda_j-\lambda_k)^2}\bigg]^{\tfrac{1}{2}}
\bigg[1-\frac{4\mu^2}{(\lambda_j+\lambda_k)^2}\bigg]^{\tfrac{1}{2}}.
\label{2.117}
\end{equation}
Then we calculate the pairwise Poisson brackets
of the set of functions $\varphi_m$, $\chi_k$ on $P$ and restrict the results to
$\tilde S^0$. This must coincide with the results of the direct calculation of the
Poisson brackets of the reduced functions $\varphi_m^\red$, $\chi_k^\red$ based on
the pull-back of the symplectic form $\Omega$ onto $\tilde S^0 \subset P$. Inspection
shows that the required equalities hold if and only if we have the formula in \eqref{2.83}
for the pull-back in question. This reasoning is very similar to that used in \cite{Pu11-2}
to find the corresponding reduced symplectic form. Since the underlying calculations are
rather laborious, we break them up into smaller pieces and only detail them following this
proof. As for the formula for the restriction of $\tilde \cH_1$ to $\tilde S^0$ displayed
in \eqref{2.83}, this is a matter of direct verification.
\end{proof}

The following line of thought is an appropriate adaptation of an argument presented by
Pusztai in \cite{Pu11-2} which since has been applied in the simpler case of A${}_n$
root system in \citepalias{AFG12}. Differences between these earlier results
and the calculations below are highlighted in the Discussion.

Consider the families of real-valued smooth functions $\varphi_m$ \eqref{2.113},
$\chi_k$ \eqref{2.114} on the phase space $P$ \eqref{2.26}, and the corresponding reduced
functions $\varphi_m^\red$ \eqref{2.115}, $\chi_k^\red$ \eqref{2.116} on $\tilde S^0$
\eqref{2.81}. Now let us take an arbitrary point $x=(y,Y,\upsilon^\ell,\upsilon^r)\in P$
and an arbitrary tangent vector
$\delta x=\delta y\oplus\delta Y\oplus\delta\upsilon^\ell\oplus 0\in T_x P$.
The derivative of $\varphi_m$ can be easily obtained and has the form
\begin{equation}
(d\varphi_m)_x(\delta x)=
\begin{cases}
0,&\text{if}\ m\ \text{is odd},\\
\langle Y^{m-1},\delta Y\rangle,&\text{if}\ m\ \text{is even}.
\end{cases}
\label{2.118}
\end{equation}
The derivative of $\chi_k$ can be written as
\begin{equation}
\begin{split}
(d\chi_k)_x(\delta x)=&
\bigg\langle\dfrac{\big[[Y^k,C]_\pm,y^{-1}Z(\upsilon^\ell)y\big]}{2},y^{-1}
\delta y\bigg\rangle\\
&+\bigg\langle\sum_{j=0}^{k-1}\dfrac{Y^{k-j-1}[y^{-1}Z(\upsilon^\ell)y,C]_\pm
Y^j}{2},\delta Y\bigg\rangle\\
&+\bigg\langle\dfrac{y[C,Y^k]_\pm y^{-1}+Cy[C,Y^k]_\pm
y^{-1}C}{4\ri\mu},\delta\upsilon^\ell\bigg\rangle,
\end{split}
\label{2.119}
\end{equation}
where $[A,B]_\pm=AB\pm BA$ with the sign of $(-1)^k$.
The Hamiltonian vector field of $\varphi_m$ is
\begin{equation}
(\bX_{\varphi_m})_x
=\Delta y\oplus\Delta Y\oplus\Delta\upsilon^\ell\oplus 0
=yY^{m-1}\oplus 0\oplus 0\oplus 0,
\label{2.120}
\end{equation}
while the Hamiltonian vector field corresponding to $\chi_k$ is
\begin{equation}
(\bX_{\chi_k})_x
=\Delta' y\oplus\Delta' Y\oplus\Delta' \upsilon^\ell\oplus 0,
\label{2.121}
\end{equation}
where
\begin{alignat}{3}
\Delta'y&=\dfrac{y}{2}\sum_{j=0}^{k-1}Y^{k-j-1}[y^{-1}Z(\upsilon^\ell)y,C]_\pm Y^j,
\label{2.122}\\
\Delta'Y&=\dfrac{1}{2}\big[[Y^k,y^{-1}Z(\upsilon^\ell)y]_\pm,C\big],
\label{2.123}\\
\Delta'\upsilon^\ell&=\dfrac{1}{4\ri\mu}\big[\big(y[C,Y^k]_\pm y^{-1}+Cy[C,Y^k]_\pm
y^{-1}C\big),\upsilon^\ell\big].
\label{2.124}
\end{alignat}

\begin{lemma}
\label{lem:2.9}
$\{\lambda_a,\lambda_b\}=0$ for any $a,b\in\{1,\dots,n\}$.
\end{lemma}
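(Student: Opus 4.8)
The plan is to deduce $\{\lambda_a,\lambda_b\}=0$ from the fact, recorded in Section~\ref{sec:2.2} and again after Theorem~\ref{thm:2.1}, that the Abelian Poisson algebra $\fQ^2$ \eqref{2.32} descends to an \emph{Abelian} Poisson algebra $\fQ^2_\red$ on the reduced phase space, combined with a standard symmetric-function argument; the individual $\lambda_a$ are not themselves spectral invariants of $Y$, so a local inversion will be needed to reach them.

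First I would note that the functions $\varphi_{2k}$, $k=1,\dots,n$, of \eqref{2.113} belong to $\fQ^2$: indeed $\varphi_{2k}=\pi_2^\ast\big(\tfrac{1}{2k}\tr(\cdot^{2k})\big)$, and $Y\mapsto\tr(Y^{2k})$ is a $G$-invariant polynomial on $\cG=\un(2n)$, real-valued there because $Y^{2k}$ is Hermitian for $Y\in\un(2n)$. Hence the reduced functions $\varphi_{2k}^\red$ computed in \eqref{2.115} lie in $\fQ^2_\red$, so they Poisson commute pairwise with respect to the reduced symplectic form on the model $\tilde S^0\simeq C_2\times\T^n$ of $\fL^{-1}_\red(C_2)$. (One may equally read $\{\varphi_{2k},\varphi_{2l}\}=0$ on $P$ straight off \eqref{2.118} and \eqref{2.120}, the $Y$-component of $\bX_{\varphi_{2k}}$ being zero, and then invoke that the Poisson bracket of reduced invariants equals the reduction of their Poisson bracket.) By \eqref{2.115} we have $\varphi_{2k}^\red=\tfrac{(-1)^k}{k}\,p_k$ with $p_k:=\sum_{j=1}^{n}\lambda_j^{2k}$, so the power sums satisfy $\{p_k,p_l\}=0$ for all $k,l\in\{1,\dots,n\}$.

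Next I would pass from the $p_k$ to the individual $\lambda_a$. Newton's identities express the elementary symmetric polynomials $e_1,\dots,e_n$ of $\lambda_1^2,\dots,\lambda_n^2$ as polynomials, with rational coefficients, in $p_1,\dots,p_n$; hence $\{e_i,e_j\}=0$ for all $i,j$ by the Leibniz rule. On the domain $C_2$ \eqref{2.73} one has $\lambda_1^2>\dots>\lambda_n^2>0$, so $\lambda_1^2,\dots,\lambda_n^2$ are $n$ pairwise distinct roots of $t^n-e_1t^{n-1}+\dots+(-1)^n e_n$; since $\prod_{b\neq a}(\lambda_a^2-\lambda_b^2)\neq 0$, the implicit function theorem writes each $\lambda_a^2$ locally as a smooth function of $(e_1,\dots,e_n)$, whence $\{\lambda_a^2,\lambda_b^2\}=\sum_{i,j}\tfrac{\partial\lambda_a^2}{\partial e_i}\tfrac{\partial\lambda_b^2}{\partial e_j}\{e_i,e_j\}=0$. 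Finally, as $\lambda_a,\lambda_b>0$ on $C_2$, the chain rule gives $\{\lambda_a,\lambda_b\}=\tfrac{1}{4\lambda_a\lambda_b}\{\lambda_a^2,\lambda_b^2\}=0$.

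There is no genuine obstacle here. The only step that is not purely mechanical is the passage from the commuting power sums to the commuting eigenvalue functions, which rests on locally inverting the symmetric-function map on $C_2$; this is legitimate precisely because the $\lambda_j^2$ are pairwise distinct throughout $C_2$. (Should the identity be wanted on a larger part of $P_\red$, one extends it by continuity, the regular locus being dense, as will be shown later in this section.)
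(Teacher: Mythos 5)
Your argument is correct, and its first half coincides with the paper's: the commutativity $\{\varphi_m,\varphi_l\}\equiv 0$ on $P$ (read off \eqref{2.118}, \eqref{2.120}, or equivalently from $\varphi_{2k}\in\fQ^2$) descends to $\{\varphi_{2k}^\red,\varphi_{2l}^\red\}=0$ on $\tilde S^0$, i.e.\ the power sums $p_k=\sum_j\lambda_j^{2k}$, $k=1,\dots,n$, Poisson commute. Where you diverge is in extracting $\{\lambda_a,\lambda_b\}=0$ from this: the paper expands $\{\varphi_m^\red,\varphi_l^\red\}$ by the Leibniz rule into the quadratic form \eqref{2.125}, recasts the vanishing as the matrix identity $\bU^\dag\bP\bU=\0_n$ with the Vandermonde-type matrix $\bU_{a,b}=\lambda_a^{2b-1}$ of \eqref{2.126}, and concludes $\bP=\0_n$ by invertibility of $\bU$; you instead pass through Newton's identities to the elementary symmetric functions of $\lambda_1^2,\dots,\lambda_n^2$, invert the symmetric-function map locally by the implicit function theorem, and finish with the chain rule. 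Both routes hinge on exactly the same regularity, namely that on $C_2$ \eqref{2.73} (with \eqref{2.8}) the $\lambda_a$ are positive and pairwise distinct, which is what makes $\bU$ invertible in the paper and what gives you a nonvanishing Jacobian for the local inversion. The paper's finish is a touch more economical (a single linear-algebra step, no appeal to the implicit function theorem), while yours avoids writing out the bracket expansion explicitly; either is acceptable, and your closing remark about extending by density is not needed for the lemma as stated, since the claim is made on $\tilde S^0\simeq C_2\times\T^n$.
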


\begin{proof}
Using \eqref{2.118} one has $\{\varphi_m,\varphi_l\}\equiv 0$ for any $m,l\in\N$
which implies that $\{\varphi_m^\red,\varphi_l^\red\}\equiv 0$. Let $m,l\in\N$ be
arbitrary even numbers. Direct calculation of the Poisson bracket
$\{\varphi_m^\red,\varphi_l^\red\}$ using \eqref{2.115} and the Leibniz rule results in the
formula
\begin{equation}
\{\varphi_m^\red,\varphi_l^\red\}=(-1)^{\tfrac{m+l}{2}}4\sum_{a,b=1}^n
\lambda_a^{m-1}\{\lambda_a,\lambda_b\}\lambda_b^{l-1}.
\label{2.125}
\end{equation}
By introducing the $n\times n$ matrices
\begin{equation}
\bP_{a,b}=\{\lambda_a,\lambda_b\}\quad
\text{and}\quad
\bU_{a,b}=\lambda_a^{2b-1},\qquad
a,b\in\{1,\dots,n\}
\label{2.126}
\end{equation}
and choosing $m$ and $l$ from the set $\{1,\dots,2n\}$, the equation
$\{\varphi_m^\red,\varphi_l^\red\}\equiv 0$ can be cast into the matrix equation
\begin{equation}
(-1)^{\tfrac{m+l}{2}}\bU^\dag\bP\bU=\0_n.
\label{2.127}
\end{equation}
Since $\bU$ is an invertible Vandermonde-type matrix it follows from \eqref{2.127}
that $\bP=\0_n$ which reads as $\{\lambda_a,\lambda_b\}=0$ for all $a,b\in\{1,\dots,n\}$.
\end{proof}

\begin{lemma}
$\{\lambda_a,\vartheta_b\}=\delta_{a,b}$ for any $a,b\in\{1,\dots,n\}$.
\label{lem:2.10}
\end{lemma}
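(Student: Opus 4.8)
The plan is to reuse the template of the proof of Lemma~\ref{lem:2.9}, now pairing the ``action-side'' invariants $\varphi_m$ \eqref{2.113} against the ``angle-side'' invariants $\chi_k$ \eqref{2.114}. Both families are $(G_+\times G_+)$-invariant, so $\{\varphi_m,\chi_k\}$ computed on $(P,\Omega)$ descends to $P_\red$, and its restriction to the cross-section $\tilde S^0$ equals the bracket of the reduced functions $\varphi_m^\red$ \eqref{2.115} and $\chi_k^\red$ \eqref{2.116} taken with respect to the pulled-back two-form $\tilde\sigma_0^\ast(\Omega)$. Since $\varphi_m^\red$ depends on $\lambda$ only and vanishes for odd $m$, the Leibniz rule combined with Lemma~\ref{lem:2.9} gives, for even $m$,
\begin{equation*}
\{\varphi_m^\red,\chi_k^\red\}=\sum_{a,b=1}^n\PD{\varphi_m^\red}{\lambda_a}\,\{\lambda_a,\vartheta_b\}\,\PD{\chi_k^\red}{\vartheta_b},
\end{equation*}
so the still unknown brackets $\{\vartheta_a,\vartheta_b\}$ drop out and only the matrix $\bQ_{a,b}:=\{\lambda_a,\vartheta_b\}$ has to be pinned down.

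The substantive step is to evaluate the left-hand side directly on $P$ and then restrict to $\tilde S^0$. One has $\{\varphi_m,\chi_k\}=(d\varphi_m)(\bX_{\chi_k})$; by \eqref{2.118} the one-form $d\varphi_m$ (for even $m$) pairs only with the $\delta Y$-slot, and by \eqref{2.123} the $\delta Y$-component of $\bX_{\chi_k}$ is $\tfrac12[[Y^k,y^{-1}Z(\upsilon^\ell)y]_\pm,C]$, so that
\begin{equation*}
\{\varphi_m,\chi_k\}=\tfrac12\big\langle Y^{m-1},\big[[Y^k,y^{-1}Z(\upsilon^\ell)y]_\pm,C\big]\big\rangle .
\end{equation*}
On $\tilde S^0$ we have $Y=h(\lambda)\ri\Lambda h(\lambda)^{-1}$, hence $Y^{m-1}=h(\lambda)(\ri\Lambda)^{m-1}h(\lambda)^{-1}$ and $Y^{k}=h(\lambda)(\ri\Lambda)^{k}h(\lambda)^{-1}$, while $y^{-1}Z(\upsilon^\ell)y$ is made explicit through $V=y(\lambda,\vartheta)h(\lambda)f(\lambda,\vartheta)$ with $f$ as in \eqref{2.75}. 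Substituting these and carrying out the trace — using that $h(\lambda)\in G_-$, so $Ch(\lambda)C=h(\lambda)^{-1}$, and the identities \eqref{2.107} for the $\cF_k^+$ — one is left, after a fair amount of bookkeeping, with a finite sum over $j\in\N_n$ of $\lambda_j$-monomials times exactly the trigonometric building blocks of $\vartheta_j$ recorded in \eqref{2.116}, and one verifies that the result is precisely
\begin{equation*}
\{\varphi_m,\chi_k\}\big|_{\tilde S^0}=\sum_{a=1}^n\PD{\varphi_m^\red}{\lambda_a}\,\PD{\chi_k^\red}{\vartheta_a}.
\end{equation*}

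Comparing with the first display, for every admissible $(m,k)$ one gets $\sum_{a,b}\PD{\varphi_m^\red}{\lambda_a}(\bQ_{a,b}-\delta_{a,b})\PD{\chi_k^\red}{\vartheta_b}=0$. Now $\PD{\varphi_m^\red}{\lambda_a}=2(-1)^{m/2}\lambda_a^{\,m-1}$, so letting $m$ run through the even values $2,4,\dots,2n$ makes $\big(\PD{\varphi_m^\red}{\lambda_a}\big)_{m,a}$ invertible (a diagonal factor times the Vandermonde matrix of \eqref{2.126} in the variables $\lambda_a^2$); and letting $k$ run through the odd values $1,3,\dots,2n-1$ makes $\big(\PD{\chi_k^\red}{\vartheta_b}\big)_{k,b}$ invertible as well, because for these $k$ the $\vartheta_b$-dependence of $\chi_k^\red$ is carried by one and the same function of $\vartheta_b$ scaled by $\lambda_b^{\,k}$ times a factor that is nonzero for generic $\vartheta$, whence a Vandermonde argument applies again. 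It follows that $\bQ_{a,b}-\delta_{a,b}=0$ on a dense open subset, hence everywhere by smoothness, i.e.\ $\{\lambda_a,\vartheta_b\}=\delta_{a,b}$. The real obstacle is the middle step: reducing the trace $\langle Y^{m-1},[[Y^k,y^{-1}Z(\upsilon^\ell)y]_\pm,C]\rangle$ onto $\tilde S^0$ and checking that it collapses to $\sum_a\PD{\varphi_m^\red}{\lambda_a}\PD{\chi_k^\red}{\vartheta_a}$ is a long but mechanical matrix computation — of the same nature as, and part of, the detailed calculations that establish \eqref{2.83}, whose trigonometric output was already tabulated in \eqref{2.116}.
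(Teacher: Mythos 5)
Your proposal is correct and follows essentially the same route as the paper's proof of Lemma \ref{lem:2.10}: compute $\{\chi_k,\varphi_m\}$ on $P$ from \eqref{2.118}--\eqref{2.124}, restrict to $\tilde S^0$, compare with the direct Leibniz expansion (where Lemma \ref{lem:2.9} kills the $\{\lambda,\lambda\}$ terms and the $\vartheta$-independence of $\varphi_m^\red$ removes $\{\vartheta,\vartheta\}$), and finish with a Vandermonde-type invertibility argument on a dense subset plus continuity. The only difference is cosmetic: you take $k$ odd so the generic condition is $\cos(\vartheta_b)\neq 0$, whereas the paper takes both $k,m$ even, identifies the reduced bracket as $2\chi_{k+m-1}^\red$, and works on the dense set where $\sin(\vartheta_b)\neq 0$.
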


\begin{proof}
By choosing two even numbers, $k$ and $m$, and calculating the Poisson bracket
$\{\chi_k,\varphi_m\}$ at an arbitrary point $x=(y,Y,\upsilon^\ell,\upsilon^r)\in P$
the results \eqref{2.120}-\eqref{2.124} imply that
\begin{equation}
\{\chi_k,\varphi_m\}(x)=\chi_{k+m-1}(x)
+\frac{1}{2}\tr\big((Y^kCY^{m-1}-Y^{m-1}CY^k)y^{-1}Z(\upsilon^\ell)y\big).
\label{2.128}
\end{equation}
The computation of the reduced form of \eqref{2.128} shows that
\begin{equation}
\{\chi_k^\red,\varphi_m^\red\}=2\chi_{k+m-1}^\red.
\label{2.129}
\end{equation}
By utilizing \eqref{2.115}, \eqref{2.116} and the result of the previous lemma one can write
the l.h.s. of \eqref{2.129} as
\begin{equation}
\{\chi_k^\red,\varphi_m^\red\}=(-1)^{\tfrac{k+m}{2}}
4\sum_{b=1}^n\lambda_b^k\bigg[1-\frac{\kappa^2}{\lambda_b^2}\bigg]^{\tfrac{1}{2}}
|X_b(\lambda)|\sin(\vartheta_b)\sum_{a=1}^n\{\lambda_a,\vartheta_b\}\lambda_a^{m-1}.
\label{2.130}
\end{equation}
Now, returning to equation \eqref{2.129} together with \eqref{2.130}
one can obtain the following equivalent form
\begin{equation}
\sum_{b=1}^n
\lambda_b^k\bigg[1-\frac{\kappa^2}{\lambda_b^2}\bigg]^{\tfrac{1}{2}}
|X_b(\lambda)|\sin(\vartheta_b)
\bigg(\sum_{a=1}^n\{\lambda_a,\vartheta_b\}\lambda_a^{m-1}-\lambda_b^{m-1}\bigg)
=0.
\label{2.131}
\end{equation}
By introducing the $n\times n$ matrices
\begin{equation}
\bV_{b,d}=\bigg[1-\frac{\kappa^2}{\lambda_b^2}\bigg]^{\tfrac{1}{2}}
|X_b(\lambda)|\sin(\vartheta_b)\bigg(\sum_{a=1}^n\{\lambda_a,\vartheta_b\}
\lambda_a^{2d-1}-\lambda_b^{2d-1}\bigg),
\quad b,d\in\{1,\dots,n\}
\label{2.132}
\end{equation}
and using the Vandermonde-type matrix $\bU$ defined in \eqref{2.126}
one is able to write \eqref{2.131} into the matrix equation $\bU^\dag\bV=\0_n$.
Since $\bU$ is invertible $\bV=\0_n$ and therefore in the dense subset of
$C_2\times\T^n$ where $\sin(\vartheta_b)\neq 0$ the following holds
\begin{equation}
\sum_{a=1}^n\{\lambda_a,\vartheta_b\}\lambda_a^{m-1}-\lambda_b^{m-1}=0,
\quad\forall\,b\in\{1,\dots,n\}.
\label{2.133}
\end{equation}
With the matrices $\bU$ and
\begin{equation}
\bQ_{b,a}=\{\lambda_a,\vartheta_b\},\quad
a,b\in\{1,\dots,n\}
\label{2.134}
\end{equation}
equation \eqref{2.133} can be written equivalently as $\bQ\bU-\bU=\0_n$,
which immediately implies that $\bQ=\1_n$. Due to the continuity of Poisson bracket
$\bQ=\1_n$ must hold for every point in $C_2\times\T^n$, therefore one has
$\{\lambda_a,\vartheta_b\}=\delta_{a,b}$ for all $a,b\in\{1,\dots,n\}$.
\end{proof}

\begin{lemma}
\label{lem:2.11}
$\{\vartheta_a,\vartheta_b\}=0$ for any $a,b\in\{1,\dots,n\}$.
\end{lemma}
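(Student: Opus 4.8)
The plan is to prove $\{\vartheta_a,\vartheta_b\}=0$ by the same strategy that established Lemmas \ref{lem:2.9} and \ref{lem:2.10}: compute a suitable Poisson bracket of the unreduced invariants $\chi_k$ on $P$ using the explicit formulae \eqref{2.118}--\eqref{2.124}, restrict to $\tilde S^0$, and compare with the direct computation of the reduced bracket in terms of $\varphi_m^\red$ and $\chi_k^\red$. Concretely, I would take two odd integers $k,l\in\{1,\dots,2n\}$ and evaluate $\{\chi_k,\chi_l\}$ at an arbitrary point $x=(y,Y,\upsilon^\ell,\upsilon^r)\in P$. Using \eqref{2.121}--\eqref{2.124}, the result is a universal identity expressing $\{\chi_k,\chi_l\}$ in terms of $\chi$'s, $\varphi$'s, and an extra trace term built from $y^{-1}Z(\upsilon^\ell)y$, $C$, and powers of $Y$; the key point is that this identity, upon reduction, yields an equation of the schematic form $\{\chi_k^\red,\chi_l^\red\}=(\text{combination of }\chi^\red\text{'s and }\varphi^\red\text{'s})$ that holds \emph{independently} of the unknown brackets $\{\vartheta_a,\vartheta_b\}$.

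Next I would compute the left-hand side $\{\chi_k^\red,\chi_l^\red\}$ directly from the explicit forms \eqref{2.115}, \eqref{2.116}, \eqref{2.117}, using the Leibniz rule together with the already-established facts $\{\lambda_a,\lambda_b\}=0$ (Lemma \ref{lem:2.9}) and $\{\lambda_a,\vartheta_b\}=\delta_{a,b}$ (Lemma \ref{lem:2.10}). For $k,l$ odd the reduced $\chi_k^\red$ is, up to sign and the $\lambda$-dependent prefactor $\lambda_j^k[1-\kappa^2/\lambda_j^2]^{1/2}|X_j|$, a sum of $\sin(\vartheta_j)$ terms; differentiating these brings down $\cos(\vartheta_j)$ times the unknown $\{\vartheta_a,\vartheta_b\}$, plus terms involving $\partial_{\lambda}$ that are already determined. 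Subtracting the value forced by the universal identity of the previous paragraph, all the known pieces cancel and one is left with a linear relation of the form $\sum_{a,b}(\text{nonvanishing }\lambda\text{-factors})\,\cos(\vartheta_a)\cos(\vartheta_b)\,\{\vartheta_a,\vartheta_b\}\,(\text{powers of }\lambda_a,\lambda_b)=0$, valid for all odd $k,l\in\{1,\dots,2n\}$.

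Finally, as in the two previous lemmas, I would encode this in a matrix equation. Introducing the antisymmetric matrix $\bW_{a,b}=\{\vartheta_a,\vartheta_b\}$ and the Vandermonde-type matrix $\bU$ of \eqref{2.126} (together with the diagonal matrix of the nonvanishing prefactors and $\cos(\vartheta_a)$), the family of identities for $k,l\in\{1,\dots,2n\}$ becomes $\bU^\dagger\, (\text{diagonal})\,\bW\,(\text{diagonal})\,\bU=\0_n$ on the dense open subset of $C_2\times\T^n$ where all $\cos(\vartheta_a)\neq 0$ (and $\lambda$ is strongly regular, so the prefactors are nonzero). Invertibility of $\bU$ and of the diagonal factors forces $\bW=\0_n$ there, and continuity of the Poisson bracket extends $\{\vartheta_a,\vartheta_b\}=0$ to all of $C_2\times\T^n$. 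Together with Lemmas \ref{lem:2.9} and \ref{lem:2.10}, this shows that $\tilde\omega^0=\sum_{c}d\lambda_c\wedge d\vartheta_c$, completing the computation of the reduced symplectic form claimed in \eqref{2.83}.

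I expect the main obstacle to be bookkeeping in the universal identity for $\{\chi_k,\chi_l\}$: one must verify that, after restriction to $\tilde S^0$, the right-hand side really is a function only of the already-controlled quantities, so that the $\cos\cos$ term with $\bW$ is genuinely isolated. A secondary subtlety is ensuring the diagonal prefactor $\lambda_a^k[1-\kappa^2/\lambda_a^2]^{1/2}|X_a|$ never vanishes on the relevant dense subset of $C_2\times\T^n$, which follows from $\lambda_a>|\kappa|$ and the positivity of the factors $1-4\mu^2/(\lambda_a\pm\lambda_b)^2$ and $1-\nu^2/\lambda_a^2$ on $C_2$ established in Lemma \ref{lem:2.8}; picking $k$ odd and letting it range over $\{1,\dots,2n\}$ guarantees the Vandermonde argument applies even if some individual $\lambda_a$ equals zero is impossible here anyway.
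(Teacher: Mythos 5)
Your proposal follows essentially the same route as the paper's proof: computing $\{\chi_k,\chi_l\}$ for odd $k,l$ indirectly via the unreduced Hamiltonian vector fields and restricting to $\tilde S^0$, comparing with the direct Leibniz-rule computation using Lemmas \ref{lem:2.9} and \ref{lem:2.10}, isolating the $\cos(\vartheta_a)\cos(\vartheta_b)\{\vartheta_a,\vartheta_b\}$ term, and killing it by a Vandermonde-type invertibility argument on a dense subset plus continuity. The only cosmetic difference is that the paper absorbs the nonvanishing prefactors and $\cos(\vartheta_a)$ into a single matrix $\bW$ satisfying $\bW^\dag\bR\,\bW=\0_n$, rather than writing it as diagonal factors sandwiching the Vandermonde matrix $\bU$.
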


\begin{proof}
Let $k$ and $l$ be two arbitrary odd integers, and calculate the Poisson bracket
$\{\chi_k^\red,\chi_l^\red\}$ indirectly, that is, work out the Poisson bracket
$\{\chi_k,\chi_l\}=\Omega(\bX_{\chi_l},\bX_{\chi_k})$
explicitly and restrict it to the gauge \eqref{2.81}. The first term $\langle y^{-1}\Delta y,\Delta'Y\rangle$ in \eqref{2.15} reads
\begin{equation}
\begin{split}
\langle y^{-1}\Delta y,\Delta'Y\rangle=&
(-1)^{\tfrac{k+l+2}{2}} 2\,l\sum_{a=1}^n
\lambda_a^{k+l-1}\bigg[1-\frac{\kappa}{\lambda_a^2}\bigg]
|X_a(\lambda)|^2\sin(2\vartheta_a)\\
&(-1)^{\tfrac{k+l+2}{2}} 2\sum_{\substack{a,b=1\\(a\neq b)}}^n
\lambda_a^k\lambda_b^l
\bigg[1-\frac{\kappa^2}{\lambda_a^2}\bigg]^{\tfrac{1}{2}}
\bigg[1-\frac{\kappa^2}{\lambda_b^2}\bigg]^{\tfrac{1}{2}}
|X_a||X_b|
\frac{\sin(\vartheta_a-\vartheta_b)}
{\lambda_a+\lambda_b}\\
&(-1)^{\tfrac{k-l+2}{2}} 2\sum_{\substack{a,b=1\\(a\neq b)}}^n
\lambda_a^k\lambda_b^l
\bigg[1-\frac{\kappa^2}{\lambda_a^2}\bigg]^{\tfrac{1}{2}}
\bigg[1-\frac{\kappa^2}{\lambda_b^2}\bigg]^{\tfrac{1}{2}}
|X_a||X_b|
\frac{\sin(\vartheta_a+\vartheta_b)}
{\lambda_a-\lambda_b}.
\end{split}
\label{2.135}
\end{equation}
Due to antisymmetry in the indices $\langle y^{-1}\Delta' y,\Delta Y\rangle$ in
\eqref{2.15} is obtained by interchanging $k$ and $l$
\begin{equation}
\begin{split}
\langle y^{-1}\Delta' y,\Delta Y\rangle=&
(-1)^{\tfrac{k+l+2}{2}} 2\,k\sum_{a=1}^n
\lambda_a^{k+l-1}\bigg[1-\frac{\kappa}{\lambda_a^2}\bigg]
|X_a(\lambda)|^2\sin(2\vartheta_a)\\
&(-1)^{\tfrac{k-l+2}{2}} 2\sum_{\substack{a,b=1\\(a\neq b)}}^n
\lambda_a^k\lambda_b^l
\bigg[1-\frac{\kappa^2}{\lambda_a^2}\bigg]^{\tfrac{1}{2}}
\bigg[1-\frac{\kappa^2}{\lambda_b^2}\bigg]^{\tfrac{1}{2}}
|X_a||X_b|
\frac{\sin(\vartheta_a-\vartheta_b)}
{\lambda_a+\lambda_b}\\
&(-1)^{\tfrac{k+l+2}{2}} 2\sum_{\substack{a,b=1\\(a\neq b)}}^n
\lambda_a^k\lambda_b^l
\bigg[1-\frac{\kappa^2}{\lambda_a^2}\bigg]^{\tfrac{1}{2}}
\bigg[1-\frac{\kappa^2}{\lambda_b^2}\bigg]^{\tfrac{1}{2}}
|X_a||X_b|
\frac{\sin(\vartheta_a+\vartheta_b)}
{\lambda_a-\lambda_b}.
\end{split}
\label{2.136}
\end{equation}
One can easily check that the third term in \eqref{2.15} vanishes.
The last term takes the form
\begin{equation}
\begin{split}
\langle[D_\upsilon,D'_\upsilon],\upsilon\rangle=
&(-1)^{\tfrac{k+l+2}{2}} 4\sum_{\substack{a,b=1\\(a\neq b)}}^n
\lambda_a^k\lambda_b^l
\bigg[1-\frac{\kappa^2}{\lambda_a^2}\bigg]^{\tfrac{1}{2}}
\bigg[1-\frac{\kappa^2}{\lambda_b^2}\bigg]^{\tfrac{1}{2}}
|X_a||X_b|
\frac{\sin(\vartheta_a-\vartheta_b)}
{\big(4\mu^2-(\lambda_a+\lambda_b)^2\big)(\lambda_a+\lambda_b)}\\
&(-1)^{\tfrac{k-l+2}{2}} 4\sum_{\substack{a,b=1\\(a\neq b)}}^n
\lambda_a^k\lambda_b^l
\bigg[1-\frac{\kappa^2}{\lambda_a^2}\bigg]^{\tfrac{1}{2}}
\bigg[1-\frac{\kappa^2}{\lambda_b^2}\bigg]^{\tfrac{1}{2}}
|X_a||X_b|
\frac{\sin(\vartheta_a+\vartheta_b)}
{\big(4\mu^2-(\lambda_a-\lambda_b)^2\big)(\lambda_a-\lambda_b)}.
\end{split}
\label{2.137}
\end{equation}
As a result of this indirect calculation one obtains the following expression for
$\{\chi_k^\red,\chi_l^\red\}$
\begin{equation}
\begin{split}
\{\chi_k^\red,\chi_l^\red\}&=
(-1)^{\tfrac{k-l+2}{2}} 2(k-l)\sum_{a=1}^n\lambda_a^{k+l-1}
\bigg[1-\frac{\kappa^2}{\lambda_a^2}\bigg]|X_a|^2\sin(2\vartheta_a)\\
&(-1)^{\tfrac{k+l+2}{2}} 16\mu^2\sum_{\substack{a,b=1\\(a\neq b)}}^n
\lambda_a^k\lambda_b^l
\bigg[1-\frac{\kappa^2}{\lambda_a^2}\bigg]^{\tfrac{1}{2}}
\bigg[1-\frac{\kappa^2}{\lambda_b^2}\bigg]^{\tfrac{1}{2}}
|X_a||X_b|
\frac{\sin(\vartheta_a-\vartheta_b)}
{\big(4\mu^2-(\lambda_a+\lambda_b)^2\big)(\lambda_a+\lambda_b)}\\
&(-1)^{\tfrac{k-l+2}{2}} 16\mu^2\sum_{\substack{a,b=1\\(a\neq b)}}^n
\lambda_a^k\lambda_b^l
\bigg[1-\frac{\kappa^2}{\lambda_a^2}\bigg]^{\tfrac{1}{2}}
\bigg[1-\frac{\kappa^2}{\lambda_b^2}\bigg]^{\tfrac{1}{2}}
|X_a||X_b|
\frac{\sin(\vartheta_a+\vartheta_b)}
{\big(4\mu^2-(\lambda_a-\lambda_b)^2\big)(\lambda_a-\lambda_b)}.
\end{split}
\label{2.138}
\end{equation}
One can also carry out a direct computation of $\{\chi_k^\red,\chi_l^\red\}$
by using basic properties of the Poisson bracket and the previous two lemmas
\begin{equation}
\begin{split}
\{\chi_k^\red,\chi_l^\red\}&=
(-1)^{\tfrac{k-l+2}{2}} 2(k-l)\sum_{a=1}^n\lambda_a^{k+l-1}
\bigg[1-\frac{\kappa^2}{\lambda_a^2}\bigg]|X_a|^2\sin(2\vartheta_a)\\
&(-1)^{\tfrac{k+l+2}{2}} 16\mu^2\sum_{\substack{a,b=1\\(a\neq b)}}^n
\lambda_a^k\lambda_b^l
\bigg[1-\frac{\kappa^2}{\lambda_a^2}\bigg]^{\tfrac{1}{2}}
\bigg[1-\frac{\kappa^2}{\lambda_b^2}\bigg]^{\tfrac{1}{2}}
|X_a||X_b|
\frac{\sin(\vartheta_a-\vartheta_b)}
{\big(4\mu^2-(\lambda_a+\lambda_b)^2\big)(\lambda_a+\lambda_b)}\\
&(-1)^{\tfrac{k-l+2}{2}} 16\mu^2\sum_{\substack{a,b=1\\(a\neq b)}}^n
\lambda_a^k\lambda_b^l
\bigg[1-\frac{\kappa^2}{\lambda_a^2}\bigg]^{\tfrac{1}{2}}
\bigg[1-\frac{\kappa^2}{\lambda_b^2}\bigg]^{\tfrac{1}{2}}
|X_a||X_b|
\frac{\sin(\vartheta_a+\vartheta_b)}
{\big(4\mu^2-(\lambda_a-\lambda_b)^2\big)(\lambda_a-\lambda_b)}\\
&(-1)^{\tfrac{k-l}{2}} 4\sum_{a,b=1}^n\lambda_a^k\lambda_b^l
\bigg[1-\frac{\kappa^2}{\lambda_a^2}\bigg]^{\tfrac{1}{2}}
\bigg[1-\frac{\kappa^2}{\lambda_b^2}\bigg]^{\tfrac{1}{2}}
|X_a||X_b|\cos(\vartheta_a)\cos(\vartheta_b)\{\vartheta_a,\vartheta_b\}.
\end{split}
\label{2.139}
\end{equation}
Now it is obvious that \eqref{2.138} and \eqref{2.139}
must be equal therefore the extra term must vanish
\begin{equation}
\sum_{a,b=1}^n\lambda_a^k\lambda_b^l
\bigg[1-\frac{\kappa^2}{\lambda_a^2}\bigg]^{\tfrac{1}{2}}
\bigg[1-\frac{\kappa^2}{\lambda_b^2}\bigg]^{\tfrac{1}{2}}
|X_a||X_b|\cos(\vartheta_a)\cos(\vartheta_b)\{\vartheta_a,\vartheta_b\}=0.
\label{2.140}
\end{equation}
By utilizing the $n\times n$ matrices
\begin{equation}
\bW_{a,b}=\lambda_a^b\bigg[1-\frac{\kappa^2}{\lambda_a^2}\bigg]^{\tfrac{1}{2}}
|X_a(\lambda)|\cos(\vartheta_a),\quad
\bR_{a,b}=\{\vartheta_a,\vartheta_b\},\quad
a,b\in\{1,\dots,n\}
\label{2.141}
\end{equation}
one can reformulate \eqref{2.140} as the matrix equation
\begin{equation}
\bW^\dag\bR\,\bW=\0_n.
\label{2.142}
\end{equation}
Since $\bW$ is easily seen to be invertible in a dense subset of the phase space
$C_2\times\T^n$, eq. \eqref{2.142} and the continuity of Poisson bracket imply
$\bR=\0_n$ for the full phase space, i.e. $\{\vartheta_a,\vartheta_b\}=0$ for all
$a,b\in\{1,\dots,n\}$.
\end{proof}

Lemmas \ref{lem:2.9}, \ref{lem:2.10}, and \ref{lem:2.11} together imply the following (claimed in Theorem \ref{thm:2.3})

\begin{theorem}
\label{thm.2.12}
The reduced symplectic structure on $\tilde S^0$ \eqref{2.81}, given by the pull-back of
$\Omega$ \eqref{2.26} by the tautological injection $\tilde\sigma_0\colon\tilde{S}^0\to P$,
has the canonical form
$\tilde\sigma_0^\ast(\Omega)=\sum_{a=1}^nd\lambda_a\wedge d\vartheta_a$.
\end{theorem}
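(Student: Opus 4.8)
The statement to be proved is exactly the rigorous counterpart of the first equality announced in \eqref{2.83}, and the plan is to read it off from the three Poisson bracket relations just established, using the elementary fact that, in any coordinate chart, a symplectic form is completely determined by the Poisson brackets of the coordinate functions. So I would organise the argument in three short steps.

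First I would check that $\tilde\omega:=\tilde\sigma_0^\ast(\Omega)$ is genuinely a symplectic form on $\tilde S^0$. Closedness is immediate since $\Omega$ is closed and pull-back commutes with $d$. For non-degeneracy I invoke the reduction picture already set up: by the identifications \eqref{2.112} in the proof of Theorem \ref{thm:2.3}, $\tilde S^0$ is a global cross-section for the $G_+\times G_+$-action on the open subset $\fL^{-1}(C_2)\subset P_0$, hence is diffeomorphic to the open submanifold $\fL_{\red}^{-1}(C_2)$ of $P_{\red}$; but $P_{\red}$ is a smooth symplectic manifold (Theorem \ref{thm:2.1}), so its reduced symplectic form restricts to a symplectic form on the open set $\fL_{\red}^{-1}(C_2)$, and by the general theory of Marsden--Weinstein reduction this restricted form pulls back along the cross-section to $\tilde\sigma_0^\ast(\Omega)$. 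This non-degeneracy is the one point that genuinely has to be invoked; it is precisely what licenses passing from Poisson brackets back to the $2$-form.

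Next I would use the parametrization of $\tilde S^0$ by $C_2\times\T^n$ to view $\lambda_1,\dots,\lambda_n$ together with local determinations of $\vartheta_1,\dots,\vartheta_n$ as coordinates (the $1$-forms $d\vartheta_a$ being globally defined on the torus factor). In such a chart the matrix of $\tilde\omega$ is minus the inverse of the matrix $\mathcal P$ of pairwise Poisson brackets of the coordinate functions, with the sign convention fixed in the Introduction under which $\omega=\sum_j dq_j\wedge dp_j$ corresponds to $\{q_j,p_k\}=\delta_{j,k}$. By Lemmas \ref{lem:2.9}, \ref{lem:2.10} and \ref{lem:2.11}, $\mathcal P$ is the standard symplectic block matrix (vanishing $\lambda$--$\lambda$ and $\vartheta$--$\vartheta$ blocks, $\Id_n$ in the $\lambda$--$\vartheta$ block, $-\Id_n$ in the $\vartheta$--$\lambda$ block); since that matrix is its own negative inverse, the matrix of $\tilde\omega$ is the same, i.e. $\tilde\omega=\sum_{a=1}^n d\lambda_a\wedge d\vartheta_a$ on the chart. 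As this is an identity of $2$-forms and such charts cover $\tilde S^0$, it holds on all of $\tilde S^0$. Equivalently, and without inverting a matrix, the relations $\{\lambda_a,\lambda_b\}=0$, $\{\vartheta_a,\vartheta_b\}=0$, $\{\lambda_a,\vartheta_b\}=\delta_{a,b}$ force $\bX_{\lambda_a}=-\partial/\partial\vartheta_a$ and $\bX_{\vartheta_a}=\partial/\partial\lambda_a$, whence $\iota_{\partial/\partial\vartheta_a}\tilde\omega=-d\lambda_a$ and $\iota_{\partial/\partial\lambda_a}\tilde\omega=d\vartheta_a$; the candidate form $\sum_a d\lambda_a\wedge d\vartheta_a$ has exactly these contractions, and a $2$-form is determined by its contractions against a coordinate frame, so the two agree. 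I do not anticipate any real obstacle beyond the bookkeeping sketched above: the analytic content of the theorem is already carried out in Lemmas \ref{lem:2.9}--\ref{lem:2.11}, and the only subtlety is the a priori non-degeneracy of $\tilde\sigma_0^\ast(\Omega)$ noted in the first step.
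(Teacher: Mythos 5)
Your proposal is correct and follows essentially the same route as the paper: Theorem \ref{thm.2.12} is obtained there precisely as the statement that the canonical Poisson bracket relations of Lemmas \ref{lem:2.9}, \ref{lem:2.10}, \ref{lem:2.11} determine the pull-back $\tilde\sigma_0^\ast(\Omega)$. You merely make explicit two points the paper leaves implicit — that $\tilde\sigma_0^\ast(\Omega)$ is non-degenerate because $\tilde S^0$ is a cross-section identifying it with the restriction of the reduced symplectic form on $\fL_\red^{-1}(C_2)\subset P_\red$ (Theorems \ref{thm:2.1} and \ref{thm:2.3}), and the elementary passage from canonical brackets of the coordinates $(\lambda,\vartheta)$ back to the $2$-form — and both are handled correctly.
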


\subsubsection{Density properties}

So far we dealt with the open subset $\fL_\red^{-1}(C_2)$ of the reduced
phase space. Here we show that Theorem \ref{thm:2.3} contains `almost all'
information about the dual system
since $\fL_\red^{-1}(C_2)\subset P_\red$ is a \emph{dense} subset.
This key result will be proved by combining two lemmas.

\begin{lemma}
\label{lem:2.13}
The subset $P_0^\vreg \subset P_0$ of the constraint surface where the
range of the eigenvalue map $\fL$ \eqref{2.72} satisfies the conditions
\eqref{2.87} and \eqref{2.100} is dense.
\end{lemma}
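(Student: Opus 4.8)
The plan is to realize $P_0^{\vreg}$ as the complement of the zero set of a single real-analytic function on the connected manifold $P_0$, and then conclude by the identity principle for real-analytic functions. First I would record that $P_0$ is connected: by Theorem~\ref{thm:2.1} the reduced space $P_\red$ is diffeomorphic to $M=C_1\times\R^n$, which is connected, and since the effectively acting symmetry group $(G_+\times G_+)/\UN(1)$ acts freely (as shown in the proof of Theorem~\ref{thm:2.1}), $P_0\to P_\red$ is a principal bundle with connected fibre $(G_+\times G_+)/\UN(1)$; hence $P_0$ is connected. It is moreover real-analytic (the momentum map $J$ of \eqref{2.28} is real-analytic and $P_0=J^{-1}(0)$ is a submanifold by the free-action property), and the assignment $(y,Y,\cdot,\cdot)\mapsto Y_-=Y+\ri\kappa C\in\cG_-$, hence the coefficients of $\det(t\1_{2n}-Y_-)=\prod_{a=1}^n(t^2+d_a^2)$, depend real-analytically on the point of $P_0$.

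The delicate point is that the ordered parameters $\lambda_1\ge\dots\ge\lambda_n$ (with $\lambda_a=(d_a^2+\kappa^2)^{1/2}$) are only continuous along the coincidence locus, so conditions~\eqref{2.87} and~\eqref{2.100} cannot be treated as honest preimages of hyperplanes. The remedy is to rewrite them via symmetric functions of the $\lambda_a^2$: set
\[
\delta=\prod_{1\le a<b\le n}(d_a^2-d_b^2)^2\prod_{a=1}^n d_a^2,
\]
\[
\rho=\prod_{1\le a<b\le n}\!\big((\lambda_a^2+\lambda_b^2-4\mu^2)^2-4\lambda_a^2\lambda_b^2\big)\prod_{a=1}^n(\lambda_a^2-\mu^2)(\lambda_a^2-\nu^2)(\lambda_a^2-(2\mu-\nu)^2).
\]
Using the identity $((\lambda_a-\lambda_b)^2-4\mu^2)((\lambda_a+\lambda_b)^2-4\mu^2)=(\lambda_a^2+\lambda_b^2-4\mu^2)^2-4\lambda_a^2\lambda_b^2$, both $\delta$ and $\rho$ are symmetric polynomials in $d_1^2,\dots,d_n^2$, hence polynomials in the above characteristic-polynomial coefficients, hence real-analytic on $P_0$. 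Since on $P_0$ one always has $\lambda_1\ge\dots\ge\lambda_n\ge|\kappa|$, one checks directly that~\eqref{2.87} holds at a point exactly when $\delta\neq0$ there, and~\eqref{2.100} exactly when $\rho\neq0$; therefore $P_0^{\vreg}=\{\delta\rho\neq0\}$ is an open subset of $P_0$.

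Finally I would show $\delta\rho\not\equiv0$, after which density of $P_0^{\vreg}$ is immediate. By Theorem~\ref{thm:2.3} we have $C_2\subset\fL(P_0)$, and $C_2$ \eqref{2.73} is a nonempty open subset of $\R^n$, so we may pick $\lambda^\ast\in C_2$ avoiding the finitely many affine hyperplanes whose union is the complement of~\eqref{2.100}; by~\eqref{2.8} such a $\lambda^\ast$ automatically has pairwise distinct entries, all exceeding $\nu>|\kappa|$, so~\eqref{2.87} holds for it as well. Any $x^\ast\in\fL^{-1}(\lambda^\ast)\subset P_0$ then satisfies $(\delta\rho)(x^\ast)\neq0$, so the real-analytic function $\delta\rho$ on the connected real-analytic manifold $P_0$ is not identically zero; hence its zero set has empty interior and $P_0^{\vreg}$ is dense (and open). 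I expect the only real obstacle here to be precisely this passage from the non-smooth ordered eigenvalues to the globally real-analytic symmetric combinations $\delta,\rho$; once that is in place the rest is routine. (One could instead argue componentwise on the regular locus $P_0^{\reg}=\{\delta\neq0\}$, using that $\fL$ is an analytic submersion onto $C_2$ over $\fL^{-1}(C_2)$, but verifying non-triviality of $\rho\circ\fL$ on every connected component is exactly what the global analyticity of $\delta\rho$ circumvents.)
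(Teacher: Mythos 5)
Your proposal is correct and follows essentially the same route as the paper: both realize $P_0^\vreg$ as the non-vanishing locus of a real-analytic function built from symmetric polynomials in the eigenvalue data (your $\delta\rho$ is, up to harmless rewriting, the paper's $\cR\cS$ in \eqref{2.143}--\eqref{2.144}), both get connectedness of $P_0$ from Theorem \ref{thm:2.1} and the connected effective gauge group, and both get non-triviality from Theorem \ref{thm:2.3} via strongly regular points of $C_2$, concluding by the fact that the zero set of a non-zero analytic function on a connected analytic manifold has empty interior.
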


\begin{proof}
Let us first of all note that $P_0$ is a connected regular analytic
submanifold of $P$.
In fact, it is a regular (embedded) analytic submanifold of the analytic
manifold $P$ since the momentum map is analytic and zero is its regular
value (because the effectively acting gauge group \eqref{2.62} acts freely on
$P_0$). The connectedness follows from Theorem \ref{thm:2.1}, which implies that
$P_0$ is diffeomorphic to the product of $S$ \eqref{2.50} and the group \eqref{2.62},
and both are connected.

For any $Y\in \cG$ denote by $\{\ri \Lambda_a\}_{a=1}^{2n}$ the set of its
eigenvalues counted with multiplicities.
Then the following formulae
\begin{equation}
\cR(y,Y,V)=\prod_{\substack{a,b=1\\(a\neq b)}}^{2n}(\Lambda_a-\Lambda_b)
\prod_{a=1}^{2n}(\Lambda_a^2-\kappa^2),
\label{2.143}
\end{equation}
\begin{equation}
\cS(y,Y,V)=\prod_{\substack{a,b=1\\(a\neq b)}}^{2n}[(\Lambda_a-\Lambda_b)^2-4\mu^2]
\prod_{a=1}^{2n}\left[(\Lambda_a^2-\mu^2)(\Lambda_a^2-\nu^2)(\Lambda_a^2-(2\mu-\nu)^2)
\right].
\label{2.144}
\end{equation}
define analytic functions on $P_0$.
Indeed, $\cR$ and $\cS$ are symmetric polynomials in the eigenvalues
of $Y$, and hence can be expressed as polynomials in the coefficients
of the characteristic
polynomial of $Y$, which are polynomials in the matrix elements of $Y$.
The product $\cR \cS$ is also an analytic function on $P_0$, and
the subset $P_0^\vreg$, can be characterized as
\begin{equation}
P_0^\vreg = \{ x\in P_0 \mid \cR(x) \cS(x) \neq 0\}.
\label{2.145}
\end{equation}
It is clear from Theorem \ref{thm:2.3} that $\cR\cS$ does not vanish identically
on $P_0$.
Since the zero set of a non-zero analytic function on a connected analytic
manifold cannot
contain any open set, equation \eqref{2.145} implies that
$P_0^\vreg$ is a dense subset of $P_0$.
\end{proof}

Let $\overline{C}_2$ be the closure of the domain $C_2 \subset \R^n$.
Eventually, it will turn out that $\fL(P_0) = \overline{C}_2$.
For now, we wish to prove the following.

\begin{lemma}
\label{lem:2.14}
For every boundary point $\lambda^0\in\partial\overline{C}_2 $ there
exist an open ball $B(\lambda^0)\subset\R^n$ around $\lambda^0$ that
does not contain any strongly regular $\lambda$ which lies outside
$\overline{C}_2$ and belongs to $\fL(P_0)$.
\end{lemma}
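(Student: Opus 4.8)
The plan is to convert the assertion ``$\lambda\in\fL(P_0)$'' (for a strongly regular $\lambda$) into a system of sign conditions on explicit rational functions of $\lambda$, and then to show these conditions fail for strongly regular $\lambda$ sitting just outside $\overline{C}_2$. First I would record the necessary condition coming from Lemmas \ref{lem:2.5}, \ref{lem:2.6}, \ref{lem:2.7}: if $\lambda$ is strongly regular and $\lambda\in\fL(P_0)$, then (after a gauge transformation it lies in $P_1^\vreg$ since $\lambda$ is strongly regular) the matrix $\check A$ has the form \eqref{2.101} with some $F\in\C^{2n}$, $|F|^2=2n$, and by Lemma \ref{lem:2.7} the unitarity of $\check A$ forces, for every $c\in\N_n$, the pair $(\cF_c,\cF_{c+n})=(|F_c|^2,|F_{c+n}|^2)$ to equal either $(\cF_c^+,\cF_{c+n}^+)$ or $(\cF_c^-,\cF_{c+n}^-)$; since these are nonnegative and, at strongly regular $\lambda$, nonzero, there must exist $\epsilon_c\in\{+,-\}$ with $\cF_c^{\epsilon_c}(\lambda)>0$ and $\cF_{c+n}^{\epsilon_c}(\lambda)>0$. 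Using \eqref{2.102} together with the identities $1/w_c=\Pi_c^-$, $1/w_{c+n}=\Pi_c^+$ (a one‑line computation), where $\Pi_c^{\pm}=\prod_{b\neq c}\bigl(1\pm\tfrac{2\mu}{\lambda_c-\lambda_b}\bigr)\bigl(1\pm\tfrac{2\mu}{\lambda_c+\lambda_b}\bigr)$, formulae \eqref{2.105}--\eqref{2.106} become
\begin{gather*}
\cF_c^{+}=\Bigl(1-\tfrac{\nu}{\lambda_c}\Bigr)\Pi_c^{-},\qquad
\cF_{c+n}^{+}=\Bigl(1+\tfrac{\nu}{\lambda_c}\Bigr)\Pi_c^{+},\\
\cF_c^{-}=-\Bigl(1-\tfrac{2\mu-\nu}{\lambda_c}\Bigr)\Pi_c^{-},\qquad
\cF_{c+n}^{-}=-\Bigl(1+\tfrac{2\mu-\nu}{\lambda_c}\Bigr)\Pi_c^{+},
\end{gather*}
so that $\Pi_c^{-}\Pi_c^{+}=\prod_{b\neq c}\bigl(1-\tfrac{4\mu^2}{(\lambda_c-\lambda_b)^2}\bigr)\bigl(1-\tfrac{4\mu^2}{(\lambda_c+\lambda_b)^2}\bigr)$, whose sign is $(-1)^{N_c}$ with $N_c:=\#\{b\neq c:\ |\lambda_c-\lambda_b|<2\mu\ \text{or}\ \lambda_c+\lambda_b<2\mu\}$.

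Next I would fix $\lambda^0\in\partial\overline{C}_2$ and shrink an open ball $B(\lambda^0)$ so that on it: every inequality defining $\overline{C}_2$ that is \emph{not} tight at $\lambda^0$ remains strict; all components stay positive and distinct; for each $c<n$ one has $\lambda_c>\max\{\nu,|2\mu-\nu|\}$, $\lambda_c+\lambda_b>2\mu$ for all $b$, and $|\lambda_c-\lambda_b|>2\mu$ whenever $|b-c|\ge 2$; and for $c=n$ one has $\lambda_n+2\mu-\nu>0$ and $\lambda_n+\lambda_b>2\mu$ for $b<n$. All of this is possible because $\lambda^0\in\overline{C}_2$ forces $\lambda^0_c\ge\nu+2\mu(n-c)$ and $|\lambda^0_c-\lambda^0_b|\ge 2\mu|b-c|$. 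The point of these conditions is that on $B(\lambda^0)$ only \emph{adjacent} eigenvalues can be within $2\mu$ of one another, so for $c<n$ the integer $N_c$ simply counts how many of the two adjacent gaps $\lambda_{c-1}-\lambda_c$ and $\lambda_c-\lambda_{c+1}$ are $<2\mu$.

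Now suppose, for contradiction, that $\lambda\in B(\lambda^0)\setminus\overline{C}_2$ is strongly regular and $\lambda\in\fL(P_0)$; then $\lambda$ violates some defining inequality of $\overline{C}_2$. If it violates a gap inequality, let $j$ be the smallest index with $\lambda_j-\lambda_{j+1}<2\mu$; then $j<n$, the gap $\lambda_{j-1}-\lambda_j$ (if present) exceeds $2\mu$ by minimality of $j$ and strong regularity, and no non‑adjacent pair contributes, so $N_j=1$ and $\Pi_j^{-}\Pi_j^{+}<0$: the numbers $\Pi_j^{\pm}$ have opposite signs. But $\epsilon_j=+$ requires, since $1\pm\nu/\lambda_j>0$, that $\Pi_j^{-}>0$ and $\Pi_j^{+}>0$, while $\epsilon_j=-$ requires, since $1\pm(2\mu-\nu)/\lambda_j>0$, that $\Pi_j^{-}<0$ and $\Pi_j^{+}<0$; either case contradicts the opposite signs. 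If $\lambda$ violates no gap inequality, then $\lambda_n<\nu$ and $\lambda_b-\lambda_n>2\mu$ for every $b<n$; the ball conditions then give $\Pi_n^{-}>0$ and $\Pi_n^{+}>0$, so $\cF_n^{+}=(1-\nu/\lambda_n)\Pi_n^{-}<0$ rules out $\epsilon_n=+$ and $\cF_{2n}^{-}=-(1+(2\mu-\nu)/\lambda_n)\Pi_n^{+}<0$ rules out $\epsilon_n=-$, again a contradiction. Since $\lambda\notin\overline{C}_2$ forces one of these two situations, no strongly regular $\lambda\in\fL(P_0)$ lies in $B(\lambda^0)\setminus\overline{C}_2$, which is the claim.

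The routine parts are the algebraic identities $1/w_c=\Pi_c^{\mp}$ and the elementary estimates that make the ball choice legitimate. The only genuine idea — and the step I expect to be the main obstacle to get cleanly — is the observation that near a boundary point only consecutive eigenvalues can be $2\mu$-close; this is what collapses the otherwise messy sign bookkeeping of $\Pi_c^{-}\Pi_c^{+}$ to the parity statement that makes the choice of index $j$ work. A secondary nuisance is the index $c=n$, where the prefactor $1-\nu/\lambda_c$ can change sign, which is precisely why that case must be treated by hand rather than via the parity of $N_n$.
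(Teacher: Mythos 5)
Your proof is correct and takes essentially the same route as the paper's: the same dichotomy (first violated gap inequality versus $\lambda_n<\nu$), the same reliance on Lemma \ref{lem:2.7}'s two explicit candidate solutions, and the same sign analysis of the $w_k$ near the boundary (your $\Pi_c^-=1/w_c$, $\Pi_c^+=1/w_{c+n}$) to show that neither candidate pair is componentwise non-negative, contradicting $\cF_k=|F_k|^2\geq 0$. The only caveat is cosmetic: your general parity claim $\sgn\big(\Pi_c^-\Pi_c^+\big)=(-1)^{N_c}$ miscounts if some $b$ satisfies both the difference and the sum condition simultaneously, but since your ball conditions rule out the sum conditions altogether, the argument as applied is sound.
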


\begin{proof}
We start by noticing that for any boundary point
$\lambda^0\in\partial\overline{C}_2$
there is a ball $B(\lambda^0)$ centred at $\lambda^0$ such
that any strongly regular $\lambda\in B(\lambda^0)\setminus\overline{C}_2$
is subject to either of the following: (i) there is an index
$a\in\{1,\dots,n-1\}$ such that
\begin{equation}
\lambda_a-\lambda_{a+1}<2\mu
\quad\text{and}\quad
\lambda_b-\lambda_{b+1}>2\mu
\quad\forall\;b<a,
\label{2.146}
\end{equation}
or (ii) we have
\begin{equation}
\lambda_a-\lambda_{a+1}>2\mu,
\quad
a=1,\dots,n-1
\quad\text{and}\quad
 \lambda_n<\nu.
\label{2.147}
\end{equation}

Let us consider a strongly regular $\lambda\in B(\lambda^0)$
that falls into case (i) \eqref{2.146} and is so close to $C_2$
that we still have
\begin{equation}
\lambda_k-\lambda_{k+1}>\mu,\quad
\forall\;k\in\{1,\dots,n-1\}.
\label{2.148}
\end{equation}
It then follows that
\begin{equation}
\lambda_a-\lambda_b>2\mu,\quad\forall\;b>a+1,
\label{2.149}
\end{equation}
and
\begin{equation}
\lambda_a+\lambda_b>2\mu,\quad\forall\;b\in\{1,\dots,n\}.
\label{2.150}
\end{equation}
Inspection of the signs of $w_a(\lambda)$ and $w_{a+n}(\lambda)$
in \eqref{2.102} gives
\begin{equation}
w_a(\lambda)<0<w_{a+n}(\lambda).
\label{2.151}
\end{equation}
Since every boundary point $\lambda^0\in\partial\overline{C}_2$ satisfies
$\lambda_a^0>\lambda_n^0\geq \nu$ for all $a\in\{1,\dots,n-1\}$, we may choose
a small enough ball centred at $\lambda^0$ to ensure that for
$\lambda$ inside that ball the above inequalities as well as
$\lambda_a>\nu$ hold.
On account of $\lambda_a>\nu>0$ and $\mu>0$ we then have
\begin{equation}
1 - \frac{\nu}{\lambda_a}>0 \quad\text{and}\quad
-1-\frac{2\mu-\nu}{\lambda_a}<0.
\label{2.152}
\end{equation}
By combining \eqref{2.104} and \eqref{2.105} with \eqref{2.151} and
\eqref{2.152} we conclude that
\begin{equation}
\cF_a^+(\lambda)<0 \quad\text{and}\quad\cF_{a+n}^-(\lambda)<0.
\label{2.153}
\end{equation}
By Lemma \ref{lem:2.7}, these inequalities imply that
$\cF_a(\lambda)$ and $\cF_{a+n}(\lambda)$ cannot be both non-negative,
which contradicts the defining equation \eqref{2.93}.
This proves the claim in the case (i) \eqref{2.146}.

Let us consider a strongly regular $\lambda$ satisfying (ii)
\eqref{2.147}.
In this case we can verify that
\begin{equation}
1-\frac{\nu}{\lambda_n}<0,\quad
w_n(\lambda)>0,\quad w_{n+a}(\lambda)>0.
\label{2.154}
\end{equation}
Thus we see from \eqref{2.105} that $\cF_{2n}^+(\lambda)<0$.
Since the sum of the two components on the right hand side of \eqref{2.106} is negative, we also see
that at least one out of $\cF_{n}^-(\lambda)$ and $\cF_{2n}^-(\lambda)$ is negative.
Therefore equations \eqref{2.103} and \eqref{2.104} exclude the unitarity of $\check A$ \eqref{2.101}
in the case (ii) \eqref{2.147} as well.
\end{proof}

\begin{proposition}
\label{prop:2.15}
The $\lambda$-image of the constraint surface is contained in
$\overline{C}_2$, i.e. we have
\begin{equation}
\fL(P_0) \subseteq \overline{C}_2.
\label{2.155}
\end{equation}
As a consequence, $\fL_\red^{-1}(C_2)$ is dense in $P_\red$.
\end{proposition}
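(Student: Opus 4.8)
The plan is to combine the density of $P_0^{\vreg}$ in $P_0$ (Lemma~\ref{lem:2.13}), the connectedness of $P_0$, and the local exclusion statement of Lemma~\ref{lem:2.14} to obtain \eqref{2.155} by contradiction; the density of $\fL_\red^{-1}(C_2)$ in $P_\red$ will then be a soft corollary, once one notices that boundary points of $\overline{C}_2$ are never strongly regular.

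First I would record two elementary facts. (i)~Since $C_2$ is carved out of $\R^n$ by finitely many strict affine inequalities whose closures define $\overline{C}_2$, the set $C_2$ is a nonempty open convex polyhedron, so $C_2=\operatorname{int}(\overline{C}_2)$ and hence $\overline{C}_2\setminus C_2=\partial\overline{C}_2$; in particular a sequence lying in $\R^n\setminus\overline{C}_2$ cannot accumulate inside $C_2$. (ii)~Every $\lambda^0\in\partial\overline{C}_2$ satisfies $\lambda^0_a-\lambda^0_{a+1}=2\mu$ for some $a$, or $\lambda^0_n=\nu$; in either case one of the conditions in \eqref{2.87}--\eqref{2.100} fails, so $\lambda^0$ is not strongly regular, i.e.\ $\fL^{-1}(\partial\overline{C}_2)\subseteq P_0\setminus P_0^{\vreg}$.

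Now for \eqref{2.155} I would assume, towards a contradiction, that $D:=\fL^{-1}(\R^n\setminus\overline{C}_2)$ is nonempty. It is open, and it is a proper subset of $P_0$ because $\fL^{-1}(C_2)\neq\emptyset$ by Theorem~\ref{thm:2.3}; since $P_0$ is connected (as recorded in the proof of Lemma~\ref{lem:2.13}), the boundary $\partial_{P_0}D=\overline{D}\setminus D$ is nonempty. By Lemma~\ref{lem:2.13} the set $D\cap P_0^{\vreg}$ is dense in the open set $D$, so $\overline{D\cap P_0^{\vreg}}=\overline{D}$; I pick $x_0\in\partial_{P_0}D$ and a sequence $x_k\in D\cap P_0^{\vreg}$ with $x_k\to x_0$. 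Writing $\lambda_k=\fL(x_k)$ and $\lambda^0=\fL(x_0)$, continuity of $\fL$ gives $\lambda_k\to\lambda^0$; moreover each $\lambda_k$ is strongly regular (since $x_k\in P_0^{\vreg}$) and lies outside $\overline{C}_2$ (since $x_k\in D$), whereas $\lambda^0\in\overline{C}_2$ (since $x_0\notin D$) and $\lambda^0\notin C_2$ by fact~(i), so $\lambda^0\in\partial\overline{C}_2$. Lemma~\ref{lem:2.14} then supplies a ball $B(\lambda^0)$ containing no strongly regular element of $\fL(P_0)$ outside $\overline{C}_2$, contradicting $\lambda_k\to\lambda^0$ with $\lambda_k\in B(\lambda^0)$ for large $k$. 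Hence $D=\emptyset$, which is \eqref{2.155}. For the final assertion, fact~(ii) gives $\fL^{-1}(C_2)=P_0\setminus\fL^{-1}(\partial\overline{C}_2)\supseteq P_0^{\vreg}$, a dense subset of $P_0$; since the orbit map $\pi\colon P_0\to P_\red=P_0/(G_+\times G_+)$ is an open continuous surjection and $\fL=\fL_\red\circ\pi$, so that $\fL^{-1}(C_2)=\pi^{-1}(\fL_\red^{-1}(C_2))$, the image $\fL_\red^{-1}(C_2)=\pi(\fL^{-1}(C_2))$ is dense in $P_\red$.

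I expect the one genuinely delicate point to be ensuring that the boundary point $\lambda^0$ extracted from the connectedness argument is approached by points that are \emph{simultaneously} strongly regular and outside $\overline{C}_2$ — without this, Lemma~\ref{lem:2.14} cannot be invoked. This is precisely why one passes to $\overline{D\cap P_0^{\vreg}}$ before choosing $x_0$; the identity $\overline{D\cap P_0^{\vreg}}=\overline{D}$ uses only that $D$ is open together with the density supplied by Lemma~\ref{lem:2.13}. Everything else is routine point-set topology plus the two lemmas already established.
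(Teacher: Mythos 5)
Your proof is correct and follows essentially the same route as the paper: a contradiction at a boundary point of $\overline{C}_2$ obtained by combining the connectedness of $P_0$, the density of $P_0^\vreg$ (Lemma \ref{lem:2.13}), and the exclusion statement of Lemma \ref{lem:2.14}, with the density of $\fL_\red^{-1}(C_2)$ then following from $P_0^\vreg\subseteq\fL^{-1}(C_2)$ exactly as in the text. The only cosmetic difference is that you run the argument upstairs in $P_0$ via the open set $D=\fL^{-1}(\R^n\setminus\overline{C}_2)$ and strongly regular approximating sequences, whereas the paper works with a continuous curve in $\fL(P_0)$ after first upgrading Lemma \ref{lem:2.14} to exclude \emph{all} elements of $\fL(P_0)$ near $\partial C_2$ using the density of $\fL(P_0^\vreg)$.
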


\begin{proof}
Since $P_0^\vreg \subset P_0$ is dense and $\fL\colon P_0\to\R^n$ \eqref{2.72}
is continuous, $\fL(P_0^\vreg) \subset \fL(P_0)$ is dense.
Thus it follows from Lemma \ref{lem:2.14} that for any
$\lambda^0\in\partial C_2$ there exists
a ball around $\lambda^0$ that does not contain \emph{any} element of
$\fL(P_0)$ lying outside $\overline{C}_2$.

Suppose that \eqref{2.155} is not true, which means that there exists some
$\lambda^\ast\in \fL(P_0)\setminus \overline{C}_2$. Taking any element
$\hat \lambda \in \fL(P_0)$ that lies in $C_2$, it is must be possible
to connect $\lambda^\ast$ to $\hat \lambda$
by a continuous curve in $\fL(P_0)$, since $P_0$ is connected.
Starting from the point $\lambda^\ast$, any such continuous curve
must pass through some point of the boundary $\partial C_2$.
However, this is impossible since we know that
$\fL(P_0)\setminus \overline{C}_2$ does not contain any series that
converges to a point of $\partial C_2$.
This contradiction shows that \eqref{2.155} holds.

By \eqref{2.155} we have $P_0^\vreg \subset \fL^{-1}(C_2)$, and we know from Lemma \ref{lem:2.13} that
$P_0^\vreg \subset P_0$ is dense. These together entail that $\fL_\red^{-1}(C_2)\subset P_\red $ is dense.
\end{proof}

\subsubsection{Global characterization of the dual system}

We have seen that
\begin{equation}
P_0^\vreg \subset\fL^{-1}(C_2) \subset P_0
\label{2.156}
\end{equation}
is a chain of dense open submanifolds.
These project onto dense open submanifolds of $P_\red$ and
their images under the map $\fL$ \eqref{2.72} are
dense subsets of $\fL(P_0)=\fL_\red(P_\red)$:
\begin{equation}
\fL(P_0^\vreg) \subset C_2 \subset \fL(P_0).
\label{2.157}
\end{equation}

Now introduce the set
\begin{equation}
\C^n_{\neq }= \{ z\in \C^n \mid \prod_{k=1}^n z_k \neq 0\}.
\label{2.158}
\end{equation}
The parametrization
\begin{equation}
z_j=\sqrt{\lambda_j-\lambda_{j+1}-2\mu}\prod_{a=1}^j e^{\ri\vartheta_a},
\,\,\, j=1,\dots, n-1, \qquad
z_n=\sqrt{\lambda_n-\nu}\prod_{a=1}^n e^{\ri\vartheta_a}
\label{2.159}
\end{equation}
provides a diffeomorphism between $C_2 \times \T^n$ and $\C^n_{\neq }$.
Thus we can view $z\in \C^n_{\neq}$ as a variable parametrizing
$C_2 \times \T^n$ that corresponds to the semi-global cross-section $\tilde S^0$
by Theorem \ref{thm:2.3}. Below, we shall exhibit a \emph{global cross-section} in $P_0$,
which will be diffeomorphic to $\C^n$. In other words, the `semi-global'
model of the dual systems will be completed into a global model by allowing
the zero value for the complex variables $z_k$. This completion results
from the symplectic reduction automatically.

First of all, let us note that the inverse of the parametrization \eqref{2.159} gives
\begin{equation}
\lambda_k(z)=\nu+2(n-k)\mu+\sum_{j=k}^nz_j\bar z_j,\quad k=1,\dots,n,
\label{2.160}
\end{equation}
which extend to smooth functions over $\C^n$.
The range of the extended map $z \mapsto (\lambda_1, \dots, \lambda_n)$
is the closure $\overline{C}_2$ of the polyhedron $C_2$.
The variables $e^{\ri \vartheta_k}$ are well-defined only over
$\C^n_{\neq }$, where the parametrization \eqref{2.159} entails the equality
\begin{equation}
\sum_{k=1}^nd\lambda_k\wedge d\vartheta_k
= \ri\sum_{k=1}^ndz_k \wedge d\bar z_k.
\label{2.161}
\end{equation}
An easy inspection of the formulae \eqref{2.75} shows that the
functions $f_a$ can be recast as
\begin{equation}
f_k(\lambda, e^{\ri \vartheta}) = \vert z_k \vert g_k(z),
\quad
f_{n+k}(\lambda, e^{\ri \vartheta})
= e^{\ri \vartheta_k} \vert z_{k-1}\vert g_{n+k}(z),
\qquad k=1,\dots, n,\,\,\, z_0=1,
\label{2.162}
\end{equation}
with uniquely defined functions $g_1(z),\dots, g_{2n}(z)$
that extend to smooth (actually real-analy\-tic) positive functions on $\C^n$.
Note that these functions depend on $z$ only through $\lambda(z)$, i.e.
one has
\begin{equation}
g_a(z) = \eta_a(\lambda(z)),\quad a=1,\dots,2n,
\label{2.163}
\end{equation}
with suitable functions $\eta_a$ that one could display explicitly.
The absolute values $\vert z_k \vert$ that appear in \eqref{2.162} are not
smooth at $z_k=0$, and the phases
$e^{\ri \vartheta_k}$ are not well-defined there.
The crux is that both of these `troublesome features' can be
removed by applying suitable gauge transformations to the elements
of the cross-section $\tilde S^0$ \eqref{2.81}.
To demonstrate this, we define $m=m(e^{\ri \vartheta})\in Z_{G_+}(\cA)$ by
\begin{equation}
m_k(e^{\ri \vartheta})=\prod_{j=1}^k e^{-\ri \vartheta_j},\quad k=1,\dots,n.
\label{2.164}
\end{equation}
Conforming with \eqref{2.37}, we also set $m_{k+n}=m_k$.
Then the gauge transformation by $(m,m)\in G_+ \times G_+$ operates on the
$\C^{2n}$-valued vector $f(\lambda, e^{\ri \vartheta})$ and on the matrix
$\check A(\lambda, e^{\ri \vartheta})$ according to
\begin{equation}
f(\lambda, e^{\ri \vartheta}) \to
m(e^{\ri \vartheta}) f(\lambda, e^{\ri \vartheta}) \equiv \phi(z),
\qquad
\check A(\lambda, e^{\ri \vartheta)} \to m(e^{\ri \vartheta})
\check A(\lambda, e^{\ri \vartheta}) m(e^{\ri \vartheta})^{-1} \equiv \tilde A(z),
\label{2.165}
\end{equation}
which defines the functions $\phi(z)$ and $\tilde A(z)$ over $\C^n_{\neq}$.
The resulting functions have the form
\begin{equation}
\phi_k(z)= \bar z_k g_k(z), \quad
\phi_{n+k}(z) = \bar z_{k-1} g_{n+k}(z), \quad k=1,\dots, n,
\label{2.166}
\end{equation}
and
\begin{equation}
\tilde{A}_{a,b}(z)
=-\frac{2\mu\bar{z}_az_{b-1}g_a(z)g_{n+b}(z)}
{\lambda_a(z)-\lambda_b(z)-2\mu}, \quad 1\leq a,b \leq n,
\label{2.167}
\end{equation}
\begin{equation}
\tilde{A}_{a,n+b}(z)
=-\frac{2\mu\bar{z}_az_bg_a(z)g_b(z)}
{\lambda_a(z)+\lambda_b(z)-2\mu}
+\delta_{a,b}\frac{\mu -\nu}{\lambda_a(z)-\mu},
\label{2.168}
\end{equation}
\begin{equation}
\tilde{A}_{n+a,b}(z)
=\frac{2\mu\bar{z}_{a-1}z_{b-1}g_{n+a}(z)g_{n+b}(z)}
{\lambda_a(z)+\lambda_b(z)+2\mu}
-\delta_{a,b}\frac{\mu-\nu}{\lambda_a(z)+\mu},
\label{2.169}
\end{equation}
\begin{equation}
\tilde{A}_{n+a,n+b}(z)
=\frac{2\mu\bar{z}_{a-1}z_b g_{n+a}(z)g_b(z)}
{\lambda_a(z)-\lambda_b(z)+2\mu}.
\label{2.170}
\end{equation}
Now the important point is that, as is easily verified, the apparent
singularities coming from vanishing denominators in $\tilde A$ all cancel,
and both $\phi(z)$ and $\tilde A(z)$ extend to smooth (actually real-analytic) functions on
the whole of $\C^n$.
In particular, note the relation
\begin{equation}
\tilde{A}_{k,k+1}(z)= \tilde A_{k+n+1,k+n}(z) = - 2\mu g_k(z) g_{k+n+1}(z),
\quad k=1,\dots, n-1.
\label{2.171}
\end{equation}
Corresponding to \eqref{2.77}, we also have the matrix $\tilde B(z)\equiv-(h(\lambda(z))
\tilde A(z)h(\lambda(z)))^\dagger$.
This is smooth over $\C^n$ since both $\tilde A(z)$ and $h(\lambda(z))$ \eqref{2.69} are smooth.
It follows from their defining equations that the induced gauge transformations
of $y(\lambda, e^{\ri \vartheta})$ \eqref{2.79} and $V(\lambda, e^{\ri \vartheta})$ \eqref{2.80}
are given by
\begin{equation}
y(\lambda, e^{\ri \vartheta}) \to m(e^{\ri \vartheta})
y(\lambda, e^{\ri \vartheta}) m(e^{\ri \vartheta})^{-1} \equiv \tilde y(z),
\label{2.172}
\end{equation}
and
\begin{equation}
V(\lambda, e^{\ri \vartheta}) \to m(e^{\ri \vartheta}) V(\lambda, e^{\ri \vartheta})
= \tilde y(z) h(\lambda(z)) \phi(z)
\equiv \tilde V(z).
\label{2.173}
\end{equation}
Since $\tilde y(z)$ is a uniquely defined smooth function of $\tilde B(z)$, both $\tilde y(z)$ and
$\tilde V(z)$ are smooth functions on the whole of $\C^n$.

After these preparations, we are ready to state the main result of this chapter.

\begin{theorem}
\label{thm:2.16}
By using the above notations, consider the set
\begin{equation}
\tilde{S}=\{(\tilde{y}(z),\ri h(\lambda(z))\Lambda(\lambda(z))h(\lambda(z))^{-1},
\upsilon^\ell_{\mu, \nu}(\tilde V(z)),\upsilon^r)\mid z\in\C^n\,\}.
\label{2.174}
\end{equation}
This set defines a global cross-section for the $G_+\times G_+$-action
on the constraint surface $P_0$.
The parametrization of the elements of $\tilde{S}$ by $z\in\C^n$ gives
rise to a symplectic diffeomorphism between $(P_\red,\Omega_\red)$ and
$\C^n$ equipped with the Darboux form $\ri\sum_{k=1}^n dz_k\wedge d\bar z_k$.
The spectral invariants of the `global RSvD Lax matrix'
\begin{equation}
\tilde L(z)\equiv h(\lambda(z))\tilde A(z)h(\lambda(z))
\label{2.175}
\end{equation}
yield commuting Hamiltonians on $\C^n$ that represent the reductions of
the Hamiltonians spanning the Abelian Poisson algebra $\fQ^1$ \eqref{2.32}.
\end{theorem}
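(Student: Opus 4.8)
The plan is to identify $\tilde S$ as the gauge transform of the semi-global cross-section $\tilde S^0$ and then to promote Theorem \ref{thm:2.3} to a global statement by a density-and-properness argument, with the symplectic form and the Lax matrix carried along for free. Write $\tilde\sigma\colon\C^n\to P$ for the map sending $z$ to the point of $\tilde S$ labelled by it. By the construction preceding the theorem, $\tilde A(z)$, $\tilde B(z)$, $\tilde y(z)$ and $\tilde V(z)$ all extend to smooth functions on $\C^n$, so $\tilde\sigma$ is smooth on all of $\C^n$; and on the dense open subset $\C^n_{\neq}$ it differs from the parametrization of $\tilde S^0$ only by the gauge transformation $(m(e^{\ri\vartheta}),m(e^{\ri\vartheta}))\in G_+\times G_+$ of \eqref{2.164}--\eqref{2.165}. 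Since $\tilde\sigma(\C^n_{\neq})\subset P_0$ by Theorem \ref{thm:2.3} and $P_0=J^{-1}(0)$ is closed, continuity forces $\tilde\sigma(\C^n)\subset P_0$. Hence $\tilde\sigma$ descends to a smooth map $\rho\colon\C^n\to P_\red$, and because the pull-back of $\Omega$ to $P_0$ equals the pull-back of $\Omega_\red$ along the quotient map, we get that the pull-back $\tilde\sigma^\ast\Omega$ coincides with $\rho^\ast\Omega_\red$. By \eqref{2.83}, combined with the change of variables \eqref{2.159} and the identity \eqref{2.161}, this form equals $\ri\sum_{k=1}^n dz_k\wedge d\bar z_k$ on the dense set $\C^n_{\neq}$, hence on all of $\C^n$ by continuity. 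In particular $\rho^\ast\Omega_\red$ is non-degenerate, so $\rho$ is an immersion, and as $\dim\C^n=2n=\dim P_\red$ it is a local diffeomorphism.

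Next I would show $\rho$ is proper. The $Y$-component of $\tilde\sigma(z)$ is $\ri h(\lambda(z))\Lambda(\lambda(z))h(\lambda(z))^{-1}$, so by \eqref{2.71} and \eqref{2.68} one has $\fL(\tilde\sigma(z))=\lambda(z)$; then \eqref{2.160} gives $|z|^2=\lambda_1(z)-\nu-2(n-1)\mu$. Thus on the preimage of any compact $K\subset P_\red$ the function $\lambda_1$ is bounded by $\sup_K(\fL_\red)_1<\infty$, so $\rho^{-1}(K)$ is bounded in $\C^n$; it is also closed since $\rho$ is continuous, hence compact. A proper local diffeomorphism is open and closed, so its image is a non-empty clopen subset of the connected manifold $P_\red$ (connected by Lemma \ref{lem:2.13}, or by Theorem \ref{thm:2.1}), hence all of $P_\red$; moreover the number of preimages of a point is finite and locally constant, so globally constant. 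Over the dense connected open set $\fL_\red^{-1}(C_2)$ this number is $1$: indeed $z\notin\C^n_{\neq}$ forces $\lambda(z)\in\partial\overline{C}_2$ by \eqref{2.160}, so $\rho^{-1}(\fL_\red^{-1}(C_2))=\C^n_{\neq}$, and on $\C^n_{\neq}$ the map $\rho$ coincides with the diffeomorphism of Theorem \ref{thm:2.3} precomposed with \eqref{2.159}. Therefore $\rho$ is a global diffeomorphism $\C^n\to P_\red$, which at once shows that the parametrization of $\tilde S$ by $z\in\C^n$ is injective and that $\tilde S$ meets every $G_+\times G_+$-orbit in $P_0$ in exactly one point, i.e. $\tilde S$ is a global cross-section and $(P_\red,\Omega_\red)\simeq(\C^n,\ri\sum_{k=1}^n dz_k\wedge d\bar z_k)$.

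For the Lax matrix statement, I would observe that the unreduced Lax matrix $L=-y^{-1}CyC$ of \eqref{2.66} transforms as $L(mym^{-1})=mL(y)m^{-1}$ for $m\in Z$, since $m$ commutes with $C$; together with $L(y(\lambda,\vartheta))=h(\lambda)\check A(\lambda,\vartheta)h(\lambda)$ from Remark \ref{rem:2.4}, the fact that $m(e^{\ri\vartheta})$ commutes with $h(\lambda)$, and \eqref{2.165}, this yields $\tilde L(z)=h(\lambda(z))\tilde A(z)h(\lambda(z))=L(\tilde y(z))$ on $\tilde S$. Hence the spectral invariants of $\tilde L(z)$ are exactly the restrictions to $\tilde S\simeq P_\red$ of the generators $\tilde\cH_k$ \eqref{2.65} of the Abelian algebra $\fQ^1$; consequently they generate $\fQ^1_\red$ and Poisson commute, which is the asserted structure.

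I expect the real work to sit not in this packaging but in the two inputs it leans on: the already-verified smooth extendability of $\tilde A(z),\tilde y(z),\tilde V(z)$ across the walls $z_k=0$ (cancellation of the apparent poles in \eqref{2.167}--\eqref{2.170} and smooth dependence of $\tilde y(z)$ on $\tilde B(z)$), and the density of $\fL_\red^{-1}(C_2)$ in $P_\red$ from Proposition \ref{prop:2.15}. Granting those, the one remaining delicate point is the properness of $\rho$ via the relation $|z|^2=\lambda_1(z)-\nu-2(n-1)\mu$ — this is precisely what makes the symplectic reduction ``automatically'' complete the semi-global model into the global one.
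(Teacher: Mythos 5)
Your proposal is correct, but the core of it runs along a genuinely different route than the paper. The shared part is the beginning (smoothness of $\tilde\sigma$ on $\C^n$, $\tilde\sigma(\C^n)\subset P_0$ by closedness of the constraint surface, and $\tilde\sigma^\ast\Omega=\ri\sum_k dz_k\wedge d\bar z_k$ by continuity from $\C^n_{\neq}$) and the end (the Lax matrix statement, where your identity $\tilde L(z)=L(\tilde y(z))$ should be stated as proved on $\C^n_{\neq}$ and extended by smoothness, exactly as the restriction interpretation in the paper). Where you diverge is the proof that $\tilde S$ is a global cross-section: the paper establishes injectivity of $\tilde\sigma$ and the at-most-one-point-per-orbit property by inspecting explicit matrix entries of $\tilde A$ (nonvanishing of $\tilde A_{a,1}$ and $\tilde A_{a,a+1}$, plus the uniqueness properties of the decompositions), and proves that $\tilde S$ meets every orbit by a sequential compactness argument in $G_+\times G_+\times\T^n$ applied to a sequence in $\fL^{-1}(C_2)$ converging to a given point of $P_0$ — this is where the density result (Proposition \ref{prop:2.15}) is genuinely used. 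You instead descend to $\rho=\pi_{\mathrm{quot}}\circ\tilde\sigma\colon\C^n\to P_\red$, use nondegeneracy of $\rho^\ast\Omega_\red$ to get a local diffeomorphism, the relation $|z|^2=\lambda_1(z)-\nu-2(n-1)\mu$ together with continuity of $\fL_\red$ to get properness, and then the covering-map argument with one sheet over $\fL_\red^{-1}(C_2)$ (via Theorem \ref{thm:2.3} and the change of variables \eqref{2.159}) to conclude $\rho$ is a global diffeomorphism, which yields injectivity and the exactly-one-point-per-orbit property simultaneously. This is a sound and arguably slicker argument; note that it needs only nonemptiness of $\fL_\red^{-1}(C_2)$ rather than its density, so Proposition \ref{prop:2.15} is not actually required by your packaging, while it does lean more heavily than the paper on the previously established facts that $P_\red$ is a connected Hausdorff smooth symplectic manifold with $\pi_{\mathrm{quot}}$ a submersion satisfying $\iota^\ast\Omega=\pi_{\mathrm{quot}}^\ast\Omega_\red$ (available from Theorem \ref{thm:2.1} and Lemma \ref{lem:2.13}). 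The trade-off: the paper's route is elementary and explicit at the level of the constraint surface and gauge orbits and gives concrete gauge-fixing information; yours buys economy and conceptual clarity by outsourcing the topology to the proper-\'etale/covering-map machinery and the single properness estimate coming from \eqref{2.160}.
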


\begin{proof}
Let us denote by
\begin{equation}
z \mapsto \tilde \sigma(z)
\label{2.176}
\end{equation}
the assignment of the element of $\tilde S$ to $z\in \C^n$ as given in \eqref{2.174}.
The map $\tilde\sigma\colon\C^n\to P$ \eqref{2.26} is smooth (even real-analytic)
and we have to verify that it possesses the following properties.
First, $\tilde\sigma$ takes values in the constraint surface $P_0$.
Second, with $\Omega$ in \eqref{2.26},
\begin{equation}
\tilde\sigma^\ast(\Omega)=\ri\sum_{k=1}^ndz_k\wedge d\bar z_k.
\label{2.177}
\end{equation}
Third, $\tilde\sigma$ is injective.
Fourth, the image $\tilde S$ of $\tilde\sigma$ intersects every orbit of $G_+ \times G_+$
in $P_0$ in precisely one point.

Let us start by recalling from Theorem \ref{thm:2.3} the map
$(\lambda, \theta)\mapsto \tilde \sigma_0(\lambda, \theta)$
that denotes the assignment of the general element of $\tilde S^0$ \eqref{2.81} to
$(\lambda, \theta) \in C_2 \times \T^n$, where now we defined
\begin{equation}
\theta= e^{\ri \vartheta}.
\label{2.178}
\end{equation}
Then the first and second properties of $\tilde \sigma$ follow since we have
\begin{equation}
\tilde \sigma(z(\lambda, \theta)) = \Phi_{(m(\theta), m(\theta))}
\left(\tilde \sigma_0(\lambda, \theta)\right),
\quad
\text{for all}\quad (\lambda, \theta) \in C_2 \times \T^n.
\label{2.179}
\end{equation}
We know that $\tilde \sigma_0(\lambda, \theta) \in P_0$ for all $(\lambda, \theta)\in
C_2 \times \T^n$, which implies the first property since $\tilde\sigma$ is continuous and
$P_0$ is a closed subset of $P$. The restriction of the pull-back \eqref{2.177}
to $\C^n_{\neq}$ is easily calculated using the parametrization $(\lambda, \theta) \mapsto
z(\lambda, \theta)$ and using that by Theorem \ref{thm:2.3}
$\tilde \sigma_0^\ast(\Omega)=\sum_{k=1}^nd\lambda_k \wedge d\vartheta_k$.
Indeed, this translates into \eqref{2.177} restricted to $\C^n_{\neq}$,
which implies the claimed
equality because $\tilde \sigma^\ast(\Omega)$ is smooth on $\C^n$.

Before continuing, we remark that the map $(\lambda, \theta) \mapsto z(\lambda, \theta)$
naturally extends to a continuous map on the closed domain $\overline{C}_2 \times \T^n$ and its
`partial inverse' $z\mapsto\lambda(z)$ extends to a smooth map $\C^n\to\overline{C}_2$.
We will use these extended maps without further notice in what follows.
(The extended map $(\lambda, \theta) \mapsto z(\lambda, \theta)$ is not
differentiable at the points for which $\lambda\in\partial C_2$.)

In order to show that $\tilde \sigma$ is injective, consider the equality
\begin{equation}
\tilde \sigma(z) = \tilde \sigma(\zeta)\quad
\text{for some}\quad z, \zeta \in \C^n.
\label{2.180}
\end{equation}
Looking at the `second component' of this equality according to \eqref{2.174} we see
that $\lambda(z) = \lambda(\zeta)$. Then the first component of the equality implies
$\tilde A(z) = \tilde A(\zeta)$. The special case $\tilde A_{a,1}(z) = \tilde A_{a,1}(\zeta)$
of this equality gives
\begin{equation}
\frac{\bar{z}_a \eta_a(\lambda(z))\eta_{n+1}(\lambda(z))}
{\lambda_a(z)-\lambda_1(z)-2\mu}
=\frac{\bar{\zeta}_a \eta_a(\lambda(\zeta))\eta_{n+1}(\lambda(\zeta))}
{\lambda_a(\zeta)-\lambda_1(\zeta)-2\mu}, \qquad 1\leq a \leq n.
\label{2.181}
\end{equation}
We know that the factors multiplying $\bar z_a$ and $\bar \zeta_a$ are
equal and non-zero (actually negative). Thus $z=\zeta$ follows,
establishing the claimed injectivity.

Next we prove that no two different element of $\tilde S$ are gauge equivalent
to each other, i.e. $\tilde S$ can intersect any orbit of $G_+ \times G_+$
at most in one point.
Suppose that
\begin{equation}
\Phi_{(g_L, g_R)}( \tilde \sigma(z)) = \tilde \sigma(\zeta)
\label{2.182}
\end{equation}
for some $(g_L, g_R) \in G_+ \times G_+$ and $z, \zeta \in \C^n$.
We conclude from the second component of this equality that $\lambda(z)= \lambda(\zeta)$.
Because $\lambda(z)\in\overline{C}_2$ holds, $\lambda(z)$ is
regular in the sense that it satisfies \eqref{2.87}. Thus we can also conclude
from the second component of the equality \eqref{2.182} that $g_R$ belongs to
the Abelian subgroup $Z$ of $G_+$ given in \eqref{2.37}.
Then we infer from the first component
\begin{equation}
g_L\tilde y(z)g_R^{-1}=\tilde y(\zeta)
\label{2.183}
\end{equation}
of the equality \eqref{2.182} that $g_L = g_R$.
We here used that $\tilde A(\zeta)$ can be represented in the form \eqref{2.41} with
strict inequalities in \eqref{2.39}, which holds since $S$ \eqref{2.51} is a global cross-section.
Now denote $ g_L = g_R = e^{\ri \xi} \in Z$ referring to \eqref{2.37}.
Then we have $e^{\ri \xi} \tilde A(z) e^{-\ri \xi} = \tilde A(\zeta)$,
and in particular
\begin{equation}
e^{\ri x_a} \tilde A_{a, a+1}(z)e^{-\ri x_{a+1}} = \tilde A_{a,a+1}(\zeta),
\quad
\forall a=1,\dots, n-1.
\label{2.184}
\end{equation}
By using \eqref{2.162} and \eqref{2.171}
\begin{equation}
\tilde{A}_{a,a+1}(z)
=-2\mu \eta_a(\lambda(z))\eta_{n+a+1}(\lambda(z)) \neq 0,
\label{2.185}
\end{equation}
and thus we obtain from $\lambda(z) = \lambda(\zeta)$ that $ e^{\ri \xi}$ must be equal to a multiple
of the identity element of $G_+$.
Hence we have established that $\tilde \sigma(z) = \tilde \sigma (\zeta)$ is implied by
\eqref{2.182}.

It remains to demonstrate that $\tilde S$ intersects every gauge orbit in $P_0$.
We have seen previously that $\fL^{-1}(C_2)$ is dense in $P_0$
and $\tilde S^0$ \eqref{2.81} is a cross-section for the gauge action in $\fL^{-1}(C_2)$.
These facts imply that for any element $x \in P_0$ there exists a series
$x(k) \in \fL^{-1}(C_2)$, $k\in \N$, such that
\begin{equation}
\lim_{k\to \infty}(x(k)) =x,
\label{2.186}
\end{equation}
and there also exist series $(g_L(k), g_R(k))\in G_+ \times G_+$ and $(\lambda(k),\theta(k))
\in C_2 \times \T^n$ such that
\begin{equation}
x(k) = \Phi_{(g_L(k), g_R(k))}\left(\tilde \sigma_0( \lambda(k), \theta(k))\right).
\label{2.187}
\end{equation}
Since $\fL\colon P_0\to\R^n$ is continuous, we have
\begin{equation}
\fL(x) = \lim_{k\to \infty} \fL(x(k)) = \lim_{k\to \infty} \lambda(k).
\label{2.188}
\end{equation}
This limit belongs to $\overline{C}_2$ and we denote it by $\lambda^\infty$.
The non-trivial case to consider is when $\lambda^\infty$ belongs to the boundary
$\partial C_2$.
Now, since $G_+ \times G_+ \times \T^n$ is compact, there exists a
convergent subseries
\begin{equation}
(g_L(k_i), g_R(k_i), \theta(k_i)),
\quad
i \in \N,
\label{2.189}
\end{equation}
of the series $(g_L(k), g_R(k), \theta(k))$.
We pick such a convergent subseries and denote its limit as
\begin{equation}
(g_L^\infty, g_R^\infty, \theta^\infty)=\lim_{i\to \infty} (g_L(k_i), g_R(k_i), \theta(k_i)).
\label{2.190}
\end{equation}
Then we define $z^\infty \in \C^n$ by
\begin{equation}
z^\infty = \lim_{i\to \infty} z(\lambda(k_i), \theta(k_i)) = z(\lambda^\infty, \theta^\infty).
\label{2.191}
\end{equation}
Since $z \mapsto \tilde \sigma(z)$ is continuous, we can write
\begin{equation}
\tilde \sigma(z^\infty) = \lim_{i\to \infty} \tilde \sigma( z(\lambda(k_i), \theta(k_i)))=
\lim_{i\to \infty} \Phi_{(m(\theta(k_i)), m(\theta(k_i)))}
\left(\tilde \sigma_0(\lambda(k_i), \theta(k_i))\right),
\label{2.192}
\end{equation}
where $m(\theta)$ is defined by \eqref{2.164}, with $\theta = e^{\ri \vartheta}$.
By combining these formulae, we finally obtain
\begin{equation}
\begin{split}
x&=\lim_{i\to \infty} \Phi_{(g_L(k_i),g_R(k_i))}
\left(\tilde \sigma_0(\lambda(k_i), \theta(k_i))\right)\\
&=\lim_{i\to \infty} \Phi_{(g_L(k_i) m(\theta(k_i))^{-1},g_R(k_i)m(\theta(k_i))^{-1} )}
\left(\tilde \sigma( z(\lambda(k_i), \theta(k_i)))\right)\\
&=\Phi_{( g_L^\infty m(\theta^\infty)^{-1}, g_R^\infty m(\theta^\infty)^{-1})}
(\tilde \sigma (z^\infty)).
\end{split}
\label{2.193}
\end{equation}
Therefore $\tilde S$ is a global cross-section in $P_0$.

The final statement of Theorem \ref{thm:2.16} about the global RSvD Lax matrix \eqref{2.175} follows since
$\tilde L$ is just the restriction of the `unreduced Lax matrix' $L$ of \eqref{2.66}
to the global cross-section $\tilde S$, which represents a model of the full reduced phase space
$P_\red$.
\end{proof}

\section{Applications}
\label{sec:2.4}

\subsection{On the equilibrium position of the Sutherland system}
\label{subsec:2.4.1}

Since the Sutherland Lax matrix is diagonalizable, that is
\begin{equation}
Y(q,p)\quad\text{and}\quad\ri\Lambda(\lambda)
=\ri\,\diag(\lambda,-\lambda)
\label{2.194}
\end{equation}
are similar matrices, we have the following for the Sutherland
Hamiltonians $H_1,\dots,H_n$
\begin{equation}
H_k(q,p)=\frac{1}{4k}\tr((-\ri Y(q,p))^{2k})
=\frac{1}{4k}\tr(\Lambda(\lambda)^{2k})
=\frac{1}{2k}\sum_{j=1}^n\lambda_j^{2k}=h_k(\lambda),
\quad k=1,\dots,n.
\label{2.195}
\end{equation}
In particular, for the main Hamiltonian $H_1(q,p)$ the above formula reads as
\begin{equation}
H(q,p)=H_1(q,p)=h_1(\lambda)
=\frac{1}{2}(\lambda_1^2+\dots+\lambda_n^2).
\label{2.196}
\end{equation}
It is obvious that $h_1$ has a global minimum on
$\overline{C_2}$ and
\begin{equation}
\min_{(q,p)\in C_1\times\R^n} H(q,p)
=\min_{\lambda\in\overline{C_2}}h_1(\lambda)
=h_1(\lambda^0),
\label{2.197}
\end{equation}
where $\lambda^0$ is the point in $\overline{C_2}$ with the smallest
(Euclidean) norm. Clearly, $\lambda^0$ is the boundary point of $C_2$
at which
\begin{equation}
\lambda^0_a-\lambda^0_{a+1}-2\mu=0,\quad a=1,\dots,n-1\quad
\lambda^0_n-\nu=0,
\label{2.198}
\end{equation}
hold, i.e.
\begin{equation}
\lambda_a^0=(n-a)2\mu+\nu,\quad a=1,\dots,n.
\label{2.199}
\end{equation}
In terms of the ``oscillator variables'' $z_1,\dots,z_n\in\C$ this means that
the equilibrium point $(q,p)=(q^e,0)$ of the Sutherland system corresponds to
$z_1=\dots=z_n=0$. In fact, each function $H_k$ ($k=1,\dots, n$)
possesses a global minimum at $z=0$.

Now, remember that the matrices
\begin{equation}
-(h(\lambda)\check{A}(\lambda,e^{\ri\vartheta})h(\lambda))^\dag\quad\text{and}\quad e^{\ri Q(q)}
\label{2.200}
\end{equation}
are similar. By using this and the fact that $h(\lambda)$ and $m$
commute for any $m\in Z_{G_+}(\cA)$, one concludes that
(with the special choice $m=m(e^{\ri\vartheta})$)
\begin{equation}
\sigma(-(h\hat{A}h)^\dag(z))=\sigma(e^{2\ri Q(q)})
=\{e^{2\ri q_1},\dots,e^{2\ri q_n},
e^{-2\ri q_1},\dots,e^{-2\ri q_n}\},
\label{2.201}
\end{equation}
where $\sigma(M)$ denotes the spectrum of $M$. In particular, for $z=0$ the above
spectral identity provides a useful method to determine the equilibrium coordinates
$q^e$. An interesting characterization of this equilibrium point in terms of the
$(q,p)$ variables can be found in \cite{CS02}.

\subsection{Maximal superintegrability of the dual system}
\label{subsec:2.4.2}

We have seen that the `unreduced RSvD Hamiltonians'
\begin{equation}
\tilde\cH(y,Y,V)=\frac{(-1)^k}{2k}\tr\big(y^{-1}CyC\big)^k,\quad
k=1,\dots,n
\label{2.203}
\end{equation}
take the following form in the `Sutherland gauge'
\begin{equation}
\tilde h_k(q)=\tilde\cH_k|_S=\tilde H_k\circ\cR=\frac{(-1)^k}{k}\sum_{j=1}^n\cos(2kq_j),\quad
k=1,\dots,n.
\label{2.204}
\end{equation}
Note that the RSvD-type Hamiltonians depend \emph{only} on the `action
variables' $q_1,\dots,q_n$.
Our objective is to show that the RSvD-type dual model is maximally
superintegrable. In particular, the construction presented in \citepalias{AFG12}
will be taken out. Let us consider the $n\times n$ matrix
\begin{equation}
X_{a,b}=\frac{\partial \tilde h_a(q)}{\partial q_b},\quad
a,b=1,\dots,n.
\label{2.205}
\end{equation}
As a first step we verify that $X(q)$ is invertible at every $q\in C_1$.

\begin{proposition}
\label{prop:2.17}
For any $q\in C_1=\{q\in\R^n\mid\pi/2>q_1>\dots>q_n>0\}$
we have $\det X(q)\neq 0$.
\end{proposition}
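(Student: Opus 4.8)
The plan is to compute the matrix $X(q)$ explicitly from \eqref{2.204} and show that its determinant factors into a nonvanishing Vandermonde-type expression times a product of trigonometric functions that cannot vanish on $C_1$. From \eqref{2.204} we have $\tilde h_a(q) = \frac{(-1)^a}{a}\sum_{j=1}^n \cos(2aq_j)$, so
\begin{equation}
X_{a,b} = \frac{\partial \tilde h_a}{\partial q_b} = (-1)^{a+1} 2\sin(2aq_b),\quad a,b=1,\dots,n.
\label{Xprop}
\end{equation}
The prefactors $(-1)^{a+1}$ and $2$ pull out row-by-row and only affect $\det X(q)$ by a nonzero constant, so it suffices to show that the matrix with entries $\sin(2aq_b)$ is invertible for every $q\in C_1$.

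The key observation is that $\sin(2aq_b)$ is, for fixed $b$, a polynomial in $\cos(2q_b)$ times $\sin(2q_b)$: writing $U_{a-1}$ for the Chebyshev polynomial of the second kind, one has $\sin(2aq_b) = \sin(2q_b)\, U_{a-1}(\cos(2q_b))$. Hence the matrix $[\sin(2aq_b)]_{a,b=1}^n$ equals $M\,\mathrm{diag}(\sin(2q_1),\dots,\sin(2q_n))$, where $M_{a,b} = U_{a-1}(\cos(2q_b))$. Since $\{U_0,\dots,U_{n-1}\}$ is a basis of the space of polynomials of degree $<n$, the matrix $M$ is row-equivalent (by an invertible triangular change of basis with unit diagonal) to the Vandermonde matrix $[\cos(2q_b)^{a-1}]_{a,b}$, whose determinant is $\prod_{1\le b<c\le n}(\cos(2q_c)-\cos(2q_b))$. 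Therefore
\begin{equation}
\det X(q) = \text{(nonzero constant)}\cdot \prod_{j=1}^n \sin(2q_j) \cdot \prod_{1\le b<c\le n}\big(\cos(2q_c)-\cos(2q_b)\big).
\label{detX}
\end{equation}
Now I invoke the defining inequalities of $C_1$ in \eqref{2.2}: since $0<q_n<\dots<q_1<\pi/2$, each $2q_j\in(0,\pi)$, so $\sin(2q_j)>0$; and since $x\mapsto\cos(2x)$ is strictly decreasing on $(0,\pi/2)$, the values $\cos(2q_1),\dots,\cos(2q_n)$ are pairwise distinct, so every factor $\cos(2q_c)-\cos(2q_b)$ is nonzero. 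Hence $\det X(q)\neq 0$ for all $q\in C_1$.

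The only mild obstacle is bookkeeping: one must verify carefully that the passage from $[\sin(2aq_b)]$ to the Vandermonde matrix really is via a constant-determinant (unit upper-triangular) transformation — this follows because $U_{a-1}$ is monic of degree $a-1$ in $2\cos$, up to the standard normalization $U_{a-1}(t)=2^{a-1}t^{a-1}+\dots$, so the change of basis matrix from $\{(2\cos)^0,\dots\}$ to $\{U_0,\dots\}$ is triangular with nonzero (powers of $2$) diagonal, again only contributing a nonzero constant. Alternatively, and perhaps more cleanly, one can avoid Chebyshev polynomials entirely: expand $\sin(2aq_b)$ using $e^{2\ri aq_b}$, so that $[\sin(2aq_b)]$ is a linear combination of the two generalized Vandermonde matrices $[e^{2\ri aq_b}]$ and $[e^{-2\ri aq_b}]$; a short direct manipulation (factoring $e^{2\ri q_b}$ from column $b$ in the first and $e^{-2\ri q_b}$ in the second, then recognizing a polynomial Vandermonde in $e^{2\ri q_b}$ versus $e^{-2\ri q_b}$) yields precisely \eqref{detX} with the $\sin$ factors replaced by the equivalent expression. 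Either route makes the nonvanishing on $C_1$ transparent, and I would present whichever is shortest.
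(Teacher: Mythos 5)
Your proof is correct and follows essentially the same route as the paper: both reduce $\det X(q)$, after extracting the constants $(-1)^{a+1}2$, to a Vandermonde determinant in $\cos(2q_b)$ times $\prod_j\sin(2q_j)$, and conclude by monotonicity of $\cos$ on $(0,\pi/2)$. Your use of the Chebyshev relation $\sin(2aq_b)=\sin(2q_b)\,U_{a-1}(\cos(2q_b))$ is just a repackaging of the trigonometric identity \eqref{2.210} that the paper uses for its column operations, so the two arguments are the same in substance.
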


\begin{proof}
Let us first compute the matrix entries $X_{a,b}=X_{a,b}(q)$.
Simple differentiation shows that
\begin{equation}
X_{a,b}=(-1)^{a+1}2\sin(2aq_b),\quad
a,b=1,\dots,n.
\label{2.206}
\end{equation}
Since in each row contains the constants $(-1)^{a+1}2$ ($a=1,\dots,n$)
we have
\begin{equation}
\det(X_{a,b})=(-1)^{(1+1)+\dots+(n+1)}2^n\det(\sin 2aq_b)
=(-1)^{\tfrac{n(n+3)}{2}}2^n\det(\sin 2aq_b).
\label{2.207}
\end{equation}
By introducing the notation
\begin{equation}
\alpha_b=2q_b,\quad b=1,\dots,n,
\label{2.208}
\end{equation}
the above determinant takes a somewhat simpler form
\begin{equation}
\det(X_{a,b})=(-1)^{\tfrac{n(n+3)}{2}}2^n\det(\sin a\alpha_b).
\label{2.209}
\end{equation}
The trigonometric identity
\begin{equation}
2^r\cos^r\alpha\,\sin\alpha
=\sum_{s=0}^{\big\lceil\tfrac{r-1}{2}\big\rceil}
\bigg[{r\choose s}-{r\choose s-1}\bigg]\sin(r-2s+1)\alpha
\label{2.210}
\end{equation}
(which can be easily proven using de Moivre's formula) implies that
applying suitable column-operations on the determinant \eqref{2.209}, it can
be written as
\begin{equation}
\det(X_{a,b})=(-1)^{\tfrac{n(n+3)}{2}}2^n\det(2^{a-1}\cos^{a-1}\alpha_b\,\sin \alpha_b).
\label{2.211}
\end{equation}
Hence the problem can be reduced to the computation of a
Vandermonde-determinant
\begin{equation}
\begin{split}
\det(X_{a,b})
&=(-1)^{\tfrac{n(n+3)}{2}}2^{\tfrac{n(n+1)}{2}}
\prod_{b=1}^n\sin\alpha_b\det(\cos^{a-1}\alpha_b)\\
&=(-1)^{\tfrac{n(n+3)}{2}}2^{\tfrac{n(n+1)}{2}}
\prod_{b=1}^n\sin\alpha_b\prod_{1\leq b<c\leq n}(\cos\alpha_c-\cos\alpha_b).
\end{split}
\label{2.212}
\end{equation}
Now, by substituting back the $q$'s according to \eqref{2.208} we get
\begin{equation}
\det(X_{a,b}(q))
=(-1)^{\tfrac{n(n+3)}{2}}2^{\tfrac{n(n+1)}{2}}
\prod_{b=1}^n\sin2q_b\prod_{1\leq b<c\leq n}(\cos2q_c-\cos2q_b),
\label{2.213}
\end{equation}
which is an obviously non-vanishing function on $C_1$ due to monotonicity.
\end{proof}
Now, by referring to \citepalias{AFG12} and using the previous proposition
for any $\tilde H_k$ one can construct the functions mentioned in the
Discussion of \citepalias{FG14}
\begin{equation}
f_i(q,p)=\sum_{j=1}^np_j(X^{-1}(q))_{j,i},\quad i\in\{1,\dots,n\}\setminus\{k\}.
\label{2.214}
\end{equation}

\subsection{Equivalence of two sets of Hamiltonians}
\label{subsec:2.4.3}

In this subsection, we prove the equivalence of two complete sets of
Poisson commuting Hamiltonians of the (super)integrable rational $\BC_n$
Ruijsenaars-Schneider system. Specifically, the commuting
Hamiltonians constructed by van Diejen are shown to be linear combinations
of the Hamiltonians generated by the characteristic polynomial of the Lax
matrix obtained recently by Pusztai, and the explicit formula of this
invertible linear transformation is found.

\subsubsection{Hamiltonians due to van Diejen}

In \cite{vD94,vD95-2} the following complete set of Poisson commuting
Hamiltonians was given:
\begin{equation}
H_l^\vD(\lambda,\theta)
=\sum_{\substack{J\subset\{1,\dots,n\},\ |J|\leq l\\
\varepsilon_j=\pm 1,\ j\in J}}
\cos(\theta_{\varepsilon J})
V_{\varepsilon J;J^c}^{1/2}
V_{-\varepsilon J;J^c}^{1/2}
U_{J^c,l-|J|},\quad
l=1,\dots,n,
\label{2.215}
\end{equation}
with
\begin{equation}
\begin{split}
\theta_{\varepsilon J}&=\sum_{j\in J}\varepsilon_j\theta_j,\\
V_{\varepsilon J;K}&=\prod_{j\in J}w(\varepsilon_j\lambda_j)
\prod_{\substack{j,j'\in J\\j<j'}}
v^2(\varepsilon_j\lambda_j+\varepsilon_{j'}\lambda_{j'})
\prod_{\substack{j\in J\\k\in K}}v(\varepsilon_j\lambda_j+\lambda_k)
v(\varepsilon_j\lambda_j-\lambda_k),\\
U_{K,p}&=(-1)^p\sum_{\substack{I\subset K,\ |I|=p\\\varepsilon_i=\pm 1,\ i\in I}}
\bigg(\prod_{i\in I}w(\varepsilon_i\lambda_i)\prod_{\substack{i,i'\in I\\i<i'}}
v(\varepsilon_i\lambda_i+\varepsilon_{i'}\lambda_{i'})
v(-\varepsilon_i\lambda_i-\varepsilon_{i'}\lambda_{i'})\\
&\hspace{10em}\times\prod_{\substack{i\in I\\k\in K\setminus I}}
v(\varepsilon_i\lambda_i+\lambda_k)v(\varepsilon_i\lambda_i-\lambda_k)\bigg).
\end{split}
\label{2.216}
\end{equation}
We note that $J^c$ in \eqref{2.215} denotes the complementary set of $J$,
and the contribution to $H_l^\vD$ coming from $J=\emptyset$ is $U_{\emptyset^c,l}$.
The relatively simple form of $U_{K,p}$ above was found in \cite{vD95-2}.
Equation \eqref{2.215} makes sense for $l=0$, as well, giving $H_0^\vD\equiv 1$.
In the rational case the functions $v$ and $w$ take the following
form\footnote{The parameter $\beta$ appearing in \cite{vD94,vD95-2} can be introduced via replacing $\lambda,\theta,\mu,\nu,\kappa$ by $\beta^{-1}\lambda,\beta\theta,\beta\mu,\beta \nu,\beta\kappa$, respectively. In the convention of \cite{Pu13}, our $\mu$, $\theta$ and $q$ correspond to $2\mu$, $2\theta$ and $2q$.}
\begin{equation}
v(x)=\frac{x+\ri\mu}{x},\quad
w(x)=\bigg[\frac{x+\ri\nu}{x}\bigg]
\bigg[\frac{x+\ri\kappa}{x}\bigg].
\label{2.217}
\end{equation}
Up to irrelevant constants, $H_1^\vD$ reproduces the rational $\BC_n$ Ruijsenaars-Schneider
Hamiltonian
\begin{align}
H^\Pu(\lambda,\theta)=&\sum_{j=1}^n\cosh(\theta_j)
\bigg[1+\frac{\nu^2}{\lambda_j^2}\bigg]^{\tfrac{1}{2}}
\bigg[1+\frac{\kappa^2}{\lambda_j^2}\bigg]^{\tfrac{1}{2}}
\prod_{\substack{k=1\\(k\neq j)}}^n
\bigg[1+\frac{\mu^2}{(\lambda_j-\lambda_k)^2}\bigg]^{\tfrac{1}{2}}
\bigg[1+\frac{\mu^2}{(\lambda_j+\lambda_k)^2}\bigg]^{\tfrac{1}{2}}\nonumber\\
&+\frac{\nu\kappa}{\mu^2}\prod_{j=1}^n
\bigg[1+\frac{\mu^2}{\lambda_j^2}\bigg]
-\frac{\nu\kappa}{\mu^2}.
\label{2.218}
\end{align}
Indeed, one can check that $H_1^\vD=2(H^\Pu-n)$. Here $\mu,\nu,\kappa$ are real
parameters for which we impose the conditions $\mu\neq 0$, $\nu\neq 0$ and
$\nu\kappa\geq 0$. The generalised momenta $\theta=(\theta_1,\dots,\theta_n)$
run over $\R^n$ and the `particle positions' $\lambda=(\lambda_1,\dots,\lambda_n)$
vary in the Weyl chamber
\begin{equation}
\fc=\{x\in\R^n\mid x_1>\dots>x_n>0\}.
\label{2.219}
\end{equation}

Now, take any point $(\lambda,\theta)\in\fc\times\R^n$ in the phase space and set
$(q,p)=\cS^{-1}(\lambda,\theta)$ to be the corresponding action-angle
coordinates\footnote{Here $\cS\colon\fc\times\R^n\to\fc\times\R^n$ is the action-angle
map, that was constructed by Pusztai \cite{Pu12}.}.
Consider the $H^\Pu$-trajectory $(\lambda(t),\theta(t))$ with initial condition $(\lambda,\theta)$.
Notice that the Hamiltonian $H_l^\vD$ \eqref{2.215} is constant along the $H^\Pu$-trajectory.
By utilizing the asymptotics (proved in \cite{Pu13})
\begin{equation}
\lambda_k(t)\sim t\sinh(q_k)-p_k
\quad\text{and}\quad
\theta_k(t)\sim q_k,
\qquad
k=1,\dots,n,
\label{2.220}
\end{equation}
one can readily check that
\begin{equation}
(\cS^\ast H_l^\vD)(q,p)
=\lim_{t\to\infty}H_l^\vD(\lambda(t),\theta(t))
=\sum_{\substack{J\subset\{1,\dots,n\},\ |J|\leq l\\
\varepsilon_j=\pm 1,\ j\in J}}
(-2)^{l-|J|}{n-|J|\choose l-|J|}\cosh(q_{\varepsilon J}).
\label{2.221}
\end{equation}
From now on we let $\cH_l^\vD$ stand for the pull-back $\cS^\ast H_l^\vD$ just computed,
and stress that it depends only on the variables $q$.

\subsubsection{Hamiltonians obtained from the Lax matrix}

We recall some relevant objects of \cite{Pu12}.
First, prepare the $2n\times 2n$ Hermitian,
unitary matrix
\begin{equation}
C=\begin{bmatrix}\0_n&\1_n\\\1_n&\0_n\end{bmatrix}
\label{2.222}
\end{equation}
and the $2n\times 2n$ Hermitian matrix
\begin{equation}
h(\lambda)=\begin{bmatrix}
a(\diag(\lambda))&b(\diag(\lambda))\\
-b(\diag(\lambda))&a(\diag(\lambda))
\end{bmatrix}
\label{2.223}
\end{equation}
containing the smooth functions $a(x),b(x)$ given on the interval
$(0,\infty)\subset\R$ by
\begin{equation}
a(x)=\frac{\sqrt{x+\sqrt{x^2+\kappa^2}}}{\sqrt{2x}},\quad
b(x)=\ri\kappa\frac{1}{\sqrt{2x}}\frac{1}{\sqrt{x+\sqrt{x^2+\kappa^2}}}.
\label{2.224}
\end{equation}
Then introduce the vectors $z(\lambda)\in\C^n$, $F(\lambda,\theta)\in\C^{2n}$
by the formulae
\begin{equation}
z_l(\lambda)=-\bigg[1+\frac{\ri\nu}{\lambda_l}\bigg]
\prod_{\substack{m=1\\(m\neq l)}}^n\bigg[1+\frac{\ri\mu}{\lambda_l-\lambda_m}\bigg]
\bigg[1+\frac{\ri\mu}{\lambda_l+\lambda_m}\bigg],
\label{2.225}
\end{equation}
and
\begin{equation}
F_l(\lambda,\theta)=e^{-\tfrac{\theta_l}{2}}|z_l(\lambda)|^{\tfrac{1}{2}},\quad
F_{n+l}(\lambda,\theta)=\overline{z_l(\lambda)}
F_l(\lambda,\theta)^{-1},
\label{2.226}
\end{equation}
$l=1,\dots,n$. With these notations at hand, the $2n\times 2n$ matrix
\begin{equation}
A_{j,k}(\lambda,\theta)=\frac{\ri\mu F_j\overline{F_k}
+\ri(\mu-2\nu)C_{j,k}}{\ri\mu+\Lambda_j-\Lambda_k},\quad
j,k\in\{1,\dots,2n\},
\label{2.227}
\end{equation}
with $\Lambda=\diag(\lambda,-\lambda)$ is used to define the `RSvD Lax matrix' \cite{Pu12}:
\begin{equation}
L(\lambda,\theta)=h(\lambda)^{-1}A(\lambda,\theta)h(\lambda)^{-1}.
\label{2.228}
\end{equation}
The matrices $h$, $A$, and $L$ are invertible and satisfy the relations
\begin{equation}
ChC=h^{-1},\quad
CAC=A^{-1},\quad
CLC=L^{-1}.
\label{2.229}
\end{equation}
Their determinants are
\begin{equation}
\det(h)=\det(A)=\det(L)=1.
\label{2.230}
\end{equation}
Let $K_m$ denote the coefficients of the characteristic polynomial of
$L$ \eqref{2.228},
\begin{equation}
\det(L(\lambda,\theta)-x\1_{2n})
=K_0(\lambda,\theta)x^{2n}+K_1(\lambda,\theta)x^{2n-1}+\dots
+K_{2n-1}(\lambda,\theta)x+K_{2n}(\lambda,\theta).
\label{2.231}
\end{equation}
An immediate consequence of \eqref{2.229},\eqref{2.230} is that
\begin{equation}
K_{2n-m}\equiv K_m,\quad m=0,1,\dots,n,
\label{2.232}
\end{equation}
thus the functions $K_0\equiv 1,K_1,\dots,K_n$ fully determine the characteristic
polynomial \eqref{2.231}. The first non-constant member of this family is
proportional to $H^\Pu$ \eqref{2.218}, that is $K_1=-2H^\Pu$.
The asymptotic form of the Lax matrix $L$ \eqref{2.230} is the diagonal matrix
\begin{equation}
\diag(e^{-q},e^{q}),
\label{2.233}
\end{equation}
hence the action-angle transforms of the functions $K_m$ ($m=0,1,\dots,n$) can be easily
computed to be
\begin{equation}
(\cS^\ast K_m)(q,p)=(-1)^m
\sum_{a=0}^{\big\lfloor\tfrac{m}{2}\big\rfloor}
\sum_{\substack{J\subset\{1,\dots,n\},\ |J|=m-2a\\\varepsilon_j=\pm 1,\ j\in J}}
{n-|J|\choose a}\cosh(q_{\varepsilon J}).
\label{2.234}
\end{equation}
Of course, we used the asymptotics \eqref{2.220}, and that $K_m$ is constant along the
flow of $H^\Pu$. Now we introduce the shorthand $\cK_m=\cS^\ast K_m$, and observe that
it only depends on $q$.

It is worth emphasizing that finding a formula relating the families
$\{H_l^\vD\}_{l=0}^n$ and $\{K_m\}_{m=0}^n$ is equivalent to finding a
relation between their action-angle transforms $\{\cH_l^\vD\}_{l=0}^n$
and $\{\cK_m\}_{m=0}^n$.

\begin{proposition}
\label{prop:2.18}
There exists an invertible linear relation between the two families
$\{\cH_l^\vD\}_{l=0}^n$ and $\{\cK_m\}_{m=0}^n$.
\end{proposition}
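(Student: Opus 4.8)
The plan is to show that both families are, up to an invertible change of basis, the same set of symmetric functions of $q$, namely the ``signed power sums''
\[
\cP_k(q)=\sum_{\substack{J\subseteq\{1,\dots,n\}\\ |J|=k}}\ \sum_{\varepsilon\in\{\pm1\}^{J}}\cosh\Bigl(\sum_{j\in J}\varepsilon_j q_j\Bigr),\qquad k=0,1,\dots,n,
\]
so that $\cP_0\equiv 1$. Reading off the explicit formula \eqref{2.221} for $\cH_l^{\vD}=\cS^\ast H_l^{\vD}$ directly gives
\[
\cH_l^{\vD}=\sum_{k=0}^{l}(-2)^{l-k}\binom{n-k}{l-k}\,\cP_k ,\qquad l=0,1,\dots,n,
\]
and rewriting \eqref{2.234} with the substitution $k=m-2a$ gives
\[
\cK_m=(-1)^{m}\sum_{\substack{0\le k\le m\\ k\equiv m\ (\bmod\,2)}}\binom{n-k}{(m-k)/2}\,\cP_k ,\qquad m=0,1,\dots,n.
\]

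Collecting the coefficients into two $(n+1)\times(n+1)$ matrices $M=(M_{lk})$ and $N=(N_{mk})$, one has $\cH_l^{\vD}=\sum_{k}M_{lk}\cP_k$ and $\cK_m=\sum_{k}N_{mk}\cP_k$, where both $M$ and $N$ vanish above the diagonal (for $M$ because $M_{lk}=0$ unless $k\le l$, for $N$ because $N_{mk}=0$ unless $k\le m$), while their diagonal entries $M_{ll}=(-2)^0\binom{n-l}{0}=1$ and $N_{mm}=(-1)^m\binom{n-m}{0}=(-1)^m$ are nonzero. Hence $M$ and $N$ are invertible lower-triangular matrices, and
\[
\cK_m=\sum_{l=0}^{n}T_{ml}\,\cH_l^{\vD},\qquad T:=NM^{-1},
\]
with $T$ invertible, being a product of invertible matrices. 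Since $M$ is moreover unipotent lower triangular, $M^{-1}$, and hence $T$, admits a closed-form expression, which is the explicit invertible linear transformation; finally, because $\cS^\ast$ is a bijection on functions, applying $(\cS^\ast)^{-1}$ transfers the same relation $K_m=\sum_l T_{ml}H_l^{\vD}$ to the original families on the phase space $\fc\times\R^n$, as anticipated by the equivalence noted above.

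I do not expect any genuine obstacle here: the only point requiring care is the combinatorial bookkeeping of the triangular structure — in particular that the $\cK_m$-expansion involves only $\cP_k$ with $k\le m$ of the same parity as $m$ — together with the check that both triangular matrices have nonvanishing diagonal, which is precisely what makes them invertible. If one also wishes to record that the two families genuinely span an $(n+1)$-dimensional space, it is enough to observe that the $\cP_k$ are linearly independent: the exponential characters $q\mapsto e^{\langle w,q\rangle}$ with $w\in\{-1,0,1\}^{n}$ having exactly $k$ nonzero components occur in $\cP_k$ and in no $\cP_j$ with $j\neq k$, and distinct exponentials on $\R^n$ are linearly independent.
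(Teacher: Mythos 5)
Your proof is correct and follows essentially the same route as the paper: your $\cP_k$ are exactly the paper's auxiliary functions $\cM_k$ \eqref{2.235}, and the two triangular expansions you write down are precisely \eqref{2.236} and \eqref{2.237}, with invertibility concluded from the nonvanishing diagonals just as in the paper. The extra remarks (linear independence of the $\cP_k$ and transport back through $\cS^\ast$) are fine but not needed for the statement.
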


\begin{proof}
Let us introduce the auxiliary functions
\begin{equation}
\cM_k(q)=\sum_{\substack{J\subset\{1,\dots,n\},\ |J|=k\\
\varepsilon_j=\pm 1,\ j\in J}}\cosh(q_{\varepsilon J}),
\quad q\in\R^n,\quad k=0,1,\dots,n.
\label{2.235}
\end{equation}
For any $l\in\{0,1,\dots,n\}$ the Hamiltonian $\cH_l^\vD$ \eqref{2.221}
is a linear combination of $\cM_0,\cM_1,\dots,\cM_l$,
\begin{equation}
\cH_l^\vD(q)=\sum_{k=0}^l(-2)^{l-k}{n-k\choose l-k}\cM_k(q).
\label{2.236}
\end{equation}
This shows that the matrix of the linear map transforming $\{\cM_k\}_{k=0}^n$
into $\{\cH_l^\vD\}_{l=0}^n$ is lower triangular with ones on the diagonal,
hence the above relation is invertible.
Similarly, any function $\cK_m$ \eqref{2.234}, $m\in\{0,1,\dots,n\}$ can be
expressed as a linear combination of $\cM_m,\cM_{m-2},\dots,\cM_3,\cM_1$ or
$\cM_m,\cM_{m-2},\dots,\cM_2,\cM_0$ depending on the parity of $m$, that is
\begin{equation}
\cK_m(q)=(-1)^m\sum_{a=0}^{\big\lfloor\tfrac{m}{2}\big\rfloor}
{n-(m-2a)\choose a}\cM_{m-2a}(q).
\label{2.237}
\end{equation}
Hence the linear transformation relating $\{\cM_k\}_{k=0}^n$ to $\{\cK_m\}_{m=0}^n$
has a lower triangular matrix with diagonal components $\pm 1$, implying that it is
invertible. This proves the existence of an invertible linear relation between the
two families $\{\cH_l^\vD\}_{l=0}^n$ and $\{\cK_m\}_{m=0}^n$.
\end{proof}

Now, we prove an explicit formula expressing $\cH_l^\vD$ as linear combination of
$\{\cK_m\}_{m=0}^l$.

\begin{proposition}
\label{prop:2.19}
For any fixed $n\in\N$, $l\in\{1,\dots,n\}$ and $q\in\R^n$ we have
\begin{equation}
(-1)^l\cH_l^\vD(q)=\cK_l(q)+
\sum_{m=0}^{l-1}\frac{2(n-m)}{2(n-m)-(l-m)}{(n-l)+(n-m)\choose l-m}\cK_m(q).
\label{2.238}
\end{equation}
\end{proposition}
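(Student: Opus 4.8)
The plan is to encode the three families \eqref{2.235}--\eqref{2.237} into generating series and to pass through a Catalan-type change of variable. Set $x_j=2\cosh q_j$. Summing $\cosh(q_{\varepsilon J})$ over all sign assignments $\varepsilon$ collapses \eqref{2.235} to an elementary symmetric polynomial, $\cM_k(q)=e_k(x_1,\dots,x_n)$, so that $\sum_{k\ge0}\cM_k(q)t^k=\prod_{j=1}^n(1+tx_j)$. Feeding this into \eqref{2.236} and \eqref{2.237} and resumming the finite binomial sums (over $l-k$, resp.\ over $a$) gives the product forms
\begin{equation*}
\sum_{l\ge0}\cH_l^\vD(q)\,u^l=\prod_{j=1}^n\bigl(1+u(x_j-2)\bigr),
\qquad
\sum_{m\ge0}\cK_m(q)\,u^m=\prod_{j=1}^n\bigl(1-ux_j+u^2\bigr),
\end{equation*}
where in the second identity $\cK_m$ for $m>n$ is read off from \eqref{2.237} with generalized binomial coefficients; this leaves $\cK_0,\dots,\cK_n$ unchanged, and only these will enter \eqref{2.238}.

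The key observation is the pointwise factorization $1-ux_j+u^2=(1-u)^2\bigl(1+\tfrac{-u}{(1-u)^2}(x_j-2)\bigr)$, which upgrades to the polynomial identity $\sum_m\cK_m u^m=(1-u)^{2n}\sum_l\cH_l^\vD\bigl(\tfrac{-u}{(1-u)^2}\bigr)^l$. Inverting the substitution $s=-u/(1-u)^2$ amounts to solving $s(1-u)^2=-u$; on the branch with $u(0)=0$ one finds $1-u=C(s)$ with $C(s)=\tfrac{1-\sqrt{1-4s}}{2s}$ the Catalan series, and then $u=1-C(s)=-sC(s)^2$ by the functional equation $C=1+sC^2$. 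Substituting back yields
\begin{equation*}
\sum_{l\ge0}\cH_l^\vD(q)\,s^l
=C(s)^{-2n}\sum_{m\ge0}\cK_m(q)\bigl(1-C(s)\bigr)^m
=\sum_{m\ge0}(-1)^m\,\cK_m(q)\,s^m\,C(s)^{-2(n-m)}.
\end{equation*}
Since $1-C(s)$ has order $1$ in $s$, the coefficient of $s^l$ involves only $m\le l$, so extracting it and multiplying by $(-1)^l$ reduces \eqref{2.238} to the claim that, for $0\le m<l$,
\begin{equation*}
(-1)^{\,l-m}\,[s^{\,l-m}]\,C(s)^{-2(n-m)}
=\frac{2(n-m)}{2(n-m)-(l-m)}\binom{(n-l)+(n-m)}{l-m},
\end{equation*}
the diagonal term $m=l$ reproducing $\cK_l$ because $C(0)=1$.

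The last ingredient is the elementary coefficient lemma: for integers $p\ge 2r\ge0$ with $p>0$,
\begin{equation*}
(-1)^{r}\,[s^{r}]\,C(s)^{-p}=\frac{p}{p-r}\binom{p-r}{r}.
\end{equation*}
I would prove it by Lagrange inversion applied to $E:=C-1$, which satisfies $E=s(1+E)^2$: for $r\ge1$,
\begin{equation*}
[s^{r}]\,(1+E)^{-p}=\tfrac1r\,[E^{\,r-1}]\bigl(-p\,(1+E)^{2r-p-1}\bigr)=-\tfrac{p}{r}\binom{2r-p-1}{r-1},
\end{equation*}
and rewriting the negative upper index via $\binom{-a}{b}=(-1)^{b}\binom{a+b-1}{b}$ turns this into $(-1)^{r}\tfrac{p}{r}\binom{p-r-1}{r-1}=(-1)^{r}\tfrac{p}{p-r}\binom{p-r}{r}$ (the case $r=0$ is trivial). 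Taking $r=l-m$ and $p=2(n-m)$, so that $p-r=(n-l)+(n-m)$, produces exactly the coefficient of $\cK_m$ in \eqref{2.238}; in the admissible range $0\le m<l\le n$ one checks that $p-r=2n-l-m\ge1$ and $p\ge 2r\iff n\ge l$, so no vanishing denominator or degenerate binomial occurs.

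The sum collapses of the first paragraph and the coefficient bookkeeping are routine; the substantive steps — and the ones most exposed to sign and index errors — are the Catalan substitution together with the branch choice $1-u=C(s)$, and the Lagrange-inversion lemma with its generalized binomial coefficients. Because $\cM_k=e_k(x_1,\dots,x_n)$, every manipulation above is literally an identity between polynomials in $x_1,\dots,x_n$, so no convergence or analytic issue arises and no appeal to linear independence of the $\cM_k$ is needed.
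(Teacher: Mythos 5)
Your proof is correct, and it takes a genuinely different route from the one in the paper. The paper's proof substitutes \eqref{2.234} into the right-hand side of \eqref{2.238}, reorders the sums to reduce the claim to the binomial identity \eqref{2.241}, checks the case $l=n$ via standard binomial identities, and then runs an induction on the particle number, encoding the recursions \eqref{2.243} and \eqref{2.244} in matrix form and verifying the induction step as the entry-wise identity \eqref{2.251}. You instead argue through generating polynomials: the identification $\cM_k=e_k(2\cosh q_1,\dots,2\cosh q_n)$ turns \eqref{2.236} and \eqref{2.237} into the product formulas for $\sum_l\cH_l^\vD u^l$ and $\sum_m\cK_m u^m$ (the first is essentially Proposition \ref{prop:B.1}, since $2\cosh q_j-2=4\sinh^2(q_j/2)$), the substitution $s=-u/(1-u)^2$ with the branch $1-u=C(s)$ fixed by the Catalan equation $C=1+sC^2$ converts one product into the other, and Lagrange inversion applied to $E=C-1$, $E=s(1+E)^2$, delivers the connection coefficients in closed form. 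Your bookkeeping is sound: in the range $0\le m<l\le n$ one has $p-2r=2(n-l)\ge 0$ and $p-r=2n-l-m\ge 1$ with $r=l-m$, $p=2(n-m)$, so the generalized-binomial manipulations and the denominator $2(n-m)-(l-m)$ are all legitimate, and since each term $s^m C(s)^{-2(n-m)}$ has order exactly $m$, only $\cK_0,\dots,\cK_l$ enter the coefficient of $s^l$ (so your aside about defining $\cK_m$ for $m>n$ is unnecessary and can be dropped; those coefficients are anyway fixed by the palindromy \eqref{2.232}). What your approach buys is a non-inductive proof that explains the otherwise opaque coefficients $\frac{2(n-m)}{2(n-m)-(l-m)}\binom{(n-l)+(n-m)}{l-m}$ as coefficients of negative powers of the Catalan series, and, by sending $u\to-u$ in the same product identity, it also yields the inverse relation \eqref{2.253} stated without proof in Remark \ref{rem:2.20}; what the paper's argument buys is elementarity, at the price of a more computational induction.
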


\begin{proof}
Substitute $\cK_m$ \eqref{2.234} into the right-hand side of the expression
above to obtain
\begin{multline}
\sum_{k=0}^{l-1}\sum_{a=0}^{\big\lfloor\tfrac{k}{2}\big\rfloor}
\sum_{\substack{J\subset\{1,\dots,n\},\ |J|=k-2a\\\varepsilon_j=\pm 1,\ j\in J}}
(-1)^k\frac{2(n-k)}{2(n-k)-(l-k)}{(n-l)+(n-k)\choose l-k}\times\\
\times{n-(k-2a)\choose a}\cosh(q_{\varepsilon J})
+\sum_{a=0}^{\big\lfloor\tfrac{l}{2}\big\rfloor}
\sum_{\substack{J\subset\{1,\dots,n\},\ |J|=l-2a\\\varepsilon_j=\pm 1,\ j\in J}}
(-1)^l{n-(l-2a)\choose a}\cosh(q_{\varepsilon J}).
\label{2.239}
\end{multline}
Since $k=|J|+2a$ it is obvious that $(-1)^k=(-1)^{-|J|}$. Multiply \eqref{2.239} by
$(-1)^l$ and change the order of summations over $a$ and $J$ to get
\begin{multline}
\sum_{\substack{J\subset\{1,\dots,n\},\ |J|<l\\\varepsilon_j=\pm 1,\ j\in J}}
(-1)^{l-|J|}\sum_{a=0}^{\big\lfloor\tfrac{l-|J|}{2}\big\rfloor}
\frac{2[n-(|J|+2a)]}{2[n-(|J|+2a)]-[l-(|J|+2a)]}\times\\
\times{(n-l)+(n-(|J|+2a))\choose l-(|J|+2a)}
{n-|J|\choose a}\cosh(q_{\varepsilon J})
+\sum_{\substack{J\subset\{1,\dots,n\},\ |J|=l\\\varepsilon_j=\pm 1,\ j\in J}}
\cosh(q_{\varepsilon J}).
\label{2.240}
\end{multline}
Now, comparison of \eqref{2.236} with \eqref{2.240} leads to a relation equivalent to
\eqref{2.238},
\begin{multline}
\sum_{a=0}^{\big\lfloor\tfrac{l-|J|}{2}\big\rfloor}
\frac{2[n-(|J|+2a)]}{2[n-(|J|+2a)]-[l-(|J|+2a)]}\times\\
\times{2n-(l+|J|+2a)\choose l-(|J|+2a)}
{n-|J|\choose a}\bigg/{n-|J|\choose l-|J|}=2^{l-|J|}.
\label{2.241}
\end{multline}
For $n=l$ in \eqref{2.241} one obtains
\begin{equation}
\begin{cases}\displaystyle
2\sum_{a=0}^{\big\lfloor\tfrac{l-|J|}{2}\big\rfloor}{l-|J|\choose a}=2^{l-|J|},&
\text{if}\ l-|J|\ \text{is odd,}\\[12pt]
\displaystyle
2\sum_{a=0}^{\tfrac{l-|J|}{2}-1}{l-|J|\choose a}
+{l-|J|\choose \frac{l-|J|}{2}}=2^{l-|J|},&
\text{if}\ l-|J|\ \text{is even,}\\
\end{cases}
\label{2.242}
\end{equation}
which are well-known identities for the binomial coefficients.
This means that \eqref{2.238} holds for $l=n$ for all $n\in\N$,
which implies that if we consider $n+1$ variables it is sufficient
to check the cases $l<n+1$. With that in mind let us progress by induction
on $n$ and suppose that \eqref{2.238} is verified for all $1\leq l\leq n$
for some $n\in\N$.

First, notice that the Hamiltonians $\cH_l^\vD$ \eqref{2.221} satisfy
the following recursion
\begin{equation}
\cH_l^\vD(q_1,\dots,q_n,q_{n+1})
=\cH_l^\vD(q_1,\dots,q_n)+4\sinh^2(\frac{q_{n+1}}{2})\cH_{l-1}^\vD(q_1,\dots,q_n).
\label{2.243}
\end{equation}
This can be checked either directly or by utilizing that $\cH_l^\vD$ is
the $l$-th elementary symmetric function with variables $\sinh^2(\frac{q_i}{2})$
(see Appendix \ref{sec:B.2}).
Similarly, the functions $\cK_k$ \eqref{2.234} satisfy
\begin{equation}
\cK_k(q_1,\dots,q_n,q_{n+1})
=\cK_k(q_1,\dots,q_n)
-2\cosh(q_{n+1})\cK_{k-1}(q_1,\dots,q_n)
+\cK_{k-2}(q_1,\dots,q_n),
\label{2.244}
\end{equation}
with $\cK_{-1}\equiv 0$. Let us introduce some shorthand notation,
such as the $\R^{l+1}$ vectors
\begin{equation}
\vec\cH^\vD(n)=(\cH_0^\vD,-\cH_1^\vD,\dots,(-1)^l\cH^\vD_l)^\top
\quad\text{and}\quad
\vec\cK(n)=(\cK_0,\cK_1,\dots,\cK_l)^\top
\label{2.245}
\end{equation}
and the $\R^{(l+1)\times(l+1)}$ matrices
\begin{equation}
\cA(n)_{j+1,k+1}=\begin{cases}\displaystyle
\frac{2(n-k)}{2(n-k)-(j-k)}{(n-j)+(n-k)\choose j-k},
&\text{if}\ j\geq k,\\[1em]
0,&\text{if}\ j<k,
\end{cases}
\label{2.246}
\end{equation}
where $j,k\in\{0,\dots,l\}$ and
\begin{equation}
\cH^\vD(n,n+1)=\1_{l+1}-4\sinh^2(\frac{q_{n+1}}{2})\cI_{-1},
\quad
\cK(n,n+1)=\1_{l+1}-2\cosh(q_{n+1})\cI_{-1}+\cI_{-2}
\label{2.247}
\end{equation}
with $(\cI_{-m})_{j+1,k+1}=\delta_{j,k+m}$, $m>0$.
The relations \eqref{2.243} and \eqref{2.244} can be written in the concise form
\begin{equation}
\vec\cH^\vD(n+1)=\cH^\vD(n,n+1)\vec\cH^\vD(n),\quad
\vec\cK(n+1)=\cK(n,n+1)\vec\cK(n)
\label{2.248}
\end{equation}
and our assumption is condensed into
\begin{equation}
\vec\cH^\vD(n)=\cA(n)\vec\cK(n).
\label{2.249}
\end{equation}
Using this notation it is clear that the desired induction step is equivalent to
the matrix equation
\begin{equation}
\cH^\vD(n,n+1)\cA(n)=\cA(n+1)\cK(n,n+1).
\label{2.250}
\end{equation}
Spelling this out at some arbitrary $(j,k)$-th entry gives us
\begin{multline}
\frac{A+B}{A}{A\choose B}
-4\sinh^2\bigg(\frac{\alpha}{2}\bigg)\frac{A+B}{A+1}{A+1\choose B-1}
=\\[.5em]
=\frac{A+B+2}{A+2}{A+2\choose B}
-2\cosh(\alpha)\frac{A+B}{A+1}{A+1\choose B-1}
+\frac{A+B-2}{A}{A\choose B-2},
\label{2.251}
\end{multline}
where
\begin{equation}
A=2n-j-k,\quad B=j-k,\quad\alpha=q_{n+1}.
\label{2.252}
\end{equation}
A simple direct calculation shows that \eqref{2.251} indeed holds implying that
\eqref{2.238} is also true for $n+1$ for any $l\leq n$. The case $l=n+1$ is given
by the argument preceding induction. This completes the proof.
\end{proof}

\begin{remark}
\label{rem:2.20}
We showed in Proposition \ref{prop:2.18} that the relation \eqref{2.238} is invertible.
Without spending space on the proof, we note that the inverse relation can be written explicitly as
\begin{equation}
(-1)^m\cK_m(q)=\sum_{l=0}^{m}{2(n-l)\choose m-l}\cH_l^\vD(q).
\label{2.253}
\end{equation}
\end{remark}

\section{Discussion}
\label{sec:2.5}

In this chapter, we characterized a symplectic reduction of the phase space
$(P,\Omega)$ \eqref{2.26} by exhibiting two models of the reduced phase space
$P_\red$ \eqref{2.29}. These are provided by the global cross-sections $S$ and
$\tilde S$ described in Theorem \ref{thm:2.1} and in Theorem \ref{thm:2.16}.
The two cross-sections naturally give rise to symplectomorphisms
\begin{equation}
(M, \omega) \simeq (P_\red, \Omega_\red) \simeq (\tilde M, \tilde \omega),
\label{2.254}
\end{equation}
where $M=T^\ast C_1$ \eqref{2.2} with the canonical symplectic form
$\omega= \sum_{k=1}^ndq_k \wedge dp_k$ and $\tilde{M}=\C^n$ with
$\tilde{\omega}=\ri\sum_{k=1}^ndz_k\wedge d\bar{z}_k$. The Abelian
Poisson algebras $\fQ^1$ and $\fQ^2$ on $P$ \eqref{2.32} descend
to reduced Abelian Poisson algebras $\fQ^1_\red$ and $\fQ^2_\red$
on $P_\red$. The construction guarantees that any element of the
reduced Abelian Poisson algebras possesses complete Hamiltonian flow.
These flows can be analysed by means of the standard projection
algorithm as well as by utilization of the symplectomorphism \eqref{2.254}.

To further discuss the interpretation of our results, consider the
gauge invariant functions
\begin{equation}
\cH_k(y,Y,V) = \frac{1}{4k} (-\ri Y)^{2k}
\quad\text{and}\quad
{\tilde \cH}_k(y,Y,V) = \frac{(-1)^k}{2k} \tr(y^{-1} C y C)^k,
\quad
k=1,\dots, n.
\label{2.255}
\end{equation}
The restrictions of the functions $\cH_k$ to the global
cross-sections $S$ and $\tilde{S}$ take the form
\begin{equation}
\cH_k|_S=\frac{1}{4k}(-\ri Y(q,p))^{2k}=H_k(q,p)
\quad\text{and}\quad
\cH_k|_{\tilde S}=\frac{1}{2k} \sum_{j=1}^n \lambda_j(z)^{2k}.
\label{2.256}
\end{equation}
According to \eqref{2.64}, the $H_k$ yield the commuting Hamiltonians of the
Sutherland system, while the $\lambda_j$ as functions on $\tilde S\simeq \C^n$
are given by \eqref{2.160}. Since any smooth function on a global cross-section
encodes a smooth function on $P_\red$, we conclude that the Sutherland
Hamiltonians $H_k$ and the `eigenvalue-functions' $\lambda_j$ define two
alternative sets of generators for $\fQ^2_\red$.

The restrictions of the functions $\tilde \cH_k$ read
\begin{equation}
{\tilde \cH}_k \vert_{S} = \frac{(-1)^k}{k}\sum_{j=1}^n \cos( 2k q_j)
\quad\text{and}\quad
{\tilde \cH}_k\vert_{\tilde S}=\frac{1}{2k}\tr(\tilde L(z)^k)
\label{2.257}
\end{equation}
with $\tilde L(z)$ is defined in \eqref{2.175}. On the semi-global cross-section
$\tilde{S}^0$ of Theorem \ref{thm:2.3}, which parametrizes the dense open submanifold
$\fL^{-1}_\red(C_2) \subset P_\red$, we have
\begin{equation}
{\tilde \cH}_1\vert_{\tilde S^0}= \tilde H^0,
\label{2.258}
\end{equation}
where $\tilde H^0$ is the RSvD Hamiltonian displayed in \eqref{2.4}. We see from
\eqref{2.257} that the functions $q_j\in C^\infty(S)$ and the commuting Hamiltonians
${\tilde \cH}_k|_{\tilde S}$ engender two alternative generating sets for $\fQ^1_\red$.
On account of the relations
\begin{equation}
\tilde M^0\simeq\tilde S^0\simeq C_2\times\T^n\simeq\C^n_{\neq}\subset\C^n\simeq\tilde S\simeq\tilde M,
\label{2.259}
\end{equation}
${\tilde \cH}_1\vert_{ \tilde S}$ yields a globally smooth extension of the
many-body Hamiltonian $\tilde H^0$.

It is immediate from our results that both $\fQ^1_\red$ and $\fQ^2_\red$ define Liouville
integrable systems on $P_\red$, since both have $n$ functionally independent generators.
The interpretations of these Abelian Poisson algebras that stem from the models $S$ and
$\tilde{S}$ underlie the action-angle duality between the Sutherland and RSvD systems as
follows. First, the generators $q_k$ of $\fQ^1_\red$ can be viewed alternatively as
particle positions for the Sutherland system or as action variables for the RSvD system.
Their canonical conjugates $p_k$ are of non-compact type. Second, the generators
$\lambda_k$ of $\fQ^2_\red$ can be viewed alternatively as action variables for the
Sutherland systems or as globally well-defined `particle positions' for the completed RSvD
system. In conclusion, the symplectomorphism $\cR\colon M\to\tilde{M}$ naturally induced by
\eqref{2.254} satisfies all properties required by the notion of action-angle duality
outlined in the Introduction.

We end this chapter by pointing out some further consequences. First of all, we note that
the dimension of the Liouville tori of the Sutherland system drops on the locus where
the action variables encoded by $\lambda$ belong to the boundary of the polyhedron
$\overline{C}_2$. This is a consequence of the next statement, which can be proved
by direct calculation.

\begin{proposition}
\label{prop:2.21}
Consider the Sutherland Hamiltonians\footnote{Here $H_k(z)$ denotes the reduction
of the Hamiltonian $\cH_k$ expressed in terms of the model $\tilde{M}$, cf.~\eqref{2.256}.}
$H_k(z) = \frac{1}{2k} \sum_{j=1}^n \lambda_j(z)^{2k}$
and for any $z\in\C^n$ define $\cD(z)= \# \{ z_k\neq 0\mid k=1,\dots, n\}$.
Then one has the equality
\begin{equation}
\operatorname{dim}\left( \operatorname{span} \{d\lambda_k(z) \mid k=1,\dots,n \}\right)
=\operatorname{dim}\left( \operatorname{span}
 \{dH_k(z) \mid k=1,\dots,n \}\right)= \cD(z).
\label{2.260}
\end{equation}
\end{proposition}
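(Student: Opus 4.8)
The plan is to deduce the proposition from two facts: that the Hamiltonians $H_k(z)$ depend on $z$ only through $\lambda(z)=(\lambda_1(z),\dots,\lambda_n(z))$ via an invertible change of variables, so that $\operatorname{span}\{dH_k(z)\}=\operatorname{span}\{d\lambda_k(z)\}$ (this gives the first equality in \eqref{2.260}), and that the span of the differentials $d\lambda_k(z)$ on $\C^n$ is controlled in an elementary way by the set of indices $k$ with $z_k\ne 0$. Here $d\lambda_k(z)$ and $dH_k(z)$ denote the differentials of the functions $z\mapsto\lambda_k(z)$ and $z\mapsto H_k(z)$.

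First I would establish $\operatorname{span}\{dH_k(z)\}=\operatorname{span}\{d\lambda_k(z)\}$. By the chain rule $dH_k(z)=\sum_{m=1}^n\lambda_m(z)^{2k-1}\,d\lambda_m(z)$, so the $dH_k(z)$ are linear combinations of the $d\lambda_m(z)$ with coefficient matrix $\big(\lambda_m(z)^{2k-1}\big)_{k,m=1}^n$. Since the range of $z\mapsto\lambda(z)$ is $\overline{C}_2$, on which $\lambda_1>\lambda_2>\dots>\lambda_n\ge\nu>0$ by \eqref{2.8}, for every $z\in\C^n$ the numbers $\lambda_m(z)$ are positive and pairwise distinct; factoring $\lambda_m(z)$ out of the $m$-th column turns the coefficient matrix into a Vandermonde matrix in the entries $\lambda_m(z)^2$, whose determinant $\prod_{m=1}^n\lambda_m(z)\prod_{1\le m<l\le n}\big(\lambda_l(z)^2-\lambda_m(z)^2\big)$ is nonzero. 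An invertible coefficient matrix forces the two families of covectors to span the same subspace, so it remains to show $\dim\operatorname{span}\{d\lambda_k(z)\}=\cD(z)$. For this I would use the explicit formula \eqref{2.160}: setting $\omega_j:=d(z_j\bar z_j)=\bar z_j\,dz_j+z_j\,d\bar z_j$, it reads $d\lambda_k=\sum_{j=k}^n\omega_j$, so $\omega_k=d\lambda_k-d\lambda_{k+1}$ with the convention $d\lambda_{n+1}:=0$, and the two families $(d\lambda_1,\dots,d\lambda_n)$ and $(\omega_1,\dots,\omega_n)$ are related by a unipotent triangular matrix; hence $\operatorname{span}\{d\lambda_k(z)\}=\operatorname{span}\{\omega_1(z),\dots,\omega_n(z)\}$. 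Finally $\omega_j(z)$ vanishes precisely when $z_j=0$, and for distinct indices the forms $\omega_j$ involve disjoint pairs of real coordinates on $\C^n\cong\R^{2n}$, so the nonzero members of $\{\omega_1(z),\dots,\omega_n(z)\}$ are linearly independent and their number is $\#\{j:z_j\ne 0\}=\cD(z)$, giving $\dim\operatorname{span}\{d\lambda_k(z)\}=\cD(z)$.

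There is no genuine obstacle here: the only point requiring care is the use of the earlier result that $\lambda(z)\in\overline{C}_2$ for every $z\in\C^n$, which together with the standing assumption \eqref{2.8} guarantees that the $\lambda_m(z)$ are nonzero and pairwise distinct — exactly what is needed for the power-sum Jacobian to be invertible. Everything else is linear algebra in $T_z^\ast\C^n$.
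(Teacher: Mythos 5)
Your proof is correct and is precisely the ``direct calculation'' that the paper alludes to without spelling out: the Vandermonde-type argument (in the variables $\lambda_m(z)^2$, using $\lambda_1(z)>\dots>\lambda_n(z)\geq\nu>0$, which follows immediately from \eqref{2.160} with $\mu>0$, $\nu>0$) gives $\operatorname{span}\{dH_k(z)\}=\operatorname{span}\{d\lambda_k(z)\}$, and the triangular change from $d\lambda_k$ to $\bar z_k\,dz_k+z_k\,d\bar z_k$ reduces the dimension count to $\cD(z)$. No gaps; this matches the intended argument.
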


Being in control of the action-angle variables for our dual pair of integrable
systems, the following result is readily obtained.

\begin{proposition}
\label{prop:2.22}
Any `Sutherland Hamiltonian' $H_k\in C^\infty(M)$ ($k=1,\dots,n$) given by \eqref{2.63}
defines a non-degenerate Liouville integrable system, i.e. the commutant of $H_k$ in the
Poisson algebra $C^\infty(M)$ is the Abelian algebra generated by the action variables
$\lambda_1,\dots,\lambda_n$. Any `RSvD Hamiltonian' $\tilde H_k \in C^\infty(\tilde M)$,
$k=1,\dots,n$, which by definition coincides with $\tilde \cH_k\vert_{\tilde S}$ in
\eqref{2.254} upon the identification $\tilde{M}\simeq\tilde{S}$, is maximally degenerate
(`superintegrable') since its commutant in the Poisson algebra $C^\infty(\tilde M)$ is
generated by $(2n-1)$ elements.
\end{proposition}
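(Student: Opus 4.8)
The plan is to read both assertions off the action-angle description built in Theorems \ref{thm:2.1}, \ref{thm:2.3} and \ref{thm:2.16}. The global symplectomorphism $\cR\colon M\to\tilde M$ from \eqref{2.254} identifies the commutant of $H_k$ in $C^\infty(M)$ with that of $\cR_\ast H_k$ in $C^\infty(\tilde M)$, and likewise for $\tilde H_k$; so each claim can be checked in whichever of the two models is more convenient. Throughout I will use Lemma \ref{lem:2.9} (the $\lambda_j$ Poisson commute) together with the explicit Darboux forms $\sum dq_k\wedge dp_k$ on $M$ and $\sum d\lambda_k\wedge d\vartheta_k$ on $\C^n_{\neq}\subset\tilde M$.

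\emph{Non-degeneracy of $H_k$.} I would work in the model $\tilde M=\C^n$, where by \eqref{2.195} and \eqref{2.256} one has $H_k=\tfrac1{2k}\sum_{j=1}^n\lambda_j(z)^{2k}$, a function of the action variables $\lambda_1,\dots,\lambda_n$ alone (the smooth functions \eqref{2.160}, with image $\overline{C}_2$). On the dense open set $\C^n_{\neq}\simeq C_2\times\T^n$ the flow of $H_k$ is $\dot\lambda_j=0$, $\dot\vartheta_j=\lambda_j^{2k-1}$, a linear flow on each invariant compact torus $\{\lambda\}\times\T^n$. Since $\lambda_j>0$ on $C_2$ and $x\mapsto x^{2k-1}$ is strictly increasing, the frequency map $\lambda\mapsto(\lambda_1^{2k-1},\dots,\lambda_n^{2k-1})$ is a diffeomorphism of $C_2$ onto its image, so $H_k$ is Kolmogorov non-degenerate; consequently the set of $\lambda\in C_2$ with rationally independent frequencies is dense in $C_2$ (its complement is the countable union of the analytic hypersurfaces $\{\sum_jm_j\lambda_j^{2k-1}=0\}$, $m\in\Z^n\setminus\{0\}$), and over such $\lambda$ the $H_k$-orbit is dense in $\{\lambda\}\times\T^n$. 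Hence any $F\in C^\infty(\tilde M)$ with $\{F,H_k\}=0$ is constant along the $H_k$-flow, thus constant on each of these dense tori, and therefore — by continuity and the density of these tori and of $\C^n_{\neq}$ in $\C^n$ — invariant under the coordinatewise $\T^n$-action on $\C^n$; such a function is a smooth function of $|z_1|^2,\dots,|z_n|^2$, equivalently of $\lambda_1,\dots,\lambda_n$. Conversely every smooth function of the $\lambda_j$ commutes with $H_k$ by Lemma \ref{lem:2.9}. So the commutant of $H_k$ is exactly the Abelian algebra generated by the $n$ functionally independent functions $\lambda_1,\dots,\lambda_n$, which is the asserted non-degeneracy.

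\emph{Superintegrability of $\tilde H_k$.} Here I would pass to $M=T^\ast C_1$, where by \eqref{2.204} one has $\tilde H_k=\tilde h_k(q)=\tfrac{(-1)^k}k\sum_{j=1}^n\cos(2kq_j)$, a function of $q$ only, with Hamiltonian vector field $\dot q=0$, $\dot p_j=-\partial\tilde h_k/\partial q_j$. The $n$ functions $q_1,\dots,q_n$ commute with $\tilde h_k$ trivially. With the matrix $X(q)=[\,\partial\tilde h_a/\partial q_b\,]$ of \eqref{2.205}, invertible on $C_1$ by Proposition \ref{prop:2.17}, the functions $f_i(q,p)=\sum_{j}p_j(X^{-1}(q))_{j,i}$ of \eqref{2.214} satisfy $\{f_i,\tilde h_k\}=-\sum_m (X^{-1}(q))_{m,i}\,X_{k,m}(q)=-(XX^{-1})_{k,i}=-\delta_{i,k}$, so the $n-1$ functions $f_i$ with $i\neq k$ are first integrals; together with $q_1,\dots,q_n$ this gives $2n-1$ first integrals, and they are functionally independent because the $dq_j$ span the position directions while the $df_i$ ($i\neq k$) add $n-1$ independent covectors in the momentum directions (the matrix $[(X^{-1})_{j,i}]_{j;\,i\neq k}$ has rank $n-1$). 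To show these generate the whole commutant, consider $\Psi=(q_1,\dots,q_n,(f_i)_{i\neq k})\colon M\to C_1\times\R^{n-1}$: it is a submersion by the above, and surjective with connected fibres, since the fibre over $(q^0,c)$ is $\{q=q^0\}$ cut out further by $n-1$ affine conditions on $p$ whose solution set is the line $p^0+\R\,\nabla_q\tilde h_k(q^0)$ — and $\nabla_q\tilde h_k(q^0)$, being the $k$-th row of the invertible matrix $X(q^0)$, is nonzero — so the fibre is exactly an orbit of the $\tilde h_k$-flow. Any $F$ with $\{F,\tilde h_k\}=0$ is flow-invariant, hence constant on the connected fibres of $\Psi$, hence $F=G\circ\Psi$ with $G$ smooth (functions constant on the fibres of a surjective submersion are pullbacks of smooth functions, by composing with local sections); therefore the commutant of $\tilde h_k$ is $\Psi^\ast(C^\infty(C_1\times\R^{n-1}))$, generated by the $2n-1$ functions above. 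Transporting back by $\cR$ gives the claim for $\tilde H_k\in C^\infty(\tilde M)$.

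\emph{Main obstacle.} The bracket computations and the description of the fibres of $\Psi$ are immediate once Proposition \ref{prop:2.17} is available, so the only genuinely delicate point is the ``soft'' half of the first claim: that a function Poisson commuting with a single $H_k$ really is a function of the actions alone. This rests on the Kolmogorov non-degeneracy of the frequency map, on the density of the resonance-free tori, and on the density of $\C^n_{\neq}$ in $\tilde M=\C^n$; in particular one must make sure the conclusion survives passage to the boundary $\partial C_2$, where the angle coordinates $\vartheta_j$ degenerate and only the action variables $\lambda_j(z)$ remain well defined.
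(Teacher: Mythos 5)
Your proof is correct, and the non-degeneracy half follows essentially the same route as the paper: restrict to the action-angle chart $C_2\times\T^n$, use that for generic $\lambda$ the $H_k$-flow is dense on the Liouville torus, conclude that any commuting function is constant on the tori and hence a function of $\lambda_1,\dots,\lambda_n$; you merely spell out the non-resonance/density and boundary details that the paper leaves implicit. For the superintegrability half you actually go further than the paper does: the paper's proof simply cites Subsection \ref{subsec:2.4.2}, which (via Proposition \ref{prop:2.17} and the functions \eqref{2.214}) only \emph{constructs} the $2n-1$ independent integrals $q_1,\dots,q_n$ and $f_i$ ($i\neq k$), whereas you also prove that these generate the full commutant of $\tilde h_k$, by identifying the fibres of the map $\Psi=(q,(f_i)_{i\neq k})$ with the orbits of the $\tilde h_k$-flow (affine lines in the direction $\nabla_q\tilde h_k$, nonzero since it is a row of the invertible matrix $X(q)$) and invoking the standard fact that a smooth function constant on the connected fibres of a surjective submersion is a pullback. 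This fibration argument is a genuine and worthwhile strengthening of the statement as the paper justifies it, at no extra cost beyond Proposition \ref{prop:2.17}.
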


\begin{proof}
The subsequent argument relies on the `action-angle symplectomorphisms'
between $(M,\omega)$ and $(\tilde{M},\tilde{\omega})$ corresponding to
\eqref{2.254}.

Let us first restrict the Sutherland Hamiltonian $H_k$ to the submanifold parametrized
by the action-angle variables varying in $C_2 \times \T^n$. For generic $\lambda$, we
see from \eqref{2.256} that the flow of $H_k$ is dense on the torus $\T^n$. Therefore
any smooth function $f$ that Poisson commutes with $H_k$ must be constant on the
non-degenerate Liouville tori of the Sutherland system. By smoothness, this implies
that $f$ Poisson commutes with all the action variables $\lambda_j$ on the full phase
space. Consequently, it can be expressed as a function of those variables.

Maximal superintegrability for the dual model was proved in Subsection \ref{subsec:2.4.2}.
\end{proof}

In the end, we remark that the matrix functions $-\ri Y(q,p)$ and $\tilde L(z)$,
which naturally arose from the Hamiltonian reduction, serve as Lax matrices for
the pertinent dual pair of integrable systems. We also notice that the $z_j$ can
be viewed as `oscillator variables' for the Sutherland system since the actions
$\lambda_k$ are linear combinations in $\vert z_j \vert^2$ ($j=1,\dots, n$) and
the form $\tilde\omega$ coincides with the symplectic form of $n$ independent harmonic
oscillators. It could be worthwhile to inspect the quantization of the Sutherland
system based on these oscillator variables and to compare the result to the standard
quantization \cite{He87,HO87,Op88}. We plan to return to this issue in the future.

We demonstrated that the commuting Hamiltonians of the rational $\BC_n$
Ruijsenaars-Schneider system, constructed originally by van Diejen, are linear
combinations of the coefficients of the characteristic polynomial of the
Lax matrix found recently by Pusztai, and vice versa. The derivation utilized
the action-angle map and the scattering theory results of \cite{Pu12,Pu13}.
Our Proposition \ref{prop:2.18} gives rise to a determinantal representation of the
somewhat complicated expressions $H_l^\vD$ in \eqref{2.215}. It could be of some interest
to provide a purely algebraic proof of the resulting formula of the characteristic
polynomial of the Lax matrix.

The configuration space $\fc$ \eqref{2.219} is an open Weyl chamber associated with
the Weyl group $W(\BC_n)$, and after extending this domain all Hamiltonians that we
dealt with enjoy $W(\BC_n)$ invariance. In particular, the sets $\{\cH_l^\vD\}_{l=0}^n$,
$\{ \cK_l\}_{l=0}^n$ and $\{\cM_l\}_{l=0}^n$ represent different free generating sets
of the invariant polynomials in the functions $e^{\pm q_k}$ ($k=1,\dots,n$) of
the action variables $q_k$ acted upon by the sign changes and permutations that form
$W(\BC_n)$. In order to verify this, it is useful to point out that the $W(\BC_n)$
invariant polynomials in the variables $e^{\pm q_k}$ are the same as the ordinary
symmetric polynomials in the variables $\cosh(q_k)$. The statement that
$\{\cH_l^\vD\}_{l=0}^n$ is a free generating set for these polynomials then follows,
for example, from the identity presented in Appendix \ref{sec:B.2}.

Analogous statements hold obviously also for our trigonometric version.

An interesting open problem for future work is to extend the considerations reported here
to the hyperbolic RSvD system having five independent coupling parameters.

\chapter{A Poisson-Lie deformation}
\label{chap:3}

In this chapter, which based on our results reported in \citepalias{FG15,FG16}, a deformation of the classical trigonometric $\BC_n$ Sutherland
system is derived via Hamiltonian reduction of the Heisenberg double of $\SU(2n)$.
We apply a natural Poisson-Lie analogue of the Kazhdan-Kostant-Sternberg type
reduction of the free particle on $\SU(2n)$ that led to the $\BC_n$ Sutherland
system in the previous chapter. We prove that this yields a Liouville integrable
Hamiltonian system and construct a globally valid model of the smooth reduced
phase space wherein the commuting flows are complete. We point out that the
reduced system, which contains 3 independent coupling constants besides the
deformation parameter, can be recovered (at least on a dense submanifold) as a
singular limit of the standard 5-coupling deformation due to van Diejen.
Our findings complement and further develop those obtained recently by Marshall \cite{Ma15}
on the hyperbolic case by reduction of the Heisenberg double of $\SU(n,n)$.

Here, we shall deal with a reduction of the Heisenberg double of $\SU(2n)$
and derive a Liouville integrable Hamiltonian system related to Marshall's
one in a way similar to the connection between the original trigonometric
Sutherland system and its hyperbolic variant. Although this is essentially
analytic continuation, it should be noted that the resulting systems are
qualitatively different in their dynamical characteristics and global features. In
addition, what we hope makes our work worthwhile is that our treatment is different
from the one in \cite{Ma15} in several respects and we go considerably further regarding
the global characterization of the reduced phase space and the completeness
of the relevant Hamiltonian flows.

The main Hamiltonian of the system that we obtain can be written as
\begin{equation}
H=\frac{e^{a+b}+e^{a-b}}{2}\sum_{j=1}^ne^{-2\hat{p}_j}
-\sum_{j=1}^n\cos(\hat{q}_j)w(\hat{p}_j;a)^{\tfrac{1}{2}}
\prod_{\substack{k=1\\(k\neq j)}}^n
\bigg[1-\frac{\sinh^2(x)}{\sinh^2(\hat{p}_j-\hat{p}_k)}\bigg]^{\tfrac{1}{2}}
\label{3.1}
\end{equation}
with the (external) Morse potential
\begin{equation}
w(\hat p_j;a)=1-(1+e^{2a})e^{-2\hat p_j}+e^{2a}e^{-4\hat p_j},
\label{3.2}
\end{equation}
and the real coupling constants $a,b,x$ satisfying
\begin{equation}
a>0,\quad b\neq 0,\quad\text{and}\quad x\neq 0.
\label{3.3}
\end{equation}
The components of $\hat{q}$ parametrize the torus $\T_n$ by $e^{\ri\hat{q}}$,
and $\hat{p}$ belongs to the domain
\begin{equation}
\cC_x:=\{\hat{p}\in\R^n\mid 0>\hat{p}_1,\ \hat{p}_k-\hat{p}_{k+1}>|x|/2\ (k=1,\dots,n-1)\}.
\label{3.4}
\end{equation}

The dynamics is then defined via the symplectic form
\begin{equation}
\hat\omega=\sum_{j=1}^nd\hat{q}_j\wedge d\hat{p}_j.
\label{3.5}
\end{equation}
It will be shown that this system results by restricting a reduced free system
on a dense open submanifold of the pertinent reduced phase space. The Hamiltonian
flow is complete on the full reduced phase space, but it can leave the submanifold
parametrized by $\cC_x\times\T_n$. By glancing at the form of the Hamiltonian, one
may say that it represents an RS type system coupled to external fields. Since
differences of the `position variables' $\hat{p}_k$ appear, one feels that this
Hamiltonian somehow corresponds to an A-type root system.

To better understand the nature of this model, let us now introduce new Darboux
variables $q_k$, $p_k$ following essentially \cite{Ma15} as
\begin{equation}
\exp(\hat{p}_k)=\sin(q_k)\quad\text{and}\quad\hat{q}_k=p_k\tan(q_k).
\label{3.6}
\end{equation}
In terms of these variables $H(\hat{p},\hat{q};x,a,b)=\cH_1(q,p;x,a,b)$ with the
`new Hamiltonian'
\begin{multline}
\cH_1=\frac{e^{a+b}+e^{a-b}}{2}\sum_{j=1}^n\frac{1}{\sin^2(q_j)}\\
-\sum_{j=1}^n\cos(p_j\tan(q_j))\bigg[1-\frac{1+e^{2a}}{\sin^2(q_j)}
+\frac{4e^{2a}}{4\sin^2(q_j)-\sin^2(2q_j)}\bigg]^{\tfrac{1}{2}}\\
\times\prod_{\substack{k=1\\(k\neq j)}}^n
\bigg[1-\frac{2\sinh^2\big(\frac{x}{2}\big)\sin^2(q_j)\sin^2(q_k)}
{\sin^2(q_j-q_k)\sin^2(q_j+q_k)}
\bigg]^{\tfrac{1}{2}}.
\label{3.7}
\end{multline}
Remarkably, only such combinations of the new `position variables' $q_k$ appear that
are naturally associated with the $\BC_n$ root system and the Hamiltonian $\cH_1$
enjoys symmetry under the corresponding Weyl group. Thus now one may wish to attach
the Hamiltonian $\cH_1$ to the $\BC_n$ root system. Indeed, this interpretation is
preferable for the following reason. Introduce the scale parameter (corresponding to
the inverse of the velocity of light in the original Ruijsenaars-Schneider system) $\beta>0$ and make
the substitutions
\begin{equation}
a\to\beta a,\quad
b\to\beta b,\quad
x\to\beta x,\quad
p\to\beta p,\quad
\hat\omega\to\beta\hat\omega.
\label{3.8}
\end{equation}
Then consider the deformed Hamiltonian $\cH_\beta$ defined by
\begin{equation}
\cH_\beta(q,p;x,a,b)=\cH_1(q,\beta p;\beta x,\beta a,\beta b).
\label{3.9}
\end{equation}
The point is that one can then verify the following relation:
\begin{equation}
\lim_{\beta\to 0}\frac{\cH_\beta(q,p;x,a,b)-n}{\beta^2}
=H_{\BC_n}^{\text{Suth}}(q,p;\gamma,\gamma_1,\gamma_2),
\label{3.10}
\end{equation}
where $H_{\BC_n}^{\text{Suth}}$ stands for the standard trigonometric $\BC_n$ Sutherland Hamiltonian \eqref{2.1} with coupling constants
\begin{equation}
\gamma=\frac{x^2}{4},\quad
\gamma_1=(b^2-a^2)/2,\quad
\gamma_2=2a^2.
\label{3.11}
\end{equation}
Note that the domain of the variables $\hat{q},\hat{p}$, and correspondingly that of
$q,p$ also depends on $\beta$, and in the $\beta\to 0$ limit it is easily seen that
we recover the usual $\BC_n$ configuration space \eqref{2.2}. In conclusion, we see that
$H$ \eqref{3.1} in its equivalent form $\cH_\beta$ \eqref{3.9} is a 1-parameter
deformation of the trigonometric $\BC_n$ Sutherland Hamiltonian. We remark in passing
that the conditions \eqref{3.3} imply that $\gamma_2>0$ and $4\gamma_1+\gamma_2>0$,
which guarantee that the flows of $H_{\BC_n}^{\text{Suth}}$ are complete on the
domain \eqref{2.2}.

Marshall \cite{Ma15} obtained similar results for an analogous deformation of the
hyperbolic $\BC_n$ Sutherland Hamiltonian. His deformed Hamiltonian differs from
\eqref{3.1} above in some important signs and in the relevant domain of the `position
variables' $\hat p$. Although in our impression the completeness of the reduced
Hamiltonian flows was not treated in a satisfactory way in \cite{Ma15}, the
completeness proof that we shall present can be adapted to Marshall's case as well,
as demonstrated in Section \ref{sec:3.4}.

It is natural to ask how the system studied in this chapter (and its cousin in
\cite{Ma15}) is related to van Diejen's \cite{vD94} $5$-coupling trigonometric $\BC_n$
system? It was shown already in \cite{vD94} that the $5$-coupling trigonometric system
is a deformation of the $\BC_n$ Sutherland system, and later \cite{vD95-2} several other
integrable systems were also derived as its (`Inozemtsev type' \cite{In89}) limits\footnote{It should be mentioned that the so-called `Inozemtsev limit' was discovered by Ruijsenaars \cite{Ru90-2}.}.
Motivated by this, we can show that the Hamiltonian \eqref{3.1} is a
singular limit of van Diejen's general Hamiltonian.
Incidentally, a Hamiltonian of Schneider \cite{Sc87} can be viewed as a subsequent singular
limit of the Hamiltonian \eqref{3.1}. Schneider's system was mentioned in \cite{Ma15},
too, but the relation to van Diejen's system was not described.

The original idea behind the present work and \cite{Ma15} was that a natural Poisson-Lie
analogue of the Hamiltonian reduction treatment \cite{FP07} of the $\BC_n$
Sutherland system should lead to a deformation of this system. It was expected that a
special case of van Diejen's standard $5$-coupling deformation will arise. The
expectation has now been confirmed, although it came as a surprise that a singular
limit is involved in the connection.

The outline of the chapter is as follows. We start in Section \ref{sec:3.1} by defining
the reduction of interest. In Section \ref{sec:3.2} we observe that several technical
results of \cite{FK11} can be applied for analyzing the reduction at hand, and solve
the momentum map constraints by taking advantage of this observation. The heart of
the chapter is Section \ref{sec:3.3}, where we characterize the reduced system.
In Subsection \ref{subsec:3.3.1} we prove that the reduced phase space is smooth, as
formulated in Theorem \ref{thm:3.9}. Then in Subsection \ref{subsec:3.3.2} we focus on a dense
open submanifold on which the Hamiltonian \eqref{3.1} lives. The demonstration of the
Liouville integrability of the reduced free flows is given in Subsection
\ref{subsec:3.3.3}. In particular, we prove the integrability of the completion of the
system \eqref{3.1} carried by the full reduced phase space. Our main result is Theorem
\ref{thm:3.14} (proved in Subsection \ref{subsec:3.3.4}), which establishes a globally valid model
of the reduced phase space. We stress that the global structure of the phase space on
which the flow of \eqref{3.1} is complete was not considered previously at all, and
will be clarified as a result of our group theoretic interpretation. Section
\ref{sec:3.5} contains our conclusions, further comments on the related paper by
Marshall \cite{Ma15} and a discussion of open problems. This chapter is complemented by
four appendices. Appendix \ref{sec:C.1} deals with the connection to van Diejen's system;
the other $3$ appendices contain important details relegated from the main text.

\section{Definition of the Hamiltonian reduction}
\label{sec:3.1}

We below introduce the `free' Hamiltonians and define their reduction.
We restrict the presentation of this background material to a minimum
necessary for understanding our work. The conventions follow \cite{FK11},
which also contains more details. As a general reference, we recommend \cite{CP94}.

\subsection{The unreduced free Hamiltonians}
\label{subsec:3.1.1}

We fix a natural number\footnote{The $n=1$ case would need special treatment
and is excluded in order to simplify the presentation.} $n\geq 2$ and
consider the Lie group $\SU(2n)$ equipped with its standard quadratic Poisson bracket
defined by the compact form of the Drinfeld-Jimbo classical $r$-matrix,
\begin{equation}
r_{\mathrm{DJ}}=\ri\sum_{1\leq\alpha<\beta\leq 2n}E_{\alpha\beta}\wedge E_{\beta\alpha},
\label{3.12}
\end{equation}
where $E_{\alpha\beta}$ is the elementary matrix of size $2n$ having a single
non-zero entry $1$ at the $\alpha\beta$ position. In particular, the Poisson
brackets of the matrix elements of $g\in\SU(2n)$ obey Sklyanin's formula
\begin{equation}
\{g\stackrel{\otimes}{,}g\}_{\SU(2n)}=[g\otimes g, r_{\mathrm{DJ}}].
\label{3.13}
\end{equation}
Thus $\SU(2n)$ becomes a Poisson-Lie group, i.e. the multiplication
$\SU(2n)\times\SU(2n)\to\SU(2n)$ is a Poisson map. The cotangent bundle $T^\ast\SU(2n)$
possesses a natural Poisson-Lie analogue, the so-called Heisenberg double \cite{Se85},
which is provided by the real Lie group $\SL(2n,\C)$ endowed with a certain symplectic
form \cite{AM94}, $\omega$. To describe $\omega$, we use the Iwasawa decomposition and
factorize every element $K\in\SL(2n,\C)$ in two alternative ways
\begin{equation}
K=g_Lb_R^{-1}=b_Lg_R^{-1}
\label{3.14}
\end{equation}
with uniquely determined
\begin{equation}
g_L,g_R\in\SU(2n),\quad
b_L,b_R\in\SB(2n).
\label{3.15}
\end{equation}
Here $\SB(2n)$ stands for the subgroup of $\SL(2n,\C)$ consisting of upper triangular
matrices with positive diagonal entries. The symplectic form $\omega$ reads
\begin{equation}
\omega=\frac{1}{2}\Im\tr(db_Lb_L^{-1}\wedge dg_Lg_L^{-1})+
\frac{1}{2}\Im\tr(db_Rb_R^{-1}\wedge dg_Rg_R^{-1}).
\label{3.16}
\end{equation}
Before specifying free Hamiltonians on the phase space $\SL(2n,\C)$, note that any
smooth function $h$ on $\SB(2n)$ corresponds to a function $\tilde h$ on the space
of positive definite Hermitian matrices of determinant $1$ by the relation
\begin{equation}
\tilde h(bb^\dagger)=h(b),\quad\forall b\in\SB(2n).
\label{3.17}
\end{equation}
Then introduce the invariant functions
\begin{equation}
C^\infty(\SB(2n))^{\SU(2n)}\equiv\{h\in C^\infty(\SB(2n))\mid\tilde h(bb^\dagger)
=\tilde h(gbb^\dagger g^{-1}),\ \forall g\in\SU(2n),b\in\SB(2n)\}.
\label{3.18}
\end{equation}
These in turn give rise to the following ring of functions on $\SL(2n,\C)$:
\begin{equation}
\fH\equiv\{\cH\in C^\infty(\SL(2n,\C))\mid\cH(g_Lb_R^{-1})=h(b_R),\
h\in C^{\infty}(\SB(2n))^{\SU(2n)}\},
\label{3.19}
\end{equation}
where we utilized the decomposition \eqref{3.14}. An important point is that $\fH$
forms an Abelian algebra with respect to the Poisson bracket associated with $\omega$
\eqref{3.16}.

The flows of the `free' Hamiltonians contained in $\fH$ can be obtained effortlessly.
To describe the result, define the derivative $d^Rf\in C^\infty(\SB(2n),\su(2n))$ of
any real function $f\in C^\infty(\SB(2n))$ by requiring
\begin{equation}
\left.\frac{d}{ds}\right\vert_{s=0}f(be^{sX})=\Im\tr\big(Xd^Rf(b)\big),
\quad\forall b\in\SB(2n),\ \forall X\in\mathrm{Lie}(\SB(2n)).
\label{3.20}
\end{equation}
The Hamiltonian flow generated by $\cH\in\fH$ through the initial value
$K(0)=g_L(0)b_R(0)^{-1}$ is in fact given by
\begin{equation}
K(t)=g_L(0)\exp\big[-td^Rh(b_R(0))\big]b_R^{-1}(0),
\label{3.21}
\end{equation}
where $\cH$ and $h$ are related according to \eqref{3.19}. This means that $g_L(t)$
follows the orbit of a one-parameter subgroup, while $b_R(t)$ remains constant.
Actually, $g_R(t)$ also varies along a similar orbit, and $b_L(t)$ is constant.

The constants of motion $b_L$ and $b_R$ generate a Poisson-Lie symmetry, which
allows one to define Marsden-Weinstein type \cite{MW74} reductions.

\subsection{Generalized Marsden-Weinstein reduction}
\label{subsec:3.1.2}

The free Hamiltonians in $\fH$ are invariant with respect to the action of
$\SU(2n)\times\SU(2n)$ on $\SL(2n,\C)$ given by left- and right-multiplications.
This is a Poisson-Lie symmetry, which means that the corresponding action map
\begin{equation}
\SU(2n)\times\SU(2n)\times\SL(2n,\C)\to\SL(2n,C),
\label{3.22}
\end{equation}
operating as
\begin{equation}
(\eta_L,\eta_R,K)\mapsto\eta_L K\eta_R^{-1},
\label{3.23}
\end{equation}
is a Poisson map. In \eqref{3.22} the product Poisson structure is taken using the
Sklyanin bracket on $\SU(2n)$ and the Poisson structure on $\SL(2n,\C)$ associated with
the symplectic form $\omega$ \eqref{3.16}. This Poisson-Lie symmetry admits a momentum
map in the sense of Lu \cite{Lu91}, given explicitly by
\begin{equation}
\Phi\colon\SL(2n,\C)\to\SB(2n)\times\SB(2n),\quad\Phi(K)=(b_L,b_R).
\label{3.24}
\end{equation}
The key property of the momentum map is represented by the identity
\begin{equation}
\left.\frac{d}{ds}\right\vert_{s=0}
f(e^{sX}Ke^{-sY})=\Im\tr\big(X\{f,b_L\}b_L^{-1}+Y\{f,b_R\}b_R^{-1}\big),
\quad\forall X,Y\in\su(2n),
\label{3.25}
\end{equation}
where $f\in C^\infty(\SL(2n,\C))$ is an arbitrary real function and the Poisson bracket
is the one corresponding to $\omega$ \eqref{3.16}. The map $\Phi$ enjoys an
equivariance property and one can \cite{Lu91} perform Marsden-Weinstein type reduction in the
same way as for usual Hamiltonian actions (for which the symmetry group has vanishing
Poisson structure). To put it in a nutshell, any $\cH\in\fH$ gives rise to a reduced
Hamiltonian system by fixing the value of $\Phi$ and subsequently taking quotient
with respect to the
corresponding isotropy group. The reduced flows can be obtained by the standard
restriction-projection algorithm, and under favorable circumstances the reduced phase
space is a smooth symplectic manifold.

Now, consider the block-diagonal subgroup
\begin{equation}
G_+:=\mathrm{S}(\UN(n)\times\UN(n))<\SU(2n).
\label{3.26}
\end{equation}
Since $G_+$ is also a Poisson submanifold of $\SU(2n)$, the restriction of \eqref{3.23}
yields a Poisson-Lie action
\begin{equation}
G_+\times G_+\times\SL(2n,\C)\to\SL(2n,\C)
\label{3.27}
\end{equation}
of $G_+\times G_+$. The momentum map for this action is provided by projecting the
original momentum map $\Phi$ as follows. Let us write every element $b\in\SB(2n)$ in
the block-form
\begin{equation}
b=\begin{bmatrix}b(1)&b(12)\\\0_n&b(2)\end{bmatrix}
\label{3.28}
\end{equation}
and define $G_+^\ast<\SB(2n)$ to be the subgroup for which $b(12)=\0_n$.
If $\pi\colon\SB(2n)\to G_+^\ast$ denotes the projection
\begin{equation}
\pi\colon\begin{bmatrix}b(1)&b(12)\\\0_n&b(2)\end{bmatrix}\mapsto
\begin{bmatrix}b(1)&\0_n\\\0_n&b(2)\end{bmatrix},
\label{3.29}
\end{equation}
then the momentum map $\Phi_+\colon\SL(2n,\C)\to G_+^\ast\times G_+^\ast$ is furnished
by
\begin{equation}
\Phi_+(K)=(\pi(b_L),\pi(b_R)).
\label{3.30}
\end{equation}
Indeed, it is readily checked that the analogue of \eqref{3.25} holds with $X,Y$
taken from the block-diagonal subalgebra of $\su(2n)$ and $b_L,b_R$ replaced by
their projections. The equivariance property of this momentum map means that
in correspondence to
\begin{equation}
K\mapsto\eta_L K\eta_R^{-1}\quad\text{with}\quad(\eta_L,\eta_R)\in G_+\times G_+,
\label{3.31}
\end{equation}
one has
\begin{equation}
\big(\pi(b_L)\pi(b_L)^\dagger,\pi(b_R)\pi(b_R)^\dagger\big)\mapsto
\big(\eta_L\pi(b_L)\pi(b_L)^\dagger\eta_L^{-1},
\eta_R\pi(b_R)\pi(b_R)^\dagger \eta_R^{-1}\big).
\label{3.32}
\end{equation}
We briefly mention here that, as the notation suggests, $G_+^\ast$ is itself a
Poisson-Lie group that can serve as a Poisson dual of $G_+$. The relevant Poisson
structure can be obtained by identifying the block-diagonal subgroup of $\SB(2n)$ with
the factor group $\SB(2n)/L$, where $L$ is the block-upper-triangular normal subgroup.
This factor group inherits a Poisson structure from $\SB(2n)$, since $L$ is a so-called
coisotropic (or `admissible') subgroup of $\SB(2n)$ equipped with its standard Poisson
structure. The projected momentum map $\Phi_+$ is a Poisson map with respect to this
Poisson structure on the two factors $G_+^\ast$ in \eqref{3.30}. The details are not
indispensable for us. The interested reader may find them e.g. in \cite{BCST08}.

Inspired by the papers \cite{FP07,FK11,Ma15}, we wish to study the particular
Marsden-Weinstein reduction defined by imposing the following momentum map constraint:
\begin{equation}
\Phi_+(K)=\mu\equiv(\mu_L,\mu_R),\ \text{where}\ 
\mu_L=\begin{bmatrix}e^u\nu(x)&\0_n\\\0_n&e^{-u}\1_n\end{bmatrix},\ 
\mu_R=\begin{bmatrix}e^v\1_n&\0_n\\\0_n&e^{-v}\1_n\end{bmatrix}
\label{3.33}
\end{equation}
with some real constants $u$, $v$, and $x$. Here, $\nu(x)\in\SB(n)$ is the $n\times n$
upper triangular matrix defined by
\begin{equation}
\nu(x)_{jj}=1,\quad\nu(x)_{jk}=(1-e^{-x})e^{\frac{(k-j)x}{2}},\quad j<k,
\label{3.34}
\end{equation}
whose main property is that $\nu(x)\nu(x)^\dag$ has the largest possible
non-trivial isotropy group under conjugation by the elements of $\SU(n)$.

Our principal task is to characterize the reduced phase space
\begin{equation}
M\equiv\Phi_+^{-1}(\mu)/G_\mu,
\label{3.35}
\end{equation}
where $\Phi_+^{-1}(\mu)=\{K\in\SL(2n,\C)\mid\Phi_+(K)=\mu\}$
and
\begin{equation}
G_\mu=G_+(\mu_L)\times G_+
\label{3.36}
\end{equation}
is the isotropy group of $\mu$ inside $G_+\times G_+$. Concretely, $G_+(\mu_L)$ is
the subgroup of $G_+$ consisting of the special unitary matrices of the form
\begin{equation}
\eta_L=\begin{bmatrix}\eta_L(1)&\0_n\\\0_n&\eta_L(2)\end{bmatrix},
\label{3.37}
\end{equation}
where $\eta_L(2)$ is arbitrary and
\begin{equation}
\eta_L(1)\nu(x)\nu(x)^\dag\eta_L(1)^{-1}=\nu(x)\nu(x)^\dag.
\label{3.38}
\end{equation}
In words, $\eta_L(1)$ belongs to the little group of $\nu(x)\nu(x)^\dag$ in $\UN(n)$.
We shall see that $\Phi_+^{-1}(\mu)$ and $M$ are smooth manifolds for which the
canonical projection
\begin{equation}
\pi_\mu\colon\Phi_+^{-1}(\mu)\to M
\label{3.39}
\end{equation}
is a smooth submersion. Then $M$ \eqref{3.35} inherits a symplectic form
$\omega_M$ from $\omega$ \eqref{3.16}, which satisfies
\begin{equation}
\iota_\mu^\ast(\omega)=\pi_\mu^\ast(\omega_M),
\label{3.40}
\end{equation}
where $\iota_\mu\colon\Phi_+^{-1}(\mu)\to\SL(2n,\C)$ denotes the tautological
embedding.

\section{Solution of the momentum equation}
\label{sec:3.2}

The description of the reduced phase space requires us to solve the momentum map
constraints, i.e. we have to find all elements $K\in\Phi_+^{-1}(\mu)$. Of course,
it is enough to do this up to the gauge transformations provided by the isotropy
group $G_\mu$ \eqref{3.36}. The solution of this problem will rely on the auxiliary
equation \eqref{3.51} below, which is essentially equivalent to the momentum map
constraint, $\Phi_+(K)=\mu$, and coincides with an equation studied previously in
great detail in \cite{FK11}. Thus we start in the next subsection by deriving this
equation.

\subsection{A crucial equation implied by the constraints}
\label{subsec:3.2.1}

We begin by recalling (e.g.~\cite{Ma97}) that any $g\in\SU(2n)$ can be decomposed as
\begin{equation}
g=g_+\begin{bmatrix}\cos q&\ri\sin q\\\ri\sin q&\cos q\end{bmatrix}h_+,
\label{3.41}
\end{equation}
where $g_+,h_+\in G_+$ and $q=\diag(q_1,\dots,q_n)\in\R^n$ satisfies
\begin{equation}
\frac{\pi}{2}\geq q_1\geq\dots\geq q_n\geq 0.
\label{3.42}
\end{equation}
The vector $q$ is uniquely determined by $g$, while $g_+$ and $h_+$ suffer from
controlled ambiguities.

First, apply the above decomposition to $g_L$ in $K=g_Lb_R^{-1}\in\Phi_+^{-1}(\mu)$
and use the right-handed momentum constraint $\pi(b_R)=\mu_R$. It is then easily seen
that up to gauge transformations every element of $\Phi_+^{-1}(\mu)$ can be
represented in the following form:
\begin{equation}
K=\begin{bmatrix}\rho&\0_n\\\0_n&\1_n\end{bmatrix}
\begin{bmatrix}\cos q&\ri\sin q\\\ri\sin q&\cos q\end{bmatrix}
\begin{bmatrix}e^{-v}\1_n&\alpha\\\0_n&e^v\1_n\end{bmatrix}.
\label{3.43}
\end{equation}
Here $\rho\in\SU(n)$ and $\alpha$ is an $n\times n$ complex matrix. By using obvious
block-matrix notation, we introduce $\Omega:=K_{22}$ and record from \eqref{3.43} that
\begin{equation}
\Omega=\ri(\sin q)\alpha+e^v\cos q.
\label{3.44}
\end{equation}
For later purpose we introduce also the polar decomposition of the matrix $\Omega$,
\begin{equation}
\Omega=\Lambda T,
\label{3.45}
\end{equation}
where $T\in\UN(n)$ and the Hermitian, positive semi-definite factor $\Lambda$ is
uniquely determined by the relation $\Omega\Omega^\dag=\Lambda^2$.

Second, by writing $K=b_Lg_R^{-1}$ the left-handed momentum constraint
$\pi(b_L) =\mu_L$ tells us that $b_L$ has the block-form
\begin{equation}
b_L=\begin{bmatrix}e^u\nu(x)& \chi\\\0_n&e^{-u}\1_n\end{bmatrix}
\label{3.46}
\end{equation}
with an $n\times n$ matrix $\chi$. Now we inspect the components of the $2\times 2$
block-matrix identity
\begin{equation}
K K^\dag=b_Lb_L^\dag,
\label{3.47}
\end{equation}
which results by substituting $K$ from \eqref{3.43}. We find that the (22) component of
this identity is equivalent to
\begin{equation}
\Omega\Omega^\dag=\Lambda^2=e^{-2u}\1_n-e^{-2v}(\sin q)^2.
\label{3.48}
\end{equation}
On account of the condition \eqref{3.3}, this uniquely determines $\Lambda$ in terms
of $q$, and shows also that $\Lambda$ is invertible. A further important consequence
is that we must have
\begin{equation}
q_n>0,
\label{3.49}
\end{equation}
and therefore $\sin q$ is an invertible diagonal matrix. Indeed, if $q_n=0$, then
from \eqref{3.44} and \eqref{3.48} we would get
$(\Omega\Omega^\dag)_{nn}=e^{2v}=e^{-2u}$, which is excluded by \eqref{3.3}.

Next, one can check that in the presence of the relations already established, the
(12) and the (21) components of the identity \eqref{3.47} are equivalent to the equation
\begin{equation}
\chi=\rho(\ri\sin q)^{-1}[e^{-u}\cos q-e^{u+v}\Omega^\dag].
\label{3.50}
\end{equation}
Observe that $K$ uniquely determines $q$, $T$ and $\rho$, and conversely $K$ is
uniquely defined by the above relations once $q$, $T$ and $\rho$ are found.

Now one can straightforwardly check by using the above relations that the (11)
component of the identity \eqref{3.47} translates into the following equation:
\begin{equation}
\rho(\sin q)^{-1}T^\dag(\sin q)^2T(\sin q)^{-1}\rho^\dag=\nu(x)\nu(x)^\dag.
\label{3.51}
\end{equation}
This is to be satisfied by $q$ subject to \eqref{3.42}, \eqref{3.49} and $T\in\UN(n)$,
$\rho\in\SU(n)$. What makes our job relatively easy is that this is the same as
equation (5.7) in the paper \cite{FK11} by Feh\'er and Klim\v c\'ik. In fact,
this equation
was analyzed in detail in \cite{FK11}, since it played a crucial role in that work,
too. The correspondence with the symbols used in \cite{FK11} is
\begin{equation}
(\rho,T,\sin q)\Longleftrightarrow(k_L,k_R^\dag,e^{\hat p}).
\label{3.52}
\end{equation}
This motivates to introduce the variable $\hat p\in\R^n$ in our case, by setting
\begin{equation}
\sin q_k=e^{\hat p_k},\quad k=1,\dots,n.
\label{3.53}
\end{equation}
Notice from \eqref{3.42} and \eqref{3.49} that we have
\begin{equation}
0\geq\hat p_1\geq\dots\geq\hat p_n>-\infty.
\label{3.54}
\end{equation}
If the components of $\hat{p}$ are all different, then we can directly rely on
\cite{FK11} to establish both the allowed range of $\hat{p}$ and the explicit
form of $\rho$ and $T$. The statement that $\hat{p}_j\neq\hat{p}_k$ holds for
$j\neq k$ can be proved by adopting arguments given in \cite{FK11,FK12}.
This proof requires combining techniques of \cite{FK11} and \cite{FK12},
whose extraction from \cite{FK11,FK12} is rather involved. We present it
in Appendix \ref{sec:C.2}, otherwise in the next subsection we proceed by
simply stating relevant applications of results from \cite{FK11}.

\begin{remark}
\label{rem:3.1}
In the context of \cite{FK11} the components of $\hat p$ are not restricted to the
half-line and both $k_L$ and $k_R$ vary in $\UN(n)$. These slight differences do
not pose any obstacle to using the results and techniques of \cite{FK11,FK12}. We note that
essentially the same equation \eqref{3.51} surfaced in \cite{Ma15} as well, but the author of
that paper refrained from taking advantage of the previous analyses of this equation.
In fact, some statements of \cite{Ma15} are not fully correct. This will be specified (and
corrected) in Section \ref{sec:3.4}.
\end{remark}

\subsection{Consequences of equation \eqref{3.51}}
\label{subsec:3.2.2}

We start by pointing out the foundation of the whole analysis. For this, we
first display the identity
\begin{equation}
\nu(x)\nu(x)^\dag=e^{-x}\1_n+\sgn(x)\hat v\hat v^\dag,
\label{3.55}
\end{equation}
which holds with a certain $n$-component vector $\hat{v}=\hat{v}(x)$.
By introducing
\begin{equation}
w=\rho^\dag\hat{v}
\label{3.56}
\end{equation}
and setting $\hat{p}\equiv\diag(\hat{p}_1,\dots,\hat{p}_n)$, we rewrite equation
\eqref{3.51} as
\begin{equation}
e^{2\hat{p}-x\1_n}+\sgn(x)e^{\hat{p}}ww^\dag e^{\hat{p}}=T^{-1}e^{2\hat{p}}T.
\label{3.57}
\end{equation}
The equality of the characteristic polynomials of the matrices on the two sides of
\eqref{3.57} gives a polynomial equation that contains $\hat{p}$, the absolute values
$|w_j|^2$ and a complex indeterminate. Utilizing the requirement that $|w_j|^2\geq 0$
must hold, one obtains the following result.

\begin{proposition}
\label{prop:3.2}
If $K$ given by \eqref{3.43} belongs to the constraint surface $\Phi_+^{-1}(\mu)$,
then the vector $\hat p$ \eqref{3.53} is contained in the closed polyhedron
\begin{equation}
\bar{\cC}_x:=\{\hat{p}\in\R^n\mid 0\geq\hat{p}_1,\
\hat{p}_k-\hat{p}_{k+1}\geq|x|/2\ (k=1,\dots,n-1)\}.
\label{3.58}
\end{equation}
\end{proposition}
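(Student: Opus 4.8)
The plan is to extract a polynomial identity from the (11)-component equation \eqref{3.51}, rewritten in the form \eqref{3.57}, and then to exploit the constraint $|w_j|^2 \geq 0$ together with the ordering \eqref{3.54}. First I would note that $T \in \UN(n)$ in \eqref{3.57} is a similarity, so the matrices $e^{2\hat p - x\1_n} + \sgn(x) e^{\hat p} w w^\dag e^{\hat p}$ and $e^{2\hat p}$ have the same characteristic polynomial. Writing $s_k = e^{2\hat p_k}$ and $c_k = |w_k|^2 \geq 0$, this says
\begin{equation}
\det\!\big(\zeta\1_n - e^{-x}\diag(s) - \sgn(x)\,\diag(e^{\hat p})\,ww^\dag\,\diag(e^{\hat p})\big)
= \prod_{k=1}^n(\zeta - s_k).
\label{eq:charpoly-id}
\end{equation}
Since $ww^\dag$ is rank one, the left side can be expanded by the matrix determinant lemma (or the standard formula for a diagonal-plus-rank-one perturbation): it equals
\begin{equation}
\prod_{k=1}^n(\zeta - e^{-x}s_k)
\Big(1 - \sgn(x)\sum_{k=1}^n \frac{s_k c_k}{\zeta - e^{-x}s_k}\Big).
\label{eq:rank-one-expand}
\end{equation}

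Next I would compare \eqref{eq:charpoly-id} with \eqref{eq:rank-one-expand} and solve for the $c_k$. Evaluating the resulting rational identity at $\zeta = e^{-x}s_j$ (after clearing the common denominator $\prod_k(\zeta - e^{-x}s_k)$) isolates $\sgn(x)\,s_j c_j$ as a product over the poles:
\begin{equation}
\sgn(x)\,s_j c_j
= \frac{\prod_{k=1}^n (e^{-x}s_j - s_k)}{\prod_{k\neq j}(e^{-x}s_j - e^{-x}s_k)}
= e^{(n-1)x}\,\frac{\prod_{k=1}^n (e^{-x}s_j - s_k)}{\prod_{k\neq j}(s_j - s_k)}.
\label{eq:c-formula}
\end{equation}
Because $s_j c_j \geq 0$ and $\sgn(x)^2 = 1$, the requirement is that the right-hand side of \eqref{eq:c-formula}, multiplied by $\sgn(x)$, be nonnegative for every $j$. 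This is exactly the inequality analyzed in \cite{FK11} (under the correspondence \eqref{3.52}), and the same sign-counting argument applies here: writing $s_k = e^{2\hat p_k}$ and tracking the signs of the factors $(e^{-x}s_j - s_k) = e^{-x}(e^{2\hat p_j} - e^{2\hat p_k + x})$ forces, for each consecutive pair, the inequality $\hat p_k - \hat p_{k+1} \geq |x|/2$; and the overall normalization (the $j=1$ case together with $\det K$ considerations, equivalently the fact that $\Lambda^2$ in \eqref{3.48} must be positive) forces $\hat p_1 \leq 0$. Hence $\hat p \in \bar{\cC}_x$ as claimed.

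The main obstacle is the sign analysis of \eqref{eq:c-formula}: one must check that the nonnegativity of all $n$ products $\sgn(x)\,s_j c_j$ is not merely necessary but pins down precisely the polyhedron $\bar\cC_x$ and nothing larger. The cleanest route, and the one I would take, is to invoke \cite{FK11} directly: equation \eqref{3.51} coincides verbatim with equation (5.7) of \cite{FK11} under the dictionary \eqref{3.52}, so the determination of the allowed range of $\hat p$ (there an analogous closed polyhedron) is already done there, and the only new point is to superimpose the extra restriction $\hat p_1 \leq 0$ coming from \eqref{3.49} and the positivity of $\Lambda^2$ in \eqref{3.48}, which is immediate. Thus the proposition follows by combining the cited range result with this elementary observation about the $\Lambda^2$ constraint.
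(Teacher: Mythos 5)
Your overall route coincides with the paper's own proof (given in Appendix \ref{sec:C.2}): the similarity in \eqref{3.57} gives equality of the characteristic polynomials of a diagonal matrix and of a diagonal-plus-rank-one matrix, which is exactly the identity \eqref{C.17}; evaluating at the shifted eigenvalues isolates the $|w_j|^2$ as in \eqref{C.18}; and the polyhedron emerges from $|w_j|^2\geq 0$. Two slips in the details: your displayed formula for $\sgn(x)\,s_jc_j$ is missing an overall minus sign (evaluating your rank-one expansion at $\zeta=e^{-x}s_j$ gives $\sgn(x)s_jc_j=-\prod_k(e^{-x}s_j-s_k)/\prod_{l\neq j}(e^{-x}s_j-e^{-x}s_l)$, which is what reproduces \eqref{C.18}); and $\hat p_1\leq 0$ has nothing to do with \eqref{3.49} or the positivity of $\Lambda^2$ in \eqref{3.48} — it follows at once from $e^{\hat p_1}=\sin q_1\leq 1$, i.e. from \eqref{3.42} and \eqref{3.53}, and is already recorded in \eqref{3.54}; what \eqref{3.48} together with \eqref{3.3} yields is $q_n>0$.

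The genuine gap is the degenerate case. Your evaluation at $\zeta=e^{-x}s_j$ isolates $c_j=|w_j|^2$ only when the $s_j=e^{2\hat p_j}$ are pairwise distinct, so your argument yields the separation inequalities $\hat p_k-\hat p_{k+1}\geq|x|/2$ only on the non-degenerate locus and says nothing about points of $\Phi_+^{-1}(\mu)$ at which some components of $\hat p$ coincide — and these are precisely the configurations that would violate the claimed polyhedron, so they must be excluded, not ignored. Invoking \cite{FK11} ``directly'' does not close this: as the paper itself stresses, \cite{FK11} settles the allowed range (and the explicit form of $\rho$ and $T$) under the assumption that the $\hat p_j$ are all different, while the statement that coincidences cannot occur requires a separate argument in the style of \cite{FK12}. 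This is in fact the bulk of the paper's proof in Appendix \ref{sec:C.2}: one assumes a coincidence pattern encoded by a partition, solves the analogue of \eqref{C.17} for the aggregated quantities $Z_m$, and then perturbs the degenerate configuration to nearby non-degenerate ones $\hat p(\varepsilon)$, obtaining a sign contradiction in the limit $\varepsilon\to 0$. Without this (or an equivalent) step your proof of Proposition \ref{prop:3.2} is incomplete.
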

Proposition \ref{prop:3.2} can be proved by merging the proofs of \cite[Lemma 5.2]{FK11}
and \cite[Theorem 2]{FK12}. This is presented in Appendix \ref{sec:C.2}.

The above-mentioned polynomial equality permits to find the possible vectors $w$ \eqref{3.56}
as well. If $\hat{p}$ and $w$ are given, then $T$ is determined by equation \eqref{3.57}
up to left-multiplication by a diagonal matrix and $\rho$ is determined by
\eqref{3.56} up to left-multiplication by elements from the little group of
$\hat v(x)$. Following this line of reasoning and controlling the ambiguities in the
same way as in \cite{FK11}, one can find the explicit form of the most general $\rho$ and
$T$ at \emph{any} fixed $\hat{p}\in\bar{\cC}_x$. In particular, it turns out that the
range of the vector $\hat{p}$ equals $\bar{\cC}_x$.

Before presenting the result, we need to prepare some notations. First of all, we
pick an arbitrary $\hat p\in\bar\cC_x$ and define the $n\times n$ matrix
$\theta(x,\hat p)$ as follows:
\begin{equation}
\theta(x,\hat p)_{jk}:=\frac{\sinh\big(\frac{x}{2}\big)}{\sinh(\hat p_k-\hat p_j)}
\prod_{\substack{m=1\\(m\neq j,k)}}^n\bigg[\frac{\sinh(\hat p_j-\hat p_m-\frac{x}{2})
\sinh(\hat p_k-\hat p_m+\frac{x}{2})}{\sinh(\hat p_j-\hat p_m)
\sinh(\hat p_k-\hat p_m)}\bigg]^{\tfrac{1}{2}},\quad j\neq k,
\label{3.59}
\end{equation}
and
\begin{equation}
\theta(x,\hat p)_{jj}:=\prod_{\substack{m=1\\(m\neq j)}}^n
\bigg[\frac{\sinh(\hat p_j-\hat p_m-\frac{x}{2})\sinh(\hat p_j-\hat p_m+
\frac{x}{2})}{\sinh^2(\hat p_j-\hat p_m)}\bigg]^{\tfrac{1}{2}}.
\label{3.60}
\end{equation}
All expressions under square root are non-negative and non-negative square roots are
taken. Note that $\theta(x,\hat p)$ is a real orthogonal matrix of determinant 1 for
which $\theta(x,\hat p)^{-1}=\theta(-x,\hat p)$ holds, too.

Next, define the real vector $r(x,\hat{p})\in\R^n$ with non-negative components
\begin{equation}
r(x,\hat p)_j=\sqrt{\frac{1-e^{-x}}{1-e^{-nx}}}\prod_{\substack{k=1\\(k\neq j)}}^n
\sqrt{\frac{1-e^{2\hat p_j-2\hat p_k-x}}{1-e^{2\hat p_j-2\hat p_k}}},
\quad j=1,\dots,n,
\label{3.61}
\end{equation}
and the real $n\times n$ matrix $\zeta(x,\hat{p})$,
\begin{equation}
\begin{split}
&\zeta(x,\hat p)_{aa}=r(x,\hat p)_a,\quad
\zeta(x,\hat p)_{ij}=\delta_{ij}-\frac{r(x,\hat p)_ir(x,\hat p)_j}{1+r(x,\hat p)_a},\\
&\zeta(x,\hat p)_{ia}=-\zeta(x,\hat p)_{ai}=r(x,\hat p)_i,\quad i,j\neq a,
\end{split}
\label{3.62}
\end{equation}
where $a=n$ if $x>0$ and $a=1$ if $x<0$. Introduce also the vector $v=v(x)$:
\begin{equation}
v(x)_j=\sqrt{\frac{n(e^x-1)}{1-e^{-nx}}}e^{-\tfrac{jx}{2}},\quad j=1,\ldots,n,
\label{3.63}
\end{equation}
which is related to $\hat{v}$ in \eqref{3.55} by
\begin{equation}
\hat v(x)=\sqrt{\sgn(x) e^{-x}\frac{e^{nx}-1}{n}}v(x).
\label{3.64}
\end{equation}
Finally, define the $n\times n$ matrix $\kappa(x)$ as
\begin{equation}
\begin{gathered}
\kappa(x)_{aa}=\frac{v(x)_a}{\sqrt{n}},\quad
\kappa(x)_{ij}=\delta_{ij}-\frac{v(x)_iv(x)_j}{n+\sqrt{n}v(x)_a},\\
\kappa(x)_{ia}=-\kappa(x)_{ai}=\frac{v(x)_i}{\sqrt{n}},\quad i,j\neq a,
\end{gathered}
\label{3.65}
\end{equation}
where, again, $a=n$ if $x>0$ and $a=1$ if $x<0$. It can be shown that both
$\kappa(x)$ and $\zeta(x,\hat{p})$ are orthogonal matrices of
determinant 1 for any $\hat{p}\in\bar{\cC}_x$.

Now we can state the main result of this section, whose proof is omitted since it is
a direct application of the analysis of the solutions of \eqref{3.51} presented in
\cite[Section 5]{FK11}.

\begin{proposition}
\label{prop:3.3}
Take any $\hat{p}\in\bar{\cC}_x$ and any diagonal unitary matrix $e^{\ri\hat{q}}\in\T_n$.
By using the preceding notations define $K\in\SL(2n,\C)$ \eqref{3.43} by setting
\begin{equation}
T=e^{\ri\hat{q}}\theta(-x,\hat{p}),\quad\rho=\kappa(x)\zeta(x,\hat{p})^{-1},
\label{3.66}
\end{equation}
and also applying the equations \eqref{3.44}, \eqref{3.45}, \eqref{3.48}, and \eqref{3.53}.
Then the element $K$ belongs to the constraint surface $\Phi_+^{-1}(\mu)$, and every
orbit of the gauge group $G_\mu$ \eqref{3.36} in $\Phi_+^{-1}(\mu)$ intersects the set
of elements $K$ just constructed.
\end{proposition}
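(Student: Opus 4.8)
The plan is to follow the route already laid out in Subsection~\ref{subsec:3.2.1}, where the constraint $\Phi_+(K)=\mu$ was reduced — modulo the residual gauge action — to the single matrix equation \eqref{3.51}, supplemented by the reconstruction formulae \eqref{3.44}, \eqref{3.45}, \eqref{3.48}, \eqref{3.50} and \eqref{3.43}. Because of the dictionary \eqref{3.52}, equation \eqref{3.51} is literally equation~(5.7) of \cite{FK11}, so the whole content of the proposition is the transplantation of the complete description of its solution set obtained in \cite[Section~5]{FK11} into the present setting.

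First I would record that the proposed matrices make sense and have the right structural type: $\theta(-x,\hat p)$ is real orthogonal of determinant~$1$ (so $T=e^{\ri\hat q}\theta(-x,\hat p)\in\UN(n)$), while $\kappa(x)$ and $\zeta(x,\hat p)$ are orthogonal of determinant~$1$ for every $\hat p\in\bar{\cC}_x$, so that $\rho=\kappa(x)\zeta(x,\hat p)^{-1}\in\SU(n)$, as demanded in \eqref{3.43}. Next I would verify that $(\rho,T)$ solves \eqref{3.51}. Using \eqref{3.55}, namely $\nu(x)\nu(x)^\dag=e^{-x}\1_n+\sgn(x)\hat v\hat v^\dag$, and writing $w=\rho^\dag\hat v$ as in \eqref{3.56}, equation~\eqref{3.51} becomes \eqref{3.57}; one then checks, exactly as in \cite[Section~5]{FK11}, that with $\rho=\kappa(x)\zeta(x,\hat p)^{-1}$ the vector $w$ has $|w_j|=r(x,\hat p)_j$ with $r$ as in \eqref{3.61}, that the characteristic polynomials of the two sides of \eqref{3.57} then agree for all $\hat p\in\bar{\cC}_x$, and that $T=e^{\ri\hat q}\theta(-x,\hat p)$ conjugates the left-hand side of \eqref{3.57} to $e^{2\hat p}$. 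The factor $e^{\ri\hat q}$ commutes with $e^{2\hat p}$ and hence drops out, so every $e^{\ri\hat q}\in\T_n$ is allowed. Once \eqref{3.51} holds I would run the computation of Subsection~\ref{subsec:3.2.1} backwards: the block identity $KK^\dag=b_Lb_L^\dag$ is recovered with $b_L$ of the form \eqref{3.46}, whence $\pi(b_L)=\mu_L$, while $\pi(b_R)=\mu_R$ is visible directly from \eqref{3.43}; thus $\Phi_+(K)=\mu$. Along the way \eqref{3.48} forces $q_n>0$, so $\sin q$ is invertible and the reconstruction formulae are legitimate.

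For the surjectivity onto gauge orbits I would invoke again Subsection~\ref{subsec:3.2.1}: every $K\in\Phi_+^{-1}(\mu)$ can be brought by a $G_\mu$-transformation to the form \eqref{3.43}, and for such a representative the triple $(\rho,T,\hat p)$ (via \eqref{3.53}) satisfies \eqref{3.51} with $\hat p\in\bar{\cC}_x$ by Proposition~\ref{prop:3.2}. The residual freedom still available inside $G_\mu$ acts by $T\mapsto\delta T$ with $\delta$ diagonal unitary and $\rho\mapsto\ell\rho$ with $\ell$ in the little group of $\hat v(x)$ (equivalently of $\nu(x)\nu(x)^\dag$), which is precisely the ambiguity group analyzed in \cite[Section~5]{FK11}. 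The classification there asserts that, modulo this ambiguity, the general solution of \eqref{3.51} at a fixed $\hat p\in\bar{\cC}_x$ is exactly \eqref{3.66} with $e^{\ri\hat q}$ ranging over $\T_n$ (the residual phases being absorbed into $\hat q$); hence every $G_\mu$-orbit in $\Phi_+^{-1}(\mu)$ contains a $K$ of the stated form.

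The step I expect to be the main obstacle is the faithful translation of \cite[Section~5]{FK11}: there $\hat p$ ranges over all of $\R^n$ and both ``$k_L$'' and ``$k_R$'' lie in $\UN(n)$, whereas here $\hat p$ is pinned to the polyhedron $\bar{\cC}_x$ \eqref{3.58} and $\rho$ is constrained to $\SU(n)$. One therefore has to check that the determinant-$1$ normalizations of $\theta$, $\kappa$, $\zeta$ are mutually compatible, that the restriction $\hat p\le 0$ coming from \eqref{3.42}, \eqref{3.49} does no harm to the explicit formulae \eqref{3.59}--\eqref{3.65} (all square-root arguments remain non-negative on $\bar{\cC}_x$, and the apparent vanishing denominators in $\theta$ at coinciding $\hat p_j$ are excluded by Proposition~\ref{prop:3.2} together with the regularity argument relegated to Appendix~\ref{sec:C.2}), and that the isotropy group $G_\mu$ \eqref{3.36}--\eqref{3.38} projects onto exactly the ambiguity group of \cite{FK11}. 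Granting these bookkeeping matters, the proposition follows at once from the cited analysis.
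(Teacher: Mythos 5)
Your proposal is correct and follows exactly the route the paper intends: the paper omits the proof, stating it is a direct application of the solution analysis of equation \eqref{3.51} in \cite[Section 5]{FK11}, which is precisely the reduction-to-\eqref{3.51}-plus-FK11-classification argument you spell out (your closing ``bookkeeping'' concerns are the same slight differences the paper dismisses in Remark \ref{rem:3.1}).
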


\begin{remark}
\label{rem:3.4}
It is worth spelling out the expression of the element $K$ given by
Proposition \ref{prop:3.3}. Indeed, we have
\begin{equation}
K(\hat p, e^{\ri \hat q}) =\begin{bmatrix}\rho &\0_n\\\0_n&\1_n\end{bmatrix}
\begin{bmatrix}\sqrt{\1_n-e^{2\hat p}} &\ri e^{\hat p}\\\ri e^{\hat p}&
\sqrt{\1_n-e^{2\hat p}}\end{bmatrix}
\begin{bmatrix}e^{-v}\1_n&\alpha\\\0_n&e^v\1_n\end{bmatrix}
\label{3.67}
\end{equation}
using the above definitions and
\begin{equation}
\alpha=-\ri\bigg[e^{\ri\hat q}\sqrt{e^{-2u}e^{-2\hat p}-e^{-2v}\1_n}\,\theta(-x,\hat p)
-e^v\sqrt{e^{-2\hat p}-\1_n}\bigg].
\label{3.68}
\end{equation}
\end{remark}

\begin{remark}
\label{rem:3.5}
Let us call $S$ the set of the elements $K(\hat{p},e^{\ri\hat{q}})$ constructed above,
and observe that this set is homeomorphic to
\begin{equation}
\bar\cC_x\times\T_n=\{(\hat p,e^{\ri\hat q})\}
\label{3.69}
\end{equation}
by its very definition. This is not a smooth manifold, because of the presence of the
boundary of $\bar\cC_x$. However, this does not indicate any `trouble' since it is
not true (at the boundary of $\bar{\cC}_x$) that $S$ intersects every gauge orbit in
$\Phi_+^{-1}(\mu)$ in a \emph{single} point. Indeed, it is instructive to verify that
if $\hat{p}$ is the special vertex of $\bar\cC_x$ for which
$\hat{p}_k=(1-k)|x|/2$ for $k=1,\dots,n$, then all points $K(\hat{p},e^{\ri\hat{q}})$
lie on a single gauge orbit. This, and further inspection, can lead to the idea that
the variables $\hat{q}_j$ should be identified with arguments of complex numbers,
which lose their meaning at the origin that should correspond to the boundary of
$\bar{\cC}_x$. Our Theorem \ref{thm:3.14} will show that this idea is correct. It is proper to
stress that we arrived at such idea under the supporting influence of previous works
\cite{Ru95,FK11}.
\end{remark}

\section{Characterization of the reduced system}
\label{sec:3.3}

The smoothness of the reduced phase space and the completeness of the reduced free
flows follows immediately if we can show that the gauge group $G_\mu$ acts in such a
way on $\Phi_+^{-1}(\mu)$ that the isotropy group of every point is just the finite
center of the symmetry group.
In Subsection \ref{subsec:3.3.1}, we prove that the factor of $G_\mu$ by the center
acts freely on $\Phi_+^{-1}(\mu)$. Then in Subsection \ref{subsec:3.3.2} we explain
that $\cC_x\times\T_n$ provides a model of a dense open subset of the reduced phase
space by means of the corresponding subset of $\Phi_+^{-1}(\mu)$ defined by
Proposition \ref{prop:3.3}. Adopting a key calculation from \cite{Ma15}, it turns out that
$(\hat p,e^{\ri\hat q})\in\cC_x\times\T_n$ are Darboux coordinates on this dense open
subset. In Subsection \ref{subsec:3.3.3}, we demonstrate that the reduction
of the Abelian Poisson algebra of free Hamiltonians \eqref{3.19} yields an integrable
system. Finally, in Subsection \ref{subsec:3.3.4}, we present a model of the full reduced phase
space, which is our main result in this chapter.

\subsection{Smoothness of the reduced phase space}
\label{subsec:3.3.1}

It is clear that the normal subgroup of the full symmetry group $G_+\times G_+$
consisting of matrices of the form
\begin{equation}
(\eta,\eta)\quad\text{with}\quad\eta=\diag(z\1_n,z\1_n),\quad z^{2n}=1
\label{3.70}
\end{equation}
acts trivially on the phase space. This subgroup is contained in $G_\mu$ \eqref{3.36}.
The corresponding factor group of $G_\mu$ is called `effective gauge group' and is
denoted by $\bar{G}_\mu$. We wish to show that $\bar{G}_\mu$ acts freely on the
constraint surface $\Phi_+^{-1}(\mu)$.

We need the following elementary lemmas.

\begin{lemma}
\label{lem:3.6}
Suppose that
\begin{equation}
g_+\begin{bmatrix}\cos q&\ri\sin q\\\ri\sin q&\cos q\end{bmatrix}h_+
=g_+'\begin{bmatrix}\cos q&\ri\sin q\\\ri\sin q&\cos q\end{bmatrix}h_+'
\label{3.71}
\end{equation}
with $g_+,h_+,g_+',h_+'\in G_+$ and $q=\diag(q_1,\dots,q_n)$ subject to
\begin{equation}
\frac{\pi}{2}\geq q_1>\dots>q_n>0.
\label{3.72}
\end{equation}
Then there exist diagonal matrices $m_1,m_2\in\T_n$ having the form
\begin{equation}
m_1=\diag(a,\xi),\quad m_2=\diag(b,\xi),
\quad\xi\in\T_{n-1},\ a,b\in\T_1,\quad\det(m_1m_2)=1,
\label{3.73}
\end{equation}
for which
\begin{equation}
(g_+',h_+')=(g_+\diag(m_1,m_2),\diag(m_2^{-1},m_1^{-1})h_+).
\label{3.74}
\end{equation}
If \eqref{3.72} holds with strict inequality $\frac{\pi}{2}>q_1$, then $m_1=m_2$,
i.e. $a=b$.
\end{lemma}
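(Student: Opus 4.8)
The plan is to compute directly the residual stabilizer of the generalised Cartan factor. Write $a(q):=\begin{bmatrix}\cos q&\ri\sin q\\\ri\sin q&\cos q\end{bmatrix}$ and set $\alpha:=g_+^{-1}g_+'$ and $\beta:=h_+'h_+^{-1}$. Both lie in $G_+=\mathrm{S}(\UN(n)\times\UN(n))$, hence are block-diagonal: $\alpha=\diag(\alpha_1,\alpha_2)$ and $\beta^{-1}=\diag(\gamma_1,\gamma_2)$ with $\alpha_i,\gamma_i\in\UN(n)$ and $\det(\alpha_1)\det(\alpha_2)=1$. The hypothesis \eqref{3.71} is equivalent to $\alpha\,a(q)\,\beta=a(q)$, i.e. to $\alpha\,a(q)=a(q)\,\beta^{-1}$. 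Since $\cos q$ and $\sin q$ are diagonal, comparing the four $n\times n$ blocks of this identity gives, entrywise, the system
\begin{equation*}
(\alpha_1)_{jk}\cos q_k=\cos q_j(\gamma_1)_{jk},\qquad(\alpha_2)_{jk}\cos q_k=\cos q_j(\gamma_2)_{jk},
\end{equation*}
\begin{equation*}
(\alpha_1)_{jk}\sin q_k=\sin q_j(\gamma_2)_{jk},\qquad(\alpha_2)_{jk}\sin q_k=\sin q_j(\gamma_1)_{jk}.
\end{equation*}

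First I would eliminate $\gamma_1,\gamma_2$ for those indices $j$ with $\cos q_j\neq0$, which by \eqref{3.72} means all $j\geq2$, and also $j=1$ precisely when $\pi/2>q_1$. For such $j$, solving two of the relations for $(\gamma_1)_{jk}$ and the other two for $(\gamma_2)_{jk}$ and cross-multiplying yields $(\alpha_1)_{jk}\cos q_k\sin q_j=(\alpha_2)_{jk}\sin q_k\cos q_j$ and $(\alpha_1)_{jk}\sin q_k\cos q_j=(\alpha_2)_{jk}\cos q_k\sin q_j$; adding and subtracting these gives $\bigl((\alpha_1)_{jk}-(\alpha_2)_{jk}\bigr)\sin(q_j+q_k)=0$ and $\bigl((\alpha_1)_{jk}+(\alpha_2)_{jk}\bigr)\sin(q_j-q_k)=0$. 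As $q_1,\dots,q_n$ are pairwise distinct and lie in $(0,\pi/2]$, one has $\sin(q_j-q_k)\neq0$ and $0<q_j+q_k<\pi$ for $j\neq k$, hence $(\alpha_1)_{jk}=(\alpha_2)_{jk}=0$ for $j\neq k$, while the $j=k$ relation gives $(\gamma_1)_{jj}=(\gamma_2)_{jj}=(\alpha_1)_{jj}=(\alpha_2)_{jj}$. The remaining index — $j=1$ in the case $q_1=\pi/2$ — is treated separately: then $\cos q_1=0$ together with the first and fourth relations forces $(\alpha_1)_{1k}=(\alpha_2)_{1k}=0$ for $k\geq2$ (the other off-diagonal entries of $\alpha_1,\alpha_2$ already vanish by the previous step), unitarity of $\alpha_1,\alpha_2$ gives $(\alpha_1)_{11},(\alpha_2)_{11}\in\T_1$, the column relations at $k=1$ identify $(\gamma_1)_{11}=(\alpha_2)_{11}$ and $(\gamma_2)_{11}=(\alpha_1)_{11}$, and no relation links $(\alpha_1)_{11}$ to $(\alpha_2)_{11}$.

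Putting the pieces together, $\alpha_1=m_1$ and $\alpha_2=m_2$ are diagonal unitaries sharing their last $n-1$ entries — a vector $\xi\in\T_{n-1}$ — with first entries $a:=(m_1)_1,\,b:=(m_2)_1\in\T_1$ that are equal precisely when $\pi/2>q_1$, and $\det(m_1m_2)=1$ because $\diag(m_1,m_2)=\alpha\in G_+$; moreover $\gamma_1=m_2$ and $\gamma_2=m_1$, i.e. $\beta^{-1}=\diag(m_2,m_1)$. Unwinding the definitions of $\alpha$ and $\beta$ gives exactly $(g_+',h_+')=(g_+\diag(m_1,m_2),\diag(m_2^{-1},m_1^{-1})h_+)$, and the supplementary statement for $\pi/2>q_1$ is the conclusion $a=b$. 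I would finish with the converse check that any $m_1,m_2$ of this shape actually satisfy \eqref{3.71}: the identity $\diag(m_1,m_2)\,a(q)\,\diag(m_2^{-1},m_1^{-1})=a(q)$ holds entrywise, the sole requirement being $\bigl((m_1)_j(m_2)_j^{-1}-1\bigr)\cos q_j=0$, which is automatic under \eqref{3.72} and the conditions on $a,b$. The only step requiring real care is the bookkeeping at $q_1=\pi/2$, where $\cos q_1$ vanishes and the combination argument above degenerates; this exceptional index is exactly what produces the dichotomy in the statement, while everything else is routine block-matrix algebra.
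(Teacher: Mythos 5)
Your argument is correct, and it reaches the same stabilizer reformulation the paper uses (Appendix \ref{sec:C.3}, Lemma \ref{lem:C.2}: $\alpha\,a(q)=a(q)\,\beta^{-1}$ with $\alpha=g_+^{-1}g_+'$, $\beta=h_+'h_+^{-1}$), but the mechanism by which you force diagonality is different. The paper works at the block level: the off-diagonal blocks give $\eta_L(1)=(\sin q)\eta_R(2)(\sin q)^{-1}$ and $\eta_L(2)=(\sin q)\eta_R(1)(\sin q)^{-1}$, and then \emph{unitarity} of $\eta_L(1)$ forces $\eta_R(2)$ to commute with $(\sin q)^2$, whose entries are distinct, hence $\eta_R(2)$ (and likewise $\eta_R(1)$) is diagonal; the diagonal blocks with $\cos q_k\neq0$ for $k\geq2$ then tie the last $n-1$ entries together and give $m_1=m_2$ when $\cos q_1\neq 0$. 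You instead eliminate $\gamma_1,\gamma_2$ entrywise and use $\sin(q_j\pm q_k)\neq0$, which exploits the range $(0,\pi/2]$ through the sum $q_j+q_k<\pi$ as well as the distinctness; this is more computational but equally elementary, and it has the small bonus that the sum/difference identities $\bigl((\alpha_1)_{jk}\mp(\alpha_2)_{jk}\bigr)\sin(q_j\pm q_k)=0$ are obtained by cross-multiplication without dividing by $\cos q_j$, so they kill \emph{all} off-diagonal entries uniformly — your separate treatment of row $1$ when $q_1=\pi/2$ is only really needed for the diagonal entry, where the dichotomy $a\neq b$ lives. Two cosmetic points: to get $(\alpha_2)_{1k}=0$ for $k\geq2$ in that exceptional case you should invoke the \emph{second} relation $(\alpha_2)_{1k}\cos q_k=\cos q_1(\gamma_2)_{1k}$ rather than the fourth (the fourth alone does not force it, though as just noted the sum/difference identities already do); and you should state explicitly that the off-diagonal entries of $\gamma_1,\gamma_2$ vanish, which follows in one line from $(\gamma_1)_{jk}=(\alpha_2)_{jk}\sin q_k/\sin q_j$ and $(\gamma_2)_{jk}=(\alpha_1)_{jk}\sin q_k/\sin q_j$ since $\sin q_j\neq0$, so that $\beta^{-1}=\diag(m_2,m_1)$ is fully justified. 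Neither point is a genuine gap.
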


\begin{lemma}
\label{lem:3.7}
Pick any $\hat{p}\in\bar\cC_x$ and consider the matrix $\theta(x,\hat p)$ given
by \eqref{3.59} and \eqref{3.60}. Then the entries $\theta_{n,1}(x,\hat p)$ and
$\theta_{j,j+1}(x,\hat p)$ are all non-zero if $x>0$ and the entries
$\theta_{1,n}(x,\hat p)$ and $\theta_{j+1,j}(x,\hat p)$ are all non-zero if $x<0$.
\end{lemma}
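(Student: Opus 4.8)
The plan is to read off from \eqref{3.59} and \eqref{3.60} exactly when an off-diagonal entry $\theta(x,\hat p)_{jk}$ ($j\neq k$) can vanish or blow up, and then to check that for the two families of index pairs in the statement this never happens on $\bar\cC_x$. Written out, $\theta(x,\hat p)_{jk}$ is the product of the scalar $\sinh\!\big(\tfrac{x}{2}\big)$, the reciprocal $1/\sinh(\hat p_k-\hat p_j)$, and, for each $m\neq j,k$, the factor $\big[\sinh(\hat p_j-\hat p_m-\tfrac{x}{2})\sinh(\hat p_k-\hat p_m+\tfrac{x}{2})\big/\sinh(\hat p_j-\hat p_m)\sinh(\hat p_k-\hat p_m)\big]^{1/2}$. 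Since $x\neq 0$ by \eqref{3.3}, the first factor is nonzero; since $\hat p\in\bar\cC_x$ forces $\hat p_1\geq\dots\geq\hat p_n$ with every consecutive gap $\hat p_l-\hat p_{l+1}\geq|x|/2>0$, the components $\hat p_1,\dots,\hat p_n$ are pairwise distinct, so every $\sinh(\hat p_i-\hat p_l)$ with $i\neq l$ is finite and nonzero. This disposes of the prefactor and of all denominators under the square roots, and reduces the claim to showing that the numerators never vanish, i.e.\ that $\hat p_j-\hat p_m\neq\tfrac{x}{2}$ and $\hat p_k-\hat p_m\neq-\tfrac{x}{2}$ for all $m\neq j,k$.

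Next I would settle the case $x>0$ (so $|x|=x$) by elementary estimates using only monotonicity and the gap bounds. For $(j,k)=(n,1)$ the index $m$ runs over $\{2,\dots,n-1\}$: here $\hat p_n-\hat p_m<0<\tfrac{x}{2}$, while $\hat p_1-\hat p_m=\sum_{l=1}^{m-1}(\hat p_l-\hat p_{l+1})\geq(m-1)\tfrac{|x|}{2}\geq\tfrac{x}{2}>-\tfrac{x}{2}$, so both numerators are nonzero. For $(j,k)=(j,j+1)$ with $1\leq j\leq n-1$ and $m\notin\{j,j+1\}$: if $m<j$ then $\hat p_j-\hat p_m<0<\tfrac{x}{2}$ and $\hat p_{j+1}-\hat p_m=-\sum_{l=m}^{j}(\hat p_l-\hat p_{l+1})\leq-(j+1-m)\tfrac{|x|}{2}\leq-|x|<-\tfrac{x}{2}$; if $m\geq j+2$ then $\hat p_{j+1}-\hat p_m>0>-\tfrac{x}{2}$ and $\hat p_j-\hat p_m=\sum_{l=j}^{m-1}(\hat p_l-\hat p_{l+1})\geq(m-j)\tfrac{|x|}{2}\geq|x|>\tfrac{x}{2}$. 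In all cases the relevant arguments avoid $\pm\tfrac{x}{2}$, so $\theta_{n,1}(x,\hat p)\neq 0$ and $\theta_{j,j+1}(x,\hat p)\neq 0$.

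Finally I would reduce $x<0$ to the previous case using the two properties recorded just after \eqref{3.60}: $\theta(x,\hat p)$ is real orthogonal and $\theta(x,\hat p)^{-1}=\theta(-x,\hat p)$. Combining these gives $\theta(x,\hat p)_{jk}=\theta(x,\hat p)^{\top}_{kj}=\theta(-x,\hat p)_{kj}$. Applying the $x>0$ conclusion with $-x$ in place of $x$ (note $\bar\cC_x=\bar\cC_{-x}$), the entries $\theta(-x,\hat p)_{n,1}$ and $\theta(-x,\hat p)_{j,j+1}$ are nonzero, which is precisely the assertion that $\theta(x,\hat p)_{1,n}$ and $\theta(x,\hat p)_{j+1,j}$ are nonzero when $x<0$.

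The argument is purely a bookkeeping exercise, so there is no serious obstacle; the only point requiring a little care is verifying that for the \emph{specific} pairs $(n,1)$ and $(j,j+1)$ — which are extremal with respect to the ordering of $\hat p$ — no admissible $m$ can push $\hat p_j-\hat p_m$ onto $\tfrac{x}{2}$ or $\hat p_k-\hat p_m$ onto $-\tfrac{x}{2}$, together with invoking the orthogonality/inverse identity correctly to handle the sign flip $x\mapsto-x$. (One should also note in passing that the same estimates show the \emph{other} entries in the first/last column and the first/last sub- and super-diagonal may well vanish, which is why only these particular pairs are claimed.)
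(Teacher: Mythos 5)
Your proof is correct, and it is essentially the argument the paper has in mind: the text simply states that the property "is easily checked by inspection" (citing \cite{Ru95,FK11}), and your case analysis on $\bar\cC_x$ is exactly that inspection written out, with the $x<0$ case handled via the identity $\theta(-x,\hat p)=\theta(x,\hat p)^{\top}$ rather than a symmetric direct check — a harmless variation.
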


For convenience, we present the proof of Lemma \ref{lem:3.6} in Appendix \ref{sec:C.3}.
The property recorded in Lemma \ref{lem:3.7} is known \cite{Ru95,FK11}, and is
easily checked by inspection.

\begin{proposition}
\label{prop:3.8}
The effective gauge group $\bar G_\mu$ acts freely on $\Phi_+^{-1}(\mu)$.
\end{proposition}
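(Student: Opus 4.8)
The plan is to prove freeness by computing isotropy groups. Concretely, I would show that for every $K\in\Phi_+^{-1}(\mu)$ the stabilizer of $K$ inside $G_\mu$ equals the central subgroup displayed in \eqref{3.70} — which is exactly the kernel of $G_\mu\to\bar{G}_\mu$ — so that $\bar{G}_\mu$ acts with trivial stabilizers. Since conjugation by $(\eta_L,\eta_R)\in G_\mu$ carries the stabilizer of $K$ onto that of $\eta_L K\eta_R^{-1}$, and by Proposition \ref{prop:3.3} every $G_\mu$-orbit in $\Phi_+^{-1}(\mu)$ meets the set $S$ of Remark \ref{rem:3.5} consisting of the elements $K(\hat{p},e^{\ri\hat{q}})$ with $\hat{p}\in\bar{\cC}_x$ and $e^{\ri\hat{q}}\in\T_n$, it is enough to treat $K=K(\hat{p},e^{\ri\hat{q}})$.

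First I would take $(\eta_L,\eta_R)\in G_\mu\subset G_+\times G_+$ with $\eta_L K=K\eta_R$ and use the factorization $K=\diag(\rho,\1_n)\,W\,B$ from \eqref{3.43}, where $W$ is the middle (Cartan) factor and $B$ is the block upper triangular matrix with diagonal blocks $e^{-v}\1_n,e^v\1_n$ and upper-right block $\alpha$ given by \eqref{3.68}. Multiplying $\eta_L K=K\eta_R$ on the right by $B^{-1}$ turns it into $\eta_L\diag(\rho,\1_n)W=\diag(\rho,\1_n)W\,(B\eta_R B^{-1})$. The left-hand side lies in $\SU(2n)$, hence so does $B\eta_R B^{-1}$; but a short computation shows $B\eta_R B^{-1}$ is block upper triangular with diagonal blocks $\eta_R(1),\eta_R(2)$ and upper-right block $e^{-v}(\alpha\eta_R(2)-\eta_R(1)\alpha)$, so unitarity forces $\alpha\eta_R(2)=\eta_R(1)\alpha$ and $B\eta_R B^{-1}=\eta_R$. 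Setting $g_+:=\diag(\rho,\1_n)^{-1}\eta_L\diag(\rho,\1_n)\in G_+$, the stabilizer equation then reduces to $g_+W\eta_R^{-1}=W$, a coincidence of two generalized Cartan decompositions of $W$.

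Next I would invoke Lemma \ref{lem:3.6}: since $\hat{p}\in\bar{\cC}_x$ has strictly decreasing components (the gaps being at least $|x|/2>0$) and finite $\hat{p}_n$, the angles obey $\pi/2\geq q_1>\dots>q_n>0$, so the lemma gives $g_+=\diag(m_1^{-1},m_2^{-1})$ and $\eta_R=\diag(m_2^{-1},m_1^{-1})$ with $m_1=\diag(a,\xi)$, $m_2=\diag(b,\xi)$ for some $\xi\in\T_{n-1}$, $a,b\in\T_1$; equivalently $\eta_L(1)=\rho\,m_1^{-1}\rho^{-1}$, $\eta_L(2)=\eta_R(1)=m_2^{-1}$, $\eta_R(2)=m_1^{-1}$. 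This still leaves a whole torus of candidate stabilizers, and collapsing it is the heart of the argument. To do so I would substitute back into $\alpha\eta_R(2)=\eta_R(1)\alpha$: writing $\alpha=-\ri\big(D_1\theta(-x,\hat{p})-D_2\big)$ with $D_1,D_2$ diagonal and $D_1$ invertible (by \eqref{3.48}--\eqref{3.49}), and using that diagonal matrices commute, this equation becomes $D_1\big(\theta(-x,\hat{p})m_1^{-1}-m_2^{-1}\theta(-x,\hat{p})\big)=D_2m_1^{-1}-m_2^{-1}D_2$, whose right-hand side is diagonal. Hence $\theta(-x,\hat{p})_{jk}\big((m_1^{-1})_k-(m_2^{-1})_j\big)=0$ for all $j\neq k$, so $(m_1^{-1})_k=(m_2^{-1})_j$ whenever $\theta(-x,\hat{p})_{jk}\neq0$. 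Applying Lemma \ref{lem:3.7} to $\theta(-x,\hat{p})$ — for which a full sub-/super-diagonal together with one corner entry are non-zero, the sign of $-x$ selecting which — and chaining these equalities around the resulting cycle of indices forces $a=b$ and $\xi=a\1_{n-1}$, i.e. $m_1=m_2=a\1_n$. Therefore $\eta_L=\eta_R=a^{-1}\1_{2n}$, and $\eta_R\in\SU(2n)$ forces $a^{2n}=1$, so $(\eta_L,\eta_R)$ is exactly an element of \eqref{3.70}; conversely every such pair fixes $K$. Thus the stabilizer of $K$ is \eqref{3.70}, proving the claim.

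The step I expect to be the main obstacle is precisely this last collapse of the residual torus coming from Lemma \ref{lem:3.6}: the generalized Cartan decomposition alone is too coarse to pin down $(\eta_L,\eta_R)$, and one is compelled to bring in the off-diagonal content of the momentum constraint — encoded here in the matrix $\alpha$ of \eqref{3.68} and in the relation $\alpha\eta_R(2)=\eta_R(1)\alpha$ — together with the sharp non-vanishing pattern of the orthogonal matrix $\theta(-x,\hat{p})$ recorded in Lemma \ref{lem:3.7}. A secondary subtlety is the boundary of $\bar{\cC}_x$, where $q_1=\pi/2$ makes the Cartan factor $W$ degenerate so that Lemma \ref{lem:3.6} a priori permits $a\neq b$; the same index-chaining argument disposes of this case without modification.
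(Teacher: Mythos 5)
Your proof is correct and follows essentially the same route as the paper: reduce to the cross-section $S$ via Proposition \ref{prop:3.3}, use the uniqueness of the Iwasawa-type factorization together with Lemma \ref{lem:3.6} to force $(\eta_L,\eta_R)$ into the residual torus, and then kill that torus through the relation $\alpha=m_2\alpha m_1^{-1}$ and the non-vanishing pattern of $\theta(-x,\hat p)$ from Lemma \ref{lem:3.7}. The only differences are presentational — you verify the vanishing of the upper-right block of $B\eta_R B^{-1}$ by hand instead of citing Iwasawa uniqueness, and you spell out the index-chaining that the paper leaves implicit.
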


\begin{proof}
Since every gauge orbit intersects the set $S$ specified by Proposition
\ref{prop:3.3}, it is enough to show that if $(\eta_L,\eta_R)\in G_\mu$ maps $K\in S$
\eqref{3.67} to itself, then $(\eta_L,\eta_R)$ equals some element $(\eta,\eta)$ given
in \eqref{3.70}. For $K$ of the form \eqref{3.43}, we can spell out
$K'\equiv\eta_L K\eta_R^{-1}$ as
\begin{equation}
K'=\begin{bmatrix}\eta_L(1)\rho&\0_n\\\0_n&\eta_L(2)\end{bmatrix}
\begin{bmatrix}\cos q&\ri\sin q\\\ri\sin q&\cos q\end{bmatrix}
\begin{bmatrix}\eta_R(1)^{-1}&\0_n\\\0_n&\eta_R(2)^{-1}\end{bmatrix}
\begin{bmatrix}e^{-v}\1_n&\eta_R(1)\alpha\eta_R(2)^{-1}\\\0_n&e^v\1_n\end{bmatrix}.
\label{3.75}
\end{equation}
The equality $K'=K$ implies by the uniqueness of the Iwasawa decomposition and Lemma
\ref{lem:3.6} that we must have
\begin{equation}
\eta_L(2)=\eta_R(1)=m_2,\quad\eta_R(2)=m_1,\quad\eta_L(1)\rho=\rho m_1,
\label{3.76}
\end{equation}
with some diagonal unitary matrices having the form \eqref{3.73}. By using that
$\eta_R(1)=m_2$ and $\eta_R(2)=m_1$, the Iwasawa decomposition of $K'=K$ in
\eqref{3.67} also entails the relation
\begin{equation}
\alpha=m_2\alpha m_1^{-1}.
\label{3.77}
\end{equation}
Because of \eqref{3.68}, the off-diagonal components of the matrix equation \eqref{3.77}
yield
\begin{equation}
\theta(-x,\hat p)_{jk}=\big(m_2\theta(-x,\hat p)m_1^{-1}\big)_{jk},
\quad\forall j\neq k.
\label{3.78}
\end{equation}
This implies by means of Lemma \ref{lem:3.7} and equation \eqref{3.73} that
$m_1=m_2=z\1_n$ is a scalar matrix. But then $\eta_L(1)=m_1$ follows from
$\eta_L(1)\rho=\rho m_1$, and the proof is complete.
\end{proof}

Proposition \ref{prop:3.8} and the general results gathered in Appendix \ref{sec:C.4}
imply the following theorem, which is one of our main results.

\begin{theorem}
\label{thm:3.9}
The constraint surface $\Phi_+^{-1}(\mu)$ is an embedded submanifold of $\SL(2n,\C)$ and
the reduced phase space $M$ \eqref{3.35} is a smooth manifold for which the natural
projection $\pi_\mu\colon\Phi_+^{-1}(\mu)\to M$ is a smooth submersion.
\end{theorem}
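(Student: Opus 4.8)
The plan is to deduce Theorem~\ref{thm:3.9} from Proposition~\ref{prop:3.8} by invoking the standard machinery for quotients by proper free group actions, exactly as collected in Appendix~\ref{sec:C.4}. The key structural fact is that the symmetry group $G_+\times G_+$ is \emph{compact} (being a closed subgroup of $\SU(2n)\times\SU(2n)$), hence so is its subgroup $G_\mu$ \eqref{3.36}; therefore the $G_\mu$-action on $\Phi_+^{-1}(\mu)$ is automatically proper. Proposition~\ref{prop:3.8} says that the effective gauge group $\bar G_\mu=G_\mu/(\text{finite center})$ acts \emph{freely}. A proper free action of a compact (more generally, Lie) group on a manifold has a manifold quotient with the orbit projection a smooth submersion; this is the content of the quotient manifold theorem, which I would cite from Appendix~\ref{sec:C.4}.

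The one thing that genuinely needs argument before that citation is that $\Phi_+^{-1}(\mu)$ is itself a manifold, i.e.\ that $\mu$ is attained transversally. First I would recall the characterization of the constraint set: by Proposition~\ref{prop:3.3} every $G_\mu$-orbit in $\Phi_+^{-1}(\mu)$ meets the explicitly parametrized set $S$ of elements $K(\hat p,e^{\ri\hat q})$ with $(\hat p,e^{\ri\hat q})\in\bar{\cC}_x\times\T_n$, so that
\begin{equation}
\Phi_+^{-1}(\mu)=(G_+\times G_+)\cdot S=\bar G_\mu\cdot S
\label{eq:Phiinv-orbit-S}
\end{equation}
as a set. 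Then, following the Poisson-Lie momentum map formalism of Lu, I would argue that smoothness of the level set follows from the freeness of the $\bar G_\mu$-action just as in the ordinary Marsden--Weinstein setting: concretely, I would combine \eqref{eq:Phiinv-orbit-S} with a local slice argument. Near a point $K_0\in S$, the map $(\eta_L,\eta_R,\text{transverse data})\mapsto\eta_L K\eta_R^{-1}$ is an immersion onto a neighbourhood of $K_0$ in $\Phi_+^{-1}(\mu)$ precisely because the isotropy is trivial modulo the center; this exhibits $\Phi_+^{-1}(\mu)$ locally as a graph, hence as an embedded submanifold of $\SL(2n,\C)$. Alternatively, and more cleanly, one invokes the general proposition from Appendix~\ref{sec:C.4}: for a Poisson-Lie momentum map in Lu's sense, if the gauge group acts freely on the level set (as a topological space carved out by the constraints), then the level set is an embedded submanifold and the quotient is a smooth manifold with submersive projection. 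Proposition~\ref{prop:3.8} supplies exactly the freeness hypothesis.

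Having $\Phi_+^{-1}(\mu)$ as an embedded submanifold and $\bar G_\mu$ acting freely and properly on it, the quotient manifold theorem gives that $M=\Phi_+^{-1}(\mu)/G_\mu=\Phi_+^{-1}(\mu)/\bar G_\mu$ is a smooth manifold and $\pi_\mu$ is a smooth submersion, which is the assertion. The main obstacle — and the reason the bulk of the work is already done in Proposition~\ref{prop:3.8} rather than here — is the freeness of the action, which in turn rested on Lemma~\ref{lem:3.6} (controlling the ambiguity in the generalized Cartan decomposition) and Lemma~\ref{lem:3.7} (the non-vanishing of the relevant entries of $\theta(x,\hat p)$), since these force the stabilizing pair $(\eta_L,\eta_R)$ to be central. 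Granting Proposition~\ref{prop:3.8}, Theorem~\ref{thm:3.9} is then essentially an application of standard reduction theory, and I would keep the write-up short, pointing to the general statements assembled in Appendix~\ref{sec:C.4} for the Poisson-Lie analogue of the Marsden--Weinstein theorem.
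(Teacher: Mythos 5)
Your proposal is essentially the paper's own argument: the paper proves Theorem \ref{thm:3.9} exactly by combining the freeness statement of Proposition \ref{prop:3.8} with the general Poisson-Lie momentum map results of Appendix \ref{sec:C.4} (local freeness of the gauge action on the level set implies $\mu$ is a regular value in Lu's sense, hence $\Phi_+^{-1}(\mu)$ is embedded), and then the standard quotient theorem for the free action of the compact, hence properly acting, group $\bar G_\mu$ gives the smooth quotient and the submersion. The only caveat is that your first sketched alternative (a local slice/graph argument built on the parametrized set $S$) would be delicate at points with $\hat p\in\partial\bar\cC_x$, where $S$ is not a smooth cross-section (cf.\ Remark \ref{rem:3.5}), so the Appendix \ref{sec:C.4} route you ultimately prefer is the one to keep.
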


\subsection{Model of a dense open subset of the reduced phase space}
\label{subsec:3.3.2}

Let us denote by $S^o\subset S$ the subset of the elements $K$ given by Proposition
\ref{prop:3.3} with $\hat p$ in the interior $\cC_x$ of the polyhedron $\bar\cC_x$ \eqref{3.58}.
Explicitly, we have
\begin{equation}
S^o=\{K(\hat p,e^{\ri\hat q})\mid(\hat p,e^{\ri\hat q})\in\cC_x\times\T_n\},
\label{3.79}
\end{equation}
where $K(\hat p,e^{\ri\hat q})$ stands for the expression \eqref{3.67}.
Note that $S^o$ is in bijection with $\cC_x\times\T_n$. The next lemma
says that no two different point of $S^o$ are gauge equivalent.

\begin{lemma}
\label{lem:3.10}
The intersection of any gauge orbit with $S^o$ consists of at most one point.
\end{lemma}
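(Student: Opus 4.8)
The plan is to show that if two points $K(\hat p, e^{\ri\hat q})$ and $K(\hat p', e^{\ri\hat q'})$ of $S^o$ are gauge equivalent, say $\eta_L K(\hat p, e^{\ri\hat q})\eta_R^{-1} = K(\hat p', e^{\ri\hat q'})$ for some $(\eta_L,\eta_R)\in G_\mu$, then in fact $(\hat p, e^{\ri\hat q}) = (\hat p', e^{\ri\hat q'})$. First I would look at the $(22)$ block of the Iwasawa factor, equivalently at the matrix $\Omega$ in \eqref{3.44}: since $\Omega\Omega^\dagger = \Lambda^2 = e^{-2u}\1_n - e^{-2v}(\sin q)^2$ is, by \eqref{3.48}, a diagonal matrix built from $\hat p$ with strictly ordered entries (because $\hat p\in\cC_x$ forces $0>\hat p_1>\dots>\hat p_n$), the gauge action by $(\eta_L,\eta_R)$ must carry $\Lambda^2$ to $\Lambda'^2$, and since the decomposition \eqref{3.41}--\eqref{3.42} determines $q$ (hence $\hat p$ via \eqref{3.53}) uniquely, we get $\hat p = \hat p'$ immediately. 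This also forces $q$ to satisfy the strict inequalities $\tfrac\pi2 \geq q_1 > \dots > q_n > 0$, so Lemma \ref{lem:3.6} applies.

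With $\hat p = \hat p'$ fixed, the remaining task is to pin down $e^{\ri\hat q}$. Here I would reuse almost verbatim the argument in the proof of Proposition \ref{prop:3.8}: writing $K' = \eta_L K \eta_R^{-1}$ as in \eqref{3.75} and comparing with the Iwasawa-unique form \eqref{3.67} of $K(\hat p, e^{\ri\hat q'})$, Lemma \ref{lem:3.6} gives $\eta_L(2) = \eta_R(1) = m_2$, $\eta_R(2) = m_1$, $\eta_L(1)\rho = \rho' m_1$ with $m_1,m_2$ diagonal unitary of the restricted form \eqref{3.73}, and the $(12)$-block relation forces $\alpha' = m_2 \alpha m_1^{-1}$, where $\alpha,\alpha'$ are given by \eqref{3.68} with $T = e^{\ri\hat q}\theta(-x,\hat p)$, $T' = e^{\ri\hat q'}\theta(-x,\hat p)$. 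The off-diagonal entries of this relation read $\theta(-x,\hat p)_{jk}\,e^{\ri\hat q_j} \cdot(\text{const}) = m_2 \big(\theta(-x,\hat p) e^{\ri\hat q}\big)_{jk} m_1^{-1} \cdot(\text{const})$ — more precisely I must track how the factor $e^{\ri\hat q}$ sits relative to $\theta(-x,\hat p)$ in $\alpha$ (it multiplies on the left, by the row index) — and by Lemma \ref{lem:3.7} the relevant entries $\theta_{n,1}$, $\theta_{j,j+1}$ (for $x>0$; the transposed set for $x<0$) are nonzero, so these equations determine $m_1$ and $m_2$ as scalar matrices $z\1_n$ and also force the ``extra'' phase matrix $e^{\ri\hat q}(e^{\ri\hat q'})^{-1}$ to be scalar. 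Since $e^{\ri\hat q}, e^{\ri\hat q'} \in \T_n$ with entries whose product need not be constrained beyond living in $\T_n$, the scalar constraint together with the diagonal entries of $\alpha' = m_2\alpha m_1^{-1}$ (which involve $\theta_{jj}\neq 0$) forces $e^{\ri\hat q} = e^{\ri\hat q'}$ outright.

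The one point that needs a little care — and which I expect to be the main (minor) obstacle — is the exact bookkeeping of where $e^{\ri\hat q}$ appears in $\alpha$ and how conjugation by the block-diagonal $(m_2, m_1)$ acts on it, since $\alpha$ in \eqref{3.68} is a sum of two terms, only one of which carries the factor $e^{\ri\hat q}$; I would handle this by noting the other term $e^v\sqrt{e^{-2\hat p}-\1_n}$ is diagonal and real, so the off-diagonal part of $\alpha$ is purely $-\ri e^{\ri\hat q}\sqrt{e^{-2u}e^{-2\hat p}-e^{-2v}\1_n}\,\theta(-x,\hat p)$, and the off-diagonal part of the equation $\alpha' = m_2\alpha m_1^{-1}$ becomes a clean relation among $e^{\ri\hat q}\theta(-x,\hat p)$, $e^{\ri\hat q'}\theta(-x,\hat p)$ conjugated by diagonal unitaries, to which Lemma \ref{lem:3.7} applies directly. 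Everything else is a repetition of the freeness argument of Proposition \ref{prop:3.8} specialized to the case where $K$ is already in the cross-section $S^o$, so I would keep the write-up short, citing Lemma \ref{lem:3.6}, Lemma \ref{lem:3.7}, and the relevant formulae \eqref{3.67}, \eqref{3.68}, \eqref{3.75} and referring back to the proof of Proposition \ref{prop:3.8} for the parts that are identical.
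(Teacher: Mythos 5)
Your overall plan---first forcing $\hat p'=\hat p$ from the invariance of $q$ in the decomposition \eqref{3.41}, then exploiting Lemma \ref{lem:3.6} and the induced relation between the $\alpha$-blocks of \eqref{3.67}---is the right skeleton, but the central deductive step is broken. You claim that the off-diagonal entries of $\alpha'=m_2\alpha m_1^{-1}$, via the nonvanishing entries guaranteed by Lemma \ref{lem:3.7}, ``determine $m_1$ and $m_2$ as scalar matrices and force $e^{\ri\hat q}(e^{\ri\hat q'})^{-1}$ to be scalar.'' That inference works in the proof of Proposition \ref{prop:3.8} only because there $K'=K$, so the phase ratios are already known to equal $1$ and each nonvanishing entry yields $1=(m_2)_j(m_1)_k^{-1}$, which chains around to make $m_1,m_2$ scalar. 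In the present lemma the ratios $c_j=e^{\ri(\hat q'_j-\hat q_j)}$ are exactly the unknowns: each usable off-diagonal entry only gives $c_j=(m_2)_j(m_1)_k^{-1}$, and with $m_1=\diag(a,\xi)$, $m_2=\diag(b,\xi)$ of the form \eqref{3.73} one may pick $\xi,a,b$ freely and solve these equations for $c$; nothing is pinned down, so neither the scalarity of $m_1,m_2$ nor that of the phase ratio follows. Your final sentence then rests partly on this false premise, and the one fact that would close the argument---$\theta(-x,\hat p)_{jj}\neq 0$---appears only parenthetically and is never justified.

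The correct and much shorter route, which is the paper's, uses precisely that diagonal nonvanishing: by \eqref{3.60}, $\theta(-x,\hat p)_{jj}\neq 0$ for every $\hat p$ in the open polyhedron $\cC_x$ (this is the input that can fail on the boundary of $\bar\cC_x$, consistently with Remark \ref{rem:3.5}, and hence the real reason the statement is restricted to $S^o$). Since $0>\hat p_1$ gives $q_1<\pi/2$ strictly, the last clause of Lemma \ref{lem:3.6} yields $m_1=m_2=m\in\T_n$, and the off-diagonal block relation reduces to $e^{\ri\hat q'}\theta(-x,\hat p)=m\,e^{\ri\hat q}\theta(-x,\hat p)\,m^{-1}$ as in \eqref{3.83}. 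Conjugation by the diagonal $m$ leaves diagonal entries unchanged, so comparing the $(j,j)$ entries gives $e^{\ri\hat q'_j}\theta(-x,\hat p)_{jj}=e^{\ri\hat q_j}\theta(-x,\hat p)_{jj}$, hence $e^{\ri\hat q'}=e^{\ri\hat q}$ outright; Lemma \ref{lem:3.7} and any determination of $m$ are unnecessary here. (One could instead keep $m_1\neq m_2$ and combine your off-diagonal relations with the diagonal ones, using that the diagonal term $e^v\sqrt{e^{-2\hat p}-\1_n}$ has nonzero first entry for $\hat p_1<0$, but that computation must then actually be carried out rather than asserted.)
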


\begin{proof}
Suppose that
\begin{equation}
K':=K(\hat p',e^{\ri\hat q'})=\eta_L K(\hat p,e^{\ri\hat q})\eta_R^{-1}
\label{3.80}
\end{equation}
with some $(\eta_L,\eta_R)\in G_\mu$. By spelling out the gauge transformation as in
\eqref{3.75}, using the shorthand $\sin q=e^{\hat p}$, we observe that $\hat p'=\hat p$
since $q$ in \eqref{3.41} does not change under the action of $G_+\times G_+$. Since
now we have $\frac{\pi}{2}>q_1$ (which is equivalent to $0>\hat p_1$), the arguments
applied in the proof of Proposition \ref{prop:3.8} permit to translate the equality \eqref{3.80}
into the relations
\begin{equation}
\eta_L(2)=\eta_R(1)=\eta_R(2)=m,\quad\eta_L(1)\rho=\rho m,
\label{3.81}
\end{equation}
complemented with the condition
\begin{equation}
\alpha(\hat p,e^{\ri\hat q'})=m\alpha(\hat p,e^{\ri\hat q})m^{-1},
\label{3.82}
\end{equation}
which is equivalent to
\begin{equation}
e^{\ri\hat q'}\theta(-x,\hat p)=me^{\ri\hat q}\theta(-x,\hat p)m^{-1}.
\label{3.83}
\end{equation}
We stress that $m\in\T_n$ and notice from \eqref{3.60} that for $\hat p\in\cC_x$
all the diagonal entries $\theta(-x,\hat p)_{jj}$ are non-zero. Therefore we conclude
from \eqref{3.83} that $e^{\ri\hat q'}=e^{\ri q}$. This finishes the proof, but of
course we can also confirm that $m=z\1_n$, consistently with Proposition \ref{prop:3.8}.
\end{proof}

Now we introduce the map $\cP\colon\SL(2n,\C)\to\R^n$ by
\begin{equation}
\cP\colon K=g_Lb_R^{-1}\mapsto\hat p,
\label{3.84}
\end{equation}
defined by writing $g_L$ in the form \eqref{3.41} with $\sin q=e^{\hat p}$.
The map $\cP$ gives rise to a map $\bar \cP\colon M\to\R^n$ verifying
\begin{equation}
\bar{\cP}(\pi_\mu(K))=\cP(K),\quad\forall K\in\Phi_+^{-1}(\mu),
\label{3.85}
\end{equation}
where $\pi_\mu$ is the canonical projection \eqref{3.39}. We notice that, since the
`eigenvalue parameters' $\hat p_j$ $(j=1,\dots,n)$ are pairwise different for any
$K\in\Phi_+^{-1}(\mu)$, $\bar \cP$ is a smooth map. The continuity of $\bar \cP$ implies
that
\begin{equation}
M^o:=\bar \cP^{-1}(\cC_x)=\pi_\mu(S^o)\subset M
\label{3.86}
\end{equation}
is an open subset. The second equality is a direct consequence of our foregoing
results about $S$ and $S^o$. Note that $\bar{\cP}^{-1}(\bar{\cC}_x)=\pi_\mu(S)=M$.
Since $\pi_\mu$ is continuous (actually smooth) and any point of $S$ is the limit of
a sequence in $S^o$, $M^o$ is \emph{dense} in the reduced phase space $M$.
The dense open subset $M^o$ can be parametrized by $\cC_x\times\T_n$ according to
\begin{equation}
(\hat p,e^{\ri\hat q})\mapsto\pi_\mu(K(\hat p,e^{\ri\hat q})),
\label{3.87}
\end{equation}
which also allows us to view $S^o\simeq\cC_x\times\T_n$ as a model of $M^o\subset M$.
In principle, the restriction of the reduced symplectic form to $M^o$ can now be
computed by inserting the explicit formula $K(\hat p,e^{\ri\hat q})$ \eqref{3.67}
into the Alekseev-Malkin form \eqref{3.16}. In the analogous reduction of the
Heisenberg double of $\SU(n,n)$, Marshall \cite{Ma15} found a nice way to circumvent such
a tedious calculation. By taking the same route, we have verified that $\hat p$ and
$\hat q$ are Darboux coordinates on $M^o$.

The outcome of the above considerations is summarized by the next theorem.

\begin{theorem}
\label{thm:3.11}
$M^o$ defined by equation \eqref{3.86} is a dense open subset of
the reduced phase space $M$. Parametrizing $M^o$ by $\cC_x\times\T_n$ according to
\eqref{3.87}, the restriction of reduced symplectic form $\omega_M$ \eqref{3.40} to
$M^o$ is equal to $\hat\omega=\sum_{j=1}^nd\hat q_j\wedge d\hat p_j$ \eqref{3.5}.
\end{theorem}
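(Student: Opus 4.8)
The statement has a topological part and a symplectic part, and the former is nearly in hand, so I would deal with it briefly. The map $\bar\cP\colon M\to\R^n$ of \eqref{3.85} is continuous — indeed smooth, since the parameters $\hat p_j$ are pairwise distinct along $\Phi_+^{-1}(\mu)$ — hence $M^o=\bar\cP^{-1}(\cC_x)$ is open in $M$; and since $\pi_\mu(S)=M$ by Proposition \ref{prop:3.3}, every point of $S$ is a limit of points of $S^o$ (because $\bar\cC_x$ is the closure of $\cC_x$), and $\pi_\mu$ is continuous, the set $M^o=\pi_\mu(S^o)$ is dense in $M$. To upgrade \eqref{3.87} to a diffeomorphism onto $M^o$ I would note that $(\hat p,e^{\ri\hat q})\mapsto K(\hat p,e^{\ri\hat q})$ of Remark \ref{rem:3.4} is a smooth embedding: it is smooth by \eqref{3.67}--\eqref{3.68}, and it has a smooth left inverse, as $\hat p$ is read off from the Cartan datum of the Iwasawa factor $g_L$ of $K$, while $e^{\ri\hat q}$ is recovered, once $\hat p$ is known, from the polar part $T$ of the $(2,2)$-block $\Omega=K_{22}$ via \eqref{3.45}, \eqref{3.66}. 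Its image $S^o$ meets every $G_\mu$-orbit in at most one point by Lemma \ref{lem:3.10}; a dimension count, $\dim S^o=2n=\dim\Phi_+^{-1}(\mu)-\dim\bar G_\mu=\dim M$, together with Theorem \ref{thm:3.9}, then shows that $S^o$ is a global cross-section for $\pi_\mu$ above $M^o$, so $\pi_\mu|_{S^o}$ is a bijective submersion between equidimensional manifolds, hence a diffeomorphism.

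For the symplectic part it suffices, by \eqref{3.40}, to pull the Alekseev-Malkin form $\omega$ \eqref{3.16} back along $\sigma\colon\cC_x\times\T_n\to\SL(2n,\C)$, $\sigma(\hat p,e^{\ri\hat q})=K(\hat p,e^{\ri\hat q})$, and to prove $\sigma^\ast\omega=\sum_{j=1}^n d\hat q_j\wedge d\hat p_j$. The plan is to reproduce the computation of Marshall \cite{Ma15}. Along $S^o$ all four Iwasawa data $g_L,b_R,b_L,g_R$ of $K$ are explicit functions of $(\hat p,e^{\ri\hat q})$ — $g_L$ and $b_R$ directly from \eqref{3.67}--\eqref{3.68}, and $b_L,g_R$ from the second Iwasawa factorization of $K$ — so $\sigma^\ast\omega$ becomes a concrete $2$-form in $d\hat p,d\hat q$. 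The computation organises itself around the observation that $g_L$ equals $\diag(\rho,\1_n)$ times the Cartan-type matrix of \eqref{3.41}, with $\rho=\kappa(x)\zeta(x,\hat p)^{-1}$ and $\sin q=e^{\hat p}$ depending on $\hat p$ only, whereas the entire $e^{\ri\hat q}$-dependence sits in the diagonal unitary prefactor of $T=e^{\ri\hat q}\theta(-x,\hat p)$ and, linearly, in the $(1,2)$-block $\alpha$ of $b_R^{-1}$ in \eqref{3.68}. Consequently $d\hat q$ occurs in only a few terms of the two traces in \eqref{3.16}; using the orthogonality of $\theta(\pm x,\hat p)$ and the diagonal identity $\Lambda^2=e^{-2u}\1_n-e^{-2v}(\sin q)^2$ of \eqref{3.48}, those terms collapse precisely to $\sum_j d\hat q_j\wedge d\hat p_j$, and all purely $\hat p$-dependent contributions are checked to cancel. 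Alternatively, one may exhibit a $1$-form primitive of $\omega$ whose pullback by $\sigma$ equals $\sum_j\hat q_j\,d\hat p_j$ up to an exact form, and then apply $d$.

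I expect the main obstacle to be exactly this last verification: faithfully transporting $d\rho(\hat p)$, $d\Lambda(\hat p)$, $d\theta(-x,\hat p)$ and $d\alpha(\hat p,e^{\ri\hat q})$ through the traces in \eqref{3.16}, confirming the appearance of the canonical term, and confirming the vanishing of every $d\hat p\wedge d\hat p$ piece. Since the pattern of cancellations coincides with the one already settled by Marshall in the $\SU(n,n)$ case, the prudent course is to transcribe his argument step by step, substituting the $\SU(2n)$ analogues of his auxiliary lemmas, which are available through the dictionary \eqref{3.52} with \cite{FK11}, rather than to seek a genuinely new shortcut; a brute-force substitution into \eqref{3.16} also works but is laborious. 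No separate continuity argument is needed for the statement as phrased, which concerns only the dense open subset $M^o$.
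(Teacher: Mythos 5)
Your argument is correct and follows essentially the same route as the paper: the openness/density part (via the map $\bar\cP$, the identity $M^o=\bar\cP^{-1}(\cC_x)=\pi_\mu(S^o)$, and the continuity of $\pi_\mu$ together with $S^o$ being dense in $S$) is exactly the paper's argument, and the Darboux property is likewise obtained in the paper by adapting Marshall's $\SU(n,n)$ calculation rather than by displaying a new computation. The extra detail you supply — the smooth left inverse of $(\hat p,e^{\ri\hat q})\mapsto K(\hat p,e^{\ri\hat q})$ and the dimension count showing $\pi_\mu|_{S^o}$ is a diffeomorphism onto $M^o$ — is a harmless elaboration of what the paper asserts implicitly.
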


\subsection{Liouville integrability of the reduced free Hamiltonians}
\label{subsec:3.3.3}

The Abelian Poisson algebra $\fH$ \eqref{3.19} consists of $(G_+ \times G_+)$-invariant
functions\footnote{More precisely, $\fH=C^\infty(\SL(2n,\C))^{\SU(2n)\times\SU(2n)}$.}
generating complete flows, given explicitly by \eqref{3.21}, on the unreduced phase
space. Thus each element of $\fH$ descends to a smooth reduced Hamiltonian on $M$
\eqref{3.35}, and generates a complete flow via the reduced symplectic form
$\omega_M$. This flow is the projection of the corresponding unreduced flow,
which preserves the constraint surface $\Phi_+^{-1}(\mu)$. It also follows from the
construction that $\fH$ gives rise to an Abelian Poisson algebra, $\fH_M$, on
$(M,\omega_M)$. Now the question is whether the Hamiltonian vector fields of $\fH_M$
span an $n$-dimensional subspace of the tangent space at the points of a dense open
submanifold of $M$. If yes, then $\fH_M$ yields a Liouville integrable system, since
$\dim(M)=2n$.

Before settling the above question, let us focus on the Hamiltonian $\cH\in \fH$
defined by
\begin{equation}
\cH(K):=\frac{1}{2}\tr\big((K^\dag K)^{-1}\big)=\frac{1}{2}\tr(b_R^\dag b_R).
\label{3.88}
\end{equation}
Using the formula of $K(\hat p,e^{\ri\hat q})$ in Remark \ref{rem:3.4}, it is readily
verified that
\begin{equation}
\cH(K(\hat p,e^{\ri\hat q}))=H(\hat p,\hat q;x,v-u,v+u),
\quad\forall(\hat p,e^{\ri\hat q})\in\cC_x\times\T_n,
\label{3.89}
\end{equation}
with the Hamiltonian $H$ displayed in equation \eqref{3.1}. \emph{Consequently, $H$ in
\eqref{3.1} is identified as the restriction of the reduction of $\cH$ \eqref{3.88}
to the dense open submanifold $M^o$ \eqref{3.86} of the reduced phase space, wherein
the flow of every element of $\fH_M$ is complete.}

Turning to the demonstration of Liouville integrability, consider the $n$ functions
\begin{equation}
\cH_k(K):=\frac{1}{2k}\tr\big((K^\dag K)^{-1}\big)^k
=\frac{1}{2k}\tr(b_R^\dag b_R)^k,\quad k=1,\dots,n.
\label{3.90}
\end{equation}
The restriction of the corresponding elements of $\fH_M$ on $M^o\simeq\cC_x\times\T_n$
gives the functions
\begin{equation}
H_k(\hat p,\hat q)=\frac{1}{2k}\tr
\begin{bmatrix}
e^{2v}\1_n&-e^v\alpha\\
-e^v\alpha^\dag&(e^{-2v}\1_n+\alpha^\dag\alpha)\end{bmatrix}^k,
\label{3.91}
\end{equation}
where $\alpha$ has the form \eqref{3.68}. These are real-analytic functions on
$\cC_x\times\T_n$. It is enough to show that their exterior derivatives are
linearly independent on a dense open subset of $\cC_x\times\T_n$. This follows if we
show that the function
\begin{equation}
f(\hat p,\hat q)=\det\big[d_{\hat q}H_1,d_{\hat q}H_2,\dots,d_{\hat q}H_n\big]
\label{3.92}
\end{equation}
is not identically zero on $\cC_x\times\T_n$. Indeed, since $f$ is an analytic
function and $\cC_x\times\T_n$ is connected, if $f$ is not identically zero then
its zero set cannot contain any accumulation point. This, in turn, implies that $f$ is
non-zero on a dense open subset of $\cC_x\times\T_n\simeq M^o$, which is also dense
and open in the full reduced phase space $M$. In other words, the reductions of
$\cH_k$ $(k=1,\ldots, n)$ possess the property of Liouville integrability.
It is rather obvious that the function $f$ is not identically zero, since $H_k$
involves dependence on $\hat q$ through $e^{\pm\ri k\hat q}$ and lower powers of
$e^{\pm\ri\hat q}$. It is not difficult to inspect the function $f(\hat p,\hat q)$
in the `asymptotic domain' where all
differences $|\hat p_j-\hat p_m|$ $(m\neq j)$ tend to infinity, since in this domain
$\alpha$ becomes close to a diagonal matrix. We omit the details
of this inspection, whereby we checked that $f$ is indeed not identically zero.

The above arguments prove the Liouville integrability of the reduced free
Hamiltonians, i.e. the elements of $\fH_M$. Presumably, there exists a dual set of
integrable many-body Hamiltonians that live on the space of action-angle variables of
the Hamiltonians in $\fH_M$. The construction of such dual Hamiltonians is an
interesting task for the future, which will be further commented upon in Section
\ref{sec:3.5}.

\subsection{The global structure of the reduced phase space}
\label{subsec:3.3.4}

We here construct a global cross-section of the gauge orbits in the constraint
surface $\Phi_+^{-1}(\mu)$. This engenders a symplectic diffeomorphism between
the reduced phase space $(M,\omega_M)$ and the manifold $(\hat M_c,\hat\omega_c)$
below. It is worth noting that $(\hat M_c, \hat \omega_c)$ is symplectomorphic
to $\R^{2n}$ carrying the standard Darboux 2-form, and one can easily find an
explicit symplectomorphism if desired. Our construction was inspired by the
previous papers \cite{Ru95,FK11}, but detailed inspection of the specific
example was also required for finding the final result given by Theorem \ref{thm:3.14}.
After a cursory glance,
the reader is advised to go directly to this theorem and follow the definitions
backwards as becomes necessary. See also Remark \ref{rem:3.15} for the rationale behind the
subsequent definitions.

To begin, consider the product manifold
\begin{equation}
\hat M_c:=\C^{n-1}\times\D,
\label{3.93}
\end{equation}
where $\D$ stands for the open unit disk, i.e. $\D:=\{w\in\C:|w|<1\}$,
and equip it with the symplectic form
\begin{equation}
\hat\omega_c=\ri\sum_{j=1}^{n-1}dz_j\wedge d\bar z_j
+\frac{\ri dz_n\wedge d\bar z_n}{1-z_n\bar z_n}.
\label{3.94}
\end{equation}
The subscript $c$ refers to `complex variables'. Define the surjective map
\begin{equation}
\hat\cZ_x\colon\bar\cC_x\times\T_n\to\hat M_c,\quad
(\hat p,e^{\ri\hat q})\mapsto z(\hat p,e^{\ri\hat q})
\label{3.95}
\end{equation}
by the formulae
\begin{equation}
\begin{gathered}
z_j(\hat p,e^{\ri\hat q})
=(\hat p_j-\hat p_{j+1}-|x|/2)^{\tfrac{1}{2}}\prod_{k=j+1}^ne^{\ri\hat q_k},
\quad j=1,\dots,n-1,\\
z_n(\hat p,e^{\ri\hat q})=(1-e^{\hat p_1})^{\tfrac{1}{2}}\prod_{k=1}^ne^{\ri\hat q_k}.
\end{gathered}
\label{3.96}
\end{equation}
Notice that the restriction $\cZ_x$ of $\hat \cZ_x$ to $\cC_x\times \T_n$ is a
diffeomorphism onto the dense open submanifold
\begin{equation}
\hat M_c^o=\{z\in\hat M_c\mid\prod_{j=1}^nz_j\neq 0\}.
\label{3.97}
\end{equation}
It verifies
\begin{equation}
\cZ_x^\ast(\hat\omega_c)=\hat\omega=\sum_{j=1}^nd\hat q_j\wedge d\hat p_j,
\label{3.98}
\end{equation}
which means that $\cZ_x$ is a symplectic embedding of $(\cC_x\times\T_n,\hat\omega)$
into $(\hat M_c,\hat\omega_c)$. The inverse
$\cZ_x^{-1}\colon\hat M_c^o\to\cC_x\times\T_n$ operates according to
\begin{equation}
\begin{gathered}
\hat p_1(z)=\log(1-|z_n|^2),\quad
\hat p_j(z)=\log(1-|z_n|^2)-\sum_{k=1}^{j-1}(|z_k|^2+|x|/2)\quad (j=2,\dots,n)\\
e^{\ri\hat q_1}(z)=\frac{z_n\bar z_1}{|z_n\bar z_1|},\quad
e^{\ri\hat q_m}(z)=\frac{z_{m-1}\bar z_m}{|z_{m-1}\bar z_m|}\quad (m=2,\dots,n-1),\quad
e^{\ri\hat q_n}(z)=\frac{z_{n-1}}{|z_{n-1}|}.
\end{gathered}
\label{3.99}
\end{equation}
It is important to remark that the $\hat p_k(z)$ $(k=1,\dots,n)$ given above yield
smooth functions on the whole of $\hat M_c$, while the angles $\hat q_k$ are of
course not well-defined on the complementary locus of $\hat M_c^o$.
Our construction of the global cross-section will rely on the building blocks
collected in the following long definition.

\begin{definition}
\label{def:3.12}
For any $(z_1,\dots,z_{n-1})\in\C^{n-1}$ consider the smooth
functions
\begin{equation}
Q_{jk}(x,z)=\bigg[\frac{\sinh(\sum_{\ell=j}^{k-1}z_\ell\bar z_\ell+(k-j)|x|/2-x/2)}
{\sinh(\sum_{\ell=j}^{k-1}z_\ell\bar z_\ell+(k-j)|x|/2)}\bigg]^{\tfrac{1}{2}},\quad
1\leq j<k\leq n,
\label{3.100}
\end{equation}
and set $Q_{jk}(x,z):=Q_{kj}(-x,z)$ for $j>k$.
Applying these as well as the real analytic function
\begin{equation}
J(y):=\sqrt{\frac{\sinh(y)}{y}},\quad y\neq 0,\quad J(0):=1,
\label{3.101}
\end{equation}
and recalling \eqref{3.61}, introduce the $n\times n$ matrix $\hat\zeta(x,z)$ by the
formulae
\begin{equation}
\begin{gathered}
\hat\zeta(x,z)_{aa}=r(x,\hat p(z))_a,\quad
\hat\zeta(x,z)_{aj}=-\overline{\hat\zeta(x,z)_{ja}},\quad j\neq a, \\
\hat\zeta(x,z)_{jn}=\sqrt{\frac{\sinh(\frac{x}{2})}{\sinh(\frac{nx}{2})}}
\frac{z_jJ(z_j\bar z_j)}{\sinh(z_j\bar z_j+\frac{x}{2})}
\prod_{\substack{\ell=1\\(\ell\neq j,j+1)}}^nQ_{j\ell}(x,z),\quad x>0,\quad j\neq n,\\
\hat\zeta(x,z)_{j1}=\sqrt{\frac{\sinh(\frac{x}{2})}{\sinh(\frac{nx}{2})}}
\frac{\bar z_{j-1}J(z_{j-1}\bar z_{j-1})}{\sinh(z_{j-1}\bar z_{j-1}-\frac{x}{2})}
\prod_{\substack{\ell=1\\(\ell\neq j-1,j)}}^nQ_{j\ell}(x,z),\quad x<0,\quad j\neq 1,\\
\hat\zeta(x,z)_{jk}=\delta_{j,k}
+\frac{\hat\zeta(x,z)_{ja}\hat\zeta(x,z)_{ak}}{1+\hat\zeta(x,z)_{aa}},\quad j,k\neq a,
\end{gathered}
\label{3.102}
\end{equation}
where $a=n$ if $x>0$ and $a=1$ if $x<0$.
Then introduce the matrix $\hat\theta(x,z)$ for $x>0$ as
\begin{equation}
\begin{gathered}
\hat\theta(x,z)_{jk}=\frac{\sinh(\frac{nx}{2})\sgn(k-j-1)
\hat\zeta(x,z)_{jn}\hat\zeta(-x,\bar{z})_{1k}}
{\sinh(\sum_{\ell=\min(j,k)}^{\max(j,k)-1}z_\ell\bar z_\ell+|k-j-1|\frac{x}{2})},\quad
k\neq j+1,\\
\hat\theta(x,z)_{j,j+1}=\frac{-\sinh(\frac{x}{2})}{\sinh(z_j\bar z_j+\frac{x}{2})}
\prod_{\substack{\ell=1\\(\ell\neq j,j+1)}}^nQ_{j\ell}(x,z)Q_{j+1,\ell}(-x,z),
\end{gathered}
\label{3.103}
\end{equation}
and for $x<0$ as
\begin{equation}
\hat{\theta}(x,z)=\hat{\theta}(-x,\bar{z})^\dag.
\label{3.104}
\end{equation}
Finally, for any $z\in\hat M_c$ define the matrix
$\hat\gamma(x,z)=\diag(\hat\gamma_1,\dots,\hat\gamma_n)$ with
\begin{equation}
\hat\gamma(z)_1=z_n\sqrt{2-z_n\bar z_n},\quad
\hat\gamma(x,z)_j=\bigg[1-(1-z_n\bar z_n)^2
e^{-2\sum_{\ell=1}^{j-1}(z_\ell\bar z_\ell+|x|/2)}\bigg]^{\tfrac{1}{2}},\quad j=2,\dots,n,
\label{3.105}
\end{equation}
and the matrix
\begin{equation}
\hat\alpha(x,u,v,z)=-\ri\big[\sqrt{e^{-2u}e^{-2\hat p(z)}
-e^{-2v}\1_n}]\,\hat\theta(-x,\bar{z})
-e^ve^{-\hat p(z)}\hat\gamma(x,z)^\dag\big],
\label{3.106}
\end{equation}
using the constants $x,u=(b-a)/2,v=(a+b)/2$ subject to \eqref{3.3}.
\end{definition}

Although the variable $z_n$ appears only in $\hat\gamma_1$, we can regard all
objects defined above as smooth functions on $\hat M_c$, and we shall do so below.

The key properties of the matrices $\hat\zeta$, $\hat\theta$, $\hat\alpha$ and
$\hat\gamma$ are given by the following lemma, which can be verified by
straightforward inspection. The role of these identities and their origin will be
enlightened by Theorem \ref{thm:3.14}.

\begin{lemma}
\label{lem:3.13}
Prepare the notations
\begin{equation}
\tau_{(x)}:=\diag(\tau_2,\dots,\tau_n,1)\quad \text{if}\quad x>0
\quad\text{and}\quad
\tau_{(x)}:=\diag(1,\tau_2^{-1},\dots,\tau_n^{-1})\quad\text{if}\quad x<0,
\label{3.107}
\end{equation}
\begin{equation}
\tilde\tau_{(x)}:=\diag(1,\tau_2,\dots,\tau_n)\quad \text{if}\quad x>0
\quad\text{and}\quad
\tilde\tau_{(x)}:=\diag(\tau_2^{-1},\dots,\tau_n^{-1},1)\quad \text{if}\quad x<0
\label{3.108}
\end{equation}
with
\begin{equation}
\tau_j=\prod_{k=j}^ne^{\ri\hat q_k}.
\label{3.109}
\end{equation}
Then the following identities hold for all
$(\hat p,e^{\ri\hat q})\in\bar\cC_x\times\T_n$:
\begin{align}
\hat\zeta(x,z(\hat p,e^{\ri \hat q}))
&=\tau_{(x)}\zeta(x,\hat p)\tau_{(x)}^{-1},
\label{3.110}\\
\hat\theta(x,z(\hat p,e^{\ri \hat q}))
&=\tau_{(x)}\theta(x,\hat p)\tilde\tau_{(x)}^{-1},
\label{3.111}\\
\hat\gamma(x,z(\hat p,e^{\ri \hat q}))
&=e^{\ri\hat q}\tau_{(x)}\tilde\tau_{(x)}^{-1}\sqrt{\1_n-e^{2\hat p}},
\label{3.112}\\
\hat\alpha(x,u,v, z(\hat p,e^{\ri \hat q}))
&=e^{-\ri \hat q} \tilde\tau_{(x)}
\alpha(x,u,v,\hat p, e^{\ri \hat q})\tau_{(x)}^{-1}.
\label{3.113}
\end{align}
Here we use Definition \ref{def:3.12} and the functions on $\bar\cC_x\times\T_n$
introduced in Subsection \ref{subsec:3.2.2}.
\end{lemma}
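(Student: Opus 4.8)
The plan is to substitute the map $z=z(\hat p,e^{\ri\hat q})$ of \eqref{3.96} into Definition \ref{def:3.12} and to verify the four identities entry by entry, the bookkeeping being organised around two observations. The first is a telescoping identity: for $z=z(\hat p,e^{\ri\hat q})$ one reads off from \eqref{3.96} that
\begin{equation*}
\textstyle\sum_{\ell=j}^{k-1}z_\ell\bar z_\ell+(k-j)|x|/2=\hat p_j-\hat p_k\quad(1\le j<k\le n),\qquad
z_n\bar z_n=1-e^{\hat p_1},\qquad
\sum_{\ell=1}^{j-1}(z_\ell\bar z_\ell+|x|/2)=\hat p_1-\hat p_j .
\end{equation*}
Hence $\hat p(z)=\hat p$ (cf.\ \eqref{3.99}), each $Q_{jk}(x,z)$ of \eqref{3.100} collapses to one of the hyperbolic square-root factors of $\theta(x,\hat p)$ in \eqref{3.59}--\eqref{3.60}, and $r(x,\hat p(z))=r(x,\hat p)$ of \eqref{3.61}. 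The second observation concerns phases: \eqref{3.96} shows that $z_j$ ($j<n$) carries the phase $\tau_{j+1}=\prod_{k=j+1}^ne^{\ri\hat q_k}$ of \eqref{3.109} and $z_n$ carries the phase $\tau_1$, while the factor $J(z_j\bar z_j)$ of \eqref{3.101} is real positive, so it only fixes a smooth branch of the relevant square root at the boundary of $\bar\cC_x$ and plays no role in the phase count. Collecting these phases is precisely what produces the diagonal matrices $\tau_{(x)}$ and $\tilde\tau_{(x)}$ of \eqref{3.107}--\eqref{3.108}; for the sign $x<0$ one will then reduce to the case $x>0$ using the built-in symmetries $\theta(-x,\hat p)=\theta(x,\hat p)^{-1}=\theta(x,\hat p)^\top$, $\hat\theta(x,z)=\hat\theta(-x,\bar z)^\dagger$, and $\bar z(\hat p,e^{\ri\hat q})=z(\hat p,e^{-\ri\hat q})$.

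First I would establish \eqref{3.110}. The diagonal entry $\hat\zeta(x,z)_{aa}=r(x,\hat p(z))_a=\zeta(x,\hat p)_{aa}$ is phase-free and hence unaffected by diagonal conjugation. For the $a$-th column and row (say $x>0$, $a=n$), I would insert the telescoping identity and $z_j\bar z_j=\hat p_j-\hat p_{j+1}-x/2$ into $\hat\zeta(x,z)_{jn}$, rewrite the ratios of $\sinh$'s as ratios of the $(1-e^{\cdots})$-type factors occurring in $r$, and use the elementary identity $\dfrac{\sinh(x/2)}{\sinh(nx/2)}=e^{-(n-1)x/2}\dfrac{1-e^{-x}}{1-e^{-nx}}$ to see that the modulus of $\hat\zeta(x,z)_{jn}$ is $r(x,\hat p)_j$ while its phase is exactly $(\tau_{(x)})_{jj}$; this gives $\hat\zeta(x,z)_{jn}=(\tau_{(x)})_{jj}\zeta(x,\hat p)_{jn}(\tau_{(x)})_{nn}^{-1}$, and $\hat\zeta(x,z)_{nj}=-\overline{\hat\zeta(x,z)_{jn}}$ handles the $a$-th row. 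For the remaining entries one uses that the defining quadratic relation $\hat\zeta_{jk}=\delta_{jk}+\hat\zeta_{ja}\hat\zeta_{ak}/(1+\hat\zeta_{aa})$ is compatible with conjugation by a diagonal matrix: since $\hat\zeta_{aa}$ is phase-free and the two $\tau$-factors attached at the repeated index $a$ cancel, the conjugation relation propagates automatically from the $a$-row and $a$-column to all of $\hat\zeta$. The same remark applies to $\zeta(x,\hat p)$ relative to its own generating row and column, so \eqref{3.110} follows.

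Next I would treat \eqref{3.111}. On the superdiagonal, the telescoping identity turns $\sinh(z_j\bar z_j+x/2)$ into $\sinh(\hat p_j-\hat p_{j+1})=-\sinh(\hat p_{j+1}-\hat p_j)$, so the prefactor of $\hat\theta(x,z)_{j,j+1}$ matches that of $\theta(x,\hat p)_{j,j+1}$ in \eqref{3.59}, the product $\prod Q_{j\ell}(x,z)Q_{j+1,\ell}(-x,z)$ matches the remaining square roots, and the relevant phase $(\tau_{(x)})_{jj}(\tilde\tau_{(x)})_{j+1,j+1}^{-1}$ equals $1$. For $k\neq j+1$ one feeds the already-proven \eqref{3.110} (applied with parameters $x$ and $-x$, the latter via $\bar z=z(\hat p,e^{-\ri\hat q})$) together with the telescoping reduction of the $\sinh$-denominator in \eqref{3.103} into the definition of $\hat\theta(x,z)_{jk}$, and collects phases; the case $x<0$ then comes from $\hat\theta(x,z)=\hat\theta(-x,\bar z)^\dagger$. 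The identity \eqref{3.112} for $\hat\gamma$ is shorter: $z_n\bar z_n=1-e^{\hat p_1}$ gives $\hat\gamma_1=\tau_1\sqrt{1-e^{2\hat p_1}}$, the telescoping sum gives $\hat\gamma_j=\sqrt{1-e^{2\hat p_j}}$ for $j\ge2$, and these are checked against the $(j,j)$-entries of $e^{\ri\hat q}\tau_{(x)}\tilde\tau_{(x)}^{-1}\sqrt{\1_n-e^{2\hat p}}$ using $e^{\ri\hat q_j}\tau_{j+1}=\tau_j$ and $\tau_n=e^{\ri\hat q_n}$. Finally \eqref{3.113} is obtained by substituting $\hat p(z)=\hat p$, the identity \eqref{3.111} with $(x,z)\mapsto(-x,\bar z)$, and \eqref{3.112} (taking the Hermitian adjoint of the diagonal matrix $\hat\gamma$) into the definition \eqref{3.106} of $\hat\alpha$ and comparing with \eqref{3.68}.

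The only real difficulty is organisational rather than conceptual: the hatted expressions carry genuine $z_j,\bar z_j$ dependence — hence phases — whereas the unhatted ones of \eqref{3.59}--\eqref{3.62} are built from non-negative square roots chosen by hand, so phases and branches of roots must be tracked simultaneously. The matrices $\tau_{(x)}$ and $\tilde\tau_{(x)}$ are engineered exactly for this, and once the $a$-th row and column of $\hat\zeta$ are pinned down the diagonal-conjugation-compatibility of all the recursive formulae does the rest. I expect the most error-prone points to be the correct left/right placement of $\tau_{(x)}$ versus $\tilde\tau_{(x)}$ in \eqref{3.111} and \eqref{3.113}, which is why I would fix conventions on the superdiagonal of $\hat\theta$ and on $\hat\gamma_1$ first and propagate outward from there.
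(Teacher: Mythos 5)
Your proposal is correct and follows essentially the same route as the paper, which offers no argument beyond the assertion that the identities ``can be verified by straightforward inspection'' together with the remark \eqref{3.114}; your telescoping identity $\sum_{\ell=j}^{k-1}z_\ell\bar z_\ell+(k-j)|x|/2=\hat p_j-\hat p_k$ and the phase bookkeeping via $\tau_j$ are exactly the computations behind that remark, and your derivation of \eqref{3.113} from \eqref{3.111}--\eqref{3.112} matches how the definitions were engineered. In fact your write-up is more detailed than the paper's own verification, so nothing is missing.
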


For the verification of the above identities, we remark that the vector $r$
\eqref{3.61} can be expressed as a smooth function of the complex variables as
\begin{equation}
r(x,\hat p(z))_j=\sqrt{\frac{\sinh(\frac{x}{2})}{\sinh(\frac{nx}{2})}}
\prod_{\substack{k=1\\(k\neq j)}}^nQ_{jk}(x,z),\quad j=1,\dots,n.
\label{3.114}
\end{equation}

With all necessary preparations now done, we state the main new result of the chapter.

\begin{theorem}
\label{thm:3.14}
The image of the smooth map
$\hat{K}\colon\hat{M}_c\to\SL(2n,\C)$ given by the formula
\begin{equation}
\hat{K}(z)=\begin{bmatrix}\kappa(x)\hat\zeta(x,z)^{-1} &\0_n\\\0_n&\1_n\end{bmatrix}
\begin{bmatrix}\hat\gamma(x,z)&\ri e^{\hat p(z)}\\\ri e^{\hat p(z)}&
\hat\gamma(x,z)^\dag\end{bmatrix}
\begin{bmatrix}e^{-v}\1_n&\hat\alpha(x,u,v,z)\\\0_n&e^v\1_n\end{bmatrix}
\label{3.115}
\end{equation}
lies in $\Phi_+^{-1}(\mu)$, intersects every gauge orbit in precisely one point, and $\hat K$ is injective.
The pull-back of the Alekseev-Malkin $2$-form $\omega$ \eqref{3.16} by $\hat K$ is
$\hat\omega_c$ \eqref{3.94}. Consequently, $\pi_\mu\circ\hat K\colon\hat M_c\to M$
is a symplectomorphism, whereby $(\hat M_c,\hat\omega_c)$ provides a model of the
reduced phase space $(M,\omega_M)$ defined in Subsection \ref{subsec:3.1.2}.
\end{theorem}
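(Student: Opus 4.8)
The plan is to derive everything from the facts already in hand about the cross-section $S$ of Proposition~\ref{prop:3.3}, its dense part $S^o\simeq\cC_x\times\T_n$ of Subsection~\ref{subsec:3.3.2}, and the parametrization $\hat\cZ_x$ of \eqref{3.95}. The engine is Lemma~\ref{lem:3.13}, which expresses the building blocks $\hat\zeta,\hat\theta,\hat\gamma,\hat\alpha$ evaluated at $z(\hat p,e^{\ri\hat q})$ as conjugates of the corresponding matrices of Subsection~\ref{subsec:3.2.2} by the diagonal unitaries $\tau_{(x)},\tilde\tau_{(x)}$ and by $e^{\ri\hat q}$. First I would use these identities to produce, for every $(\hat p,e^{\ri\hat q})\in\bar\cC_x\times\T_n$, an explicit pair $(\eta_L,\eta_R)\in G_\mu$ with $\hat K(\hat\cZ_x(\hat p,e^{\ri\hat q}))=\eta_L\,K(\hat p,e^{\ri\hat q})\,\eta_R^{-1}$, where $K(\hat p,e^{\ri\hat q})$ is the element of Proposition~\ref{prop:3.3}; the blocks of $\eta_L,\eta_R$ are built from $\kappa(x),\tau_{(x)},\tilde\tau_{(x)},e^{\ri\hat q}$ (the $(1,1)$-block of $\eta_L$ being $\kappa(x)\tau_{(x)}\kappa(x)^{-1}$, which lies in the little group of $\nu(x)\nu(x)^\dag$ since $\tau_{(x)}$ carries a unit in the distinguished slot $a$), up to an overall $\SU(2n)$-normalizing scalar. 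Since $K(\hat p,e^{\ri\hat q})\in\Phi_+^{-1}(\mu)$, the group $G_\mu$ preserves the constraint surface, and $\hat\cZ_x$ is surjective, this already shows that $\hat K$ maps $\hat M_c$ into $\Phi_+^{-1}(\mu)$, and that every gauge orbit, meeting $S$ by Proposition~\ref{prop:3.3}, also meets $\hat S:=\hat K(\hat M_c)$.

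Next I would prove injectivity of $\hat K$ and that $\hat S$ meets each gauge orbit exactly once. For injectivity, the first two factors of \eqref{3.115} multiply to an $\SU(2n)$-element $g_L(z)$ and the third is the inverse of an element of $\SB(2n)$, so $\hat K(z)=g_L(z)b_R(z)^{-1}$ is the Iwasawa decomposition; from $\hat K(z)=\hat K(z')$ one gets $g_L(z)=g_L(z')$, whose $(2,1)$-block forces $e^{\hat p(z)}=e^{\hat p(z')}$ — hence $|z_j|=|z_j'|$ for all $j$, as $\hat p(z)$ depends only on the $|z_k|^2$ — whose $(2,2)$-block then forces $\hat\gamma(x,z)=\hat\gamma(x,z')$ — hence $z_n=z_n'$, from $\hat\gamma(z)_1=z_n\sqrt{2-z_n\bar z_n}$ — and whose $(1,2)$-block, after cancelling the invertible $e^{\hat p(z)}$, forces $\hat\zeta(x,z)=\hat\zeta(x,z')$, whose entries $\hat\zeta(x,z)_{jn}$ for $x>0$ (resp.\ $\hat\zeta(x,z)_{j1}$ for $x<0$) carry the argument of $z_j$ (resp.\ of $\bar z_{j-1}$), giving $z=z'$. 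For the cross-section property I would repeat the arguments of Proposition~\ref{prop:3.8} and Lemma~\ref{lem:3.10} verbatim for the explicit $\hat K(z)$: a putative equality $\hat K(z')=\eta_L\hat K(z)\eta_R^{-1}$ with $(\eta_L,\eta_R)\in G_\mu$ is analyzed via the uniqueness in the decomposition \eqref{3.41} (Lemma~\ref{lem:3.6}) and the non-vanishing of the super-/sub-diagonal entries of $\hat\theta(x,z)$ for every $z\in\hat M_c$ (the complex-variable counterpart of Lemma~\ref{lem:3.7}, immediate from Definition~\ref{def:3.12}), which forces the gauge parameter to be central and hence $z'=z$. Together with injectivity this makes $\hat K$ a global cross-section of the $G_+\times G_+$-orbits in $\Phi_+^{-1}(\mu)$.

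For the symplectic claim I would restrict to the dense open submanifold $\hat M_c^o=\{z\in\hat M_c\mid\prod_j z_j\neq 0\}$, on which $\cZ_x\colon\cC_x\times\T_n\to\hat M_c^o$ is a diffeomorphism. Write $\hat K=\iota_\mu\circ\tilde K$ with $\tilde K\colon\hat M_c\to\Phi_+^{-1}(\mu)$ smooth, since $\Phi_+^{-1}(\mu)$ is embedded by Theorem~\ref{thm:3.9}. By the gauge-equivalence of the first paragraph, $\pi_\mu\circ\tilde K\circ\cZ_x$ coincides with the parametrization of $M^o$ from Theorem~\ref{thm:3.11}, a symplectic diffeomorphism $(\cC_x\times\T_n,\hat\omega)\to(M^o,\omega_M)$; composing with the symplectic diffeomorphism $\cZ_x^{-1}$ and using \eqref{3.98} shows $\pi_\mu\circ\tilde K\colon\hat M_c^o\to M^o$ is a symplectic diffeomorphism, so $(\pi_\mu\circ\tilde K)^\ast\omega_M=\hat\omega_c$ on $\hat M_c^o$, and by continuity on all of $\hat M_c$. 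Then $\hat K^\ast\omega=\tilde K^\ast\iota_\mu^\ast\omega=\tilde K^\ast\pi_\mu^\ast\omega_M=(\pi_\mu\circ\tilde K)^\ast\omega_M=\hat\omega_c$ by \eqref{3.40}. Finally, $\pi_\mu\circ\tilde K$ pulls the symplectic form $\omega_M$ back to the symplectic form $\hat\omega_c$, hence is an immersion; being moreover bijective (surjective and injective by the two previous paragraphs, with $M$ smooth and $\pi_\mu$ a submersion by Theorem~\ref{thm:3.9}) between smooth manifolds of the same dimension $2n$, it is a diffeomorphism, and therefore a symplectomorphism $(\hat M_c,\hat\omega_c)\to(M,\omega_M)$.

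The main obstacle is the behaviour over the boundary $\partial\bar\cC_x$, where the angles $\hat q_j$ are not defined: one must check that the matrices of Definition~\ref{def:3.12} genuinely extend to smooth functions on all of $\hat M_c$, that the gauge-equivalence of the first paragraph persists there (this is precisely why several points of $\bar\cC_x\times\T_n$ may collapse to the same $z$ and to gauge-equivalent elements of $\Phi_+^{-1}(\mu)$, cf.\ Remark~\ref{rem:3.5}), and that the cross-section argument still pins the gauge parameter to the centre at such points. All of this rests on the laborious but elementary identities of Lemma~\ref{lem:3.13} and on the non-vanishing of the relevant entries of $\hat\theta$; once these are granted, the remaining steps are bookkeeping with Propositions~\ref{prop:3.3} and~\ref{prop:3.8}, Lemma~\ref{lem:3.10}, Theorems~\ref{thm:3.9} and~\ref{thm:3.11}, and the identities \eqref{3.40} and \eqref{3.98}.
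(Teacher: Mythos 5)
Your proposal is correct and follows essentially the same route as the paper: the gauge-equivalence identity encoded in Lemma \ref{lem:3.13} (with the explicit $(\eta_L,\eta_R)\in G_\mu$ built from $\kappa(x)$, $\tau_{(x)}$, $\tilde\tau_{(x)}$, $e^{\ri\hat q}$) places $\hat K(z)$ in $\Phi_+^{-1}(\mu)$ and makes $\hat S$ meet every orbit, density of $\hat M_c^o$ together with Theorem \ref{thm:3.11} and \eqref{3.98}, \eqref{3.40} yields $\hat K^\ast\omega=\hat\omega_c$, and the cross-section analysis rests on the uniqueness properties behind Lemma \ref{lem:3.6} plus the non-vanishing of the structural super-/sub-diagonal (and corner) entries. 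The only minor variation is bookkeeping: you recover injectivity from the blocks $\ri e^{\hat p(z)}$, $\hat\gamma(x,z)^\dag$ and $\ri\kappa(x)\hat\zeta(x,z)^{-1}e^{\hat p(z)}$ of the unitary Iwasawa factor and then reduce the one-point-intersection claim to centrality of the gauge element, whereas the paper extracts $z'=z$ directly from the entries of $\hat\alpha(z')=\hat\alpha(z)$.
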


\begin{proof}
The proof is based upon the identity
\begin{equation}
\hat K(z(\hat p,e^{\ri\hat q}))
=\begin{bmatrix}\kappa(x)\tau_{(x)}\kappa(x)^{-1}&\0_n\\
\0_n&\tilde\tau_{(x)}e^{-\ri\hat q}\end{bmatrix}K(\hat p,e^{\ri\hat q})
\begin{bmatrix}\tilde\tau_{(x)}e^{-\ri\hat q}&\0_n\\
\0_n&\tau_{(x)}\end{bmatrix}^{-1}, \quad \forall (\hat p, e^{\ri \hat q}) \in \bar\cC_x \times \T_n,
\label{3.116}
\end{equation}
which is readily seen to be equivalent to the set of identities displayed in Lemma
\ref{lem:3.13}. It means that $\hat K(z(\hat p,e^{\ri\hat q}))$ is a gauge transform of
$K(\hat p, e^{\ri \hat q})$ in \eqref{3.67}. Indeed, the above transformation of
$K(\hat p,e^{\ri\hat q})$ has the form \eqref{3.31} with
\begin{equation}
\eta_L=c\begin{bmatrix}\kappa(x)\tau_{(x)}\kappa(x)^{-1}&\0_n\\
\0_n&\tilde\tau_{(x)}e^{-\ri\hat q}\end{bmatrix},
\qquad
\eta_R=c\begin{bmatrix}\tilde\tau_{(x)}e^{-\ri\hat q}&\0_n\\
\0_n&\tau_{(x)}\end{bmatrix},
\label{3.117}
\end{equation}
where $c$ is a harmless scalar inserted to ensure $\det(\eta_L)=\det(\eta_R)=1$.
Using \eqref{3.65} and \eqref{3.107}, one can check that $\kappa(x)\tau_{(x)}\kappa(x)^{-1}\hat v(x)=\hat v(x)$
for the vector $\hat v(x)$ in \eqref{3.64}, which implies via the relation \eqref{3.55}
that $(\eta_L,\eta_R)$ belongs to the isotropy group $G_\mu$ \eqref{3.36},
the gauge group acting on $\Phi_+^{-1}(\mu)$.

It follows from Proposition \ref{prop:3.3} and the identity \eqref{3.116} that the set
\begin{equation}
\hat S:=\{\hat K(z)\mid z\in\hat M_c\}
\label{3.118}
\end{equation}
lies in $\Phi_+^{-1}(\mu)$ and intersects every gauge orbit. Since the dense subset
\begin{equation}
\hat S^o:=\{\hat K(z)\mid z\in\hat M_c^o\}
\label{3.119}
\end{equation}
is gauge equivalent to $S^o$ in \eqref{3.79}, we obtain the equality
\begin{equation}
\hat K^\ast(\omega)=\hat\omega_c
\label{3.120}
\end{equation}
by using Theorem \ref{thm:3.11} and equation \eqref{3.98}. More precisely, we here also
utilized that $\hat K^\ast(\omega)$ is (obviously) smooth and $\hat M_c^o$ is dense
in $\hat M_c$.

The only statements that remain to be proved are that the intersection of $\hat S$
with any gauge orbit consists of a single point and that $\hat K$ is injective. (These are already clear
for $\hat S^o\subset \hat S$ and for $\hat K\vert_{\hat M^o_c}$.) Now suppose that
\begin{equation}
\hat K(z')=\begin{bmatrix}\eta_L(1)&\0_n\\\0_n&\eta_L(2)\end{bmatrix}
\hat K(z)\begin{bmatrix}\eta_R(1) &\0_n\\\0_n&\eta_R(2)\end{bmatrix}^{-1}
\label{3.121}
\end{equation}
for some gauge transformation and $z,z'\in\hat M_c$.
Let us observe from the definitions that we can write
\begin{equation}
\begin{bmatrix}\hat\gamma(x,z)&\ri e^{\hat p(z)}\\\ri e^{\hat p(z)}&
\hat\gamma(x,z)^\dag\end{bmatrix}
=D(z)\begin{bmatrix}\cos q(z)&\ri\sin q(z)\\\ri\sin q(z)&\cos q(z)\end{bmatrix}D(z),
\label{3.122}
\end{equation}
where $\sin q(z)=e^{\hat p(z)}$, with $\frac{\pi}{2}\geq q_1>\dots>q_n>0$,
and $D(z)$ is a diagonal unitary matrix of the form
$D(z)=\diag(d_1,\1_{n-1},\bar d_1,\1_{n-1})$.
Then the uniqueness properties of the Iwasawa decomposition of $\SL(2n,\C)$ and the
generalized Cartan decomposition \eqref{3.41} of $\SU(2n)$ allow to establish the
following consequences of \eqref{3.121}. First,
\begin{equation}
\hat p(z)=\hat p(z').
\label{3.123}
\end{equation}
Second, using Lemma \ref{lem:3.6},
\begin{equation}
\begin{bmatrix}\eta_R(1)&\0_n\\\0_n&\eta_R(2)\end{bmatrix}
=\begin{bmatrix}m_2&\0_n\\\0_n&m_1\end{bmatrix}
\label{3.124}
\end{equation}
for some diagonal unitary matrices of the form \eqref{3.73}. Third, we have
\begin{equation}
\hat\alpha(z')=\eta_R(1)\hat\alpha(z)\eta_R(2)^{-1}=m_2\hat\alpha(z)m_1^{-1}.
\label{3.125}
\end{equation}
For definiteness, let us focus on the case $x>0$. Then we see from the definitions
that the components $\hat \alpha_{k+1,k}$ and $\hat\alpha_{1,n}$ depend only on
$\hat p(z)$ and are non-zero. By using this, we find from \eqref{3.125} that
$m_1=m_2=C\1_n$ with a scalar $C$, and therefore
\begin{equation}
\hat\alpha(z')=\hat\alpha(z).
\label{3.126}
\end{equation}
Inspection of the components $(1,2),\dots,(1,n-1)$ of this matrix equality and
\eqref{3.123} permit to conclude that $z'_2=z_2,\dots,z'_{n-1}=z_{n-1}$, respectively.
Then, the equality of the $(2,n)$ entries in \eqref{3.126} gives $z'_1=z_1$ which
used in the $(1,1)$ position implies $z'_n=z_n$. Thus we see that $z'=z$ and
the proof is complete. (Everything written below \eqref{3.125} is quite similar for
$x<0$.)
\end{proof}

\begin{remark}
\label{rem:3.15}
Let us hint at the way the global structure was found. The first point to notice was
that all or some of the phases $e^{\ri \hat q_j}$ cannot encode gauge invariant
quantities if $\hat p$ belongs to the boundary of $\bar \cC_x$, as was already
mentioned in Remark \ref{rem:3.5}. Motivated by \cite{FK11}, then we searched for
complex variables by requiring that a suitable gauge transform of
$K(\hat p,e^{\ri \hat q})$ in \eqref{3.67} should be expressible as a smooth function
of those variables. Given the similarities to \cite{FK11}, only the definition of
$z_n$ was a true open question. After trial and error, the idea came in a flash that
the gauge transformation at issue should be constructed from a transformation that
appears in Lemma \ref{lem:C.2}. Then it was not difficult to find the correct result.
\end{remark}

\begin{remark}
\label{rem:3.16}
Let us elaborate on how the trajectories $\hat p(t)$ corresponding to the flows of the reduced 
free Hamiltonians, arising from $\cH_k$ \eqref{3.90} for $k=1,\ldots, n$, can be obtained.
Recall that for $k=1$ the reduction of $\cH_1$ completes the main Hamiltonian $H$ \eqref{3.1}.
Since $\cH_k(K)= h_k(b_R)$ with $h_k(b) = \frac{1}{2k}\tr (b^\dag b)^k$, the free flow
generated by $\cH_k$ through the initial value $K(0) = g_L(0) b_R^{-1}(0)$ is given by
\eqref{3.21} with $d^R h_k(b) = \ri (b^\dagger b)^k$.
Thus the curve $g_L(t)$ \eqref{3.21} has the form
\begin{equation}
g_L(t)=g_L(0)\exp\!\big(-\ri t\big[\cL(0)^k- \frac{1}{2n}\tr (\cL(0)^k) \1_{2n}\big]\big)
\quad\text{with}\quad
\cL(0)= b_R(0)^\dag b_R(0).
\label{3.127}
\end{equation}
The reduced flow results by the usual projection algorithm.
This starts by picking an initial value $z(0) \in \hat M_c$ and setting 
$K(0)=\hat K(z(0))$
by applying \eqref{3.115}, which directly determines $g_L(0)$ and $b_R(0)$ as well.
Then the map $\cP$ \eqref{3.84} gives rise to $\hat p(t)$ via the decomposition of
$g_L(t)\in\SU(2n)$ as displayed in \eqref{3.41}, that is
\begin{equation}
\hat p(t)=\cP(K(t)).
\label{3.128}
\end{equation}
More explicitly, if $\cD(t)$ stands for the (11) block of $g_L(t)$, then the eigenvalues
of $\cD(t)\cD(t)^\dag$ are
\begin{equation}
\sigma(\cD(t)\cD(t)^\dag)=\{\cos^2q_j(t)\mid j=1,\dots,n\},
\label{3.129}
\end{equation}
from which $\hat p_j(t)$ can be obtained using \eqref{3.53}.
In particular, the `particle positions' evolve according to an `eigenvalue dynamics' similarly to other
many-body systems. This involves the one-parameter group $e^{-\ri t\cL(0)^k}$, where
$\cL(0)$ 
is the initial value of the Lax matrix (cf. \eqref{3.91})
\begin{equation}
\cL(z)=\begin{bmatrix}
e^{2v}\1_n&-e^v\hat\alpha(z)\\
-e^v\hat\alpha(z)^\dag&(e^{-2v}\1_n+\hat\alpha(z)^\dag\hat\alpha(z))
\end{bmatrix},
\label{3.130}
\end{equation}
where we suppressed the dependence of $\hat \alpha$ \eqref{3.106} on the parameters $x,u,v$.
A more detailed characterization of the dynamics will be provided elsewhere.
\end{remark}

\section{Full phase space of the hyperbolic version}
\label{sec:3.4}

In this section, we complete the earlier derivation \cite{Ma15} of the hyperbolic analogue
of the Ruijsenaars type system that we studied in the previous sections.
This hyperbolic version arises as a reduction of the natural free system
on the Heisenberg double of $\SU(n,n)$. The previous analysis by Marshall
focused on a dense open submanifold of the reduced phase space, and here
we describe the full phase space wherein Liouville integrability of the
system holds by construction.

\subsection{Definitions and first steps}
\label{subsec:3.4.1}

The starting point in this case is the group
\begin{equation}
\SU(n,n)=\{g\in\SL(2n,\C)\mid g^\dag\bJ g=\bJ\}
\label{1.131}
\end{equation}
with $\bJ=\diag(\1_n,-\1_n)$. Then consider the open submanifold
$\SL(2n,\C)'\subset\SL(2n,\C)$ consisting of those elements, $K$,
that admit both Iwasawa-like decompositions of the form
\begin{equation}
K=g_Lb_R^{-1}=b_Lg_R^{-1},\qquad g_L,g_R\in\SU(n,n),\quad b_L,b_R\in\SB(2n),
\label{3.132}
\end{equation}
where $\SB(2n)<\SL(2n,\C)$ is the group of upper triangular matrices having
positive entries along the diagonal. Both decompositions are unique and the
constituent factors depend smoothly on $K\in\SL(2n,\C)'$. The manifold
$\SL(2n,\C)'$ inherits a symplectic form $\omega$ \cite{AM94}, which has the
same form as the one seen in \eqref{3.16}. On this symplectic manifold
$(\SL(2n,\C)',\omega)$, which is a symplectic leaf of the Heisenberg double
of the Poisson-Lie group $\SU(n,n)$, one has the pairwise Poisson commuting
Hamiltonians
\begin{equation}
\cH_j(K)=\frac{1}{2j}\tr(K\bJ K^\dag\bJ)^j,\qquad j\in\Z^\ast.
\label{3.133}
\end{equation}
They generate complete flows that can be written down explicitly (see
Section \ref{sec:3.5}). We are concerned with a reduction of these Hamiltonians
based on the symmetry group $G_+\times G_+$, where $G_+$ is the block-diagonal
subgroup \eqref{3.26}. Throughout, we refer to the obvious $2\times 2$
block-matrix structure corresponding to $\bJ$. The action of $G_+\times G_+$
on $\SL(2n,\C)'$ is given by the map
\begin{equation}
G_+\times G_+\times\SL(2n,\C)'\to\SL(2n,\C)'
\label{3.134}
\end{equation}
that works according to
\begin{equation}
(\eta_L,\eta_R,K)\mapsto\eta_LK\eta_R^{-1}.
\label{3.135}
\end{equation}
One can check that this map is well-defined, i.e. $\eta_LK\eta_R^{-1}$ stays in
$\SL(2n,\C)'$, and has the Poisson property with respect to the product Poisson
structure on the left-hand side \cite{Se85,AM94}, where on $G_+$ the standard
Sklyanin bracket is used and the Poisson structure on $\SL(2n,\C)'$ is engendered
by $\omega$. Moreover, this $G_+\times G_+$ action is associated with a momentum
map in the sense of Lu \cite{Lu91}.
The momentum map
\begin{equation}
\Phi_+\colon\SL(2n,\C)'\to G_+^\ast\times G_+^\ast
\label{3.136}
\end{equation}
works the same way as \eqref{3.30} did. Namely, it takes an element $K\in\SL(2n,\C)'$
\eqref{3.132} and maps it to a pair of matrices obtained from $(b_L,b_R)$ by replacing
the off-diagonal blocks by null matrices. The Hamiltonians $\cH_j$ \eqref{3.133} are
invariant with respect to the symmetry group $G_+\times G_+$ and $\Phi_+$ is constant
along their flows.

The general theory \cite{Lu91} ensures that one can now perform Marsden-Weinstein
type reduction. This amounts to imposing the constraint $\Phi_+(K)=\mu=(\mu_L,\mu_R)$
with some constant $\mu\in G_+^\ast\times G_+^\ast$ and then taking the quotient
of $\Phi_+^{-1}(\mu)$ by the corresponding isotropy group, denoted below as $G_\mu$.

We pick the value of the momentum map to be $\mu=(\mu_L,\mu_R)$ \eqref{3.33}.
For simplicity, we now assume that the parameter $x$ in \eqref{3.33} is positive.
The corresponding isotropy group is $G_\mu$ \eqref{3.36}.
It will turn out that the reduced phase space
\begin{equation}
M=\Phi_+^{-1}(\mu)/G_\mu
\label{3.137}
\end{equation}
is a smooth manifold. Our task is to characterize this manifold, which carries the
reduced symplectic form $\omega_M$ defined by the relation
\begin{equation}
\iota_\mu^\ast\omega =\pi_\mu^\ast\omega_M,
\label{3.138}
\end{equation}
where $\iota_\mu\colon\Phi_+^{-1}(\mu)\to\SL(2n,\C)'$ is the tautological
injection and $\pi_\mu\colon\Phi_+^{-1}(\mu)\to M$ is the natural projection.

Consider the following central subgroup $\Z_{2n}$ of $G_+ \times G_+$,
\begin{equation}
\Z_{2n}=\{(w\1_{2n},w\1_{2n})\mid w\in\C,\ w^{2n}=1\},
\label{3.139}
\end{equation}
which acts trivially according to \eqref{3.135} and is contained in $G_\mu$.
Later we shall refer to the factor group
\begin{equation}
\bar{G}_\mu=G_\mu/\Z_{2n}
\label{3.140}
\end{equation}
as the `effective gauge group'. Obviously, we have
$\Phi_+^{-1}(\mu)/G_\mu=\Phi_+^{-1}(\mu)/\bar{G}_\mu$.

Our aim is to obtain a model of the quotient space $M$ by explicitly
exhibiting a global cross-section of the orbits of $G_\mu$ in $\Phi_+^{-1}(\mu)$.
The construction uses the generalized Cartan decomposition of $\SU(n,n)$, which
says that every $g\in \SU(n,n)$ can be written as
\begin{equation}
g=g_+\begin{bmatrix}\cosh q&\sinh q\\ \sinh q&\cosh q\end{bmatrix}h_+,
\label{3.141}
\end{equation}
where $g_+,h_+\in G_+$ and $q=\diag(q_1,\dots,q_n)$ is a real diagonal matrix verifying
\begin{equation}
q_1\geq\dots\geq q_n\geq 0.
\label{3.142}
\end{equation}
The components $q_j$ are uniquely determined by $g$, and yield smooth functions
on the locus where they are all distinct. In what follows we shall often identify
diagonal matrices like $q$ with the corresponding elements of $\R^n$.

As the first step towards describing $M$, we apply the decomposition \eqref{3.141}
to $g_L$ in $K=g_Lb_R^{-1}$ and impose the right-handed momentum constraint
$\pi(b_R)=\mu_R$. It is then easily seen that up to $G_\mu$-transformations every
element of $\Phi_+^{-1}(\mu)$ can be represented in the following form:
\begin{equation}
K=\begin{bmatrix}\rho&\0_n\\\0_n&\1_n\end{bmatrix}
\begin{bmatrix}\cosh q& \sinh q\\ \sinh q&\cosh q\end{bmatrix}
\begin{bmatrix}e^{-v}\1_n&\alpha\\\0_n&e^v\1_n\end{bmatrix}.
\label{3.143}
\end{equation}
Here $\rho\in\SU(n)$ and $\alpha$ is an $n\times n$ complex matrix. Referring to
the $2\times 2$ block-matrix notation, we introduce $\Omega=K_{22}$ and record
from \eqref{3.143} that
\begin{equation}
\Omega=(\sinh q)\alpha+e^v\cosh q.
\label{3.144}
\end{equation}
Just as in \eqref{3.45}, it proves to be advantageous to seek for $\Omega$ in the polar-decomposed form,
\begin{equation}
\Omega=\Lambda T,
\label{3.145}
\end{equation}
where $T\in\UN(n)$ and $\Lambda$ is a Hermitian, positive semi-definite matrix.

The next step is to implement the left-handed momentum constraint $\pi(b_L)=\mu_L$
by writing $K=b_Lg_R^{-1}$ with
\begin{equation}
b_L=\begin{bmatrix}e^u\nu(x)&\chi\\\0_n&e^{-u}\1_n\end{bmatrix},
\label{3.146}
\end{equation}
where $\chi$ is an unknown $n\times n$ matrix. Then we inspect the components of
the $2\times 2$ block-matrix identity
\begin{equation}
K\bJ K^\dag=b_L\bJ b_L^\dag,
\label{3.147}
\end{equation}
which results by substituting $K$ from \eqref{3.143}. We find that the (22) component
of this identity is equivalent to
\begin{equation}
\Omega\Omega^\dag=\Lambda^2=e^{-2u}\1_n+e^{-2v}(\sinh q)^2.
\label{3.148}
\end{equation}
This uniquely determines $\Lambda$ in terms of $q$ and also shows that
$\Lambda$ is invertible. As in the trigonometric case, the condition
$u+v\neq 0$ ensures that
\begin{equation}
q_n>0,
\label{3.149}
\end{equation}
and therefore $\sinh q$ is an invertible diagonal matrix.

By using the above relations, it is simple algebra to convert the (12)
and the (21) components of the identity \eqref{3.147} into the equation
\begin{equation}
\chi=\rho(\sinh q)^{-1}[e^{-u}\cosh q-e^{u+v}\Omega^\dag].
\label{3.150}
\end{equation}
Finally, the (11) entry of the identity \eqref{3.147} translates into the following
crucial equation (which is the hyperbolic analogue of \eqref{3.51}):
\begin{equation}
\rho(\sinh q)^{-1}T^\dag(\sinh q)^2T(\sinh q)^{-1}\rho^\dag=\nu(x)\nu(x)^\dag.
\label{3.151}
\end{equation}
This is to be satisfied by $q$ subject to \eqref{3.142}, \eqref{3.149} and $T\in\UN(n)$,
$\rho\in\SU(n)$. After finding $q$, $T$ and $\rho$, one can reconstruct $K$
\eqref{3.143} by applying the formulas derived above.

From our viewpoint, a key observation is that \eqref{3.151} coincides completely with
equation (5.7) in the paper \cite{FK11}, where its general solution was found.
The correspondence between the notations used here and in \cite{FK11} is
\begin{equation}
(\rho,T,\sinh q)\Longleftrightarrow(k_L,k_R^\dag,e^{\hat p}).
\label{3.152}
\end{equation}
For this reason, we introduce the new variable $\hat{p}\in\R^n$ by the definition
\begin{equation}
\sinh q_k=e^{\hat p_k},\quad k=1,\dots,n.
\label{3.153}
\end{equation}
Because of \eqref{3.142} and \eqref{3.149}, the variables $\hat{p}_k$ satisfy
\begin{equation}
\hat{p}_1\geq\dots\geq\hat{p}_n.
\label{3.154}
\end{equation}

We do not see an a priori reason why the very different reduction procedures led
to the same equation \eqref{3.151} here and in \cite{FK11}. However, we are going to
take full advantage of this situation. We note that essentially every formula written in
this section appears in \cite{Ma15} as well (with slightly different notations),
but in Marshall's work the previously obtained results about the solutions of
\eqref{3.151} were not used.

\subsection{The reduced phase space}
\label{subsec:3.4.2}

The statement of Proposition \ref{prop:3.19} characterizes a submanifold of $M$
\eqref{3.137}, which was erroneously claimed in \cite{Ma15} to be equal to $M$.
After describing this `local picture', we shall present a globally valid model
of $M$.

\subsubsection{The local picture}
\label{subsubsec:3.4.2.1}

By applying results of \cite{FK11,FK12} in the same way as
we did in the trigonometric case, one can prove the following lemma.

\begin{lemma}
\label{lem:3.17}
The constraint surface $\Phi_+^{-1}(\mu)$ contains an element of the form
\eqref{3.143} if and only if $\hat p$ defined by \eqref{3.153} lies in the closed
polyhedron
\begin{equation}
\bar{\cC}'_x=\{\hat{p}\in\R^n\mid\hat{p}_k-\hat{p}_{k+1}\geq x/2\ (k=1,\dots,n-1)\}.
\label{3.155}
\end{equation}
\end{lemma}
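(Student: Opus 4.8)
The plan is to run the same strategy that produced Proposition~\ref{prop:3.2} in the trigonometric case, exploiting the fact — emphasized in the paragraph preceding the lemma — that the momentum map constraint $\Phi_+(K)=\mu$ is essentially equivalent to the matrix equation \eqref{3.151}, and that \eqref{3.151} is \emph{literally} equation~(5.7) of \cite{FK11} under the dictionary \eqref{3.152}, i.e. $(\rho,T,\sinh q)\leftrightarrow(k_L,k_R^\dag,e^{\hat p})$. First I would make the reduction-to-\eqref{3.151} step precise in both directions: if $K\in\Phi_+^{-1}(\mu)$ has the form \eqref{3.143}, then the block-matrix analysis of the identity \eqref{3.147} carried out in Subsection~\ref{subsec:3.4.1} shows that $q$ (subject to \eqref{3.142}, \eqref{3.149}), $T\in\UN(n)$ and $\rho\in\SU(n)$ satisfy \eqref{3.151}; conversely, given any solution $q,T,\rho$ of \eqref{3.151} with $q_n>0$, the formulas \eqref{3.144}, \eqref{3.145}, \eqref{3.148}, \eqref{3.150} reconstruct an element of $\Phi_+^{-1}(\mu)$ of the form \eqref{3.143}, exactly as in the proof of Proposition~\ref{prop:3.3}. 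A small but important point is that here, unlike the trigonometric situation where \eqref{3.48} forced a positivity constraint on $q$, equation \eqref{3.148} gives $\Lambda^2=e^{-2u}\1_n+e^{-2v}(\sinh q)^2$, which is automatically positive definite; hence no extra restriction on $\hat p$ arises from this step and the entire constraint on $\hat p$ comes from the solvability of \eqref{3.151} alone.

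With this identification, the lemma becomes a statement purely about equation~(5.7) of \cite{FK11}: it admits a solution $(k_L,k_R)\in\UN(n)\times\UN(n)$ for given $e^{\hat p}$ precisely when $\hat p$ lies in the closed polyhedron $\bar\cC'_x$ \eqref{3.155}. For the ``if'' direction I would invoke the explicit construction of the most general $\rho$ and $T$ from \cite[Section~5]{FK11} — the analogue of Proposition~\ref{prop:3.3} — which in particular realizes every $\hat p\in\bar\cC'_x$; note that, in contrast with the trigonometric case, there is \emph{no} half-line bound such as $\hat p_1\le 0$, so the relevant domain coincides, up to the trivial Weyl ordering, with the one appearing in \cite{FK11}, consistently with Remark~\ref{rem:3.1}. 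For the ``only if'' direction — that any realized $\hat p$ satisfies $\hat p_k-\hat p_{k+1}\ge x/2$ — I would repeat the argument from the proof of Proposition~\ref{prop:3.2} given in Appendix~\ref{sec:C.2}, which merges the proofs of \cite[Lemma~5.2]{FK11} and \cite[Theorem~2]{FK12}: rewrite \eqref{3.151} as the analogue of \eqref{3.57} (with $\sgn(x)=1$), equate characteristic polynomials of the two sides, and extract from the non-negativity of the squared moduli $|w_j|^2=|(\rho^\dag\hat v)_j|^2$ the polyhedral inequalities defining $\bar\cC'_x$. The hyperbolic sinh factors simply replace the trigonometric ones; the combinatorial backbone of that argument is unchanged.

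The main obstacle I anticipate is purely a bookkeeping one: checking that the merged necessity argument of Appendix~\ref{sec:C.2} transcribes correctly to the hyperbolic polyhedron \eqref{3.155}, paying attention to the sign conventions forced by the standing assumption $x>0$ and to the $\bJ$-block structure of the identity \eqref{3.147} (as opposed to the $KK^\dag$ structure of \eqref{3.47}), which alters some intermediate formulas even though the final equation \eqref{3.151} is identical to that in \cite{FK11}. Since the positivity obstruction that complicated the trigonometric analysis is absent here, I expect no genuinely new difficulty, and I would present the proof by reducing it to \eqref{3.151}, citing \cite{FK11,FK12} for the description of its solutions, and spelling out only the few sign modifications relative to Appendix~\ref{sec:C.2}.
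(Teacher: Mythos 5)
Your proposal follows essentially the same route as the paper: the paper's proof of Lemma~\ref{lem:3.17} is precisely an appeal to the reduction of the momentum constraint to equation~\eqref{3.151}, its identification with eq.~(5.7) of \cite{FK11} via the dictionary~\eqref{3.152}, and the transplantation of the Proposition~\ref{prop:3.2}/Appendix~\ref{sec:C.2} argument (non-negativity of the $|w_j|^2$ yielding the spacing inequalities, sufficiency from the explicit solutions in \cite[Section~5]{FK11}). One minor attribution is off but harmless: the absence of the bound $\hat p_1\leq 0$ in $\bar\cC'_x$ comes from the parametrization $\sinh q_k=e^{\hat p_k}$ being unbounded above (as opposed to $\sin q_k\leq 1$ in the trigonometric case), not from the automatic positivity of $\Lambda^2$ in \eqref{3.148}.
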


The polyhedron $\bar{\cC}'_x$ is the closure of its interior, $\cC'_x$, defined
by strict inequalities. We note that in \cite{Ma15} the elements of the boundary
$\bar{\cC}'_x\setminus\cC'_x$ were omitted.

For any fixed $\hat{p}\in\bar\cC'_x$, one can write down the solutions of \eqref{3.151}
for $T$ and $\rho$ explicitly \cite{FK11}. By inserting those into the formula
\eqref{3.143}, using the relations \eqref{3.144}, \eqref{3.145}, \eqref{3.148} to determine
the matrix $\alpha$, one arrives at the next lemma. It refers to the $n\times n$
real matrices  $\theta(x,\hat p)$, $\zeta(x,\hat p)$, $\kappa(x)$ displayed in
\eqref{3.61}-\eqref{3.65}, which belong to the group $\SO(n)$.

\begin{proposition}
\label{prop:3.18}
For any parameters $u,v,x$ subject to $u+v\neq 0$, $x>0$, and variables
$\hat p\in\bar\cC'_x$ and $e^{\ri\hat q}$ from the $n$-torus $\T_n$,
define the matrix
\begin{equation}
K(\hat p,e^{\ri \hat q})
=\begin{bmatrix}\rho &\0_n\\\0_n&\1_n\end{bmatrix}
\begin{bmatrix}\sqrt{\1_n+e^{2\hat p}} & e^{\hat p}\\
e^{\hat p}&\sqrt{\1_n+e^{2\hat p}}\end{bmatrix}
\begin{bmatrix}e^{-v}\1_n&\alpha\\\0_n&e^v\1_n\end{bmatrix}
\label{3.156}
\end{equation}
by employing
\begin{equation}
\rho=\rho(x,\hat p)=\kappa(x)\zeta(x,\hat p)^{-1}
\label{3.157}
\end{equation}
and
\begin{equation}
\alpha=\alpha(x,u,v,\hat p, e^{\ri \hat q})
=e^{\ri\hat q}\sqrt{e^{-2u}e^{-2\hat p}+ e^{-2v}\1_n}\,\theta(x,\hat p)^{-1}
-e^v\sqrt{e^{-2\hat p}+\1_n}.
\label{3.158}
\end{equation}
Then $K(\hat p,e^{\ri\hat q})$ resides in the constraint surface $\Phi_+^{-1}(\mu)$
and the set
\begin{equation}
S=\{K(\hat p,e^{\ri\hat q})\mid(\hat p,e^{\ri\hat q})\in\bar\cC'_x\times\T_n\}
\label{3.159}
\end{equation}
intersects every orbit of $G_\mu$ in $\Phi_+^{-1}(\mu)$.
\end{proposition}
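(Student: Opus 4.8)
The plan is to follow the proof of the trigonometric counterpart, Proposition~\ref{prop:3.3}, whose content is a direct application of the analysis of equation~\eqref{3.151} given in \cite[Section 5]{FK11}: under the dictionary~\eqref{3.152} this equation is literally equation~(5.7) of \cite{FK11}, so its whole solution theory — the description of the admissible $\hat p$, the allowed $\rho$ and $T$, and the residual ambiguities in them — is available off the shelf. The only genuinely new ingredient is one short matrix identity relating $\alpha$ in~\eqref{3.158} to the polar data of Subsection~\ref{subsec:3.4.1}.

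First I would settle that $K(\hat p,e^{\ri\hat q})\in\Phi_+^{-1}(\mu)$. Fixing $\hat p\in\bar{\cC}'_x$ and $e^{\ri\hat q}\in\T_n$ and setting $\sinh q_k=e^{\hat p_k}$, the inequalities defining $\bar{\cC}'_x$ in~\eqref{3.155} (with $x>0$) force the $\hat p_k$ to be finite and strictly decreasing, so $q_1>\dots>q_n>0$ holds automatically, in particular~\eqref{3.149}. With $\rho=\kappa(x)\zeta(x,\hat p)^{-1}$ and $T:=e^{\ri\hat q}\theta(x,\hat p)^{-1}=e^{\ri\hat q}\theta(-x,\hat p)$, the dictionary~\eqref{3.152} exhibits $(\rho,T)$ as one of the solutions of~\eqref{3.151} found in \cite[Section 5]{FK11}. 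Substituting~\eqref{3.158} into $\Omega=(\sinh q)\alpha+e^v\cosh q$ and using that the diagonal matrices $e^{\hat p}$, $e^{\ri\hat q}$ and the square-root factors mutually commute, the $\pm e^v\sqrt{\1_n+e^{2\hat p}}$ terms cancel and one is left with $\Omega=e^{\ri\hat q}\Lambda\,\theta(x,\hat p)^{-1}$, where $\Lambda=\sqrt{e^{-2u}\1_n+e^{-2v}e^{2\hat p}}$; hence $\Omega\Omega^\dag=\Lambda^2=e^{-2u}\1_n+e^{-2v}(\sinh q)^2$, which is~\eqref{3.148}, and, $\Lambda$ being a positive diagonal matrix commuting with $e^{\ri\hat q}$, the polar decomposition~\eqref{3.145} of $\Omega$ is exactly $\Lambda T$ with $T$ unitary as above. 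Equation~\eqref{3.150} then merely defines $\chi$, while~\eqref{3.151} holds by the choice of $(\rho,T)$; running the block computation of Subsection~\ref{subsec:3.4.1} in reverse, these together with~\eqref{3.148} make the identity $K\bJ K^\dag=b_L\bJ b_L^\dag$ hold for a $b_L$ of the shape~\eqref{3.146}, so the uniqueness of the Iwasawa factorization gives $\pi(b_L)=\mu_L$, while $\pi(b_R)=\mu_R$ because the rightmost factor of~\eqref{3.156} is $b_R^{-1}$ of the required form. Thus $K(\hat p,e^{\ri\hat q})\in\Phi_+^{-1}(\mu)$; the remaining formalities ($\det K=1$, $g_L\in\SU(n,n)$, and that $K$ admits both Iwasawa-type factorizations) are immediate.

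Next I would prove that $S$ meets every $G_\mu$-orbit. Given an arbitrary $K\in\Phi_+^{-1}(\mu)$, the generalized Cartan decomposition~\eqref{3.141} of $g_L$ together with the constraint $\pi(b_R)=\mu_R$ brings $K$, up to a $G_\mu$-transformation, to the form~\eqref{3.143}; by Lemma~\ref{lem:3.17} the associated $\hat p$ lies in $\bar{\cC}'_x$, so the $\hat p_k$ are pairwise distinct. At such a regular $\hat p$ the solution set of~\eqref{3.151} described in \cite[Section 5]{FK11} consists of all $T$ equal to a diagonal unitary matrix times $\theta(-x,\hat p)$, and all $\rho$ equal to an element of the little group of $\nu(x)\nu(x)^\dag$ times $\kappa(x)\zeta(x,\hat p)^{-1}$; the residual gauge freedom preserving the form~\eqref{3.143} at a regular $\hat p$ — read off from the uniqueness properties of the Iwasawa and Cartan decompositions of $\SU(n,n)$ combined with $G_\mu=G_+(\mu_L)\times G_+$ — absorbs the little-group factor in $\rho$ and reduces the diagonal factor in $T$ to a single torus element $e^{\ri\hat q}\in\T_n$, after which~\eqref{3.144}, \eqref{3.145} and~\eqref{3.148} force $\alpha$ to be~\eqref{3.158}. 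Hence $K$ sits on the orbit of $K(\hat p,e^{\ri\hat q})\in S$. The hard part is exactly this bookkeeping of residual ambiguities; since~\eqref{3.151} is verbatim equation~(5.7) of \cite{FK11}, it can be carried out as there, parallel to the proofs of Proposition~\ref{prop:3.3} and Theorem~\ref{thm:3.9}.
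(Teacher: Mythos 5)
Your proposal is correct and follows essentially the same route as the paper, which states Proposition \ref{prop:3.18} without a detailed proof precisely because, via the dictionary \eqref{3.152}, equation \eqref{3.151} coincides with equation (5.7) of \cite{FK11}, so the explicit solutions for $(\rho,T)$ and the bookkeeping of residual ambiguities are imported from there exactly as in Proposition \ref{prop:3.3}; your computation $\Omega=e^{\ri\hat q}\Lambda\,\theta(x,\hat p)^{-1}$ recovering \eqref{3.148} and \eqref{3.158} is the intended verification. The only point you understate is the claim that $K\in\SL(2n,\C)'$ (i.e. the existence of the second decomposition $K=b_Lg_R^{-1}$) is ``immediate'': the paper's Remark \ref{rem:3.20} notes this is not obvious and checks it from the constraint equations, but your reverse block computation supplies exactly the data needed for that check, so there is no real gap.
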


By arguing verbatim along the lines of the previous section, and referring to
\cite{Ma15} for the calculation of the reduced symplectic form, one can establish
the validity of the subsequent proposition.

\begin{proposition}
\label{prop:3.19}
The effective gauge group $\bar G_\mu$ \eqref{3.140} acts freely
on $\Phi_+^{-1}(\mu)$ and thus the quotient space $M$
\eqref{3.137} is a smooth manifold. The restriction of the natural projection
$\pi_\mu\colon\Phi_+^{-1}(\mu)\to M$ to
\begin{equation}
S^o=\{K(\hat p,e^{\ri\hat q})\mid(\hat p,e^{\ri\hat q})\in\cC'_x\times\T_n\}
\label{3.160}
\end{equation}
gives rise to a diffeomorphism between $\cC'_x\times\T_n$ and the open, dense
submanifold of $M$ provided by $\pi_\mu(S^o)$. Taking $S^o$ as model of
$\pi_\mu(S^o)$, the corresponding restriction of the reduced symplectic
form $\omega_M$ becomes the Darboux form
\begin{equation}
\omega_{S^o}=\sum_{k=1}^nd\hat q_k\wedge d\hat p_k.
\label{3.161}
\end{equation}
\end{proposition}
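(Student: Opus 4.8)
The plan is to follow the proof of the trigonometric case step by step, the relevant templates being Proposition~\ref{prop:3.8} and Theorem~\ref{thm:3.9} (freeness and smoothness) and Lemma~\ref{lem:3.10} and Theorem~\ref{thm:3.11} (the cross-section property and the Darboux form). The ingredients already at our disposal are Proposition~\ref{prop:3.18} (the set $S$ \eqref{3.159} meets every $G_\mu$-orbit, with the explicit parametrization $K(\hat p,e^{\ri\hat q})$ \eqref{3.156}), Lemma~\ref{lem:3.17} (the $\hat p$-range is exactly $\bar{\cC}'_x$), the generalized Cartan decomposition \eqref{3.141} of $\SU(n,n)$, the uniqueness of the Iwasawa factorization \eqref{3.132}, and the detailed solution of \eqref{3.151} imported from \cite{FK11}. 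A useful simplification compared to the trigonometric situation is that here $x>0$ forces every $\hat p\in\bar{\cC}'_x$ to have strictly decreasing (hence pairwise distinct) components, so the corresponding $q$ in \eqref{3.141} satisfies $q_1>\dots>q_n>0$; this strictness makes the ambiguity in \eqref{3.141} minimal and guarantees that the function $K\mapsto\hat p$ is smooth on all of $\Phi_+^{-1}(\mu)$.

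First I would prove that the effective gauge group $\bar G_\mu$ \eqref{3.140} acts freely on $\Phi_+^{-1}(\mu)$. Since every orbit meets $S$, it suffices to show that the $G_\mu$-stabilizer of an arbitrary $K=K(\hat p,e^{\ri\hat q})\in S$ lies in the center $\Z_{2n}$ \eqref{3.139}. Writing out $\eta_L K\eta_R^{-1}$ in block form as in the trigonometric proof and comparing the two Iwasawa factorizations via the hyperbolic analogue of Lemma~\ref{lem:3.6} --- when $q_1>\dots>q_n>0$ the residual freedom in \eqref{3.141} is just $(g_+,h_+)\mapsto(g_+\diag(m,m),\diag(m,m)^{-1}h_+)$ with a diagonal $m\in\T_n$, because then the diagonal matrices $\tanh^2(q_j)$ are distinct --- one obtains $\eta_L(2)=\eta_R(1)=\eta_R(2)=m$, $\eta_L(1)\rho=\rho m$, together with $m\,\alpha\,m^{-1}=\alpha$. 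Because of \eqref{3.158} the off-diagonal entries of the last relation read $\theta(x,\hat p)^{-1}_{jk}=(m\,\theta(x,\hat p)^{-1}m^{-1})_{jk}$ for $j\neq k$, and the non-vanishing of suitable entries of $\theta(-x,\hat p)=\theta(x,\hat p)^{-1}$ for $\hat p\in\bar{\cC}'_x$ (the hyperbolic analogue of Lemma~\ref{lem:3.7}, checked by inspection as in \cite{FK11}) forces $m=z\1_n$ to be scalar; then $\eta_L(1)=\rho m\rho^{-1}=z\1_n$, so $(\eta_L,\eta_R)=(z\1_{2n},z\1_{2n})\in\Z_{2n}$. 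Thus the stabilizer equals $\Z_{2n}$ and $\bar G_\mu$ acts freely. Together with the general facts in Appendix~\ref{sec:C.4} (a free, proper action --- $G_+\times G_+$ being compact, hence so is its closed subgroup $G_\mu$), this shows that $\Phi_+^{-1}(\mu)$ is an embedded submanifold, that $M$ \eqref{3.137} is a smooth manifold, and that $\pi_\mu$ is a smooth submersion.

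Next I would establish the cross-section statement. The assignment $(\hat p,e^{\ri\hat q})\mapsto K(\hat p,e^{\ri\hat q})$ is a smooth bijection of $\cC'_x\times\T_n$ onto $S^o$ \eqref{3.160}, and $\pi_\mu|_{S^o}$ is injective: this is the hyperbolic analogue of Lemma~\ref{lem:3.10}, proved the same way. If $K(\hat p',e^{\ri\hat q'})=\eta_L K(\hat p,e^{\ri\hat q})\eta_R^{-1}$, then $\hat p'=\hat p$ since $q$ in \eqref{3.141} is gauge invariant; running the argument of the freeness proof reduces the gauge relation to $\alpha(\hat p,e^{\ri\hat q'})=m\,\alpha(\hat p,e^{\ri\hat q})\,m^{-1}$ with a diagonal $m$, whose diagonal part forces $e^{\ri\hat q'}=e^{\ri\hat q}$ because $\theta(x,\hat p)^{-1}_{jj}\neq 0$ for $\hat p\in\cC'_x$. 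A smooth inverse of $\pi_\mu|_{S^o}$ is furnished by extracting $\hat p$ and $e^{\ri\hat q}$ from $K$ via gauge-invariant formulas (for the $\hat p$-part, the map on $M$ induced by the Cartan decomposition \eqref{3.141}, analogous to $\bar\cP$ of \eqref{3.85}; for the $\hat q$-part, suitable phases), so $\pi_\mu|_{S^o}$ is a diffeomorphism onto its image. That this image equals $\bar\cP^{-1}(\cC'_x)$, hence is open since $\bar\cP$ is continuous, and that it is dense in $M$, follows exactly as in Theorem~\ref{thm:3.11}: $\bar\cP^{-1}(\bar{\cC}'_x)=\pi_\mu(S)=M$ by Lemma~\ref{lem:3.17} and Proposition~\ref{prop:3.18}, $\bar{\cC}'_x$ is the closure of $\cC'_x$, and $K(\hat p,e^{\ri\hat q})$ depends continuously on $\hat p\in\bar{\cC}'_x$.

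Finally, to identify the reduced form, I would use \eqref{3.138}: the pull-back of $\omega_M$ to $\cC'_x\times\T_n$ via the above diffeomorphism with $\pi_\mu(S^o)$ coincides with the pull-back of the Alekseev--Malkin form $\omega$ \eqref{3.16} along the embedding $(\hat p,e^{\ri\hat q})\mapsto K(\hat p,e^{\ri\hat q})$ into $\SL(2n,\C)'$. For this last pull-back I would invoke the device of Marshall \cite{Ma15} --- which applies directly, since the present reduction is exactly the one treated in \cite{Ma15} --- that circumvents inserting \eqref{3.156} into \eqref{3.16} and shows that $\hat p,\hat q$ are Darboux coordinates, i.e.\ the pull-back is $\sum_{k=1}^n d\hat q_k\wedge d\hat p_k$. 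The main obstacle is really the bookkeeping in the freeness and injectivity steps: pinning down the precise ambiguity in the generalized Cartan decomposition \eqref{3.141} of the noncompact group $\SU(n,n)$ and verifying the requisite non-vanishing of entries of $\theta(x,\hat p)$ on $\bar{\cC}'_x$; both, however, reduce to facts established in \cite{FK11}, and the symplectic-form computation is taken over verbatim from \cite{Ma15}.
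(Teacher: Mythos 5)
Your proposal is correct and matches the paper's own treatment: the paper establishes Proposition \ref{prop:3.19} precisely by repeating verbatim the arguments of the trigonometric case (the analogues of Proposition \ref{prop:3.8}, Theorem \ref{thm:3.9}, Lemma \ref{lem:3.10}, Theorem \ref{thm:3.11}, with the generalized Cartan decomposition \eqref{3.141} and the non-vanishing entries of $\theta(x,\hat p)$ taken from \cite{FK11}) and by referring to \cite{Ma15} for the computation of the reduced symplectic form. Your observation that the hyperbolic case is even simpler—since $\cosh q_j$ never vanishes and the components of $\hat p\in\bar{\cC}'_x$ are automatically distinct, removing the $m_1\neq m_2$ subtlety of Lemma \ref{lem:3.6}—is consistent with the paper's intent.
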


\begin{remark}
\label{rem:3.20}
In the formula \eqref{3.156} $K(\hat p,e^{\ri\hat q})$ appears in the decomposed form
$K=g_Lb_R^{-1}$ and it is not immediately obvious that it belongs to $\SL(2n,\C)'$,
i.e. that it can be decomposed alternatively as $b_Lg_R^{-1}$.
However, by defining $b_L(\hat p, e^{\ri \hat q})\in \SB(2n)$ by the formula
\eqref{3.146} using $\chi$ in \eqref{3.150} with the change of variables
$\sinh q=e^{\hat p}$, the matrix $\rho$ as given above, and
$T=e^{\ri\hat q}\theta(x,\hat p)^{-1}$ that enters \eqref{3.156}, we can verify that for
these elements $g_R^{-1}=b_L^{-1}K$ satisfies the defining relation of $\SU(n,n)$
\eqref{1.131}, as required. The reader may perform this verification, which relies
only on the constraint equations displayed in Subsection \ref{subsec:3.4.1}.
\end{remark}

\subsubsection{The global picture}
\label{subsubsec:3.4.2.2}

The train of thought leading to the construction below can be outlined as follows.
Proposition \ref{prop:3.19} tells us, in particular, that any $G_\mu$-orbit passing
through $S^o$ intersects $S^o$ in a single point. Direct inspection shows that the
analogous statement is false for $S\setminus S^o$, which corresponds to
$(\bar\cC'_x\setminus\cC'_x)\times\T_n$ in a one-to-one manner.
Thus a global model of $M$ should result by identifying those points of
$S\setminus S^o$ that lie on the same $G_\mu$-orbit. By using the bijective map
from $\bar\cC'_x\times\T_n$ onto $S$ given by the formula \eqref{3.156}, the desired
identification will be achieved by constructing such complex variables out of
$(\hat p,e^{\ri\hat q})\in\bar\cC'_x\times\T_n$ that coincide precisely for gauge
equivalent elements of $S$.

Turning to the implementation of the above plan, we introduce the space of complex
variables
\begin{equation}
\hat M_c=\C^{n-1}\times\C^\times,\qquad(\C^\times=\C\setminus\{0\}),
\label{3.162}
\end{equation}
carrying the symplectic form
\begin{equation}
\hat\omega_c=\ri\sum_{j=1}^{n-1}dz_j\wedge d\bar z_j
+\frac{\ri dz_n\wedge d\bar z_n}{2z_n\bar z_n}.
\label{3.163}
\end{equation}
We also define the surjective map
\begin{equation}
\hat\cZ_x\colon\bar\cC'_x\times\T_n\to\hat M_c,\quad
(\hat p,e^{\ri\hat q})\mapsto z(\hat p,e^{\ri\hat q})
\label{3.164}
\end{equation}
by setting
\begin{equation}
\begin{split}
z_j(\hat p,e^{\ri\hat q})
&=(\hat p_j-\hat p_{j+1}-x/2)^{\tfrac{1}{2}}\prod_{k=j+1}^ne^{\ri\hat q_k},
\quad j=1,\dots,n-1,\\
z_n(\hat p,e^{\ri\hat q})&=e^{-\hat p_1}\prod_{k=1}^ne^{\ri\hat q_k}.
\end{split}
\label{3.165}
\end{equation}
The restriction $\cZ_x$ of $\hat\cZ_x$ to $\cC'_x\times\T_n$ is a diffeomorphism
onto the open subset
\begin{equation}
\hat M_c^o=\bigg\{z\in\hat M_c\,\bigg\vert\,\prod_{j=1}^{n-1} z_j\neq 0\bigg\},
\label{3.166}
\end{equation}
and it verifies the relation
\begin{equation}
\cZ_x^\ast\hat\omega_c=\sum_{k=1}^nd\hat q_k\wedge d\hat p_k.
\label{3.167}
\end{equation}
Thus we manufactured a change of variables $\cC'_x\times\T_n\longleftrightarrow\hat M_c^o$.
The inverse $\cZ_x^{-1}\colon\hat M_c^o\to\cC'_x\times\T_n$ involves the functions
\begin{equation}
\hat p_1(z)=-\log|z_n|,\quad
\hat p_j(z)=-\log|z_n|-\sum_{k=1}^{j-1}(|z_k|^2+x/2)\quad (j=2,\dots,n).
\label{3.168}
\end{equation}
These extend smoothly to $\hat M_c$ wherein $\hat M_c^o$ sits as a dense
submanifold.

Now we state a lemma, which is a simple adaptation from \cite{FK11,Ru95}.

\begin{lemma}
\label{lem:3.21}
By using the shorthand $\sigma_j=\prod_{k=j+1}^ne^{\ri\hat q_k}$ for $j=1,\dots, n-1$ (cf.~\eqref{3.165}), let us define
\begin{equation}
\sigma_+( e^{\ri \hat q})=\diag(\sigma_1,\dots,\sigma_{n-1},1)
\quad\text{and}\quad
\sigma_-(e^{\ri \hat q})=\diag(1,\sigma_1^{-1},\dots,\sigma_{n-1}^{-1}).
\label{3.169}
\end{equation}
Then there exist unique smooth functions $\hat\zeta(x,z)$, $\hat\theta(x,z)$ and
$\hat{\alpha}(x,u,v,z)$ of $z\in\hat{M}_c$ that satisfy the following identities
for any $(\hat p,e^{\ri\hat q})\in\bar\cC'_x\times\T_n$:
\begin{align}
\hat\zeta(x,z(\hat p,e^{\ri \hat q}))
&=\sigma_+(e^{\ri \hat q})\zeta(x,\hat p)\sigma_+( e^{\ri \hat q})^{-1},
\label{3.170}\\
\hat\theta(x,z(\hat p,e^{\ri \hat q}))
&=\sigma_+(e^{\ri \hat q})\theta(x,\hat p) \sigma_-(e^{\ri \hat q}),
\label{3.171}\\
\hat\alpha(x,u,v, z(\hat p,e^{\ri \hat q}))
&=\sigma_+(e^{\ri \hat q})
\alpha(x,u,v,\hat p, e^{\ri \hat q}) \sigma_+(e^{\ri \hat q})^{-1}.
\label{3.172}
\end{align}
Here we refer to the functions on $\bar\cC'_x\times\T_n$
displayed in equations \eqref{3.61}-\eqref{3.65}, and \eqref{3.158}.
\end{lemma}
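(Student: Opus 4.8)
The plan is to deduce uniqueness from a density argument and to obtain existence by transporting, almost verbatim, the explicit construction of Definition~\ref{def:3.12} to the hyperbolic setting, the only genuinely new work being the treatment of $\hat\alpha$. Uniqueness is immediate: as recorded just above the lemma, $\cZ_x$ restricts to a diffeomorphism $\cC'_x\times\T_n\to\hat M_c^o$ \eqref{3.166}, and $\hat M_c^o$ is dense in $\hat M_c$; on $\hat M_c^o$ the identities \eqref{3.170}--\eqref{3.172} literally define $\hat\zeta$, $\hat\theta$, $\hat\alpha$ by pulling back their right-hand sides along $\cZ_x^{-1}$, so any two smooth extensions to $\hat M_c$ agree on a dense set and hence everywhere. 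Thus it remains only to exhibit \emph{smooth} extensions, i.e.\ to check that these pulled-back functions extend smoothly across the locus $\prod_{j=1}^{n-1}z_j=0$.

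For $\hat\zeta$ and $\hat\theta$ there is essentially nothing new. Since equation \eqref{3.151} is word for word equation~(5.7) of \cite{FK11}, its solutions $\rho$ and $T$ — and hence the matrices $\zeta(x,\hat p)$ and $\theta(x,\hat p)$ of \eqref{3.59}--\eqref{3.62} — depend on $\hat p$ exactly as in Subsection~\ref{subsec:3.2.2}; only the polyhedron is now $\bar\cC'_x$ \eqref{3.155}. Moreover these matrices involve $\hat p$ solely through the differences $\hat p_j-\hat p_k$, which under \eqref{3.165} and \eqref{3.168} equal $\sum_{\ell=j}^{k-1}(|z_\ell|^2+x/2)$ — the \emph{same} expressions as in the trigonometric model (here $|x|=x$ since $x>0$), with the building blocks $Q_{jk}$ \eqref{3.100} and $J$ \eqref{3.101} unchanged, and with \eqref{3.114} still valid. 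Consequently the very formulae \eqref{3.102}--\eqref{3.104}, now read with $\hat p(z)$ given by \eqref{3.168}, already define smooth functions $\hat\zeta(x,z)$, $\hat\theta(x,z)$ on all of $\hat M_c$, and the identities \eqref{3.170}, \eqref{3.171} follow by repeating the computation behind \eqref{3.110}, \eqref{3.111} in Lemma~\ref{lem:3.13}, once one observes that for $x>0$ the diagonal matrices $\sigma_\pm(e^{\ri\hat q})$ of \eqref{3.169} coincide with $\tau_{(x)}$ and $\tilde\tau_{(x)}^{-1}$ respectively.

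The substance of the lemma is therefore $\hat\alpha$. Starting from \eqref{3.158} I would conjugate $\alpha(x,u,v,\hat p,e^{\ri\hat q})$ by $\sigma_+(e^{\ri\hat q})$ and absorb the diagonal phase matrix $e^{\ri\hat q}$; entrywise, every phase $e^{\ri\hat q_k}$ then reorganises into a ratio $z_\ell/|z_\ell|$ or $\bar z_\ell/|z_\ell|$, using the telescoping relations $\prod_{m=\ell+1}^n e^{\ri\hat q_m}=z_\ell/|z_\ell|$ and $\prod_{m=1}^n e^{\ri\hat q_m}=z_n/|z_n|$ that come from \eqref{3.165}. The resulting candidate for $\hat\alpha(x,u,v,z)$ — a hyperbolic counterpart of \eqref{3.106}, which I would display explicitly — is then checked to be smooth on $\hat M_c$ by the same cancellation mechanism as in Definition~\ref{def:3.12}: the nearest-neighbour square-root entries of $\theta(\pm x,\hat p)$, which carry a factor $\sinh(\hat p_j-\hat p_{j+1}-x/2)=\sinh(|z_j|^2)=|z_j|^2 J(|z_j|^2)^2$, vanish like $|z_\ell|\,J(|z_\ell|^2)$ as $z_\ell\to 0$, precisely cancelling the $|z_\ell|^{-1}$ produced by the phase reorganisation, so that each matrix element becomes an analytic function of $z_\ell,\bar z_\ell$ times a smooth, nowhere-vanishing factor; granting this, \eqref{3.172} follows exactly as \eqref{3.113} did in Lemma~\ref{lem:3.13}. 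In contrast to the trigonometric case no auxiliary matrix of $\hat\gamma$-type \eqref{3.105} is needed, because the diagonal factors $\sqrt{e^{-2u}e^{-2\hat p}+e^{-2v}\1_n}$ and $\sqrt{e^{-2\hat p}+\1_n}$ in \eqref{3.158} are everywhere smooth and positive on $\hat M_c$: $e^{-2\hat p_j}=|z_n|^2 e^{2\sum_{\ell=1}^{j-1}(|z_\ell|^2+x/2)}$, while $e^{2\hat p_1}=|z_n|^{-2}$ is smooth precisely because $z_n\in\C^\times$ — which is exactly why here $\hat M_c=\C^{n-1}\times\C^\times$ rather than $\C^{n-1}\times\D$.

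I expect the main obstacle to be this last entrywise bookkeeping for $\hat\alpha$: one must track exactly how the products $\prod_{m=k+1}^n e^{\ri\hat q_m}$ hidden in $\sigma_+$ and in the prefactor $e^{\ri\hat q}$ combine with the off-diagonal entries of $\theta(x,\hat p)^{-1}=\theta(-x,\hat p)$ (whose $\sinh$ arguments degenerate at the boundary of $\bar\cC'_x$), and verify that the outcome is not merely continuous but real-analytic across each hyperplane $z_\ell=0$. Since the analogous verification was carried out by direct inspection for the trigonometric model in Definition~\ref{def:3.12} and Lemma~\ref{lem:3.13}, and the hyperbolic formulae differ from those only by signs and by the replacement of $\cos/\sin$ with $\cosh/\sinh$ in parts that play no role in the cancellations, the adaptation is routine though notation-heavy; as with Lemma~\ref{lem:3.13} I would simply state the explicit formula for $\hat\alpha(x,u,v,z)$ and leave the verification of \eqref{3.170}--\eqref{3.172} to straightforward inspection, noting also that the hyperbolic building blocks $\hat\zeta,\hat\theta,\hat\alpha$ are then exactly what is needed to run the global cross-section argument of Theorem~\ref{thm:3.14} in the hyperbolic case.
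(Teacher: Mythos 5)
Your proposal is correct, and for the uniqueness-by-density argument and for $\hat\zeta$, $\hat\theta$ it coincides with the paper's treatment (the paper likewise notes that these are literally the functions of \cite[Definition 3.3]{FK11}, your observation $\sigma_+=\tau_{(x)}$, $\sigma_-=\tilde\tau_{(x)}^{-1}$ for $x>0$ being the reason). Where you diverge is the treatment of $\hat\alpha$: you propose to redo the entrywise phase-absorption and cancellation analysis (phases $z_\ell/|z_\ell|$, $\bar z_\ell/|z_\ell|$ compensated by square-rooted factors $\sim |z_\ell| J(|z_\ell|^2)$ coming from $\sinh(\hat p_j-\hat p_{j+1}-x/2)=\sinh(|z_j|^2)$), which is exactly the mechanism underlying the smoothness of $\hat\theta$ itself and would indeed go through — note also that on the sub-diagonal entries the phases cancel identically, consistent with $\hat\alpha_{k+1,k}$ depending only on $\hat p(z)$. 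The paper avoids this fresh bookkeeping altogether by a small factorization trick: writing $\alpha=\sqrt{e^{-2v}e^{2\hat p}+e^{-2u}\1_n}\,\bigl(e^{-\hat p}e^{\ri\hat q}\bigr)\theta(x,\hat p)^{-1}-e^v\sqrt{e^{-2\hat p}+\1_n}$ and using the identity \eqref{3.175}, $\Delta(z(\hat p,e^{\ri\hat q}))=e^{-\hat p}e^{\ri\hat q}\sigma_+\sigma_-$ with $\Delta(z)=\diag(z_n,e^{-\hat p_2(z)},\dots,e^{-\hat p_n(z)})$ \eqref{3.173}, it obtains the closed formula \eqref{3.174}, $\hat\alpha(x,u,v,z)=\sqrt{e^{-2v}e^{2\hat p(z)}+e^{-2u}\1_n}\,\Delta(z)\hat\theta(x,z)^{-1}-e^v\sqrt{e^{-2\hat p(z)}+\1_n}$, whose smoothness is inherited from the manifestly smooth $\Delta(z)$ (here $z_n\neq0$ is used) and from the already established smoothness of $\hat\theta$; all phase-versus-modulus cancellations are thereby recycled rather than re-verified. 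So your route is sound but heavier precisely at the step you yourself flag as ``notation-heavy'' and leave to inspection, and it would be strengthened by actually displaying the hyperbolic counterpart of \eqref{3.106} — or, more economically, by adopting the $\Delta(z)$ factorization, which both yields the explicit formula and settles real-analyticity across the hyperplanes $z_\ell=0$ without any new entrywise argument.
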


The explicit formulas of the functions on $\hat M_c$ that appear in the above
identities are easily found by first determining them on $\hat M_c^o$ using the
change of variables $\cZ_x$, and then noticing that they automatically extend to
$\hat M_c$. The expressions of the functions $\hat\zeta$ and $\hat\theta$, which
depend only on $z_1,\dots,z_{n-1}$, are the same as given in \cite[Definition 3.3]{FK11}.
(For most purposes the above definitions and the formulas \eqref{3.61}-\eqref{3.65} suffice.) As for $\hat\alpha$, by defining
\begin{equation}
\Delta(z)=\diag(z_n,e^{-\hat p_2(z)},\dots,e^{-\hat p_n(z)})
\label{3.173}
\end{equation}
we have
\begin{equation}
\hat\alpha(x,u,v,z)=\sqrt{e^{-2v}e^{2\hat p(z)}+e^{-2u}\1_n}\,
\Delta(z)\hat \theta(x,z)^{-1}-e^v\sqrt{e^{-2\hat p(z)}+\1_n}
\label{3.174}
\end{equation}
that satisfies relation \eqref{3.172} due to the identity
\begin{equation}
\Delta(z(\hat p, e^{\ri \hat q}))=e^{-\hat p}e^{\ri \hat q}\sigma_+(e^{\ri \hat q})\sigma_-(e^{\ri \hat q}),
\qquad\forall (\hat p,e^{\ri\hat q})\in\bar\cC'_x\times\T_n.
\label{3.175}
\end{equation}

With these preparations at hand, we can formulate the main result of this section.

\begin{theorem}
\label{thm:3.22}
Define the smooth map $\hat K\colon\hat M_c\to\SL(2n,\C)'$ by the formula
\begin{equation}
\hat K(z)
=\begin{bmatrix}\kappa(x)\hat\zeta(x,z)^{-1}&\0_n\\\0_n&\1_n\end{bmatrix}
\begin{bmatrix}\sqrt{\1_n+e^{2\hat p(z)}}&e^{\hat p(z)}\\
e^{\hat p(z)}&\sqrt{\1_n+e^{2\hat p(z)}}\end{bmatrix}
\begin{bmatrix}e^{-v}\1_n&\hat\alpha(x,u,v,z)\\\0_n&e^v\1_n\end{bmatrix}.
\label{3.176}
\end{equation}
The image of $\hat{K}$ belongs to the submanifold $\Phi_+^{-1}(\mu)$ and
the induced mapping $\pi_\mu \circ \hat K$, obtained by using the natural
projection $\pi_\mu\colon\Phi_+^{-1}(\mu)\to M=\Phi^{-1}(\mu)/G_\mu$,
is a symplectomorphism between $(\hat M_c,\hat\omega_c)$, defined by \eqref{3.162}, \eqref{3.163},
and the reduced phase space $(M,\omega_M)$.
\end{theorem}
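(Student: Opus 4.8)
The plan is to mirror the proof of Theorem~\ref{thm:3.14} from the trigonometric case, making the obvious sign modifications ($\cosh/\sinh$ replacing $\cos/\sin$, the open unit disk $\D$ replaced by $\C^\times$, and the symplectic form on the last factor changing accordingly). The core of the argument is the single block-matrix identity
\begin{equation}
\hat K(z(\hat p,e^{\ri\hat q}))
=\begin{bmatrix}\kappa(x)\sigma_+(e^{\ri\hat q})\kappa(x)^{-1}&\0_n\\
\0_n&\ast\end{bmatrix}K(\hat p,e^{\ri\hat q})
\begin{bmatrix}\ast&\0_n\\\0_n&\sigma_+(e^{\ri\hat q})\end{bmatrix}^{-1},
\qquad\forall(\hat p,e^{\ri\hat q})\in\bar\cC'_x\times\T_n,
\label{3.177}
\end{equation}
which should be checked to be equivalent to the three identities of Lemma~\ref{lem:3.21} (the two diagonal blocks $\ast$ being $\sigma_-(e^{\ri\hat q})$-type matrices built from \eqref{3.169} together with the factor $e^{-\hat p}e^{\ri\hat q}$ coming from $\Delta$ in \eqref{3.173}--\eqref{3.175}). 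Once one verifies, using \eqref{3.65} and the analogue of the relation $\kappa(x)\sigma_+(e^{\ri\hat q})\kappa(x)^{-1}\hat v(x)=\hat v(x)$, that the conjugating pair lies in $G_\mu$ \eqref{3.36}, identity \eqref{3.177} shows that $\hat S:=\hat K(\hat M_c)$ is contained in $\Phi_+^{-1}(\mu)$ (by continuity, since $\hat M_c^o$ is dense and $\hat K(\hat M_c^o)$ is gauge equivalent to $S^o$ of Proposition~\ref{prop:3.19}) and intersects every gauge orbit (since $S$ of \eqref{3.159} already does, by Proposition~\ref{prop:3.18}).

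First I would establish the symplectic statement $\hat K^\ast(\omega)=\hat\omega_c$. On the dense open set $\hat M_c^o$ this follows from Proposition~\ref{prop:3.19}, equation \eqref{3.167}, and the fact that $\hat K|_{\hat M_c^o}$ differs from $\cZ_x$ composed with the cross-section parametrization only by a gauge transformation, which leaves $\iota_\mu^\ast(\omega)$ invariant. Since $\hat K^\ast(\omega)$ is manifestly smooth on all of $\hat M_c$ and agrees with $\hat\omega_c$ on the dense subset $\hat M_c^o$, the two forms coincide everywhere. Then $\pi_\mu\circ\hat K$ is a symplectic map, and once it is shown to be a bijection onto $M$ (equivalently, that $\hat S$ is a global cross-section and $\hat K$ is injective) it is automatically a symplectomorphism, because $\pi_\mu$ is a submersion onto the smooth manifold $M$ guaranteed by Proposition~\ref{prop:3.19}.

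The remaining task, and the real content, is injectivity of $\hat K$ together with the statement that $\hat S$ meets each $G_\mu$-orbit in exactly one point; both are already known on $\hat M_c^o$, so only the boundary locus $z_j=0$ (some $j<n$) needs attention. Suppose $\hat K(z')=\eta_L\hat K(z)\eta_R^{-1}$ for some gauge element. Reading off the generalized Cartan decomposition \eqref{3.141} of the $g_L$-factor forces $\hat p(z)=\hat p(z')$ (hence the same $q$); then Lemma~\ref{lem:3.6} (which applies verbatim since $q_1\geq\dots\geq q_n>0$ with no upper-bound constraint needed here) gives $\eta_R=\diag(m_2,m_1)$ with $m_1,m_2$ of the form \eqref{3.73}, and the relation $\hat\alpha(z')=m_2\hat\alpha(z)m_1^{-1}$. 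The crucial point is that the subdiagonal and corner entries $\hat\alpha_{k+1,k}$ and $\hat\alpha_{1,n}$ depend only on $\hat p(z)$ and are nonzero (this is the hyperbolic analogue of Lemma~\ref{lem:3.7}, provable by inspection of \eqref{3.174} and the product formulas for $\hat\theta$ inherited from \cite[Definition 3.3]{FK11}); this forces $m_1=m_2=C\1_n$, so $\hat\alpha(z')=\hat\alpha(z)$, and then successive inspection of the entries $(1,2),\dots,(1,n-1)$, then $(2,n)$, then $(1,1)$ recovers $z'_2=z_2,\dots,z'_{n-1}=z_{n-1}$, then $z'_1=z_1$, then $z'_n=z_n$. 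This yields $z'=z$ and shows $\eta_L=\eta_R$ modulo the center, completing the proof. I expect the main obstacle to be the careful bookkeeping of which $\hat\alpha$-entries are forced nonzero on the boundary — the $z_n\in\C^\times$ puncture (rather than a disk) must be used to guarantee $\hat\alpha_{1,n}$ stays nonzero in the degenerate direction — and verifying that the conjugating matrices in \eqref{3.177} genuinely land in $G_\mu$ rather than merely in $G_+\times G_+$.
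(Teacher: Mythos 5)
Your proposal follows essentially the same route as the paper's proof: the gauge identity relating $\hat K(z(\hat p,e^{\ri\hat q}))$ to $K(\hat p,e^{\ri\hat q})$ as the packaging of Lemma \ref{lem:3.21}, membership in $\Phi_+^{-1}(\mu)$ and surjectivity onto the orbit space via Proposition \ref{prop:3.18}, the pull-back computation on $\hat M_c^o$ extended by smoothness, and injectivity via the Cartan decomposition plus entry inspection of $\hat\alpha(z')=\hat\alpha(z)$. Two bookkeeping points need correcting, though neither breaks the argument. First, the conjugating pair in \eqref{3.177} is built entirely from $\sigma_+(e^{\ri\hat q})$: $\eta_L=c\,\diag(\kappa(x)\sigma_+\kappa(x)^{-1},\sigma_+)$ and $\eta_R=c\,\diag(\sigma_+,\sigma_+)$; no $\sigma_-$ or $\Delta$-type factors enter the gauge element (they sit inside the definition of $\hat\alpha$ via \eqref{3.173}--\eqref{3.175}), and membership in $G_\mu$ follows from $\kappa(x)^{-1}\nu(x)\nu(x)^\dagger\kappa(x)$ being diagonal. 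Second, in the injectivity step your claim that $\hat\alpha_{1,n}$ depends only on $\hat p(z)$ is false: by \eqref{3.173}--\eqref{3.174} that entry is proportional to $z_n$, so it cannot be used to pin down the residual gauge before $z_n'=z_n$ is known. It is also not needed: since $\cosh q$ never vanishes, the generalized Cartan decomposition already gives $\eta_R(1)=\eta_R(2)=m\in\T_n$ (no two-parameter ambiguity as in \eqref{3.73}), and the nonvanishing, $\hat p$-only subdiagonal entries $\hat\alpha_{k+1,k}$ alone force $m=C\1_n$. Likewise the closing entry-chase $(1,2),\dots,(1,n-1),(2,n),(1,1)$ is the trigonometric recipe and does not transfer, because in the hyperbolic $\hat\alpha$ the first row mixes $z_n$ with the other variables; the paper instead reads off $z_1,\dots,z_{n-1}$ from the last column of $\hat\alpha$ using $\hat\theta(x,z)_{nk}=\bar z_{k-1}F_k(x,\hat p(z))$ with $F_k>0$, and then obtains $z_n'=z_n$ from $\Delta(z')=\Delta(z)$.
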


\begin{proof}
We start by pointing out that for any
$(\hat p,e^{\ri\hat q})\in\bar\cC'_x\times\T_n$ the identity
\begin{equation}
\hat K(z(\hat p,e^{\ri\hat q}))
=\begin{bmatrix}\kappa(x) \sigma_+(e^{\ri \hat q})\kappa(x)^{-1}&\0_n\\
\0_n& \sigma_+(e^{\ri \hat q}) \end{bmatrix}K(\hat p,e^{\ri\hat q})
\begin{bmatrix} \sigma_+(e^{\ri \hat q}) &\0_n\\
\0_n&\ \sigma_+(e^{\ri \hat q}) \end{bmatrix}^{-1}
\label{3.177}
\end{equation}
is equivalent to the identities listed in Lemma \ref{lem:3.21}.
We see from this that $\hat K(z(\hat p,e^{\ri\hat q}))$ is a $G_\mu$-transform
of $K(\hat p, e^{\ri \hat q})$ \eqref{3.156}, and thus $\hat K(z)$ belongs to
$\Phi_+^{-1}(\mu)$. Indeed, the right-hand side of \eqref{3.177} can be written
as $\eta_L K(\hat p,e^{\ri\hat q}) \eta_R^{-1}$ with
\begin{equation}
\eta_L=c\begin{bmatrix}\kappa(x)\sigma_+(e^{\ri \hat q})\kappa(x)^{-1}&\0_n\\
\0_n&\sigma_+(e^{\ri \hat q})\end{bmatrix},
\qquad
\eta_R=c\begin{bmatrix} \sigma_+(e^{\ri \hat q})&\0_n\\
\0_n&\sigma_+(e^{\ri \hat q})\end{bmatrix},
\label{3.178}
\end{equation}
where $c$ is a scalar ensuring $\det(\eta_L)=\det(\eta_R)=1$, and one can check
that this $(\eta_L,\eta_R)$ lies in the isotropy group $G_\mu$. Indeed, both
$\kappa(x)$ and $\zeta(x,\hat p)$ are orthogonal matrices of determinant 1.
The main feature of $\kappa(x)$ is that the matrix
$\kappa(x)^{-1}\nu(x)\nu(x)^\dag \kappa(x)$ (with $\nu(x)$ in \eqref{3.34}) is diagonal.
This implies that $\eta_L(1)=\kappa(x)\tau\kappa(x)^{-1}\in\UN(n)$
satisfies \eqref{3.38} for any $\tau\in\T_n$.

To proceed further, we let $\hat K_o$ denote the restriction
of $\hat K$ to the dense open subset $\hat M_c^o$ and also let
$K_o\colon\cC'_x\times\T_n\to\SL(2n,\C)'$ denote the map
defined by the corresponding restriction of the formula \eqref{3.156}.
Notice that, in addition to \eqref{3.138}, we have the relations
\begin{equation}
\pi_\mu\circ\hat K_0=\pi_\mu\circ K_o \circ\cZ_x^{-1}
\quad\hbox{and}\quad
(\pi_\mu\circ K_o)^\ast\omega_M=\sum_{k=1}^nd\hat q_k\wedge d\hat p_k,
\label{3.179}
\end{equation}
which follow from \eqref{3.177} and the last sentence of Proposition \ref{prop:3.19}.
By using \eqref{3.167} (together with $\hat K_o=\iota_\mu\circ\hat K_o$ and
$K_o=\iota_\mu\circ K_o$) the above relations imply the restriction of the equality
\begin{equation}
(\pi_\mu\circ\hat K)^\ast\omega_M=\hat\omega_c
\label{3.180}
\end{equation}
on $\hat M_c^o$. This equality is then valid on the full $\hat M_c$ since the
$2$-forms concerned are smooth.

It is a direct consequence of \eqref{3.177} and Proposition \ref{prop:3.18} that
$\pi_\mu\circ\hat{K}$ is surjective. Since, on account of \eqref{3.180}, it is a
local diffeomorphism, it only remains to demonstrate that the map $\pi_\mu\circ\hat{K}$
is injective. The relation $\pi_\mu(\hat K(z))=\pi_\mu(\hat K(z'))$ for $z,z'\in\hat M_c$
requires that
\begin{equation}
\hat K(z')=\begin{bmatrix}\eta_L(1)&\0_n\\\0_n&\eta_L(2)\end{bmatrix}
\hat K(z)\begin{bmatrix}\eta_R(1) &\0_n\\\0_n&\eta_R(2)\end{bmatrix}^{-1}
\label{3.181}
\end{equation}
for some $(\eta_L,\eta_R)\in G_\mu$. Supposing that \eqref{3.181} holds,
application of the decomposition $\hat{K}(z)=g_L(z)b_R(z)^{-1}$ to the
formula \eqref{3.176} implies that
\begin{equation}
\hat\alpha(z')=\eta_R(1)\hat\alpha(z)\eta_R(2)^{-1}
\label{3.182}
\end{equation}
and
\begin{equation}
g_L(z')=\eta_Lg_L(z)\eta_R^{-1}.
\label{3.183}
\end{equation}
The matrices on the two sides of \eqref{3.183} appear in the form \eqref{3.141},
and standard uniqueness properties of the constituents in this generalized Cartan
decomposition now imply that
\begin{equation}
\hat p(z')=\hat p(z)
\label{3.184}
\end{equation}
and
\begin{equation}
\eta_R(1)=\eta_R(2)=m\in\T_n.
\label{3.185}
\end{equation}
We continue by looking at the $(k+1,k)$ components of the equality \eqref{3.182}
for $k=1,\dots,n-1$ using that $\hat\alpha_{k+1,k}$ depends on $z$ only through
$\hat p(z)$ and it never vanishes. (This follows from \eqref{3.173}-\eqref{3.174}
by utilizing that $\hat \theta(x,z)_{k,k+1} =\theta(x,\hat p(z))_{k,k+1}$ by
\eqref{3.171}, which is nonzero for each $\hat p(z)\in\bar\cC'_x$ as seen from
\eqref{3.59}.) Putting \eqref{3.185} into \eqref{3.182}, we obtain that $m=C\1_n$
with a scalar $C$, and therefore
\begin{equation}
\hat\alpha(z')=\hat\alpha(z).
\label{3.186}
\end{equation}
The rest is an inspection of this matrix equality. In view of \eqref{3.184} and
the forms of $\Delta(z)$ \eqref{3.173} and $\hat \alpha(z)$ \eqref{3.174},
the last column of the equality \eqref{3.186} entails that
\begin{equation}
\hat \theta(x,z)_{nk} =\hat \theta(x,z')_{nk},\quad k=2,\dots,n,
\label{3.187}
\end{equation}
where we re-instated the dependence on $x$ that was suppressed above.
One can check directly from the formulas \eqref{3.165}, \eqref{3.171} and
\eqref{3.59}, \eqref{3.60} that
\begin{equation}
\hat\theta(x,z)_{nk}=\bar z_{k-1}F_k(x,\hat p(z)),\quad k=2,\dots,n,
\label{3.188}
\end{equation}
where $F_k(x,\hat p(z))$ is a smooth, strictly positive function.
Hence we obtain that $z_j=z_j'$ for $j=1,\dots,n-1$.
With this in hand,
since the variable $z_n$ appears only in $\Delta(z)$, we conclude from \eqref{3.186}
that $\Delta(z)=\Delta(z')$. This plainly implies that $z_n=z'_n$, whereby
the proof is complete.

We note in passing that by continuing the above line of arguments the
free action of $G_\mu$ is easily confirmed. Indeed, for $z'=z$ \eqref{3.183}
also implies, besides \eqref{3.185}, the equalities $\eta_L(2)=m$ and
$\eta_L(1)\kappa(x)\hat\zeta(x,z)^{-1}=\kappa(x)\hat\zeta(x,z)^{-1}m$.
Since $m=C\1_n$, as was already established, we must have
$(\eta_L,\eta_R)=C(\1_{2n},\1_{2n})\in\Z_{2n}$ \eqref{3.139}.
By using that the image of $\hat K$ intersects every $G_\mu$-orbit, we can conclude that
$\bar G_\mu$ \eqref{3.140} acts freely on $\Phi_+^{-1}(\mu)$.
\end{proof}

\begin{remark}
\label{rem:3.23}
Observe from Theorem \ref{thm:3.22} that $\hat S=\{\hat K(z)\mid z\in\hat M_c\}$
is a global cross-section for the action of $G_\mu$ on $\Phi_+^{-1}(\mu)$.
Hence $\hat S$ carrying the pull-back of $\omega$ as well as
$(\hat M_c, \hat \omega_c)$ yield globally valid models of the reduced phase
space $(M,\omega_M)$. The submanifold of $\hat S$ corresponding to $\hat M_c^o$
\eqref{3.166} is gauge equivalent to $S^o$ \eqref{3.160} that features in
Proposition \ref{prop:3.19}.
\end{remark}

\section{Discussion}
\label{sec:3.5}

In this chapter we derived a deformation of the trigonometric $\BC_n$ Sutherland system
by means of Hamiltonian reduction of a free system on the Heisenberg double of
$\SU(2n)$. Our main result is the global characterization of the reduced phase space
given by Theorem \ref{thm:3.14}. The Liouville integrability of our system holds on
this phase space, wherein the reduced free flows are complete. These flows can be
obtained by the usual projection method applied to the original free flows described
in Section \ref{sec:3.1}.

The local form of our reduced `main Hamiltonian' \eqref{3.1} is similar to the
Hamiltonian derived in \cite{Ma15}, which deforms the hyperbolic $\BC_n$ Sutherland
system. However, besides a sign difference corresponding to the difference of the
undeformed Hamiltonians, the local domain of our system, $\cC_x\times\T_n$ in
\eqref{3.4}, is different from the local domain appearing in \cite{Ma15}, which in effect
has the form $\cC_x'\times\T_n$ with the open polyhedron\footnote{The notational
correspondence with \cite{Ma15} is:
$(q,p,\alpha,x,y)\leftrightarrow(\hat p,\hat q,e^{-\frac{x}{2}},e^{-v},e^{-u})$.}
\begin{equation}
\cC'_x:=\{\hat p\in\R^n\mid\hat p_k-\hat p_{k+1}>|x|/2\
(k=1,\dots,n-1)\}.
\label{3.189}
\end{equation}
We here wish to point out once more that $\cC_x'\times\T_n$ is \emph{not} the
full reduced phase space that arises from the reduction considered in \cite{Ma15}.
In fact, similarly to our case, the constraint surface contains a submanifold of the form
$\bar\cC_x'\times\T_n$ in the case of \cite{Ma15}, where $\bar\cC_x'$ is the closure of
$\cC_x'$. Then a global model of the reduced phase space can be constructed by
introducing complex variables suitably accommodating the procedure that we utilized
in Subsection \ref{subsec:3.3.4}. In Section \ref{sec:3.4} we clarified the global
structure of the reduced phase space $M$ \eqref{3.137}, and thus completed the
previous analysis \cite{Ma15} that dealt with the submanifold parametrized by
$\cC'_x\times\T_n$. In terms of the model $\hat M_c$ \eqref{3.162} of $M$, the
complement of the submanifold in question is simply the zero set of the product
of the complex variables. The phase space $\hat M_c$ and the embedding of
$\cC'_x\times\T_n$ into it coincides with what occurs for the so-called
$\widetilde{\mathrm{III}}$-system of Ruijsenaars \cite{Ru95,FK11}, which is the
action-angle dual of the standard trigonometric Ruijsenaars-Schneider system.
This circumstance is not surprising in light of the fact \cite{Ma15} that the
reduced `main Hamiltonian' arising from $\cH_1$ \eqref{3.133} is a
$\widetilde{\mathrm{III}}$-type Hamiltonian coupled to external fields.
We display this Hamiltonian below after exhibiting the corresponding Lax matrices.

The unreduced free Hamiltonians $\cH_j$, $j\in\Z^\ast$ \eqref{3.133}, mentioned in
Section \ref{sec:3.4}, can be written alternatively as
\begin{equation}
\cH_j(K)=\frac{1}{2j}\tr(K\bJ K^\dag\bJ)^j=\frac{1}{2j}\tr(K^\dag\bJ K\bJ)^j.
\label{3.190}
\end{equation}

One can verify (for example by using the standard $r$-matrix formula of the Poisson
bracket on the Heisenberg double \cite{Se85}) that the Hamiltonian flow generated by
$\cH_j$ reads
\begin{equation}
\begin{split}
K(t_j)&=\exp\bigg[\ri t_j\bigg(
(K(0)\bJ K(0)^\dag\bJ)^j
-\frac{1}{2n}\tr(K(0)\bJ K(0)^\dag\bJ)^j\1_{2n}\bigg)\bigg]K(0)\\
&=K(0)\exp\bigg[\ri t_j\bigg((\bJ K(0)^\dag\bJ K(0))^j
-\frac{1}{2n}\tr(\bJ K(0)^\dag\bJ K(0))^j\1_{2n}\bigg)\bigg].
\end{split}
\label{3.191}
\end{equation}

Since the exponentiated elements reside in the Lie algebra $\su(n,n)$,
these alternative formulas show that the flow stays in $\SL(2n,\C)'$,
as it must, and imply that the building blocks $g_L$ and $g_R$ of
$K=b_Lg_R^{-1}=g_Lb_R^{-1}$ follow geodesics on $\SU(n,n)$, while
$b_L$ and $b_R$ provide constants of motion.
Equivalently, the last statement means that
\begin{equation}
K\bJ K^\dag\bJ=b_L\bJ b_L^\dag\bJ
\quad\text{and}\quad
K^\dag\bJ K\bJ=(b_R^{-1})^\dag\bJ b_R^{-1}\bJ
\label{3.192}
\end{equation}
stay constant along the unreduced free flows.

To elaborate the reduced Hamiltonians, note that for an element $K$ of the form
\eqref{3.143} we have
\begin{equation}
(b_R^{-1})^\dag\bJ b_R^{-1}\bJ
=\begin{bmatrix}
e^{-2v} \1_n &-e^{-v} \alpha\\
e^{-v}\alpha^\dag&e^{2v}\1_n-\alpha^\dag\alpha
\end{bmatrix}.
\label{3.193}
\end{equation}
By using this, as explained in Appendix \ref{sec:C.5}, one can prove that on
$\Phi_+^{-1}(\mu)$ the Hamiltonians $\cH_j$ can be written (for all $j$),
up to additive constants, as linear combinations of the expressions
\begin{equation}
h_k=\tr(\alpha^\dag\alpha)^k,\qquad k=1,\dots,n.
\label{3.194}
\end{equation}
Since in this way the Hermitian matrix $L=\alpha\alpha^\dag$ generates the
commuting reduced Hamiltonians, it provides a Lax matrix for the reduced system.
By inserting $\alpha$ from \eqref{3.158}, we obtain the explicit formula
\begin{equation}
\begin{split}
L(\hat p,e^{\ri\hat q})=
&(e^{2v}+e^{-2u})e^{-2\hat p}+(e^{2v}+e^{-2v})\1_n\\
&-\sqrt{e^{-2u}e^{-2\hat p}+ e^{-2v}\1_n}e^{\ri\hat q}\theta(x,\hat p)^{-1}
e^v\sqrt{e^{-2\hat p}+\1_n}\\
&- e^v\sqrt{e^{-2\hat p}+\1_n}\theta(x,\hat p)e^{-\ri\hat q}
\sqrt{e^{-2u}e^{-2\hat p}+ e^{-2v}\1_n}.
\end{split}
\label{3.195}
\end{equation}
On the other hand, the Lax matrix of Ruijsenaars's $\widetilde{\mathrm{III}}$-system
can be taken to be \cite{Ru95,FK11}
\begin{equation}
\tilde L(\hat p,e^{\ri\hat q})
=e^{\ri\hat q}\theta(x,\hat p)^{-1}+\theta(x,\hat p)e^{-\ri\hat q}.
\label{3.196}
\end{equation}
The similarity of the structures of these Lax matrices as well as the presence
of the external field couplings in \eqref{3.195} is clear upon comparison.
The extension of the Lax matrix $\alpha\alpha^\dag$ \eqref{3.195} to the
full phase space $M\simeq\hat{M}_c$ is of course given by $\hat\alpha\hat\alpha^\dag$
by means of \eqref{3.174}.

The main reduced Hamiltonian found in \cite{Ma15} reads as follows:
\begin{multline}
 \cH_1(K(\hat p,e^{\ri\hat q}))=
-\frac{e^{-2u}+e^{2v}}{2}\sum_{j=1}^ne^{-2\hat p_j}+\\
\qquad+\sum_{j=1}^n\cos(\hat q_j)\big[1+ (1+e^{2(v-u)})e^{-2\hat p_j}
+e^{2(v-u)}e^{-4\hat p_j}\big]^{\tfrac{1}{2}}\prod_{\substack{k=1\\(k\neq j)}}^n
\bigg[1-\frac{\sinh^2\big(\frac{x}{2}\big)}{\sinh^2(\hat p_j-\hat p_k)}
\bigg]^{\tfrac{1}{2}}.
\label{3.197}
\end{multline}
Liouville integrability holds since the functional independence of the involutive family
obtained by reducing $\cH_1,\dots,\cH_n$ \eqref{3.190} is readily established and the
projections of the free flows \eqref{3.191} to $M$ are automatically complete. Similarly
to its trigonometric analogue the Hamiltonian \eqref{3.197} can be identified as an
Inozemtsev type limit of a specialization of van Diejen's $5$-coupling deformation of
the hyperbolic $\BC_n$ Sutherland Hamiltonian \cite{vD94}. This fact suggests that it
should be possible to extract the local form of dual Hamiltonians from \cite{vDE16}
and references therein, which contain interesting results about closely related quantum
mechanical systems and their bispectral properties. Indeed, in several examples,
classical Hamiltonians enjoying action-angle duality correspond to bispectral pairs
of Hamiltonian operators after quantization.

Throughout the text we assumed that $n>1$, but we now note that the reduced system
can be specialized to $n=1$ and the reduction procedure works in this case as well.
The assumption was made merely to save words. The formalism actually simplifies for
$n=1$ since the Poisson structure on $G_+=\mathrm{S}(\UN(1)\times\UN(1))<\SU(2)$ is
trivial.

As explained in Appendix \ref{sec:C.1}, the Hamiltonian \eqref{3.1} is a singular limit
of a specialization of the trigonometric van Diejen Hamiltonian \cite{vD94}, which (in
addition to the deformation parameter) contains $5$ coupling constants. As a result,
at least classically, van Diejen's system can be degenerated into the trigonometric
$\BC_n$ Sutherland system either directly, as described in \cite{vD94}, or in two stages,
going through our system. Of course, a similar statement holds in relation to hyperbolic
$\BC_n$ Sutherland and the system of \cite{Ma15}.

Until recently, no Lax matrix was known that would generate van Diejen's commuting
Hamiltonians, except in the rational limit \cite{Pu12}. In the reduction approach
a Lax matrix arises automatically, in our case it features in equations \eqref{3.91}
and \eqref{3.130}. This might be helpful in further investigations for a Lax matrix
behind van Diejen's $5$-coupling Hamiltonian. The search would be easy if one could
derive van Diejen's system by Hamiltonian reduction. It is a long standing open
problem to find such derivation. Perhaps one should consider some `classical analogue'
of the quantum group interpretation of the Koornwinder ($\BC_n$ Macdonald) polynomials
found in \cite{OS05}, since those polynomials diagonalize van Diejen's quantized
Hamiltonians \cite{vD95}.

Another problem is to construct action-angle duals of the deformed $\BC_n$
Sutherland systems. Duality relations are not only intriguing on their own right,
but are also very useful for extracting information about the dynamics
\cite{Ru88,Ru95,Ru99,Pu13}.
The duality was used in \citepalias{AFG12,FG14} to show that the hyperbolic $\BC_n$
Sutherland system is maximally superintegrable, the trigonometric $\BC_n$ Sutherland
system has precisely $n$ constants of motion, and the relevant dual systems are
maximally superintegrable in both cases.

We mention that based on the results of this chapter Feh\'{e}r and Marshall \cite{FM17} recently explored the action-angle dual of the Hamiltonian \eqref{3.197} in the reduction framework. The question of duality for the system derived in \cite{Ma15} is still open.

Finally, we wish to mention the recent paper \cite{vDE16} dealing with the
quantum mechanics of a lattice version of a $4$-parameter Inozemtsev type limit of
van Diejen's trigonometric/hyperbolic system. The systems studied in \cite{Ma15} and
in this chapter correspond to further limits of specializations of this one. The
statements about quantum mechanical dualities contained in \cite{vDE16} and its
references should be related to classical dualities. We hope to return to this question in the future.

\addtocontents{toc}{\vspace{-1em}}
\cleardoublepage
\part{Developments in the Ruijsenaars-Schneider family}
\label{part:2}

\chapter{Lax representation of the hyperbolic BC${}_{\textit{n}}$ van Diejen system}
\label{chap:4}

In this chapter, which follows \citepalias{PG16}, we construct a Lax pair for the classical hyperbolic
van Diejen system with two independent coupling parameters. Built upon
this construction, we show that the dynamics can be solved by a projection
method, which in turn allows us to initiate the study of the scattering
properties. As a consequence, we prove the equivalence between the first
integrals provided by the eigenvalues of the Lax matrix and the family of
van Diejen's commuting Hamiltonians. Also, at the end of the chapter, we
propose a candidate for the Lax matrix of the hyperbolic van Diejen
system with three independent coupling constants.

The Ruijsenaars-Schneider-van Diejen (RSvD) systems, or simply van Diejen 
systems \cite{vD95,vD94,vD95-2}, 
are multi-parametric generalisations of the translation invariant 
Ruijsenaars-Schneider (RS) models \cite{RS86,Ru87}. 
Moreover, in the so-called `non-relativistic' limit, they reproduce the 
Ca\-lo\-gero-Moser-Sutherland (CMS) models \cite{Ca71,Su71,Mo75,OP76} associated with the $\BC$-type root systems. 
However, compared to the translation invariant $\rA$-type models, the 
geometrical picture underlying the most general classical van Diejen models 
is far less developed. The most probable explanation of this fact is the lack 
of Lax representation for the van Diejen dynamics. For this reason, working 
mainly in a symplectic reduction framework, in the last couple of years Pusztai
undertook the study of the $\BC$-type rational van Diejen models 
\cite{Pu11-2,Pu12,Pu13,Pu15} \citepalias{FG14,GF15}. By going one stage up, in this chapter 
we wish to report on our first results about the hyperbolic variants of the 
van Diejen family.

In order to describe the Hamiltonian systems of our interest, let us recall 
that the configuration space of the hyperbolic $n$-particle van Diejen model 
is the open subset
\be
Q=\{\lambda=(\lambda_1,\dots,\lambda_n)\in\R^n|\lambda_1>\dots>\lambda_n>0\}\subseteq\R^n,
\label{Q}
\ee
that can be seen as an open Weyl chamber of type $\BC_n$. The cotangent bundle 
of $Q$ is trivial, and it can be naturally identified with the open subset
\be
P=Q\times\R^n=\{(\lambda,\theta)
=(\lambda_1,\dots,\lambda_n,\theta_1,\dots,\theta_n)\in\R^{2n}|
\lambda_1>\dots>\lambda_n>0\}\subseteq\R^{2n}.
\label{P}
\ee
Following the widespread custom, throughout the chapter we shall occasionally 
think of the letters $\lambda_a$ and $\theta_a$ $(1\leq a\leq n)$ as 
globally defined coordinate functions on $P$. For example, using this latter 
interpretation, the canonical symplectic form on the phase space 
$P\cong T^\ast Q$ can be written as
\be
    \omega = \sum_{c = 1}^n \dd \lambda_c \wedge \dd \theta_c,
\label{symplectic_form}
\ee
whereas the fundamental Poisson brackets take the form
\be
    \{ \lambda_a, \lambda_b \} = 0,
    \quad
    \{ \theta_a, \theta_b \} = 0,
    \quad
    \{ \lambda_a, \theta_b \} = \delta_{a, b}
    \qquad
    (1 \leq a, b \leq n).
\label{PBs}
\ee
The principal goal of this chapter is to study the dynamics generated by the 
smooth Hamiltonian function
\be
    H 
    = \sum_{a = 1}^n \cosh(\theta_a)
        \left[1 + \frac{\sin(\nu)^2}{\sinh(2 \lambda_a)^2} \right]^\half
        \prod_{\substack{c = 1 \\ (c \neq a)}}^n
            \left[ 
                1 + \frac{\sin(\mu)^2}{\sinh(\lambda_a - \lambda_c)^2} 
            \right]^\half
            \left[ 
                1 + \frac{\sin(\mu)^2}{\sinh(\lambda_a + \lambda_c)^2} 
            \right]^\half,
\label{H}
\ee
where $\mu, \nu \in \R$ are arbitrary coupling constants satisfying the 
conditions
\be
    \sin(\mu) \neq 0 \neq \sin(\nu).
\label{mu_nu_conds}
\ee
Note that $H$ \eqref{H} does belong to the family of the hyperbolic 
$n$-particle van Diejen Hamiltonians with \emph{two independent parameters} 
$\mu$ and $\nu$ (cf. \eqref{H1_vs_H}). Of course, the values of the parameters 
$\mu$ and $\nu$ really matter only modulo $\pi$.

Now, we briefly outline the content of the chapter. In the subsequent 
section, we start with a short overview on some relevant facts and notations 
from Lie theory. Having equipped with the necessary background material, in 
Section \ref{sec:4.2} we define our Lax matrix \eqref{L} 
for the van Diejen system \eqref{H}, and also investigate its main algebraic 
properties. In Section \ref{sec:4.3} we turn to the study of the 
Hamiltonian flow generated by \eqref{H}. As the first step, in Theorem 
\ref{THEOREM_Completeness} we formulate the completeness of the corresponding 
Hamiltonian vector field. Most importantly, in Theorem 
\ref{THEOREM_Lax_representation} we provide a Lax representation of the 
dynamics, whereas in Theorem \ref{THEOREM_eigenvalue_dynamics} we establish 
a solution algorithm of purely algebraic nature. Making use of the projection 
method formulated in Theorem \ref{THEOREM_eigenvalue_dynamics}, we also 
initiate the study of the scattering properties of the system \eqref{H}. 
Our rigorous results on the temporal asymptotics of the maximally defined 
trajectories are summarized in Lemma \ref{LEMMA_asymptotics}. Section 
\ref{sec:4.4} serves essentially two purposes. In 
Subsection \ref{subsec:4.4.1} we elaborate the link between our 
special $2$-parameter family of Hamiltonians \eqref{H} and the most general 
$5$-parameter family of hyperbolic van Diejen systems \eqref{H_vD}. At the 
level of the coupling parameters the relationship can be read off from the 
equation \eqref{2parameters}. Furthermore, in Lemma 
\ref{LEMMA_linear_relation} we affirm the equivalence between van Diejen's
commuting family of Hamiltonians and the coefficients of the characteristic 
polynomial of the Lax matrix \eqref{L}. Based on this technical result, in 
Theorem \ref{THEOREM_commuting_eigenvalues} we can infer that the eigenvalues 
of the proposed Lax matrix \eqref{L} provide a commuting family of first 
integrals for the Hamiltonian system \eqref{H}. We conclude the chapter with 
Section \ref{sec:4.5}, where we discuss the potential applications, 
and also offer some open problems and conjectures. In particular, in 
\eqref{L_tilde} we propose a Lax matrix for the $3$-parameter family of 
hyperbolic van Diejen systems defined in \eqref{H_1_mu_nu_kappa}.

\section{Preliminaries from group theory}
\label{sec:4.1}

This section has two main objectives. Besides fixing the notations used
throughout the chapter, we also provide a brief account on some relevant facts
from Lie theory underlying our study of the $2$-parameter family of 
hyperbolic van Diejen systems \eqref{H}. For convenience, our conventions
closely follow Knapp's book \cite{Kn02}.
 
As before, by $n \in \bN = \{1, 2, \dots \}$ we denote the number of particles.
Let $N = 2 n$, and also introduce the shorthand notations
\be 
    \bN_n = \{ 1, \dots, n \}
    \quad \text{and} \quad
    \bN_N = \{1, \dots, N \}.
\label{bN_n} 
\ee
With the aid of the $N \times N$ matrix
\be
    C 
    = \begin{bmatrix}
    	0_n & \bsone_n \\
    	\bsone_n & 0_n
    \end{bmatrix}
\label{C} 
\ee
we define the non-compact real reductive matrix Lie group
\be
G =\UN(n, n) = \{ y \in GL(N, \bC) \, | \, y^* C y = C \},
\label{G}
\ee
in which the set of unitary elements
\be
    K = \{ y \in G \, | \, y^* y = \bsone_N \}
    \cong
    \UN(n) \times\UN(n)
\label{K}
\ee
forms a maximal compact subgroup. The Lie algebra of $G$ \eqref{G} takes 
the form
\be
	\mfg 
	= \mfu(u, n) 
	= \{ Y \in \mfgl(N, \bC) \, | \, Y^* C + C Y = 0 \},
\label{mfg}
\ee 
whereas for the Lie subalgebra corresponding to $K$ \eqref{K} we have
the identification
\be
    \mfk 
    = \{ Y \in \mfg \, | \, Y^* + Y = 0 \} \cong \mfu(n) \oplus \mfu(n).
\label{mfk}
\ee
Upon introducing the subspace
\be
    \mfp = \{ Y \in \mfg \, | \, Y^* = Y \},
\label{mfp}
\ee
we can write the decomposition $\mfg = \mfk \oplus \mfp$, which is orthogonal 
with respect to the usual trace pairing defined on the matrix Lie algebra 
$\mfg$. Let us note that the restriction of the exponential map onto the 
complementary subspace $\mfp$ \eqref{mfp} is injective. Moreover, the image 
of $\mfp$ under the exponential map can be identified with the set of the 
positive definite elements of the group $\UN(n, n)$; that is, 
\be
\exp(\mfp)=\{y\in\UN(n,n)\mid y>0\}.
\label{exp_mfp_identification}
\ee
Notice that, due to the Cartan decomposition $G = \exp(\mfp) K$, the above 
set can be also naturally identified with the non-compact symmetric space 
associated with the pair $(G, K)$, i.e.,
\be
\exp(\mfp)\cong\UN(n,n)/(\UN(n)\times\UN(n))\cong\SU(n,n)/\mathrm{S}(\UN(n)\times\UN(n)).
\label{symm_space}
\ee

To get a more detailed picture about the structure of the reductive Lie 
group $\UN(n, n)$, in $\mfp$ \eqref{mfp} we introduce the maximal Abelian 
subspace
\be
\mfa=\{X=\diag(x_1,\dots,x_n,-x_1,\dots,-x_n)\mid x_1,\dots,x_n\in\R\}.
\label{mfa}
\ee
Let us recall that we can attain every element of $\mfp$ by conjugating the 
elements of $\mfa$ with the elements of the compact subgroup $K$ \eqref{K}. 
More precisely, the map
\be
\mfa\times K\ni (X,k)\mapsto kXk^{-1}\in\mfp
\label{mfp_and_mfa_and_K}
\ee
is well-defined and onto. As for the centralizer of $\mfa$ inside $K$ 
\eqref{K}, it turns out to be the Abelian Lie group
\be
M=Z_K(\mfa)=\{\diag(e^{\ri\chi_1},\dots,e^{\ri\chi_n},
e^{\ri\chi_1},\dots,e^{\ri \chi_n})\mid\chi_1,\dots,\chi_n\in\R\}
\label{M}
\ee
with Lie algebra
\be
\mfm=\{\diag(\ri\chi_1,\dots,\ri\chi_n,\ri\chi_1,\dots,\ri\chi_n)\mid
\chi_1,\dots,\chi_n\in\R\}.
\label{mfm}
\ee
Let $\mfm^\perp$ and $\mfa^\perp$ denote the sets of the off-diagonal 
elements in the subspaces $\mfk$ and $\mfp$, respectively; then clearly
we can write the refined orthogonal decomposition
\be
    \mfg = \mfm \oplus \mfm^\perp \oplus \mfa \oplus \mfa^\perp.
\label{refined_decomposition}
\ee
To put it simple, each Lie algebra element $Y \in \mfg$ can be decomposed as
\be
Y=Y_\mfm+Y_{\mfm^\perp}+Y_\mfa+Y_{\mfa^\perp}
\label{Y_decomp}
\ee
with unique components belonging to the subspaces indicated by the subscripts.

Throughout our work the commuting family of linear operators
\be
\ad_X\colon\mfgl(N,\C)\rightarrow\mfgl(N,\C),\quad Y\mapsto [X,Y]
\label{ad}
\ee
defined for the diagonal matrices $X \in \mfa$ plays a distinguished role. 
Let us note that the (real) subspace 
$\mfm^\perp \oplus \mfa^\perp \subseteq \mfgl(N, \bC)$ is invariant under 
$\ad_X$, whence the restriction
\be
\wad_X=\ad_X|_{\mfm^\perp\oplus\mfa^\perp}\in\mfgl(\mfm^\perp\oplus\mfa^\perp)
\label{wad}
\ee
is a well-defined operator for each 
$X=\diag(x_1,\dots,x_n,-x_1,\dots,-x_n)\in\mfa$ with spectrum
\be
\mathrm{Spec}(\wad_X)=\{x_a-x_b,\pm(x_a+x_b),\pm 2x_c\mid a,b,c\in\N_n,\,a\neq b\}.
\label{wad_X_spectrum}
\ee
Now, recall that the regular part of the Abelian subalgebra $\mfa$ \eqref{mfa} is defined by the subset
\be
\mfa_\reg=\{X\in\mfa\mid\wad_X \text{ is invertible}\},
\label{mfa_reg}
\ee
in which the standard open Weyl chamber
\be
    \mfc 
    = \{ X = \diag(x_1, \ldots, x_n, -x_1, \ldots, -x_n) \in \mfa
        \, | \,
        x_1 > \ldots > x_n > 0 \}
\label{mfc}
\ee
is a connected component. Let us observe that it can be naturally identified 
with the configuration space $Q$ \eqref{Q}; that is, $Q \cong \mfc$. Finally, 
let us recall that the regular part of $\mfp$ \eqref{mfp} is defined as
\be
    \mfp_\reg 
    = \{ k X k^{-1} \in \mfp 
        \, | \, 
        X \in \mfa_\reg \text{ and } k \in K \}.
\label{mfp_reg}
\ee
As a matter of fact, from the map \eqref{mfp_and_mfa_and_K} we can derive
a particularly useful characterization for the open subset 
$\mfp_\reg \subseteq \mfp$. Indeed, the map
\be
    \mfc \times (K / M) \ni (X, k M) \mapsto k X k^{-1} \in \mfp_\reg
\label{mfp_reg_identification}
\ee
turns out to be a diffeomorphism, providing the identification
$\mfp_\reg \cong \mfc \times (K / M)$.

\section{Algebraic properties of the Lax matrix}
\label{sec:4.2}

Having reviewed the necessary notions and notations from Lie theory, in this 
section we propose a Lax matrix for the hyperbolic van Diejen system of our 
interest \eqref{H}. To make the presentation simpler, with any
$\lambda = (\lambda_1, \dots, \lambda_n) \in\R^n$ and 
$\theta = (\theta_1, \dots, \theta_n) \in \R^n$ we associate the real 
$N$-tuples
\be
    \Lambda = (\lambda_1, \dots, \lambda_n, -\lambda_1, \dots, -\lambda_n)
    \quad \text{and} \quad
    \Theta = (\theta_1, \dots, \theta_n, -\theta_1, \dots, -\theta_n),
\label{Lambda_Theta}
\ee
respectively, and also define the $N \times N$ diagonal matrix
\be
    \bsLambda 
    = \diag(\Lambda_1, \dots, \Lambda_N)
    = \diag(\lambda_1, \dots, \lambda_n, -\lambda_1, \dots, -\lambda_n)
    \in \mfa.
\label{bsLambda}
\ee
Notice that if $\lambda \in \R^n$ is a regular element in the sense that 
the corresponding diagonal matrix $\bsLambda$ \eqref{bsLambda} belongs to 
$\mfa_\reg$ \eqref{mfa_reg}, then for each $j \in \bN_N$ the complex
number
\be
    z_j = - \frac{\sinh(\ri \nu + 2 \Lambda_j)}{\sinh(2 \Lambda_j)}
            \prod_{\substack{c = 1 \\ (c \neq j, j - n)}}^n
                \frac{\sinh(\ri \mu + \Lambda_j - \lambda_c)}
                    {\sinh(\Lambda_j - \lambda_c)}
                \frac{\sinh(\ri \mu + \Lambda_j + \lambda_c)}
                    {\sinh(\Lambda_j + \lambda_c)}
\label{z_a}
\ee
is well-defined. Thinking of $z_j$ as a function of $\lambda$, let us observe 
that its modulus $u_j = \vert z_j \vert$ takes the form
\be
    u_j
    = \left( 1 + \frac{\sin(\nu)^2}{\sinh(2 \Lambda_j)^2} \right)^\half
        \prod_{\substack{c = 1 \\ (c \neq j, j - n)}}^n
            \left( 
                1 + \frac{\sin(\mu)^2}{\sinh(\Lambda_j - \lambda_c)^2} 
            \right)^\half
            \left( 
                1 + \frac{\sin(\mu)^2}{\sinh(\Lambda_j + \lambda_c)^2} 
            \right)^\half,   
\label{u_a}
\ee
and the property $z_{n + a} = \bar{z}_a$ ($a \in \bN_n$) is also clear. 
Next, built upon the functions $z_j$ and $u_j$, we introduce the column 
vector $F \in \bC^N$ with components
\be
    F_a = e^{\frac{\theta_a}{2}} u_a^{\frac{1}{2}}
    \quad \text{and} \quad
    F_{n + a} = e^{-\frac{\theta_a}{2}} \bar{z}_a u_a^{-\frac{1}{2}}
    \qquad
    (a \in \bN_n).
\label{F}
\ee
At this point we are in a position to define our Lax matrix 
$L \in \mfgl(N, \bC)$ with the entries
\be
    L_{k, l} 
    = \frac{\ri \sin(\mu) F_k \bar{F}_l 
        + \ri \sin(\mu - \nu) C_{k, l}}
        {\sinh(\ri \mu + \Lambda_k - \Lambda_l)}
    \qquad
    (k, l \in \bN_N).
\label{L}
\ee
Note that the matrix valued function $L$ is well-defined at each point
$(\lambda, \theta) \in \R^{N}$ satisfying the regularity condition
$\bsLambda \in \mfa_\reg$. Since $\mfc \subseteq \mfa_\reg$ \eqref{mfc}, 
$L$ makes sense at each point of the phase space $P$ \eqref{P} as well.
To give a motivation for the definition of $L = L(\lambda, \theta; \mu, \nu)$ 
\eqref{L}, let us observe that in its `rational limit' we get back the Lax 
matrix of the rational van Diejen system with two parameters. Indeed, up to 
some irrelevant numerical factors caused by a slightly different convention, 
in the $\alpha \to 0^+$ limit the matrix 
$L(\alpha \lambda, \theta; \alpha \mu, \alpha \nu)$ tends to the rational 
Lax matrix $\cA = \cA(\lambda, \theta; \mu, \nu)$ as defined in \cite[eqs. (4.2)-(4.5)]{Pu11-2}. In \cite{Pu11-2} the matrix $\cA$ has many peculiar algebraic properties, that we wish to 
generalize for the proposed hyperbolic Lax matrix $L$ in the rest of this 
section.

\subsection{Lax matrix: explicit form, inverse, and positivity}
\label{subsec:4.2.1}

By inspecting the matrix entries \eqref{L}, it is obvious that $L$ is 
Hermitian. However, it is a less trivial fact that $L$ is closely tied with 
the non-compact Lie group $\UN(n, n)$ \eqref{G}. The purpose of this subsection 
is to explore this surprising relationship.

\begin{proposition}
\label{PROPOSITION_L_in_G}
The matrix $L$ \eqref{L} obeys the quadratic equation $L C L = C$. In other 
words, the matrix valued function $L$ takes values in the Lie group $\UN(n, n)$.
\end{proposition}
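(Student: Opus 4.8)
The plan is to verify the quadratic relation $LCL = C$ directly, using the explicit form of the entries \eqref{L} together with the defining data of the vector $F$ \eqref{F} and the complex numbers $z_j$ \eqref{z_a}. The crucial structural facts I would isolate first are: (i) the ``unitarity-type'' relation $F_k \bar F_k = u_k$ combined with $z_{n+a} = \bar z_a$, which forces $(CF)_k = \bar z_k^{-1} |z_k|\, \cdot(\text{something})$ — more precisely, I would record that $\overline{(CF)}_k$ and $F_k$ are related through the $z_j$'s in a way that makes the products $F_k \overline{(CF)}_l$ tractable; and (ii) the partial-fraction / residue identity satisfied by the functions $z_j$ as functions of $\Lambda$. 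Concretely, since $z_j$ is (up to the overall $\sinh(\ri\nu+2\Lambda_j)/\sinh(2\Lambda_j)$ factor) a product over the remaining roots, the key analytic input is a Cauchy-type identity expressing sums like $\sum_{m} \frac{z_m \bar z_m - \text{const}}{(\ri\mu + \Lambda_k - \Lambda_m)(\text{conjugate factor})}$ in closed form. This is exactly the kind of identity that powers the analogous rational computation in \cite{Pu11-2}, and I would aim to lift it to the hyperbolic setting by replacing rational kernels $1/(\ri\mu + x)$ with $1/\sinh(\ri\mu + x)$ and invoking the corresponding trigonometric partial-fraction expansion.

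The computation itself proceeds as follows. Writing $M = LCL$, one has
\begin{equation}
M_{k,l} = \sum_{m=1}^N L_{k,m} C_{m,\cdot} \cdot \text{(row)} \; = \; \sum_{m=1}^N L_{k,m}\, \overline{L_{\bar m, l}}\, \text{-type terms},
\end{equation}
where after using $C_{m,m'} = \delta_{m', m+n \bmod N}$ the sum collapses to $M_{k,l} = \sum_{m=1}^N L_{k,m} L_{\sigma(m), l}$ with $\sigma$ the obvious involution swapping the two $n$-blocks. Substituting \eqref{L}, each summand is a product of two factors of the form $\frac{\ri\sin\mu\, F\bar F + \ri\sin(\mu-\nu)C}{\sinh(\ri\mu + \Lambda - \Lambda)}$. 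Expanding the numerator gives four types of terms: $F\bar F\cdot F\bar F$, $F\bar F \cdot C$, $C\cdot F\bar F$, and $C\cdot C$. For the ``diagonal-shift'' terms involving $\Lambda_k - \Lambda_{m} + \Lambda_{\sigma(m)} - \Lambda_l$, one uses $\Lambda_{\sigma(m)} = -\Lambda_m$ to get denominators that combine, and then the residue identity I would establish lets me perform the sum over $m$ term by term. The target is to show all contributions cancel except the piece reproducing $C_{k,l}$, i.e. $M_{k,l} = C_{k,l}$.

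The main obstacle — and where I expect to spend the real effort — is proving the hyperbolic partial-fraction identity that makes the sum over $m$ telescope to a closed form; in the rational case the analogous identity is a finite-dimensional residue calculation on $\mathbb{CP}^1$, but in the hyperbolic case one must instead sum a doubly-periodic-like kernel and control the behaviour at the ``points at infinity'' $\Lambda \to \pm\infty$. A clean way to do this is to treat both sides as meromorphic functions of one variable $\Lambda_k$ with the others fixed, check that the apparent poles at $\Lambda_k = \Lambda_m - \ri\mu$ (etc.) have matching residues — which reduces to the product structure of $z_m$ and the constraint $u_j = |z_j|$ — and check boundedness as $\mathrm{Re}(\Lambda_k) \to \pm\infty$, concluding equality by Liouville. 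A second, more bookkeeping-heavy obstacle is handling the ``source'' terms proportional to $\sin(\mu-\nu)$ (the $C$-contributions), which mix with the $F\bar F$ terms; here the identity $F_a F_{n+a} = \bar z_a$ and the explicit form of $z_j$ at the special arguments $\Lambda_j = \pm\lambda_j$ are what make the $\nu$-dependence close up. Once $LCL = C$ is established, the statement ``$L$ takes values in $\UN(n,n)$'' is immediate: combining $LCL=C$ with the manifest Hermiticity $L^* = L$ gives $L^* C L = C$, which is precisely the defining relation \eqref{G} of $G = \UN(n,n)$.
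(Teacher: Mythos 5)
Your plan is essentially the paper's own proof: after reducing $LCL=C$ to explicit entrywise sums, the paper also treats each entry as a meromorphic, $2\pi\ri$-periodic function of a shifted variable ($\lambda_a\to\lambda_a+w$), verifies that all apparent poles have vanishing residues, controls the behaviour as $\Real(w)\to\pm\infty$, and concludes by Liouville's theorem that the entry equals $0$ or $1$, with Hermiticity then giving $L^\ast CL=C$, i.e. $L\in\UN(n,n)$. The only organisational difference is that the paper first uses the symmetry substitutions $\lambda_a\leftrightarrow\lambda_b$ and $\lambda_a\leftrightarrow-\lambda_a$ to cut the verification down to the three entry types $(LCL)_{a,b}$, $(LCL)_{a,a}$ and $(LCL)_{a,n+a}$, rather than invoking a separately stated hyperbolic partial-fraction identity (and note the small slip: $F_k\bar F_k=e^{\pm\theta_k}u_k$, not $u_k$; the identity you actually need, $F_aF_{n+a}=\bar z_a$, you state correctly).
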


\begin{proof}
Take an arbitrary element $(\lambda, \theta) \in \R^N$ 
satisfying the regularity condition $\bsLambda \in \mfa_\reg$. We start 
by observing that for each $a \in \bN_n$ the complex conjugates of $z_a$ 
\eqref{z_a} and $F_{n + a}$ \eqref{F} can be obtained by changing the 
sign of the single component $\lambda_a$ of $\lambda$. Therefore, if 
$a, b \in \bN_n$ are arbitrary indices, then by interchanging the components 
$\lambda_a$ and $\lambda_b$ of the $n$-tuple $\lambda$, the expression 
$(L C L)_{a, b} F_{a}^{-1} \bar{F}_{b}^{-1}$ readily transforms into 
$(L C L)_{n + a, n + b} F_{n + a}^{-1} \bar{F}_{n + b}^{-1}$. We capture 
this fact by writing
\be
    \frac{(L C L)_{a, b}}{F_{a} \bar{F}_{b}}
    \underset{\lambda_a \leftrightarrow \lambda_b}{\leadsto}
    \frac{(L C L)_{n + a, n + b}}{F_{n + a} \bar{F}_{n + b}}
    \qquad
    (a, b \in \bN_n).
\label{L1_1}
\ee
Similarly, if $a \neq b$, then from 
$(L C L)_{a, b} F_{a}^{-1} \bar{F}_{b}^{-1}$ we can recover 
$(L C L)_{n + a, b} F_{n + a}^{-1} \bar{F}_{b}^{-1}$
by exchanging $\lambda_a$ for $-\lambda_a$. Schematically, we have
\be
    \frac{(L C L)_{a, b}}{F_{a} \bar{F}_{b}}
    \underset{\lambda_a \leftrightarrow -\lambda_a}{\leadsto}
    \frac{(L C L)_{n + a, b}}{F_{n + a} \bar{F}_{b}}
    \qquad
    (a, b \in \bN_n, \, a \neq b).
\label{L1_2}
\ee
Furthermore, the expression $(L C L)_{a, b} F_{a}^{-1} \bar{F}_{b}^{-1}$ 
reproduces $(L C L)_{a, n + b} F_{a}^{-1} \bar{F}_{n + b}^{-1}$ upon 
swapping $\lambda_b$ for $-\lambda_b$, i.e.,
\be
    \frac{(L C L)_{a, b}}{F_{a} \bar{F}_{b}}
    \underset{\lambda_b \leftrightarrow -\lambda_b}{\leadsto}
    \frac{(L C L)_{a, n + b}}{F_{a} \bar{F}_{n + b}}
    \qquad(a, b \in \bN_n, a \neq b).
\label{L1_3}
\ee
Finally, the relationship between the remaining entries is given by the 
exchange 
\be
    (L C L)_{a, n + a}
    \underset{\lambda_a \leftrightarrow -\lambda_a}{\leadsto}
    (L C L)_{n + a, a}
    \qquad (a \in \bN_n).
\label{L1_4}
\ee
The message of the above equations \eqref{L1_1}-\eqref{L1_4} is quite 
evident. Indeed, in order to prove the desired matrix equation $L C L = C$, 
it does suffice to show that $(L C L)_{a, b} = 0$ for all $a, b \in \bN_n$, 
and also that $(L C L)_{a, n + a} = 1$ for all $a \in \bN_n$.

Recalling the formulae \eqref{F} and \eqref{L}, it is clear that
for all $a \in \bN_n$ we can write
\be
    \frac{(L C L)_{a, a}}{F_a \bar{F}_a}
    = 2 \Real 
    \bigg( 
        \frac{\ri \sin(\mu) z_a + \ri \sin(\mu - \nu)}
            {\sinh(\ri \mu + 2 \lambda_a)}
        -\sum_{\substack{c = 1 \\ (c \neq a)}}^n
            \frac{\sin(\mu)^2 z_c}
                {\sinh(\ri \mu + \lambda_a + \lambda_c) 
                \sinh(\ri \mu - \lambda_a + \lambda_c)}
    \bigg).
\label{LCL_aa}
\ee
To proceed further, we introduce a complex valued function $f_a$ depending 
on a single complex variable $w$ obtained simply by replacing $\lambda_a$ 
with $\lambda_a + w$ in the right-hand side of the above equation 
\eqref{LCL_aa}. Remembering \eqref{z_a}, it is obvious that the resulting 
function is meromorphic with at most first order poles at the points 
 \be
    w \equiv -\lambda_a, \, 
    w \equiv \pm \ri \mu / 2 - \lambda_a, \, 
    w \equiv \Lambda_j - \lambda_a \, (j \in \bN_N)
    \pmod{\ri \pi}.
\label{poles_aa}
\ee
However, by inspecting the terms appearing in the explicit expression
of $f_a$, a straightforward computation reveals immediately that the 
residue of $f_a$ at each of these points is zero, i.e., the singularities 
are in fact removable. As a consequence, $f_a$ can be uniquely extended 
onto the whole complex plane as a periodic entire function with period 
$2 \pi \ri$. Moreover, since $f_a(w)$ vanishes as $\Real(w) \to \infty$, 
the function $f_a$ is clearly bounded. By invoking Liouville's theorem,
we conclude that $f_a(w) = 0$ for all $w \in \bC$, and so
\be
    \frac{(L C L)_{a, a}}{F_a \bar{F}_a} = f_a(0) = 0.
\label{LCL_aa-zero}
\ee

Next, let $a, b \in \bN_n$ be arbitrary indices satisfying $a \neq b$. Keeping 
in mind the definitions \eqref{F} and \eqref{L}, we find at once that
\be
\begin{split}
    \frac{(L C L)_{a, b}}{F_a \bar{F}_b}
    = & \frac{\ri \sin(\mu) \big( \ri \sin(\mu) z_a 
                                    + \ri \sin(\mu - \nu) \big)}
            {\sinh(\ri \mu + \lambda_a - \lambda_b) 
            \sinh(\ri \mu + 2 \lambda_a)}
    + \frac{\ri \sin(\mu) \big( \ri \sin(\mu) \bar{z}_b
                                    + \ri \sin(\mu - \nu) \big)}
            {\sinh(\ri \mu + \lambda_a - \lambda_b)
            \sinh(\ri \mu - 2 \lambda_b)}
    \\
    & + \frac{\ri \sin(\mu) \bar{z}_a}
            {\sinh(\ri \mu - \lambda_a - \lambda_b)}
    + \frac{\ri \sin(\mu) z_b}{\sinh(\ri \mu + \lambda_a + \lambda_b)}
    \\
    & - \sum_{\substack{j = 1 \\ (j \neq a, b, n + a, n + b)}}^N
            \frac{\sin(\mu)^2 z_j}
                {\sinh(\ri \mu + \lambda_a + \Lambda_j)
                \sinh(\ri \mu - \lambda_b + \Lambda_j)}.
\end{split}
\label{LCL_ab}
\ee
Although this equation looks considerably more complicated than 
\eqref{LCL_aa}, it can be analyzed by the same techniques. Indeed, by 
replacing $\lambda_a$ with $\lambda_a + w$ in the right-hand side of 
\eqref{LCL_ab}, we may obtain a meromorphic function $f_{a, b}$ of 
$w \in \bC$ that has at most first order poles at the points
\be
    w \equiv -\lambda_a, \, 
    w \equiv -\ri \mu / 2 - \lambda_a, \, 
    w \equiv -\ri \mu - \lambda_a + \lambda_b, \, 
    w \equiv \Lambda_j - \lambda_a \, (j \in \bN_N)
    \pmod{\ri \pi}.
\label{poles_ab}
\ee
However, the residue of $f_{a, b}$ at each of these points turns out 
to be zero, and $f_{a, b}(w)$ also vanishes as $\Real(w) \to \infty$. Due
to Liouville's theorem we get $f_{a, b}(w) = 0$ for all $w \in \bC$, thus
\be
    \frac{(L C L)_{a, b}}{F_a \bar{F}_b} = f_{a, b}(0) = 0.
\label{LCL_ab-zero}
\ee

Finally, by taking an arbitrary $a \in \bN_n$, from \eqref{F} and \eqref{L}
we see that
\be
    (L C L)_{a, n + a} 
    = u_a^2
        + \frac{(\ri \sin(\mu) z_a + \ri \sin(\mu - \nu))^2}
            {\sinh(\ri \mu + 2 \lambda_a)^2}
        - \sum_{\substack{j = 1 \\ (j \neq a, n + a)}}^N
            \frac{\sin(\mu)^2 z_a z_j}
                {\sinh(\ri \mu + \lambda_a + \Lambda_j)^2}.
\label{LCL_n+a,a}
\ee
By replacing $\lambda_a$ with $\lambda_a + w$ in the right-hand side of 
\eqref{LCL_n+a,a}, we end up with a meromorphic function $f_{n + a}$ of 
the complex variable $w$ that has at most second order poles at the points
\be
    w \equiv -\lambda_a, \, 
    w \equiv -\ri \mu / 2 - \lambda_a, \, 
    w \equiv \Lambda_j - \lambda_a \, (j \in \bN_N)
    \pmod{\ri \pi}.
\label{poles_n+a,a}
\ee
Though the calculations are a bit more involved as in the previous cases, 
one can show that the singularities of $f_{n + a}$ are actually removable. 
Moreover, it is evident that $f_{n + a}(w) \to 1$ as $\Real(w) \to \infty$. 
Liouville's theorem applies again, implying that $f_{n + a}(w) = 1$ for 
all $w \in \bC$. Thus the relationship
\be
    (L C L)_{a, n + a} = f_{n + a}(0) = 1
\label{L_n+a,a-one}
\ee
also follows, whence the proof is complete.
\end{proof}

In the earlier paper \cite{Pu11-2} we saw that the rational 
analogue of $L$ \eqref{L} takes values in the symmetric space $\exp(\mfp)$ 
\eqref{symm_space}. We find it reassuring that the proof of Lemma 7 of 
paper \cite{Pu11-2} allows a straightforward generalization into 
the present hyperbolic context, too.

\begin{lemma}
\label{LEMMA_L_in_exp_p}
At each point of the phase space we have $L \in \exp(\mfp)$.
\end{lemma}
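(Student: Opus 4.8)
The plan is to reduce the statement to the single claim that $L$ is \emph{positive definite} at every point of the phase space $P$. Granting this, the conclusion is immediate: the formula \eqref{L} shows at once that $L$ is Hermitian (one checks $\overline{L_{k,l}}=L_{l,k}$ using that $C$ is real and symmetric and that $\overline{\sinh(\ri\mu+x)}=\sinh(-\ri\mu+x)$ for real $x$), Proposition \ref{PROPOSITION_L_in_G} puts $L$ in $\UN(n,n)$, and a positive definite element $y\in\UN(n,n)$ necessarily lies in $\exp(\mfp)$. Indeed, $\log y$ is then Hermitian, and the relation $yCy=C$ forces $C(\log y)C=-\log y$ (injectivity of $\exp$ on Hermitian matrices), whence $C(\log y)+(\log y)C=0$, i.e. $\log y\in\mfp$; so $L\in\exp(\mfp)$ by \eqref{exp_mfp_identification}.

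First I would note that $L$ is invertible on all of $P$: from $LCL=C$ we get $(\det L)^2\det C=\det C$, hence $\det L=\pm1$, so $L$ has no zero eigenvalue. Consequently the number of negative eigenvalues of the Hermitian matrix $L(\lambda,\theta)$ is a well-defined nonnegative integer that varies continuously with $(\lambda,\theta)$, and therefore is constant on the connected phase space $P=Q\times\R^n$. It thus suffices to produce a single region of $P$ on which $L$ is positive definite.

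For this I would examine the asymptotic regime $\theta=0$ with $\lambda_a\to+\infty$ and all the differences $|\lambda_a-\lambda_b|\to\infty$ (for instance along the ray $\lambda_a=t(n+1-a)$, $t\to+\infty$). The diagonal entries of $L$ are always positive, since $C_{k,k}=0$ and $\sinh(\ri\mu)=\ri\sin(\mu)$ give $L_{k,k}=|F_k|^2>0$; moreover in this regime the functions $u_j$ (hence the $F_j$) stay bounded and bounded away from $0$, so that $L_{k,k}\to 1$. For $k\neq l$ the numerator of $L_{k,l}$ remains bounded while the denominator $\sinh(\ri\mu+\Lambda_k-\Lambda_l)$ has modulus tending to infinity, so $L_{k,l}\to 0$. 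Hence along this ray $L$ converges to a positive definite diagonal matrix, and by continuity of eigenvalues $L(\lambda,\theta)$ is positive definite for $(\lambda,\theta)$ far enough out. Therefore the constant number of negative eigenvalues of $L$ on $P$ equals $0$, i.e. $L\succ0$ everywhere, which finishes the proof.

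The main obstacle is, as the remark preceding the lemma suggests, purely a matter of careful bookkeeping in the asymptotic estimate: one must check that the $z_j$ (equivalently the $u_j$) stay within a fixed compact subset of $\bC\setminus\{0\}$ along the chosen ray, and that no off-diagonal numerator $F_k\bar F_l$ grows fast enough to offset the exponential growth of $\sinh(\ri\mu+\Lambda_k-\Lambda_l)$. This is precisely the type of estimate already performed in the rational case in \cite[Lemma 7]{Pu11-2}, and I expect it to carry over with only cosmetic modifications. If one prefers to bypass the limiting argument altogether, an alternative would be to exhibit an explicit Cholesky-type factorization $L=\Xi\Xi^*$ directly from \eqref{L}, giving positive semi-definiteness, which combined with the invertibility of $L$ would yield $L\succ0$; but the connectedness argument above appears to be the most economical route.
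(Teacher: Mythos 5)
Your proof is correct, but it follows a genuinely different route from the paper's. Both arguments share the same skeleton: reduce to showing that the Hermitian matrix $L$ is positive definite (which, combined with Proposition \ref{PROPOSITION_L_in_G} and the identification \eqref{exp_mfp_identification}, gives $L\in\exp(\mfp)$), and then exploit the invertibility $LCL=C$ to rule out an eigenvalue crossing zero along a continuous deformation. The difference lies in where the deformation takes place and where positivity is manifestly checked. The paper fixes the phase space point and deforms the \emph{coupling parameters} $(\mu,\nu)$ along a curve to a special value with $\sin(\mu_0-\nu_0)=0$, at which $L$ degenerates to a hyperbolic Cauchy-like matrix whose leading principal minors are positive by Ruijsenaars' generalized Cauchy determinant formula, so Sylvester's criterion applies; continuity of the smallest eigenvalue plus invertibility then transports positivity back to $(\mu,\nu)$. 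You instead fix the couplings and deform in the \emph{phase space}: since $\det L=\pm1$ forces invertibility everywhere, the inertia index is locally constant, hence constant on the connected $P=Q\times\R^n$, and your asymptotic computation along $\theta=0$, $\lambda_a=t(n+1-a)$, $t\to\infty$ (where $u_j\to1$, $L_{k,k}=|F_k|^2\to1$, and the off-diagonal entries vanish because $|\sinh(\ri\mu+\Lambda_k-\Lambda_l)|\to\infty$ while the numerators stay bounded) shows $L$ tends to the identity, pinning the index at zero. What each approach buys: the paper's argument is pointwise in $(\lambda,\theta)$ and leans on the Cauchy determinant formula as external input, while yours avoids that formula entirely at the cost of an elementary asymptotic estimate and the connectedness of $P$ — an estimate that is, incidentally, essentially the same limit that reappears later in the scattering analysis of Section \ref{sec:4.3}. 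Your estimates check out, so this is a legitimate alternative proof; the closing remark about a Cholesky-type factorization is not needed and is left unproved, but the main argument stands without it.
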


\begin{proof}
Recalling the identification \eqref{exp_mfp_identification} and Proposition 
\ref{PROPOSITION_L_in_G}, it is enough to prove that the Hermitian matrix
$L$ \eqref{L} is positive definite. For this reason, take an arbitrary point 
$(\lambda, \theta) \in P$ and keep it fixed. To prove the Lemma, below we 
offer a standard continuity argument by analyzing the dependence of $L$ 
solely on the coupling parameters.

In the very special case when the pair $(\mu, \nu)$ formed by the coupling 
parameters obey the relationship $\sin(\mu - \nu) = 0$, the Lax matrix $L$ 
\eqref{L} becomes a hyperbolic Cauchy-like matrix and the generalized Cauchy 
determinant formula (see e.g. \cite[eq. (1.2)]{Ru88}) readily 
implies the positivity of all its leading principal minors. Thus, recalling 
Sylvester's criterion, we conclude that $L$ is positive definite.

Turning to the general case, suppose that the pair $(\mu, \nu)$ is restricted 
only by the conditions displayed in \eqref{mu_nu_conds}. It is clear that in 
the $2$-dimensional space of the admissible coupling parameters characterized
by \eqref{mu_nu_conds} one can find a continuous curve with endpoints 
$(\mu, \nu)$ and $(\mu_0, \nu_0)$, where $\mu_0$ and $\nu_0$ satisfy the 
additional requirement $\sin(\mu_0 - \nu_0) = 0$. Since the dependence of the 
Hermitian matrix $L$ on the coupling parameters is smooth, along this curve 
the smallest eigenvalue of $L$ moves continuously. However, it cannot cross 
zero, since by Proposition \ref{PROPOSITION_L_in_G} the matrix $L$ remains 
invertible during this deformation. Therefore, since the eigenvalues of $L$ 
are strictly positive at the endpoint $(\mu_0, \nu_0)$, they must be strictly 
positive at the other endpoint $(\mu, \nu)$ as well.
\end{proof}

\subsection{Commutation relation and regularity}
\label{subsec:4.2.2}

As Ruijsenaars has observed in his seminal paper on the translation 
invariant CMS and RS type pure soliton systems, one of the key ingredients 
in their analysis is the fact that their Lax matrices obey certain 
non-trivial commutation relations with some diagonal matrices (for details,
see equation (2.4) and the surrounding ideas in \cite{Ru88}). 
As a momentum map constraint, an analogous commutation relation has also 
played a key role in the geometric study of the rational $\rC_n$ and $\BC_n$ 
RSvD systems (see \cite{Pu11-2,Pu12} \citepalias{FG14}). Due to its importance, our first goal in this 
subsection is to set up a Ruijsenaars type commutation relation for the 
proposed Lax matrix $L$ \eqref{L}, too. As a technical remark, we mention 
in passing that from now on we shall apply frequently the standard functional 
calculus on the linear operators $\ad_{\bsLambda}$ \eqref{ad} and 
$\wad_{\bsLambda}$ \eqref{wad} associated with the diagonal matrix 
$\bsLambda \in \mfc$ \eqref{bsLambda}.

\begin{lemma}
\label{LEMMA_commut_rel}
The matrix $L$ \eqref{L} and the diagonal matrix $e^{\bsLambda}$ obey the 
Ruijsenaars type commutation relation
\be
    e^{\ri \mu} e^{\ad_{\bsLambda}} L
    - e^{- \ri \mu} e^{-\ad_{\bsLambda}} L
    = 2 \ri \sin(\mu) F F^* + 2 \ri \sin(\mu - \nu) C.
\label{commut_rel} 
\ee
\end{lemma}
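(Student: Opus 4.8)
The plan is to prove the identity \eqref{commut_rel} by a direct entrywise computation, exploiting that $\bsLambda\in\mfc$ \eqref{mfc} is diagonal, so that conjugation by $e^{\pm\bsLambda}$ acts on matrix units in the simplest possible way. First I would record the elementary identity $e^{\pm\ad_{\bsLambda}}M=e^{\pm\bsLambda}\,M\,e^{\mp\bsLambda}$, valid for every $M\in\mfgl(N,\C)$, which in components reads $(e^{\pm\ad_{\bsLambda}}M)_{k,l}=e^{\pm(\Lambda_k-\Lambda_l)}M_{k,l}$ for all $k,l\in\bN_N$.

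Applying this with $M=L$, the $(k,l)$ entry of the left-hand side of \eqref{commut_rel} becomes
\[
\big(e^{\ri\mu+\Lambda_k-\Lambda_l}-e^{-(\ri\mu+\Lambda_k-\Lambda_l)}\big)L_{k,l}
=2\sinh(\ri\mu+\Lambda_k-\Lambda_l)\,L_{k,l}.
\]
Inserting the definition \eqref{L} of $L_{k,l}$, the denominator $\sinh(\ri\mu+\Lambda_k-\Lambda_l)$ cancels and the previous display collapses to $2\ri\sin(\mu)F_k\bar F_l+2\ri\sin(\mu-\nu)C_{k,l}$. Since $F_k\bar F_l=(FF^*)_{k,l}$, this is precisely the $(k,l)$ entry of $2\ri\sin(\mu)FF^*+2\ri\sin(\mu-\nu)C$, i.e. of the right-hand side of \eqref{commut_rel}. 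Carrying this out for every pair $k,l\in\bN_N$ finishes the argument.

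In contrast to Proposition \ref{PROPOSITION_L_in_G}, I do not anticipate any genuine obstacle here: the statement is essentially a rearrangement of the defining formula \eqref{L}. The only point deserving a word of justification is that the manipulation is legitimate, namely that $\sinh(\ri\mu+\Lambda_k-\Lambda_l)$ never vanishes; this follows at once from $\Lambda_k-\Lambda_l\in\R$ together with $\sin(\mu)\neq 0$ (see \eqref{mu_nu_conds}), which also guarantees that $L$ is well-defined everywhere on the phase space. The value of Lemma \ref{LEMMA_commut_rel} lies not in its proof but in its later use as a Ruijsenaars type commutation relation, playing the role of a momentum map constraint.
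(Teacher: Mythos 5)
Your computation is correct and is essentially identical to the paper's own proof: both proceed entrywise, using $(e^{\pm\ad_{\bsLambda}}L)_{k,l}=e^{\pm(\Lambda_k-\Lambda_l)}L_{k,l}$ to produce the factor $2\sinh(\ri\mu+\Lambda_k-\Lambda_l)$, which cancels the denominator in \eqref{L}. Your extra remark that $\sinh(\ri\mu+\Lambda_k-\Lambda_l)\neq 0$ (since $\sin(\mu)\neq 0$ and $\Lambda_k-\Lambda_l\in\R$) is a harmless clarification that the paper leaves implicit.
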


\begin{proof}
Recalling the matrix entries of $L$, for all $k, l \in \bN_N$ we can 
write that
\be
\begin{split}
    & \left(
        e^{\ri \mu} e^{\ad_{\bsLambda}} L
        - e^{- \ri \mu} e^{-\ad_{\bsLambda}} L
    \right)_{k, l}
    = \left(
        e^{\ri \mu} e^{\bsLambda} L e^{- \bsLambda}
        - e^{- \ri \mu} e^{- \bsLambda} L e^{\bsLambda}
    \right)_{k, l} 
    \\
    & = e^{\ri \mu} e^{\Lambda_k} L_{k, l} e^{- \Lambda_l}
        - e^{- \ri \mu} e^{- \Lambda_k} L_{k, l} e^{\Lambda_l} 
    = 2 \sinh(\ri \mu + \Lambda_k - \Lambda_l) L_{k, l}
    \\
    & = 2 \ri \sin(\mu) F_k \bar{F}_l + 2 \ri \sin(\mu - \nu) C_{k, l} 
    = \left(
        2 \ri \sin(\mu) F F^* + 2 \ri \sin(\mu - \nu) C
    \right)_{k, l},
\end{split}
\label{commut_rel_proof}
\ee
thus \eqref{commut_rel} follows at once. 
\end{proof}

Though the proof of Lemma \ref{LEMMA_commut_rel} is almost trivial, it 
proves to be quite handy in the forthcoming calculations. In particular, 
based on the commutation relation \eqref{commut_rel}, we shall now prove 
that the spectrum of $L$ is \emph{simple}. Heading toward our present goal, 
first let us recall that Lemma \ref{LEMMA_L_in_exp_p} tells us that 
$L \in \exp(\mfp)$. Therefore, as we can infer easily from 
\eqref{mfp_and_mfa_and_K}, one can find some $y \in K$ and a real $n$-tuple 
$\htheta = (\htheta_1, \ldots, \htheta_n) \in \R^n$ satisfying
\be
    \htheta_1 \geq \ldots \geq \htheta_n \geq 0,
\label{htheta_assumptions}
\ee
such that with the diagonal matrix
\be
    \hbsTheta 
    = \diag(\hTheta_1, \ldots, \hTheta_N)
    = \diag(\htheta_1, \ldots, \htheta_n, 
            -\htheta_1, \ldots, -\htheta_n)
    \in \mfa   
\label{hbsTheta}
\ee
we can write
\be
    L = y e^{2 \hbsTheta} y^{-1}.
\label{L_diagonalized}
\ee
Now, upon defining
\be
    \hat{L} = y^{-1} e^{2 \bsLambda} y \in \exp(\mfp)
    \quad \text{and} \quad
    \hat{F} = e^{-\hbsTheta} y^{-1} e^{\bsLambda} F \in \bC^N,
\label{hat_L_and_hat_F}
\ee
for these new objects we can also set up a commutation relation analogous 
to \eqref{commut_rel}. Indeed, from \eqref{commut_rel} one can derive that 
\be
    e^{\ri \mu} e^{- \hbsTheta} \hat{L} e^{\hbsTheta}
    - e^{- \ri \mu} e^{\hbsTheta} \hat{L} e^{- \hbsTheta}
    = 2 \ri \sin(\mu) \hat{F} \hat{F}^* + 2 \ri \sin(\mu - \nu) C.
\label{hat_commut_rel}
\ee
Componentwise, from \eqref{hat_commut_rel} we conclude that 
\be
    \hat{L}_{k, l} =
    \frac{\ri \sin(\mu) \hat{F}_k \bar{\hat{F}}_l 
            + \ri \sin(\mu - \nu) C_{k, l}}
        {\sinh(\ri \mu - \hTheta_k + \hTheta_l)}
    \qquad
    (k, l \in \bN_N).
\label{hat_L_entries}
\ee
Since $\hat{L}$ \eqref{hat_L_and_hat_F} is a positive definite matrix, 
its diagonal entries are strictly positive. Therefore, by exploiting
\eqref{hat_L_entries}, we can write
\be
    0 < \hat{L}_{k, k} = \vert \hat{F}_k \vert^2.
\label{hat_F_comp}
\ee
The upshot of this trivial observation is that $\hat{F}_k \neq 0$ for all 
$k \in \bN_N$.

To proceed further, notice that for the inverse matrix 
$\hat{L}^{-1} = C \hat{L} C$ we can also cook up an equation analogous to 
\eqref{hat_commut_rel}. Indeed, by simply multiplying both sides of 
\eqref{hat_commut_rel} with the matrix $C$ \eqref{C}, we obtain
\be
    e^{\ri \mu} e^{\hbsTheta} \hat{L}^{-1} e^{- \hbsTheta}
    - e^{- \ri \mu} e^{- \hbsTheta} \hat{L}^{-1} e^{\hbsTheta}
    = 2 \ri \sin(\mu) (C \hat{F}) (C \hat{F})^* + 2 \ri \sin(\mu - \nu) C,
\label{hat_commut_rel_inv}
\ee
that leads immediately to the matrix entries
\be
    (\hat{L}^{-1})_{k, l} =
    \frac{\ri \sin(\mu) (C \hat{F})_k \overline{(C \hat{F})}_l 
            + \ri \sin(\mu - \nu) C_{k, l}}
        {\sinh(\ri \mu + \hTheta_k - \hTheta_l)}
    \qquad
    (k, l \in \bN_N).
\label{hat_L_inv_entries}
\ee
For further reference, we now spell out the trivial equation
\be 
    \delta_{k, l} = \sum_{j = 1}^N \hat{L}_{k, j} (\hat{L}^{-1})_{j, l}
\label{hat_trivial}
\ee
for certain values of $k, l \in \bN_N$. First, by plugging the explicit 
formulae \eqref{hat_L_entries} and \eqref{hat_L_inv_entries} into the 
relationship \eqref{hat_trivial}, with the special choice of indices 
$k = l = a \in \bN_n$ one finds that
\be
\begin{split}
    0 
    = & 1 + \frac{\sin(\mu - \nu)^2}{\sinh(\ri \mu - 2 \htheta_a)^2}
        + \frac{2 \sin(\mu) \sin(\mu - \nu) 
                \hat{F}_a \bar{\hat{F}}_{n + a}}
            {\sinh(\ri \mu - 2 \htheta_a)^2} \\
    & + \sin(\mu)^2 \hat{F}_a \bar{\hat{F}}_{n + a}
        \sum_{c = 1}^n 
            \left(
                \frac{\bar{\hat{F}}_c \hat{F}_{n + c}}
                    {\sinh(\ri \mu - \htheta_a + \htheta_c)^2}
                + \frac{\hat{F}_c \bar{\hat{F}}_{n + c}}
                    {\sinh(\ri \mu - \htheta_a - \htheta_c)^2}
            \right).
\end{split}
\label{rel_1}
\ee
Second, if $k = a$ and $l = n + a$ with some $a \in \bN_n$, then from 
\eqref{hat_trivial} we obtain
\be
\begin{split}
    & \sin(\mu)^2 
        \sum_{c = 1}^n 
            \left(
                \frac{\bar{\hat{F}}_c \hat{F}_{n + c}}
                    {\sinh(\ri \mu - \htheta_a + \htheta_c)
                    \sinh(\ri \mu + \htheta_c + \htheta_a)}
                + \frac{\hat{F}_c \bar{\hat{F}}_{n + c}}
                    {\sinh(\ri \mu - \htheta_a - \htheta_c)
                    \sinh(\ri \mu - \htheta_c + \htheta_a)}
            \right) 
    \\
    & \quad
    = \ri \sin(\mu - \nu) 
    \left(
        \frac{1}{\sinh(\ri \mu - 2 \htheta_a)}
        + \frac{1}{\sinh(\ri \mu + 2 \htheta_a)}
    \right). 
\end{split}
\label{rel_2}
\ee
Third, if $k = a$ and $l = b$ with some $a, b \in \bN_n$ satisfying 
$a \neq b$, then the relationship \eqref{hat_trivial} immediately leads to 
the equation
\be
\begin{split}
    & \sin(\mu)^2 
        \sum_{c = 1}^n 
            \left(
                \frac{\bar{\hat{F}}_c \hat{F}_{n + c}}
                    {\sinh(\ri \mu - \htheta_a + \htheta_c)
                    \sinh(\ri \mu + \htheta_c - \htheta_b)}
                + \frac{\hat{F}_c \bar{\hat{F}}_{n + c}}
                    {\sinh(\ri \mu - \htheta_a - \htheta_c)
                    \sinh(\ri \mu - \htheta_c - \htheta_b)}
            \right) 
    \\
    & \quad
    = - \frac{\sin(\mu) \sin(\mu - \nu)}
        {\sinh(\ri \mu - \htheta_a - \htheta_b)}  
    \left(
        \frac{1}{\sinh(\ri \mu - 2 \htheta_a)}
        + \frac{1}{\sinh(\ri \mu - 2 \htheta_b)}
    \right). 
\end{split}
\label{rel_3}
\ee
At this point we wish to emphasize that during the derivation of the 
last two equations \eqref{rel_2} and \eqref{rel_3} it proves to be 
essential that each component of the column vector $\hat{F}$ 
\eqref{hat_L_and_hat_F} is nonzero, as we have seen in \eqref{hat_F_comp}.

\begin{lemma}
\label{LEMMA_regularity}
Under the additional assumption on the coupling parameters
\be
    \sin(2 \mu - \nu) \neq 0,
\label{coupling_cond_extra}
\ee
the spectrum of the matrix $L$ \eqref{L} is simple of the form
\be
    \mathrm{Spec}(L) 
    = \{ e^{\pm 2 \htheta_a} \, | \, a \in \bN_n \},
\label{spec_L}
\ee
where $\htheta_1 > \ldots > \htheta_n > 0$. In other words, $L$ is regular 
in the sense that $L \in \exp(\mfp_\reg)$.
\end{lemma}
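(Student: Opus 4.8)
The plan is to prove Lemma \ref{LEMMA_regularity} by combining two ingredients already at our disposal: the spectral form \eqref{L_diagonalized} together with the assumed positivity (Lemma \ref{LEMMA_L_in_exp_p}), which guarantees that the eigenvalues of $L$ come in reciprocal pairs $e^{\pm 2\htheta_a}$ with $\htheta_a \geq 0$; and the algebraic consequences \eqref{rel_1}--\eqref{rel_3} of the Ruijsenaars type commutation relation, which constrain the $\htheta_a$ so tightly that strict inequalities $\htheta_1 > \dots > \htheta_n > 0$ must hold. The reciprocal-pair structure is immediate: $L \in \UN(n,n)$ by Proposition \ref{PROPOSITION_L_in_G} means $L^{-1} = CLC$, so $L$ and $L^{-1}$ are conjugate, hence $\mathrm{Spec}(L)$ is stable under $x \mapsto x^{-1}$; positivity makes all eigenvalues real and positive, and we already have the parametrization \eqref{hbsTheta}--\eqref{L_diagonalized}. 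So the whole content of the lemma is the strictness of \eqref{htheta_assumptions}.

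The core of the argument is a proof by contradiction of the collisions $\htheta_a = \htheta_{a+1}$ and of $\htheta_n = 0$. First suppose $\htheta_n = 0$. Then $2\htheta_n = 0$, and in \eqref{rel_2} with $a = n$ the right-hand side becomes $2\ri\sin(\mu-\nu)/\sinh(\ri\mu) = 2\sin(\mu-\nu)/\sin(\mu)$, while careful grouping of the terms on the left-hand side (pairing the $c=n$ term, where $\htheta_n=0$ makes the two denominators coincide up to sign) produces an expression that one evaluates explicitly using \eqref{hat_F_comp} and the reality/positivity of $\hat L_{n,n} = |\hat F_n|^2 = |\hat F_{2n}|^2$; the upshot is a forced relation among $\sin\mu$, $\sin\nu$, $\sin(\mu-\nu)$ that contradicts $\sin(2\mu-\nu)\neq 0$ (note $\sin(2\mu-\nu) = \sin\mu\cos(\mu-\nu) + \cos\mu\sin(\mu-\nu)$, so this is exactly the quantity that the Cauchy-type manipulations will make appear). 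The same mechanism handles $\htheta_a = \htheta_{a+1}$: assume it and look at \eqref{rel_3} with the pair of indices $\{a,a+1\}$ — the denominator $\sinh(\ri\mu - \htheta_a - \htheta_b)$ and its partner become equal, the sum over $c$ simplifies (again using $\hat F_c \neq 0$ from \eqref{hat_F_comp}, which is where the nonvanishing of every component of $\hat F$ is indispensable), and one is led to a relation forbidden by \eqref{coupling_cond_extra}. Finally, since by \eqref{htheta_assumptions} we already know $\htheta_1 \geq \dots \geq \htheta_n \geq 0$, ruling out every equality yields $\htheta_1 > \dots > \htheta_n > 0$, which is exactly the condition defining $\mfc$ \eqref{mfc}; hence $\hbsTheta \in \mfa_\reg$ and, via \eqref{L_diagonalized} and the characterization \eqref{mfp_reg}, $L = y e^{2\hbsTheta} y^{-1} \in \exp(\mfp_\reg)$.

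I expect the main obstacle to be the bookkeeping in the collision cases: in \eqref{rel_2} and \eqref{rel_3} one cannot simply set the offending $\htheta$'s equal and read off a contradiction, because several terms in the sum over $c$ simultaneously become singular or indeterminate, and one must take the limit carefully — effectively differentiating a Cauchy-type identity, or re-deriving \eqref{rel_1}--\eqref{rel_3} in a degenerate configuration. The honest way to organize this is probably to mimic the rational case: the analogous regularity statement for $\cA$ in \cite{Pu11-2} was presumably proved by exactly such a limiting argument, and the hyperbolic identities \eqref{rel_1}--\eqref{rel_3} are structurally identical (hyperbolic sines replacing linear factors), so the same contradiction survives verbatim. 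I would therefore state the reciprocal-pair/positivity reduction in full, carry out the $\htheta_n = 0$ case in detail as the model computation, and then remark that $\htheta_a = \htheta_{a+1}$ is excluded by the identical analysis of \eqref{rel_3}, citing the rational precedent for the routine limiting step.
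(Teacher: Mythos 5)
Your overall architecture does match the paper's: the reciprocal-pair structure from $L^{-1}=CLC$ plus positivity gives \eqref{spec_L} with $\htheta_a\geq 0$, and the whole content is strictness, to be extracted from \eqref{rel_1}--\eqref{rel_3}. However, your central technical worry --- that setting $\htheta_a=\htheta_b$ or $\htheta_n=0$ makes terms in \eqref{rel_1}--\eqref{rel_3} ``singular or indeterminate'', so that one needs a careful limiting argument, a differentiated Cauchy-type identity, or a rational-case precedent --- is a misconception, and your plan leans on it. Every denominator in \eqref{hat_L_entries}, \eqref{hat_L_inv_entries} and \eqref{rel_1}--\eqref{rel_3} has the form $\sinh(\ri\mu+\text{real})$, which never vanishes because $\sin(\mu)\neq 0$; hence these identities hold verbatim at degenerate configurations (their derivation uses only \eqref{hat_commut_rel}, \eqref{hat_commut_rel_inv} and $\hat{F}_k\neq 0$ from \eqref{hat_F_comp}, none of which requires distinct or nonzero $\htheta$'s), and the paper simply substitutes the coincident values. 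Deferring the collision case to an ``identical, routine limiting step'' justified by a presumed rational precedent is therefore both unnecessary and, as written, not a proof.

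The second gap is in the mechanism of the contradictions, which you never exhibit: you must explain how the unknown bilinears $\hat{F}_c\bar{\hat{F}}_{n+c}$ drop out. For $\htheta_a=0$ the paper eliminates the sum by combining \eqref{rel_1} with \eqref{rel_2}, leaving $0=1-\sin(\mu-\nu)^2/\sin(\mu)^2=\sin(\nu)\sin(2\mu-\nu)/\sin(\mu)^2$, which contradicts \eqref{mu_nu_conds} and \eqref{coupling_cond_extra}; your sketch invokes \eqref{rel_2} alone together with the unjustified claim $\hat{L}_{n,n}=|\hat{F}_n|^2=|\hat{F}_{2n}|^2$, and \eqref{rel_2} by itself still contains the unknown $\hat{F}$-dependent sum, so no parameter-only relation can emerge from it. More importantly, for a collision $\htheta_a=\htheta_b$ the contradiction does \emph{not} come from \eqref{coupling_cond_extra}, as you anticipate: combining \eqref{rel_3} with \eqref{rel_1} yields $0=1+\sin(\mu-\nu)^2/\sinh(\ri\mu-2\htheta_a)^2$, i.e. $\sin(\mu-\nu)^2=-\sinh(\ri\mu-2\htheta_a)^2$, whose imaginary part forces $\cos(\mu)=0$ (using $\htheta_a\neq 0$, $\sin(\mu)\neq 0$) and whose real part then gives $1\geq\sin(\mu-\nu)^2=\cosh(2\htheta_a)^2>1$ --- a reality/positivity argument using only \eqref{mu_nu_conds}. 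Hunting for $\sin(2\mu-\nu)$ in that case would send you down a dead end; the extra assumption \eqref{coupling_cond_extra} is needed only to exclude $\htheta_a=0$.
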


\begin{proof}
First, let us \emph{suppose} that $\htheta_a = 0$ for some $a \in \bN_n$. 
With this particular index $a$, from equation \eqref{rel_1} we infer that
\be
\begin{split}
    0 
    = & 1 - \frac{\sin(\mu - \nu)^2}{\sin(\mu)^2}
        - \frac{2 \sin(\mu - \nu) 
                \hat{F}_a \bar{\hat{F}}_{n + a}}
            {\sin(\mu)} \\
    & + \sin(\mu)^2 \hat{F}_a \bar{\hat{F}}_{n + a}
        \sum_{c = 1}^n 
            \left(
                \frac{\bar{\hat{F}}_c \hat{F}_{n + c}}
                    {\sinh(\ri \mu + \htheta_c)^2}
                + \frac{\hat{F}_c \bar{\hat{F}}_{n + c}}
                    {\sinh(\ri \mu - \htheta_c)^2}
            \right),
\end{split}
\label{L2_rel_1}
\ee
while \eqref{rel_2} leads to the relationship
\be
    \sin(\mu)^2 
        \sum_{c = 1}^n 
            \left(
                \frac{\bar{\hat{F}}_c \hat{F}_{n + c}}
                    {\sinh(\ri \mu + \htheta_c)^2}
                + \frac{\hat{F}_c \bar{\hat{F}}_{n + c}}
                    {\sinh(\ri \mu - \htheta_c)^2}
            \right)
    = \frac{2 \sin(\mu - \nu)}{\sin(\mu)}.
\label{L2_rel_2}
\ee
Now, by plugging \eqref{L2_rel_2} into \eqref{L2_rel_1}, we obtain
\be
    0 
    = 1 - \frac{\sin(\mu - \nu)^2}{\sin(\mu)^2}
    = \frac{\sin(\mu)^2 - \sin(\mu - \nu)^2 }{\sin(\mu)^2}
    = \frac{\sin(\nu) \sin(2 \mu - \nu)}{\sin(\mu)^2},
\label{trig_contradiction_1}
\ee
which clearly contradicts the assumptions imposed in the equations 
\eqref{mu_nu_conds} and \eqref{coupling_cond_extra}. Thus, we are 
forced to conclude that for all $a \in \bN_n$ we have $\htheta_a \neq 0$.

Second, let us \emph{suppose} that $\htheta_a = \htheta_b$ for some 
$a, b \in \bN_n$ satisfying $a \neq b$. With these particular indices 
$a$ and $b$, equation \eqref{rel_3} takes the form
\be
    \sin(\mu)^2 
        \sum_{c = 1}^n 
            \left(
                \frac{\bar{\hat{F}}_c \hat{F}_{n + c}}
                    {\sinh(\ri \mu - \htheta_a + \htheta_c)^2}
                + \frac{\hat{F}_c \bar{\hat{F}}_{n + c}}
                    {\sinh(\ri \mu - \htheta_a - \htheta_c)^2}
            \right) 
    = - \frac{2 \sin(\mu) \sin(\mu - \nu)}{\sinh(\ri \mu - 2 \htheta_a)^2}.
\label{L2_rel_3}
\ee
Now, by plugging this formula into \eqref{rel_1}, we obtain immediately that
\be
    0 = 1 + \frac{\sin(\mu - \nu)^2}{\sinh(\ri \mu - 2 \htheta_a)^2},
\label{L2_rel_4}
\ee
which in turn implies that
\be
\begin{split}
    & \sin(\mu - \nu)^2 
    = - \sinh(\ri \mu - 2 \htheta_a)^2 \\
    & \quad
    = \sin(\mu)^2 \cosh(2 \htheta_a)^2 
        - \cos(\mu)^2 \sinh(2 \htheta_a)^2
        + \ri \sin(\mu) \cos(\mu) \sinh(4 \htheta_a).
\end{split}
\label{L2_ab_key}
\ee
Since $\htheta_a \neq 0$ and since $\sin(\mu) \neq 0$, the imaginary part 
of the above equation leads to the relation $\cos(\mu) = 0$, whence 
$\sin(\mu)^2 = 1$ also follows. Now, by plugging these observations into 
the real part of \eqref{L2_ab_key}, we end up with the contradiction
\be
    1 \geq \sin(\mu - \nu)^2 = \cosh(2 \htheta_a)^2 > 1.
\label{L2_contradiction}
\ee
Thus, if $a, b \in \bN_n$ and $a \neq b$, then necessarily we have 
$\htheta_a \neq \htheta_b$.
\end{proof}

Since the spectrum of $L$ \eqref{L} is simple, it follows that the
dependence of the eigenvalues on the matrix entries is smooth. Therefore,
recalling \eqref{spec_L}, it is clear that each $\htheta_c$ $(c \in \bN_n)$
can be seen as a smooth function on $P$ \eqref{P}, i.e.,
\be
    \htheta_c \in C^\infty(P).
\label{htheta_smooth}
\ee
To conclude this subsection, we also offer a few remarks on the additional
constraint appearing in \eqref{coupling_cond_extra}, that we keep in
effect in the rest of the chapter. Naively, this assumption excludes a 
$1$-dimensional subset from the $2$-dimensional space of the parameters 
$(\mu, \nu)$. However, looking back to the Hamiltonian $H$ \eqref{H},
it is clear that the effective coupling constants of our van Diejen systems
are rather the positive numbers $\sin(\mu)^2$ and $\sin(\nu)^2$. Therefore, 
keeping in mind \eqref{mu_nu_conds}, on the parameters $\mu$ and $\nu$ we 
could have imposed the requirement, say,
\be
    (\mu, \nu) 
    \in \left( (0, \pi / 4) \times [ -\pi / 2, 0 ) \right)
        \cup \left( [\pi / 4, \pi / 2] \times ( 0, \pi / 2] \right),
\label{mu_nu_spec}
\ee 
at the outset. The point is that, under the requirement \eqref{mu_nu_spec}, 
the equation $\sin(2 \mu - \nu) = 0$ is equivalent to the pair of equations 
$\sin(\mu)^2 = 1 / 2$ and $\sin(\nu)^2 = 1$. To put it differently, our 
observation is that under the assumptions \eqref{mu_nu_conds} and 
\eqref{coupling_cond_extra} the pair $(\sin(\mu)^2, \sin(\nu)^2)$ formed by 
the relevant coupling constants can take on any values from the `square' 
$(0, 1] \times (0, 1]$, except the \emph{single point} $(1 / 2, 1)$.
From the proof of Lemma \ref{LEMMA_regularity}, especially from equation 
\eqref{L2_rel_2}, one may get the impression that even this very slight 
technical assumption can be relaxed by further analyzing the properties of 
column vector $\hat{F}$ \eqref{hat_L_and_hat_F}. However, we do not wish to 
pursue this direction in this chapter.

\section{Analyzing the dynamics}
\label{sec:4.3}

In this section we wish to study the dynamics generated by the Hamiltonian
$H$ \eqref{H}. Recalling the formulae \eqref{u_a} and \eqref{L}, by the
obvious relationship
\be
    H = \sum_{c = 1}^n \cosh(\theta_c) u_c = \half \tr(L)
\label{H_and_u_and_L}
\ee
we can make the first contact of our van Diejen system with the proposed Lax 
matrix $L$ \eqref{L}. As an important ingredient of the forthcoming analysis,
let us introduce the Hamiltonian vector field $\bsX_H \in \mfX(P)$ with the
usual definition
\be
    \bsX_H [f] = \{ f, H \}
    \qquad
    (f \in C^\infty(P)).
\label{bsX_H}
\ee
Working with the convention \eqref{PBs}, for the time evolution of the 
global coordinate functions $\lambda_a$ and $\theta_a$ $(a \in \bN_n)$ 
we can clearly write
\begin{align}
    & \dot{\lambda}_a 
        = \bsX_H [\lambda_a]
        = \PD{H}{\theta_a}
        = \sinh(\theta_a) u_a,
    \label{lambda_dot}
    \\
    & \dot{\theta}_a 
        = \bsX_H [\theta_a]
        = -\PD{H}{\lambda_a}
        = - \sum_{c = 1}^n \cosh(\theta_c) u_c \PD{\ln(u_c)}{\lambda_a}.
    \label{theta_dot}
\end{align}
To make the right-hand side of \eqref{theta_dot} more explicit, let us 
display the logarithmic derivatives of the constituent functions $u_c$. 
Notice that for all $a \in \bN_n$ we can write
\be
    \PD{\ln(u_a)}{\lambda_a}
    = -\Real 
        \bigg(
            \frac{2 \ri \sin(\nu)}
                {\sinh(2 \lambda_a) \sinh(\ri \nu + 2 \lambda_a)}
            +\sum_{\substack{j = 1 \\ (j \neq a, n + a)}}^N
                \frac{\ri \sin(\mu)}
                    {\sinh(\lambda_a - \Lambda_j) 
                    \sinh(\ri \mu + \lambda_a - \Lambda_j)}
        \bigg),
\label{PD_u_a_lambda_a}
\ee
while if $c \in \bN_n$ and $c \neq a$, then we have
\be
    \PD{\ln(u_c)}{\lambda_a}
    = \Real
        \bigg(
            \frac{\ri \sin(\mu)}
                {\sinh(\lambda_a - \lambda_c) 
                \sinh(\ri \mu + \lambda_a - \lambda_c)}
            -\frac{\ri \sin(\mu)}
                {\sinh(\lambda_a + \lambda_c) 
                \sinh(\ri \mu + \lambda_a + \lambda_c)}
        \bigg).
\label{PD_u_c_lambda_a}
\ee
The rest of this section is devoted to the study of the Hamiltonian 
dynamical system \eqref{lambda_dot}-\eqref{theta_dot}.

\subsection{Completeness of the Hamiltonian vector field}
\label{subsec:4.3.1}

Undoubtedly, the Hamiltonian \eqref{H} does not take the usual form one finds
in the standard textbooks on classical mechanics. It is thus inevitable that
we have even less intuition about the generated dynamics than in the case of
the `natural systems' characterized by a kinetic term plus a potential. To
get a finer picture about the solutions of the Hamiltonian dynamics 
\eqref{lambda_dot}-\eqref{theta_dot}, we start our study with a brief analysis
on the completeness of the Hamiltonian vector field $\bsX_H$ \eqref{bsX_H}.

As the first step, we introduce the strictly positive constant
\be
    \cS = \min \{ \vert \sin(\mu) \vert, \vert \sin(\nu) \vert \} \in (0, 1].
\label{cS}
\ee
Giving a glance at \eqref{u_a}, it is evident that
\be
    u_n 
    > \left( 1 + \frac{\sin(\nu)^2}{\sinh(2 \lambda_n)^2} \right)^\half
    > \frac{\vert \sin(\nu) \vert}{\sinh(2 \lambda_n)}
    \geq \frac{\cS}{\sinh(2 \lambda_n)},
\label{u_n_estim}
\ee
while for all $c \in \bN_{n - 1}$ we can write
\be
    u_c 
    > \left( 
        1 + \frac{\sin(\mu)^2}{\sinh(\lambda_c - \lambda_{c + 1})^2} 
    \right)^\half
    > \frac{\vert \sin(\mu) \vert}{\sinh(\lambda_c - \lambda_{c + 1})}
    \geq \frac{\cS}{\sinh(\lambda_c - \lambda_{c + 1})}.
\label{u_c_estim}
\ee
Keeping in mind the above trivial inequalities, we are ready to prove the 
following result.

\begin{theorem}
\label{THEOREM_Completeness}
The Hamiltonian vector field $\bsX_H$ \eqref{bsX_H} generated by the van Diejen
type Hamiltonian function $H$ \eqref{H} is complete. That is, the maximum
interval of existence of each integral curve of $\bsX_H$ is the whole real 
axis $\R$.
\end{theorem}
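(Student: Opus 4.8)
The plan is to prove completeness of $\bsX_H$ by establishing a priori bounds that prevent any integral curve from escaping the phase space $P$ \eqref{P} in finite time. Since $P$ is an open subset of $\R^{2n}$, a maximal integral curve can fail to be defined for all time only if, as $t$ approaches a finite endpoint $t^\ast$ of its maximal interval, either the solution runs off to infinity in $\R^{2n}$, or it approaches the boundary $\partial P$, i.e.\ some $\lambda_a-\lambda_{a+1}\to 0$ or $\lambda_n\to 0$. The strategy is to rule out both possibilities using the conservation of $H$ together with the elementary inequalities \eqref{u_n_estim} and \eqref{u_c_estim}.

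First I would fix an integral curve $t\mapsto(\lambda(t),\theta(t))$ on its maximal interval $I\ni 0$, and set $E=H(\lambda(0),\theta(0))$, which is conserved along the flow. From \eqref{H_and_u_and_L} and the fact that $\cosh(\theta_c)\geq 1$ and $u_c>0$, every summand $\cosh(\theta_c)u_c$ is at most $E$; hence along the whole curve we have $u_c(t)\leq E$ and $\cosh(\theta_c(t))\leq E$ for all $c\in\bN_n$. The latter immediately bounds $\theta(t)$ in a fixed compact set, so $\theta$ cannot blow up. Combining $u_c\leq E$ with \eqref{u_n_estim} gives $\sinh(2\lambda_n(t))>\cS/E$, i.e.\ $\lambda_n(t)$ stays bounded away from $0$; and \eqref{u_c_estim} gives $\sinh(\lambda_c(t)-\lambda_{c+1}(t))>\cS/E$ for $c\in\bN_{n-1}$, so consecutive coordinates stay separated by a fixed positive amount. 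This already shows the curve stays in a closed subset of $P$ bounded away from $\partial P$. It remains to bound $\lambda_1(t)$ from above; for this I would use \eqref{lambda_dot}, $\dot\lambda_a=\sinh(\theta_a)u_a$, so $|\dot\lambda_a(t)|\leq |\sinh(\theta_a(t))|\,u_a(t)\leq E\sqrt{E^2-1}$ (using $u_a\leq E$ and $|\sinh(\theta_a)|\leq\sqrt{\cosh^2(\theta_a)-1}\leq\sqrt{E^2-1}$), a uniform bound on the velocity. Integrating, $|\lambda_a(t)-\lambda_a(0)|\leq E\sqrt{E^2-1}\,|t|$, which on any bounded time interval keeps all $\lambda_a$ in a compact set.

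Putting these together, for any finite $T>0$ the portion of the curve with $|t|\leq T$ (and $t\in I$) remains in a fixed compact subset $\cK_{T}\subset P$ determined only by $E$ and $T$. By the standard escape-time criterion for solutions of ODEs on manifolds (a maximal integral curve whose image stays in a compact set is defined for all time, or more precisely cannot have a finite endpoint of its maximal interval), the maximal interval $I$ cannot have a finite right or left endpoint, so $I=\R$. Since the integral curve was arbitrary, $\bsX_H$ is complete.

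I do not expect a genuine obstacle here: the only mildly delicate point is the bookkeeping of which inequality controls which potential approach to the boundary, and making sure the velocity bound from \eqref{lambda_dot} is invoked to handle the one direction ($\lambda_1\to+\infty$) that the energy bound alone does not control. Everything else is a direct consequence of conservation of $H$ and the estimates \eqref{u_n_estim}--\eqref{u_c_estim} already recorded in the text.
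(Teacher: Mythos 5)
Your proposal is correct and follows essentially the same route as the paper: conservation of $H$ together with the elementary bounds \eqref{u_n_estim}--\eqref{u_c_estim} controls $\theta$ and the interparticle gaps, the velocity bound coming from \eqref{lambda_dot} controls the growth of $\lambda$ on bounded time intervals, and the escape-to-compact-set criterion concludes (the paper phrases this as a contradiction using the $\eps$-family of compact sets $Q_\eps\times\cC$ and letting $\eps\to 0$, whereas you extract the uniform gap bound $\sinh(\lambda_c-\lambda_{c+1})>\cS/E$, $\sinh(2\lambda_n)>\cS/E$ directly, which is only a mild reorganization). One small fix: to deduce $\cosh(\theta_c)\le E$ from $\cosh(\theta_c)u_c\le E$ you need $u_c\ge 1$ rather than merely $u_c>0$, but this is immediate from \eqref{u_a} since every factor there is at least $1$.
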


\begin{proof}
Take an arbitrary point
\be
    \gamma_0 = (\lambda^{(0)}, \theta^{(0)}) \in P,
\label{gamma_0}
\ee
and let
\be
    \gamma \colon (\alpha, \beta) \to P,
    \qquad
    t \mapsto \gamma(t) = (\lambda(t), \theta(t))
\label{gamma}
\ee
be the unique maximally defined integral curve of $\bsX_H$ with 
$-\infty \leq \alpha < 0 < \beta \leq \infty$ satisfying the initial condition
$\gamma(0) = \gamma_0$. Since the Hamiltonian $H$ is smooth, the existence,
the uniqueness, and also the smoothness of such a maximal solution are obvious.
Our goal is to show that for the domain of the  maximally defined trajectory
$\gamma$ \eqref{gamma} we have  $(\alpha, \beta) = \R$; that is, 
$\alpha = -\infty$ and $\beta = \infty$.

Arguing by contradiction, first let us \emph{suppose} that $\beta < \infty$. 
Since the Hamiltonian $H$ is a first integral, for all $t \in (\alpha, \beta)$ 
and for all $a \in \bN_n$ we can write
\be
    H(\gamma_0) 
    = H(\gamma(t)) 
    = \sum_{c = 1}^n \cosh(\theta_c(t)) u_c(\lambda(t))
    > \cosh(\theta_a(t)) u_a(\lambda(t)),
\label{estim_1}
\ee
whence the estimation
\be
    H(\gamma_0) 
    > \cosh(\vert \theta_a(t) \vert) \geq \half e^{\vert \theta_a(t) \vert}
\label{estim_2}
\ee
is also immediate. Thus, upon introducing the cube
\be
    \cC
    = [-\ln(2 H(\gamma_0)), \ln(2 H(\gamma_0))]^n 
        \subseteq \R^n,
\label{cube}
\ee
from \eqref{estim_2} we infer at once that
\be
    \theta(t) \in \cC
    \qquad
    (t \in (\alpha, \beta)).
\label{theta_in_cube}
\ee
Turning to the equations \eqref{lambda_dot} and \eqref{estim_1}, we can 
cook up an estimation on the growing of the vector $\lambda(t)$, too. 
Indeed, we see that
\be
    \vert \dot{\lambda}_1(t) \vert
    = \sinh(\vert \theta_1(t) \vert) u_1(\lambda(t))
    \leq \cosh(\vert \theta_1(t) \vert) u_1(\lambda(t))
    < H(\gamma_0)
    \qquad
    (t \in (\alpha, \beta)),
\label{lambda_dot_estim}
\ee 
that implies immediately that for all $t \in [0, \beta)$ we have
\be
    \vert \lambda_1(t) - \lambda_1^{(0)} \vert
    = \vert \lambda_1(t) - \lambda_1(0) \vert
    = \left \vert \int_0^t \dot{\lambda}_1(s) \, \dd s \right \vert
    \leq \int_0^t \vert \dot{\lambda}_1(s) \vert \, \dd s
    \leq t H(\gamma_0) < \beta H(\gamma_0).
\label{lambda_1_estim}
\ee
Therefore, with the aid of the strictly positive constant 
\be
    \rho = \lambda_1^{(0)} + \beta H(\gamma_0) \in (0, \infty), 
\label{rho}
\ee    
we end up with the estimation
\be
    \lambda_1(t) 
    = \vert \lambda_1(t) \vert
    = \vert \lambda_1^{(0)} + \lambda_1(t) - \lambda_1^{(0)} \vert
    \leq \vert \lambda_1^{(0)} \vert 
        + \vert \lambda_1(t) - \lambda_1^{(0)} \vert
    < \rho
    \qquad
    (t \in [0, \beta)).
\label{lambda_1_estim_OK}
\ee
Since $\lambda(t)$ moves in the configuration space $Q$ \eqref{Q}, the
above observation entails that
\be
    \rho > \lambda_1(t) > \ldots > \lambda_n(t) > 0
    \qquad
    (t \in [0, \beta)).
\label{lambda_estim}
\ee

To proceed further, now for all $\eps > 0$ we define the subset 
$Q_\eps \subseteq \R^n$ consisting of those real $n$-tuples 
$x = (x_1, \dots, x_n) \in \R^n$ that satisfy the inequalities
\be
    \rho \geq x_1
    \text{ and }
    2 x_n \geq \eps
    \text{ and }
    x_c \geq x_{c + 1} + \eps
    \text{ for all } c \in \bN_{n - 1},
\label{Q_eps_def}
\ee
simultaneously. In other words, 
\be
    Q_\eps 
    = \{ x \in \R^n \, | \, \rho \geq x_1 \} 
        \cap \{ x \in \R^n \, | \, 2 x_n \geq \eps \}
        \cap
            \bigcap_{c = 1}^{n - 1} 
                \{ x \in \R^n \, | \, x_c - x_{c + 1} \geq \eps \}.
\label{Q_eps_OK}
\ee
Notice that $Q_\eps$ is a bounded and closed subset of $\R^n$. Moreover, by
comparing the definitions \eqref{Q} and \eqref{Q_eps_def}, it is evident that
$Q_\eps \subseteq Q$. Since the cube $\cC$ \eqref{cube} is also a compact 
subset of $\R^n$, we conclude that the Cartesian product $Q_\eps \times \cC$ 
is a \emph{compact} subset of the phase space $P$ \eqref{P}. Therefore, due 
to the assumption $\beta < \infty$, after some time the maximally defined
trajectory $\gamma$ \eqref{gamma} escapes from $Q_\eps \times \cC$, as can
be read off from any standard reference on dynamical systems (see e.g.
\cite[Theorem 2.1.18]{AM78}). More precisely, there is some 
$\tau_\eps \in [0,\beta)$ such that
\be
    (\lambda(t), \theta(t)) 
    \in 
    P \setminus (Q_\eps \times \cC)
    = \left( (Q \setminus Q_\eps) \times \cC \right)
        \cup
        \left( Q \times (\R^n \setminus \cC) \right)    
    \qquad
    (t \in (\tau_\eps, \beta)),
\label{gamma_escapes}
\ee
where the union above is actually a disjoint union. For instance,
due to the relationship \eqref{theta_in_cube}, at the mid-point
\be
    t_\eps = \frac{\tau_\eps + \beta}{2} \in (\tau_\eps, \beta)
\label{t_eps}
\ee
we can write that
\be
    \lambda(t_\eps) 
    \in 
    Q \setminus Q_\eps \subseteq \R^n \setminus Q_\eps.
\label{lambda_escapes}
\ee 
Therefore, simply by taking the complement of $Q_\eps$ \eqref{Q_eps_OK},
and also keeping in mind \eqref{lambda_1_estim_OK}, it is evident that 
\be
    \min 
    \{ \lambda_1(t_\eps) - \lambda_2(t_\eps),
        \dots,
        \lambda_{n - 1}(t_\eps) - \lambda_n(t_\eps),
        2 \lambda_n(t_\eps)
    \} < \eps,
\label{key_for_completeness}
\ee
which in turn implies that
\be
    \max 
    \left\{ 
        \frac{1}{\sinh(\lambda_1(t_\eps) - \lambda_2(t_\eps))},
        \dots,
        \frac{1}{\sinh(\lambda_{n - 1}(t_\eps) - \lambda_n(t_\eps))},
        \frac{1}{\sinh(2 \lambda_n(t_\eps))}
    \right\} 
    > \frac{1}{\sinh(\eps)}.
\label{key_for_completeness_OK}
\ee
Now, since $\eps > 0$ was arbitrary, the estimations \eqref{u_n_estim} and 
\eqref{u_c_estim} immediately lead to the contradiction
\be
\begin{split}
    H(\gamma_0) 
    & = H(\gamma(t_\eps)) 
    = \sum_{c = 1}^n \cosh(\theta_c(t_\eps)) u_c(\lambda(t_\eps)) 
    \\
    & \geq \sum_{c = 1}^n u_c(\lambda(t_\eps))
    > \frac{\cS}{\sinh(2 \lambda_n(t_\eps))}
        + \sum_{c = 1}^{n - 1} 
            \frac{\cS}{\sinh(\lambda_c(t_\eps) - \lambda_{c + 1}(t_\eps))}
    > \frac{\cS}{\sinh(\eps)}.
\end{split}
\label{contradiction}
\ee
Therefore, necessarily, $\beta = \infty$.

Either by repeating the above ideas, or by invoking a time-reversal argument, 
one can also show that $\alpha = -\infty$, whence the proof is complete.
\end{proof}

\subsection{Dynamics of the vector $F$}
\label{subsec:4.3.2}

Looking back to the definition \eqref{L}, we see that the column vector $F$ 
\eqref{F} is important building block of the matrix $L$. Therefore, the study 
of the derivative of $L$ along the Hamiltonian vector field $\bsX_H$ 
\eqref{bsX_H} does require close control over the derivative of the components 
of $F$, too. Upon introducing the auxiliary functions
\be
    \varphi_k = \frac{1}{F_k} \bsX_H [F_k]
    \qquad
    (k \in \bN_N), 
\label{varphi}
\ee
for all $a \in \bN_n$ we can write
\be
\begin{split}
    2 \varphi_a  
    & = \bsX_H [\ln (F_a^2)]
    = \bsX_H [\theta_a + \ln(u_a)]
    = \left\{ \theta_a + \ln(u_a), H \right\} 
    \\
    & = \sum_{c = 1}^n
        \left(
            \sinh(\theta_c) u_c \PD{\ln(u_a)}{\lambda_c}
            - \cosh(\theta_c) u_c \PD{\ln(u_c)}{\lambda_a}
        \right).
\end{split}
\label{varphi_a}
\ee
Therefore, due to the explicit formulae \eqref{PD_u_a_lambda_a} and 
\eqref{PD_u_c_lambda_a}, we have complete control over the first $n$
components of \eqref{varphi}. Turning to the remaining components, from 
the definition \eqref{F} it is evident that 
$F_{n + a} = F_a^{-1} \bar{z}_a$, whence the relationship
\be
    \varphi_{n + a}
    = -\varphi_a + \frac{1}{\bar{z}_a} \bsX_H [\bar{z}_a]
    = -\varphi_a 
        + \sum_{c = 1}^n 
            \sinh(\theta_c) u_c \frac{1}{\bar{z}_a} \PD{\bar{z}_a}{\lambda_c}
\label{varphi_n+a}
\ee
follows immediately. Notice that for all $a \in \bN_n$ we can write that
\be
    \frac{1}{z_a} \PD{z_a}{\lambda_a}
    = -\frac{2 \ri \sin(\nu)}
            {\sinh(2 \lambda_a) \sinh(\ri \nu + 2 \lambda_a)}
        -\sum_{\substack{j = 1 \\ (j \neq a, n + a)}}^N
            \frac{\ri \sin(\mu)}
                {\sinh(\lambda_a - \Lambda_j) 
                \sinh(\ri \mu + \lambda_a -\Lambda_j)},
\label{PD_z_a_lambda_a}
\ee
whereas if $c \in \bN_n$ and $c \neq a$, then we find immediately that
\be
    \frac{1}{z_a} \PD{z_a}{\lambda_c}
    = \frac{\ri \sin(\mu)}
            {\sinh(\lambda_a - \lambda_c) 
            \sinh(\ri \mu + \lambda_a - \lambda_c)}
        -\frac{\ri \sin(\mu)}
            {\sinh(\lambda_a + \lambda_c) 
            \sinh(\ri \mu + \lambda_a + \lambda_c)}.
\label{PD_z_a_lambda_c}
\ee
The above observations can be summarized as follows.

\begin{proposition}
\label{PROPOSITION_varphi}
For the derivative of the components of the function $F$ \eqref{F} along the 
Hamiltonian vector field $\bsX_H$ \eqref{bsX_H} we have
\be
    \bsX_H [F_k] = \varphi_k F_k
    \qquad
    (k \in \bN_N),
\label{F_der}
\ee
where for each $a \in \bN_n$ we can write
\be
    \varphi_a 
    = \Real
        \bigg(
            \frac{\ri \sin(\nu) e^{-\theta_a} u_a}
                {\sinh(2 \lambda_a) \sinh(\ri \nu + 2 \lambda_a)}
            + \half \sum_{\substack{j = 1 \\ (j \neq a, n + a)}}^N
                \frac{\ri \sin(\mu) (e^{-\theta_a} u_a + e^{\Theta_j} u_j)}
                    {\sinh(\lambda_a - \Lambda_j) 
                    \sinh(\ri \mu + \lambda_a - \Lambda_j)}
        \bigg),
\label{varphi_a_OK}
\ee
whereas
\be
    \varphi_{n + a} 
    = -\varphi_a
    - \frac{2 \ri \sin(\nu) \sinh(\theta_a) u_a}
        {\sinh(2 \lambda_a) \sinh(\ri \nu - 2 \lambda_a)}
    -\sum_{\substack{j = 1 \\ (j \neq a, n + a)}}^N
        \frac{\ri \sin(\mu) (\sinh(\theta_a) u_a - \sinh(\Theta_j) u_j)}
            {\sinh(\lambda_a - \Lambda_j) 
            \sinh(\ri \mu - \lambda_a + \Lambda_j)}.
\label{varphi_n+a_OK}
\ee
\end{proposition}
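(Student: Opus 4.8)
The statement is a completely explicit computation of the Hamiltonian derivative of the components $F_k$ of the column vector $F$ \eqref{F}. The plan is to start from the formula \eqref{varphi} defining $\varphi_k$ and to compute $\bsX_H[F_k]$ directly using the product (Leibniz) and chain rules for the Poisson bracket, along with the already-recorded equations of motion \eqref{lambda_dot}--\eqref{theta_dot} and the logarithmic derivatives \eqref{PD_u_a_lambda_a}--\eqref{PD_u_c_lambda_a} and \eqref{PD_z_a_lambda_a}--\eqref{PD_z_a_lambda_c}. Since $F_a = e^{\theta_a/2} u_a^{1/2}$ and $F_{n+a} = e^{-\theta_a/2}\bar z_a u_a^{-1/2} = F_a^{-1}\bar z_a$, it suffices to establish \eqref{F_der} together with the explicit formulae \eqref{varphi_a_OK} for $\varphi_a$ and \eqref{varphi_n+a_OK} for $\varphi_{n+a}$; the relation $\bsX_H[F_k]=\varphi_k F_k$ is immediate from \eqref{varphi} once $\varphi_k$ is identified.

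\textbf{First half: the components $\varphi_a$, $a\in\bN_n$.} Here I would use \eqref{varphi_a}, which already expresses $2\varphi_a = \bsX_H[\theta_a + \ln u_a]$ as the sum
\[
\sum_{c=1}^n\Big(\sinh(\theta_c)u_c\,\PD{\ln u_a}{\lambda_c} - \cosh(\theta_c)u_c\,\PD{\ln u_c}{\lambda_a}\Big).
\]
Into this I would substitute the logarithmic derivatives \eqref{PD_u_a_lambda_a} (for the diagonal term $c=a$ in the first sum and the term $c=a$ in the second sum) and \eqref{PD_u_c_lambda_a} (for $c\neq a$). One then has to collect the terms carefully: the first sum contributes $\sinh(\theta_c)u_c$ weights to factors of the form $\Real\big(\ri\sin\mu/[\sinh(\lambda_a\mp\lambda_c)\sinh(\ri\mu+\lambda_a\mp\lambda_c)]\big)$, and the second sum contributes $\cosh(\theta_c)u_c$ weights. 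Using $\Theta_j$ and $\Lambda_j$ with $j$ ranging over $\bN_N$ to fold the $\lambda_a+\lambda_c$ terms (corresponding to $j=n+c$) together with the $\lambda_a-\lambda_c$ terms ($j=c$), and similarly handling the $\lambda_a$-self term (coming from the $\nu$-dependent piece of $u_a$), the combination $\sinh(\theta_c)u_c\pm\cosh(\theta_c)u_c = \pm e^{\pm\theta_c}u_c$ and $\sinh(\theta_c)u_c = \frac12(e^{\theta_c}-e^{-\theta_c})u_c$ produce exactly the factor $\tfrac12(e^{-\theta_a}u_a + e^{\Theta_j}u_j)$ appearing in \eqref{varphi_a_OK}, after dividing by $2$. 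The identity $e^{\Theta_{n+c}}u_{n+c} = e^{-\theta_c}\bar z_c u_c^{-1}\cdot\overline{(\ldots)}$-type bookkeeping using $z_{n+c}=\bar z_c$ and $u_{n+c}=u_c$ should make the $N$-fold sum close up cleanly. This is bookkeeping-heavy but mechanical.

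\textbf{Second half: the components $\varphi_{n+a}$.} From $F_{n+a}=F_a^{-1}\bar z_a$ one gets \eqref{varphi_n+a}, namely $\varphi_{n+a} = -\varphi_a + \bar z_a^{-1}\bsX_H[\bar z_a]$, and $\bsX_H[\bar z_a] = \sum_c \sinh(\theta_c)u_c\,\partial\bar z_a/\partial\lambda_c$ by \eqref{lambda_dot}. Taking complex conjugates of \eqref{PD_z_a_lambda_a}--\eqref{PD_z_a_lambda_c} and substituting, I would again reorganize the sum over $c$ into a sum over $j\in\bN_N$ (merging $\lambda_a-\lambda_c$ and $\lambda_a+\lambda_c$ as $\lambda_a-\Lambda_j$), and absorb the term $c=a$ into the $\nu$-dependent part, which yields the $\sinh(2\lambda_a)\sinh(\ri\nu-2\lambda_a)$ denominator after conjugation. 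The coefficient is $\sinh(\theta_c)u_c$ for the term with index $c$, but after the conjugation-induced sign flips in the denominators (note $\overline{\sinh(\ri\mu+\lambda_a-\Lambda_j)} = -\sinh(\ri\mu-\lambda_a+\Lambda_j)$ up to the relevant sign conventions, since $\mu,\lambda$ are real), the numerator $\sinh(\theta_a)u_a - \sinh(\Theta_j)u_j$ emerges once one uses $\sinh(\theta_c)u_c = \sinh(\Theta_c)u_c$ and that the $j$-sum over both $c$ and $n+c$ symmetrizes the $\sinh(\Theta_j)u_j$ contributions against the fixed $\sinh(\theta_a)u_a$ piece. Assembling these gives \eqref{varphi_n+a_OK}.

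\textbf{Main obstacle.} The genuine difficulty is purely organizational rather than conceptual: one must be scrupulous about (i) the pole/index bookkeeping when merging the $c$-sum over $\bN_n$ into a $j$-sum over $\bN_N$ (with the exclusions $j\neq a, n+a$), (ii) the sign conventions in conjugating the trigonometric-hyperbolic denominators, and (iii) correctly splitting $\sinh(\theta_c)u_c\pm\cosh(\theta_c)u_c$ into exponentials so that the symmetric combination $\tfrac12(e^{-\theta_a}u_a+e^{\Theta_j}u_j)$ (and its antisymmetric analogue for $\varphi_{n+a}$) appears. No analytic subtlety or Liouville-type argument is needed here — unlike in the proof of Proposition \ref{PROPOSITION_L_in_G} — because we are differentiating, not evaluating, the matrix $L$; everything reduces to algebraic identities among $\sinh$ functions and the explicit partial derivatives already tabulated in \eqref{PD_u_a_lambda_a}--\eqref{PD_z_a_lambda_c}. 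I would therefore present the $\varphi_a$ computation in moderate detail and then remark that $\varphi_{n+a}$ follows by the same manipulation applied to \eqref{varphi_n+a}.
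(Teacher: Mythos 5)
Your proposal is correct and follows essentially the same route as the paper, which establishes the Proposition precisely by the computation you outline: the Leibniz-rule identities \eqref{varphi_a} and \eqref{varphi_n+a} combined with the tabulated logarithmic derivatives \eqref{PD_u_a_lambda_a}--\eqref{PD_u_c_lambda_a} and \eqref{PD_z_a_lambda_a}--\eqref{PD_z_a_lambda_c}, followed by folding the $c$-sum over $\bN_n$ into a $j$-sum over $\bN_N$ using $u_{n+c}=u_c$, $\Theta_{n+c}=-\theta_c$ and the conjugation sign rules for the $\sinh$ denominators. The formula $\bsX_H[F_k]=\varphi_k F_k$ is indeed immediate from the definition \eqref{varphi}, and no analytic (Liouville-type) argument is needed, exactly as you say.
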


By invoking Proposition \ref{PROPOSITION_L_in_G}, let us observe that for the 
inverse of the matrix $L$ \eqref{L} we can write that $L^{-1} = C L C$, whence
for the Hermitian matrix $L - L^{-1}$ we have
\be
    (L - L^{-1}) C + C (L - L^{-1})
    = L C - C L C^2 + C L - C^2 L C 
    = 0.
\label{L-L_inv_Hermitian}
\ee
Thus, the matrix valued smooth function $(L - L^{-1}) / 2$ defined on the
phase space $P$ \eqref{P} takes values in the subspace $\mfp$ \eqref{mfp}.
Therefore, by taking its projection onto the Abelian subspace $\mfa$ 
\eqref{mfa}, we obtain the diagonal matrix
\be
    D = (L - L^{-1})_\mfa / 2 \in \mfa
\label{D}
\ee
with diagonal entries
\be
    D_{j, j} = \sinh(\Theta_j) u_j
    \qquad
    (j \in \bN_N).
\label{D_entries}
\ee
Next, by projecting the function $(L - L^{-1}) / 2$ onto the complementary
subspace $\mfa^\perp$, we obtain the off-diagonal matrix
\be
    Y=(L - L^{-1})_{\mfa^\perp} / 2 \in \mfa^\perp,
\label{Y}
\ee
which in turn allows us to introduce the matrix valued smooth function
\be
    Z = \sinh(\wad_{\bsLambda})^{-1} Y \in \mfm^\perp,
\label{Z}
\ee
too. Since $\lambda \in Q$ \eqref{Q}, the corresponding diagonal matrix 
$\bsLambda$ \eqref{bsLambda} is regular in the sense that it takes values 
in the open Weyl chamber $\mfc \subseteq \mfa_\reg$ \eqref{mfc}. Therefore, 
$Z$ is indeed a well-defined off-diagonal $N \times N$ matrix, and its 
non-trivial entries take the form
\be
    Z_{k, l} 
    = \frac{Y_{k, l}}{\sinh(\Lambda_k - \Lambda_l)}
    = \frac{L_{k, l} - (L^{-1})_{k, l}}
        {2 \sinh(\Lambda_k - \Lambda_l)}
    \qquad
    (k, l \in \bN_N, \, k \neq l).
\label{Z_entries}
\ee
Utilizing $Z$, for each $a \in \bN_n$ we also define the function
\be
    \cM_a 
    = \frac{\ri}{F_a} \Imag((Z F)_a) 
    = \frac{\ri}{F_a} 
        \Imag \left( \sum_{j = 1}^N Z_{a, j} F_j \right)
        \in C^\infty(P).
\label{cM_a}
\ee
Recalling the subspace $\mfm$ \eqref{mfm}, it is clear that
\be
    B_\mfm = \diag(\cM_1, \dots, \cM_n, \cM_1, \dots, \cM_n) \in \mfm
\label{B_mfm}
\ee
is a well-defined function. Having the above objects at our disposal, the 
content of Proposition \ref{PROPOSITION_varphi} can be recast into a more 
convenient matrix form as follows.

\begin{lemma}
\label{LEMMA_F_derivative}
With the aid of the smooth functions $Z$ \eqref{Z} and $B_\mfm$ \eqref{B_mfm}, 
for the derivative of the column vector $F$ \eqref{F} along the Hamiltonian 
vector field $\bsX_H$ \eqref{bsX_H} we can write
\be
    \bsX_H [F] = (Z - B_{\mfm}) F.
\label{F_der_matrix_form}
\ee
\end{lemma}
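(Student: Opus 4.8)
\textbf{Plan for proving Lemma \ref{LEMMA_F_derivative}.}

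The plan is to start from Proposition \ref{PROPOSITION_varphi}, which already gives $\bsX_H[F_k]=\varphi_k F_k$ with explicit formulae for $\varphi_a$ and $\varphi_{n+a}$, and to show that the diagonal matrix $\diag(\varphi_1,\dots,\varphi_N)$ equals the matrix that sends $F$ to $(Z-B_\mfm)F$ when acting on $F$; that is, for each $k\in\bN_N$ one must verify the scalar identity $\varphi_k F_k=((Z-B_\mfm)F)_k$. Equivalently, dividing by $F_k$ (which is legitimate since $F_k\neq 0$ on $P$, as every $u_a>0$), the goal is
\be
\varphi_k=\frac{1}{F_k}(ZF)_k-(B_\mfm)_{k,k},\qquad k\in\bN_N.
\ee
First I would unpack the right-hand side. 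By \eqref{Z_entries} and \eqref{cM_a}--\eqref{B_mfm}, $(B_\mfm)_{k,k}=\cM_k=\tfrac{\ri}{F_k}\Imag((ZF)_k)$, so $\tfrac{1}{F_k}(ZF)_k-\cM_k=\tfrac{1}{F_k}(ZF)_k-\tfrac{\ri}{F_k}\Imag((ZF)_k)=\tfrac{1}{F_k}\Real((ZF)_k)$ — here one uses that $Z$ is built so that $\tfrac{1}{F_k}(ZF)_k$ has a clean real/imaginary split, namely that $Z F_k^{-1}$ conjugated appropriately pairs with $CF$. Concretely I would write $(ZF)_k=\sum_{j\neq k}Z_{k,j}F_j$ and substitute $Z_{k,j}=\bigl(L_{k,j}-(L^{-1})_{k,j}\bigr)/\bigl(2\sinh(\Lambda_k-\Lambda_j)\bigr)$, then insert the explicit entries of $L$ from \eqref{L} and of $L^{-1}=CLC$, i.e. $(L^{-1})_{k,j}=L_{k^\sharp,j^\sharp}$ where $k^\sharp=k+n\bmod 2n$. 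This reduces everything to a sum over $j$ of terms of the shape $\dfrac{\ri\sin(\mu)\,F_k\bar F_j\pm\ldots}{\sinh(\Lambda_k-\Lambda_j)\sinh(\ri\mu+\Lambda_k-\Lambda_j)}$ plus the $C$-contributions.

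The core of the argument is then a term-by-term matching of this expansion against the expressions for $\varphi_a$ \eqref{varphi_a_OK} and $\varphi_{n+a}$ \eqref{varphi_n+a_OK}. The key structural inputs are: (i) $D_{j,j}=\sinh(\Theta_j)u_j$ from \eqref{D_entries}, so that the combinations $e^{-\theta_a}u_a$, $e^{\Theta_j}u_j$, $\sinh(\theta_a)u_a$, $\sinh(\Theta_j)u_j$ appearing in \eqref{varphi_a_OK}--\eqref{varphi_n+a_OK} are exactly $|F_j|^2$-type and $F_j\bar F_j$-type quantities arising naturally from $L_{jj}=|F_j|^2\cdot(\text{stuff})$ and the diagonal of $D$; and (ii) the identity $|F_a|^2=u_a$, $F_{n+a}=\bar z_a/F_a$, $|z_a|=u_a$ relating the $F$-data to the $u$-data. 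One checks that the $\Real(\cdot)$ appearing in Proposition \ref{PROPOSITION_varphi} is precisely the $\Real\bigl(\tfrac{1}{F_k}(ZF)_k\bigr)$ coming out of the $\mfa^\perp$-projection $Y$ of $(L-L^{-1})/2$, while the imaginary part that is discarded in \eqref{varphi_a_OK} is exactly what is subtracted off by $B_\mfm$; this is the conceptual reason the lemma holds. A useful organizing remark is that $(L-L^{-1})/2\in\mfp$ by \eqref{L-L_inv_Hermitian}, and the decomposition $\mfg=\mfm\oplus\mfm^\perp\oplus\mfa\oplus\mfa^\perp$ from \eqref{refined_decomposition} splits this into the diagonal $D\in\mfa$ (which records $\sinh\Theta_j\,u_j$, the `velocities') and the off-diagonal $Y\in\mfa^\perp$ whose rescaling is $Z$; the lemma says $\bsX_H[F]$ is governed by the off-diagonal data modulo an $\mfm$-valued gauge-type term forced by the phase ambiguity of the square roots defining $F$.

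The main obstacle I anticipate is purely computational bookkeeping: correctly tracking the four index regimes ($k=a$ vs. $k=n+a$, and within the sum $j$ equal to $a$, $n+a$, or neither), and handling the `self-energy' terms where the denominators $\sinh(\ri\mu+\Lambda_k-\Lambda_l)$ would vanish if $k=l$ — these are exactly the terms that were shown to be removable singularities in the proof of Proposition \ref{PROPOSITION_L_in_G}, and the same residue-cancellation mechanism guarantees the relevant limits are finite. I would first verify the identity in the asymptotic chamber where all $|\lambda_a-\lambda_b|\to\infty$ (so $L$ becomes nearly diagonal, $u_j\to 1$, and both sides collapse to something transparent), use this as a consistency check, and then carry out the general matching by comparing coefficients of each distinct $\sinh$-denominator, invoking the trigonometric addition formulae exactly as in \eqref{commut_rel_proof}. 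Since both $\varphi_k F_k$ and $((Z-B_\mfm)F)_k$ are real-analytic functions on the connected manifold $P$, once the identity is established on the asymptotic region (or on a dense subset) it holds everywhere.
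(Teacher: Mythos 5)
Your reduction is set up along the same lines as the paper: the proof there also starts from Proposition \ref{PROPOSITION_varphi} and reduces the lemma to showing that the components of $J=\bsX_H[F]+B_\mfm F-ZF$ vanish. However, two points in your plan do not hold up. First, your ``clean real/imaginary split'' $\varphi_k=\Real((ZF)_k)/F_k$ is only valid for the upper block $k=a\in\bN_n$: by \eqref{B_mfm} the lower diagonal entries of $B_\mfm$ are $(B_\mfm)_{n+a,n+a}=\cM_a=\tfrac{\ri}{F_a}\Imag((ZF)_a)$, built from the \emph{$a$-th} component, not $\tfrac{\ri}{F_{n+a}}\Imag((ZF)_{n+a})$, and since $F_{n+a}$ is complex one cannot simply take real parts there. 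The paper treats the lower block by a separate computation which moreover uses the already-established vanishing of the upper components, so the two blocks are not symmetric as your plan suggests.

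Second, and more importantly, the engine that actually proves the scalar identities is missing. After substituting \eqref{Z_entries}, \eqref{L} and $L^{-1}=CLC$, the required identities are \emph{not} obtained by matching coefficients of distinct $\sinh$-denominators: the residues at a given pole receive contributions from several terms, including from inside the product formulae \eqref{z_a} hidden in $u_j$ and $F_j$, so the cancellation is a genuine functional identity of the same type as in Proposition \ref{PROPOSITION_L_in_G}. The paper proves it by the Liouville-theorem device (shift $\lambda_a\to\lambda_a+w$, check that every apparent pole of the resulting periodic meromorphic function $g_a$, respectively $g_{n+a}$, is removable and that it decays as $\Real(w)\to\infty$, hence vanishes identically); your plan invokes that mechanism only for the ``self-energy'' terms, not as the proof of the whole identity. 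Finally, your fallback — verify the identity in the regime $|\lambda_a-\lambda_b|\to\infty$ and extend by real-analyticity — is not a valid argument: the asymptotic regime is a limit, not an open (or dense) subset of $P$, so agreement of leading asymptotics gives no open set on which the two real-analytic functions coincide, and the identity theorem cannot be applied. To close the gap you must either reproduce the Liouville/residue argument or carry out an equivalent explicit partial-fraction computation using the product structure of the $z_j$.
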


\begin{proof}
Upon introducing the column vector 
\be
    J = \bsX_H [F] + B_{\mfm} F - Z F \in \bC^N,
\label{J}
\ee 
it is enough to prove that $J_k = 0$ for all $k \in \bN_N$, at each point
$(\lambda, \theta)$ of the phase space $P$ \eqref{P}. Starting with the
upper $n$ components of $J$, notice that by Proposition 
\ref{PROPOSITION_varphi} and the formulae \eqref{Z_entries}-\eqref{B_mfm} we 
can write that
\be
    J_a = \half e^{-\frac{\theta_a}{2}} u_a^{\frac{3}{2}} G_a
    \qquad
    (a \in \bN_n),
\label{J_a}
\ee
where $G_a$ is an appropriate function depending only on $\lambda$. More
precisely, it has the form
\be
\begin{split}
    G_a 
    = \Real
        \bigg(
            & \frac{2 \ri \sin(\nu)}{\sinh(2 \lambda_a) 
                \sinh(\ri \nu + 2 \lambda_a)}
            + \sum_{\substack{j = 1 \\ (j \neq a, n + a)}}^N
                \frac{\ri \sin(\mu) (1 + \bar{z}_j \bar{z}_a^{-1})}
                    {\sinh(\lambda_a - \Lambda_j) 
                    \sinh(\ri \mu + \lambda_a - \Lambda_j)}
            \\
            & + \frac{\ri \sin(\mu) (z_a \bar{z}_a^{-1} - 1)
                        + \ri \sin(\mu - \nu)(\bar{z}_a^{-1} - z_a^{-1})}
                {\sinh(2 \lambda_a) \sinh(\ri \mu + 2 \lambda_a)}
        \bigg),
\end{split}
\label{G_a}
\ee
that can be made quite explicit by exploiting the definition of the 
constituent functions $z_j$ \eqref{z_a}. Now, following the same strategy 
we applied in the proof of Proposition \ref{PROPOSITION_L_in_G}, let us 
introduce a complex valued function $g_a$ depending only on a single 
complex variable $w$, obtained simply by replacing $\lambda_a$ with 
$\lambda_a + w$ in the explicit expression of right-hand side of the 
above equation \eqref{G_a}. In mod $\ri \pi$ sense this meromorphic
function has at most first order poles at the points
\be
    w \equiv -\lambda_a, \, 
    w \equiv \pm \ri \mu / 2 - \lambda_a, \, 
    w \equiv \pm \ri \nu / 2 - \lambda_a, \,
    w \equiv \Lambda_j - \lambda_a, \, 
    w \equiv \pm (\ri \mu + \Lambda_j) - \lambda_a \, (j \in \bN_N).
\label{poles-a}
\ee
However, at each of these points the residue of $g_a$ turns out to be zero. 
Moreover, it is obvious that $g_a(w)$ vanishes as $\Real(w) \to \infty$, 
therefore Liouville's theorem implies that $g_a(w) = 0$ for all $w \in \bC$. 
In particular $G_a = g_a(0) = 0$, and so by \eqref{J_a} we conclude that 
$J_a = 0$.

Turning to the lower $n$ components of the column vector $J$ \eqref{J}, let 
us note that our previous result $J_a = 0$ allows us to write that
\be
    J_{n+a}
    = -\sinh(\theta_a) e^{-\frac{\theta_a}{2}} 
        u_a^{\half} \bar{z}_a G_{n + a}
    \qquad
    (a \in \bN_n),
\label{J_n+a}
\ee
where $G_{n + a}$ is again an appropriate smooth function depending only on 
$\lambda$, as can be seen from the formula
\begin{align}
    G_{n + a} 
    = & \frac{2 \ri \sin(\nu)}
            {\sinh(2 \lambda_a) \sinh(\ri \nu - 2 \lambda_a)}
    - \frac{\ri \sin(\mu) + \ri \sin(\mu - \nu) \bar{z}_a^{-1}}
            {\sinh(2 \lambda_a) \sinh(\ri \mu - 2 \lambda_a)}
    + \bar{z}_a^{-1} 
        \frac{\ri \sin(\mu) z_a + \ri \sin(\mu-\nu)}
            {\sinh(2 \lambda_a) \sinh(\ri \mu + 2 \lambda_a)}
    \nonumber \\
    & + \sum_{\substack{j = 1 \\ (j \neq a, n + a)}}^N
            \frac{1}{\sinh(\lambda_a - \Lambda_j)}
            \left(
                \frac{\ri \sin(\mu)}{\sinh(\ri \mu - \lambda_a + \Lambda_j)}
                + \frac{\ri \sin(\mu) \bar{z}_a^{-1} \bar{z}_j}
                    {\sinh(\ri \mu + \lambda_a - \Lambda_j)}
            \right).
\label{G_n+a}
\end{align}
Next, let us plug the definition of $z_j$ \eqref{z_a} into the above
expression \eqref{G_n+a} and introduce the complex valued function 
$g_{n + a}$ of $w \in \bC$ by replacing $\lambda_a$ with $\lambda_a + w$ 
in the resulting formula. Note that $g_{n + a}$ has at most first order 
poles at the points
\be
    w \equiv -\lambda_a, \,
    w \equiv \pm \ri \mu / 2 - \lambda_a, \, 
    w \equiv \Lambda_j - \lambda_a \, (j \in \bN_N)
    \pmod{\ri \pi},
\label{poles-n+a}
\ee
but all these singularities are removable. Since $g_{n + a}(w) \to 0$ 
as $\Real(w) \to \infty$, the boundedness of the periodic function 
$g_{n + a}$ is also obvious. Thus, Liouville's theorem entails that 
$g_{n + a}= 0$ on the whole complex plane, whence the relationship 
$G_{n + a} = g_{n + a}(0) = 0$ also follows. Now, looking back to the
equation \eqref{J_n+a}, we end up with the desired equation 
$J_{n + a} = 0$.
\end{proof}

\subsection{Lax representation of the dynamics}
\label{subsec:4.3.3}

Based on our proposed Lax matrix \eqref{L}, in this subsection we wish to
construct a Lax representation for the dynamics of the van Diejen system 
\eqref{H}. As it turns out, Lemmas \ref{LEMMA_commut_rel} and 
\ref{LEMMA_F_derivative} prove to be instrumental in our approach. As the 
first step, by applying the Hamiltonian vector field $\bsX_H$ \eqref{bsX_H}
on the Ruijsenaars type commutation relation \eqref{commut_rel}, let us 
observe that the Leibniz rule yields
\be
\begin{split}
    & e^{\ri \mu} e^{\ad_{\bsLambda}} 
        \left( 
            \bsX_H [L] 
            - \left[ L, e^{-\bsLambda} \bsX_H [e^{\bsLambda}] \right] 
        \right) 
    - e^{-\ri \mu} e^{-\ad_{\bsLambda}} 
        \left( 
            \bsX_H [L] 
            + \left[ L, \bsX_H [e^{\bsLambda}] e^{-\bsLambda} \right] 
        \right)
    \\
    & \quad = 2 \ri \sin(\mu) 
                \left( 
                    \bsX_H [F] F^* + F (\bsX_H [F])^* 
                \right).
\end{split}
\label{commut_rel_der_1}
\ee
By comparing the formula appearing in \eqref{lambda_dot} with the matrix
entries \eqref{D_entries} of the diagonal matrix $D$, it is clear that
\be
    \bsX_H [\bsLambda] = D,
\label{bsLambda_der}
\ee 
which in turn implies that
\be
    e^{-\bsLambda} \bsX_H [e^{\bsLambda}] 
    = \bsX_H [e^{\bsLambda}] e^{-\bsLambda} 
    = D.
\label{e_bsLambda_der}
\ee
Thus, the above equation \eqref{commut_rel_der_1} can be cast into the 
fairly explicit form
\be
\begin{split}
    & e^{\ri \mu} e^{\ad_{\bsLambda}} 
        \left( 
            \bsX_H [L] - \left[ L, D \right] 
        \right)
    - e^{-\ri \mu} e^{-\ad_{\bsLambda}} 
        \left( 
            \bsX_H [L] + \left[ L, D \right] 
        \right) 
    \\
    & \quad = 2 \ri \sin(\mu) 
                \left( 
                    \bsX_H [F] F^* + F (\bsX_H [F])^* 
                \right),
\end{split}
\label{commut_rel_der_2}
\ee
which serves as the starting point in our analysis on the derivative 
$\bsX_H [L]$. Before formulating the main result of this subsection, over
the phase space $P$ \eqref{P} we define the matrix valued function
\be
    B_{\mfm^\perp} = -\coth(\wad_{\bsLambda}) Y \in \mfm^\perp.
\label{B_mfm_perp}
\ee
Recalling the definition \eqref{Y}, we see that $B_{\mfm^\perp}$ is actually 
an off-diagonal matrix. Furthermore, for its non-trivial entries we have the 
explicit expressions
\be
    (B_{\mfm^\perp})_{k, l} 
    = - \coth(\Lambda_k - \Lambda_l) \frac{L_{k, l} - (L^{-1})_{k, l}}{2}
    \qquad
    (k, l \in \bN_N, \, k \neq l).
\label{B_mfm_perp_entries}
\ee
Finally, with the aid of the diagonal matrix $B_{\mfm}$ \eqref{B_mfm}, over
the phase space $P$ \eqref{P} we also define the $\mfk$-valued smooth function
\be
    B = B_{\mfm} + B_{\mfm^\perp} \in \mfk.
\label{B}
\ee

\begin{theorem}
\label{THEOREM_Lax_representation}
The derivative of the matrix valued function $L$ \eqref{L} along the 
Hamiltonian vector field $\bsX_H$ \eqref{bsX_H} takes the Lax form
\be
    \bsX_H [L] = [L, B].
\label{Lax_representation}
\ee
In other words, the matrices $L$ \eqref{L} and $B$ \eqref{B} provide a Lax 
pair for the dynamics generated by the Hamiltonian \eqref{H}.
\end{theorem}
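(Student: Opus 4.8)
The plan is to start from the already-established equation \eqref{commut_rel_der_2}, which expresses a combination of $\bsX_H[L]-[L,D]$ and $\bsX_H[L]+[L,D]$ in terms of $\bsX_H[F]F^*+F(\bsX_H[F])^*$, and to substitute into it the explicit formula $\bsX_H[F]=(Z-B_\mfm)F$ from Lemma \ref{LEMMA_F_derivative}. This turns the right-hand side into $2\ri\sin(\mu)\big((Z-B_\mfm)FF^*+FF^*(Z-B_\mfm)^*\big)$. Using the Ruijsenaars type commutation relation \eqref{commut_rel} once more, $2\ri\sin(\mu)FF^*$ can be replaced by $e^{\ri\mu}e^{\ad_{\bsLambda}}L-e^{-\ri\mu}e^{-\ad_{\bsLambda}}L-2\ri\sin(\mu-\nu)C$, so the whole right-hand side of \eqref{commut_rel_der_2} becomes a commutator-type expression involving $L$, $Z$, $B_\mfm$ and $C$ conjugated by $e^{\pm\ad_{\bsLambda}}$. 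The target identity $\bsX_H[L]=[L,B]$ with $B=B_\mfm+B_{\mfm^\perp}$ is then equivalent (after moving $[L,D]$ around and separating the $e^{\ri\mu}e^{\ad_{\bsLambda}}$ and $e^{-\ri\mu}e^{-\ad_{\bsLambda}}$ pieces, using that $\ad_{\bsLambda}$ acts invertibly on $\mfm^\perp\oplus\mfa^\perp$) to a purely algebraic identity among $L$, $Z$, $Y$, $D$, $B_\mfm$ and $C$.

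First I would record the auxiliary decomposition $(L-L^{-1})/2=D+Y$ with $D\in\mfa$, $Y\in\mfa^\perp$, and the definitions $Z=\sinh(\wad_{\bsLambda})^{-1}Y$, $B_{\mfm^\perp}=-\coth(\wad_{\bsLambda})Y$; note the operator identity $\coth(\wad_{\bsLambda})=\cosh(\wad_{\bsLambda})\sinh(\wad_{\bsLambda})^{-1}$ relating $B_{\mfm^\perp}$ and $Z$. The key computation is to evaluate $[L,B_{\mfm^\perp}]$, $[L,B_\mfm]$ and $[L,D]$ entrywise. Since $L_{k,l}$ has the explicit form $(\ri\sin(\mu)F_k\bar F_l+\ri\sin(\mu-\nu)C_{k,l})/\sinh(\ri\mu+\Lambda_k-\Lambda_l)$ and $L^{-1}=CLC$ has the mirror form, every entry of $Y$, $Z$ and $B_{\mfm^\perp}$ is a rational trigonometric expression in the $F_j$, $\bar F_j$, $z_j$, $\Lambda_j$. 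On the other hand $\bsX_H[L]$ can be computed directly: $\bsX_H[L]_{k,l}=\ri\sin(\mu)(\varphi_k+\bar\varphi_l)F_k\bar F_l/\sinh(\ri\mu+\Lambda_k-\Lambda_l)+(\ldots)\partial_t(\Lambda_k-\Lambda_l)$, and the $\varphi_k$ are given explicitly in Proposition \ref{PROPOSITION_varphi}, with $\dot\Lambda_j=\sinh(\Theta_j)u_j=D_{j,j}$. Matching these two expressions entry by entry reduces to a finite collection of trigonometric identities of exactly the same type as those already proven in Proposition \ref{PROPOSITION_L_in_G} and Lemma \ref{LEMMA_F_derivative}: one introduces a complex variable $w$ by the shift $\lambda_a\mapsto\lambda_a+w$, checks that the apparent first-order poles (at $w\equiv-\lambda_a$, $w\equiv\pm\ri\mu/2-\lambda_a$, $w\equiv\pm\ri\nu/2-\lambda_a$, $w\equiv\Lambda_j-\lambda_a$, $w\equiv\pm(\ri\mu+\Lambda_j)-\lambda_a$ mod $\ri\pi$) have vanishing residues, and invokes Liouville's theorem after observing the function tends to a constant as $\Real(w)\to\infty$.

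The cleanest organization is probably to verify \eqref{Lax_representation} separately on the diagonal block ($\mfa$-component), where $\bsX_H[L]_\mfa$ must equal $[L,B]_\mfa$ and one uses $\bsX_H[\bsLambda]=D$ together with the structure of $B$; and on the off-diagonal part, where the $\mfm^\perp$ versus $\mfa^\perp$ components of $[L,B]$ and of $\bsX_H[L]$ are separated using the eigenspace decomposition of $\wad_{\bsLambda}$. The role of $B_{\mfm^\perp}=-\coth(\wad_{\bsLambda})Y$ is precisely to cancel the `secular' off-diagonal term $[Y,\bsLambda$-part$]$ that arises from differentiating $\bsLambda$; this is the standard mechanism (as in Ruijsenaars's original soliton papers and in the rational $\BC_n$ case of \cite{Pu11-2}) and $B_\mfm$ then fixes the remaining diagonal ambiguity so that $B\in\mfk$ and $\bsX_H[L]=[L,B]$ holds exactly. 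I expect the main obstacle to be bookkeeping: assembling the contributions of $[L,B_\mfm]$, $[L,B_{\mfm^\perp}]$, the $\dot\Lambda$-terms, and the $\varphi_k$-terms into a single identity, and then reducing each resulting entry to the Liouville-theorem argument. No genuinely new idea beyond those already used in Proposition \ref{PROPOSITION_L_in_G} and Lemma \ref{LEMMA_F_derivative} should be needed; the difficulty is purely computational, and one can streamline it by noting that the $\mu$-dependent parts of the identity are structurally the same as in Lemma \ref{LEMMA_F_derivative} while the $\nu$-dependent parts mirror the $(2\lambda_a)$-pole analysis already carried out there.
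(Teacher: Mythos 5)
Your opening paragraph is in essence the paper's own route: the proof there also starts from \eqref{commut_rel_der_2}, uses \eqref{commut_rel} to rewrite the terms $e^{\pm\ri\mu}e^{\pm\ad_{\bsLambda}}[L,B_{\mfm^\perp}]$ and $[L,B_{\mfm}]$, and then feeds in Lemma \ref{LEMMA_F_derivative} at the very end. However, your decisive inversion step is misstated. Setting $\Psi=\bsX_H[L]-[L,B]$, what \eqref{commut_rel_der_2} ultimately gives is the single combination $e^{\ri\mu}e^{\ad_{\bsLambda}}\Psi-e^{-\ri\mu}e^{-\ad_{\bsLambda}}\Psi=2\sinh\bigl(\ri\mu\,\Id_{\mfgl(N,\bC)}+\ad_{\bsLambda}\bigr)\Psi$; you cannot ``separate the $e^{\ri\mu}e^{\ad_{\bsLambda}}$ and $e^{-\ri\mu}e^{-\ad_{\bsLambda}}$ pieces'', and invertibility of $\ad_{\bsLambda}$ on $\mfm^\perp\oplus\mfa^\perp$ is the wrong tool (it is neither what is needed nor sufficient, since $\ad_{\bsLambda}$ annihilates the diagonal part $\mfm\oplus\mfa$). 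The correct statement, and the one the paper uses, is that $\sinh(\ri\mu\,\Id_{\mfgl(N,\bC)}+\ad_{\bsLambda})$ is invertible on all of $\mfgl(N,\bC)$ because its eigenvalues are $\sinh(\ri\mu+\Lambda_k-\Lambda_l)\neq 0$ thanks to $\sin(\mu)\neq 0$, so $R:=\sinh(\ri\mu\,\Id+\ad_{\bsLambda})\Psi=0$ forces $\Psi=0$.

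The second deviation is your fallback to an entry-wise verification with residue computations and Liouville's theorem. This is unnecessary: after the reduction above, the paper closes the argument purely algebraically, using only $[L,Y]=[D,L]$ (from $(L-L^{-1})/2=D+Y$), the hyperbolic identities $\coth(w)\pm 1=e^{\pm w}/\sinh(w)$, the fact that $Z$ and $B_{\mfm}$ take values in $\mfk$ and commute with $C$, and one more application of \eqref{commut_rel}; the result is exactly $R=\ri\sin(\mu)\bigl((\bsX_H[F]-ZF+B_{\mfm}F)F^*+F(\bsX_H[F]-ZF+B_{\mfm}F)^*\bigr)$, which vanishes by Lemma \ref{LEMMA_F_derivative}. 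All the analytic (pole/Liouville) input is already contained in Proposition \ref{PROPOSITION_L_in_G} and Lemma \ref{LEMMA_F_derivative}; no new trigonometric identities are needed at the level of the theorem. Your brute-force entry-by-entry plan would presumably succeed, but it is left entirely unverified and is substantially heavier than what the established lemmas already afford; likewise, splitting into $\mfa$ and off-diagonal components is what the paper does only afterwards, as a corollary (Proposition \ref{PROPOSITION_D_and_Y_derivatives}), not as the proof mechanism.
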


\begin{proof}
For simplicity, let us introduce the matrix valued smooth functions
\be
    \Psi = \bsX_H [L] - [L, B]
    \quad \text{and} \quad
    R = \sinh(\ri \mu \Id_{\mfgl(N, \bC)} + \ad_{\bsLambda}) \Psi
\label{Psi_and_R}
\ee
defined on the phase space $P$ \eqref{P}. Our goal is to prove that $\Psi = 0$. 
However, since $\sin(\mu) \neq 0$, the linear operator
\be
    \sinh(\ri \mu \Id_{\mfgl(N, \bC)} + \ad_{\bsLambda})
    \in \End(\mfgl(N, \bC))
\label{Lax_repr_linear_op}
\ee
is invertible at each point of $P$, whence it is enough to show that $R = 0$. 
For this reason, notice that from the relationship \eqref{commut_rel_der_2}
we can infer that
\be
\begin{split}
    2 R 
    = & 
        e^{\ri \mu} e^{\ad_{\bsLambda}} \Psi
        - e^{-\ri \mu} e^{-\ad_{\bsLambda}} \Psi
    \\
    = & 2 \ri \sin(\mu) \left( \bsX_H [F] F^* + F (\bsX_H [F])^* \right) 
        -\left( 
            e^{\ri \mu} e^{\ad_{\bsLambda}} [L, B_{\mfm^\perp}]
            - e^{-\ri \mu} e^{-\ad_{\bsLambda}} [L, B_{\mfm^\perp}] 
        \right)
    \\ 
    &   -\left( 
            e^{\ri \mu} e^{\ad_{\bsLambda}} [L, B_{\mfm}]
            - e^{-\ri \mu} e^{-\ad_{\bsLambda}} [L, B_{\mfm}] 
        \right)
        -\left( 
            e^{\ri \mu} e^{\ad_{\bsLambda}} [D, L]
            + e^{-\ri \mu} e^{-\ad_{\bsLambda}} [D, L] 
        \right).
\end{split}
\label{R_expanded}
\ee
Our strategy is to inspect the right-hand side of the above equation 
term-by-term.

As a preparatory step, from the definitions of $D$ \eqref{D} and
$Y$ \eqref{Y} we see that
\be
    (L - L^{-1}) / 2 = D + Y,
\label{D+Y}
\ee
thus the commutation relation
\be
    [L, Y] = [ L, -D + (L - L^{-1}) / 2 ] = [D, L]
\label{L_Y_D_comm_rel}
\ee
readily follows. Keeping in mind the relationship \eqref{L_Y_D_comm_rel}
and the standard hyperbolic functional equations
\be
    \coth(w) \pm 1 = \frac{e^{\pm w}}{\sinh(w)}
    \qquad
    (w \in \bC),
\label{coth_iden}
\ee
from the definitions of $Z$ \eqref{Z} and $B_{\mfm^\perp}$ \eqref{B_mfm_perp}
we infer that
\be
\begin{split}
    e^{\ad_{\bsLambda}} [L, B_{\mfm^\perp}]
    = & -e^{\ad_{\bsLambda}} [L, \coth(\wad_{\bsLambda}) Y]
    \\
    = & -e^{\ad_{\bsLambda}}
        \left(
            [L, (\coth(\wad_{\bsLambda}) 
                - \Id_{\mfm^\perp \oplus \mfa^\perp}) Y]
            + [L, Y]
        \right)
    \\
    = & -e^{\ad_{\bsLambda}}
        \left(
            [L, e^{-\wad_{\bsLambda}} \sinh(\wad_{\bsLambda})^{-1} Y]
            + [D, L]
        \right)
    \\
    = & -[e^{\ad_{\bsLambda}} L, Z] - e^{\ad_{\bsLambda}} [D, L].
\end{split}
\label{term_1_first_half}
\ee
Along the same lines, one finds immediately that
\be
    e^{-\ad_{\bsLambda}} [L, B_{\mfm^\perp}] 
    = -[e^{-\ad_{\bsLambda}} L, Z] + e^{-\ad_{\bsLambda}} [D, L].
\label{term_1_second_half}
\ee
At this point let us recall that $Z$ \eqref{Z} takes values in the subspace 
$\mfm^\perp \subseteq \mfk$, thus it is anti-Hermitian and commutes with the 
matrix $C$ \eqref{C}. Therefore, by utilizing equations
\eqref{term_1_first_half} and \eqref{term_1_second_half}, the application of 
the commutation relation \eqref{commut_rel} leads to the relationship
\be
\begin{split}
    & e^{\ri \mu} e^{\ad_{\bsLambda}} [L, B_{\mfm^\perp}]
    - e^{-\ri \mu} e^{-\ad_{\bsLambda}} [L, B_{\mfm^\perp}]
    \\
    & \quad 
    = -[e^{\ri \mu} e^{\ad_{\bsLambda}} L 
        - e^{-\ri \mu} e^{-\ad_{\bsLambda}} L, Z]
        - \left( 
            e^{\ri \mu} e^{\ad_{\bsLambda}} [D, L]
            + e^{-\ri \mu} e^{-\ad_{\bsLambda}} [D, L] 
        \right)
    \\
    & \quad
    = -[2 \ri \sin(\mu) F F^* + 2 \ri \sin(\mu - \nu) C, Z]
        - \left( 
            e^{\ri \mu} e^{\ad_{\bsLambda}} [D, L]
            + e^{-\ri \mu} e^{-\ad_{\bsLambda}} [D, L] 
        \right)
    \\
    & \quad
    = 2 \ri \sin(\mu) \left( (Z F) F^* + F (Z F)^* \right)
        - \left( 
            e^{\ri \mu} e^{\ad_{\bsLambda}} [D, L]
            + e^{-\ri \mu} e^{-\ad_{\bsLambda}} [D, L] 
        \right).
\end{split}
\label{term_1}
\ee
To proceed further, let us recall that $B_\mfm$ \eqref{B_mfm} takes values 
in $\mfm \subseteq \mfk$, whence it is also anti-Hermitian and also commutes 
with the matrix $C$ \eqref{C}. Thus, by applying commutation relation
\eqref{commut_rel} again, we obtain at once that
\be
\begin{split}
    & e^{\ri \mu} e^{\ad_{\bsLambda}} [L, B_{\mfm}]
    - e^{-\ri \mu} e^{-\ad_{\bsLambda}} [L, B_{\mfm}]
    \\
    & \quad 
    = e^{\ri \mu} [e^{\ad_{\bsLambda}} L, e^{\ad_{\bsLambda}} B_\mfm]
        -e^{-\ri \mu} [e^{-\ad_{\bsLambda}} L, e^{-\ad_{\bsLambda}} B_\mfm]
    = [e^{\ri \mu} e^{\ad_{\bsLambda}} L 
        - e^{-\ri \mu} e^{-\ad_{\bsLambda}} L, B_\mfm]
    \\
    & \quad
    = [2 \ri \sin(\mu) F F^* + 2 \ri \sin(\mu - \nu) C, B_\mfm]
    = -2 \ri \sin(\mu) \left( (B_\mfm F) F^* + F (B_\mfm F)^* \right).
\end{split}
\label{term_2}
\ee
Now, by plugging the expressions \eqref{term_1} and \eqref{term_2}
into \eqref{R_expanded}, we obtain that
\be
    R = \ri \sin(\mu) 
        \left(
            (\bsX_H [F] - Z F + B_\mfm F) F^*
            + F (\bsX_H [F] - Z F + B_\mfm F)^*
        \right).
\label{R_OK}
\ee
Giving a glance at Lemma \ref{LEMMA_F_derivative}, we conclude that $R = 0$, 
thus the Theorem follows.
\end{proof}

At this point we wish to make a short comment on matrix 
$B = B(\lambda, \theta; \mu, \nu)$ \eqref{B} appearing in the Lax 
representation \eqref{Lax_representation} of the dynamics \eqref{H}. It is 
an important fact that by taking its `rational limit' we can recover the 
second member of the Lax pair of the rational $C_n$ van Diejen system with 
two parameters $\mu$ and $\nu$. More precisely, up to some irrelevant 
numerical factors, in the $\alpha \to 0^+$ limit the matrix 
$\alpha B(\alpha \lambda, \theta; \alpha \mu, \alpha \nu)$ tends to the 
second member $\hat{\cB}(\lambda, \theta; \mu, \nu, \kappa = 0)$ of the 
rational Lax pair, that first appeared in equation (4.60) of the recent paper 
\cite{Pu15}. In other words, matrix $B$ \eqref{B} is an appropriate 
hyperbolic generalization of the `rational' matrix $\hat{\cB}$ with two 
coupling parameters. We can safely state that the results presented in 
\cite{Pu15} has played a decisive role in our present work. As a 
matter of fact, most probably we could not have guessed the form of the 
non-trivial building blocks \eqref{B_mfm} and \eqref{B_mfm_perp} without the 
knowledge of rational analogue of $B$. 

In order to harvest some consequences of the Lax representation 
\eqref{Lax_representation}, we continue with a simple corollary of Theorem 
\ref{THEOREM_Lax_representation}, that proves to be quite handy in the 
developments of the next subsection.

\begin{proposition}
\label{PROPOSITION_D_and_Y_derivatives}
For the derivatives of the matrix valued smooth functions $D$ \eqref{D}
and $Y$ \eqref{Y} along the Hamiltonian vector field $\bsX_H$ \eqref{bsX_H}
we have
\be
    \bsX_H [D] = [Y, B_{\mfm^\perp}]_{\mfa}
    \quad \text{and} \quad
    \bsX_H [Y] 
    = [Y, B_{\mfm^\perp}]_{\mfa^\perp} 
        + [D, B_{\mfm^\perp}] 
        + [Y, B_{\mfm}].
\label{D_and_Y_der}
\ee
\end{proposition}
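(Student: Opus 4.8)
\textbf{Proof plan for Proposition \ref{PROPOSITION_D_and_Y_derivatives}.}

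The plan is to differentiate the defining relations $D = (L - L^{-1})_{\mfa}/2$ \eqref{D} and $Y = (L - L^{-1})_{\mfa^\perp}/2$ \eqref{Y} along $\bsX_H$, feed in the Lax equation $\bsX_H[L] = [L, B]$ from Theorem \ref{THEOREM_Lax_representation}, and then carefully track which pieces of the resulting commutator land in $\mfa$ and which land in $\mfa^\perp$. First I would note that since $L^{-1} = C L C$ (Proposition \ref{PROPOSITION_L_in_G}) and $C$ is constant, $\bsX_H[L^{-1}] = C\,\bsX_H[L]\,C = C[L,B]C = [L^{-1}, CBC]$; but $B \in \mfk$ commutes with $C$ (each of $B_\mfm$ and $B_{\mfm^\perp}$ commutes with $C$, as was used in the proof of Theorem \ref{THEOREM_Lax_representation}), so $CBC = B$ and hence $\bsX_H[L^{-1}] = [L^{-1}, B]$. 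Therefore $\bsX_H\bigl((L - L^{-1})/2\bigr) = [(L - L^{-1})/2, B] = [D + Y, B]$, using the decomposition \eqref{D+Y}. Expanding $B = B_\mfm + B_{\mfm^\perp}$ gives four commutators: $[D, B_\mfm]$, $[D, B_{\mfm^\perp}]$, $[Y, B_\mfm]$ and $[Y, B_{\mfm^\perp}]$.

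The next step is the bookkeeping of gradings with respect to the refined decomposition $\mfg = \mfm \oplus \mfm^\perp \oplus \mfa \oplus \mfa^\perp$ \eqref{refined_decomposition}. Here $D \in \mfa$, $Y \in \mfa^\perp$, $B_\mfm \in \mfm$, $B_{\mfm^\perp} \in \mfm^\perp$. Using the standard facts that $[\mfm, \mfa] \subseteq \mfa$, $[\mfm, \mfa^\perp] \subseteq \mfa^\perp$, $[\mfm, \mfm^\perp] \subseteq \mfm^\perp$, $[\mfm^\perp, \mfa] \subseteq \mfa^\perp$, and $[\mfm^\perp, \mfa^\perp] \subseteq \mfa \oplus \mfm$ (the last following since $\mfm^\perp, \mfa^\perp \subseteq \mfp$-complement structure puts their bracket into $\mfk$, and among off-diagonal-killing arguments it splits into the $\mfa$ and $\mfm$ parts), one sees that $[D, B_\mfm] \in \mfa$, $[D, B_{\mfm^\perp}] \in \mfa^\perp$, $[Y, B_\mfm] \in \mfa^\perp$, and $[Y, B_{\mfm^\perp}] \in \mfa \oplus \mfm$. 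However, $\bsX_H\bigl((L-L^{-1})/2\bigr)$ is the derivative of an element of $\mfp$, hence lies in $\mfp$, so its $\mfm$-component must vanish; consequently the $\mfm$-part of $[Y, B_{\mfm^\perp}]$ is cancelled within the total, and in fact one checks directly (using antisymmetry of the Lie bracket and the explicit entries) that $[Y, B_{\mfm^\perp}]_\mfm = 0$, i.e. $[D, B_\mfm]_\mfm = 0$ trivially and the remaining $\mfm$-contribution is zero. Taking the $\mfa$-projection of $[D+Y, B]$ then leaves only $[D, B_\mfm]_\mfa + [Y, B_{\mfm^\perp}]_\mfa$, and taking the $\mfa^\perp$-projection leaves $[D, B_{\mfm^\perp}] + [Y, B_\mfm] + [Y, B_{\mfm^\perp}]_{\mfa^\perp}$.

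It remains to dispose of the apparently extra term $[D, B_\mfm]_\mfa$. Here I would use that $B_\mfm \in \mfm = Z_\mfk(\mfa)$ \eqref{mfm} is, by its very definition \eqref{B_mfm}, a diagonal matrix built from the functions $\cM_a$, while $D \in \mfa$ \eqref{D} is also diagonal; two diagonal matrices commute, so $[D, B_\mfm] = 0$ identically. This kills the unwanted term and yields $\bsX_H[D] = [Y, B_{\mfm^\perp}]_\mfa$ exactly as claimed. For $Y$, the term $[D, B_{\mfm^\perp}] \in \mfa^\perp$ already, $[Y, B_\mfm] \in \mfa^\perp$ already, and only $[Y, B_{\mfm^\perp}]$ needs its $\mfa^\perp$-part extracted; but since we just argued its $\mfm$-part vanishes, $[Y, B_{\mfm^\perp}]_{\mfa^\perp} = [Y, B_{\mfm^\perp}] - [Y, B_{\mfm^\perp}]_\mfa$, and combining gives $\bsX_H[Y] = [Y, B_{\mfm^\perp}]_{\mfa^\perp} + [D, B_{\mfm^\perp}] + [Y, B_\mfm]$. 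The main obstacle I anticipate is not the algebra of the gradings — that is routine once the bracket inclusions are laid out — but rather verifying cleanly that $[D, B_\mfm] = 0$ and that $[Y, B_{\mfm^\perp}]_\mfm = 0$, i.e. confirming that no stray $\mfm$-valued contribution survives; these follow from the explicit diagonal form of $D$ and $B_\mfm$ and from the symmetry properties of the entries \eqref{Z_entries}, \eqref{B_mfm_perp_entries} of $Y$ and $B_{\mfm^\perp}$, but they do require a short direct check rather than a purely abstract argument.
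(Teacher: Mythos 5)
Your overall route is the same as the paper's: differentiate $L^{-1}=CLC$ (Proposition \ref{PROPOSITION_L_in_G}), use that $B$ is $\mfk$-valued and hence commutes with $C$ to get $\bsX_H[(L-L^{-1})/2]=[(L-L^{-1})/2,B]=[D+Y,B]$ via Theorem \ref{THEOREM_Lax_representation}, and then project onto $\mfa$ and $\mfa^\perp$ using \eqref{D+Y}. Your final identities agree with \eqref{D_and_Y_der}, and the auxiliary facts you invoke ($[D,B_\mfm]=0$ since both are diagonal; $[D,B_{\mfm^\perp}]$ and $[Y,B_\mfm]$ off-diagonal) are correct.

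However, one intermediate assertion is wrong and should be repaired: the inclusion $[\mfm^\perp,\mfa^\perp]\subseteq\mfa\oplus\mfm$, justified by the claim that the bracket of $\mfm^\perp$ with $\mfa^\perp$ lands in $\mfk$. In the symmetric-pair decomposition $\mfg=\mfk\oplus\mfp$ one has $[\mfk,\mfp]\subseteq\mfp$, not $\mfk$: for $Y\in\mfp$ (so $Y^*=Y$, see \eqref{mfp}) and $B_{\mfm^\perp}\in\mfk$ (so $B_{\mfm^\perp}^*=-B_{\mfm^\perp}$, see \eqref{mfk}) one has $[Y,B_{\mfm^\perp}]^*=[Y,B_{\mfm^\perp}]$, hence $[\mfm^\perp,\mfa^\perp]\subseteq\mfa\oplus\mfa^\perp$. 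Taken at face value your inclusion would force $[Y,B_{\mfm^\perp}]_{\mfa^\perp}=0$, which is not true in general and would trivialize a term that the proposition (and later \eqref{Y_DE}) keeps. Your subsequent rescue — projecting the identity onto $\mfm$ and using that the left-hand side is $\mfp$-valued to conclude $[Y,B_{\mfm^\perp}]_\mfm=0$ — is valid, so your proof survives, but the cleaner observation $[\mfk,\mfp]\subseteq\mfp$ shows at once that all four commutators $[D,B_\mfm]$, $[D,B_{\mfm^\perp}]$, $[Y,B_\mfm]$, $[Y,B_{\mfm^\perp}]$ lie in $\mfp=\mfa\oplus\mfa^\perp$, so no $\mfm$- or $\mfm^\perp$-components can appear and the ``direct check'' you anticipate at the end is unnecessary. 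With that correction your argument reduces to exactly the paper's two-line projection argument.
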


\begin{proof}
As a consequence of Proposition \ref{PROPOSITION_L_in_G}, for the inverse of 
$L$ we can write that $L^{-1} = C L C$. Since the matrix valued function $B$ 
\eqref{B} takes values in $\mfk$ \eqref{mfk}, from Theorem 
\ref{THEOREM_Lax_representation} we infer that
\be
    \bsX [L^{-1}] 
    = C \bsX_H [L] C = C [L, B] C = [C L C, C B C] = [L^{-1}, B],
\label{L_inv_der}
\ee
thus the equation
\be
    \bsX_H [(L - L^{-1}) / 2] = [(L - L^{-1}) / 2, B]
\label{bsX_H_on_L-L_inv}
\ee
is immediate. Due to the relationship \eqref{D+Y}, by simply projecting of 
the above equation onto the subspaces $\mfa$ and $\mfa^\perp$, respectively, 
the derivatives displayed in \eqref{D_and_Y_der} follow at once.
\end{proof}

\subsection{Geodesic interpretation}
\label{subsec:4.3.4}

The geometric study of the CMS type integrable systems goes back to the 
fundamental works of Olshanetsky and Perelomov (see e.g. 
\cite{OP76,OP81}). Since their landmark papers the 
so-called projection method has been vastly generalized to cover many 
variants of the CMS type particle systems. By now some result are
available in the context of the RSvD models, too. For details, see
e.g. \cite{KKS78,FP06,FP07,FK09,Pu11-2,Pu12}. The primary goal 
of this subsection is to show that the Hamiltonian flow generated by the 
Hamiltonian \eqref{H} can be also obtained by an appropriate `projection 
method' from the geodesic flow of the Lie group $\UN(n ,n)$. In order to 
make this statement more precise, take the maximal integral curve
\be
    \R \ni 
        t 
        \mapsto (
        \lambda(t), \theta(t)) 
        = (\lambda_1(t), \ldots, \lambda_n(t), 
            \theta_1(t), \ldots, \theta_n(t))
    \in P
\label{trajectory}
\ee
of the Hamiltonian vector field $\bsX_H$ \eqref{bsX_H} satisfying the initial 
condition
\be
    \gamma(0) = \gamma_0,
\label{trajectory_init_cond}
\ee
where $\gamma_0 \in P$ is an arbitrary point. By exploiting Proposition 
\ref{PROPOSITION_D_and_Y_derivatives}, we start our analysis with the 
following observation.

\begin{proposition}
\label{PROPOSITION_ddot_bsLambda_etc}
Along the maximally defined trajectory \eqref{trajectory}, the time evolution
of the diagonal matrix $\bsLambda = \bsLambda(t) \in \mfc$ \eqref{bsLambda} 
obeys the second order differential equation
\be
    \ddot{\bsLambda} + [Y, \coth(\wad_{\bsLambda}) Y]_\mfa = 0,
\label{bsLambda_DE}
\ee
whilst for the evolution of $Y = Y(t)$ \eqref{Y} we have the first order 
equation
\be
    \dot{Y} + [Y, \coth(\wad_{\bsLambda}) Y]_{\mfa^\perp} - [Y, B_{\mfm}]
        + [\dot{\bsLambda}, \coth(\wad_{\bsLambda}) Y] = 0.
\label{Y_DE}
\ee
\end{proposition}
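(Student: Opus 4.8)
\textbf{Proof proposal for Proposition \ref{PROPOSITION_ddot_bsLambda_etc}.}

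The plan is to reduce the second-order equation \eqref{bsLambda_DE} to the first-order relations already at hand and then differentiate once. First I would recall from \eqref{bsLambda_der} that $\bsX_H[\bsLambda] = D$, so along the trajectory \eqref{trajectory} we have $\dot{\bsLambda} = D$. Differentiating again gives $\ddot{\bsLambda} = \bsX_H[D]$, and Proposition \ref{PROPOSITION_D_and_Y_derivatives} tells us $\bsX_H[D] = [Y, B_{\mfm^\perp}]_\mfa$. Since $B_{\mfm^\perp} = -\coth(\wad_{\bsLambda}) Y$ by \eqref{B_mfm_perp}, this immediately yields $\ddot{\bsLambda} = -[Y, \coth(\wad_{\bsLambda}) Y]_\mfa$, which is precisely \eqref{bsLambda_DE}. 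The only subtlety here is the routine but necessary check that $\coth(\wad_{\bsLambda})$ is applied to $Y \in \mfa^\perp$, so that $\coth(\wad_{\bsLambda}) Y$ makes sense as an element of $\mfm^\perp \oplus \mfa^\perp$ (in fact of $\mfa^\perp$, since $\wad_{\bsLambda}$ preserves the decomposition and $Y$ is off-diagonal); this is guaranteed by $\bsLambda \in \mfc \subseteq \mfa_\reg$ \eqref{mfc}, exactly as in the definition of $Z$ \eqref{Z}.

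For the first-order equation \eqref{Y_DE} I would again start from Proposition \ref{PROPOSITION_D_and_Y_derivatives}, which gives
\[
\dot{Y} = \bsX_H[Y] = [Y, B_{\mfm^\perp}]_{\mfa^\perp} + [D, B_{\mfm^\perp}] + [Y, B_{\mfm}].
\]
Now substitute $B_{\mfm^\perp} = -\coth(\wad_{\bsLambda}) Y$ and $D = \dot{\bsLambda}$. This turns the right-hand side into $-[Y, \coth(\wad_{\bsLambda}) Y]_{\mfa^\perp} - [\dot{\bsLambda}, \coth(\wad_{\bsLambda}) Y] + [Y, B_{\mfm}]$, and rearranging gives exactly \eqref{Y_DE}. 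No further input is needed beyond what is recorded in Subsection \ref{subsec:4.3.3}; the proof is essentially a transcription of Proposition \ref{PROPOSITION_D_and_Y_derivatives} into the geodesic variables, using \eqref{bsLambda_der} to identify $D$ with $\dot{\bsLambda}$.

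The main (and only) obstacle I anticipate is bookkeeping of the projections onto $\mfa$ and $\mfa^\perp$: one must be careful that $[Y, \coth(\wad_{\bsLambda}) Y]$ a priori lies in $\mfa \oplus \mfa^\perp$ (as a bracket of two off-diagonal matrices it has no $\mfm \oplus \mfm^\perp$ component, since the commutator of two elements of $\mfp$ lands in $\mfk$, and in fact the $\mfm^\perp$ part vanishes because $Y$ and $\coth(\wad_{\bsLambda})Y$ are simultaneously brought to a form where their commutator is diagonal-plus-off-diagonal-of-$\mfp$ type), so that writing its $\mfa$- and $\mfa^\perp$-components separately is legitimate and consistent with the splitting in \eqref{D_and_Y_der}. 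Likewise $[\dot{\bsLambda}, \coth(\wad_{\bsLambda}) Y]$ is a bracket of a diagonal element of $\mfa$ with an off-diagonal element, hence automatically off-diagonal, matching the $[D, B_{\mfm^\perp}]$ term which is genuinely in $\mfm^\perp \oplus \mfa^\perp$ but whose $\mfm^\perp$-part must be shown to cancel against part of $[Y,B_{\mfm^\perp}]_{\mfa^\perp}$ — on closer inspection this is not an issue because $[D, B_{\mfm^\perp}]$ is already off-diagonal of the correct type, so the identification is direct. I would spell out these parity observations in a sentence or two and then the result follows immediately from Proposition \ref{PROPOSITION_D_and_Y_derivatives}.
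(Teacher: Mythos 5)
Your argument is correct and is essentially the paper's own proof: substitute $D=\dot{\bsLambda}$ (from \eqref{bsLambda_der}) and $B_{\mfm^\perp}=-\coth(\wad_{\bsLambda})Y$ (from \eqref{B_mfm_perp}) into the two relations of Proposition \ref{PROPOSITION_D_and_Y_derivatives}, and both \eqref{bsLambda_DE} and \eqref{Y_DE} drop out with the signs as you computed. Only your parity aside is slightly off — $\coth(\wad_{\bsLambda})Y$ lies in $\mfm^\perp\subset\mfk$ (not in $\mfp$), so $[Y,\coth(\wad_{\bsLambda})Y]\in\mfp=\mfa\oplus\mfa^\perp$ for that reason — but this does not affect the proof, since the splitting is already the one built into \eqref{D_and_Y_der}.
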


\begin{proof}
Due to equation \eqref{bsLambda_der}, along the solution curve 
\eqref{trajectory} we can write 
\be
    \dot{\bsLambda} = D, 
\label{bsLambda_dot} 
\ee
whereas from the relationships displayed in \eqref{D_and_Y_der} we get
\be
    \dot{D} = [Y, B_{\mfm^\perp}]_{\mfa}
    \quad \text{and} \quad
    \dot{Y} 
    = [Y, B_{\mfm^\perp}]_{\mfa^\perp} 
        + [D, B_{\mfm^\perp}] 
        + [Y, B_{\mfm}].
\label{D_dot_and_Y_dot}
\ee
Recalling the definition \eqref{B_mfm_perp}, equations \eqref{bsLambda_DE}
and \eqref{Y_DE} clearly follow.
\end{proof}

Next, by evaluating the matrices $Z$ \eqref{Z} and $B_\mfm$ \eqref{B_mfm}
along the fixed trajectory \eqref{trajectory}, for all $t \in \R$ we define
\be
    \cK(t) = B_{\mfm}(t) - Z(t) \in \mfk.
\label{cK}
\ee
Since the dependence of $\cK$ on $t$ is smooth, there is a unique maximal
smooth solution 
\be
    \R \ni t \mapsto k(t) \in GL(N, \bC)
\label{k}
\ee
of the first order differential equation
\be
    \dot{k}(t) = k(t) \cK(t)
    \qquad
    (t \in \R)
\label{k_DE}
\ee
satisfying the initial condition 
\be
    k(0) = \bsone_N. 
\label{k_initial_cond}
\ee
Since \eqref{k_DE} is a linear differential equation for $k$, the existence 
of such a global fundamental solution is obvious. Moreover, since $\cK$ 
\eqref{cK} takes values in the Lie algebra $\mfk$ \eqref{mfk}, the trivial 
observations
\be
    \frac{\dd (k C k^*)}{\dd t}
    = \dot{k} C k^* + k C \dot{k}^*
    = k (\cK C + C \cK^*) k^*
    = 0
    \quad \text{and} \quad
    k(0) C k(0)^* = C
\label{k_quadratic_eq}
\ee
imply immediately that $k$ \eqref{k} actually takes values in the subgroup 
$K$ \eqref{K}; that is,
\be
    k(t) \in K
    \qquad
    (t \in \R).
\label{k_in_K}
\ee
Utilizing $k$, we can formulate the most important technical result of this 
subsection.

\begin{lemma}
\label{LEMMA_A}
The smooth function
\be
    \R \ni t 
        \mapsto 
        A(t) = k(t) e^{2 \bsLambda(t)} k(t)^{-1}
    \in \exp(\mfp_\reg)
\label{A}
\ee
satisfies the second order geodesic differential equation
\be
    \frac{\dd}{\dd t} \left( \frac{\dd A(t)}{\dd t} A(t)^{-1} \right) = 0
    \qquad
    (t \in \R).
\label{A_geodesic_eqn}
\ee
\end{lemma}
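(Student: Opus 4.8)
The plan is to differentiate the matrix $A(t)=k(t)e^{2\bsLambda(t)}k(t)^{-1}$ once, simplify the result using the differential equations already established for $k$, $\bsLambda$ and $Y$, and then differentiate a second time to see that the derivative of $\dot{A}A^{-1}$ vanishes. First I would compute $\dot{A}A^{-1}$ directly. Using $\dot{k}=k\cK$ with $\cK=B_{\mfm}-Z$ (equation \eqref{k_DE}), $\dot{\bsLambda}=D$ (equation \eqref{bsLambda_dot}), and $\frac{\dd}{\dd t}e^{2\bsLambda}=2\dot{\bsLambda}e^{2\bsLambda}=2De^{2\bsLambda}$ (valid since $\bsLambda$ and $\dot{\bsLambda}=D$ are both diagonal, hence commute), one gets
\be
\dot{A}A^{-1}
= k\big(\cK + 2D - e^{2\bsLambda}\cK e^{-2\bsLambda}\big)k^{-1}.
\ee
So the task reduces to identifying the conjugation-covariant quantity $\Xi:=\cK+2D-e^{2\bsLambda}\cK e^{-2\bsLambda}$ inside $\mfgl(N,\C)$ and showing $\frac{\dd}{\dd t}(k\Xi k^{-1})=0$.

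Next I would massage $\Xi$ into a recognizable Lie-algebraic object. Splitting $\cK=B_{\mfm}-Z$: the diagonal part $B_{\mfm}$ commutes with $e^{2\bsLambda}$, so it contributes only $2D$ to $\Xi$; the off-diagonal part $-Z$ contributes $-Z+e^{2\bsLambda}Ze^{-2\bsLambda}=-(\Id-e^{2\ad_{\bsLambda}})Z$. Recalling $Z=\sinh(\wad_{\bsLambda})^{-1}Y$ (equation \eqref{Z}) and the identity $e^{2w}-1=2e^{w}\sinh(w)$, one finds $(e^{2\ad_{\bsLambda}}-\Id)Z=2e^{\ad_{\bsLambda}}Y$ on $\mfm^\perp\oplus\mfa^\perp$. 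Hence
\be
\Xi = 2B_{\mfm} + 2D + 2e^{\ad_{\bsLambda}}Y.
\ee
Now $2(B_{\mfm}+D)$ is diagonal; writing $e^{\ad_{\bsLambda}}Y=(\cosh\wad_{\bsLambda})Y+(\sinh\wad_{\bsLambda})Y$ and using $Y=(L-L^{-1})_{\mfa^\perp}/2$ together with $D=(L-L^{-1})_{\mfa}/2$, I expect $\Xi$ to collapse to (a multiple of) $e^{\bsLambda}(L-L^{-1})e^{-\bsLambda}$ or to $k^{-1}$-conjugate naturally into $\frac{\dd A}{\dd t}A^{-1}=k\Xi k^{-1}$ being equal to $A-A^{-1}$ up to scaling — more precisely I anticipate $\dot{A}A^{-1}=A\,\bJ'$-type expression; the cleanest target is that $\dot{A}A^{-1}$ equals a fixed element of $\mfp$ conjugated appropriately, so that its time-derivative manifestly vanishes. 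The guiding principle is the standard Olshanetsky--Perelomov mechanism: $A(t)$ should be (a conjugate of) a one-parameter geodesic $A_0\exp(t\Xi_0)$ on the symmetric space $\exp(\mfp_\reg)$, so $\dot{A}A^{-1}$ is constant.

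Once $\Xi$ is identified, the second differentiation is the crux. I would compute $\frac{\dd}{\dd t}(k\Xi k^{-1})=k(\dot{\Xi}+[\cK,\Xi])k^{-1}$ and show $\dot{\Xi}+[\cK,\Xi]=0$ by substituting $\dot{\bsLambda}=D$, $\dot{Y}$ from equation \eqref{Y_DE} (equivalently $\dot{Y}=[Y,B_{\mfm^\perp}]_{\mfa^\perp}+[D,B_{\mfm^\perp}]+[Y,B_{\mfm}]$ via Proposition \ref{PROPOSITION_D_and_Y_derivatives}), $\dot{D}=[Y,B_{\mfm^\perp}]_{\mfa}$, and the definitions $B_{\mfm^\perp}=-\coth(\wad_{\bsLambda})Y$, $Z=\sinh(\wad_{\bsLambda})^{-1}Y$. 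This is the main obstacle: it is a somewhat intricate identity mixing the functional-calculus operators $\sinh(\wad_{\bsLambda})$, $\cosh(\wad_{\bsLambda})$, $\coth(\wad_{\bsLambda})$ with the $\mfa$/$\mfa^\perp$/$\mfm$/$\mfm^\perp$ decomposition, and one must be careful that conjugation by $e^{\ad_{\bsLambda}}$ does not preserve that decomposition. I would handle it by writing everything in terms of $L$ and $L^{-1}=CLC$ and using the Lax equation $\bsX_H[L]=[L,B]$ (Theorem \ref{THEOREM_Lax_representation}), which along the trajectory reads $\dot{L}=[L,B]$; indeed the whole statement should follow more cheaply by observing that $A=k e^{2\bsLambda}k^{-1}$ and the Lax pair together force $\frac{\dd}{\dd t}(e^{2\bsLambda})$ and the conjugating factor to assemble into a product of exponentials. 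If the direct computation proves unwieldy, the fallback is the conceptual route: show that the function $\tilde{k}(t):=g_L(t)$ from the geodesic flow \eqref{3.21}-style argument satisfies $\tilde{k}(t)e^{2\bsLambda(t)}\tilde{k}(t)^{-1}=$ projection of a genuine geodesic, and that $k(t)$ differs from $\tilde{k}(t)$ by an $M$-valued gauge, so that $A(t)$ is literally a reparametrized geodesic on $\exp(\mfp_\reg)\cong G/K$, for which \eqref{A_geodesic_eqn} holds tautologically.
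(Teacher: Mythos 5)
Your opening computation is sound and agrees with the paper's \eqref{A_inv_and_A_dot}, but from there the proposal stalls at exactly the point where the proof lives. First, a slip: in $\Xi=\cK+2D-e^{2\ad_{\bsLambda}}\cK$ the $B_{\mfm}$ contributions of $\cK$ and $e^{2\ad_{\bsLambda}}\cK$ cancel (as your own sentence says), so $\Xi=2D+2e^{\ad_{\bsLambda}}Y=e^{\bsLambda}(L-L^{-1})e^{-\bsLambda}$; your displayed $\Xi=2B_{\mfm}+2D+2e^{\ad_{\bsLambda}}Y$ carries a spurious $2B_{\mfm}$. Second, and more seriously, the verification that $\dd(k\Xi k^{-1})/\dd t=0$, i.e. $\dot{\Xi}+[\cK,\Xi]=0$, is precisely the content of the lemma, and you leave it as a plan. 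The paper's proof is devoted to this step: it splits $\dot A A^{-1}$ into the symmetric and antisymmetric combinations \eqref{A_dot_+}, \eqref{A_dot_-}, i.e. $\cL=\dot{\bsLambda}+\cosh(\wad_{\bsLambda})Y$ and $\cN=\sinh(\wad_{\bsLambda})Y$, and proves $\dot{\cL}=[\cL,\cK]$ and $\dot{\cN}=[\cN,\cK]$ from Proposition \ref{PROPOSITION_ddot_bsLambda_etc} together with the hyperbolic identities \eqref{comm_term_1}--\eqref{comm_term_4}, which is exactly how it handles the difficulty you flag (that $e^{\ad_{\bsLambda}}$ does not respect the $\mfm\oplus\mfm^\perp\oplus\mfa\oplus\mfa^\perp$ splitting). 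Your fallback route is not available here: in this chapter the van Diejen system is constructed directly, not by reduction, so there is no pre-existing geodesic $g_L(t)$ (that object belongs to the Heisenberg-double setting of the previous chapter), and asserting that $A(t)$ ``is literally a reparametrized geodesic'' is circular, since that is what Lemma \ref{LEMMA_A} establishes.

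For what it is worth, the Lax-equation shortcut you gesture at can indeed be completed, and would give a genuinely leaner argument than the paper's. With $\Xi=e^{\bsLambda}(L-L^{-1})e^{-\bsLambda}$ and $\dot{\bsLambda}=D$, the Lax equation \eqref{Lax_representation} gives
\be
\dot{\Xi}+[\cK,\Xi]
=e^{\bsLambda}\big[\,L-L^{-1}\,,\,B-e^{-\ad_{\bsLambda}}(D+B_{\mfm}-Z)\,\big]e^{-\bsLambda},
\ee
and since $e^{-\ad_{\bsLambda}}Z=(\coth(\wad_{\bsLambda})-\Id)Y$ while $B_{\mfm^\perp}=-\coth(\wad_{\bsLambda})Y$ by \eqref{B_mfm_perp}, the second entry equals $-(D+Y)=-(L-L^{-1})/2$, so the commutator vanishes and \eqref{A_geodesic_eqn} follows. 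But none of this is carried out in your text, so as written the proposal has a gap at its central step.
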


\begin{proof}
First, let us observe that \eqref{A} is a well-defined map. Indeed, since 
along the trajectory \eqref{trajectory} we have $\bsLambda(t) \in \mfc$, 
from \eqref{mfp_reg_identification} we see that $A$ does take values in 
$\exp(\mfp_\reg)$. Continuing with the proof proper, notice that for all 
$t \in \R$ we have $A^{-1} = k e^{-2 \bsLambda} k^{-1}$ and
\be
    \dot{A} 
    = \dot{k} e^{2 \bsLambda} k^{-1}
        + k e^{2 \bsLambda} 2 \dot{\bsLambda} k^{-1}
        - k e^{2 \bsLambda} k^{-1} \dot{k} k^{-1},
\label{A_dot}
\ee
thus the formulae
\be
    \dot{A} A^{-1}  
    = k \left( 
            2 \dot{\bsLambda} - e^{2 \ad_{\bsLambda}} \cK + \cK 
        \right) k^{-1}
    \quad \text{and} \quad
    A^{-1} \dot{A} 
    = k \left( 
            2 \dot{\bsLambda} + e^{-2 \ad_{\bsLambda}} \cK - \cK 
        \right) k^{-1}
\label{A_inv_and_A_dot}
\ee
are immediate. Upon introducing the shorthand notations
\begin{align}
    & \cL(t) = \dot{\bsLambda}(t) + \cosh(\wad_{\bsLambda(t)}) Y(t) \in \mfp,
    \label{cL} \\
    & \cN(t) = \sinh(\wad_{\bsLambda(t)}) Y(t) \in \mfk,
    \label{cN}
\end{align}
from \eqref{A_inv_and_A_dot} we conclude that
\be
\begin{split}
    \frac{\dot{A} A^{-1} + A^{-1} \dot{A}}{4} 
    & = k \left( 
            \dot{\bsLambda} - \half \sinh(2 \ad_{\bsLambda}) \cK 
            \right) k^{-1} 
    = k \left( 
            \dot{\bsLambda} - \half \sinh(2 \wad_{\bsLambda}) \cK_{\mfm^\perp} 
            \right) k^{-1}
    \\
    & = k \left( 
            \dot{\bsLambda} 
            + \cosh(\wad_{\bsLambda}) \sinh(\wad_{\bsLambda}) Z
            \right) k^{-1}
    = k \cL k^{-1},
\end{split}
\label{A_dot_+}
\ee
and the relationship
\be
\begin{split}
    \frac{\dot{A} A^{-1} - A^{-1} \dot{A}}{4} 
    & = k \frac{\cK - \cosh(2 \ad_{\bsLambda}) \cK}{2} k^{-1} 
    = - k \left( \sinh(\ad_{\bsLambda})^2 \cK \right) k^{-1} 
    \\
    & = k \left( \sinh(\wad_{\bsLambda})^2 Z \right) k^{-1}
    = k \cN k^{-1}
\end{split}
\label{A_dot_-}
\ee
also follows. 

Now, by differentiating \eqref{A_dot_+} with respect to time $t$, we get
\be
    \frac{\dd}{\dd t} \frac{\dot{A} A^{-1} + A^{-1} \dot{A}}{4}
    = k \left (\dot{\cL} - [\cL, \cK] \right) k^{-1}.
\label{A_ddot_+}
\ee
Recalling the definition \eqref{cL}, Leibniz rule yields
\be
    \dot{\cL}
    = \ddot{\bsLambda} 
        + [\dot{\bsLambda}, \sinh(\wad_{\bsLambda}) Y] 
        + \cosh(\wad_{\bsLambda}) \dot{Y},
\label{cL_dot}
\ee
and the commutator
\be
    [\cL, \cK] 
    = - [\dot{\bsLambda}, \sinh(\wad_{\bsLambda})^{-1} Y]
        + [\cosh(\wad_{\bsLambda}) Y, B_{\mfm}]
        - [\cosh(\wad_{\bsLambda}) Y, \sinh(\wad_{\bsLambda})^{-1} Y]
\label{cL_and_cK_commutator}
\ee
is also immediate. By inspecting the right-hand side of the above equation, 
for the second term one can easily derive that
\be
\begin{split}
    [\cosh(\wad_{\bsLambda}) Y, B_{\mfm}] 
    & = \half [e^{\ad_{\bsLambda}} Y, B_{\mfm}] 
        + \half [e^{-\ad_{\bsLambda}} Y, B_{\mfm}]
    = \half e^{\ad_{\bsLambda}} [Y, B_{\mfm}]
        + \half e^{-\ad_{\bsLambda}} [Y, B_{\mfm}]
    \\
    & = \cosh(\ad_{\bsLambda}) [Y, B_{\mfm}]
    = \cosh(\wad_{\bsLambda}) [Y, B_{\mfm}].
\end{split}
\label{comm_term_1}
\ee
Furthermore, bearing in mind the identities appearing in \eqref{coth_iden}, 
a slightly longer calculation also reveals that the third term in 
\eqref{cL_and_cK_commutator} can be cast into the form
\be
\begin{split}
    & [\cosh(\wad_{\bsLambda}) Y, \sinh(\wad_{\bsLambda})^{-1} Y]
    = \half [e^{\ad_{\bsLambda}} Y, \sinh(\wad_{\bsLambda})^{-1} Y]
        + \half [e^{-\ad_{\bsLambda}} Y, \sinh(\wad_{\bsLambda})^{-1} Y]
    \\
    & \quad
    = \half e^{\ad_{\bsLambda}} 
                [Y, e^{-\wad_{\bsLambda}} \sinh(\wad_{\bsLambda})^{-1} Y]
        + \half e^{-\ad_{\bsLambda}} 
                [Y, e^{\wad_{\bsLambda}} \sinh(\wad_{\bsLambda})^{-1} Y]
    \\
    & \quad
    = \cosh(\ad_{\bsLambda}) [Y, \coth(\wad_{\bsLambda}) Y]
    = [Y, \coth(\wad_{\bsLambda}) Y]_{\mfa} 
        + \cosh(\wad_{\bsLambda}) [Y, \coth(\wad_{\bsLambda}) Y]_{\mfa^\perp}. 
\end{split}
\label{comm_term_2}
\ee
Now, by plugging the expressions \eqref{comm_term_1} and \eqref{comm_term_2}
into \eqref{cL_and_cK_commutator}, and by applying the hyperbolic identity
\be
    \sinh(w) + \frac{1}{\sinh(w)} = \cosh(w) \coth(w)
    \qquad
    (w \in \bC),
\label{sinh_iden}
\ee
one finds immediately that
\be
\begin{split}
    \dot{\cL} - [\cL, \cK]
    = & \ddot{\bsLambda} + [Y, \coth(\wad_{\bsLambda}) Y]_{\mfa}
    \\ 
    & + \cosh(\wad_{\bsLambda}) 
        \left( 
            \dot{Y} 
            + [Y, \coth(\wad_{\bsLambda}) Y]_{\mfa^\perp}
            - [Y, B_{\mfm}]
            + [\dot{\bsLambda}, \coth(\wad_{\bsLambda}) Y]
        \right).
\end{split}
\label{cL_key}
\ee
Looking back to Proposition \ref{PROPOSITION_ddot_bsLambda_etc}, we see
that $\dot{\cL} - [\cL, \cK] = 0$, thus by \eqref{A_ddot_+} we end up with 
the equation
\be
    \frac{\dd}{\dd t} \frac{\dot{A} A^{-1} + A^{-1} \dot{A}}{4} = 0.
\label{A_ddot_+_OK}
\ee

Next, upon differentiating \eqref{A_dot_-} with respect to $t$, we see that
\be
    \frac{\dd}{\dd t} \frac{\dot{A} A^{-1} - A^{-1} \dot{A}}{4}
    = k \left (\dot{\cN} - [\cN, \cK] \right) k^{-1}.
\label{A_ddot_-}
\ee
Remembering the form of $\cN$ \eqref{cN}, Leibniz rule yields
\be
    \dot{\cN} 
    = \cosh(\wad_{\bsLambda}) [\dot{\bsLambda}, Y] 
        + \sinh(\wad_{\bsLambda}) \dot{Y}
    = \sinh(\wad_{\bsLambda})
        \left(
            \coth(\wad_{\bsLambda}) [\dot{\bsLambda}, Y] + \dot{Y}
        \right),
\label{cN_dot}
\ee
and the formula
\be
    [\cN, \cK] 
    = [\sinh(\wad_{\bsLambda}) Y, B_{\mfm}] 
        - [\sinh(\wad_{\bsLambda}) Y, \sinh(\wad_{\bsLambda})^{-1} Y]
\label{cN_and_cK_commutator}
\ee
is also immediate. Now, let us observe that the first term on the 
right-hand side of the above equation can be transformed into the 
equivalent form
\be
\begin{split}
    [\sinh(\wad_{\bsLambda}) Y, B_{\mfm}] 
    & = \half [e^{\ad_{\bsLambda}} Y, B_{\mfm}] 
        - \half [e^{-\ad_{\bsLambda}} Y, B_{\mfm}]
    = \half e^{\ad_{\bsLambda}} [Y, B_{\mfm}]
        - \half e^{-\ad_{\bsLambda}} [Y, B_{\mfm}]
    \\
    & = \sinh(\ad_{\bsLambda}) [Y, B_{\mfm}]
    = \sinh(\wad_{\bsLambda}) [Y, B_{\mfm}],
\end{split}
\label{comm_term_3}
\ee
while for the second term we get
\be
\begin{split}
    & [\sinh(\wad_{\bsLambda}) Y, \sinh(\wad_{\bsLambda})^{-1} Y]
    = \half [e^{\ad_{\bsLambda}} Y, \sinh(\wad_{\bsLambda})^{-1} Y]
        - \half [e^{-\ad_{\bsLambda}} Y, \sinh(\wad_{\bsLambda})^{-1} Y]
    \\
    & \quad
    = \half e^{\ad_{\bsLambda}} 
                [Y, e^{-\wad_{\bsLambda}} \sinh(\wad_{\bsLambda})^{-1} Y]
        - \half e^{-\ad_{\bsLambda}} 
                [Y, e^{\wad_{\bsLambda}} \sinh(\wad_{\bsLambda})^{-1} Y]
    \\
    & \quad
    = \sinh(\ad_{\bsLambda}) [Y, \coth(\wad_{\bsLambda}) Y]
    = \sinh(\wad_{\bsLambda}) [Y, \coth(\wad_{\bsLambda}) Y]_{\mfa^\perp}. 
\end{split}
\label{comm_term_4}
\ee
Taking into account the above expressions, we obtain that
\be
\begin{split}
    \dot{\cN} - [\cN, \cK]
    = \sinh(\wad_{\bsLambda}) 
        \left( 
            \dot{Y} 
            + [Y, \coth(\wad_{\bsLambda}) Y]_{\mfa^\perp}
            - [Y, B_{\mfm}]
            + [\dot{\bsLambda}, \coth(\wad_{\bsLambda}) Y]
        \right),
\end{split}
\label{cN_key}
\ee
whence by Proposition \ref{PROPOSITION_ddot_bsLambda_etc} we are entitled 
to write that $\dot{\cN} - [\cN, \cK] = 0$. Giving a glance at the 
relationship \eqref{A_ddot_-}, it readily follows that
\be
    \frac{\dd}{\dd t} \frac{\dot{A} A^{-1} - A^{-1} \dot{A}}{4} = 0.
\label{A_ddot_-_OK}
\ee
To complete the proof, observe that the desired geodesic equation 
\eqref{A_geodesic_eqn} is a trivial consequence of the equations
\eqref{A_ddot_+_OK} and \eqref{A_ddot_-_OK}.
\end{proof}

To proceed further, let us observe that by integrating the differential
equation \eqref{A_geodesic_eqn}, we obtain immediately that
\be
    \dot{A}(t) A(t)^{-1} = \dot{A}(0) A(0)^{-1}
    \qquad
    (t \in \R).
\label{A_first_order_DE}
\ee
However, recalling the definitions \eqref{cL} and \eqref{cN}, and also the 
relationships \eqref{bsLambda_dot} and \eqref{D+Y}, from the equations 
\eqref{A_dot_+}, \eqref{A_dot_-} and \eqref{k_initial_cond} we infer that
\be
\begin{split}
    \dot{A}(0) A(0)^{-1} 
    & = 2 k(0) (\cL(0) + \cN(0)) k(0)^{-1}
    = 2 (\dot{\bsLambda}(0) + e^{\ad_{\bsLambda(0)}} Y(0))
    \\
    & = 2 e^{\ad_{\bsLambda(0)}} (D(0) + Y(0))
    = e^{\bsLambda(0)} (L(0) - L(0)^{-1}) e^{-\bsLambda(0)}.
\end{split}
\label{A_dot_t=0}
\ee
Moreover, remembering \eqref{k_initial_cond} and the definition \eqref{A}, 
at $t = 0$ we can also write that
\be
    A(0) = k(0) e^{2 \bsLambda(0)} k(0)^{-1} = e^{2 \bsLambda(0)}.
\label{A(0)}
\ee
Putting the above observations together, it is now evident that the unique 
maximal solution of the first order differential equation 
\eqref{A_first_order_DE} with the initial condition \eqref{A(0)} is the 
smooth curve
\be
    A(t) 
    = e^{t e^{\bsLambda(0)} (L(0) - L(0)^{-1}) e^{-\bsLambda(0)}} 
        e^{2 \bsLambda(0)}
    = e^{\bsLambda(0)} e^{t (L(0) - L(0)^{-1})} e^{\bsLambda(0)}
    \qquad
    (t \in \R).
\label{A(t)_exp_form}
\ee
Comparing this formula with \eqref{A}, the following result is immediate.

\begin{theorem}
\label{THEOREM_eigenvalue_dynamics}
Take an arbitrary maximal solution \eqref{trajectory} of the van Diejen 
system \eqref{H}, then at each $t \in \R$ it can be recovered uniquely  
from the spectral identification
\be
    \{ e^{\pm 2 \lambda_a(t)} \, | \, a \in \bN_n \}
    = \mathrm{Spec} 
        (e^{\bsLambda(0)} e^{t (L(0) - L(0)^{-1})} e^{\bsLambda(0)}).
\label{spectral_identification}
\ee
\end{theorem}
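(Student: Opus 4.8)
The plan is to obtain Theorem \ref{THEOREM_eigenvalue_dynamics} as an essentially immediate consequence of Lemma \ref{LEMMA_A} together with the explicit integration of the geodesic equation already performed in equations \eqref{A_first_order_DE}--\eqref{A(t)_exp_form}; the only genuinely new content is the passage from the matrix $A(t)$ to its spectrum and the observation that this spectrum pins down the trajectory. So the proof will be short: read off the spectrum of $A(t)$ in two ways and compare.

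First I would record that, by Lemma \ref{LEMMA_A}, along the fixed maximal solution \eqref{trajectory} one has $A(t) = k(t) e^{2\bsLambda(t)} k(t)^{-1}$, where $k(t) \in K$ by \eqref{k_in_K}. Since $K$ consists of \emph{unitary} matrices, conjugation by $k(t)$ preserves the spectrum, hence
\[
  \mathrm{Spec}(A(t)) = \mathrm{Spec}\big(e^{2\bsLambda(t)}\big)
  = \{ e^{\pm 2\lambda_a(t)} \mid a \in \bN_n \}.
\]
On the other hand, the chain of identities \eqref{A_first_order_DE}--\eqref{A(t)_exp_form} gives $A(t) = e^{\bsLambda(0)} e^{t(L(0) - L(0)^{-1})} e^{\bsLambda(0)}$, a matrix depending on nothing but the initial point $\gamma_0 = \gamma(0)$ through $L(0) = L(\lambda(0), \theta(0))$ \eqref{L}. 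Combining the two evaluations of $\mathrm{Spec}(A(t))$ yields the spectral identification \eqref{spectral_identification} for every $t \in \R$.

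It then remains to make precise the claim that the solution is \emph{recovered uniquely}. Because $\bsLambda(t) \in \mfc$, i.e.\ $\lambda_1(t) > \dots > \lambda_n(t) > 0$, the $2n$ numbers $e^{\pm 2\lambda_a(t)}$ are pairwise distinct and exactly $n$ of them exceed $1$; thus $\lambda(t)$ is reconstructed from the right-hand side of \eqref{spectral_identification} by selecting those $n$ eigenvalues larger than $1$, halving their logarithms, and listing the results in strictly decreasing order. To recover the momenta I would differentiate the so-determined curve and invoke the first half of the equations of motion: by \eqref{lambda_dot}, $\dot{\lambda}_a(t) = \sinh(\theta_a(t))\, u_a(\lambda(t))$, and since $u_a(\lambda(t)) > 0$ by \eqref{u_a} and $\sinh$ is a bijection of $\R$, this forces $\theta_a(t) = \operatorname{arcsinh}\!\big(\dot{\lambda}_a(t)/u_a(\lambda(t))\big)$. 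Hence the entire curve $t \mapsto (\lambda(t), \theta(t))$ is read off from the one-parameter family of spectra on the right of \eqref{spectral_identification}.

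I do not expect any real obstacle: the analytic substance — the second-order geodesic equation for $A$ and its integration — is already established in Subsection \ref{subsec:4.3.4}, and the present statement merely repackages it as a projection-type solution algorithm. The only points that deserve an explicit line are the unitarity of $k(t)$ (so that spectra are conjugation-invariant), supplied by \eqref{k_in_K}, and the fact that the right-hand side of \eqref{spectral_identification} is built solely from $\gamma_0$, which is transparent from \eqref{A(t)_exp_form}. If one wishes, one may additionally observe that the regularity of $L(0)$ guaranteed by Lemma \ref{LEMMA_regularity} ensures $e^{t(L(0)-L(0)^{-1})}$ has the structure implicit above, but this is not logically required for the argument.
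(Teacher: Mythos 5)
Your proof is correct and follows essentially the same route as the paper: the identity \eqref{spectral_identification} is obtained precisely by comparing $A(t)=k(t)e^{2\bsLambda(t)}k(t)^{-1}$ from Lemma \ref{LEMMA_A} with the explicit formula \eqref{A(t)_exp_form} and reading off the conjugation-invariant spectrum, while the unique recovery of $\lambda(t)$ (from $\bsLambda(t)\in\mfc$) and of $\theta(t)$ via $\theta_a=\mathrm{arcsinh}(\dot\lambda_a/u_a)$ is exactly how the paper completes the statement. (A tiny remark: unitarity of $k(t)$ is not needed for spectral invariance, since any similarity preserves the spectrum, but this changes nothing.)
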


The essence of the above theorem is that any solution \eqref{trajectory} of 
the van Diejen system \eqref{H} can be obtained by a purely algebraic process 
based on the diagonalization of a matrix flow. Indeed, once one finds the 
evolution of $\lambda(t)$ from \eqref{spectral_identification}, the evolution 
of $\theta(t)$ also becomes accessible by the formula
\be
    \theta_a(t) 
    = \mathrm{arcsinh} 
        \left( \frac{\dot{\lambda}_a(t)}{u_a(\lambda(t))} \right)
    \qquad
    (a \in \bN_n), 
\label{theta_evolution}
\ee
as dictated by the equation of motion \eqref{lambda_dot}.

\subsection{Temporal asymptotics}
\label{subsec:4.3.5}

One of the immediate consequences of the projection method formulated
in the previous subsection is that the Hamiltonian \eqref{H} describes 
a `repelling' particle system, thus it is fully justified to inquire
about its scattering properties. Although rigorous scattering theory is 
in general a hard subject, a careful study of the algebraic solution 
algorithm described in Theorem \ref{THEOREM_eigenvalue_dynamics} allows 
us to investigate the asymptotic properties of any maximally defined 
trajectory \eqref{trajectory} as $t \to \pm \infty$. In this respect our 
main tool is Ruijsenaars' theorem on the spectral asymptotics of exponential 
type matrix flows (see \cite[Theorem A2]{Ru88}). To make it work, 
let us look at the relationship \eqref{L_diagonalized} and Lemma 
\ref{LEMMA_regularity}, from where we see that there is a group element 
$y \in K$ and a unique real $n$-tuple 
$\htheta = (\htheta_1, \ldots, \htheta_n) \in \R^n$ satisfying 
\be
    \htheta_1 > \ldots > \htheta_n > 0, 
\label{theta_order}
\ee
such that with the (regular) diagonal matrix $\hbsTheta \in \mfc$ defined 
in \eqref{hbsTheta} we can write that
\be
    L(0) = y e^{2 \hbsTheta} y^{-1}.
\label{L(0)_diagonalized}
\ee
Following the notations of the previous subsection, here $L(0)$ still stands 
for the Lax matrix \eqref{L} evaluated along the trajectory \eqref{trajectory} 
at $t = 0$. Since
\be
    L(0) - L(0)^{-1} = 2 y \sinh(2 \hbsTheta) y^{-1},
\label{exponent}
\ee
with the aid of the positive definite matrix
\be
    \hat{L} = y^{-1} e^{2 \bsLambda(0)} y \in \exp(\mfp)
\label{hat_L_0}
\ee
for the spectrum of the matrix flow appearing in \eqref{A(t)_exp_form} 
we obtain at once that
\be
    \mathrm{Spec} 
        (e^{\bsLambda(0)} e^{t (L(0) - L(0)^{-1})} e^{\bsLambda(0)})
    = \mathrm{Spec} (\hat{L} e^{2 t \sinh(2 \hbsTheta)}).
\label{spectrum_of_matrix_flow}
\ee
In order to make a closer contact with Ruijsenaars' theorem, let us also
introduce the Hermitian $n \times n$ matrix $\cR$ with entries
\be
    \cR_{a, b} = \delta_{a + b, n + 1}.
\label{cR}
\ee
Since $\cR^2 = \bsone_n$, we have $\cR^{-1} = \cR$, whence the 
block-diagonal matrix
\be
    \cW = \begin{bmatrix}
        \bsone_n & 0_n \\
        0_n & \cR_n
    \end{bmatrix} 
    \in GL(N, \bC),
\label{cW}
\ee
also satisfies the relations $\cW^{-1} = \cW = \cW^*$. As the most important 
ingredients of our present analysis, now we introduce the matrices
\be
    \bsTheta^+ = 2 \cW \hbsTheta \cW^{-1}
    \quad \text{and} \quad
    \tilde{L} = \cW \hat{L} \cW^{-1}.
\label{bsTheta_+_and_L_tilde}
\ee
Recalling the relationships \eqref{spectral_identification} and 
\eqref{spectrum_of_matrix_flow}, it is clear that for all $t \in \R$
we can write that
\be
    \{ e^{\pm 2 \lambda_a(t)} \, | \, a \in \bN_n \}
    = \mathrm{Spec} 
        (
            \tilde{L} e^{2 t \sinh(\bsTheta^+)}
        ).
\label{spectral_identification_OK}
\ee
However, upon performing the conjugations with the unitary matrix $\cW$
\eqref{cW} in the defining equations displayed in 
\eqref{bsTheta_+_and_L_tilde}, we find immediately that
\be
    \bsTheta^+ 
    = \diag(\theta_1^+, \ldots, \theta_n^+, 
            - \theta_n^+, \ldots, -\theta_1^+),
\label{bsTheta_+}
\ee
where
\be
    \theta_a^+ = 2 \htheta_a
    \qquad
    (a \in \bN_n).
\label{theta_+}
\ee
The point is that, due to our regularity result formulated in Lemma
\ref{LEMMA_regularity}, the diagonal matrix \eqref{bsTheta_+} has a simple 
spectrum, and its eigenvalues are in strictly decreasing order along the 
diagonal (see \eqref{theta_order}). Moreover, since $\hat{L}$ 
\eqref{hat_L_0} is positive definite, so is $\tilde{L}$. In particular, the 
leading principal minors of matrix $\tilde{L}$ are all strictly positive. 
So, the exponential type matrix flow
\be
    \R \ni t \mapsto \tilde{L} e^{2 t \sinh(\bsTheta^+)} \in GL(N, \bC)
\label{matrix_flow}
\ee
does meet all the requirements of Ruijsenaars' aforementioned theorem.
Therefore, essentially by taking the logarithm of the quotients of 
the consecutive leading principal minors of the $n \times n$ submatrix taken 
from the upper-left-hand corner of $\tilde{L}$, one finds a unique real 
$n$-tuple
\be
    \lambda^+ = (\lambda_1^+, \dots, \lambda_n^+) \in \R^n
\label{lambda_+}
\ee
such that for all $a \in \bN_n$ we can write 
\be
    \lambda_a(t) \sim t \sinh(\theta_a^+) + \lambda_a^+
    \quad \text{and} \quad
    \theta_a(t) \sim \theta_a^+,
\label{lambda_asymptotics_t_to_infty}
\ee
up to exponentially vanishing small terms as $t \to \infty$. It is obvious 
that the same ideas work for the case $t \to -\infty$, too, with complete
control over the asymptotic momenta $\theta_a^-$ and the asymptotic phases 
$\lambda_a^-$ as well. The above observations can be summarized as follows.

\begin{lemma}
\label{LEMMA_asymptotics}
For an arbitrary maximal solution \eqref{trajectory} of the hyperbolic 
$n$-particle van Diejen system \eqref{H} the particles move asymptotically 
freely as $\vert t \vert \to \infty$. More precisely, for all $a \in \bN_n$ 
we have the asymptotics
\be
    \lambda_a(t) \sim t \sinh(\theta_a^\pm) + \lambda_a^\pm
    \quad \text{and} \quad
    \theta_a(t) \sim \theta_a^\pm
    \qquad
    (t \to \pm \infty),
\label{asymptotic_result}
\ee
where the asymptotic momenta obey
\be
    \theta_a^- = -\theta_a^+
    \quad \text{and} \quad
    \theta_1^+ > \ldots > \theta_n^+ > 0.
\label{asymptotic_momenta_conds}
\ee
\end{lemma}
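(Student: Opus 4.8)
The plan is to deduce Lemma \ref{LEMMA_asymptotics} from the algebraic solution algorithm of Theorem \ref{THEOREM_eigenvalue_dynamics} together with the regularity statement of Lemma \ref{LEMMA_regularity}, by reducing everything to Ruijsenaars' theorem on the spectral asymptotics of exponential-type matrix flows (cited as \cite[Theorem A2]{Ru88}). The groundwork is already laid in Subsection \ref{subsec:4.3.5}, so the proof is essentially a matter of assembling the pieces and verifying that all hypotheses of Ruijsenaars' theorem hold.

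First I would fix an arbitrary maximal trajectory \eqref{trajectory} and evaluate the Lax matrix at $t=0$, diagonalizing it as $L(0)=ye^{2\hbsTheta}y^{-1}$ with $y\in K$ and $\htheta_1>\dots>\htheta_n>0$; here the strict inequalities and positivity are precisely the content of Lemma \ref{LEMMA_regularity}, which uses the extra coupling assumption \eqref{coupling_cond_extra}. Next, using Theorem \ref{THEOREM_eigenvalue_dynamics}, the numbers $\{e^{\pm2\lambda_a(t)}\}$ are the eigenvalues of $e^{\bsLambda(0)}e^{t(L(0)-L(0)^{-1})}e^{\bsLambda(0)}$, and by the conjugation-invariance of the spectrum together with \eqref{exponent} this equals $\mathrm{Spec}(\hat L\,e^{2t\sinh(2\hbsTheta)})$ with $\hat L=y^{-1}e^{2\bsLambda(0)}y$ positive definite. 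The slight reshuffling by $\cW$ \eqref{cW} rewrites this as $\mathrm{Spec}(\tilde L\,e^{2t\sinh(\bsTheta^+)})$, where $\bsTheta^+=\diag(\theta_1^+,\dots,\theta_n^+,-\theta_n^+,\dots,-\theta_1^+)$ has a \emph{simple} spectrum with strictly decreasing diagonal entries $\theta_a^+=2\htheta_a>0$, and $\tilde L$ is positive definite, hence all its leading principal minors are strictly positive.

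At this stage I would invoke Ruijsenaars' theorem: for a flow of the form $\tilde L\,e^{2t\sinh(\bsTheta^+)}$ with $\tilde L$ positive definite (all leading principal minors positive) and $\sinh(\bsTheta^+)$ a real diagonal matrix with \emph{distinct, decreasingly ordered} diagonal entries, the logarithms of the eigenvalues, suitably ordered, behave as $2t\sinh(\theta_a^+)+2\lambda_a^++o(1)$ as $t\to\infty$, with the asymptotic phases $\lambda_a^+$ obtained as logarithms of ratios of consecutive leading principal minors of the upper-left $n\times n$ block of $\tilde L$, and the remainder exponentially small. Reading off the eigenvalues $e^{\pm2\lambda_a(t)}$ then yields $\lambda_a(t)\sim t\sinh(\theta_a^+)+\lambda_a^+$, and feeding this back into the equation of motion \eqref{lambda_dot} — where $u_a(\lambda(t))\to\cosh(\theta_a^+)\cdot 1$ as the particles separate, since every factor in \eqref{u_a} tends to $1$ — gives $\dot\lambda_a(t)\to\sinh(\theta_a^+)$, whence $\theta_a(t)\sim\theta_a^+$ via \eqref{theta_evolution}. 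Running the identical argument for $t\to-\infty$ produces $\theta_a^-$ and $\lambda_a^-$; the relation $\theta_a^-=-\theta_a^+$ follows because the eigenvalues $\{e^{\pm2\htheta_a}\}$ of $L(0)$, which set the asymptotic velocities, are the same for both time directions while the sign of $t$ in $e^{2t\sinh(\bsTheta^+)}$ is reversed, so the roles of the exponentially growing and decaying eigenvalues interchange.

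The main obstacle I anticipate is purely bookkeeping rather than conceptual: one must check carefully that the normalization conventions match those of \cite[Theorem A2]{Ru88} — in particular that the conjugating matrix $\cW$ was introduced precisely so that the exponent $\sinh(\bsTheta^+)$ has its eigenvalues in the \emph{correct} (decreasing) order demanded by that theorem, and that the positivity of the relevant principal minors of $\tilde L$ is genuinely guaranteed by $\tilde L\in\exp(\mfp)$ being positive definite. A secondary point requiring a line of justification is that the convergence $u_a(\lambda(t))\to\cosh(\theta_a^+)$ and hence $\theta_a(t)\to\theta_a^+$ is legitimate: this is immediate once one knows $|\lambda_a(t)-\lambda_b(t)|\to\infty$ and $\lambda_n(t)\to\infty$, which in turn follow from the strict ordering $\theta_1^+>\dots>\theta_n^+>0$ established via Lemma \ref{LEMMA_regularity}. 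Everything else is a direct transcription of the setup already displayed in \eqref{L(0)_diagonalized}--\eqref{lambda_asymptotics_t_to_infty}.
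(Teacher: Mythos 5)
Your proposal follows essentially the same route as the paper's own derivation in Subsection \ref{subsec:4.3.5}: diagonalize $L(0)$ using Lemma \ref{LEMMA_regularity}, pass via Theorem \ref{THEOREM_eigenvalue_dynamics} and the conjugation by $\cW$ to the flow $\tilde{L} e^{2t\sinh(\bsTheta^+)}$, and apply Ruijsenaars' theorem on exponential matrix flows to read off \eqref{asymptotic_result} and \eqref{asymptotic_momenta_conds}. The only slip is the claim $u_a(\lambda(t))\to\cosh(\theta_a^+)$; in fact $u_a(\lambda(t))\to 1$ as the particles separate, and combining this with $\dot{\lambda}_a(t)\to\sinh(\theta_a^+)$ in \eqref{lambda_dot} gives $\theta_a(t)\to\theta_a^+$ exactly as you conclude, so the argument is unaffected.
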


We find it quite remarkable that, up to an overall sign, the asymptotic 
momenta are preserved \eqref{asymptotic_momenta_conds}. Following 
Ruijsenaars' terminology \cite{Ru88,Ru90}, we may say that the $2$-parameter family of van Diejen systems \eqref{H} are
\emph{finite dimensional pure soliton systems}. Now, let us remember that 
for each pure soliton system analysed in the earlier literature, the 
scattering map has a factorized form. That is, the $n$-particle scattering 
can be completely reconstructed from the $2$-particle processes, and also 
by the $1$-particle scattering on the external potential (see e.g. 
\cite{Ku76,Mo77,Ru88,Ru90,Pu13}). Albeit the results we shall present in rest of this chapter 
do not rely on this peculiar feature of the scattering process, still, it 
would be of considerable interest to prove this property for the hyperbolic 
van Diejen systems \eqref{H}, too. However, because of its subtleties, we 
wish to work out the details of the scattering theory in a later publication.

\section{Spectral invariants of the Lax matrix}
\label{sec:4.4}

The ultimate goal of this section is to prove that the eigenvalues of the 
Lax matrix $L$ \eqref{L} are in involution. Superficially, one could say
that it follows easily from the scattering theoretical results presented
in the previous section. A convincing argument would go as follows. Recalling 
the notations \eqref{trajectory} and \eqref{trajectory_init_cond}, let us 
consider the flow
\be
    \Phi \colon \R \times P \rightarrow P,
    \qquad
    (t, \gamma_0) \mapsto \Phi_t (\gamma_0) = \gamma(t)
\label{Phi_flow}
\ee
generated by the Hamiltonian vector field $\bsX_H$ \eqref{bsX_H}. Since 
for all $t \in \R$ the map $\Phi_t \colon P \rightarrow P$ is a 
symplectomorphism, for all $a ,b \in \bN_n$ we can write that
\be
    \{ \theta_a \circ \Phi_t, \theta_b \circ \Phi_t \}
    = \{ \theta_a, \theta_b \} \circ \Phi_t 
    = 0.
\label{theta_and_Phi}
\ee
On the other hand, from \eqref{asymptotic_result} it is also clear that 
at each point of the phase space $P$, for all $c \in \bN_n$ we have
\be
    \theta_c \circ \Phi_t \to \theta_c^+
    \qquad
    (t \to \infty).    
\label{theta_and_theta_+}
\ee
Recalling \eqref{htheta_smooth} and \eqref{theta_+}, it is evident that 
$\theta_c^+ \in C^\infty(P)$. Therefore, by a `simple interchange of limits', 
from \eqref{theta_and_Phi} and \eqref{theta_and_theta_+} one could infer 
that the asymptotic momenta $\theta_c^+$ $(c \in \bN_n)$ Poisson commute. 
Bearing in mind the relationships \eqref{theta_+} and \eqref{spec_L}, it 
would also follow that the eigenvalues of $L$ \eqref{L} generate a maximal 
Abelian Poisson subalgebra. However, to justify the interchange of limits, 
one does need a deeper knowledge about the scattering properties than the 
pointwise limit formulated in \eqref{theta_and_theta_+}. Since we wish 
to work out the full scattering theory elsewhere, in this chapter we choose an 
alternative approach by merging the temporal asymptotics of the trajectories 
with van Diejen's earlier results \cite{vD95,vD94,vD95-2}. 

\subsection{Link to the $5$-parameter family of van Diejen systems}
\label{subsec:4.4.1}

As is known from the seminal papers \cite{vD94,vD95-2}, the definition of the classical hyperbolic van Diejen 
system is based on the smooth functions 
$v, w \colon \R \setminus \{ 0 \} \rightarrow \bC$ defined by the formulae
\be
    v(x) = \frac{\sinh(\ri g + x)}{\sinh(x)},
    \quad
    w(x) = \frac{\sinh(\ri g_0 + x)}{\sinh(x)} 
            \frac{\cosh(\ri g_1 + x)}{\cosh(x)}
            \frac{\sinh(\ri g'_0 + x)}{\sinh(x)}
            \frac{\cosh(\ri g'_1 + x)}{\cosh(x)},
\label{v-w}
\ee
where the five independent real numbers $g$, $g_0$, $g_1$, $g_0'$, $g_1'$ 
are the coupling constants. Parameter $g$ in the `potential' function $v$ 
controls the strength of inter-particle interaction, whereas the remaining 
four constants appearing in the `external potential' $w$ are responsible 
for the influence of the ambient field. Conforming to the notations 
introduced in the aforementioned papers, let us recall that the set of 
Poisson commuting functions found by van Diejen can be succinctly written as
\be
    H_l 
    = \sum_{\substack{J \subseteq \bN_n, \ \vert J \vert \leq l 
            \\ \eps_j = \pm 1, \ j \in J}}
                \cosh(\theta_{\eps J}) 
                \vert V_{\eps J; J^c} \vert
                U_{J^c, l - \vert J \vert}
    \qquad
    (l \in \bN_n),
\label{H_vD}
\ee
where the various constituents are defined by the formulae
\be
    \theta_{\eps J} = \sum_{j \in J} \eps_j \theta_j,
    \quad
    V_{\eps J; J^c} 
    = \prod_{j \in J} w(\eps_j \lambda_j)
        \prod_{\substack{j, j' \in J \\ (j < j')}}
            v(\eps_j \lambda_j + \eps_{j'} \lambda_{j'})^2
        \prod_{\substack{j \in J \\ k \in J^c}}
            v(\eps_j \lambda_j + \lambda_k) 
            v(\eps_j \lambda_j - \lambda_k),
\label{V}
\ee
together with the expression
\be
    U_{J^c, l - \vert J \vert}
    = (-1)^{l - \vert J \vert} 
    \mkern-25mu 
    \sum_{\substack{I \subseteq J^c, \ \vert I \vert = l - \vert J \vert 
                    \\ \eps_i = \pm 1, \ i \in I}}
        \prod_{i \in I} w(\eps_i \lambda_i)
        \mkern-5mu 
        \prod_{\substack{i, i' \in I \\ (i < i')}}
            \vert v(\eps_i \lambda_i + \eps_{i'} \lambda_{i'}) \vert^2
        \mkern-5mu 
        \prod_{\substack{i \in I \\ k \in J^c \setminus I}}
        \mkern-5mu
            v(\eps_i \lambda_i + \lambda_k) 
            v(\eps_i \lambda_i-\lambda_k).
\label{U}
\ee
At this point two short technical remarks are in order. First, we extend 
the family of the first integrals \eqref{H_vD} with the constant function 
$H_0 = 1$. Analogously, in the last equation \eqref{U} it is understood 
that $U_{J^c, 0} = 1$. 

To make contact with the $2$-parameter family of van Diejen systems of 
our interest \eqref{H}, for the coupling parameters of the potential 
functions \eqref{v-w} we make the special choice 
\be
    g = \mu,
    \quad
    g_0 = g_1 = \frac{\nu}{2},
    \quad
    g'_0 = g'_1 = 0.
\label{2parameters}
\ee
Under this assumption, from the definitions \eqref{z_a} and \eqref{V} it is
evident that with the singleton $J = \{ a \}$ we can write that
\be
    V_{\{ a \}; \{ a \}^c} = -z_a
    \qquad 
    (a \in \bN_n).
\label{V-z}
\ee
Giving a glance at \eqref{U}, it is also clear that the term corresponding 
to $J = \emptyset$ in the defining sum of $H_1$ \eqref{H_vD} is a constant
function of the form
\be
    U_{\bN_n, 1} = 2 \sum_{a = 1}^n \Real(z_a)
    = - 2 \cos \left( \nu + (n - 1) \mu \right) 
        \frac{\sin(n \mu)}{\sin(\mu)}.
\label{U-z}
\ee
Plugging the above formulae into van Diejen's main Hamiltonian $H_1$ 
\eqref{H_vD}, one finds immediately that
\be
    H_1 + 2 \cos \big(\nu + (n - 1) \mu \big) \frac{\sin(n \mu)}{\sin(\mu)}
    = 2 H 
    =\tr(L).
\label{H1_vs_H}
\ee
That is, up to some irrelevant constants, our Hamiltonian $H$ \eqref{H} 
can be identified with $H_1$ \eqref{H_vD}, provided the coupling parameters 
are related by the equations displayed in \eqref{2parameters}. At this 
point one may suspect that the quantities $\tr(L^l)$ are also expressible 
with the aid of the Poisson commuting family of functions $H_l$ \eqref{H_vD}. 
Clearly, it would imply immediately that the eigenvalues of the Lax matrix 
$L$ \eqref{L} are in involution. However, due to the complexity of the 
underlying objects \eqref{V}-\eqref{U}, this naive approach would lead to 
a formidable combinatorial task, that we do not wish to pursue in this 
chapter. To circumvent the difficulties, below we rather resort to a clean 
analytical approach by exploiting the scattering theoretical results 
formulated in the previous section.

\subsection{Poisson brackets of the eigenvalues of $L$}
\label{subsec:4.4.2}

Take an arbitrary point $\gamma_0 \in P$ and consider the unique maximal 
integral curve 
\be
    \R \ni t \mapsto \gamma(t) = (\lambda(t), \theta(t)) \in P
\label{curve_gamma}
\ee
of the Hamiltonian vector field $\bsX_H$ \eqref{bsX_H} satisfying the 
initial condition 
\be
    \gamma(0) = \gamma_0.
\label{init_cond}
\ee 
Since the functions $H_l$ \eqref{H_vD} are first integrals of the dynamics, 
their values at the point $\gamma_0$ can be recovered by inspecting the 
limit of $H_l (\gamma(t))$ as $t \to \infty$. Now, recalling the potentials 
\eqref{H_vD} and the specialization of the coupling parameters 
\eqref{2parameters}, it is evident that
\be
    \lim_{x \to \pm \infty} v(x) = e^{\pm \ri \mu}
    \quad \text{and} \quad
    \lim_{x \to \pm \infty} w(x) = e^{\pm \ri \nu}.
\label{lim_v_w}
\ee
Therefore, taking into account the regularity properties 
\eqref{asymptotic_momenta_conds} of the asymptotic momenta $\theta_c^+$
\eqref{asymptotic_result}, from Lemma \ref{LEMMA_asymptotics} and the 
definitions \eqref{H_vD}-\eqref{U} one finds immediately that
\be
    H_l(\gamma_0)
    = \lim_{t \to \infty} H_l (\gamma(t))
    = \sum_{\substack{J \subseteq \bN_n, \ \vert J \vert \leq l 
                        \\ \eps_j = \pm 1, \ j \in J}}
        \cosh(\theta_{\eps J}^+) \ \cU_{J^c, l - \vert J \vert}
    \qquad
    (l \in \bN_n),
\label{H_l_gamma_0}
\ee
where
\be
    \cU_{J^c, l - \vert J \vert}
    = (-1)^{l - \vert J \vert}
        \sum_{\substack{I \subseteq J^c, \ \vert I \vert = l - \vert J \vert 
                        \\ \eps_j = \pm 1, \ j \in I}}
        \prod_{j \in I} e^{\eps_j \ri \nu}
        \prod_{\substack{j \in I, \ k \in J^c \setminus I \\ (j < k)}} 
            e^{\eps_j 2 \ri \mu}.
\label{cU}
\ee
By inspecting the above expression, let us observe that the value of 
$\cU_{J^c, l - \vert J \vert}$ does \emph{not} depend on the specific 
choice of the subset $J$, but only on its cardinality $\vert J \vert$. 
More precisely, if $J \subseteq \bN_n$ is an arbitrary subset of cardinality 
$\vert J \vert = k$ $(0 \leq k \leq l - 1)$, then we can write that
\be
    \cU_{J^c, l - \vert J \vert}
    = (-1)^{l - k}
        \sum_{\substack{1 \leq j_1 < \dots < j_{l - k} \leq n - k 
                        \\ \eps_1 = \pm 1, \dots, \eps_{l - k} = \pm 1}}
            \exp 
                \left(
                    \ri \sum_{m = 1}^{l - k} 
                        \eps_m \left( \nu + 2(n - l + m - j_m) \mu \right)
                \right).
\label{cU_ell-k}
\ee

To proceed further, let us now turn to the study of the Lax matrix $L$ 
\eqref{L}. Due to the Lax representation of the dynamics that we established 
in Theorem \ref{THEOREM_Lax_representation}, the eigenvalues of $L$ are 
conserved quantities. Consequently, the coefficients 
$K_0, K_1, \ldots, K_N \in C^\infty(P)$ of the characteristic polynomial 
\be
    \det(L - y \bsone_N) = \sum_{m = 0}^N K_{N - m} y^m
    \qquad
    (y \in \bC)
\label{L_kar_poly}
\ee
are also first integrals. As expected, the special algebraic properties 
of $L$ formulated in Proposition \ref{PROPOSITION_L_in_G} and Lemma 
\ref{LEMMA_L_in_exp_p} have a profound impact on these coefficients 
as well, as can be seen from the relations 
\be
    K_{N - m} = K_m
    \qquad 
    (m = 0, 1, \dots, N).
\label{K_m-symmetry}
\ee
So, it is enough to analyze the properties of the members 
$K_0 = 1, K_1, \ldots, K_n$. In this respect the most important ingredient 
is the relationship
\be
    \lim_{t \to \infty} L(\gamma(t)) = \exp(\bsTheta^+),
\label{asymptotic-Lax}
\ee
where $\bsTheta^+$ is the $N \times N$ diagonal matrix \eqref{bsTheta_+} 
containing the asymptotic momenta. Therefore, looking back to the definition
\eqref{L_kar_poly}, for any $m = 0, 1, \dots, n$ we obtain at once that 
\be
    K_m(\gamma_0)
    = \lim_{t \to \infty} K_m(\gamma(t))
    = (-1)^m
        \sum_{a = 0}^{\genfrac{\lfloor}{\rfloor}{}{}{m}{2}}
        \sum_{\substack{J \subseteq \bN_n, \ \vert J \vert = m - 2 a 
                        \\ \eps_j = \pm 1, \ j \in J}}
        \binom{n - \vert J \vert}{a} \cosh(\theta_{\eps J}^+).
\label{K_m_gamma_0}
\ee
Based on the formulae \eqref{H_l_gamma_0} and \eqref{K_m_gamma_0}, we can
prove the following important technical result.

\begin{lemma}
\label{LEMMA_linear_relation}
The two distinguished families of first integrals $\{ H_l \}_{l = 0}^n$ 
and $\{ K_m \}_{m = 0}^n$ are connected by an invertible linear relation with 
purely numerical coefficients depending only on the coupling parameters 
$\mu$ and $\nu$.
\end{lemma}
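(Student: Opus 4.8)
The plan is to exploit the two explicit asymptotic formulas for the conserved quantities, namely \eqref{H_l_gamma_0}–\eqref{cU} for van Diejen's Hamiltonians $H_l$ and \eqref{K_m_gamma_0} for the characteristic polynomial coefficients $K_m$, and to show that both families are triangular linear combinations of a common set of ``elementary'' functions of the asymptotic momenta. Concretely, for $k=0,1,\dots,n$ introduce the building blocks
\be
\cE_k(\gamma_0)=\sum_{\substack{J\subseteq\bN_n,\ |J|=k\\ \eps_j=\pm1,\ j\in J}}\cosh(\theta_{\eps J}^+),
\label{cE_k_def}
\ee
which are nothing but the $W(\BC_n)$-invariant power-type symmetric functions built from the asymptotic momenta. (These are genuine smooth functions on $P$ by \eqref{htheta_smooth} and \eqref{theta_+}, hence so are the $H_l$ and $K_m$, and all of the identities below, once verified at an arbitrary point $\gamma_0$, hold identically on $P$.) The first step is to rewrite both \eqref{H_l_gamma_0} and \eqref{K_m_gamma_0} in the form
\be
H_l=\sum_{k=0}^l a_{l,k}(\mu,\nu)\,\cE_k,\qquad
K_m=\sum_{k=0}^m b_{m,k}(\mu,\nu)\,\cE_k,
\label{triangular_relations}
\ee
where $a_{l,k}$ and $b_{m,k}$ are purely numerical coefficients. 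For $K_m$ this is immediate from \eqref{K_m_gamma_0}: grouping the sum over $J$ by $|J|=m-2a$ gives $b_{m,k}=(-1)^m\binom{n-k}{a}$ when $k=m-2a$ for some integer $a\ge0$, and $b_{m,k}=0$ otherwise; in particular $b_{m,m}=(-1)^m\neq0$. For $H_l$ one uses that, by the observation following \eqref{cU}, the coefficient $\cU_{J^c,l-|J|}$ depends on $J$ only through $|J|$, so that $a_{l,k}$ equals the common value of $\cU_{J^c,l-k}$ given explicitly by \eqref{cU_ell-k}; the diagonal term is $a_{l,l}=\cU_{J^c,0}=1$.

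The second step is to invert these two triangular systems. Since both coefficient matrices $(a_{l,k})_{0\le k\le l\le n}$ and $(b_{m,k})_{0\le k\le m\le n}$ are lower triangular with nonzero diagonal entries ($1$ and $(-1)^m$ respectively), they are invertible over $\R$ (indeed over $\Z[\cos,\sin]$-valued entries), with inverses again lower triangular. Hence the families $\{H_l\}_{l=0}^n$, $\{K_m\}_{m=0}^n$ and $\{\cE_k\}_{k=0}^n$ all span the same linear space, and composing the transition matrices yields the desired invertible linear relation
\be
K_m=\sum_{l=0}^n c_{m,l}(\mu,\nu)\,H_l,\qquad m=0,1,\dots,n,
\label{final_linear_relation}
\ee
with numerical coefficients $c_{m,l}(\mu,\nu)$ that depend only on $\mu,\nu$, and whose inverse expresses each $H_l$ as a linear combination of the $K_m$. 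The only subtle point here is that the $\cE_k$ must be linearly independent as functions on $P$; but this is clear because by Lemma \ref{LEMMA_asymptotics} and \eqref{theta_order} the asymptotic momenta $\theta_1^+>\dots>\theta_n^+>0$ sweep out an open set as $\gamma_0$ varies, and the $\BC_n$-power sums are independent symmetric functions on that chamber (equivalently, the matrix $X$ of Proposition \ref{prop:2.17} already verifies this kind of Vandermonde-type nonvanishing).

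The main obstacle — really the only place where care is needed — is verifying that the coefficients $a_{l,k}$ extracted from \eqref{cU_ell-k} are genuinely independent of the particular subset $J$ of cardinality $k$, and that the resulting triangular matrix is correctly identified; this is a bookkeeping argument about the index structure of the sums \eqref{cU}, already flagged in the text, and I would present it by fixing $|J|=k$, relabelling $J^c$ by $\{1,\dots,n-k\}$, and checking that the exponent $\sum_m\eps_m(\nu+2(n-l+m-j_m)\mu)$ in \eqref{cU_ell-k} indeed makes no reference to $J$ beyond $k=|J|$. Everything else is elementary linear algebra over triangular matrices. Once \eqref{final_linear_relation} is in hand, the functional independence of the $K_m$ (equivalently of the $\cE_k$) together with van Diejen's result that $\{H_l\}$ Poisson commute will give, in the next theorem, that the eigenvalues of $L$ — which are elementary functions of the $K_m$ — form a commuting family of first integrals for \eqref{H}.
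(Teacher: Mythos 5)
Your proposal is correct and follows essentially the same route as the paper: the paper's proof introduces exactly your building blocks (there denoted $\cA_k$), writes both $H_l$ and $K_m$ as lower-triangular combinations of them with diagonal entries $1$ and $(-1)^m$ using \eqref{H_l_gamma_0}--\eqref{cU_ell-k} and \eqref{K_m_gamma_0}, and composes the inverted triangular systems pointwise at an arbitrary $\gamma_0$. (Your worry about linear independence of the $\cE_k$ is not needed, since the relation is established as a pointwise identity with fixed numerical matrices, but it does no harm.)
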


\begin{proof}
For brevity, let us introduce the notation
\be
    \cA_k 
    = \sum_{\substack{J \subseteq \bN_n, \ \vert J \vert = k 
                        \\ \eps_j = \pm 1, \ j \in J}}
            \cosh(\theta_{\eps J}^+)
    \qquad 
    (k = 0, 1, \dots, n).
\label{cA}
\ee
As we have seen in \eqref{cU_ell-k}, the coefficients 
$\cU_{J^c, l - \vert J \vert}$ appearing in the formula \eqref{H_l_gamma_0} 
depend only on the cardinality of $J$, whence for any 
$l \in \{ 0, 1, \dots, n \}$ we can write that
\be
    H_l(\gamma_0) = \sum_{k = 0}^l \cU_{\bN_{n - k}, l - k} \cA_k.
\label{H_vs_cA}
\ee
Since $\cU_{\bN_{n - l}, 0} = 1$, the matrix transforming
$\{ \cA_k \}_{k = 0}^n$ into $\{ H_l(\gamma_0) \}_{l = 0}^n$ is lower 
triangular with plus ones on the diagonal, whence the above linear 
relation \eqref{H_vs_cA} is invertible. Comparing the formulae 
\eqref{K_m_gamma_0} and \eqref{cA}, it is also clear that 
\be
    K_m(\gamma_0)
    = (-1)^m \sum_{a = 0}^{\genfrac{\lfloor}{\rfloor}{}{}{m}{2}}
                \binom{n - (m - 2 a)}{a} \cA_{m - 2 a},
\label{K_vs_cA}
\ee
which in turn implies that the matrix relating $\{ \cA_k \}_{k = 0}^n$ to 
$\{ K_m(\gamma_0) \}_{m = 0}^n$ is lower triangular with diagonal entries 
$\pm 1$. Hence the linear relationship \eqref{K_vs_cA} is also invertible. 
Putting together the above observations, it is clear that there is an 
invertible $(n + 1) \times (n + 1)$ matrix $\cC$ with purely numerical 
entries $\cC_{m, l}$ depending only on $\mu$ and $\nu$ such that
\be
    K_m(\gamma_0) = \sum_{l = 0}^n \cC_{m, l} H_l(\gamma_0).
\label{K_vs_H}
\ee
Since $\gamma_0$ is an arbitrary point of the phase space $P$ \eqref{P}, 
the Lemma follows.
\end{proof}

The scattering theoretical idea in the proof the above Lemma goes back 
to the fundamental works of Moser (see e.g. \cite{Mo75}). However, 
in the recent paper \citepalias{GF15} it has been revitalized in 
the context of the rational $\BC_n$ van Diejen model, too. Compared to the 
rational case, it is a significant difference that our coefficients 
$\cU_{J^c, l - \vert J \vert}$ \eqref{cU} do depend on the parameters 
$\mu$ and  $\nu$ in a non-trivial manner, whence the observations surrounding 
the derivations of formula \eqref{cU_ell-k} turns out to be crucial in our 
presentation.

Since the family of functions $\{ H_l \}_{l = 0}^n$ Poisson commute, Lemma 
\ref{LEMMA_linear_relation} readily implies that the first integrals 
$\{ K_m \}_{m = 0}^n$ are also in involution. Now, let us recall that the 
spectrum of the Lax matrix $L$ is simple, as we have seen in Lemma 
\ref{LEMMA_regularity}. As a consequence, the eigenvalues of $L$ can be 
realized as smooth functions of the coefficients of the characteristic 
polynomial \eqref{L_kar_poly}, thus the following result is immediate.

\begin{theorem}
\label{THEOREM_commuting_eigenvalues}
The eigenvalues of the Lax matrix $L$ \eqref{L} are in involution.
\end{theorem}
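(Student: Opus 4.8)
The plan is to deduce the involutivity of the eigenvalues of $L$ from Lemma \ref{LEMMA_linear_relation} together with the regularity result of Lemma \ref{LEMMA_regularity}. Recall that Lemma \ref{LEMMA_linear_relation} establishes an invertible linear relation $K_m = \sum_{l=0}^n \cC_{m,l} H_l$ with purely numerical coefficients $\cC_{m,l}$, where $\{H_l\}_{l=0}^n$ is van Diejen's family of commuting Hamiltonians \eqref{H_vD} and $\{K_m\}_{m=0}^n$ are the coefficients of the characteristic polynomial \eqref{L_kar_poly} of $L$. Since van Diejen proved that $\{H_l\}_{l=0}^n$ is a Poisson-commuting family, bilinearity of the Poisson bracket immediately yields $\{K_m, K_{m'}\} = \sum_{l,l'} \cC_{m,l}\cC_{m',l'}\{H_l,H_{l'}\} = 0$ for all $m,m' \in \{0,1,\dots,n\}$. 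Together with the symmetry relations \eqref{K_m-symmetry}, which express $K_{N-m}$ in terms of $K_m$, this shows that \emph{all} the coefficients $K_0,\dots,K_N$ of the characteristic polynomial of $L$ Poisson commute with one another.

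The second ingredient is to pass from the Poisson commutativity of the coefficients $K_m$ to that of the eigenvalues themselves. Here I would invoke Lemma \ref{LEMMA_regularity}: the spectrum of $L$ is simple at every point of the phase space $P$, of the form $\{e^{\pm 2\htheta_a} \mid a \in \bN_n\}$ with $\htheta_1 > \dots > \htheta_n > 0$. Because the spectrum is simple, the eigenvalues $e^{2\htheta_a}$ depend smoothly (locally) on the matrix entries of $L$, hence on the point of $P$; equivalently, each $\htheta_a$ is a smooth function on $P$ as already recorded in \eqref{htheta_smooth}. Moreover, on a neighbourhood of any given point one can express the functions $\htheta_a$ (or, say, $\cosh(2\htheta_a)$) smoothly in terms of the coefficients $K_1,\dots,K_n$ of the characteristic polynomial via the implicit function theorem applied to the degree-$N$ polynomial \eqref{L_kar_poly} with its $N$ distinct roots. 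Since each $\htheta_a$ is thus locally a smooth function of $(K_1,\dots,K_n)$ alone, and the $K_m$ mutually Poisson commute, the chain rule for Poisson brackets gives $\{\htheta_a, \htheta_b\} = 0$ locally, and by continuity of the Poisson bracket this holds globally on $P$. The eigenvalues $e^{\pm 2\htheta_a}$ are then likewise in involution.

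\textbf{Main obstacle.} The genuinely nontrivial content has already been packaged into Lemma \ref{LEMMA_linear_relation}, whose proof rests on the scattering-theoretic comparison of the temporal asymptotics of the trajectories (Lemma \ref{LEMMA_asymptotics}) with van Diejen's explicit Hamiltonians; everything downstream of that lemma is essentially formal. So within the proof of Theorem \ref{THEOREM_commuting_eigenvalues} itself, the only point requiring a small amount of care is the smooth-dependence argument connecting the eigenvalues to the characteristic polynomial coefficients — one must make sure that the local functional dependence of $\htheta_a$ on $(K_1,\dots,K_n)$ is legitimate, which is exactly where the simplicity of $\mathrm{Spec}(L)$ from Lemma \ref{LEMMA_regularity} is used (and is the reason the extra coupling condition \eqref{coupling_cond_extra} was imposed). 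I do not expect any genuine difficulty here; it is a standard implicit-function-theorem argument, so the write-up can be brief, citing Lemmas \ref{LEMMA_regularity} and \ref{LEMMA_linear_relation} and the remark \eqref{htheta_smooth}.
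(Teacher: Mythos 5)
Your proposal is correct and follows essentially the same route as the paper: involutivity of the coefficients $K_m$ via the invertible linear relation of Lemma \ref{LEMMA_linear_relation} with van Diejen's commuting family, then passage to the eigenvalues using the simplicity of $\mathrm{Spec}(L)$ from Lemma \ref{LEMMA_regularity} to realize them as smooth functions of the $K_m$. Your extra remarks on the implicit function theorem and \eqref{K_m-symmetry} merely spell out details the paper leaves implicit.
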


To conclude this section, let us note that the proof of Theorem 
\ref{THEOREM_commuting_eigenvalues} is quite indirect in the sense that
it hinges on the commutativity of the family of functions \eqref{H_vD}.
However, the only available proof of this highly non-trivial fact is
based on the observation that the Hamiltonians \eqref{H_vD} can be
realized as classical limits of van Diejen's commuting analytic difference 
operators \cite{vD95}. As a more elementary approach, let 
us note that Theorem \ref{THEOREM_commuting_eigenvalues} would also follow 
from the existence of an $r$-matrix encoding the tensorial Poisson bracket 
of the Lax matrix $L$ \eqref{L}. Due to Lemma \ref{LEMMA_linear_relation}, 
it would imply the commutativity of the family \eqref{H_vD}, too, at least 
under the specialization \eqref{2parameters}. To find such an $r$-matrix, 
one may wish to generalize the analogous results on the rational system 
\cite{Pu15}.

\section{Discussion}
\label{sec:4.5}

One of the most important objects in the study of integrable systems is the
Lax representation of the dynamics. By generalizing the earlier results on 
the rational $\BC_n$ RSvD models \cite{Pu11-2,Pu15},
in this chapter we succeeded in constructing a Lax pair for the $2$-parameter 
family of hyperbolic van Diejen systems \eqref{H}. Making use of this 
construction, we showed that the dynamics can be solved by a projection 
method, which in turn allowed us to initiate the study of the scattering 
properties of \eqref{H}. Moreover, by combining our scattering theoretical 
results with the ideas of the recent paper \citepalias{GF15}, we 
proved that the first integrals provided by the eigenvalues of the proposed 
Lax matrix \eqref{L} are in fact in involution. To sum up, it is fully 
justified to say that the matrices $L$ \eqref{L} and $B$ \eqref{B} form a 
Lax pair for the hyperbolic van Diejen system \eqref{H}. 

Apart from taking a non-trivial step toward the construction of Lax matrices 
for the most general hyperbolic van Diejen many-particle systems \eqref{H_vD}, 
let us not forget about the potential applications of our results. In analogy 
with the translation invariant RS systems, we expect that the van Diejen 
models may play a crucial role in clarifying the particle-soliton picture 
in the context of integrable boundary field theories. While the relationship 
between the $A$-type RS models and the soliton equations defined on the whole 
line is under control (see e.g. \cite{RS86,Ru88,BB93,Ru94,Ru95}), the link between the van Diejen 
models and the soliton systems defined on the half-line is less understood 
(see e.g. \cite{SSW95,KS94}). As in the translation 
invariant case, the Lax matrices of the van Diejen systems could turn out to 
be instrumental for elaborating this correspondence.

Turning to the more recent activities surrounding the CMS and the RS 
many-particle models, let us recall the so-called classical/quantum duality 
(see e.g. \cite{MTV11,ALTZ14,GZZ14,TZZ15,BLZZ16}), which 
relates the spectra of certain quantum spin chains with the Lax matrices of 
the classical CMS and RS systems. An equally remarkable development is the 
emergence of new integrable tops based on the Lax matrices of the CMS and 
the RS systems \cite{AASZ14,LOZ14}. Relatedly, 
it would be interesting to see whether the Lax matrix \eqref{L} of the 
hyperbolic van Diejen system \eqref{H} can be fit into these frameworks.

One of the most interesting aspects of the CMS and the RSvD systems we 
have not addressed in this chapter is the so-called Ruijsenaars duality, or 
action-angle duality. Based on hard analytical techniques, this remarkable 
property was first exhibited by Ruijsenaars \cite{Ru88} in the 
context of the translation invariant non-elliptic models. Let us note 
that in the recent papers \cite{FK09,FA10,FK11,FK12} almost all of these duality 
relationships have been successfully reinterpreted in a nice geometrical 
framework provided by powerful symplectic reduction methods. Moreover, 
by now some duality results are available also for the CMS and the RSvD 
models associated with the $\BC$-type root systems \cite{Pu11-2,Pu12} \citepalias{FG14}.

As for the key player of this chapter, we have no doubt that the $2$-parameter
family of hyperbolic van Diejen systems \eqref{H} is self-dual. Indeed, upon 
diagonalizing the Lax matrix $L$ \eqref{L}, we see that the transformed 
objects defined in \eqref{L_diagonalized}-\eqref{hat_L_and_hat_F} obey the 
relationship \eqref{hat_commut_rel}, that has the same form as the Ruijsenaars 
type commutation relation \eqref{commut_rel} we set up in Lemma 
\ref{LEMMA_commut_rel}. Based on the method presented in \cite{Ru88}, 
we expect that the transformed matrix $\hat{L}$ \eqref{hat_L_and_hat_F} shall 
provide a Lax matrix for the dual system. Therefore, comparing the matrix 
entries displayed in \eqref{L} and \eqref{hat_L_entries}, the self-duality 
of the system \eqref{H} seems to be more than plausible. Admittedly, many 
subtle details are still missing for a complete proof. As for filling these 
gaps, the immediate idea is that either one could mimic Ruijsenaars' 
scattering theoretical approach, or invent an appropriate symplectic reduction 
framework. However, notice that the non-standard form of the Hamiltonian 
\eqref{H} poses severe analytical difficulties on the study of the scattering 
theory, whereas the weakness of the geometrical approach lies in the fact that 
up to now even the translation invariant hyperbolic RS model has not been 
derived from symplectic reduction. Nevertheless, by taking the analytical 
continuation of the Lax matrix $L$ \eqref{L}, it is conceivable that the 
self-duality of the compactified trigonometric version of \eqref{H} can be 
proved by adapting the quasi-Hamiltonian reduction approach advocated by 
Feh\'er and Klim\v{c}\'ik \cite{FK12}. For further 
motivation, let us recall that the duality properties are indispensable in 
the study of the recently introduced integrable random matrix ensembles 
\cite{BGS09,BGS11,FG15-2}, too.

\chapter[{Trigonometric and elliptic Ruijsenaars-Schneider models on $\mathbb{CP}^{n-1}$}]{Trigonometric and elliptic Ruijsenaars-\\Schneider models on $\mathbb{CP}^{n-1}$}
\label{chap:5}

Following \citepalias{FG16-2}, we present a direct construction of compact real forms of the trigonometric and elliptic $n$-particle Ruijsenaars-Schneider systems whose completed center-of-mass phase space is the complex projective space $\CP^{n-1}$ with the Fubini-Study symplectic structure. These systems are labelled by an integer $p\in\{1,\dots,n-1\}$ relative prime to $n$ and a coupling parameter $y$, which can vary in a certain punctured interval around $p\pi/n$. Our work extends Ruijsenaars's pioneering study of compactifications that imposed the restriction $0<y<\pi/n$, and also builds on an earlier derivation of more general compact trigonometric systems by Hamiltonian reduction.

The phase spaces of the particle systems we encountered so far are usually the cotangent bundles of the configuration spaces, hence they are never compact due to the infinite range
of the canonical momenta. For example, the standard Ruijsenaars-Schneider Hamiltonian depends on the momenta $\phi_k$ through the function $\cosh(\phi_k)$, but by analytic continuation this may be replaced by $\cos(\phi_k)$, which effectively compactifies the momenta on a circle. If the dependence on the position variables $x_k$ is also through a periodic function, then the phase space of the system can be taken to be bounded. This possibility was examined in \cite{Ru95}, where the Hamiltonian
\begin{equation}
H=\sum_{k=1}^n\cos(\phi_k)\sqrt{\prod_{\substack{j=1\\(j\neq k)}}^n
\bigg[1-\frac{\sin^2y}{\sin^2(x_j-x_k)}\bigg]}
\label{5.1}
\end{equation}
containing a real coupling parameter $0<y<\pi/2$ was considered.
Ruijsenaars called this the III$_\b$ system, with III referring to the trigonometric character
of the interaction, as in \cite{OP81}, and the suffix standing for `bounded'. (One may also
introduce the deformation parameter $\beta$ into the III$_\b$ system, by replacing $\phi_k$ by
$\beta\phi_k$.) The domain of the `angular position variables'
$\{(x_1,\dots,x_n)\}\subset[0,\pi]^n$ must be restricted in such a way that the
Hamiltonian \eqref{5.1} is real and smooth. This may be ensured by prescribing
\begin{equation}
x_{i+1}-x_i>y\quad(i=1,\dots,n-1),\quad x_n-x_1<\pi-y,
\label{I2}
\end{equation}
which obviously implies Ruijsenaars's condition
\begin{equation}
0<y<\frac{\pi}{n}.
\label{I3}
\end{equation}
Although the Hamiltonian is then real, its flow is not complete on the naive phase space, because
it may reach the boundary $x_{k+1}-x_k=y$ (with $x_{k+n}\equiv x_k+\pi$) at finite time \cite{Ru95}.
Completeness of the commuting flows is a crucial property of any bona fide integrable system, but
one cannot directly add the boundary to the phase space because that would not yield
a smooth manifold. One of the seminal results of \cite{Ru95} is the solution of this
conundrum. In fact, Ruijsenaars constructed a symplectic embedding of the
center-of-mass phase space of the system into the complex projective space $\CP^{n-1}$, such
that the image of the embedding is a dense open submanifold and the Hamiltonian \eqref{5.1} as well as its
commuting family extend to smooth functions on the full $\CP^{n-1}$.
As $\CP^{n-1}$ is compact, the corresponding Hamiltonian flows are complete.
The resulting `compactified trigonometric RS system' has been studied at the classical level
in detail \cite{Ru95}, and after an initial exploration of the rank 1 case \cite{Ru90},
its quantum mechanical version was also solved \cite{vDV98}. These classical systems are self-dual
in the sense that their position and action variables can be exchanged by a canonical transformation
of order 4, somewhat akin to the mapping $(x,\phi) \mapsto (-\phi, x)$ for a free particle, and
their quantum mechanical versions enjoy the bispectral property \cite{Ru90,vDV98}.

The possibility of an analogous compactification
of the elliptic RS system having the Hamiltonian
\begin{equation}
H=\sum_{k=1}^n\cos(\phi_k)\sqrt{\prod_{\substack{j=1\\(j\neq k)}}^n
\big[\ws(y)^2\big(\wp(y)-\wp(x_j-x_k)\big)\big]}
\label{I5}
\end{equation}
with
 functions
$\wp$ \eqref{wp} and $\ws$ \eqref{sigma}
was pointed out in \cite{Ru90,Ru99}, but it was not
described in detail.

Even though it was only proved  \cite{Ru95} that the restrictions \eqref{I2}, \eqref{I3}
are sufficient to allow compactification, equation \eqref{I3} was customarily mentioned
in the literature  \cite{FK12,GN95,Ru90-2,Ru99,vDV98} as a necessary condition for the systems to make sense.
However, in a recent work \cite{FKl14} a completion of the III$_\b$ system on a compact phase space was
obtained for any generic parameter
\begin{equation}
0<y<\pi.
\label{I4}
\end{equation}
The paper \cite{FKl14} relied on deriving compactified RS systems in the center-of-mass frame
via reduction of a `free system' on the quasi-Hamiltonian \cite{AMM98} double $\SU(n) \times \SU(n)$.
This was achieved by setting
the relevant group-valued moment map equal to the constant matrix
$\mu_0(y)=\diag(e^{2\ri y},\dots,e^{2\ri y},e^{-2(n-1)\ri y})$,
and it makes perfect sense for any (generic) $y$.
The corresponding domain of the position variables depends on $y$ and differs
from the one posited in \eqref{I2}.
The possibility to relax the condition \eqref{I3} on $y$  also appeared in \cite{BGS09}.

The principal motivation for our present work comes from the classification of the coupling parameter
$y$ found in \cite{FKl14}. Namely, it turned out that the reduction is applicable except for a finite
set of $y$-values, and the rest of the set $(0,\pi)$ decomposes into two subsets, containing so-called
type (i) and type (ii) $y$-values. The `main reduced Hamiltonian' always takes the III$_\b$ form
\eqref{5.1} on a dense open subset of the reduced phase space. In the type (i) cases the particles
cannot collide and the action variables of the reduced system naturally engender an isomorphism with
the Hamiltonian toric manifold $\CP^{n-1}$. In type (ii) cases, that exist for any $n>3$, the reduction
constraints admit solutions $(a,b)\in\SU(n)\times\SU(n)$ for which the eigenvalues of $a$ or $b$ are
not all distinct, entailing that the particles of the reduced system can collide.
For a detailed exposition of these succinct statements, the reader may consult \cite{FKl14}.
We here only add the remark that the connected domain of the positions
always contains the equal-distance configuration
$x_{k+1} - x_k = \pi/n$ ($\forall k$)  for which
the number of negative factors in each product under the square root in \eqref{5.1}
is $2 \lfloor n y/\pi\rfloor$
 if $0<y < \pi/2$ and
 $2 \lfloor n (\pi -y)/\pi \rfloor$  if $\pi/2 < y <\pi$.

\section{Embedding of the local phase space into $\mathbb{CP}^{n-1}$}
\label{sec:5.1}

In this section we first recall the local phase space of the III$_\b$ model from \cite{FKl14},
and then present its symplectic embedding into $\CP^{n-1}$ in every type (i) case.

The III$_\b$ model can be thought of as $n$ interacting particles on the unit circle
with positions $\delta_k=e^{2\ri x_k}$.
We impose the condition $\prod_{k=1}^n\delta_k=1$, which means that we
work in the `center-of-mass frame',
and parametrize the positions as
\begin{equation}
\delta_1(\xi)=e^{\frac{2\ri}{n}\sum_{j=1}^nj\xi_j},\qquad
\delta_{k}(\xi)=e^{2\ri\xi_{k-1}}\delta_{k-1}(\xi),\quad k=2,\dots,n,
\label{delta}
\end{equation}
where $\xi$ belongs to a certain open subset $\cA_y^+$ inside the `Weyl alcove'
\begin{equation}
\cA=\{\xi\in\R^n\mid\xi_k\geq 0\ (k=1,\dots,n),\ \xi_1+\dots+\xi_n=\pi\}.
\label{Weyl}
\end{equation}
Note that $\cA$ is a simplex in the $(n-1)$-dimensional affine space
\begin{equation}
E=\{\xi\in\R^n\mid\xi_1+\dots+\xi_n=\pi\}.
\label{E}
\end{equation}
The local phase space can be described as the product manifold
\begin{equation}
P_y^\loc=\{(\xi,e^{\ri\theta})\mid\xi\in\cA_y^+,\ e^{\ri\theta}\in\T^{n-1}\},
\label{P-loc}
\end{equation}
where $\T^{n-1}$ is the $(n-1)$-torus, equipped with the standard symplectic form
\begin{equation}
\omega^\loc=\sum_{k=1}^{n-1}d\theta_k\wedge d\xi_k.
\label{om-loc}
\end{equation}
The dynamics is governed by the Hamiltonian
\begin{equation}
H_y^\loc(\xi,\theta)=\sum_{j=1}^n\cos(\theta_j-\theta_{j-1})\sqrt{ \prod_{m=j+1}^{j+n-1}
\biggl[ 1-\frac{\sin^2y}{\sin^2(\sum_{k=j}^{m-1}\xi_k)}\biggr]}.
\label{H_y^loc}
\end{equation}
Here, $\theta_0=\theta_n=0$ have been introduced and the indices are understood
modulo $n$, i.e.
\begin{equation}
\xi_{m+n}=\xi_m, \quad\forall m.
\label{perconv}
\end{equation}
The product under the square
root is positive for every $\xi\in\cA_y^+$, and thus $H_y^\loc\in C^\infty(P_y^\loc)$.
This model was considered in \cite{FKl14} for any $y$ chosen from the interval $(0,\pi)$
except the excluded values that satisfy $e^{2\ri my}=1$ for some $m=1,\dots,n$.

According to \cite{FKl14}, there are two different kinds of intervals for $y$ to be in,
named type (i) and (ii). The type (i) couplings can be described as follows. For a
fixed positive integer $n\geq 2$, choose $p\in\{1,\dots,n-1\}$ to be a coprime to
$n$, i.e. $\gcd(n,p)=1$, and let $q$ denote the multiplicative inverse of $p$ in
the ring $\Z_n$, that is $pq\equiv 1\pmod{n}$. Then the parameter $y$ can take its
values according to either
\begin{equation}
\bigg(\frac{p}{n}-\frac{1}{nq}\bigg)\pi<y<\frac{p\pi}{n}
\qquad\text{or}\qquad
\frac{p\pi}{n}<y<\bigg(\frac{p}{n}+\frac{1}{(n-q)n}\bigg)\pi.
\label{typeI-y}
\end{equation}
For such a type (i) parameter $y$, the local configuration space $\cA_y^+$
is the interior of a simplex $\cA_y$ in $E$ \eqref{E}
bounded by the hyperplanes
\begin{equation}
\xi_j+\dots+\xi_{j+p-1}=y,\quad j=1,\dots,n,
\label{hyperplanes}
\end{equation}
where \eqref{perconv} is understood.
To give a more detailed description of $\cA_y$, we introduce
\begin{equation}
M=p\pi-ny,
\label{M}
\end{equation}
and note that \eqref{typeI-y} gives $M>0$ and $M<0$, respectively.
Then any $\xi\in\cA_y$ must satisfy
\begin{equation}
\sgn(M)(\xi_j+\dots+\xi_{j+p-1}-y)\geq 0,\quad j=1,\dots,n.
\label{5.11}
\end{equation}
In terms of the particle coordinates $x_k$, which are ordered as $x_{k+1} \geq x_k$ and
extended by the convention $x_{k+n} = x_k + \pi$, the above condition says that
\begin{equation}
x_{j+p}-x_j\geq y\quad\text{if}\ M>0
\quad\text{and}\quad
x_{j+p}-x_j\leq y\quad\text{if}\ M<0
\label{5.12}
\end{equation}
for every $j$.
Therefore the distances of the $p$-th neighbouring particles on the circle are constrained.
The $n$ vertices of the simplex $\cA_y$ are explicitly given in \cite[Proposition 11
and Lemma 8 \emph{op. cit.}]{FKl14}. Every vertex and thus $\cA_y$ itself lies inside the larger
simplex $\cA$ \eqref{Weyl}, entailing that $x_{j+1}-x_j$ possesses a positive lower bound
in each type (i) case.

The type (ii) cases correspond to those admissible $y$-values that do not satisfy
\eqref{typeI-y} for any $p$ relative prime to $n$. In such cases $\cA_y^+$ has a different
structure \cite{FKl14}. Type (ii) cases exist for every $n\geq 4$. See Figure \ref{fig:7}
for an illustration.

\begin{figure}[h!]
\centering
\begin{tikzpicture}
\def\s{.9\textwidth}
\def\r{.2em}
\draw(1em,1em) node{$n=4$};
\draw (0,0)--({\s/3},0) ({2*\s/3},0)--({\s},0);
\draw[dashed] ({\s/3},0)--({2*\s/3},0);
\draw[black,fill=white]
(0,0) circle(\r) node[below,yshift=-1mm]{$0$}
(\s/4,0) circle(\r) node[below,yshift=-1mm]{$\displaystyle\frac{1}{4}$}
(\s/3,0) circle(\r) node[below,yshift=-1mm]{$\displaystyle\frac{1}{3}$}
(\s/2,0) circle(\r) node[below,yshift=-1mm]{$\displaystyle\frac{1}{2}$}
(2*\s/3,0) circle(\r) node[below,yshift=-1mm]{$\displaystyle\frac{2}{3}$}
(3*\s/4,0) circle(\r) node[below,yshift=-1mm]{$\displaystyle\frac{3}{4}$}
(\s,0) circle(\r) node[below,yshift=-1mm]{$1$};
\draw(1em,-3em) node{$n=5$};
\draw (0,-4em)--(\s/4,-4em) (\s/3,-4em)--(2*\s/3,-4em) (3*\s/4,-4em)--(\s,-4em);
\draw[dashed] (\s/4,-4em)--(\s/3,-4em) (2*\s/3,-4em)--(3*\s/4,-4em);
\draw[black,fill=white]
(0,-4em) circle(\r) node[below,yshift=-1mm]{$0$}
(\s/5,-4em) circle(\r) node[below,yshift=-1mm]{$\displaystyle\frac{1}{5}$}
(\s/4,-4em) circle(\r) node[below,yshift=-1mm]{$\displaystyle\frac{1}{4}$}
(\s/3,-4em) circle(\r) node[below,yshift=-1mm]{$\displaystyle\frac{1}{3}$}
(2*\s/5,-4em) circle(\r) node[below,yshift=-1mm]{$\displaystyle\frac{2}{5}$}
(\s/2,-4em) circle(\r) node[below,yshift=-1mm]{$\displaystyle\frac{1}{2}$}
(3*\s/5,-4em) circle(\r) node[below,yshift=-1mm]{$\displaystyle\frac{3}{5}$}
(2*\s/3,-4em) circle(\r) node[below,yshift=-1mm]{$\displaystyle\frac{2}{3}$}
(3*\s/4,-4em) circle(\r) node[below,yshift=-1mm]{$\displaystyle\frac{3}{4}$}
(4*\s/5,-4em) circle(\r) node[below,yshift=-1mm]{$\displaystyle\frac{4}{5}$}
(\s,-4em) circle(\r) node[below,yshift=-1mm]{$1$};
\draw(1em,-7em) node{$n=6$};
\draw (0,-8em)--(\s/5,-8em) (4*\s/5,-8em)--(\s,-8em);
\draw[dashed] (\s/5,-8em)--(4*\s/5,-8em);
\draw[black,fill=white]
(0,-8em) circle(\r) node[below,yshift=-1mm]{$0$}
(\s/6,-8em) circle(\r) node[below,yshift=-1mm]{$\displaystyle\frac{1}{6}$}
(\s/5,-8em) circle(\r) node[below,yshift=-1mm]{$\displaystyle\frac{1}{5}$}
(\s/4,-8em) circle(\r) node[below,yshift=-1mm]{$\displaystyle\frac{1}{4}$}
(\s/3,-8em) circle(\r) node[below,yshift=-1mm]{$\displaystyle\frac{1}{3}$}
(2*\s/5,-8em) circle(\r) node[below,yshift=-1mm]{$\displaystyle\frac{2}{5}$}
(\s/2,-8em) circle(\r) node[below,yshift=-1mm]{$\displaystyle\frac{1}{2}$}
(3*\s/5,-8em) circle(\r) node[below,yshift=-1mm]{$\displaystyle\frac{3}{5}$}
(2*\s/3,-8em) circle(\r) node[below,yshift=-1mm]{$\displaystyle\frac{2}{3}$}
(3*\s/4,-8em) circle(\r) node[below,yshift=-1mm]{$\displaystyle\frac{3}{4}$}
(4*\s/5,-8em) circle(\r) node[below,yshift=-1mm]{$\displaystyle\frac{4}{5}$}
(5*\s/6,-8em) circle(\r) node[below,yshift=-1mm]{$\displaystyle\frac{5}{6}$}
(\s,-8em) circle(\r) node[below,yshift=-1mm]{$1$};
\draw(1em,-11em) node{$n=7$};
\draw (0,-12em)--(\s/6,-12em) (\s/4,-12em)--(\s/3,-12em) (2*\s/5,-12em)--(3*\s/5,-12em)
(2*\s/3,-12em)--(3*\s/4,-12em) (5*\s/6,-12em)--(\s,-12em);
\draw[dashed] (\s/6,-12em)--(\s/4,-12em) (\s/3,-12em)--(2*\s/5,-12em)
(3*\s/5,-12em)--(2*\s/3,-12em) (3*\s/4,-12em)--(5*\s/6,-12em);
\draw[black,fill=white]
(0,-12em) circle(\r) node[below,yshift=-1mm]{$0$}
(\s/7,-12em) circle(\r) node[below,yshift=-1mm]{$\displaystyle\frac{1}{7}$}
(\s/6,-12em) circle(\r) node[below,yshift=-1mm]{$\displaystyle\frac{1}{6}$}
(\s/5,-12em) circle(\r) node[below,yshift=-1mm]{$\displaystyle\frac{1}{5}$}
(\s/4,-12em) circle(\r) node[below,yshift=-1mm]{$\displaystyle\frac{1}{4}$}
(2*\s/7,-12em) circle(\r) node[below,yshift=-1mm]{$\displaystyle\frac{2}{7}$}
(\s/3,-12em) circle(\r) node[below,yshift=-1mm]{$\displaystyle\frac{1}{3}$}
(2*\s/5,-12em) circle(\r) node[below,yshift=-1mm]{$\displaystyle\frac{2}{5}$}
(3*\s/7,-12em) circle(\r) node[below,yshift=-1mm]{$\displaystyle\frac{3}{7}$}
(\s/2,-12em) circle(\r) node[below,yshift=-1mm]{$\displaystyle\frac{1}{2}$}
(4*\s/7,-12em) circle(\r) node[below,yshift=-1mm]{$\displaystyle\frac{4}{7}$}
(3*\s/5,-12em) circle(\r) node[below,yshift=-1mm]{$\displaystyle\frac{3}{5}$}
(2*\s/3,-12em) circle(\r) node[below,yshift=-1mm]{$\displaystyle\frac{2}{3}$}
(5*\s/7,-12em) circle(\r) node[below,yshift=-1mm]{$\displaystyle\frac{5}{7}$}
(3*\s/4,-12em) circle(\r) node[below,yshift=-1mm]{$\displaystyle\frac{3}{4}$}
(4*\s/5,-12em) circle(\r) node[below,yshift=-1mm]{$\displaystyle\frac{4}{5}$}
(5*\s/6,-12em) circle(\r) node[below,yshift=-1mm]{$\displaystyle\frac{5}{6}$}
(6*\s/7,-12em) circle(\r) node[below,yshift=-1mm]{$\displaystyle\frac{6}{7}$}
(\s,-12em) circle(\r) node[below,yshift=-1mm]{$1$};
\end{tikzpicture}
\caption{The range of $y/\pi$ for $n=4,5,6,7$. The displayed numbers are excluded values.
Admissible values of $y$ form intervals of type (i) (solid) and type (ii) (dashed) couplings.}
\label{fig:7}
\end{figure}
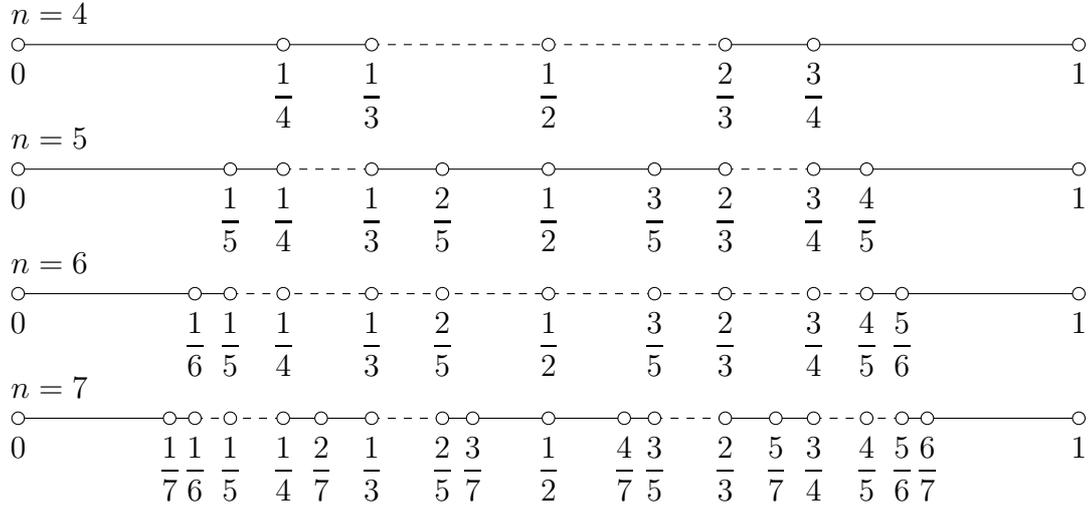

We further continue with the assumption that $y$ satisfies \eqref{typeI-y}.
Motivated by \cite{Ru95,FK12}, we now introduce the map
\begin{equation}
\cE\colon\cA_y^+\times\T^{n-1}\to\C^n,\quad
(\xi,e^{\ri\theta})\mapsto (u_1,\dots,u_n)
\label{Emap}
\end{equation}
with the complex coordinates having the squared absolute values
\begin{equation}
|u_j|^2=\sgn(M)(\xi_j+\dots+\xi_{j+p-1}-y),\quad j=1,\dots,n,
\label{u-abs-squared}
\end{equation}
and the arguments
\begin{equation}
\arg(u_j)=\sgn(M)\sum_{k=1}^{n-1}\Omega_{j,k}\theta_k,\quad j=1,\dots,n-1,
\qquad\arg(u_n)=0,
\label{argu}
\end{equation}
where the $\Omega_{j,k}$ ($j,k=1,\dots,n-1$) are integers chosen in such a way that
\begin{equation}
\cE^\ast\bigg(\ri\sum_{j=1}^nd\bar u_j\wedge du_j\bigg)
=\sum_{k=1}^{n-1}d\theta_k\wedge d\xi_k.
\label{cE-sympl}
\end{equation}
In order for \eqref{cE-sympl} to be achieved $\Omega$ has to be the inverse transpose of
the $(n-1)\times(n-1)$ coefficient matrix of $\xi_1,\dots,\xi_{n-1}$ extracted from eqs.
\eqref{u-abs-squared} by applying $\xi_1+\dots+\xi_n=\pi$. In other words, the squared
absolute values $|u_j|^2$ are written as
\begin{equation}
|u_j|^2=\begin{cases}
\sgn(M)\big(\sum_{k=1}^{n-1}A_{j,k}\xi_k-y\big),&\text{if}\ 1\leq j\leq n-p,\\
\sgn(M)\big(\sum_{k=1}^{n-1}A_{j,k}\xi_k-y+\pi\big),&\text{if}\ n-p<j\leq n-1,
\end{cases}
\label{uabs}
\end{equation}
where $A$ stands for the above-mentioned coefficient matrix, which has the components
\begin{equation}
A_{j,k}=\begin{cases}
+1,&\text{if}\ 1\leq j\leq n-p\ \text{and}\ j\leq k<j+p,\\
-1,&\text{if}\ n-p<j\leq n-1\ \text{and}\ j+p-n\leq k<j,\\
0,&\text{otherwise}.
\end{cases}
\label{A_j,k}
\end{equation}
A close inspection of the structure of $A$ reveals that
\begin{equation}
\det(A)=(-1)^{(n-p)(p-1)}\prod_{j=1}^{n-p}A_{j,j+p-1}\prod_{k=1}^{p-1}A_{n-p+k,k}
=(-1)^{(n-p+1)(p-1)}=+1,
\label{det(A)}
\end{equation}
therefore $\Omega=(A^{-1})^\top$ exists and consists of integers, as required in \eqref{argu}.
Next, we give $\Omega$ explicitly.

\begin{proposition}
\label{prop:5.1}
The transpose of the inverse of the matrix $A$ \eqref{A_j,k} can be written as
\begin{equation}
\Omega=B-C,
\label{B-C}
\end{equation}
where $B$ is a $(0,1)$-matrix of size $(n-1)$ with zeros along certain diagonals given by
\begin{equation}
B_{m,k}=\begin{cases}
0,&\text{if}\ k-m\equiv \ell p\pmod{n}\ \text{for some}\
\ell\in\{1,\dots,n-q\},\\
1,&\text{otherwise},
\end{cases}
\label{B_m,k}
\end{equation}
and $C$ is also a binary matrix of size $(n-1)$ with zeros along columns given by
\begin{equation}
C_{m,k}=\begin{cases}
0,&\text{if}\ k\equiv \ell p\pmod{n}\ \text{for some}\
\ell\in\{1,\dots,n-q\},\\
1,&\text{otherwise}.
\end{cases}
\label{C_m,k}
\end{equation}
\end{proposition}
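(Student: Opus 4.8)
The plan is to verify directly that the integer matrix $\Omega:=B-C$ obeys $A\Omega^\top=\1_{n-1}$; since $\det A=1$ by \eqref{det(A)} and $\Omega$ is integer‑valued (a difference of two $(0,1)$‑matrices), this is equivalent to $\Omega=(A^{-1})^\top$, which is the assertion. Spelling out the $(j,m)$‑entry of $A\Omega^\top$ and inserting the banded form \eqref{A_j,k} of $A$, the task reduces to the two partial column‑sum identities
\begin{equation*}
\sum_{k=j}^{j+p-1}\Omega_{m,k}=\delta_{j,m}\quad(1\le j\le n-p),\qquad
\sum_{k=j+p-n}^{j-1}\Omega_{m,k}=-\delta_{j,m}\quad(n-p<j\le n-1),
\end{equation*}
to be checked for all $m\in\{1,\dots,n-1\}$. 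In the first identity the summation window is a block of $p$ consecutive integers contained in $\{1,\dots,n-1\}$; in the second it is a block of $n-p$ consecutive ones; in neither case does the column index wrap, but the shift by $m$ will force us to read the entries of $B$ cyclically modulo $n$.

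The heart of the matter will be a counting lemma for the ``bad residue set'' $\cB=\{\,\ell p\bmod n\mid \ell=1,\dots,n-q\,\}$, which lies in $\{1,\dots,n-1\}$ (as $p$ is invertible mod $n$ and $n-q<n$, no $\ell p$ can be $\equiv0$) and has $n-q$ elements. For $t\in\Z_n$ put $f(t)=\#\bigl(\cB\cap\{t,t+1,\dots,t+p-1\}\bigr)$, the window read cyclically. I claim that $f(t)=c$ for $t\not\equiv0\pmod n$ while $f(0)=c-1$, where $c=(p(n-q)+1)/n$ is a positive integer. The proof is a telescoping argument: $f(t+1)-f(t)=[\,t+p\in\cB\,]-[\,t\in\cB\,]$, and setting $s:=qt\bmod n$ (using $pq\equiv1$) converts the two indicators into $[\,s\in\{1,\dots,n-q\}\,]$ and $[\,(s+1)\bmod n\in\{1,\dots,n-q\}\,]$. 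Running through the possible values of $s$ shows that this difference vanishes except at $s=0$, where it is $+1$, and at $s=n-q$, i.e.\ $t\equiv n-1$, where it is $-1$. Hence $f$ is constant along the steps $1\to2\to\dots\to n-1$ and jumps by $+1$ across $0\to1$ and by $-1$ across $(n-1)\to0$, which is precisely the claimed shape; the value of $c$ then follows by comparing $\sum_{t\in\Z_n}f(t)=(n-1)c+(c-1)$ with the double count $\sum_{t}f(t)=p\,|\cB|=p(n-q)$.

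Granting the lemma, the two identities follow by bookkeeping. Since $C_{m,k}$ depends only on the column and, by \eqref{C_m,k}, $C_{m,k}=0\iff k\in\cB$, the sum of $C_{m,\cdot}$ over a length‑$p$ block starting at $j$ equals $p-f(j)$; since, by \eqref{B_m,k}, $B_{m,k}=0\iff(k-m)\bmod n\in\cB$, the sum of $B_{m,\cdot}$ over the same block equals $p-f\bigl((j-m)\bmod n\bigr)$. Subtracting, the first window sum is $f(j)-f\bigl((j-m)\bmod n\bigr)$, and the lemma gives $\delta_{j,m}$ because $1\le j\le n-p$ forces $j\not\equiv0$ and $(j-m)\equiv0\pmod n$ exactly when $m=j$. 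For the second identity I would use that a cyclic window of length $n-p$ starting at $t$ is the complement in $\Z_n$ of the length‑$p$ window starting at $t-p$, so its $\cB$‑count is $(n-q)-f\bigl((t-p)\bmod n\bigr)$; the same subtraction then turns the length‑$(n-p)$ window sum into $f\bigl((j-m)\bmod n\bigr)-f(j)$, which is $-\delta_{j,m}$ by the lemma.

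I expect the combinatorial lemma to be the main obstacle — in particular pinning down exactly the two values of $s$ at which $f$ jumps, and checking that the two jump signs are globally consistent around the cycle so that a single constant $c$ indeed works. Some care is also needed for the degenerate case $q=1$ (equivalently $p=1$), where $\cB=\Z_n\setminus\{0\}$ and the windows have length $1$, and for the precise matching of the bands of $A$ in \eqref{A_j,k} to windows of $B-C$ (which block starts where, and the sign $-1$ on the lower bands). Once these points are handled, the rest is routine index chasing.
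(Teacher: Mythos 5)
Your proposal is correct and follows essentially the same route as the paper: both verify $A\Omega^\top=\1_{n-1}$ row by row, reducing the banded partial sums (and, for the lower rows, their complements) to the counting lemma that the number of elements of $S=\{\ell p \bmod n \mid \ell=1,\dots,n-q\}$ in a cyclic length-$p$ window is the same for all starting points except one less for the window starting at $0$ (the paper's identity $I_1=\dots=I_{n-1}=I_n+1$). Your telescoping argument with $s=qt$ is just a more explicit rendering of the paper's shift observation, and the explicit constant $c$ and the worry about the case $p=1$ are unnecessary but harmless.
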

\begin{proof}
We start by presenting a useful auxiliary statement.
Let us introduce the subsets $S$ and $S_i$ of the ring $\Z_n$ as
\begin{equation}
S=\{\ell p\ (\bmod\ n)\mid\ell=1,\dots,n-q\},
\quad
S_i=\{i+\ell\ (\bmod\ n)\mid\ell=0,\dots,p-1\},
\end{equation}
for any $i\in\Z_n$.
Then define $I_i\in\N$ to be the number of elements in the intersection $S_i\cap S$.
Notice that $i\in S$ if and only if $(i+p)\in S$ except for $i\equiv(n-1)\equiv(n-q)p\pmod{n}$,
for which $(n-1)+p\equiv(n-q+1)p\pmod{n}$ does not belong to $S$.
It follows that
\begin{equation}
I_1=\dots=I_{n-1}=I_n+1.
\label{shiftid}
\end{equation}

Our aim is to show that $(A\Omega^\top)_{j,m}=\delta_{j,m}$ ($\forall j,m$)
with $\Omega$ defined by \eqref{B-C}-\eqref{C_m,k}. First, by the formula of $A$
\eqref{A_j,k} for any $1\leq j\leq n-p$ and $1\leq m\leq n-1$ we have
\begin{equation}
(A\Omega^\top)_{j,m}
=\sum_{k=1}^{n-1}A_{j,k}\Omega_{m,k}
=\sum_{k=j}^{j+p-1}\Omega_{m,k}
=\sum_{k=j}^{j+p-1}(B_{m,k}-C_{m,k}).
\label{}
\end{equation}
The definition of the matrices $B$ \eqref{B_m,k} and $C$ \eqref{C_m,k} gives directly that
\begin{equation}
\sum_{k=j}^{j+p-1}B_{m,k}=p-I_{j-m},
\qquad
\sum_{k=j}^{j+p-1}C_{m,k}=p-I_j.
\label{}
\end{equation}
By using \eqref{shiftid}, this readily implies that
$(A\Omega^\top)_{j,m}=\delta_{j,m}$ holds for the case at hand.

Second, for any $n-p<j\leq n-1$ and $1\leq m\leq n-1$ we have
\begin{equation}
(A\Omega^\top)_{j,m}
=\sum_{k=1}^{n-1}A_{j,k}\Omega_{m,k}
=\sum_{k=j+p-n}^{j-1}(-1)\Omega_{m,k}
=\sum_{k=j+p-n}^{j-1}(C_{m,k}-B_{m,k}).
\label{}
\end{equation}
From this point on the reasoning is quite similar to the previous case, and we obtain that
$(A\Omega^\top)_{j,m}=\delta_{j,m}$ always holds.
\end{proof}

To enlighten the geometric meaning of the map $\cE$ \eqref{Emap}, notice from
\eqref{u-abs-squared} that
\begin{equation}
\sum_{j=1}^n|u_j|^2=\sgn(M)\big(p(\xi_1+\dots+\xi_n)-ny\big)
=\sgn(M)\big(p\pi-ny\big)=|M|.
\label{}
\end{equation}
Then represent the complex projective space $\CP^{n-1}$ as
\begin{equation}
\CP^{n-1}=S_{|M|}^{2n-1}/\UN(1)
\label{coset}
\end{equation}
with
\begin{equation}
S_{|M|}^{2n-1}=\{(u_1,\dots,u_n)\in\C^n\mid|u_1|^2+\dots+|u_n|^2=|M|\}.
\label{sphere}
\end{equation}
Correspondingly, let
\begin{equation}
\pi_{|M|}\colon S_{|M|}^{2n-1}\to\CP^{n-1}
\label{}
\end{equation}
denote the natural projection and equip $\CP^{n-1}$ with the rescaled Fubini-Study
symplectic form $|M|\omega_{\text{FS}}$ characterized by the relation
\begin{equation}
\pi_{|M|}^\ast(|M|\omega_{\text{FS}})=\ri\sum_{j=1}^nd\bar u_j\wedge du_j,
\label{5.37}
\end{equation}
where the $u_j$'s are regarded as functions on $S^{2n-1}_{\vert M\vert }$.
It is readily seen from the definitions that the map
\begin{equation}
\pi_{|M|}\circ\cE\colon\cA_y^+\times\T^{n-1}\to\CP^{n-1}
\label{pi+cE}
\end{equation}
is smooth, injective and its image is the open submanifold for which
$\prod_{j=1}^n \vert u_j \vert^2 \neq 0$.
Equations \eqref{om-loc}, \eqref{cE-sympl} and \eqref{5.37}  together imply the symplectic property
\begin{equation}
(\pi_{|M|}\circ\cE)^\ast (\vert M\vert \omega_{\text{FS}}) = \omega^\loc,
\end{equation}
from which it follows that this map is an \emph{embedding}.

To summarize, in this section we have constructed the
symplectic diffeomorphism $\pi_{|M|}\circ\cE$ between the local phase space
$P_y^\loc$ \eqref{P-loc}
and the dense open submanifold of $\CP^{n-1}$ on which the product of the homogeneous
coordinates is nowhere zero.
If desired, the explicit formula of the smooth inverse mapping can be easily found as well.

\section{Global extension of the trigonometric Lax matrix}
\label{sec:5.2}

It was proved in \cite{FKl14} with the aid of quasi-Hamiltonian reduction that the global
phase space of the III$_\b$ model is $\CP^{n-1}$ for the type (i) couplings, which we
continue to consider. Here, we utilize the symplectic embedding \eqref{pi+cE} to
construct a global Lax matrix on $\CP^{n-1}$ explicitly, starting from the local
RS Lax matrix defined on $\cA_y^+\times\T^{n-1}$. This issue was not investigated
previously except for the $p=1$ case of \eqref{typeI-y}, see \cite{Ru95,FK12,FKl14}.

The local Lax matrix $L_y^\loc(\xi,e^{\ri \theta})\in\SU(n)$ used in \cite{FKl14} contains the
 trigonometric Cauchy matrix $C_y$ given with the help of \eqref{delta} by
\begin{equation}
C_y(\xi)_{j,\ell}=\frac{e^{\ri y}-e^{-\ri y}}
{e^{\ri y}\delta_j(\xi)^{1/2}\delta_\ell(\xi)^{-1/2}
-e^{-\ri y}\delta_j(\xi)^{-1/2}\delta_\ell(\xi)^{1/2}}.
\label{C_y}
\end{equation}
Thanks to the relation $\delta_k(\xi)= e^{2\ri x_k}$, this is equivalent to
\begin{equation}
C_y(\xi)_{j,\ell}=\frac{\sin(y)}{\sin(x_j-x_\ell+y)}.
\label{C_y-2}
\end{equation}
Then we have
\begin{equation}
L_y^\loc(\xi,e^{\ri\theta})_{j,\ell}=
C_y(\xi)_{j,\ell}v_j(\xi,y)v_\ell(\xi,-y)\rho(\theta)_\ell,
\qquad
\forall (\xi, e^{\ri \theta})\in \cA_y^+\times\T^{n-1},
\label{L_y^loc}
\end{equation}
where $\rho(\theta)_\ell=e^{\ri(\theta_{\ell-1}-\theta_\ell)}$
(applying $\theta_0=\theta_n=0$) and
\begin{equation}
v_\ell(\xi,\pm y)=\sqrt{z_\ell(\xi,\pm y)}\quad\text{with}\quad
z_\ell(\xi,\pm y)=\sgn(\sin(ny))\prod_{m=\ell+1}^{\ell+n-1}
\frac{\sin(\sum_{k=\ell}^{m-1}\xi_k\mp y)}{\sin(\sum_{k=\ell}^{m-1}\xi_k)}.
\label{z_ell(xi,pmy)}
\end{equation}
A key point \cite{FKl14} (which is detailed below) is that $z_\ell(\xi,\pm y)$ is
positive for any $\xi \in \cA_y^+$. We note for clarity that $z_\ell$ and $v_\ell$
above differ from those in \cite{FKl14} by a harmless multiplicative constant,
and also mention that $L_y^\loc$ is a specialization of (a similarity transform of)
the standard RS Lax matrix \cite{Ru99}.

The spectral invariants of $L_y^\loc$ \eqref{L_y^loc} yield a Poisson
commuting family of functional dimension $(n-1)$ \cite{Ru99,FKl14}, containing
the Hamiltonian $H_y^\loc$ \eqref{H_y^loc} due to the equation
\begin{equation}
\Re\big(\tr L_y^\loc(\xi,e^{\ri\theta})\big)=H_y^\loc(\xi,\theta).
\label{Re-tr-L_y^loc}
\end{equation}
There are two important observations to be made here. First, for each
$1\leq\ell\leq n$, there is only one factor in $z_\ell(\xi,\pm y)$
\eqref{z_ell(xi,pmy)} that (up to sign) contains the sine of the squared
absolute value \eqref{u-abs-squared} of one of the complex variables in
its numerator:
\begin{itemize}
\item For $z_\ell(\xi,y)$, it is the factor corresponding to $m=\ell+p$,
whose numerator is
\begin{equation}
\sgn(M)\sin(|u_\ell|^2).
\label{+y-num}
\end{equation}
\item For $z_\ell(\xi,-y)$, it is the factor with $m=\ell+n-p$, whose the numerator is either
\begin{equation}
\sin(\pi-\sgn(M)|u_{\ell+n-p}|^2)=\sgn(M)\sin(|u_{\ell+n-p}|^2),
\quad\text{if}\ 1\leq\ell\leq p,
\label{-y-num-1}
\end{equation}or
\begin{equation}
\sin(\pi-\sgn(M)|u_{\ell-p}|^2)=\sgn(M)\sin(|u_{\ell-p}|^2),
\quad\text{if}\ p<\ell\leq n.
\label{-y-num-2}
\end{equation}
Here we made use of $\xi_1+\dots+\xi_n=\pi$, $\sin(\pi-\alpha)=\sin(\alpha)$
and $\sin(-\alpha)=-\sin(\alpha)$.
\end{itemize}
Second, the $(p-1)$ factors in $z_\ell(\xi,\pm y)$ with
$m<\ell+p$ and $m>\ell+n-p$, respectively, are strictly negative and the
factors corresponding to $m>\ell+p$ and $m<\ell+n-p$, respectively, are
strictly positive for all $\xi$ in the \emph{closed} simplex $\cA_y$.
In particular, for any $\xi\in\cA_y^+$ the sign of the $\xi$-dependent product in
\eqref{z_ell(xi,pmy)} equals $(-1)^{p-1}\sgn(M)=\sgn(\sin(ny))$, and therefore
\begin{equation}
z_\ell(\xi,\pm y)\geq 0,\quad\forall\xi\in\cA_y,\quad\ell=1,\dots,n.
\label{z-ell-pos}
\end{equation}
We saw that $z_\ell$ can only vanish due to the numerators \eqref{+y-num} and
\eqref{-y-num-1}, \eqref{-y-num-2}, respectively.
Consequently, in \eqref{z_ell(xi,pmy)} the positive square
root of $z_\ell(\xi,\pm y)$ can be taken for any $\xi\in\cA_y^+$.

Now notice that, for all $\xi\in \cA_y^+$, we have
\begin{equation}
v_j(\xi,y)=|u_j|w_j(\xi,y),\quad 1\leq j\leq n,
\label{v_ell(xi,y)}
\end{equation}
where the $w_j(\xi,y)$ are positive and smooth functions of the form
\begin{equation}
w_j(\xi,y)=\bigg[
\frac{\sin(|u_j|^2)}{|u_j|^2}
\frac{(-1)^{p-1}}{\sin(\sum_{k=j}^{j+p-1}\xi_k)}
\prod_{\substack{m=j+1\\(m\neq j+p)}}^{j+n-1}
\frac{\sin(\sum_{k=j}^{m-1}\xi_k-y)}{\sin(\sum_{k=j}^{m-1}\xi_k)}
\bigg]^{\tfrac{1}{2}}.
\label{w_j(xi,y)}
\end{equation}
Similarly, we have
\begin{equation}
v_\ell(\xi,-y)=\begin{cases}
|u_{\ell+n-p}|w_\ell(\xi,-y),&\text{if}\ 1\leq\ell\leq p,\\
|u_{\ell-p}|w_\ell(\xi,-y),&\text{if}\ p<\ell\leq n
\end{cases}
\label{v_ell(xi,-y)}
\end{equation}
with the positive and smooth functions
\begin{equation}
w_\ell(\xi,-y)=\bigg[
\frac{\sin(|u_{\ell+n-p}|^2)}{|u_{\ell+n-p}|^2}
\frac{(-1)^{p-1}}{\sin(\sum_{k=\ell}^{\ell+n-p-1}\xi_k)}
\prod_{\substack{m=\ell+1\\(m\neq\ell+n-p)}}^{\ell+n-1}
\frac{\sin(\sum_{k=\ell}^{m-1}\xi_k+y)}{\sin(\sum_{k=\ell}^{m-1}\xi_k)}
\bigg]^{\tfrac{1}{2}}
\label{w_ell(xi,-y)-1}
\end{equation}
for $1\leq\ell\leq p$, and
\begin{equation}
w_\ell(\xi,-y)=\bigg[
\frac{\sin(|u_{\ell-p}|^2)}{|u_{\ell-p}|^2}
\frac{(-1)^{p-1}}{\sin(\sum_{k=\ell}^{\ell+n-p-1}\xi_k)}
\prod_{\substack{m=\ell+1\\(m\neq\ell+n-p)}}^{\ell+n-1}
\frac{\sin(\sum_{k=\ell}^{m-1}\xi_k+ y)}{\sin(\sum_{k=\ell}^{m-1}\xi_k)}
\bigg]^{\tfrac{1}{2}}
\label{w_ell(xi,-y)-2}
\end{equation}
for $p<\ell\leq n$.

The relation \eqref{uabs} allows us to express the $\xi_k$ in terms
of the complex variables for $k=1,\dots,n-1$ as
\begin{equation}
\xi_k(u)=
\sum_{j=1}^{n-1}\Omega_{j,k}\big(\sgn(M)|u_j|^2+c_j\big),\quad\text{with}\
c_j=\begin{cases}y,&\text{if}\ 1\leq j\leq n-p,\\
y-\pi,&\text{if}\ n-p<j\leq n-1,
\end{cases}
\label{xi(u)}
\end{equation}
and $\xi_n(u)=\pi-\xi_1(u)-\dots-\xi_{n-1}(u)$.
These formulas extend to $\UN(1)$-invariant smooth functions on $S_{|M|}^{2n-1}$,
which represent smooth functions on $\CP^{n-1}$ on account of \eqref{coset}.
By applying these, the above expressions $w_j(\xi(u), \pm y)$ $(j=1,\dots,n)$
\emph{give rise to smooth functions on $\CP^{n-1}$}.

\begin{definition}
\label{def:5.2}
By setting $\theta_k=0$ $(\forall k)$ in the local Lax matrix $L_y^\loc$
\eqref{L_y^loc} with $y$ \eqref{typeI-y}, we define the functions
$\Lambda_{j,\ell}^y\colon\cA_y^+\to\R$ ($j,\ell=1,\dots,n$) via the equations
\begin{equation}
\Lambda_{j,j+p}^y(\xi)=L_y^\loc(\xi,\1_{n-1})_{j,j+p},\quad 1\leq j\leq n-p,
\label{L_y^loc_j,j+p}
\end{equation}
\begin{equation}
\Lambda_{j,j+p-n}^y(\xi)=L_y^\loc(\xi,\1_{n-1})_{j,j+p-n},\quad n-p<j\leq n,
\label{L_y^loc_j,j+p-n}
\end{equation}
\begin{equation}
\Lambda_{j,\ell}^y(\xi)=L_y^\loc(\xi,\1_{n-1})_{j,\ell}(|u_j||u_{\ell+n-p}|)^{-1},\quad
1\leq j\leq n,\ 1\leq\ell\leq p\quad (\ell\neq j+p-n),
\label{L_y^loc_j,ell-1}
\end{equation}
\begin{equation}
\Lambda_{j,\ell}^y(\xi)=L_y^\loc(\xi,\1_{n-1})_{j,\ell}(|u_j||u_{\ell-p}|)^{-1},\quad
1\leq j\leq n,\ p<\ell\leq n\quad (\ell\neq j+p).
\label{L_y^loc_j,ell-2}
\end{equation}
\end{definition}

The foregoing results lead to explicit formulas for $\Lambda_{j,\ell}^y$
(see Appendix \ref{sec:E.1}). Using the identification \eqref{coset}
and \eqref{xi(u)}, it is readily seen that the $\Lambda^y_{j,\ell}(\xi(u))$
given by Definition \ref{def:5.2} extend to smooth functions on $\CP^{n-1}$.

\begin{remark}
\label{rem:5.3}
The explicit formulas of $\Lambda_{j,\ell}^y(\xi(u))$ contain products of square roots
of strictly positive functions depending on $|u_k|^2\in C^\infty(S^{2n-1}_{|M|})^{\UN(1)}$
for $k=1,\dots,n$. In particular, they contain the square root of the function $J$ given by
\begin{equation}
J(|u_k|^2)=\frac{\sin(|u_k|^2)}{|u_k|^2},
\label{J1}
\end{equation}
which remains smooth (even real-analytic) at $|u_k|^2=0$ and is positive since
we have $0\leq|u_k|^2\leq|M|<\pi$. Indeed, $|M|<\pi/q$ and $|M|<\pi/(n-q)$,
respectively, for the two intervals of the type (i) couplings in \eqref{typeI-y}.
\end{remark}

The above observations allow us to introduce the following functions,
which will be used to construct the global Lax matrix.
\begin{definition}
\label{def:5.4}
For $M>0$ \eqref{M}, define the smooth functions $L_{j,\ell}^{y,+}\colon\CP^{n-1}\to\C$ by
\begin{align}
L_{j,\ell}^{y,+}\circ \pi_{\vert M\vert}(u)&=\begin{cases}
\Lambda_{j,j+p}^y(\xi(u)),&\text{if}\ 1\leq j\leq n-p,\ \ell=j+p,\\
\Lambda_{j,j+p-n}^y(\xi(u)),&\text{if}\ n-p<j\leq n,\ \ell=j+p-n,\\
\bar u_ju_{\ell+n-p}\Lambda_{j,\ell}^y(\xi(u)),&\text{if}\ 1\leq j\leq n,\ 1\leq\ell\leq p,\ \ell\neq j+p-n,\\
\bar u_ju_{\ell-p}\Lambda_{j,\ell}^y(\xi(u)),&\text{if}\ 1\leq j\leq n,\ p<\ell\leq n,\ \ell\neq j+p,
\end{cases}
\label{L^y+}
\end{align}
where $u$ varies in $S^{2n-1}_{\vert M\vert}$. Then, for $M<0$, define
$L_{j,\ell}^{y,-}\colon\CP^{n-1}\to\C$ by
\begin{equation}
L_{j,\ell}^{y,-}\circ \pi_{\vert M\vert} (u)=L_{j,\ell}^{y,+}\circ \pi_{\vert M\vert}(\bar u),
\label{L^y-}
\end{equation}
referring to the right-hand-side of \eqref{L^y+}
with the understanding that now $y>p\pi/n$.
\end{definition}
Next, we prove that the matrices $L_y^\loc$ and $L^{y,\pm}\circ\pi_{|M|}\circ\cE$,
are similar and can be
transformed into each other by a unitary matrix. This is one of our main results.

\begin{theorem}
\label{thm:5.5}
The smooth matrix function $L^{y,\pm}\colon\CP^{n-1}\to\C^{n\times n}$ with components
$L_{j,\ell}^{y,\pm}$ given by \eqref{L^y+},\eqref{L^y-} satisfies the following identity
\begin{equation}
(L^{y,\pm}\circ\pi_{|M|}\circ\cE)(\xi,e^{\ri\theta})
=\Delta(e^{\ri\theta})^{-1}L_y^\loc(\xi,e^{\ri\theta})\Delta(e^{\ri\theta}),\quad
\forall(\xi,e^{\ri\theta})\in\cA_y^+\times\T^{n-1},
\label{L^y-circ-cE}
\end{equation}
where $\Delta(e^{\ri\theta})=\diag(\Delta_1,\dots,\Delta_n)\in\UN(n)$ with
\begin{equation}
\Delta_j=\exp\bigg(\ri\sum_{k=1}^{n-1}\Omega_{j,k}\theta_k\bigg),
\quad j=1,\dots,n-1,\quad \Delta_n=1.
\label{Delta}
\end{equation}
Consequently, $L^{y,\pm}(\pi_{\vert M\vert}( u))\in\SU(n)$ for every
$u\in S^{2n-1}_{\vert M\vert} $, and $L^{y,\pm}$ provides an extension
of the local Lax matrix $L_y^\loc$ \eqref{L_y^loc} to the global phase
space $\CP^{n-1}$.
\end{theorem}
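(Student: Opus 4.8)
The plan is to verify the matrix identity \eqref{L^y-circ-cE} componentwise, since everything else in the statement follows from it: the claim $L^{y,\pm}(\pi_{|M|}(u))\in\SU(n)$ is then immediate because the right-hand side is a conjugation of $L_y^\loc(\xi,e^{\ri\theta})\in\SU(n)$ by the unitary diagonal matrix $\Delta(e^{\ri\theta})$, and the fact that $L^{y,\pm}$ extends $L_y^\loc$ is exactly the content of \eqref{L^y-circ-cE} together with the already-established observation that the $\Lambda^y_{j,\ell}(\xi(u))$ and the prefactors $\bar u_j u_{\ell\pm\dots}$ are smooth on all of $\CP^{n-1}$ (Remark \ref{rem:5.3} and \eqref{xi(u)}). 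So the real work is a bookkeeping verification, and I would organize it around the factorized structure of $L_y^\loc$ in \eqref{L_y^loc}.

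First I would treat the case $M>0$ (i.e. $0<y<p\pi/n$), the case $M<0$ following by the substitution $u\mapsto\bar u$ built into Definition \ref{def:5.4}. Fix $(\xi,e^{\ri\theta})\in\cA_y^+\times\T^{n-1}$ and write $u=\cE(\xi,e^{\ri\theta})$, so that $|u_j|^2$ is given by \eqref{u-abs-squared} and $\arg(u_j)=\sum_k\Omega_{j,k}\theta_k$ by \eqref{argu} (with $M>0$), hence $u_j=|u_j|\Delta_j$ with $\Delta_j$ as in \eqref{Delta}. The conjugated matrix on the right of \eqref{L^y-circ-cE} has entries $\Delta_j^{-1}L_y^\loc(\xi,e^{\ri\theta})_{j,\ell}\Delta_\ell$, and using \eqref{L_y^loc} this is $\Delta_j^{-1}C_y(\xi)_{j,\ell}v_j(\xi,y)v_\ell(\xi,-y)e^{\ri(\theta_{\ell-1}-\theta_\ell)}\Delta_\ell$. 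The key point is that the $\theta$-dependence must collapse: I would check that the phase factor $\Delta_j^{-1}e^{\ri(\theta_{\ell-1}-\theta_\ell)}\Delta_\ell$ equals exactly $\overline{(u_j/|u_j|)}\cdot(u_{\ell+n-p}/|u_{\ell+n-p}|)$ when $1\le\ell\le p$ (respectively $(u_{\ell-p}/|u_{\ell-p}|)$ when $p<\ell\le n$), using the defining property \eqref{cE-sympl}–\eqref{B-C} of $\Omega$ — i.e. that $\Omega=(A^{-1})^\top$ with $A$ as in \eqref{A_j,k} — which encodes precisely the relation between the column index shift by $p$ and the $\theta$-phases. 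This is the step where Proposition \ref{prop:5.1} is used.

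Next I would handle the moduli. Substituting \eqref{v_ell(xi,y)}, \eqref{v_ell(xi,-y)} into $v_j(\xi,y)v_\ell(\xi,-y)$ produces $|u_j|\,|u_{\ell+n-p}|\,w_j(\xi,y)w_\ell(\xi,-y)$ (or $|u_j|\,|u_{\ell-p}|\,\dots$), so combining with the phase computation of the previous paragraph the entry becomes $\bar u_j u_{\ell+n-p}\cdot C_y(\xi)_{j,\ell}w_j(\xi,y)w_\ell(\xi,-y)/(|u_j||u_{\ell+n-p}|)$, which is exactly $\bar u_j u_{\ell+n-p}\Lambda^y_{j,\ell}(\xi(u))$ by comparing with \eqref{L_y^loc_j,ell-1} and Definition \ref{def:5.4}; the diagonal-band entries $\ell=j+p$ or $\ell=j+p-n$ are where one of the $|u|$-factors is absorbed into $\Lambda^y$ itself, matching the first two lines of \eqref{L^y+}. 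One should also double-check that $\Delta_n=1$ is consistent with $\arg(u_n)=0$ and with the convention $\theta_0=\theta_n=0$, so the $j=n$ and $\ell=n$ rows and columns work out; this is a small compatibility check using $\xi_1+\dots+\xi_n=\pi$ and $\sum_j\Omega_{j,k}$ summing correctly, and it is the kind of edge-case verification that is easy to get wrong.

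I expect the main obstacle to be precisely the phase bookkeeping in the middle step: showing that $\Delta_j^{-1}e^{\ri(\theta_{\ell-1}-\theta_\ell)}\Delta_\ell$ has no residual $\theta$-dependence beyond the phases of the designated $u$-variables, for all four index ranges in \eqref{L^y+}, including the wrap-around cases $\ell<j$ versus $\ell>j$ and the role of $\Delta_n=1$. This amounts to a careful identity among rows of $\Omega$ shifted by $p$, which is dual to the defining relations of $A$; once that identity is pinned down (essentially from $A\Omega^\top=\1$, which is Proposition \ref{prop:5.1}), the remaining manipulations are routine trigonometric rewriting already carried out in Section \ref{sec:5.2}. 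Finally, the assertions $L^{y,\pm}(\pi_{|M|}(u))\in\SU(n)$ and "extension of $L_y^\loc$" require no extra argument: unitarity and unit determinant are conjugation-invariant, and density of the image of $\pi_{|M|}\circ\cE$ (established in Section \ref{sec:5.1}) together with continuity of both sides of \eqref{L^y-circ-cE} shows the global $L^{y,\pm}$ restricts to the local Lax matrix on that dense open set.
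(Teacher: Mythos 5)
Your proposal is correct and follows essentially the same route as the paper: the entrywise phase-collapse identity you isolate, $\Delta_j^{-1}\rho_\ell\Delta_\ell=\bar\Delta_j\Delta_{\ell+n-p}$ (resp. $\bar\Delta_j\Delta_{\ell-p}$), is exactly the paper's recursion $\Delta_j=\Delta_{j+p}\rho_{j+p}$ (with wrap-around), which the paper likewise reduces to coefficient relations among the rows of $\Omega$ and verifies from Proposition \ref{prop:5.1}, the modulus part being definitional once $u_j=|u_j|\Delta_j$ is noted. Minor wording slips (the ``absorbed $|u|$-factor'' remark on the special entries, and the density/continuity argument being needed for global $\SU(n)$-membership rather than for the restriction statement) do not affect the validity of the argument.
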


\begin{proof}
The form of the local Lax matrix $L_y^\loc$ \eqref{L_y^loc} and Definitions
\ref{def:5.2} and \ref{def:5.4} show that \eqref{L^y-circ-cE} is equivalent
to the equations
\begin{equation}
\Delta_j=\begin{cases}
\Delta_{j+p}\rho_{j+p},&\text{if}\quad 1\leq j\leq n-p,\\
\Delta_{j+p-n}\rho_{j+p-n},&\text{if}\quad n-p<j\leq n.
\end{cases}
\label{Delta-recursion}
\end{equation}
The two sides of \eqref{Delta-recursion} can be written as exponentials of linear
combinations of the variables $\theta_k$ $(1\leq k\leq n-1)$. We next spell out the
relations that ensure the exact matching of the coefficients of the $\theta_k$ in
these exponentials. Plugging the components of $\Delta$ and $\rho$ into
\eqref{Delta-recursion}, the case $1\leq j<n-p$ gives
\begin{equation}
\begin{split}
\Omega_{j,j+p-1}&=\Omega_{j+p,j+p-1}+1,\quad(\text{coefficients of}\ \theta_{j+p-1})\\
\Omega_{j,j+p}&=\Omega_{j+p,j+p}-1,\quad(\text{coefficients of}\ \theta_{j+p})\\
\Omega_{j,k}&=\Omega_{j+p,k},\quad(\text{coefficients of}\ \theta_k,\ k\neq j+p-1,j+p),
\end{split}
\label{Omega-1}
\end{equation}
while for $j=n-p$ we get
\begin{equation}
\begin{split}
\Omega_{n-p,n-1}&=1,\quad(\text{coefficients of}\ \theta_{n-1})\\
\Omega_{n-p,k}&=0,\quad(\text{coefficients of}\ \theta_k,\ k\neq n-1).
\end{split}
\label{Omega-2}
\end{equation}
The case $n-p<j<n$ (and $p>1$) leads to
\begin{equation}
\begin{split}
\Omega_{j,j+p-n-1}&=\Omega_{j+p-n,j+p-n-1}+1,\quad(\text{coefficients of}\ \theta_{j+p-n-1})\\
\Omega_{j,j+p-n}&=\Omega_{j+p-n,j+p-n}-1,\quad(\text{coefficients of}\ \theta_{j+p-n})\\
\Omega_{j,k}&=\Omega_{j+p-n,k},\quad(\text{coefficients of}\ \theta_k,\ k\neq j+p-n-1,j+p-n).
\end{split}
\label{Omega-3}
\end{equation}
For $j=n$ there are two possibilities. If $p=1$ then we obtain
\begin{equation}
\begin{split}
\Omega_{1,1}&=1,\quad(\text{coefficients of}\ \theta_1)\\
\Omega_{1,k}&=0,\quad(\text{coefficients of}\ \theta_k,\ k\neq 1),
\end{split}
\label{Omega-4}
\end{equation}
and if $p>1$ then we require
\begin{equation}
\begin{split}
\Omega_{p,p-1}&=-1,\quad(\text{coefficients of}\ \theta_{p-1})\\
\Omega_{p,p}&=1,\quad(\text{coefficients of}\ \theta_p)\\
\Omega_{p,k}&=0,\quad(\text{coefficients of}\ \theta_k,\ k\neq p-1,p).
\end{split}
\label{Omega-5}
\end{equation}
Using the explicit formula given by Proposition \ref{prop:5.1},
we now show that $\Omega$ satisfies \eqref{Omega-1}.
Since $\Omega_{j,k}=B_{j,k}-C_{j,k}$ for all $j,k$, where $B_{j,k}$ \eqref{B_m,k}
depends on $(k-j)$ and $C_{j,k}$ \eqref{C_m,k} depends only on $k$, the equations
\eqref{Omega-1} reduce to
\begin{equation}
\begin{split}
B_{j,j+p-1}&=B_{j+p,j+p-1}+1,\\
B_{j,j+p}&=B_{j+p,j+p}-1,\\
B_{j,k}&=B_{j+p,k},\quad k\neq j+p-1,j+p.
\end{split}
\label{B-1}
\end{equation}
The first equation holds, because $(j+p-1)-j=p-1\equiv (n-q+1)p\pmod{n}$
implies $B_{j,j+p-1}=1$ and $(j+p-1)-(j+p)=-1\equiv (n-q)p\pmod{n}$ implies
$B_{j+p,j+p-1}=0$. For the second equation, we plainly have
$B_{j,j+p}=0$, and $(j+p)-(j+p)=0\equiv np\pmod{n}$ gives $B_{j+p,j+p}=1$.
Regarding the third equation, notice that $B_{j,k}=0$ in \eqref{B-1} when
$k-j\equiv\ell p\pmod{n}$ for some $\ell\in\{2,\dots,n-q\}$, and then
$B_{j+p,k}=0$ holds, too. Conversely, $B_{j+p,k}=0$ in \eqref{B-1} means
that $(k-j)-p\equiv\ell p\pmod{n}$ for some $\ell\in\{1,\dots,n-q-1\}$,
from which $(k-j)\equiv(\ell+1)p\pmod{n}$ and thus $B_{j,k}=0$ follows.
As $B$ is a $(0,1)$-matrix, we conclude that \eqref{B-1} is valid.
Proceeding in a similar manner, we have verified the rest of the relations
\eqref{Omega-2}-\eqref{Omega-5} as well. Since the relations
\eqref{Omega-1}-\eqref{Omega-5} imply \eqref{Delta-recursion},
the proof is complete.
\end{proof}

It is an immediate consequence of Theorem \ref{thm:5.5} that the spectral invariants of the global Lax matrix
$L^{y, \pm} \in C^\infty(\CP^{n-1}, \SU(n))$ yield a Liouville integrable system.
Because of \eqref{Re-tr-L_y^loc} the corresponding Poisson commuting family contains
the extension of the III$_\b$ Hamiltonian $H_y^\loc$ to $\CP^{n-1}$ for any type (i)
coupling. The self-duality of this compactified RS system was established in \cite{FKl14},
and it will be studied in more detail elsewhere.

\section{New compact forms of the elliptic Ruijsenaars-Schneider system}
\label{sec:5.3}

In this section we explain that type (i) compactifications of the elliptic RS system
can be constructed in exactly the same way as we saw for the trigonometric system.
This is due to the fact that the local elliptic Lax matrix is built from
the $\ws$-function \eqref{sigma} similarly as its trigonometric counterpart is built
from the sine function, and on the real axis these two functions have the same
zeros, signs, parity and antiperiodicity property.

We start by recalling some formulas of the relevant elliptic functions.
First, let $\omega,\omega'$ stand for the half-periods of the Weierstrass $\wp$
function defined by
\begin{equation}
\wp(z;\omega,\omega')
=\frac{1}{z^2}+\sum_{\substack{m,m'=-\infty\\(m,m')\neq(0,0)}}^\infty
\bigg[\frac{1}{(z-\omega_{m,m'})^2}-\frac{1}{\omega_{m,m'}^2}\bigg],
\label{wp}
\end{equation}
with $\omega_{m,m'}=2m\omega+2m'\omega'$. We adopt the convention
$\omega,-\ri\omega'\in(0,\infty)$, which ensures that $\wp$ is positive
on the real axis. Next, introduce the following `$\ws$-function':
\begin{equation}
\ws(z;\omega,\omega')
=\frac{2\omega}{\pi}\sin\!\Bigl(\frac{\pi z}{2\omega}\Bigr)
\prod_{m=1}^\infty\bigg[1+\frac{\sin^2(\pi z/(2\omega))}{\sinh^2(m\pi|\omega'| /\omega)}\bigg],
\label{sigma}
\end{equation}
related to the Weierstrass $\sigma$ and $\zeta$ functions by
$\ws(z)=\sigma(z)\exp(-\eta z^2/(2\omega))$ with the constant $\eta=\zeta(\omega)$.
A useful identity connecting $\wp$ and $\ws$ is
\begin{equation}
\frac{\ws(z+z')\ws(z-z')}{\ws^2(z)\ws^2(z')}=\wp(z')-\wp(z),
\qquad z, z'\in \C.
\label{wp-sigma}
\end{equation}
The $\ws$-function is odd, has simple zeros at $\omega_{m,m'}$ $(m,m'\in\Z)$ and
enjoys the scaling property $\ws(t z;t\omega,t\omega')=t\ws(z;\omega,\omega')$.
From now on we take
\begin{equation}
\omega=\frac{\pi}{2},
\end{equation}
whereby $\ws(z+\pi)=-\ws(z)$ holds as well. The trigonometric limit is obtained according to
\begin{equation}
\lim_{-\ri\omega'\to\infty}\wp(z;\pi/2,\omega')=\frac{1}{\sin^2(z)}-\frac{1}{3},\quad
\lim_{-\ri\omega'\to\infty}\ws(z;\pi/2,\omega')=\sin(z).
\label{trig-limit}
\end{equation}

Let us now pick a type (i) coupling parameter $y$ \eqref{typeI-y} and choose
the domain of the dynamical variables to be the same $\cA_y^+\times\T^{n-1}$
as in the trigonometric case. Then consider the following IV$_\b$ variant of
the standard \cite{Ru87,Ru99} elliptic RS Lax matrix:
\begin{equation}
L_y^\loc(\xi,e^{\ri\theta}\vert\lambda)_{j,\ell}
=\frac{\ws(y)}{\ws(\lambda)}
\frac{\ws(x_j-x_\ell+\lambda)}{\ws(x_j-x_\ell+y)}
v_j(\xi,y)v_\ell(\xi,-y)\rho(\theta)_\ell,
\,\,\, \forall(\xi,e^{\ri\theta})\in\cA_y^+\times\T^{n-1},
\label{L_y^loc-IV}
\end{equation}
where $\lambda\in\C\setminus\{\omega_{m,m'}:m,m'\in\Z\}$ is a spectral parameter
and $v_\ell(\xi,\pm y)=\sqrt{z_\ell(\xi,\pm y)}$ with
\begin{equation}
z_\ell(\xi,\pm y)=\sgn(\ws(ny))\prod_{m=\ell+1}^{\ell+n-1}
\frac{\ws(\sum_{k=\ell}^{m-1}\xi_k\mp y)}{\ws(\sum_{k=\ell}^{m-1}\xi_k)}.
\label{z_ell(xi,pmy)-IV}
\end{equation}
These formulas are to be compared with the trigonometric case. Since $\ws(z)$ and
$\sin(z)$ have matching properties on the real line, we can repeat the arguments
presented in Section \ref{sec:5.2} to verify that $z_\ell(\xi,\pm y)>0$ for every
$\xi\in\cA_y^+$. Taking positive square roots, and applying the relation $x_{k+1}-x_k=\xi_k$
to express $x_j-x_\ell$ in terms of $\xi$, we conclude that the above local Lax matrix
is a smooth function on $\cA_y^+\times\T^{n-1}$ for every allowed value of the spectral
parameter. The fact that it is a specialization of the standard elliptic Lax matrix
ensures \cite{Ru87,Ru99} that its characteristic polynomial generates $(n-1)$
independent real Hamiltonians in involution with respect to the symplectic form \eqref{om-loc}.
Indeed, the characteristic polynomial has the form
\begin{equation}
\det\big(L_y^\loc(\xi,e^{\ri\theta}\vert\lambda)-\alpha\1_n\big)
=\sum_{k=0}^n(-\alpha)^{n-k}c_k(\lambda,y)\cS_k^\loc(\xi,e^{\ri\theta},y),
\label{char1}\end{equation}
where the functions $\cS_k^\loc$ as well as their real and imaginary parts Poisson
commute, and $\Re(\cS_{k}^\loc)$ for $k=1,\dots, n-1$ are functionally independent.
Explicit formulas of the $c_k$ (that do not depend on the phase space variables)
and $\cS_k^\loc$ (that do not depend on $\lambda$) can be found in \cite{Ru87,Ru99}.
The function $\Re(\cS^\loc_1)$ is the RS Hamiltonian of IV$_\b$ type
\begin{equation}
\Re\big(\tr L_y^\loc(\xi,e^{\ri\theta}\vert\lambda)\big)
=\sum_{j=1}^n\cos(\theta_j-\theta_{j-1})\sqrt{ \prod_{m=j+1}^{j+n-1}
\left[\ws(y)^2(\wp(y)-\wp(\textstyle\sum_{k=j}^{m-1}\xi_k))\right]}.
\label{H_y^loc-IV}
\end{equation}

We note in passing that in Ruijsenaars's papers \cite{Ru87,Ru99} one finds
the elliptic Lax matrix $VL_y^\loc V^{-1}$, where $V$ is the diagonal matrix
$V=\rho(\theta)\diag(v_1(\xi,-y),\dots,v_n(\xi,-y))$. This difference is
irrelevant, since it has no effect on the generated spectral invariants.
Another difference is that we work in the center-of-mass frame.

Now the complete train of thought applied in the previous section
remains valid if we simply replace the sine function with the $\ws$-function everywhere.
In particular, the direct analogues of the formulas \eqref{v_ell(xi,y)}-\eqref{w_ell(xi,-y)-2}
hold with smooth functions $w_k(\xi,\pm y)>0$, for $\xi\in\cA_y$. Due to this fact,
we can introduce a smooth elliptic Lax matrix defined on the global phase space $\CP^{n-1}$.
The subsequent definition refers to the explicit formulas of Appendix \ref{sec:E.1},
which in the elliptic case contain the function
\begin{equation}
\cJ(|u_k|^2)=\frac{\ws(|u_k|^2)}{|u_k|^2}.
\label{J2}\end{equation}
This has the same smoothness and positivity properties at and around zero as $J$ \eqref{J1} does.
We also use $\xi(u)$ \eqref{xi(u)} and the functions $(x_j-x_\ell)(\xi)$ determined by $x_{k+1}-x_k=\xi_k$.

\begin{definition}
\label{def:5.6}
Take a type (i) $y$ from \eqref{typeI-y} and represent the points of $\CP^{n-1}$ as
$\pi_{|M|}(u)$ with $u\in S^{2n-1}_{|M|}$. For $M>0$ \eqref{M}, define the smooth
functions $\cL_{j,\ell}^{y,+}$ on $\CP^{n-1}$ by
\begin{align}
\cL_{j,\ell}^{y,+}(\pi_{\vert M\vert}(u))&=\begin{cases}
\Lambda_{j,j+p}^y(\xi(u)),&\text{if}\ 1\leq j\leq n-p,\ \ell=j+p,\\
\Lambda_{j,j+p-n}^y(\xi(u)),&\text{if}\ n-p<j\leq n,\ \ell=j+p-n,\\
\bar u_ju_{\ell+n-p}\Lambda_{j,\ell}^y(\xi(u)),&\text{if}\ 1\leq j\leq n,\ 1\leq\ell\leq p,\ \ell\neq j+p-n,\\
\bar u_ju_{\ell-p}\Lambda_{j,\ell}^y(\xi(u)),&\text{if}\ 1\leq j\leq n,\ p<\ell\leq n,\ \ell\neq j+p,
\end{cases}
\label{L^y+IV}
\end{align}
with $\Lambda_{j,\ell}^y$ given in Appendix \ref{sec:E.1}. For $M<0$, set $\cL_{j,\ell}^{y,-}$
to be
\begin{equation}
\cL_{j,\ell}^{y,-}(\pi_{\vert M\vert}(u))=\cL_{j,\ell}^{y,+}(\pi_{\vert M\vert}(\bar u))
\label{L^y-IV}
\end{equation}
with the understanding that in this case $y>p\pi/n$. Finally,
define the $\lambda$-dependent elliptic Lax matrix $L^{y,\pm}$ on $\CP^{n-1}$ by
\begin{equation}
L_{j,\ell}^{y,\pm}(\pi_{\vert M\vert}(u)\vert\lambda)=
\frac{\ws((x_j-x_\ell)(\xi(u))+\lambda)}{\ws(\lambda)}
\cL_{j,\ell}^{y,\pm}(\pi_{\vert M\vert}(u)),
\label{L^y}
\end{equation}
where $u$ runs over $S^{2n-1}_{\vert M\vert}$ and the spectral parameter $\lambda$
varies in $\C\setminus\{\omega_{m,m'}:m,m'\in\Z\}$.
\end{definition}

\begin{theorem}
\label{thm:5.7}
The spectral parameter dependent elliptic Lax matrix $L^{y,\pm}(\pi_{\vert M\vert}(u)\vert\lambda)$
\eqref{L^y} is a smooth global extension of $L_y^\loc(\xi,e^{\ri \theta}\vert\lambda)$
\eqref{L_y^loc-IV} to the complex projective space $\CP^{n-1}$ since it satisfies
\begin{equation}
L^{y,\pm}((\pi_{|M|}\circ \cE)(\xi,e^{\ri\theta})\vert\lambda)
=\Delta(e^{\ri\theta})^{-1}L_y^\loc(\xi,e^{\ri\theta}\vert\lambda)
\Delta(e^{\ri\theta}),\quad
\forall(\xi,e^{\ri\theta})\in\cA_y^+\times\T^{n-1},
\label{L^y-circ-pi_M-circ-cE}
\end{equation}
where $\Delta$ is given by \eqref{Delta} and
$\pi_{\vert M\vert}\circ\cE\colon\cA_y^+\times\T^{n-1}\to \CP^{n-1}$
is the symplectic embedding defined in Section \ref{sec:5.1}.
\end{theorem}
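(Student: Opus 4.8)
The plan is to reduce the theorem to the trigonometric statement already proven in Theorem \ref{thm:5.5}, by exploiting the fact that \emph{every single ingredient} of the elliptic construction arose from the trigonometric one by the uniform substitution $\sin \leadsto \ws$, together with the extra spectral-parameter factor $\ws((x_j-x_\ell)+\lambda)/\ws(\lambda)$ that is \emph{common} to the local and global definitions. Concretely, the identity \eqref{L^y-circ-pi_M-circ-cE} to be established is, entry by entry,
\begin{equation*}
L_{j,\ell}^{y,\pm}\bigl((\pi_{|M|}\circ\cE)(\xi,e^{\ri\theta})\,\vert\,\lambda\bigr)
=\Delta_j(e^{\ri\theta})^{-1}\,L_y^\loc(\xi,e^{\ri\theta}\vert\lambda)_{j,\ell}\,\Delta_\ell(e^{\ri\theta}).
\end{equation*}
Dividing out the factor $\ws((x_j-x_\ell)(\xi)+\lambda)/\ws(\lambda)$, which appears identically on both sides by \eqref{L^y} and \eqref{L_y^loc-IV} and is nonzero for admissible $\lambda$, this is equivalent to the $\lambda$-independent identity
\begin{equation*}
\cL_{j,\ell}^{y,\pm}\bigl((\pi_{|M|}\circ\cE)(\xi,e^{\ri\theta})\bigr)
=\Delta_j(e^{\ri\theta})^{-1}\,\widehat L_y^\loc(\xi,e^{\ri\theta})_{j,\ell}\,\Delta_\ell(e^{\ri\theta}),
\end{equation*}
where $\widehat L_y^\loc$ denotes the elliptic local Lax matrix \eqref{L_y^loc-IV} with the factor $\ws(x_j-x_\ell+\lambda)/\ws(x_j-x_\ell+y)$ replaced by $\ws(y)^{-1}\ws(x_j-x_\ell+y)^{-1}$ (equivalently, the matrix obtained by setting $\theta=0$ and stripping the common $\lambda$-factor, exactly the elliptic analogue of Definition \ref{def:5.2}). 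Thus the whole theorem collapses to the statement that the elliptic $\cL^{y,\pm}$ and the stripped local matrix are related by the same diagonal conjugation $\Delta$ as in the trigonometric case.

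First I would observe that the proof of Theorem \ref{thm:5.5} has two logically separable parts: (a) the combinatorial/number-theoretic identity for the matrix $\Omega$ — namely relations \eqref{Omega-1}--\eqref{Omega-5}, which via Proposition \ref{prop:5.1} reduce to the $(0,1)$-matrix relations \eqref{B-1} and their siblings — and (b) the translation of those $\Omega$-relations into the recursion \eqref{Delta-recursion} $\Delta_j=\Delta_{j+p}\rho_{j+p}$ (indices mod $n$), which is precisely the content of \eqref{L^y-circ-cE}. Part (a) is entirely about $\Omega$, $p$, $q$, $n$ and does not reference $\sin$ at all, so it carries over verbatim. Part (b) used only the structural shape of $L_y^\loc$ \eqref{L_y^loc}: the matrix element $L_{j,\ell}$ is a product (trigonometric Cauchy entry)$\times v_j(\xi,y)\times v_\ell(\xi,-y)\times\rho(\theta)_\ell$, and the diagonal conjugation matches the $\rho$-phases against the $\Delta$-ratios. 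The elliptic local matrix $\widehat L_y^\loc$ has \emph{exactly} this shape — elliptic Cauchy-type entry $\times v_j(\xi,y)\times v_\ell(\xi,-y)\times\rho(\theta)_\ell$ — with $v_\ell(\xi,\pm y)=\sqrt{z_\ell(\xi,\pm y)}$ now built from \eqref{z_ell(xi,pmy)-IV}. So the second half of the proof of Theorem \ref{thm:5.5} also transfers without change, once the preliminary facts about the $z_\ell$ are in place.

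The one genuine thing to check, therefore — and I expect it to be the main (though modest) obstacle — is that the elliptic $z_\ell(\xi,\pm y)$ \eqref{z_ell(xi,pmy)-IV} behave on $\cA_y$ exactly as the trigonometric ones did: that $z_\ell(\xi,\pm y)>0$ on $\cA_y^+$ and $\geq 0$ on the closed simplex $\cA_y$, with zeros only coming from the distinguished numerator factors \eqref{+y-num}--\eqref{-y-num-2}, so that the factorizations $v_j(\xi,y)=|u_j|\,w_j(\xi,y)$ and $v_\ell(\xi,-y)=|u_{\ell\mp p}|\,w_\ell(\xi,-y)$ hold with $w_k(\xi,\pm y)>0$ smooth on $\cA_y$ and the $w_k(\xi(u),\pm y)$ extending to smooth \emph{positive} functions on $\CP^{n-1}$ via \eqref{xi(u)}. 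This is reduced, as the paper notes just before Definition \ref{def:5.6}, to the elementary facts that $\ws$ is odd, has simple zeros precisely at the lattice points $\omega_{m,m'}$, satisfies $\ws(z+\pi)=-\ws(z)$, and — crucially — has the \emph{same sign pattern as $\sin$ on the real axis} (positive on $(0,\pi)$ since $\omega=\pi/2$ and our convention $\omega,-\ri\omega'\in(0,\infty)$ makes $\wp$, hence via \eqref{sigma} the infinite product, positive). I would verify this sign pattern once, point to the identity \eqref{wp-sigma} for rewriting products of $\ws$ (e.g. to see positivity of the two-sided factors), note that $\ws(|u_k|^2)/|u_k|^2$ \eqref{J2} is real-analytic and positive at $0$ for the same reason $J$ \eqref{J1} is (here $0\le|u_k|^2\le|M|<\pi$, and $|M|<\min(\pi/q,\pi/(n-q))$ from \eqref{typeI-y}), and then declare that the arguments of Section \ref{sec:5.2} — the derivation of \eqref{v_ell(xi,y)}--\eqref{w_ell(xi,-y)-2} and the passage to smooth functions on $\CP^{n-1}$ — apply mutatis mutandis. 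With that in hand, plugging the elliptic $v$'s and the conjugation $\Delta$ into $\widehat L_y^\loc$ and invoking the already-proven $\Omega$-relations \eqref{Omega-1}--\eqref{Omega-5} yields \eqref{L^y-circ-pi_M-circ-cE} exactly as in Theorem \ref{thm:5.5}, completing the proof; the smoothness of $L^{y,\pm}(\cdot\vert\lambda)$ on all of $\CP^{n-1}$ then follows since each factor in Definition \ref{def:5.6} — the $\ws$-ratio (smooth because $\lambda\notin\{\omega_{m,m'}\}$ and $x_j-x_\ell$ is a smooth function of $\xi(u)$), the homogeneous-coordinate prefactors $\bar u_j u_{\ell\mp p}$, and the $\Lambda^y_{j,\ell}(\xi(u))$ built from positive smooth $w$'s and the real-analytic $\cJ$ — is smooth on $\CP^{n-1}$.
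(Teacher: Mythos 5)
Your proposal is correct and is essentially the paper's own argument: the paper's proof of Theorem \ref{thm:5.7} consists of the single remark that it ``follows the lines of the proof of Theorem \ref{thm:5.5}'', which is precisely the reduction you spell out --- strip the common factor $\ws((x_j-x_\ell)(\xi)+\lambda)/\ws(\lambda)$ shared by \eqref{L_y^loc-IV} and \eqref{L^y}, check that $\ws$ has the same zeros, sign pattern, parity and antiperiodicity as $\sin$ on the real axis so that the positivity/smoothness analysis of the $z_\ell$, $w_k$ and $\cJ$ from Section \ref{sec:5.2} carries over, and then reuse the $\Omega$-relations \eqref{Omega-1}--\eqref{Omega-5} and the recursion \eqref{Delta-recursion} verbatim. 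One harmless slip: the stripped local matrix should be $[\ws(y)/\ws(x_j-x_\ell+y)]\,v_j(\xi,y)v_\ell(\xi,-y)\rho(\theta)_\ell$, i.e.\ the factor removed is $\ws(x_j-x_\ell+\lambda)/\ws(\lambda)$, not the replacement formula you wrote.
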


The proof of Theorem \ref{thm:5.7} follows the lines of the proof of Theorem \ref{thm:5.5}.
The characteristic polynomial $\det\big(L^{y,\pm}(\pi_{\vert M\vert}(u)\vert\lambda)-\alpha\1_n\big)$
of the global Lax matrix depends smoothly on $\pi_{\vert M\vert}(u)\in \CP^{n-1}$ and as a
consequence of \eqref{L^y-circ-pi_M-circ-cE} it satisfies
\begin{equation}
\det\big(L^{y,\pm}((\pi_{|M|}\circ\cE)(\xi,e^{\ri\theta})\vert\lambda)
-\alpha \1_n\big)=
\det\big(L_y^\loc(\xi,e^{\ri\theta}\vert\lambda)-\alpha\1_n\big).
\end{equation}
Since this holds for all $\alpha$ and $\lambda$, we see that the local IV$_\b$
Hamiltonian \eqref{H_y^loc-IV} together with its constants of motion $\Re(\cS_k^\loc)$,
$k=2,\dots,n-1$ extends to an integrable system on $\CP^{n-1}$. This was pointed out
previously \cite{Ru99} for the special case $0<y<\pi/n$ in \eqref{typeI-y}.

In the trigonometric limit $-\ri\omega'\to\infty$ the $\ws$-function becomes the sine function,
and we obtain a spectral parameter dependent trigonometric Lax matrix from the elliptic one.
Then, setting the spectral parameter to be on the imaginary axis and taking the limit
$-\ri\lambda\to\infty$ reproduces, up to conjugation by a diagonal matrix, the trigonometric
global Lax matrix of Definition \ref{def:5.4}. Correspondingly, the global extension of the
IV$_\b$ Hamiltonian \eqref{H_y^loc-IV} and its commuting family reduces to the global extension
of the III$_\b$ Hamiltonian \eqref{H_y^loc} and its constants of motion.

\section{Discussion}
\label{sec:5.4}

In this chapter we have demonstrated by direct construction that the local phase space
$\cA_y^+\times\T^{n-1}$ of the III$_\b$ and IV$_\b$ RS models (where $\cA_y^+$ is the
interior of the simplex \eqref{5.11}) can be embedded into $\CP^{n-1}$ for any type (i)
coupling $y$ \eqref{typeI-y} in such a way that a suitable conjugate of the local Lax
matrix extends to a smooth (actually real-analytic) function. Theorems \ref{thm:5.5}
and \ref{thm:5.7} together with Appendix \ref{sec:E.1} provide explicit formulas for
the resulting global Lax matrices. Their characteristic polynomials give rise to Poisson
commuting real Hamiltonians on $\CP^{n-1}$ that yield the Liouville integrable compactified
trigonometric and elliptic RS systems.

Our direct construction was inspired by the earlier derivation of compactified
III$_\b$ systems by quasi-Hamiltonian reduction \cite{FKl14}. The reduction identifies
the III$_\b$ system with a topological Chern-Simons field theory for any generic
coupling parameter $y$. It appears natural to ask if an analogous derivation and
relation to some topological field theory could exist for IV$_\b$ systems, too.
We also would like to obtain a better understanding of the type (ii) trigonometric
systems and their possible elliptic analogues.

Besides further studying the systems that we described, it would be also interesting
to search for compactifications of generalized RS systems. We have in mind especially
the $\BC_n$ systems due to van Diejen \cite{vD94} and the recently introduced
supersymmetric systems \cite{BDM15}. Regarding the former case, and even for general
root systems, the results of \cite{vDE14} could be relevant, as well as the construction
of Lax matrices for some of the $\BC_n$ systems reported in Chapter \ref{chap:4}.

Throughout the text, we worked in the `center-of-mass frame' and now we end by a
comment on how the center-of-mass coordinate can be introduced into our systems.
One possibility is to take the full phase space to be the Cartesian product
of $\CP^{n-1}$ with $\UN(1)\times\UN(1)=\{(e^{2\ri X},e^{\ri\Phi})\}$
endowed with the symplectic form $|M|\,\omega_{\mathrm{FS}}+dX\wedge d\Phi$.
Here, $e^{2\ri X}$ is interpreted as a center-of-mass variable for the $n$
particles on the circle. Then $n$ functions in involution result by adding an
arbitrary function of $e^{\ri\Phi}$ to the $(n-1)$ commuting Hamiltonians
generated by the `total Lax matrix' $e^{-\ri \Phi}L^{y,\pm}$. On the dense open
domain the total Lax matrix is obtained by replacing $\rho(\theta)$ in \eqref{L_y^loc-IV}
by $\rho(\theta)e^{-\ri\Phi}$. By setting $e^{\ri\Phi}$ to $1$ and quotienting
by the canonical transformations generated by the functions of $e^{\ri\Phi}$ one
recovers the phase space of the relative motion, $\CP^{n-1}$. There are also several
other possibilities, as was discussed for analogous situations in \cite{Ru95,FA10}.
For example, one may replace $\UN(1)\times\UN(1)$ by its covering space $\R\times\R$.

In the near future, we wish to explore the classical dynamics and quantization of
the III$_\b$ systems. For arbitrary type (i) couplings, geometric quantization yields
the joint spectra of the quantized action variables effortlessly \cite{FK12-2}.
(It is necessary to introduce a second parameter into the systems before quantization,
which can be achieved by taking an arbitrary multiple of the symplectic form.)
The joint eigenfunctions of the quantized Ruijsenaars-Schneider Hamiltonian and
its commuting family should be derived by generalizing the results of van Diejen
and Vinet \cite{vDV98}.

\appendix

\cleardoublepage
\phantomsection
\addcontentsline{toc}{part}{Appendices}
\refstepcounter{appe}
\part*{Appendices}
\label{part:app}

\chapter{Appendix to Chapter \ref{chap:1}}
\label{chap:A}

\section{An alternative proof of Theorem \ref{thm:1.2}}
\label{sec:A.1}

In this appendix, we give an alternative proof for Theorem \ref{thm:1.2}, which is based on the scattering behaviour of particles in the rational Calogero-Moser model (see Figure \ref{fig:8}).

Recall the Lax pair found by Moser \cite{Mo75}
\begin{equation}
L_{jk}=p_j\delta_{jk}+\ri g\frac{1-\delta_{jk}}{q_j-q_k},\quad
B_{jk}=\ri g\delta_{jk}\sum_{\substack{l=1\\(l\neq k)}}^n
\frac{1}{(q_k-q_l)^2}-\ri g\frac{1-\delta_{jk}}{(q_j-q_k)^2},
\label{A.1}
\end{equation}
and consider $Q=\diag(q_1,\dots,q_n)$. These matrices satisfy the commutation relation
\begin{equation}
[z\1_n-L,Q]=\ri g(vv^\dag-\1_n),
\label{A.2}
\end{equation}
for any scalar $z\in\C$, where $v=(1\dots 1)^\dag$ and $\1_n$ stands for the $n\times n$ identity matrix. They also enjoy the following relations along solutions
\begin{equation}
\dot L=[L,B],\quad 
\dot Q=[Q,B]+L.
\label{A.3}
\end{equation}
The asymptotic form of solutions of the Calogero-Moser system is
\begin{equation}
q_k(t)\sim p_k^\pm t+q_k^\pm,\quad
p_k(t)\sim p_k^\pm,\quad t\to\pm\infty.
\label{A.4}
\end{equation}
Theorem \ref{thm:1.2} connects the following functions
\begin{equation}
C(z)=\tr(Q(z\1_n-L)^\vee vv^\dag),\quad
D(z)=\tr(Q(z\1_n-L)^\vee),
\label{A.5}
\end{equation}
where $M^\vee$ denotes the adjugate of $M$, i.e. the transpose of its cofactor matrix.

\begin{theorem}
\label{thm:A.1}
For any $n\in\N$, $p,q\in\R^n$ with $q_j\neq q_k$ $(j\neq k)$, and $z\in\C$ we have
\begin{equation}
C(z)=D(z)+\frac{\ri g}{2}\frac{d^2}{dz^2}\det(z\1_n-L).
\label{A.6}
\end{equation}
\end{theorem}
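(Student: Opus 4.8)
The plan is to give an independent, scattering-theoretic derivation of \eqref{A.6} by showing that the function
\begin{equation*}
F(z):=C(z)-D(z)-\frac{\ri g}{2}A''(z),\qquad A(z)=\det(z\1_n-L),
\end{equation*}
vanishes identically for every $z\in\C$ and every $(q,p)$ with pairwise distinct coordinates. The argument rests on two facts: (i) $C(z)-D(z)$ is a constant of motion of the rational Calogero--Moser flow, and $A(z)$ (hence $A''(z)$) is conserved by isospectrality of $L$; and (ii) along every trajectory one has $\lim_{t\to+\infty}(C(z)-D(z))=\tfrac{\ri g}{2}A''(z)$. Combining (i) and (ii) at generic initial data yields $F(z)=0$ on a dense subset of the phase space, whence $F(z)\equiv 0$ by continuity, and then for all $z$ by polynomiality (both $C(z)-D(z)$ and $A''(z)$ are polynomials in $z$ of degree at most $n-2$). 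It is convenient to argue for $z$ outside the spectrum of $L$, where $M:=z\1_n-L$ is invertible and $\adj(M)=\det(M)M^{-1}$.

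For (i), I would first record the explicit solution of the matrix equations \eqref{A.3}: from $\dot L=[L,B]$ one gets $\dot M=[M,B]$, so $M(t)=V(t)M(0)V(t)^{-1}$ with $\dot V=-BV$, $V(0)=\1_n$; the same $V(t)$ conjugates $L$, and $\dot Q=[Q,B]+L$ then integrates to $Q(t)=V(t)\big(Q(0)+tL(0)\big)V(t)^{-1}$. Using the commutation relation \eqref{A.2} in the form $vv^\dag-\1_n=\tfrac{1}{\ri g}[M,Q]$ together with $\adj(M)M=\det(M)\1_n$, one rewrites
\begin{equation*}
C(z)-D(z)=\tr\big(Q\,\adj(M)(vv^\dag-\1_n)\big)=\frac{\det M}{\ri g}\big(\tr(Q^2)-\tr(MQM^{-1}Q)\big).
\end{equation*}
Substituting the explicit solution and using crucially that $M(0)$ commutes with $L(0)$, the terms linear and quadratic in $t$ in $\tr(Q(t)^2)$ and $\tr(M(t)Q(t)M(t)^{-1}Q(t))$ cancel against each other, so $C(z)-D(z)$ equals its value at $t=0$; this proves (i).

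For (ii), I would invoke Moser's scattering asymptotics \cite{Mo75}: as $t\to+\infty$ one has $q_k(t)=p_k^+t+q_k^++o(1)$ and $p_k(t)\to p_k^+$, where for generic initial data the $p_k^+$ are the pairwise distinct eigenvalues of $L$, so that $A(z)(t)\equiv\prod_{j=1}^n(z-p_j^+)$. Writing $C(z)-D(z)=\sum_{i\ne k}q_i\,\adj(z\1_n-L)_{ik}$ and noting that for $i\ne k$ the off-diagonal adjugate entry has leading term proportional to $L_{ik}\prod_{j\ne i,k}(z-p_j^+)\sim\frac{\ri g}{(p_i^+-p_k^+)\,t}\prod_{j\ne i,k}(z-p_j^+)$ while $q_i(t)\sim p_i^+t$, each summand converges; collecting the two summands attached to an unordered pair $\{i,k\}$ and using the partial-fraction identity $\frac{a}{a-b}+\frac{b}{b-a}=1$ gives
\begin{equation*}
\lim_{t\to+\infty}\big(C(z)-D(z)\big)=\ri g\sum_{1\le i<k\le n}\prod_{\substack{j=1\\(j\ne i,k)}}^n(z-p_j^+)=\frac{\ri g}{2}A''(z).
\end{equation*}
With (i) and (ii) in hand, $F(z)\equiv 0$ and \eqref{A.6} follows. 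The step I expect to be the main obstacle is the bookkeeping in (ii): pinning down the precise leading-order asymptotics and, in particular, the combinatorial signs of the off-diagonal entries of $\adj(z\1_n-L)$, and checking that the resulting limit reproduces $\tfrac{\ri g}{2}A''(z)$ on the nose — the conceptual skeleton (conserved quantity plus known temporal asymptotics) is otherwise routine. A secondary technical point requiring care is the justification of the solution formula for $Q(t)$ and of the persistence of the constraint \eqref{A.2} along the flow, both of which follow from the reduction picture of Subsection \ref{subsec:1.1.2}, or directly from \cite{Mo75}.
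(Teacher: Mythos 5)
Your proposal is correct and follows essentially the same route as the paper's own Appendix argument: it shows that $C(z)-D(z)$ is conserved along the flow (using the rewriting via \eqref{A.2} and $MM^\vee=\det(M)\1_n$), notes that $A''(z)$ is conserved by isospectrality, and then fixes the constant by the $t\to\infty$ scattering asymptotics \eqref{A.4}. The only differences are cosmetic: you obtain conservation from the explicit projection-method solution instead of differentiating in time with the Lax equations, and you spell out the asymptotic bookkeeping, correctly arriving at $\lim_{t\to\infty}\bigl(C(z)-D(z)\bigr)=\tfrac{\ri g}{2}A''(z)$.
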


\begin{proof}
Pick any point in the phase space and consider the solution passing through it. The Lax matrix $L$ is isospectral along the solutions, thus its characteristic polynomial is constant, viz. $\frac{d}{dt}\det(z\1_n-L)=0$. Consequently, its second derivative w.r.t. $z$ is also constant, that is
\begin{equation}
\frac{d}{dt}\bigg(\frac{d^2}{dz^2}\det(z\1_n-L)\bigg)=0.
\label{A.7}
\end{equation}
The difference of the functions $C(z)$ and $D(z)$ reads
\begin{equation}
C(z)-D(z)=\tr\big(Q(z\1_n-L)^\vee(vv^\dag-\1_n)\big).
\label{A.8}
\end{equation}
By utilizing the commutation relation \eqref{A.2} of $L$ and $Q$, the above casts into
\begin{equation}
\ri g(C(z)-D(z))
=\det(z\1_n-L)\tr(Q^2)-\tr\big(Q(z\1_n-L)^\vee Q(z\1_n-L)\big),
\label{A.9}
\end{equation}
where we made use of the matrix identity $MM^\vee=\det(M)$. By applying \eqref{A.3} and
\begin{equation}
\frac{d}{dt}(z\1_n-L)^\vee=[(z\1_n-L)^\vee,B],
\label{A.10}
\end{equation}
as well as, the Leibniz rule and the cyclic property of the trace one finds that
\begin{equation}
\frac{d}{dt}\big(C(z)-D(z)\big)=0.
\label{A.11}
\end{equation}
Putting \eqref{A.7} and \eqref{A.11} together shows that $C(z)-D(z)+\frac{\ri g}{2}\frac{d^2}{dz^2}\det(z\1_n-L)$ is constant. However, due to \eqref{A.4}, in the asymptotic limit a closer inspection of \eqref{A.8} shows that
\begin{equation}
\lim_{t\to\infty}\bigg(C(z)-D(z)+\frac{\ri g}{2}\frac{d^2}{dz^2}\det(z\1_n-L)\bigg)=0.
\label{A.12}
\end{equation}
This concludes the proof.
\end{proof}

\begin{figure}[h!]
\centering
\begin{tikzpicture}
\begin{axis}[width=.88\textwidth,height=.48\textwidth,axis lines=center,
xlabel=$q$,ylabel={$t$},xlabel style={below},ylabel style={left},
legend cell align={left},xmin=-5.5,xmax=5.5,ymin=-3.,ymax=3.,xtick={-5,-4,-3,-2,-1,0,1,2,3,4,5},ytick={-2,-1,0,1,2},
after end axis/.code={\path (axis cs:-.25,-.32) node [] {0}
(axis cs:4.64782,3.) node [right] {$q_1(t)$}
(axis cs:1.,3.) node [right] {$q_2(t)$}
(axis cs:-5.2,3.) node [right] {$q_3(t)$};}
]

\addplot[domain=-5:5.,color=midnightblue,thick]
coordinates{
(5.88655,-3.)(5.86761,-2.99)(5.84867,-2.98)(5.82973,-2.97)(5.81079,-2.96)(5.79186,-2.95)(5.77292,-2.94)(5.75399,-2.93)(5.73505,-2.92)(5.71612,-2.91)(5.69719,-2.9)(5.67826,-2.89)(5.65933,-2.88)(5.6404,-2.87)(5.62148,-2.86)(5.60255,-2.85)(5.58362,-2.84)(5.5647,-2.83)(5.54578,-2.82)(5.52686,-2.81)(5.50794,-2.8)(5.48902,-2.79)(5.4701,-2.78)(5.45119,-2.77)(5.43227,-2.76)(5.41336,-2.75)(5.39445,-2.74)(5.37553,-2.73)(5.35662,-2.72)(5.33772,-2.71)(5.31881,-2.7)(5.2999,-2.69)(5.281,-2.68)(5.2621,-2.67)(5.2432,-2.66)(5.2243,-2.65)(5.2054,-2.64)(5.1865,-2.63)(5.16761,-2.62)(5.14871,-2.61)(5.12982,-2.6)(5.11093,-2.59)(5.09204,-2.58)(5.07315,-2.57)(5.05427,-2.56)(5.03538,-2.55)(5.0165,-2.54)(4.99762,-2.53)(4.97874,-2.52)(4.95986,-2.51)(4.94099,-2.5)(4.92211,-2.49)(4.90324,-2.48)(4.88437,-2.47)(4.8655,-2.46)(4.84663,-2.45)(4.82777,-2.44)(4.8089,-2.43)(4.79004,-2.42)(4.77118,-2.41)(4.75233,-2.4)(4.73347,-2.39)(4.71462,-2.38)(4.69577,-2.37)(4.67692,-2.36)(4.65807,-2.35)(4.63923,-2.34)(4.62038,-2.33)(4.60154,-2.32)(4.5827,-2.31)(4.56387,-2.3)(4.54503,-2.29)(4.5262,-2.28)(4.50737,-2.27)(4.48855,-2.26)(4.46972,-2.25)(4.4509,-2.24)(4.43208,-2.23)(4.41326,-2.22)(4.39445,-2.21)(4.37564,-2.2)(4.35683,-2.19)(4.33802,-2.18)(4.31922,-2.17)(4.30042,-2.16)(4.28162,-2.15)(4.26282,-2.14)(4.24403,-2.13)(4.22524,-2.12)(4.20645,-2.11)(4.18767,-2.1)(4.16889,-2.09)(4.15011,-2.08)(4.13133,-2.07)(4.11256,-2.06)(4.09379,-2.05)(4.07503,-2.04)(4.05627,-2.03)(4.03751,-2.02)(4.01875,-2.01)(4.,-2.)(3.98125,-1.99)(3.96251,-1.98)(3.94377,-1.97)(3.92503,-1.96)(3.9063,-1.95)(3.88757,-1.94)(3.86884,-1.93)(3.85012,-1.92)(3.8314,-1.91)(3.81269,-1.9)(3.79398,-1.89)(3.77527,-1.88)(3.75657,-1.87)(3.73787,-1.86)(3.71918,-1.85)(3.70049,-1.84)(3.68181,-1.83)(3.66313,-1.82)(3.64445,-1.81)(3.62578,-1.8)(3.60712,-1.79)(3.58846,-1.78)(3.56981,-1.77)(3.55116,-1.76)(3.53251,-1.75)(3.51387,-1.74)(3.49524,-1.73)(3.47661,-1.72)(3.45799,-1.71)(3.43937,-1.7)(3.42076,-1.69)(3.40216,-1.68)(3.38356,-1.67)(3.36497,-1.66)(3.34638,-1.65)(3.3278,-1.64)(3.30923,-1.63)(3.29066,-1.62)(3.2721,-1.61)(3.25355,-1.6)(3.23501,-1.59)(3.21647,-1.58)(3.19794,-1.57)(3.17941,-1.56)(3.1609,-1.55)(3.14239,-1.54)(3.12389,-1.53)(3.1054,-1.52)(3.08691,-1.51)(3.06844,-1.5)(3.04997,-1.49)(3.03151,-1.48)(3.01306,-1.47)(2.99463,-1.46)(2.9762,-1.45)(2.95777,-1.44)(2.93936,-1.43)(2.92096,-1.42)(2.90257,-1.41)(2.88419,-1.4)(2.86582,-1.39)(2.84747,-1.38)(2.82912,-1.37)(2.81078,-1.36)(2.79246,-1.35)(2.77415,-1.34)(2.75585,-1.33)(2.73756,-1.32)(2.71929,-1.31)(2.70103,-1.3)(2.68278,-1.29)(2.66455,-1.28)(2.64633,-1.27)(2.62813,-1.26)(2.60994,-1.25)(2.59176,-1.24)(2.5736,-1.23)(2.55546,-1.22)(2.53734,-1.21)(2.51923,-1.2)(2.50114,-1.19)(2.48306,-1.18)(2.46501,-1.17)(2.44697,-1.16)(2.42895,-1.15)(2.41095,-1.14)(2.39298,-1.13)(2.37502,-1.12)(2.35708,-1.11)(2.33917,-1.1)(2.32128,-1.09)(2.30341,-1.08)(2.28557,-1.07)(2.26775,-1.06)(2.24996,-1.05)(2.23219,-1.04)(2.21445,-1.03)(2.19674,-1.02)(2.17906,-1.01)(2.16141,-1.)(2.14378,-0.99)(2.12619,-0.98)(2.10863,-0.97)(2.0911,-0.96)(2.07361,-0.95)(2.05615,-0.94)(2.03873,-0.93)(2.02135,-0.92)(2.00401,-0.91)(1.9867,-0.9)(1.96944,-0.89)(1.95222,-0.88)(1.93504,-0.87)(1.91792,-0.86)(1.90083,-0.85)(1.8838,-0.84)(1.86682,-0.83)(1.84989,-0.82)(1.83301,-0.81)(1.8162,-0.8)(1.79943,-0.79)(1.78273,-0.78)(1.7661,-0.77)(1.74952,-0.76)(1.73302,-0.75)(1.71658,-0.74)(1.70022,-0.73)(1.68393,-0.72)(1.66772,-0.71)(1.65159,-0.7)(1.63554,-0.69)(1.61958,-0.68)(1.60371,-0.67)(1.58793,-0.66)(1.57226,-0.65)(1.55668,-0.64)(1.5412,-0.63)(1.52584,-0.62)(1.51059,-0.61)(1.49545,-0.6)(1.48044,-0.59)(1.46556,-0.58)(1.4508,-0.57)(1.43619,-0.56)(1.42171,-0.55)(1.40738,-0.54)(1.39321,-0.53)(1.37919,-0.52)(1.36534,-0.51)(1.35166,-0.5)(1.33816,-0.49)(1.32484,-0.48)(1.3117,-0.47)(1.29877,-0.46)(1.28604,-0.45)(1.27352,-0.44)(1.26121,-0.43)(1.24913,-0.42)(1.23728,-0.41)(1.22567,-0.4)(1.2143,-0.39)(1.20319,-0.38)(1.19233,-0.37)(1.18174,-0.36)(1.17142,-0.35)(1.16137,-0.34)(1.15161,-0.33)(1.14214,-0.32)(1.13296,-0.31)(1.12408,-0.3)(1.11551,-0.29)(1.10724,-0.28)(1.09928,-0.27)(1.09164,-0.26)(1.08431,-0.25)(1.0773,-0.24)(1.07061,-0.23)(1.06424,-0.22)(1.05819,-0.21)(1.05246,-0.2)(1.04704,-0.19)(1.04194,-0.18)(1.03716,-0.17)(1.03269,-0.16)(1.02853,-0.15)(1.02467,-0.14)(1.02112,-0.13)(1.01786,-0.12)(1.01489,-0.11)(1.01221,-0.1)(1.00981,-0.09)(1.00769,-0.08)(1.00584,-0.07)(1.00426,-0.06)(1.00293,-0.05)(1.00186,-0.04)(1.00104,-0.03)(1.00046,-0.02)(1.00011,-0.01)(1.,0.)(1.00011,0.01)(1.00044,0.02)(1.00099,0.03)(1.00174,0.04)(1.0027,0.05)(1.00385,0.06)(1.0052,0.07)(1.00674,0.08)(1.00846,0.09)(1.01036,0.1)(1.01244,0.11)(1.01469,0.12)(1.01711,0.13)(1.01968,0.14)(1.02242,0.15)(1.02532,0.16)(1.02837,0.17)(1.03157,0.18)(1.03492,0.19)(1.03841,0.2)(1.04205,0.21)(1.04582,0.22)(1.04973,0.23)(1.05378,0.24)(1.05796,0.25)(1.06227,0.26)(1.06671,0.27)(1.07127,0.28)(1.07596,0.29)(1.08077,0.3)(1.08571,0.31)(1.09076,0.32)(1.09594,0.33)(1.10123,0.34)(1.10664,0.35)(1.11216,0.36)(1.1178,0.37)(1.12355,0.38)(1.12941,0.39)(1.13539,0.4)(1.14147,0.41)(1.14767,0.42)(1.15397,0.43)(1.16038,0.44)(1.1669,0.45)(1.17353,0.46)(1.18026,0.47)(1.18709,0.48)(1.19403,0.49)(1.20108,0.5)(1.20823,0.51)(1.21548,0.52)(1.22283,0.53)(1.23029,0.54)(1.23785,0.55)(1.2455,0.56)(1.25326,0.57)(1.26112,0.58)(1.26908,0.59)(1.27714,0.6)(1.28529,0.61)(1.29354,0.62)(1.30189,0.63)(1.31034,0.64)(1.31888,0.65)(1.32752,0.66)(1.33626,0.67)(1.34508,0.68)(1.35401,0.69)(1.36302,0.7)(1.37213,0.71)(1.38134,0.72)(1.39063,0.73)(1.40002,0.74)(1.40949,0.75)(1.41906,0.76)(1.42872,0.77)(1.43846,0.78)(1.44829,0.79)(1.45821,0.8)(1.46822,0.81)(1.47832,0.82)(1.48849,0.83)(1.49876,0.84)(1.50911,0.85)(1.51954,0.86)(1.53005,0.87)(1.54065,0.88)(1.55133,0.89)(1.56208,0.9)(1.57292,0.91)(1.58384,0.92)(1.59483,0.93)(1.6059,0.94)(1.61705,0.95)(1.62827,0.96)(1.63957,0.97)(1.65094,0.98)(1.66238,0.99)(1.6739,1.)(1.68549,1.01)(1.69715,1.02)(1.70888,1.03)(1.72068,1.04)(1.73254,1.05)(1.74447,1.06)(1.75647,1.07)(1.76854,1.08)(1.78067,1.09)(1.79286,1.1)(1.80512,1.11)(1.81744,1.12)(1.82982,1.13)(1.84226,1.14)(1.85476,1.15)(1.86732,1.16)(1.87994,1.17)(1.89261,1.18)(1.90534,1.19)(1.91813,1.2)(1.93097,1.21)(1.94387,1.22)(1.95682,1.23)(1.96982,1.24)(1.98287,1.25)(1.99597,1.26)(2.00913,1.27)(2.02233,1.28)(2.03559,1.29)(2.04889,1.3)(2.06223,1.31)(2.07563,1.32)(2.08907,1.33)(2.10255,1.34)(2.11608,1.35)(2.12966,1.36)(2.14327,1.37)(2.15693,1.38)(2.17063,1.39)(2.18437,1.4)(2.19816,1.41)(2.21198,1.42)(2.22584,1.43)(2.23974,1.44)(2.25367,1.45)(2.26765,1.46)(2.28166,1.47)(2.29571,1.48)(2.30979,1.49)(2.32391,1.5)(2.33806,1.51)(2.35225,1.52)(2.36647,1.53)(2.38072,1.54)(2.395,1.55)(2.40932,1.56)(2.42367,1.57)(2.43804,1.58)(2.45245,1.59)(2.46689,1.6)(2.48136,1.61)(2.49586,1.62)(2.51038,1.63)(2.52493,1.64)(2.53951,1.65)(2.55412,1.66)(2.56876,1.67)(2.58342,1.68)(2.5981,1.69)(2.61281,1.7)(2.62755,1.71)(2.64231,1.72)(2.65709,1.73)(2.6719,1.74)(2.68674,1.75)(2.70159,1.76)(2.71647,1.77)(2.73137,1.78)(2.74629,1.79)(2.76124,1.8)(2.7762,1.81)(2.79119,1.82)(2.80619,1.83)(2.82122,1.84)(2.83627,1.85)(2.85133,1.86)(2.86642,1.87)(2.88153,1.88)(2.89665,1.89)(2.91179,1.9)(2.92695,1.91)(2.94213,1.92)(2.95733,1.93)(2.97254,1.94)(2.98777,1.95)(3.00302,1.96)(3.01829,1.97)(3.03357,1.98)(3.04886,1.99)(3.06418,2.)(3.07951,2.01)(3.09485,2.02)(3.11021,2.03)(3.12558,2.04)(3.14097,2.05)(3.15638,2.06)(3.17179,2.07)(3.18723,2.08)(3.20267,2.09)(3.21813,2.1)(3.23361,2.11)(3.24909,2.12)(3.26459,2.13)(3.2801,2.14)(3.29563,2.15)(3.31117,2.16)(3.32672,2.17)(3.34228,2.18)(3.35785,2.19)(3.37344,2.2)(3.38904,2.21)(3.40465,2.22)(3.42027,2.23)(3.4359,2.24)(3.45154,2.25)(3.4672,2.26)(3.48286,2.27)(3.49854,2.28)(3.51422,2.29)(3.52992,2.3)(3.54563,2.31)(3.56134,2.32)(3.57707,2.33)(3.59281,2.34)(3.60855,2.35)(3.62431,2.36)(3.64007,2.37)(3.65585,2.38)(3.67163,2.39)(3.68742,2.4)(3.70322,2.41)(3.71903,2.42)(3.73485,2.43)(3.75068,2.44)(3.76651,2.45)(3.78236,2.46)(3.79821,2.47)(3.81407,2.48)(3.82994,2.49)(3.84581,2.5)(3.8617,2.51)(3.87759,2.52)(3.89349,2.53)(3.90939,2.54)(3.92531,2.55)(3.94123,2.56)(3.95716,2.57)(3.97309,2.58)(3.98903,2.59)(4.00498,2.6)(4.02094,2.61)(4.0369,2.62)(4.05287,2.63)(4.06885,2.64)(4.08483,2.65)(4.10082,2.66)(4.11682,2.67)(4.13282,2.68)(4.14883,2.69)(4.16484,2.7)(4.18087,2.71)(4.19689,2.72)(4.21292,2.73)(4.22896,2.74)(4.24501,2.75)(4.26106,2.76)(4.27711,2.77)(4.29317,2.78)(4.30924,2.79)(4.32531,2.8)(4.34139,2.81)(4.35747,2.82)(4.37356,2.83)(4.38966,2.84)(4.40575,2.85)(4.42186,2.86)(4.43797,2.87)(4.45408,2.88)(4.4702,2.89)(4.48632,2.9)(4.50245,2.91)(4.51859,2.92)(4.53472,2.93)(4.55087,2.94)(4.56701,2.95)(4.58316,2.96)(4.59932,2.97)(4.61548,2.98)(4.63165,2.99)(4.64782,3.)};

\addplot[domain=-5:5.,color=midnightblue,thick]
coordinates{
(-0.248582,-3.)(-0.246116,-2.99)(-0.243649,-2.98)(-0.241182,-2.97)(-0.238715,-2.96)(-0.236247,-2.95)(-0.233778,-2.94)(-0.231309,-2.93)(-0.22884,-2.92)(-0.226371,-2.91)(-0.2239,-2.9)(-0.22143,-2.89)(-0.218959,-2.88)(-0.216488,-2.87)(-0.214016,-2.86)(-0.211544,-2.85)(-0.209071,-2.84)(-0.206598,-2.83)(-0.204125,-2.82)(-0.201651,-2.81)(-0.199177,-2.8)(-0.196702,-2.79)(-0.194227,-2.78)(-0.191752,-2.77)(-0.189276,-2.76)(-0.186799,-2.75)(-0.184322,-2.74)(-0.181845,-2.73)(-0.179368,-2.72)(-0.17689,-2.71)(-0.174411,-2.7)(-0.171932,-2.69)(-0.169453,-2.68)(-0.166973,-2.67)(-0.164493,-2.66)(-0.162012,-2.65)(-0.159531,-2.64)(-0.15705,-2.63)(-0.154568,-2.62)(-0.152086,-2.61)(-0.149603,-2.6)(-0.14712,-2.59)(-0.144636,-2.58)(-0.142152,-2.57)(-0.139668,-2.56)(-0.137183,-2.55)(-0.134698,-2.54)(-0.132213,-2.53)(-0.129727,-2.52)(-0.12724,-2.51)(-0.124753,-2.5)(-0.122266,-2.49)(-0.119779,-2.48)(-0.117291,-2.47)(-0.114802,-2.46)(-0.112314,-2.45)(-0.109824,-2.44)(-0.107335,-2.43)(-0.104845,-2.42)(-0.102355,-2.41)(-0.0998642,-2.4)(-0.0973731,-2.39)(-0.0948818,-2.38)(-0.09239,-2.37)(-0.0898979,-2.36)(-0.0874055,-2.35)(-0.0849127,-2.34)(-0.0824196,-2.33)(-0.0799261,-2.32)(-0.0774323,-2.31)(-0.0749382,-2.3)(-0.0724437,-2.29)(-0.0699489,-2.28)(-0.0674538,-2.27)(-0.0649585,-2.26)(-0.0624628,-2.25)(-0.0599668,-2.24)(-0.0574705,-2.23)(-0.054974,-2.22)(-0.0524772,-2.21)(-0.0499802,-2.2)(-0.0474828,-2.19)(-0.0449853,-2.18)(-0.0424875,-2.17)(-0.0399895,-2.16)(-0.0374913,-2.15)(-0.0349928,-2.14)(-0.0324942,-2.13)(-0.0299954,-2.12)(-0.0274964,-2.11)(-0.0249973,-2.1)(-0.022498,-2.09)(-0.0199986,-2.08)(-0.017499,-2.07)(-0.0149994,-2.06)(-0.0124996,-2.05)(-0.0099998,-2.04)(-0.0074999,-2.03)(-0.00499996,-2.02)(-0.00249998,-2.01)(2.25273*10^-8,-2.)(0.00250002,-1.99)(0.005,-1.98)(0.00749994,-1.97)(0.00999982,-1.96)(0.0124996,-1.95)(0.0149993,-1.94)(0.0174989,-1.93)(0.0199984,-1.92)(0.0224977,-1.91)(0.0249968,-1.9)(0.0274957,-1.89)(0.0299944,-1.88)(0.0324928,-1.87)(0.0349909,-1.86)(0.0374887,-1.85)(0.0399861,-1.84)(0.0424832,-1.83)(0.0449799,-1.82)(0.0474761,-1.81)(0.0499719,-1.8)(0.0524671,-1.79)(0.0549618,-1.78)(0.0574559,-1.77)(0.0599495,-1.76)(0.0624423,-1.75)(0.0649344,-1.74)(0.0674259,-1.73)(0.0699165,-1.72)(0.0724063,-1.71)(0.0748952,-1.7)(0.0773832,-1.69)(0.0798702,-1.68)(0.0823562,-1.67)(0.0848411,-1.66)(0.0873249,-1.65)(0.0898074,-1.64)(0.0922887,-1.63)(0.0947687,-1.62)(0.0972474,-1.61)(0.0997245,-1.6)(0.1022,-1.59)(0.104674,-1.58)(0.107147,-1.57)(0.109617,-1.56)(0.112086,-1.55)(0.114553,-1.54)(0.117018,-1.53)(0.119481,-1.52)(0.121942,-1.51)(0.1244,-1.5)(0.126856,-1.49)(0.12931,-1.48)(0.131761,-1.47)(0.134209,-1.46)(0.136655,-1.45)(0.139097,-1.44)(0.141537,-1.43)(0.143974,-1.42)(0.146407,-1.41)(0.148837,-1.4)(0.151263,-1.39)(0.153685,-1.38)(0.156104,-1.37)(0.158519,-1.36)(0.160929,-1.35)(0.163335,-1.34)(0.165737,-1.33)(0.168134,-1.32)(0.170526,-1.31)(0.172912,-1.3)(0.175294,-1.29)(0.17767,-1.28)(0.18004,-1.27)(0.182405,-1.26)(0.184763,-1.25)(0.187114,-1.24)(0.189459,-1.23)(0.191797,-1.22)(0.194128,-1.21)(0.196451,-1.2)(0.198766,-1.19)(0.201073,-1.18)(0.203372,-1.17)(0.205661,-1.16)(0.207942,-1.15)(0.210213,-1.14)(0.212474,-1.13)(0.214724,-1.12)(0.216964,-1.11)(0.219193,-1.1)(0.22141,-1.09)(0.223615,-1.08)(0.225807,-1.07)(0.227986,-1.06)(0.230152,-1.05)(0.232303,-1.04)(0.234439,-1.03)(0.23656,-1.02)(0.238665,-1.01)(0.240753,-1.)(0.242824,-0.99)(0.244877,-0.98)(0.246911,-0.97)(0.248925,-0.96)(0.250919,-0.95)(0.252891,-0.94)(0.254841,-0.93)(0.256768,-0.92)(0.258671,-0.91)(0.260549,-0.9)(0.2624,-0.89)(0.264225,-0.88)(0.26602,-0.87)(0.267786,-0.86)(0.269521,-0.85)(0.271224,-0.84)(0.272893,-0.83)(0.274527,-0.82)(0.276124,-0.81)(0.277683,-0.8)(0.279202,-0.79)(0.28068,-0.78)(0.282114,-0.77)(0.283502,-0.76)(0.284843,-0.75)(0.286135,-0.74)(0.287375,-0.73)(0.288561,-0.72)(0.28969,-0.71)(0.290761,-0.7)(0.291771,-0.69)(0.292716,-0.68)(0.293594,-0.67)(0.294403,-0.66)(0.295138,-0.65)(0.295797,-0.64)(0.296377,-0.63)(0.296873,-0.62)(0.297283,-0.61)(0.297602,-0.6)(0.297827,-0.59)(0.297953,-0.58)(0.297976,-0.57)(0.297892,-0.56)(0.297696,-0.55)(0.297384,-0.54)(0.29695,-0.53)(0.29639,-0.52)(0.295699,-0.51)(0.294871,-0.5)(0.293901,-0.49)(0.292783,-0.48)(0.291511,-0.47)(0.290081,-0.46)(0.288486,-0.45)(0.286721,-0.44)(0.284779,-0.43)(0.282655,-0.42)(0.280343,-0.41)(0.277837,-0.4)(0.275132,-0.39)(0.272221,-0.38)(0.2691,-0.37)(0.265764,-0.36)(0.262207,-0.35)(0.258425,-0.34)(0.254414,-0.33)(0.25017,-0.32)(0.245689,-0.31)(0.240968,-0.3)(0.236006,-0.29)(0.230799,-0.28)(0.225347,-0.27)(0.219649,-0.26)(0.213706,-0.25)(0.207517,-0.24)(0.201085,-0.23)(0.194411,-0.22)(0.187498,-0.21)(0.18035,-0.2)(0.172972,-0.19)(0.165369,-0.18)(0.157545,-0.17)(0.149509,-0.16)(0.141268,-0.15)(0.132828,-0.14)(0.1242,-0.13)(0.115393,-0.12)(0.106416,-0.11)(0.0972809,-0.1)(0.0879983,-0.09)(0.0785803,-0.08)(0.0690396,-0.07)(0.0593893,-0.06)(0.0496431,-0.05)(0.0398155,-0.04)(0.0299214,-0.03)(0.0199765,-0.02)(0.00999703,-0.01)(0.,0.)(-0.00999697,0.01)(-0.0199755,0.02)(-0.0299165,0.03)(-0.0398001,0.04)(-0.0496055,0.05)(-0.0593114,0.06)(-0.0688953,0.07)(-0.078334,0.08)(-0.0876037,0.09)(-0.0966795,0.1)(-0.105536,0.11)(-0.114147,0.12)(-0.122485,0.13)(-0.130525,0.14)(-0.138237,0.15)(-0.145596,0.16)(-0.152574,0.17)(-0.159145,0.18)(-0.165283,0.19)(-0.170965,0.2)(-0.176168,0.21)(-0.180872,0.22)(-0.18506,0.23)(-0.188717,0.24)(-0.191833,0.25)(-0.194399,0.26)(-0.196411,0.27)(-0.197869,0.28)(-0.198777,0.29)(-0.199141,0.3)(-0.198971,0.31)(-0.198282,0.32)(-0.197088,0.33)(-0.195407,0.34)(-0.193261,0.35)(-0.190671,0.36)(-0.187658,0.37)(-0.184247,0.38)(-0.180461,0.39)(-0.176323,0.4)(-0.171857,0.41)(-0.167084,0.42)(-0.162029,0.43)(-0.156711,0.44)(-0.151151,0.45)(-0.145369,0.46)(-0.139384,0.47)(-0.133212,0.48)(-0.126871,0.49)(-0.120377,0.5)(-0.113743,0.51)(-0.106983,0.52)(-0.100111,0.53)(-0.0931377,0.54)(-0.0860752,0.55)(-0.0789335,0.56)(-0.0717224,0.57)(-0.064451,0.58)(-0.0571276,0.59)(-0.04976,0.6)(-0.0423555,0.61)(-0.0349208,0.62)(-0.0274623,0.63)(-0.0199858,0.64)(-0.0124966,0.65)(-0.00499975,0.66)(0.00250001,0.67)(0.0099984,0.68)(0.0174914,0.69)(0.0249752,0.7)(0.0324466,0.71)(0.0399021,0.72)(0.0473389,0.73)(0.0547542,0.74)(0.0621456,0.75)(0.0695107,0.76)(0.0768472,0.77)(0.0841533,0.78)(0.0914271,0.79)(0.098667,0.8)(0.105871,0.81)(0.113039,0.82)(0.120168,0.83)(0.127258,0.84)(0.134307,0.85)(0.141316,0.86)(0.148281,0.87)(0.155204,0.88)(0.162084,0.89)(0.168919,0.9)(0.175709,0.91)(0.182454,0.92)(0.189153,0.93)(0.195806,0.94)(0.202413,0.95)(0.208974,0.96)(0.215488,0.97)(0.221955,0.98)(0.228376,0.99)(0.23475,1.)(0.241077,1.01)(0.247357,1.02)(0.253591,1.03)(0.259778,1.04)(0.265918,1.05)(0.272013,1.06)(0.278062,1.07)(0.284064,1.08)(0.290022,1.09)(0.295934,1.1)(0.301801,1.11)(0.307624,1.12)(0.313403,1.13)(0.319138,1.14)(0.324829,1.15)(0.330477,1.16)(0.336082,1.17)(0.341645,1.18)(0.347167,1.19)(0.352646,1.2)(0.358085,1.21)(0.363483,1.22)(0.368841,1.23)(0.374159,1.24)(0.379438,1.25)(0.384678,1.26)(0.38988,1.27)(0.395043,1.28)(0.40017,1.29)(0.405259,1.3)(0.410312,1.31)(0.415329,1.32)(0.42031,1.33)(0.425257,1.34)(0.430168,1.35)(0.435046,1.36)(0.43989,1.37)(0.4447,1.38)(0.449478,1.39)(0.454224,1.4)(0.458937,1.41)(0.46362,1.42)(0.468271,1.43)(0.472892,1.44)(0.477482,1.45)(0.482044,1.46)(0.486575,1.47)(0.491079,1.48)(0.495553,1.49)(0.5,1.5)(0.504419,1.51)(0.508812,1.52)(0.513177,1.53)(0.517516,1.54)(0.52183,1.55)(0.526117,1.56)(0.53038,1.57)(0.534618,1.58)(0.538831,1.59)(0.543021,1.6)(0.547187,1.61)(0.55133,1.62)(0.555449,1.63)(0.559547,1.64)(0.563622,1.65)(0.567675,1.66)(0.571706,1.67)(0.575717,1.68)(0.579706,1.69)(0.583675,1.7)(0.587624,1.71)(0.591553,1.72)(0.595462,1.73)(0.599352,1.74)(0.603222,1.75)(0.607074,1.76)(0.610908,1.77)(0.614724,1.78)(0.618521,1.79)(0.622301,1.8)(0.626064,1.81)(0.62981,1.82)(0.633539,1.83)(0.637251,1.84)(0.640947,1.85)(0.644627,1.86)(0.648291,1.87)(0.65194,1.88)(0.655573,1.89)(0.659191,1.9)(0.662795,1.91)(0.666384,1.92)(0.669958,1.93)(0.673518,1.94)(0.677064,1.95)(0.680597,1.96)(0.684116,1.97)(0.687621,1.98)(0.691113,1.99)(0.694593,2.)(0.698059,2.01)(0.701513,2.02)(0.704955,2.03)(0.708385,2.04)(0.711802,2.05)(0.715207,2.06)(0.718601,2.07)(0.721984,2.08)(0.725355,2.09)(0.728715,2.1)(0.732064,2.11)(0.735402,2.12)(0.738729,2.13)(0.742046,2.14)(0.745352,2.15)(0.748648,2.16)(0.751934,2.17)(0.75521,2.18)(0.758477,2.19)(0.761733,2.2)(0.764981,2.21)(0.768218,2.22)(0.771447,2.23)(0.774666,2.24)(0.777877,2.25)(0.781078,2.26)(0.784271,2.27)(0.787455,2.28)(0.790631,2.29)(0.793798,2.3)(0.796957,2.31)(0.800108,2.32)(0.803251,2.33)(0.806385,2.34)(0.809512,2.35)(0.812632,2.36)(0.815743,2.37)(0.818848,2.38)(0.821944,2.39)(0.825034,2.4)(0.828116,2.41)(0.831191,2.42)(0.834259,2.43)(0.83732,2.44)(0.840375,2.45)(0.843422,2.46)(0.846463,2.47)(0.849498,2.48)(0.852525,2.49)(0.855547,2.5)(0.858562,2.51)(0.861571,2.52)(0.864574,2.53)(0.867571,2.54)(0.870561,2.55)(0.873546,2.56)(0.876525,2.57)(0.879498,2.58)(0.882466,2.59)(0.885428,2.6)(0.888384,2.61)(0.891335,2.62)(0.89428,2.63)(0.89722,2.64)(0.900155,2.65)(0.903085,2.66)(0.906009,2.67)(0.908929,2.68)(0.911843,2.69)(0.914752,2.7)(0.917657,2.71)(0.920556,2.72)(0.923451,2.73)(0.926341,2.74)(0.929227,2.75)(0.932108,2.76)(0.934984,2.77)(0.937856,2.78)(0.940723,2.79)(0.943586,2.8)(0.946445,2.81)(0.9493,2.82)(0.95215,2.83)(0.954996,2.84)(0.957838,2.85)(0.960675,2.86)(0.963509,2.87)(0.966339,2.88)(0.969164,2.89)(0.971986,2.9)(0.974804,2.91)(0.977619,2.92)(0.980429,2.93)(0.983236,2.94)(0.986039,2.95)(0.988838,2.96)(0.991634,2.97)(0.994426,2.98)(0.997215,2.99)(1.,3.)};

\addplot[domain=-5:5.,color=midnightblue,thick]
coordinates{
(-5.63797,-3.)(-5.6215,-2.99)(-5.60502,-2.98)(-5.58855,-2.97)(-5.57208,-2.96)(-5.55561,-2.95)(-5.53914,-2.94)(-5.52268,-2.93)(-5.50621,-2.92)(-5.48975,-2.91)(-5.47329,-2.9)(-5.45683,-2.89)(-5.44037,-2.88)(-5.42391,-2.87)(-5.40746,-2.86)(-5.39101,-2.85)(-5.37455,-2.84)(-5.3581,-2.83)(-5.34165,-2.82)(-5.32521,-2.81)(-5.30876,-2.8)(-5.29232,-2.79)(-5.27587,-2.78)(-5.25943,-2.77)(-5.243,-2.76)(-5.22656,-2.75)(-5.21012,-2.74)(-5.19369,-2.73)(-5.17726,-2.72)(-5.16083,-2.71)(-5.1444,-2.7)(-5.12797,-2.69)(-5.11155,-2.68)(-5.09512,-2.67)(-5.0787,-2.66)(-5.06228,-2.65)(-5.04587,-2.64)(-5.02945,-2.63)(-5.01304,-2.62)(-4.99663,-2.61)(-4.98022,-2.6)(-4.96381,-2.59)(-4.9474,-2.58)(-4.931,-2.57)(-4.9146,-2.56)(-4.8982,-2.55)(-4.8818,-2.54)(-4.86541,-2.53)(-4.84901,-2.52)(-4.83262,-2.51)(-4.81623,-2.5)(-4.79985,-2.49)(-4.78346,-2.48)(-4.76708,-2.47)(-4.7507,-2.46)(-4.73432,-2.45)(-4.71794,-2.44)(-4.70157,-2.43)(-4.6852,-2.42)(-4.66883,-2.41)(-4.65246,-2.4)(-4.6361,-2.39)(-4.61974,-2.38)(-4.60338,-2.37)(-4.58702,-2.36)(-4.57067,-2.35)(-4.55431,-2.34)(-4.53796,-2.33)(-4.52162,-2.32)(-4.50527,-2.31)(-4.48893,-2.3)(-4.47259,-2.29)(-4.45625,-2.28)(-4.43992,-2.27)(-4.42359,-2.26)(-4.40726,-2.25)(-4.39093,-2.24)(-4.37461,-2.23)(-4.35829,-2.22)(-4.34197,-2.21)(-4.32566,-2.2)(-4.30934,-2.19)(-4.29304,-2.18)(-4.27673,-2.17)(-4.26043,-2.16)(-4.24413,-2.15)(-4.22783,-2.14)(-4.21153,-2.13)(-4.19524,-2.12)(-4.17895,-2.11)(-4.16267,-2.1)(-4.14639,-2.09)(-4.13011,-2.08)(-4.11383,-2.07)(-4.09756,-2.06)(-4.08129,-2.05)(-4.06503,-2.04)(-4.04877,-2.03)(-4.03251,-2.02)(-4.01625,-2.01)(-4.,-2.)(-3.98375,-1.99)(-3.96751,-1.98)(-3.95127,-1.97)(-3.93503,-1.96)(-3.91879,-1.95)(-3.90256,-1.94)(-3.88634,-1.93)(-3.87012,-1.92)(-3.8539,-1.91)(-3.83768,-1.9)(-3.82147,-1.89)(-3.80526,-1.88)(-3.78906,-1.87)(-3.77286,-1.86)(-3.75667,-1.85)(-3.74048,-1.84)(-3.72429,-1.83)(-3.70811,-1.82)(-3.69193,-1.81)(-3.67576,-1.8)(-3.65959,-1.79)(-3.64342,-1.78)(-3.62726,-1.77)(-3.61111,-1.76)(-3.59496,-1.75)(-3.57881,-1.74)(-3.56267,-1.73)(-3.54653,-1.72)(-3.5304,-1.71)(-3.51427,-1.7)(-3.49815,-1.69)(-3.48203,-1.68)(-3.46592,-1.67)(-3.44981,-1.66)(-3.43371,-1.65)(-3.41761,-1.64)(-3.40152,-1.63)(-3.38543,-1.62)(-3.36935,-1.61)(-3.35328,-1.6)(-3.33721,-1.59)(-3.32114,-1.58)(-3.30508,-1.57)(-3.28903,-1.56)(-3.27298,-1.55)(-3.25694,-1.54)(-3.24091,-1.53)(-3.22488,-1.52)(-3.20885,-1.51)(-3.19284,-1.5)(-3.17683,-1.49)(-3.16082,-1.48)(-3.14483,-1.47)(-3.12883,-1.46)(-3.11285,-1.45)(-3.09687,-1.44)(-3.0809,-1.43)(-3.06494,-1.42)(-3.04898,-1.41)(-3.03303,-1.4)(-3.01709,-1.39)(-3.00115,-1.38)(-2.98522,-1.37)(-2.9693,-1.36)(-2.95339,-1.35)(-2.93748,-1.34)(-2.92159,-1.33)(-2.9057,-1.32)(-2.88982,-1.31)(-2.87394,-1.3)(-2.85808,-1.29)(-2.84222,-1.28)(-2.82637,-1.27)(-2.81053,-1.26)(-2.7947,-1.25)(-2.77888,-1.24)(-2.76306,-1.23)(-2.74726,-1.22)(-2.73146,-1.21)(-2.71568,-1.2)(-2.6999,-1.19)(-2.68413,-1.18)(-2.66838,-1.17)(-2.65263,-1.16)(-2.63689,-1.15)(-2.62117,-1.14)(-2.60545,-1.13)(-2.58974,-1.12)(-2.57405,-1.11)(-2.55836,-1.1)(-2.54269,-1.09)(-2.52703,-1.08)(-2.51138,-1.07)(-2.49574,-1.06)(-2.48011,-1.05)(-2.4645,-1.04)(-2.44889,-1.03)(-2.4333,-1.02)(-2.41772,-1.01)(-2.40216,-1.)(-2.38661,-0.99)(-2.37107,-0.98)(-2.35554,-0.97)(-2.34003,-0.96)(-2.32453,-0.95)(-2.30904,-0.94)(-2.29357,-0.93)(-2.27812,-0.92)(-2.26268,-0.91)(-2.24725,-0.9)(-2.23184,-0.89)(-2.21644,-0.88)(-2.20106,-0.87)(-2.1857,-0.86)(-2.17035,-0.85)(-2.15502,-0.84)(-2.13971,-0.83)(-2.12442,-0.82)(-2.10914,-0.81)(-2.09388,-0.8)(-2.07864,-0.79)(-2.06341,-0.78)(-2.04821,-0.77)(-2.03303,-0.76)(-2.01786,-0.75)(-2.00272,-0.74)(-1.98759,-0.73)(-1.97249,-0.72)(-1.95741,-0.71)(-1.94235,-0.7)(-1.92731,-0.69)(-1.9123,-0.68)(-1.8973,-0.67)(-1.88234,-0.66)(-1.86739,-0.65)(-1.85247,-0.64)(-1.83758,-0.63)(-1.82271,-0.62)(-1.80787,-0.61)(-1.79306,-0.6)(-1.77827,-0.59)(-1.76351,-0.58)(-1.74878,-0.57)(-1.73408,-0.56)(-1.71941,-0.55)(-1.70477,-0.54)(-1.69016,-0.53)(-1.67558,-0.52)(-1.66104,-0.51)(-1.64653,-0.5)(-1.63206,-0.49)(-1.61762,-0.48)(-1.60322,-0.47)(-1.58885,-0.46)(-1.57453,-0.45)(-1.56024,-0.44)(-1.54599,-0.43)(-1.53179,-0.42)(-1.51763,-0.41)(-1.50351,-0.4)(-1.48944,-0.39)(-1.47541,-0.38)(-1.46143,-0.37)(-1.4475,-0.36)(-1.43362,-0.35)(-1.4198,-0.34)(-1.40603,-0.33)(-1.39231,-0.32)(-1.37865,-0.31)(-1.36505,-0.3)(-1.35151,-0.29)(-1.33804,-0.28)(-1.32463,-0.27)(-1.31129,-0.26)(-1.29802,-0.25)(-1.28482,-0.24)(-1.2717,-0.23)(-1.25865,-0.22)(-1.24569,-0.21)(-1.23281,-0.2)(-1.22001,-0.19)(-1.20731,-0.18)(-1.19471,-0.17)(-1.1822,-0.16)(-1.1698,-0.15)(-1.1575,-0.14)(-1.14532,-0.13)(-1.13325,-0.12)(-1.12131,-0.11)(-1.10949,-0.1)(-1.09781,-0.09)(-1.08627,-0.08)(-1.07488,-0.07)(-1.06365,-0.06)(-1.05257,-0.05)(-1.04168,-0.04)(-1.03096,-0.03)(-1.02043,-0.02)(-1.01011,-0.01)(-1.,0.)(-0.990115,0.01)(-0.980467,0.02)(-0.971071,0.03)(-0.961941,0.04)(-0.953093,0.05)(-0.944542,0.06)(-0.936307,0.07)(-0.928406,0.08)(-0.920858,0.09)(-0.913684,0.1)(-0.906905,0.11)(-0.900542,0.12)(-0.89462,0.13)(-0.88916,0.14)(-0.884188,0.15)(-0.879726,0.16)(-0.875798,0.17)(-0.872428,0.18)(-0.869638,0.19)(-0.867449,0.2)(-0.865881,0.21)(-0.864951,0.22)(-0.864674,0.23)(-0.865063,0.24)(-0.866126,0.25)(-0.867869,0.26)(-0.870294,0.27)(-0.873401,0.28)(-0.877182,0.29)(-0.881631,0.3)(-0.886736,0.31)(-0.892481,0.32)(-0.898849,0.33)(-0.905821,0.34)(-0.913376,0.35)(-0.92149,0.36)(-0.93014,0.37)(-0.939302,0.38)(-0.948952,0.39)(-0.959065,0.4)(-0.969616,0.41)(-0.980583,0.42)(-0.991942,0.43)(-1.00367,0.44)(-1.01575,0.45)(-1.02816,0.46)(-1.04087,0.47)(-1.05388,0.48)(-1.06716,0.49)(-1.0807,0.5)(-1.09449,0.51)(-1.1085,0.52)(-1.12272,0.53)(-1.13715,0.54)(-1.15177,0.55)(-1.16657,0.56)(-1.18154,0.57)(-1.19667,0.58)(-1.21195,0.59)(-1.22738,0.6)(-1.24293,0.61)(-1.25862,0.62)(-1.27443,0.63)(-1.29035,0.64)(-1.30639,0.65)(-1.32252,0.66)(-1.33876,0.67)(-1.35508,0.68)(-1.3715,0.69)(-1.388,0.7)(-1.40458,0.71)(-1.42124,0.72)(-1.43797,0.73)(-1.45477,0.74)(-1.47164,0.75)(-1.48857,0.76)(-1.50556,0.77)(-1.52261,0.78)(-1.53972,0.79)(-1.55688,0.8)(-1.57409,0.81)(-1.59135,0.82)(-1.60866,0.83)(-1.62602,0.84)(-1.64341,0.85)(-1.66085,0.86)(-1.67833,0.87)(-1.69585,0.88)(-1.71341,0.89)(-1.731,0.9)(-1.74863,0.91)(-1.76629,0.92)(-1.78398,0.93)(-1.80171,0.94)(-1.81946,0.95)(-1.83724,0.96)(-1.85506,0.97)(-1.87289,0.98)(-1.89076,0.99)(-1.90865,1.)(-1.92657,1.01)(-1.94451,1.02)(-1.96247,1.03)(-1.98045,1.04)(-1.99846,1.05)(-2.01649,1.06)(-2.03454,1.07)(-2.0526,1.08)(-2.07069,1.09)(-2.0888,1.1)(-2.10692,1.11)(-2.12506,1.12)(-2.14322,1.13)(-2.1614,1.14)(-2.17959,1.15)(-2.1978,1.16)(-2.21602,1.17)(-2.23426,1.18)(-2.25251,1.19)(-2.27078,1.2)(-2.28906,1.21)(-2.30735,1.22)(-2.32566,1.23)(-2.34398,1.24)(-2.36231,1.25)(-2.38065,1.26)(-2.39901,1.27)(-2.41738,1.28)(-2.43576,1.29)(-2.45415,1.3)(-2.47255,1.31)(-2.49096,1.32)(-2.50938,1.33)(-2.52781,1.34)(-2.54625,1.35)(-2.5647,1.36)(-2.58316,1.37)(-2.60163,1.38)(-2.62011,1.39)(-2.6386,1.4)(-2.65709,1.41)(-2.6756,1.42)(-2.69411,1.43)(-2.71263,1.44)(-2.73116,1.45)(-2.74969,1.46)(-2.76823,1.47)(-2.78679,1.48)(-2.80534,1.49)(-2.82391,1.5)(-2.84248,1.51)(-2.86106,1.52)(-2.87964,1.53)(-2.89823,1.54)(-2.91683,1.55)(-2.93544,1.56)(-2.95405,1.57)(-2.97266,1.58)(-2.99128,1.59)(-3.00991,1.6)(-3.02855,1.61)(-3.04719,1.62)(-3.06583,1.63)(-3.08448,1.64)(-3.10314,1.65)(-3.1218,1.66)(-3.14046,1.67)(-3.15913,1.68)(-3.17781,1.69)(-3.19649,1.7)(-3.21517,1.71)(-3.23386,1.72)(-3.25256,1.73)(-3.27126,1.74)(-3.28996,1.75)(-3.30867,1.76)(-3.32738,1.77)(-3.34609,1.78)(-3.36481,1.79)(-3.38354,1.8)(-3.40226,1.81)(-3.421,1.82)(-3.43973,1.83)(-3.45847,1.84)(-3.47721,1.85)(-3.49596,1.86)(-3.51471,1.87)(-3.53347,1.88)(-3.55222,1.89)(-3.57098,1.9)(-3.58975,1.91)(-3.60852,1.92)(-3.62729,1.93)(-3.64606,1.94)(-3.66484,1.95)(-3.68362,1.96)(-3.7024,1.97)(-3.72119,1.98)(-3.73998,1.99)(-3.75877,2.)(-3.77757,2.01)(-3.79636,2.02)(-3.81516,2.03)(-3.83397,2.04)(-3.85278,2.05)(-3.87158,2.06)(-3.8904,2.07)(-3.90921,2.08)(-3.92803,2.09)(-3.94685,2.1)(-3.96567,2.11)(-3.98449,2.12)(-4.00332,2.13)(-4.02215,2.14)(-4.04098,2.15)(-4.05982,2.16)(-4.07865,2.17)(-4.09749,2.18)(-4.11633,2.19)(-4.13517,2.2)(-4.15402,2.21)(-4.17287,2.22)(-4.19172,2.23)(-4.21057,2.24)(-4.22942,2.25)(-4.24828,2.26)(-4.26713,2.27)(-4.28599,2.28)(-4.30486,2.29)(-4.32372,2.3)(-4.34258,2.31)(-4.36145,2.32)(-4.38032,2.33)(-4.39919,2.34)(-4.41806,2.35)(-4.43694,2.36)(-4.45582,2.37)(-4.47469,2.38)(-4.49357,2.39)(-4.51245,2.4)(-4.53134,2.41)(-4.55022,2.42)(-4.56911,2.43)(-4.588,2.44)(-4.60689,2.45)(-4.62578,2.46)(-4.64467,2.47)(-4.66357,2.48)(-4.68246,2.49)(-4.70136,2.5)(-4.72026,2.51)(-4.73916,2.52)(-4.75806,2.53)(-4.77696,2.54)(-4.79587,2.55)(-4.81477,2.56)(-4.83368,2.57)(-4.85259,2.58)(-4.8715,2.59)(-4.89041,2.6)(-4.90932,2.61)(-4.92824,2.62)(-4.94715,2.63)(-4.96607,2.64)(-4.98499,2.65)(-5.00391,2.66)(-5.02283,2.67)(-5.04175,2.68)(-5.06067,2.69)(-5.0796,2.7)(-5.09852,2.71)(-5.11745,2.72)(-5.13638,2.73)(-5.1553,2.74)(-5.17423,2.75)(-5.19317,2.76)(-5.2121,2.77)(-5.23103,2.78)(-5.24996,2.79)(-5.2689,2.8)(-5.28784,2.81)(-5.30677,2.82)(-5.32571,2.83)(-5.34465,2.84)(-5.36359,2.85)(-5.38253,2.86)(-5.40148,2.87)(-5.42042,2.88)(-5.43936,2.89)(-5.45831,2.9)(-5.47726,2.91)(-5.4962,2.92)(-5.51515,2.93)(-5.5341,2.94)(-5.55305,2.95)(-5.572,2.96)(-5.59095,2.97)(-5.60991,2.98)(-5.62886,2.99)(-5.64782,3.)};
\end{axis}
\end{tikzpicture}
\caption{Space-time diagram of particle scattering in the rational Calogero-Moser model with coupling $g=1$, and initial state $q(0)=(1,0,-1)$, $p(0)=(1,-1,1)$.}
\label{fig:8}
\end{figure}
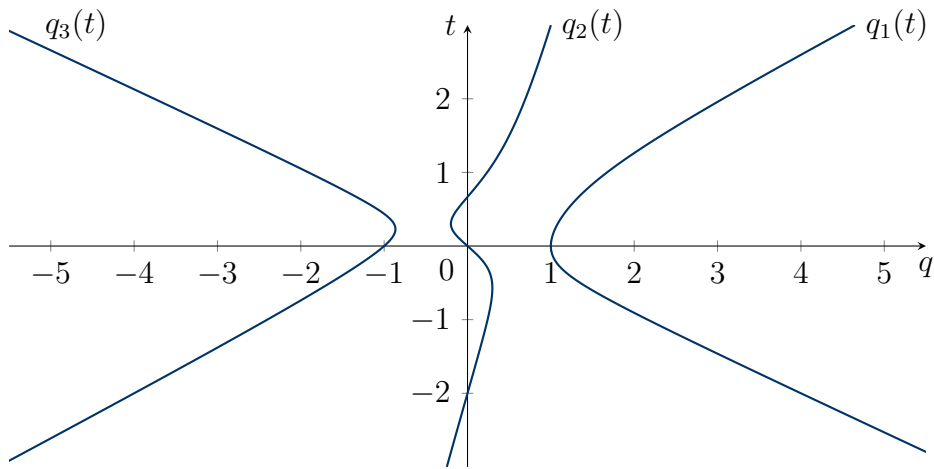

\chapter{Appendices to Chapter \ref{chap:2}}
\label{chap:B}

\section{Application of Jacobi's theorem on complementary minors}
\label{sec:B.1}

In this appendix, we complete the proof of Lemma \ref{lem:2.7} by a calculation based
on Jacobi's theorem on complementary minors (e.g.~\cite{Pr94}), which will be recalled
shortly. Our reasoning below is adapted from Pusztai \cite{Pu11}. A significant difference
is that in our case we need the strong regularity conditions \eqref{2.87} and \eqref{2.100} to avoid dividing by zero during
the calculation. In fact, this appendix is presented mainly
to explain the origin of the strong regularity conditions.

For an $m\times m$ matrix $M$ let
$M\big(\begin{smallmatrix}r_1&\dots&r_k\\c_1&\dots&c_k\end{smallmatrix}\big)$
denote the determinant formed from the entries lying on the intersection of
the rows $r_1,\dots,r_k$ with the columns $c_1,\dots,c_k$ of $M$ $(k\leq m)$,
$$M\begin{pmatrix}r_1&\dots&r_k\\c_1&\dots&c_k\end{pmatrix}
=\det(M_{r_i,c_j})_{i,j=1}^k.$$

\begin{theorem}[Jacobi]
\label{thm:B.1}
Let $A$ be an invertible $N\times N$ matrix with $\det(A)=1$ and $B=(A^{-1})^\top$.
For a fixed permutation
$\big(\begin{smallmatrix}j_1&\dots&j_N\\k_1&\dots&k_N\end{smallmatrix}\big)$
of the pairwise distinct indices $j_1,\dots,j_N\in\{1,\dots,N\}$ and
any $1\leq p<N$ we have
\begin{equation}
B\begin{pmatrix}j_1&\dots&j_p\\k_1&\dots&k_p\end{pmatrix}
=\sgn\begin{pmatrix}j_1&\dots&j_N\\k_1&\dots&k_N\end{pmatrix}
A\begin{pmatrix}j_{p+1}&\dots&j_N\\k_{p+1}&\dots&k_N\end{pmatrix}.
\label{B.1}
\end{equation}
\end{theorem}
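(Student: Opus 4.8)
The plan is to first isolate an elementary linear-algebra lemma valid for \emph{any} invertible matrix, and then to deduce the general permuted statement by conjugating with permutation matrices and bookkeeping one determinant factor. For the lemma, let $M\in\GL(N,\C)$, write $[p]=\{1,\dots,p\}$ and, for index sets $I,J$, let $M_{I\times J}$ be the submatrix of $M$ with rows in $I$ and columns in $J$ (listed increasingly); the claim is
\[
\det\big((M^{-1})_{[p]\times[p]}\big)=\frac{1}{\det M}\,\det\big(M_{\{p+1,\dots,N\}\times\{p+1,\dots,N\}}\big).
\]
To prove this I would introduce the auxiliary matrix $C$ obtained from $M$ by replacing its first $p$ columns by the standard basis vectors $e_1,\dots,e_p$ and keeping the last $N-p$ columns of $M$. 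A repeated Laplace expansion along the first $p$ columns gives $\det C=\det M_{\{p+1,\dots,N\}\times\{p+1,\dots,N\}}$. On the other hand, since $M^{-1}$ sends the $j$-th column of $M$ to $e_j$, the product $M^{-1}C$ has its first $p$ columns equal to the first $p$ columns of $M^{-1}$ and its last $N-p$ columns equal to $e_{p+1},\dots,e_N$; expanding along those last columns yields $\det(M^{-1}C)=\det\big((M^{-1})_{[p]\times[p]}\big)$. Combining these with the multiplicativity $\det(M^{-1}C)=(\det M)^{-1}\det C$ proves the lemma. Pleasantly, this step needs no genericity or density argument. Setting $j_i=k_i=i$ and taking $M=A$ (so $\det A=1$, $\sgn=1$) already gives the stated theorem in the ``identity'' arrangement.

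For the general case I would reduce to the lemma via permutation matrices. Let $\pi,\sigma\in S_N$ be defined by $\pi(i)=j_i$, $\sigma(i)=k_i$, and let $P_\pi,P_\sigma$ be the corresponding permutation matrices; put $A':=P_\pi^\top A P_\sigma$. Then $A'_{ab}=A_{j_a,k_b}$, hence $\det\big(A'_{\{p+1,\dots,N\}\times\{p+1,\dots,N\}}\big)$ equals the complementary minor $A\big(\begin{smallmatrix}j_{p+1}&\cdots&j_N\\k_{p+1}&\cdots&k_N\end{smallmatrix}\big)$ on the right-hand side of \eqref{B.1}. Likewise $(A')^{-1}=P_\sigma^\top A^{-1}P_\pi$, so $((A')^{-1})_{ab}=(A^{-1})_{k_a,j_b}$, and since the $(a,b)$-entry of the matrix $\big(B_{j_a,k_b}\big)_{a,b\le p}$ is $(A^{-1})_{k_b,j_a}$ (recall $B=(A^{-1})^\top$), these two $p\times p$ matrices are transposes of each other; therefore $\det\big(((A')^{-1})_{[p]\times[p]}\big)=B\big(\begin{smallmatrix}j_1&\cdots&j_p\\k_1&\cdots&k_p\end{smallmatrix}\big)$. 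Now apply the lemma to $M=A'$. Because $\det A'=\sgn(\pi)\sgn(\sigma)\det A=\sgn(\pi)\sgn(\sigma)$, which is exactly $\sgn\big(\begin{smallmatrix}j_1&\cdots&j_N\\k_1&\cdots&k_N\end{smallmatrix}\big)$ (the sign of the permutation $j_i\mapsto k_i$, namely $\sigma\pi^{-1}$), and since this number is $\pm1$ and hence equals its own reciprocal, the lemma rearranges precisely into \eqref{B.1}.

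The only genuinely delicate point is the sign and ordering bookkeeping: the ordered minor symbol $M\big(\begin{smallmatrix}r_1&\cdots&r_k\\c_1&\cdots&c_k\end{smallmatrix}\big)$ is a determinant and so is sensitive to the order in which the rows and columns are listed, and one must stay consistent about $A^{-1}$ versus $B=(A^{-1})^\top$, about left versus right multiplication by permutation matrices, and about a permutation versus its inverse. I expect this to be the main (though modest) obstacle; the underlying algebra—the column-replacement trick together with multiplicativity of the determinant—is entirely routine. The argument mirrors the treatment used by Pusztai \cite{Pu11}, and for completeness one may also cite the standard exposition \cite{Pr94}.
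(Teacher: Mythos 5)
Your proof is correct. Note, however, that the thesis does not actually prove Theorem \ref{thm:B.1}: it is quoted as a classical result, with a pointer to Prasolov \cite{Pr94}, and is then only \emph{used} in the proof of Lemma \ref{lem:B.2}. So there is no in-paper argument to compare against; what you have supplied is a self-contained proof, and it checks out. The lemma is established cleanly by the column-replacement trick: with $C=[e_1,\dots,e_p,M_{\cdot,p+1},\dots,M_{\cdot,N}]$ one gets $\det C=\det M_{\{p+1,\dots,N\}\times\{p+1,\dots,N\}}$ and $\det(M^{-1}C)=\det\big((M^{-1})_{[p]\times[p]}\big)$, and multiplicativity does the rest. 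The reduction of the general case to the lemma via $A'=P_\pi^\top A P_\sigma$ is also sound: $A'_{ab}=A_{j_a,k_b}$, $\big(((A')^{-1})_{ab}\big)_{a,b\le p}$ is the transpose of $\big(B_{j_a,k_b}\big)_{a,b\le p}$ (so the determinants agree), $\det A'=\sgn(\pi)\sgn(\sigma)=\sgn(\sigma\pi^{-1})$, which is indeed the sign of the two-line symbol $\big(\begin{smallmatrix}j_1&\dots&j_N\\k_1&\dots&k_N\end{smallmatrix}\big)$, and being $\pm1$ it may be moved across the equation. The ordering conventions are handled consistently with the paper's definition $M\big(\begin{smallmatrix}r_1&\dots&r_k\\c_1&\dots&c_k\end{smallmatrix}\big)=\det(M_{r_i,c_j})_{i,j=1}^k$, since the rows and columns of $A'$ carry exactly the prescribed orderings $j_{p+1},\dots,j_N$ and $k_{p+1},\dots,k_N$. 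A quick sanity check with $N=2$, $p=1$, $(j,k)=((1,2),(2,1))$ confirms the sign. In short: the paper cites, you prove; your argument would serve as a perfectly adequate replacement for the citation.
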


Applying Jacobi's theorem to $\check A$ \eqref{2.101} we now derive
the two equations \eqref{2.103} and \eqref{2.104}
for the pair of functions $(W_a,W_{n+a})$ for each $a=1,\dots, n$, which are defined
by $W_k=w_k\cF_k$ with $\cF_k=\vert F_k\vert^2$ \eqref{2.93} and $w_k$ in \eqref{2.102}.

\begin{lemma}
\label{lem:B.2}
Fix any strongly regular $\lambda$, i.e. $\lambda\in\R^n$
for which \eqref{2.87} and \eqref{2.100} hold,
and use the above notations for $(W_a,W_{n+a})$.
If $\check A$ given by \eqref{2.101} is a unitary matrix, then
$(W_a,W_{n+a})$ satisfies the two equations \eqref{2.103} and
\eqref{2.104} for each $a=1,\dots,n$.
\end{lemma}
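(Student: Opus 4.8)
The plan is to deduce \eqref{2.103} and \eqref{2.104} from the unitarity of $\check A$ by combining Jacobi's theorem on complementary minors (Theorem \ref{thm:B.1}) with the generalised Cauchy determinant formula. First I would reformulate unitarity in a usable way. From the identity $\check A^\dagger=C\check AC$, which holds for the matrix \eqref{2.101} unconditionally (see the proof of Lemma \ref{lem:2.8}), one gets $\overline{\det\check A}=(\det C)^2\det\check A=\det\check A$, so $\det\check A$ is real; since $\check A$ is unitary $|\det\check A|=1$, whence $\det\check A=\pm1$ and Theorem \ref{thm:B.1} applies to $\check A$ up to a harmless overall sign. The key consequence of unitarity is $(\check A^{-1})^\top=\overline{\check A}$, so that each $p\times p$ minor of $\overline{\check A}$ equals $\pm(\det\check A)^{-1}$ times the complementary $(2n-p)\times(2n-p)$ minor of $\check A$. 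Taking $p$ small turns (conjugated) entries and $2\times2$ minors of $\check A$ into large minors of $\check A$ divided by $\det\check A$; the strongly regular index pair to single out is $\{c,n+c\}$, together with a companion choice giving a second, independent identity.

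The second step is the explicit evaluation. The matrix $\check A$ is a weighted Cauchy matrix — column nodes $\Lambda_k$, row nodes $\Lambda_j-2\mu$, row weights $F_j$ and column weights built from $\overline{(Cf)}_k$ — perturbed by the structured term $-2(\mu-\nu)C_{j,k}/(2\mu+\Lambda_k-\Lambda_j)$, which is supported only on the antidiagonal positions $k=\bar\jmath$ and there has denominator $2(\mu-\Lambda_j)$. Using the generalised Cauchy determinant formula together with a multilinear (Cauchy‑plus‑structured‑perturbation) expansion of the determinant, I would obtain closed product formulas for $\det\check A$ and for the minors of $\check A$ obtained by deleting from the rows and columns the index $c$, the index $n+c$, and the pair $\{c,n+c\}$. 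In these products the spurious factors organise precisely into the quantities $w_c,w_{n+c}$ of \eqref{2.102}: the denominators $2\mu\mp(\lambda_c\pm\lambda_b)$ there are exactly the Cauchy denominators occurring in row/column $c$ and $n+c$, and the numerators $(\lambda_c-\lambda_b)(\lambda_c+\lambda_b)$ are the corresponding Vandermonde‑type factors. This is why $W_k=w_k\cF_k$, with $\cF_k=|F_k|^2$, is the natural unknown: the weights are exactly what the minor formulas manufacture.

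Equating, for index set $\{c,n+c\}$, the Jacobi expression $\overline{\check A\binom{c,\,n+c}{c,\,n+c}}=\pm(\det\check A)^{-1}\check A\binom{\{c,n+c\}^c}{\{c,n+c\}^c}$ — whose left side is a concrete at‑most‑quadratic polynomial in $(\cF_c,\cF_{n+c})$ and whose right side, after the product evaluation of Step 2, carries the weights $w_c,w_{n+c}$ — and separating the resulting polynomial identity into its bilinear, linear and constant parts in $(\cF_c,\cF_{n+c})$, should reproduce the linear relation \eqref{2.103} and the quadratic relation \eqref{2.104} after substituting $W_k=w_k\cF_k$ and clearing denominators. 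Throughout, the strong‑regularity conditions are exactly what legitimises every step: \eqref{2.87} makes the nodes $\pm\lambda_a$ pairwise distinct and nonzero, so that $\check A$ and the Cauchy formulas are defined; the conditions $|\lambda_a\pm\lambda_b|\neq2\mu$ in \eqref{2.100} keep the shifted nodes $\Lambda_k-2\mu$ disjoint from the nodes $\Lambda_l$, so that no denominator $2\mu+\Lambda_k-\Lambda_j$ vanishes; and the conditions $(\lambda_a-\nu)(\lambda_a-|2\mu-\nu|)\neq0$ in \eqref{2.100} ensure that the factors $\lambda_a-\nu$ and $\lambda_a-|2\mu-\nu|$ one divides by during the simplification are nonzero — this is where \emph{the full set} of the conditions \eqref{2.100} is used, as flagged in the proof of Lemma \ref{lem:2.7}.

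The main obstacle will be Step 2: because the antidiagonal correction $-2(\mu-\nu)C_{j,k}/(2\mu+\Lambda_k-\Lambda_j)$ destroys the clean weighted‑Cauchy factorisation of minors (it perturbs precisely the entries $\check A_{j,\bar\jmath}$ that also appear inside the large complementary minors), one cannot apply the classical Cauchy formula in one line; a careful multilinear expansion and re‑summation is required to bring $\det\check A$ and the pertinent minors into closed product form. The subsequent algebra — checking that the ratios of these products collapse, after the substitutions and the cancellations against $\det\check A$, exactly onto the polynomials \eqref{2.103} and \eqref{2.104}, signs included — is the bulk of the work. The analogous computation of \cite{Pu11} serves as the template for both the perturbed‑Cauchy evaluation and the final matching.
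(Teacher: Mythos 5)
Your choice of tools (Jacobi's theorem on complementary minors together with Cauchy-type determinant evaluations, following \cite{Pu11}) is the right one, but the specific complementary pair you propose is where the argument breaks. If you equate the $2\times 2$ principal minor of $\overline{\check A}$ on the pair $\{c,n+c\}$ with the complementary $(2n-2)\times(2n-2)$ minor of $\check A$, the left side does involve only $\cF_c,\cF_{n+c}$, but the right side is a minor whose rows and columns meet the antidiagonal support of the correction term $-2(\mu-\nu)C_{j,k}/(2\mu+\Lambda_k-\Lambda_j)$ in all $2n-2$ remaining positions. It is therefore not a rank-one or rank-two perturbation of a weighted Cauchy matrix, and no amount of ``multilinear re-summation'' will put it in closed product form short of solving the whole problem. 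Worse, that large minor genuinely depends on all the other quantities $\cF_b$, $b\neq c$, so the single scalar identity you obtain does not close on the pair $(W_c,W_{n+c})$; and the final step of ``separating the identity into its bilinear, linear and constant parts in $(\cF_c,\cF_{n+c})$'' is not a legitimate move, since at a fixed phase-space point the $\cF$'s are numbers, not independent variables, and one scalar relation cannot be split into two equations.

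The missing idea is the choice of index sets. The paper applies Theorem \ref{thm:B.1} twice with large, nearly balanced complementary minors: first with $p=n$, taking rows $1,\dots,n$ and columns $1,\dots,n$ with $a$ replaced by $n+a$ (equation \eqref{B.2}), and second with $p=n+1$ on $\{1,\dots,n,n+a\}$ versus $\{n+1,\dots,2n\}\setminus\{n+a\}$ (equation \eqref{B.11}). These index sets meet the antidiagonal in at most two positions, so every determinant involved is either a pure Cauchy-like determinant or a rank-one/rank-two perturbation of one ($\xi$, $\eta$, $X$ in \eqref{B.3}, \eqref{B.12}), computable by the Cauchy formula plus a cofactor expansion; crucially, the dependence on the other particles enters each side only through the common factor $D_a$ of \eqref{B.7}, which contains $\prod_{b\neq a}\overline{F}_bF_{n+b}$. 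Dividing out $D_a$ from the first identity gives the linear equation \eqref{2.103}, and the second identity, with $\det(\Phi)$ and its cofactors evaluated by the Cauchy formula, gives the quadratic equation \eqref{2.104}. This also corrects your account of where \eqref{2.100} enters: besides keeping the denominators $2\mu+\Lambda_k-\Lambda_j$ nonzero, its second half is exactly what guarantees $D_a\neq 0$, because $F_i=0$ or $F_{n+i}=0$ would force $\lambda_i\in\{\nu,\,2\mu-\nu,\,-\nu,\,\nu-2\mu\}$ (see \eqref{B.9}--\eqref{B.10}), not the nonvanishing of explicit factors $\lambda_a-\nu$ produced in a simplification.
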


\begin{proof}
Let $\check B=(\check{A}^{-1})^\top$, i.e. $\check{B}_{j,k}=\overline{\check{A}}_{j,k}$, $j,k\in\{1,\dots,N\}$
and $a\in\{1,\dots,n\}$ be a fixed index. Since $\det(\check A)=1$, by
Jacobi's theorem with $j_b=b$, ($b\in\N_N$) and $k_c=c$, ($c\in\N_N\setminus\{a,n+a\}$), $k_a=n+a$, $k_{n+a}=a$ and $p=n$
we have
\begin{equation}
\check{B}\begin{pmatrix}
1&\dots&a&\dots&n\\
1&\dots&n+a&\dots&n
\end{pmatrix}=
-\check{A}\begin{pmatrix}
n+1&\dots&n+a&\dots&N\\
n+1&\dots&a&\dots&N
\end{pmatrix}.
\label{B.2}
\end{equation}
Denote the corresponding $n\times n$ submatrices of $\check B$ and $\check A$
 by $\xi$ and $\eta$, respectively.
One can check that
\begin{equation}
\xi=\Psi-\frac{\mu-\nu}{\mu-\lambda_a}E_{a,a},\quad
\eta=\Xi-\frac{\mu-\nu}{\mu+\lambda_a}E_{a,a},
\label{B.3}
\end{equation}
where $E_{j,k}$ stands for the $n\times n$ elementary matrix
$(E_{j,k})_{j',k'}=\delta_{j,j'}\delta_{k,k'}$
and $\Psi$ and $\Xi$ are the Cauchy-like matrices
\begin{equation}
\Psi_{j,k}=\begin{cases}
\dfrac{2\mu\overline{F}_jF_{n+k}}{2\mu-\lambda_j+\lambda_k},&\text{if}\ k\neq a,\\[.5cm]
\dfrac{2\mu\overline{F}_jF_a}{2\mu-\lambda_j-\lambda_a},&\text{if}\ k=a,
\end{cases}
\qquad\text{and}\qquad
\Xi_{j,k}=\begin{cases}
\dfrac{2\mu F_{n+j}\overline{F}_k}{2\mu+\lambda_j-\lambda_k},&\mbox{if}\ k\neq a,\\[.5cm]
\dfrac{2\mu F_{n+j}\overline{F}_{n+a}}{2\mu+\lambda_j+\lambda_a},&\mbox{if}\ k=a,
\end{cases}
\label{B.4}
\end{equation}
$j,k\in\{1,\dots,n\}$.
Expanding $\det(\xi)$ and $\det(\eta)$ along the $a$-th column we obtain the formulae
\begin{equation}
\det(\xi)=\det(\Psi)-\frac{\mu-\nu}{\mu-\lambda_a}\cC_{a,a},\quad
\det(\eta)=\det(\Xi)-\frac{\mu-\nu}{\mu+\lambda_a}\cC_{a,a},
\label{B.5}
\end{equation}
where $\cC_{a,a}$ is the cofactor of $\Psi$ associated with entry $\Psi_{a,a}$.
Since $\Psi$ and $\Xi$ are both Cauchy-like matrices we have
\begin{equation}
\det(\Psi)=\frac{1}{\mu-\lambda_a}D_aW_a,\quad
\det(\Xi) =\frac{1}{\mu+\lambda_a}D_aW_{n+a},
\label{B.6}
\end{equation}
where
\begin{equation}
D_a=\prod_{\substack{b=1\\(b\neq a)}}^n\overline{F}_bF_{n+b}
\prod_{\substack{c,d=1\\(a\neq c\neq d\neq a)}}^n
\frac{\lambda_c-\lambda_d}{2\mu+\lambda_c-\lambda_d}.
\label{B.7}
\end{equation}
It can be easily seen that $\cC_{a,a}=D_a$, therefore formulae
\eqref{B.2}, \eqref{B.5}, \eqref{B.6} lead to the equation
\begin{equation}
(\mu+\lambda_a)W_a+(\mu-\lambda_a)W_{n+a}-2(\mu-\nu)=0.
\label{B.8}
\end{equation}

It should be noticed that in the last step we divided by $D_a$,
which is legitimate since $D_a$ is non-vanishing due to the strong-regularity
condition given by \eqref{2.87} and \eqref{2.100}. To see this, assume momentarily that
$F_i=0$ for some $i=1,\dots,n$ at some strongly regular $\lambda$.
The denominator in \eqref{2.101} does not vanish, and the unitarity of
$\check A$ implies that we must have $\check A_{i, i+n}=1$ or $\check A_{i, i+n}=-1$.
These in turn are equivalent to
\begin{equation}
\lambda_i = 2\mu -\nu
\quad\text{or}\quad \lambda_i =\nu,
\label{B.9}
\end{equation}
which are excluded by \eqref{2.100}. One can similarly check that the vanishing of
$F_{n+i}$ would require
\begin{equation}
\lambda_i =\nu - 2\mu
\quad\text{or}\quad \lambda_i =-\nu,
\label{B.10}
\end{equation}
which are also excluded. These remarks pinpoint the origin of the second half of
the conditions imposed in \eqref{2.100}.

Next, we apply Jacobi's theorem by setting
$j_b=k_b=b$, ($b\in\N_n$), $j_{n+1}=k_{n+1}=n+a$, $j_{n+c}=k_{n+c}=n+c-1$,
($c\in\N_{n-1}$) and $p=n+1$. Thus
\begin{equation}
\check{B}\begin{pmatrix}
1&\dots&n&n+a\\
1&\dots&n&n+a
\end{pmatrix}=
\check{A}\begin{pmatrix}
n+1&\dots&\widehat{n+a}&\dots&N\\
n+1&\dots&\widehat{n+a}&\dots&N
\end{pmatrix},
\label{B.11}
\end{equation}
where $\widehat{n+a}$ indicates that the $(n+a)$-th row and column are omitted.
Now denote the submatrices of size $(n+1)$ and $(n-1)$ corresponding to the determinants in \eqref{B.11}
 by $X$ and $Y$, respectively.
From \eqref{B.11} and \eqref{2.101} it follows that $\det(X)=\det(Y)=D_a$ \eqref{B.7}.
The submatrix $X$ can be written in the form
\begin{equation}
X=\Phi-\frac{\mu-\nu}{\mu-\lambda_a}E_{a,n+1}-\frac{\mu-\nu}{\mu+\lambda_a}E_{n+1,a},
\label{B.12}
\end{equation}
i.e. $X$ is a rank two perturbation of the Cauchy-like matrix $\Phi$
having the entries
\begin{equation}
\begin{gathered}
\Phi_{j,k}=\frac{2\mu\overline{F}_jF_{n+k}}{2\mu-\lambda_j+\lambda_k},\quad
\Phi_{j,n+1}=\frac{2\mu\overline{F}_jF_a}{2\mu-\lambda_j-\lambda_a},\\
\Phi_{n+1,k}=\frac{2\mu\overline{F}_{n+a}F_{n+k}}{2\mu+\lambda_a+\lambda_k},\quad
\Phi_{n+1,n+1}=\overline{F}_{n+a}F_{a},
\end{gathered}
\label{B.13}
\end{equation}
where $j,k\in\{1,\dots,n\}$.
The determinant of $\Phi$ is
\begin{equation}
\det(\Phi)=-\frac{\lambda_a^2}{\mu^2-\lambda_a^2}D_a W_a W_{n+a},
\label{B.14}
\end{equation}
which cannot vanish because $\lambda$ is strongly regular.
Since $X$ is a rank two perturbation of $\Phi$ we obtain
\begin{equation}
\det(X)=\det(\Phi)-(\mu-\nu)\bigg(\frac{\cC_{a,n+1}}{\mu-\lambda_a}+\frac{\cC_{n+1,a}}{\mu+\lambda_a}\bigg)
+(\mu-\nu)^2\frac{\cC_{a,n+1}\cC_{n+1,a}-\cC_{a,a}\cC_{n+1,n+1}}{(\mu-\lambda_a)(\mu+\lambda_a)\det(\Phi)},
\label{B.15}
\end{equation}
where $\cC$ now is used to denote the cofactors of $\Phi$.
By calculating the necessary cofactors we derive
\begin{equation}
\begin{gathered}
\cC_{a,a}\cC_{n+1,n+1}=D_a^2W_aW_{n+a},\\
\cC_{a,n+1}=-\frac{1}{\mu+\lambda_a}D_aW_{n+a},\quad
\cC_{n+1,a}=-\frac{1}{\mu-\lambda_a}D_aW_a.
\end{gathered}
\label{B.16}
\end{equation}
Equations \eqref{B.14}-\eqref{B.16} together with $\det(X)=D_a$ imply
\begin{equation}
\lambda_a^2(W_aW_{n+a}-1)-\mu(\mu-\nu)(W_a+W_{n+a}-2)+\nu^2=0.
\label{B.17}
\end{equation}
Equations \eqref{B.8} and \eqref{B.17} coincide with \eqref{2.103} and \eqref{2.104},
respectively.
\end{proof}

\section{$\cH_l^\vD$ as elementary symmetric function}
\label{sec:B.2}

Fix an arbitrary $n\in\N$ and $l\in\{0,1,\dots,n\}$ and let $e_l$ stand for the
$l$-th elementary symmetric polynomial in $n$ variables $x_1,\dots,x_n$, i.e. $e_0(x_1,\dots,x_n)=1$ and for $l\geq 1$
\begin{equation}
e_l(x_1,\dots,x_n)=\sum_{1\leq j_1\dots<j_l\leq n}x_{j_1}\dots x_{j_l}.
\label{B.18}
\end{equation}
At the end of Chapter \ref{chap:2}, we referred to the following useful result due to van Diejen
\cite[Proposition 2.3]{vD95}. For convenience, we present it together
with a direct proof.

\begin{proposition}
\label{prop:B.1}
By using \eqref{2.221} it can be shown that
\begin{equation}
\cH_l^\vD(q)=4^le_l(\sinh^2\frac{q_1}{2},\dots,\sinh^2\frac{q_n}{2}).
\label{B.19}
\end{equation}
\end{proposition}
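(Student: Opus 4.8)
The plan is to bring both sides of \eqref{B.19} into one and the same polynomial in the variables $\cosh(q_1),\dots,\cosh(q_n)$, and then to match them.

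\textbf{Left-hand side.} Starting from the explicit formula \eqref{2.221} for $\cH_l^\vD$, I would first perform, once and for all, the inner sum over the sign vectors attached to a fixed subset $J\subseteq\{1,\dots,n\}$. Since $\cosh$ is even and $\varepsilon\mapsto-\varepsilon$ is an involution of $\{\pm 1\}^J$,
\[
\sum_{\substack{\varepsilon_j=\pm 1\\ j\in J}}\cosh(q_{\varepsilon J})
=\sum_{\substack{\varepsilon_j=\pm 1\\ j\in J}}\prod_{j\in J}e^{\varepsilon_j q_j}
=\prod_{j\in J}\bigl(e^{q_j}+e^{-q_j}\bigr)=2^{|J|}\prod_{j\in J}\cosh(q_j).
\]
Plugging this into \eqref{2.221}, recognising $\sum_{|J|=k}\prod_{j\in J}\cosh(q_j)=e_k(\cosh q_1,\dots,\cosh q_n)$, and using the cancellation $(-2)^{l-k}2^{k}=(-1)^{l-k}2^{l}$, I obtain
\[
\cH_l^\vD(q)=2^{l}\sum_{k=0}^{l}(-1)^{l-k}{n-k\choose l-k}\,e_k(\cosh q_1,\dots,\cosh q_n).
\]

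\textbf{Right-hand side.} From $4\sinh^2(q/2)=2\cosh(q)-2$ and the degree-$l$ homogeneity of $e_l$,
\[
4^{l}e_l\!\Bigl(\sinh^2\tfrac{q_1}{2},\dots,\sinh^2\tfrac{q_n}{2}\Bigr)
=e_l\bigl(2\cosh q_1-2,\dots,2\cosh q_n-2\bigr)
=2^{l}e_l\bigl(\cosh q_1-1,\dots,\cosh q_n-1\bigr).
\]
Comparison with the displayed formula for $\cH_l^\vD(q)$ then reduces everything to the binomial identity
\[
e_l(x_1-1,\dots,x_n-1)=\sum_{k=0}^{l}(-1)^{l-k}{n-k\choose l-k}\,e_k(x_1,\dots,x_n),
\]
which I would prove by extracting the coefficient of $t^l$ from
$\prod_{j=1}^{n}\bigl((1-t)+t x_j\bigr)=(1-t)^n\prod_{j=1}^{n}\bigl(1+\tfrac{t}{1-t}x_j\bigr)=\sum_{k\geq 0}e_k(x)\,t^k(1-t)^{n-k}$
and expanding $(1-t)^{n-k}$ by the binomial theorem. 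Setting $x_j=\cosh q_j$ finishes the argument.

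I do not expect a genuine difficulty: each step is a short generating-function or binomial manipulation, and the one point deserving care is the bookkeeping of the powers of $2$ and the signs — the identity $(-2)^{l-k}2^{k}=(-1)^{l-k}2^{l}$ is exactly what makes the rearranged left- and right-hand sides coincide. As a consistency check one may verify \eqref{B.19} for small $n$ and $l$, or observe that it is equivalent to van Diejen's Proposition~2.3 in \cite{vD95}.
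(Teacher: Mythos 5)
Your proof is correct, but it takes a genuinely different route from the paper's, and the difference is worth recording. The paper works entirely from the right-hand side of \eqref{B.19}: it writes $\sinh^2(q_j/2)=[\cosh(q_j)-1]/2$, expands the products and merges the subset sums into the binomial coefficient, re-introduces the sign sums via the evenness of $\cosh$, and then — its key step — converts the resulting products $\prod_{j\in J}\cosh(\varepsilon_j q_j)$ back into $\cosh(q_{\varepsilon J})$ by invoking the multi-angle identity \eqref{B.24} for the hyperbolic cosine of a sum, together with a parity-cancellation argument for the $e_{2m}(\tanh)$ terms, so as to land literally on the form \eqref{2.221}. You go the other way and bypass that identity entirely: the one-line computation $\sum_{\varepsilon}\cosh(q_{\varepsilon J})=\sum_{\varepsilon}\prod_{j\in J}e^{\varepsilon_j q_j}=2^{|J|}\prod_{j\in J}\cosh(q_j)$ (valid because $\varepsilon\mapsto-\varepsilon$ is an involution) collapses the sign sum in \eqref{2.221}, after which both sides of \eqref{B.19} are polynomials in the $e_k(\cosh q_1,\dots,\cosh q_n)$ — using $(-2)^{l-k}2^{k}=(-1)^{l-k}2^{l}$ on the left and $4^{l}e_l(\sinh^2\tfrac{q}{2})=2^{l}e_l(\cosh q-1)$ on the right — and the statement reduces to the identity $e_l(x_1-1,\dots,x_n-1)=\sum_{k=0}^{l}(-1)^{l-k}{n-k\choose l-k}e_k(x)$, which your generating-function expansion of $\prod_j\bigl((1-t)+tx_j\bigr)$ settles. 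What each approach buys: yours is more elementary (no hyperbolic addition formula, no bookkeeping of cancelling $\tanh$-terms) and isolates the combinatorial content as a clean binomial identity, while the paper's transformation of the right-hand side arrives directly at the expression \eqref{2.221} without needing that identity; since \eqref{B.24} is used nowhere else in the thesis, nothing is lost by your shortcut, and the two arguments confirm each other.
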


\begin{proof}
First, $e_l$ has the equivalent form
\begin{equation}
e_l(\sinh^2\frac{q_1}{2},\dots,\sinh^2\frac{q_n}{2})
=\sum_{J\subset\{1,\dots,n\},\ |J|=l}\;\prod_{j\in J}\sinh^2\frac{q_j}{2}.
\label{B.20}
\end{equation}
Utilizing the identity $\sinh^2(\alpha/2)=[\cosh(\alpha)-1]/2$
casts the right-hand side into
\begin{equation}
\sum_{J\subset\{1,\dots,n\},\ |J|=l}2^{-l}
\prod_{j\in J}[\cosh(q_j)-1]=
\sum_{J\subset\{1,\dots,n\},\ |J|=l}2^{-l}
\sum_{K\subset J}(-1)^{l-|K|}\prod_{k\in K}\cosh(q_k).
\label{B.21}
\end{equation}
The two sums on the right-hand side can be merged into one,
but the multiplicity of subsets must remain the same.
This results in the appearance of a binomial coefficient
\begin{multline}
\sum_{J\subset\{1,\dots,n\},\ |J|\leq l}\frac{(-1)^{l-|J|}}{2^l}{n-|J|\choose l-|J|}
\prod_{j\in J}\cosh(q_j)=\\
=\sum_{\substack{J\subset\{1,\dots,n\},\ |J|\leq l\\\varepsilon_j=\pm 1,\ j\in J}}
\frac{(-1)^{l-|J|}}{2^{l+|J|}}{n-|J|\choose l-|J|}
\prod_{j\in J}\cosh(\varepsilon_j q_j),
\label{B.22}
\end{multline}
where we also used that $\cosh$ is an even function and compensated the
`over-counting' of terms. Now, let us simply pull a $4^{-l}$ factor out of the sum to get
\begin{equation}
4^{-l}
\sum_{\substack{J\subset\{1,\dots,n\},\ |J|\leq l\\\varepsilon_j=\pm 1,\ j\in J}}
(-2)^{l-|J|}{n-|J|\choose l-|J|}\prod_{j\in J}\cosh(\varepsilon_j q_j).
\label{B.23}
\end{equation}
Recall the following identity for the hyperbolic cosine of the sum of a finite number, say $N$, real arguments (see \cite[Art. 132]{H1841} and apply $\cos(\ri\alpha)=\cosh(\alpha)$)
\begin{equation}
\cosh\bigg(\sum_{k=1}^N\alpha_k\bigg)
=\bigg[\prod_{k=1}^N\cosh(\alpha_k)\bigg]
\bigg[\sum_{m=0}^{\big\lfloor\tfrac{ N}{2}\big\rfloor}e_{2m}(\tanh(\alpha_1),\dots,\tanh(\alpha_N))\bigg],
\label{B.24}
\end{equation}
where $e_{2m}$ are now elementary symmetric functions with arguments
$\tanh(\alpha_k)$.
Note that for any $m>0$ and set of signs $\varepsilon$ there is another one
$\varepsilon'$, such that $e_{2m}^{J,\varepsilon'}=-e_{2m}^{J,\varepsilon}$,
therefore by using \eqref{B.24} we see that \eqref{B.23} equals to
\begin{multline}
4^{-l}
\sum_{\substack{J\subset\{1,\dots,n\},\ |J|\leq l\\\varepsilon_j=\pm 1,\ j\in J}}
(-2)^{l-|J|}{n-|J|\choose l-|J|}
\prod_{j\in J}
\cosh(\varepsilon_j q_j)\sum_{m=0}^{\big\lfloor\tfrac{|J|}{2}\big\rfloor}
s_{2m}^{J,\varepsilon}=\\
=4^{-l}
\sum_{\substack{J\subset\{1,\dots,n\},\ |J|\leq l\\\varepsilon_j=\pm 1,\ j\in J}}
(-2)^{l-|J|}{n-|J|\choose l-|J|}
\cosh(q_{\varepsilon J}).
\label{B.25}
\end{multline}
Applying \eqref{2.221} concludes the proof.
\end{proof}

\chapter{Appendices to Chapter \ref{chap:3}}
\label{chap:C}

\section{Links to systems of van Diejen and Schneider}
\label{sec:C.1}

Recall that the trigonometric $\BC_n$ van Diejen system \cite{vD94} has the Hamiltonian
\begin{equation}
H_{\text{vD}}(\lambda,\theta)=\sum_{j=1}^n\big(\cosh(\theta_j)
\sV_j(\lambda)^{1/2}\sV_{-j}(\lambda)^{1/2}-[\sV_j(\lambda)+\sV_{-j}(\lambda)]/2\big),
\label{C.1}
\end{equation}
with $\sV_{\pm j}$ ($j=1,\dots,n$) defined by
\begin{equation}
\sV_{\pm j}(\lambda)=\sw(\pm\lambda_j)\prod_{\substack{k=1\\(k\neq j)}}^n
\sv(\pm\lambda_j+\lambda_k)\sv(\pm\lambda_j-\lambda_k),
\label{C.2}
\end{equation}
and $\sv,\sw$ denoting the trigonometric potentials
\begin{equation}
\sv(z)=\frac{\sin(\mu+z)}{\sin(z)}\quad\text{and}\quad
\sw(z)=\frac{\sin(\mu_0+z)}{\sin(z)}
\frac{\cos(\mu_1+z)}{\cos(z)}
\frac{\sin(\mu'_0+z)}{\sin(z)}
\frac{\cos(\mu'_1+z)}{\cos(z)},
\label{C.3}
\end{equation}
where $\mu,\mu_0,\mu_1,\mu'_0,\mu'_1$ are arbitrary parameters. By making the substitutions
\begin{equation}
\begin{gathered}\lambda_j\to\ri(\hat p_j+R),\\\theta_j\to\ri\hat q_j,\end{gathered}
\quad\forall j\quad\text{and}\quad
\mu\to\ri g/2,\quad
\begin{gathered}\mu_0\to\ri(g_0+R),\\\mu'_0\to\ri(g'_0-R),\end{gathered}\quad
\begin{gathered}\mu_1\to\ri g_1+\pi/2,\\\mu'_1\to\ri g'_1+\pi/2\end{gathered}
\label{C.4}
\end{equation}
the potentials become hyperbolic functions and their $R\to\infty$ limit exists, namely
\begin{equation}
\lim_{R\to\infty}\sv(\pm (\lambda_j+\lambda_k))=e^{\pm g/2},\quad
\lim_{R\to\infty}\sv(\pm (\lambda_j-\lambda_k))
=\frac{\sinh(\pm g/2+\hat p_j-\hat p_k)}{\sinh(\hat p_j-\hat p_k)},\quad\forall j,k
\label{C.5}
\end{equation}
and
\begin{equation}
\lim_{R\to\infty}\sw(\pm\lambda_j)
=e^{g_0-g'_0\pm(g_1+g'_1)-2\hat p_j}-e^{\pm(g_0+g'_0+g_1+g'_1)},\quad\forall j.
\label{C.6}
\end{equation}
In the $1$-particle case we have $V_{\pm}(\lambda)=\sw(\pm\lambda)$,
thus $H_{\text{vD}}$ takes the following form
\begin{equation}
H_{\text{vD}}(\lambda,\theta)=\cosh(\theta)\sw(\lambda)^{1/2}\sw(-\lambda)^{1/2}
-[\sw(\lambda)+\sw(-\lambda)]/2.
\label{C.7}
\end{equation}
By utilizing \eqref{C.6} one obtains
\begin{equation}
\begin{gathered}
\lim_{R\to\infty}\sw(\lambda)^{1/2}\sw(-\lambda)^{1/2}
=\big[1-(e^{2g_0}+e^{-2g'_0})e^{-2\hat p}+e^{2g_0-2g'_0-4\hat p}\big]^{1/2},\\
\lim_{R\to\infty}[\sw(\lambda)+\sw(-\lambda)]/2
=\frac{e^{g_0-g'_0+g_1+g'_1}+e^{g_0-g'_0-g_1-g'_1}}{2}e^{-2\hat p}
-\cosh(g_0+g'_0+g_1+g'_1).
\end{gathered}
\label{C.8}
\end{equation}
Equating the $R\to\infty$ limit of $H_{\text{vD}}(\lambda,\theta)$ \eqref{C.7} with
the Hamiltonian $H(\hat p,\hat q;x,a,b)$ \eqref{3.1} yields a system of linear equations
involving $g_0,g_1,g'_0,g'_1$ as unknowns and $u,v$ as parameters. Actually, four
sets of linear equations can be constructed, each with infinitely many solutions
depending on one (real) parameter, but these sets are `equivalent' under the
exchanges: $g_0\leftrightarrow g'_0$ or $g_1\leftrightarrow g'_1$.
Therefore it is sufficient to give only one set of solutions, e.g.
\begin{equation}
g_0=a,\quad g_0'=0,\quad g_1=b-g'_1,\quad g'_1\in\R.
\label{C.9}
\end{equation}
Setting $g=x$ and $g'_1=0$ provides the following special choice of couplings in \eqref{C.4}
\begin{equation}
\mu=\ri x/2,\quad
\mu_0=\ri(a+R),\quad
\mu'_0=-\ri R,\quad
\mu_1=\ri b+\pi/2,\quad
\mu'_1=\pi/2,
\label{C.10}
\end{equation}
and one finds the following
\begin{equation}
\lim_{R\to\infty}H_{\text{vD}}\big(\lambda(\hat p,R),\theta(\hat q)\big)
=-H(\hat p,\hat q;x,a,b)+\cosh(b-a).
\label{C.11}
\end{equation}
In the $n$-particle case, by using \eqref{C.5} and \eqref{C.6} it can be shown
that with \eqref{C.10} one has
\begin{equation}
\lim_{R\to\infty}H_{\text{vD}}\big(\lambda(\hat p,R),\theta(\hat q)\big)
=-H(\hat p,\hat q;x,a,b)+\sum_{j=1}^n\cosh\big((j-1)x+b-a\big),
\label{C.12}
\end{equation}
i.e., the Hamiltonian $H$ \eqref{3.1} is recovered as a singular limit of $H_{\text{vD}}$
\eqref{C.1}.

Consider now the function $H(\hat p,\hat q;x,a,b)$ and introduce the real parameter
$\sigma$ through the substitutions
\begin{equation}
b\to b-2\sigma
\label{C.13}
\end{equation}
and apply the canonical transformation
\begin{equation}
\hat p_j\to -Q_j+\sigma,\quad \hat q_j\to -P_j,\quad\forall j.
\label{C.14}
\end{equation}
Then we have
\begin{equation}
\lim_{\sigma\to\infty}H(\hat p(Q,\sigma),\hat q(P),x,a,b(\sigma))
=H_{\text{Sch}}(Q,P,x,a-b),
\label{C.15}
\end{equation}
with Schneider's \cite{Sc87} Hamiltonian
\begin{equation}
H_{\text{Sch}}(Q,P,x,a-b)
=\frac{e^{a-b}}{2}\sum_{j=1}^ne^{2Q_j}-\sum_{j=1}^n\cos(P_j)
\prod_{\substack{k=1\\(k\neq j)}}^n
\bigg[1-\frac{\sinh^2\big(\frac{x}{2}\big)}
{\sinh^2(Q_j-Q_k)}\bigg]^{\tfrac{1}{2}}.
\label{C.16}
\end{equation}

\begin{remark}
\label{rem:C.1}
(\emph{i}) In \eqref{C.4} only two of the four external field couplings
$\mu_0,\mu_0',\mu_1,\mu_1'$ are scaled with $R$.
However, scaling all four of these parameters also leads to
an integrable Ruijsenaars-Schneider type system with a more
general $4$-parameter external field. For details, see \cite[Section
II.B]{vD95-2}. (\emph{ii}) The connection to Schneider's Hamiltonian
was mentioned in \cite[Remark 7.1]{Ma15} as well, where a singular limit,
similar to \eqref{C.15} was taken.
\end{remark}

\section{Proof of Proposition \ref{prop:3.2}}
\label{sec:C.2}

In this appendix, we prove Proposition \ref{prop:3.2} that states that the range of the
`position variable' $\hat p$ is contained in the closed thick-walled Weyl chamber
$\bar\cC_x$ \eqref{3.58}.

\begin{proof}[Proof of Proposition \ref{prop:3.2}]
According to \eqref{3.57} the matrices $e^{2\hat p}$ and
$e^{2\hat p-x\1_n}+\sgn(x)e^{\hat p}ww^\dag e^{\hat p}$
are similar and therefore have the same characteristic polynomial.
This gives the identity
\begin{equation}
\prod_{j=1}^n(e^{2\hat p_j}-\lambda)=\prod_{j=1}^n(e^{2\hat p_j-x}-\lambda)
+\sgn(x)\sum_{j=1}^n\bigg[e^{2\hat p_j}|w_j|^2\prod_{\substack{k=1\\(k\neq j)}}^n
(e^{2\hat p_k-x}-\lambda)\bigg],
\label{C.17}
\end{equation}
where $\lambda$ is an arbitrary complex parameter. The constraint on $\hat p$ arises
from the fact that $|w_m|^2$ $(m=1,\dots,n)$ must be non-negative and not all zero,
because of the definition \eqref{3.56}.

Let us assume for a moment that the components of $\hat p$ are distinct such that
$\hat p_1>\dots>\hat p_n$. This enables us to express $|w_m|^2$ for all
$m\in \{1,\dots,n\}$ from the above equation by evaluating it at $n$ different values
of $\lambda$, viz. $\lambda=e^{2\hat p_m-x}$, $m=1,\dots,n$. We obtain the following
\begin{equation}
|w_m|^2=\sgn(x)(1-e^{-x})\prod_{\substack{j=1\\(j\neq m)}}^n
\frac{e^{2\hat p_j+x}-e^{2\hat p_m}}{e^{2\hat p_j}-e^{2\hat p_m}},
\quad m=1,\dots,n.
\label{C.18}
\end{equation}
For $x>0$ and any $\hat p$ with $\hat p_1>\dots>\hat p_n$ the formula \eqref{C.18}
implies that $|w_n|^2>0$ and for $m=1,\dots,n-1$ we have $|w_m|^2\geq 0$ if and only
if $\hat p_m-\hat p_{m+1}\geq x/2$. Similarly, if $x<0$ and $\hat p\in\R^n$ with
$\hat p_1>\dots>\hat p_n$, then \eqref{C.18} implies $|w_1|^2>0$ and for $m=2,\dots,n$
we have $|w_m|^2\geq 0$ if and only if $\hat p_{m-1}-\hat p_m\geq -x/2$. In summary,
if $\hat p_1>\dots>\hat p_n$, then $|w_m|^2\geq 0$ $\forall m$ implies that
$\hat p\in\bar\cC_x$.

Now, let us prove our assumption, that all components of $\hat p$ must be different.
Indirectly, suppose that some (or maybe all) of the $\hat p_j$'s coincide.
This can be captured by a partition of the positive integer
\begin{equation}
n=k_1+\dots+k_r,
\label{C.19}
\end{equation}
where $r<n$ (or equivalently, at least one integer $k_1,\dots,k_r$ must be
greater than $1$) and the indirect assumption can be written as
\begin{equation}
\hat p_1=\dots=\hat p_{k_1},\quad
\hat p_{k_1+1}=\dots=\hat p_{k_1+k_2},\quad\dots,\quad
\hat p_{k_1+\dots+k_{r-1}+1}=\dots=\hat p_{k_1+\dots+k_r}\equiv\hat p_n.
\label{C.20}
\end{equation}
Then \eqref{C.17} can be reformulated as
\begin{equation}
\prod_{j=1}^r(\Delta_j-\lambda)^{k_j}
=\prod_{j=1}^r(\Delta_je^{-x}-\lambda)^{k_j}
+\sgn(x)\sum_{m=1}^rZ_m\Delta_m(\Delta_me^{-x}-\lambda)^{k_m-1}
\prod_{\substack{j=1\\(j\neq m)}}^r(\Delta_je^{-x}-\lambda)^{k_j},
\label{C.21}
\end{equation}
where we introduced $r$ distinct variables
\begin{equation}
\Delta_1=e^{2\hat p_{k_1}},\quad
\Delta_2=e^{2\hat p_{k_1+k_2}},\quad\dots,\quad
\Delta_r=e^{2\hat p_{k_1+\dots+k_r}}\equiv e^{2\hat p_n},
\label{C.22}
\end{equation}
and $r$ non-negative real variables
\begin{equation}
\begin{gathered}
Z_1=|w_1|^2+\dots+|w_{k_1}|^2,\quad
Z_2=|w_{k_1+1}|^2+\dots+|w_{k_1+k_2}|^2,\\
\dots,\quad Z_r=|w_{k_1+\dots+k_{r-1}+1}|^2+\dots+|w_n|^2.
\end{gathered}
\label{C.23}
\end{equation}
Notice that $Z_1+\dots+Z_r=|w|^2=\sgn(x)e^{-x}(e^{nx}-1)>0$, therefore at least one
of the $Z_j$'s must be positive. Next, we define the rational function of $\lambda$
\begin{equation}
Q(\Delta,x,\lambda)=\prod_{j=1}^r\frac{(\Delta_j-\lambda)^{k_j}}
{(\Delta_je^{-x}-\lambda)^{k_j-1}},
\label{C.24}
\end{equation}
and use it to rewrite \eqref{C.21} as
\begin{equation}
Q(\Delta,x,\lambda)=\prod_{j=1}^r(\Delta_je^{-x}-\lambda)
+\sgn(x)\sum_{m=1}^rZ_m\Delta_m
\prod_{\substack{j=1\\(j\neq m)}}^r(\Delta_je^{-x}-\lambda).
\label{C.25}
\end{equation}
The above equation implies that all poles of $Q$ are apparent, i.e., there must
be cancelling factors in its numerator. This observation has a straightforward
implication on the $\Delta$'s.
\begin{center}
($\ast$)\quad For every index $m\in\{1,\dots,r\}$ with $k_m>1$,
there exists an index $s\in\{1,\dots,r\}$ s.t. $\Delta_s=\Delta_me^{-x}$
and $k_s\geq k_m-1$.
\end{center}
The quantities $Z_m=Z_m(\Delta,x)$ can be uniquely determined by evaluating
\eqref{C.25} at $r$ different values of the parameter $\lambda$, namely
$\lambda_m=\Delta_me^{-x}$ ($m=1,\dots,r$). However, there are $3$ disjoint
cases which are to be handled separately.

\noindent
\underline{Case 1:} $k_m=1$ and $\nexists s\in\{1,\dots,r\}$:
$\Delta_s=\Delta_me^{-x}$. Then we find
\begin{equation}
Z_m=\sgn(x)(1-e^{-x})e^{(n-1)x}\prod_{\substack{j=1\\(j\neq m)}}^r
\bigg(\frac{\Delta_j-\Delta_me^{-x}}{\Delta_j-\Delta_m}\bigg)^{k_j}>0.
\label{C.26}
\end{equation}
\underline{Case 2:} $k_m>1$ and $k_s=k_m-1$. Then we find
\begin{equation}
Z_m=(-1)^{k_m+1}\sgn(x)(1-e^{-x})e^{(n-k_m)x}\prod_{\substack{j=1\\(j\neq m,s)}}^r
\bigg(\frac{\Delta_j-\Delta_me^{-x}}{\Delta_j-\Delta_m}\bigg)^{k_j}>0.
\label{C.27}
\end{equation}
\underline{Case 3:} $k_m=1$ and $\exists s\in\{1,\dots,r\}$:
$\Delta_s=\Delta_me^{-x}$ or $k_m>1$ and $k_s>k_m-1$. Then we get
\begin{equation}
Z_m=0.
\label{C.28}
\end{equation}
Since there is at least one $Z_m$ which is positive, the set of indices belonging to
Case 1 or Case 2 must be non-empty. Introduce a real positive parameter $\varepsilon$
and associate to every degenerate configuration \eqref{C.20} a continuous
family of configurations, denoted by $\hat p(\varepsilon)$, with components
$\hat p(\varepsilon)_1,\dots,\hat p(\varepsilon)_n$ defined by the formulae
\begin{equation}
\begin{gathered}
\exp(2\hat p(\varepsilon)_a+a\varepsilon)=\Delta_1,\quad a=1,\dots,k_1,\\
\exp(2\hat p(\varepsilon)_{\sum_{m=1}^{j-1}k_m+a}+a\varepsilon)=\Delta_j,
\quad a=1,\dots,k_j,\quad j=2,\dots,r.
\end{gathered}
\label{C.29}
\end{equation}
This way coinciding components of $\hat p$ \eqref{C.20} are `pulled apart' to points
successively separated by $\varepsilon/2$. It is clear that with sufficiently small
separation the configuration $\hat p(\varepsilon)$ sits in the chamber
$\{\hat x\in\R^n\mid 0>\hat x_1>\dots>\hat x_n\}$. For such non-degenerate
configurations $\hat p(\varepsilon)$, let us consider the expressions
\begin{equation}
|w_\ell(\hat p(\varepsilon),x)|^2=\sgn(x)(1-e^{-x})
\prod_{\substack{j=1\\(j\neq\ell)}}^n
\frac{e^{2\hat p(\varepsilon)_j+x}-e^{2\hat p(\varepsilon)_\ell}}
{e^{2\hat p(\varepsilon)_j}-e^{2\hat p(\varepsilon)_\ell}},\quad\ell=1,\dots,n,
\label{C.30}
\end{equation}
which give the unique solution of equation \eqref{C.17} at $\hat p(\varepsilon)$.
The limits $\lim_{\varepsilon\to 0}|w_\ell(\hat p(\varepsilon),x)|^2$ exist,
and do not vanish for $\ell=k_1+\dots+k_m$ if $k_m$ belongs to Case 1 or Case 2.
For such $\ell=k_1+\dots+k_m$ we must have
\begin{equation}
\lim_{\varepsilon\to 0}|w_{k_1+\dots+k_m}(\hat p(\varepsilon),x)|^2=Z_m(\Delta,x)>0,
\label{C.31}
\end{equation}
where $Z_m$ is given by \eqref{C.26} in Case 1 and by \eqref{C.27} in Case 2.
It can be also seen that
\begin{equation}
|w_\ell(\hat p(\varepsilon),x)|^2\equiv 0
\quad\Longleftrightarrow\quad
\begin{cases}
\ell\notin\{k_1,k_1+k_2,\dots,k_1+\dots+k_r\}\\
\text{or}\\
\ell=k_1+\dots+k_m\ \text{with}\ k_m\ \text{from Case 3},
\end{cases}
\label{C.32}
\end{equation}
i.e. $|w_\ell(\hat p(\varepsilon),x)|^2$ vanishes identically except for the
components in \eqref{C.31}. Notice that for a small enough $\varepsilon$ some
coordinates of $\hat p(\varepsilon)$ are separated by less than $|x|/2$. Thus,
as it was shown at beginning the proof, we have $|w_\ell(\hat p(\varepsilon),x)|^2<0$
for some index $\ell$, which might depend on $\varepsilon$. Moreover, \eqref{C.32}
implies that the index in question must have the form
$\ell=k_1+\dots+k_{m^\ast}$ for some $m^\ast$ appearing in \eqref{C.31}.
But since the number of indices is finite, a monotonically decreasing sequence
$\{\varepsilon_N\}_{N=1}^\infty$ tending to zero can be chosen such that
$|w_{k_1+\dots+k_{m^\ast}}(\hat p(\varepsilon_N),x)|^2<0$ for all $N$.
This together with \eqref{C.32} gives the contradiction
\begin{equation}
0\geq \lim_{N\to\infty}|w_{k_1+\dots+k_{m^\ast}}(\hat p(\varepsilon_N),x)|^2
=Z_{m^\ast}(\Delta,x)>0
\label{C.33}
\end{equation}
proving that all components of $\hat p$ must be distinct. This concludes the proof.
\end{proof}

The above proof is a straightforward adaptation of the proofs of \cite[Lemma 5.2]{FK11}
and \cite[Theorem 2]{FK12}. We presented it since it could be awkward to extract the
arguments from those lengthy papers, and also our notations and the ranges of our
variables are different.

\section{Proof of Lemma \ref{lem:3.6}}
\label{sec:C.3}

We here prove the following equivalent formulation of Lemma \ref{lem:3.6}.

\begin{lemma}
\label{lem:C.2}
Suppose that $\frac{\pi}{2}\geq q_1>\dots>q_n>0$ and
\begin{equation}
\begin{bmatrix}\eta_L(1)&\0_n\\\0_n&\eta_L(2)\end{bmatrix}
\begin{bmatrix}\cos q&\ri\sin q\\\ri\sin q&\cos q\end{bmatrix}
\begin{bmatrix}\eta_R(1)^{-1} &\0_n\\\0_n&\eta_R(2)^{-1}\end{bmatrix}
=\begin{bmatrix}\cos q&\ri\sin q\\\ri\sin q&\cos q\end{bmatrix}
\label{C.34}
\end{equation}
for $\eta_L,\eta_R\in G_+$. Then
\begin{equation}
\eta_L(1)=\eta_R(2)=m_1,\quad\eta_L(2)=\eta_R(1)=m_2
\label{C.35}
\end{equation}
with some diagonal matrices $m_1,m_2\in\T_n$ having the form
\begin{equation}
m_1=\diag(a,\xi),\quad m_2=\diag(b,\xi),\quad\xi\in\T_{n-1},\
a,b\in\T_1,\quad\det(m_1 m_2)=1.
\label{C.36}
\end{equation}
If in addition $\frac{\pi}{2}>q_1$, then $m_1=m_2$.
\end{lemma}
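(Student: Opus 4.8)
The statement is a uniqueness assertion for the generalized Cartan decomposition $g = g_+ e^{\ri Q(q)} h_+$ of $\SU(2n)$, specialized to the case $g = e^{\ri Q(q)}$. I would begin by writing out the $2\times 2$ block-matrix equation \eqref{C.34} entrywise. Multiplying out the left-hand side and comparing with the right-hand side, the four block-equations read
\begin{align*}
\eta_L(1)\cos q\,\eta_R(1)^{-1} &= \cos q, &
\eta_L(1)\sin q\,\eta_R(2)^{-1} &= \sin q, \\
\eta_L(2)\sin q\,\eta_R(1)^{-1} &= \sin q, &
\eta_L(2)\cos q\,\eta_R(2)^{-1} &= \cos q.
\end{align*}
Since the hypothesis $\frac{\pi}{2}\geq q_1>\dots>q_n>0$ guarantees that $\sin q = \diag(\sin q_1,\dots,\sin q_n)$ is invertible (all $q_j>0$), the second and third equations immediately give $\eta_R(2) = \sin q\,\eta_L(1)^{-1}\sin q$ \dots\ wait, more usefully, they give $\eta_L(1) = \sin q\,\eta_R(2)\,(\sin q)^{-1}$ and $\eta_L(2) = \sin q\,\eta_R(1)\,(\sin q)^{-1}$. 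Feeding these back into the first and fourth equations yields conjugation relations that constrain $\eta_R(1),\eta_R(2)$ to commute with products of diagonal matrices.

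The key step is then the following: combining the block-equations shows that $\eta_R(1)$ and $\eta_R(2)$ each commute with a matrix of the form $(\sin q)^{-1}(\cos q)$ or $(\sin q)(\cos q)^{-1}$ after suitable rearrangement. Because all $q_j$ are \emph{distinct}, the diagonal matrix $\cot q = \diag(\cot q_1,\dots,\cot q_n)$ has pairwise distinct entries \emph{provided} none of them is zero, i.e. provided $q_1 < \pi/2$; a matrix commuting with a diagonal matrix with distinct entries must itself be diagonal. This is exactly where the dichotomy in the statement comes from: if $q_1 = \pi/2$ is allowed, then $\cos q_1 = 0$ and the first diagonal entry of $\cot q$ is degenerate (it vanishes, coinciding with no other entry but forcing a block structure), so $\eta_R(1)$ and $\eta_R(2)$ need only be block-diagonal of the form $\diag(a,\xi)$ with $a\in\T_1$ a free phase on the first slot and $\xi\in\T_{n-1}$ diagonal on the rest. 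The unitarity $\eta_R\in G_+\subset\SU(2n)$ (so $\det$ of the $2n\times 2n$ matrix is $1$) pins down $\det(m_1 m_2)=1$, and the relations $\eta_L(1)=m_1$, $\eta_L(2)=m_2$ follow by back-substitution. I would carry this out by: (i) deriving the four entrywise equations; (ii) using invertibility of $\sin q$ to express the $\eta_L$ blocks in terms of the $\eta_R$ blocks; (iii) substituting back to obtain commutation relations with $\cot q$; (iv) invoking distinctness of the $q_j$ to conclude $\eta_R(i)$ are diagonal (in the strict case $q_1<\pi/2$) or of the stated block form (when $q_1=\pi/2$); (v) reading off the constraints on $a,b,\xi$ and the determinant condition.

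\textbf{Main obstacle.} The genuinely delicate point is step (iv) in the boundary case $q_1 = \pi/2$: I must carefully track which entries of the relevant diagonal matrix collide and show that the \emph{only} extra freedom is an independent phase $a$ versus $b$ on the first slot — not, say, full $\UN(k)$ freedom for some block of size $k>1$. This requires using the \emph{strict} inequalities $q_1 > q_2 > \dots > q_n$ so that precisely \emph{one} index (namely $j=1$) can have $q_j = \pi/2$, keeping the degenerate block one-dimensional. The bookkeeping of $\cos q$ versus $\sin q$ (one is singular exactly where the other isn't, at the boundary) is the part most prone to sign/index errors, but conceptually it is routine linear algebra once the entrywise equations are in hand. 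Everything else — the passage back to Lemma \ref{lem:3.6} via the identification \eqref{3.73}--\eqref{3.74}, and the determinant normalization — is a direct translation.
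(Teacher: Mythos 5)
Your outline is essentially correct and lands on the same conclusion as the paper, but the mechanism you use for the key diagonality step is different, and one piece of your diagnosis is off. The paper does not substitute into the diagonal blocks to get diagonality: it uses only the off-diagonal blocks, $\eta_L(1)=(\sin q)\eta_R(2)(\sin q)^{-1}$ and $\eta_L(2)=(\sin q)\eta_R(1)(\sin q)^{-1}$, together with \emph{unitarity} of $\eta_L(1)$, which yields $\eta_R(2)=(\sin q)^2\eta_R(2)(\sin q)^{-2}$; since $\sin q_1>\dots>\sin q_n>0$ are pairwise distinct, this forces $\eta_R(2)$ (and likewise $\eta_R(1)$) to be diagonal with no case distinction at $q_1=\pi/2$, whence $\eta_L(1)=\eta_R(2)=m_1$ and $\eta_L(2)=\eta_R(1)=m_2$. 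Only afterwards are the diagonal blocks used, entrywise, via $\cos q=m_1(\cos q)m_2^{-1}$: the entries $2,\dots,n$ give the common $\xi$ because $\cos q_k\neq 0$ there, and $\cos q_1\neq 0$ (i.e.\ $q_1<\pi/2$) gives $a=b$. Your route—feeding the off-diagonal relations into the diagonal blocks—produces the intertwining pair $\eta_R(2)\cot q=\cot q\,\eta_R(1)$ and $\eta_R(1)\cot q=\cot q\,\eta_R(2)$, not (as you wrote) that each $\eta_R(i)$ commutes with $\cot q$; combining the two gives commutation with $(\cot q)^2$, whose entries are pairwise distinct \emph{even when} $\cot q_1=0$, so diagonality again follows in all cases. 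Two corrections to your narrative: the entries of $\cot q$ are pairwise distinct whether or not one of them vanishes, so diagonality never degenerates into a nontrivial block structure; the extra freedom at $q_1=\pi/2$ comes solely from the fact that the entrywise relation tying the first diagonal entries of $m_1$ and $m_2$ carries the factor $\cot q_1$ (equivalently $\cos q_1$) and becomes vacuous there, leaving $a$ and $b$ as independent phases. With that understood, your steps (i)--(v), including $\det(m_1m_2)=1$ from $\det\eta_L=1$, go through; the paper's use of unitarity plus $(\sin q)^2$ is marginally cleaner, since $(\sin q)^2$ always has distinct nonzero entries, whereas your variant needs the squaring trick to handle a possibly vanishing $\cot q_1$.
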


\begin{proof}
The block off-diagonal components of the equality \eqref{C.34} give
\begin{equation}
\eta_L(1)=(\sin q)\eta_R(2)(\sin q)^{-1},\quad
\eta_L(2)=(\sin q)\eta_R(1)(\sin q)^{-1}.
\label{C.37}
\end{equation}
Since $\eta_L(1)^{-1}=\eta_L(1)^\dag$, the first of these relations implies
$\eta_R(2)=(\sin q)^2\eta_R(2)(\sin q)^{-2}$. As the entries of $(\sin q)$ are all
different, this entails that $\eta_R(2)$ is diagonal, and consequently we obtain the
relations in \eqref{C.35} with some diagonal matrices $m_1$ and $m_2$. On the other
hand, the block-diagonal components of \eqref{C.34} require that
\begin{equation}
\cos q =\eta_L(1)(\cos q)\eta_R(1)^{-1},\quad
\cos q =\eta_L(2)(\cos q)\eta_R(2)^{-1}.
\label{C.38}
\end{equation}
Since $\cos q_k\neq 0$ for $k=2,\dots,n$, the formula \eqref{C.36} follows.
If an addition $\cos q_1\neq 0$, then we also obtain from \eqref{C.38} that $a=b$,
i.e., $m_1=m_2=m$ with some $m\in\T_n$.
\end{proof}

\section{Auxiliary material on Poisson-Lie symmetry}
\label{sec:C.4}

The statements presented here are direct analogues of well-known results \cite{AMM81,GS82}
about Hamiltonian group actions with zero Poisson bracket on the symmetry group.
They are surely familiar to experts, although we could not find them in a reference.

Let us consider a Poisson-Lie group $G$ with dual group $G^\ast$ and a symplectic
manifold $P$ equipped with a left Poisson action of $G$. Essentially following Lu
\cite{Lu91} (cf.~Remark \ref{rem:C.6}), we say that the $G$-action admits the momentum map
$\psi\colon P\to G^\ast$ if for any $X\in\cG$, the Lie algebra of $G$, and any
$f\in C^\infty(P)$ we have
\begin{equation}
(\cL_{X_P}f)(p)=\langle X,\{f,\psi\}(p)\psi(p)^{-1}\rangle,\quad\forall p\in P,
\label{C.39}
\end{equation}
where $X_P$ is the vector field on $P$ corresponding to $X$, $\langle.,.\rangle$
stands for the canonical pairing between the Lie algebras of $G$ and $G^\ast$, and
the notation pretends that $G^\ast$ is a matrix group. Using the Hamiltonian vector
field $V_f$ defined by $\cL_{V_f}h=-\{f,h\}$ ($\forall h\in C^\infty(P)$), we can
spell out equation \eqref{C.39} equivalently as
\begin{equation}
(\cL_{X_P}f)(p)=-\langle X,\big(D_{\psi(p)}R_{\psi(p)^{-1}}\big)
\big((D_p\psi)(V_f(p))\big)\rangle,\quad\forall p\in P,
\label{C.40}
\end{equation}
where $D_p\psi\colon T_p P\to T_{\psi(p)}G^\ast$ is the derivative, and
$R_{\psi(p)^{-1}}$ denotes the right-translation on $G^\ast$ by $\psi(p)^{-1}$.
Since the vectors of the form $V_f(p)$ span $T_p P$, we obtain the following
characterization of the Lie algebra of the isotropy subgroup $G_p<G$ of $p\in P$.

\begin{lemma}
\label{lem:C.3}
With the above notations, we have
\begin{equation}
\mathrm{Lie}(G_p)=\big[\big(D_{\psi(p)}R_{\psi(p)^{-1}}\big)
\big(\mathrm{Im}(D_p\psi)\big)\big]^\perp.
\label{C.41}
\end{equation}
\end{lemma}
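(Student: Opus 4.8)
The plan is to prove \eqref{C.41} by a direct unwinding of the definitions, the only substantive input being the non-degeneracy of the symplectic form on $P$.

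First I would record the standard description of the Lie algebra of the isotropy group. The isotropy subgroup $G_p\subseteq G$ is the preimage of $\{p\}$ under the continuous orbit map $g\mapsto g\cdot p$, hence it is closed, and by Cartan's theorem it is an embedded Lie subgroup whose Lie algebra is
\begin{equation*}
\mathrm{Lie}(G_p)=\{X\in\cG\mid X_P(p)=0\},
\end{equation*}
because $\exp(tX)$ fixes $p$ for all $t$ exactly when the fundamental vector field $X_P$ vanishes at $p$. Next, a tangent vector $v\in T_pP$ is zero if and only if $(\cL_vf)(p)=0$ for every $f\in C^\infty(P)$; applied to $v=X_P(p)$ this shows that $X\in\mathrm{Lie}(G_p)$ iff $(\cL_{X_P}f)(p)=0$ for all $f$. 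Now I would invoke the defining property of the momentum map in the form \eqref{C.40}, which turns this into the statement that
\begin{equation*}
\big\langle X,\big(D_{\psi(p)}R_{\psi(p)^{-1}}\big)\big((D_p\psi)(V_f(p))\big)\big\rangle=0\quad\text{for every }f\in C^\infty(P).
\end{equation*}

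The key step is the observation that, since $(P,\omega)$ is symplectic, the map $f\mapsto V_f(p)$ is onto $T_pP$: given $v\in T_pP$ one chooses $f$ with $df_p=-\omega_p(v,\cdot)$. Therefore, as $f$ ranges over $C^\infty(P)$, the vectors $(D_p\psi)(V_f(p))$ sweep out all of $\mathrm{Im}(D_p\psi)\subseteq T_{\psi(p)}G^\ast$, and since $D_{\psi(p)}R_{\psi(p)^{-1}}$ is a linear isomorphism of $T_{\psi(p)}G^\ast$ onto $\mathrm{Lie}(G^\ast)\cong\cG^\ast$, the displayed condition says precisely that $X$ lies in the annihilator of the subspace $\big(D_{\psi(p)}R_{\psi(p)^{-1}}\big)\big(\mathrm{Im}(D_p\psi)\big)\subseteq\cG^\ast$ with respect to the canonical (non-degenerate) pairing $\langle\cdot,\cdot\rangle$ between $\cG$ and $\cG^\ast$. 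That is exactly \eqref{C.41}.

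I do not expect a serious obstacle: this is a bookkeeping lemma, the Poisson-Lie analogue of the classical fact that $\mathrm{Lie}(G_p)$ is the annihilator of the image of the differential of the ordinary momentum map. The only points needing a little attention are (i) that the right-trivialization $D_{\psi(p)}R_{\psi(p)^{-1}}$ is indeed an isomorphism onto $\cG^\ast$, so that the symbol $[\,\cdot\,]^\perp$ refers unambiguously to the annihilator inside $\cG^\ast$; and (ii) consistency of conventions when passing between $\{f,\psi\}(p)\psi(p)^{-1}$ in \eqref{C.39} and the right-trivialized derivative $(D_p\psi)(V_f(p))$ in \eqref{C.40}, which affects only an overall sign and hence is immaterial for the annihilator. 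In the application that motivates this appendix the acting group is the compact group $G_+\times G_+$, so no regularity issues arise, and combining this lemma with the free-action statement established for the constraint surface yields the smoothness of the reduced phase space.
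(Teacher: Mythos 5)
Your proof is correct and follows essentially the same route as the paper: the paper also obtains \eqref{C.41} directly from \eqref{C.40} by noting that the Hamiltonian vectors $V_f(p)$ span $T_pP$, so that $X_P(p)=0$ is equivalent to $X$ annihilating $\big(D_{\psi(p)}R_{\psi(p)^{-1}}\big)\big(\mathrm{Im}(D_p\psi)\big)$. Your extra remarks on the trivialization and sign conventions are fine but not needed beyond what the paper already uses.
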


This directly leads to the next statement.

\begin{corollary}
\label{cor:C.4}
An element $\mu\in G^\ast$ is a regular value of the momentum
map $\psi$ if and only if $\mathrm{Lie}(G_p)=\{0\}$ for every
$p\in\psi^{-1}(\mu)=\{p\in P\mid\psi(p)=\mu\}$.
\end{corollary}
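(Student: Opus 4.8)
\textbf{Proof plan for Corollary \ref{cor:C.4}.}
The plan is to deduce the statement directly from Lemma \ref{lem:C.3}, using the standard fact that $\mu$ is a regular value of $\psi\colon P\to G^\ast$ precisely when the derivative $D_p\psi$ is surjective for every $p\in\psi^{-1}(\mu)$. First I would recall that $\dim(G^\ast)=\dim(\cG)=\dim(\mathrm{Lie}(G))$, since $G^\ast$ is the dual Poisson-Lie group. Then, since $R_{\psi(p)^{-1}}$ is a diffeomorphism of $G^\ast$, the linear map $D_{\psi(p)}R_{\psi(p)^{-1}}$ is an isomorphism, so $(D_{\psi(p)}R_{\psi(p)^{-1}})(\mathrm{Im}(D_p\psi))$ is a subspace of $\mathrm{Lie}(G^\ast)\cong\cG^\ast$ of the same dimension as $\mathrm{Im}(D_p\psi)$. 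Consequently $D_p\psi$ is surjective if and only if $(D_{\psi(p)}R_{\psi(p)^{-1}})(\mathrm{Im}(D_p\psi))$ is all of $\cG^\ast$, which by the non-degeneracy of the pairing $\langle\cdot,\cdot\rangle$ between $\cG$ and $\cG^\ast$ is equivalent to its annihilator in $\cG$ being trivial, i.e.
\begin{equation*}
\big[\big(D_{\psi(p)}R_{\psi(p)^{-1}}\big)\big(\mathrm{Im}(D_p\psi)\big)\big]^\perp=\{0\}.
\end{equation*}
By Lemma \ref{lem:C.3} this annihilator is exactly $\mathrm{Lie}(G_p)$.

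Assembling these observations, I would argue as follows. Suppose $\mu$ is a regular value of $\psi$. Then for every $p\in\psi^{-1}(\mu)$ the derivative $D_p\psi$ is surjective, hence by the chain of equivalences above $\mathrm{Lie}(G_p)=\{0\}$. Conversely, suppose $\mathrm{Lie}(G_p)=\{0\}$ for every $p\in\psi^{-1}(\mu)$; then, reading the same equivalences backwards, $D_p\psi$ is surjective at each such $p$, which is precisely the statement that $\mu$ is a regular value. This closes the proof.

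I do not expect a serious obstacle here: the content is entirely linear algebra applied pointwise, and the only non-formal input — the identification $\mathrm{Lie}(G_p)=\big[(D_{\psi(p)}R_{\psi(p)^{-1}})(\mathrm{Im}(D_p\psi))\big]^\perp$ — is already supplied by Lemma \ref{lem:C.3}. The one point to be slightly careful about is to note explicitly that $D_{\psi(p)}R_{\psi(p)^{-1}}$ is invertible (so that it preserves dimensions of subspaces) and that the pairing $\langle\cdot,\cdot\rangle$ is non-degenerate (so that ``annihilator is zero'' is equivalent to ``the subspace is everything''); both are standard features of the Poisson-Lie setup and need only be mentioned, not proved. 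If anything requires a word of justification it is the passage ``$\mathrm{Im}(D_p\psi)=\cG^\ast$ iff its image under the right-translation isomorphism is $\cG^\ast$'', which is immediate once invertibility is noted.
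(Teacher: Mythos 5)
Your argument is correct and is exactly the deduction the paper intends: Corollary \ref{cor:C.4} is stated as an immediate consequence of Lemma \ref{lem:C.3}, obtained by combining surjectivity of $D_p\psi$ at each $p\in\psi^{-1}(\mu)$ with invertibility of the right-translation and non-degeneracy of the pairing, just as you do. No gaps; nothing further is needed.
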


Let us further suppose that $\psi\colon P\to G^\ast$ is $G$-equivariant, with respect
to the appropriate dressing action of $G$ on $G^\ast$. Then we have
\begin{equation}
G_p<G_\mu,\quad\forall p\in\psi^{-1}(\mu).
\label{C.42}
\end{equation}
Here $G_p$ and $G_\mu$ refer to the respective actions of $G$ on $P$ and on $G^\ast$.
Corollary \ref{cor:C.4} and equation \eqref{C.42} together imply the following useful result.

\begin{corollary}
\label{cor:C.5}
If $G_\mu$ acts locally freely on $\psi^{-1}(\mu)$, then $\mu$
is a regular value of the equivariant momentum map $\psi$. Consequently,
$\psi^{-1}(\mu)$ is an embedded submanifold of $P$.
\end{corollary}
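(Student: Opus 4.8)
The plan is to obtain Corollary \ref{cor:C.5} directly from Lemma \ref{lem:C.3}, Corollary \ref{cor:C.4} and the assumed equivariance of $\psi$; essentially all the real work is already contained in those statements, so the task is careful bookkeeping of isotropy groups. First I would note that, because $\psi$ is $G$-equivariant with respect to the dressing action of $G$ on $G^\ast$, the stabilizer $G_\mu < G$ of $\mu$ maps the level set $\psi^{-1}(\mu)$ into itself; hence the $G$-action on $P$ restricts to a genuine action of $G_\mu$ on $\psi^{-1}(\mu)$, and the hypothesis that this restricted action be locally free is meaningful. The elementary but crucial observation is that for $p\in\psi^{-1}(\mu)$ the isotropy group of $p$ under the restricted $G_\mu$-action equals $G_\mu\cap G_p$, which by \eqref{C.42} is just $G_p$ itself. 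Thus local freeness says precisely that $G_p$ is a discrete subgroup of $G$ for every $p\in\psi^{-1}(\mu)$, and a discrete Lie subgroup has trivial Lie algebra, so $\mathrm{Lie}(G_p)=\{0\}$ for all such $p$.

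Next I would invoke Corollary \ref{cor:C.4}: the vanishing of $\mathrm{Lie}(G_p)$ at every point of $\psi^{-1}(\mu)$ is, by definition, the statement that $\mu$ is a regular value of $\psi$. For completeness one recalls the mechanism from Lemma \ref{lem:C.3}: the composite of $D_p\psi$ with the right-translation isomorphism $D_{\psi(p)}R_{\psi(p)^{-1}}\colon T_{\psi(p)}G^\ast\to\cG^\ast$ has image a subspace $W\subseteq\cG^\ast$ with $W^\perp=\mathrm{Lie}(G_p)$; non-degeneracy of the canonical pairing between $\cG$ and $\cG^\ast$ then forces $W=\cG^\ast$, i.e.\ $D_p\psi$ surjective, exactly when $\mathrm{Lie}(G_p)=\{0\}$. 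So "$\mu$ regular value" in the sense of $D_p\psi$ being onto at every $p$ in the fiber is equivalent to the local-freeness hypothesis.

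Finally, since $\mu$ is a regular value of the smooth map $\psi\colon P\to G^\ast$, the standard preimage theorem shows that $\psi^{-1}(\mu)$ is an embedded submanifold of $P$ (of codimension $\dim G^\ast=\dim G$), the case $\psi^{-1}(\mu)=\emptyset$ being vacuous. As for obstacles: there is no substantial difficulty here, since the analytic content has been front-loaded into Lemma \ref{lem:C.3} and Corollary \ref{cor:C.4}. The only point that genuinely requires a moment's care is the identification of the $G_\mu$-isotropy at $p$ with $G_p$, which rests on the inclusion \eqref{C.42} and hence ultimately on the equivariance of $\psi$ under the dressing action; everything after that is the regular value theorem.
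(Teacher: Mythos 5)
Your proposal is correct and follows exactly the paper's route: equivariance via \eqref{C.42} identifies the $G_\mu$-isotropy of any $p\in\psi^{-1}(\mu)$ with $G_p$, local freeness then forces $\mathrm{Lie}(G_p)=\{0\}$, and Corollary \ref{cor:C.4} together with the regular value theorem finishes the argument. The paper compresses this into one line ("Corollary \ref{cor:C.4} and equation \eqref{C.42} together imply..."), so your write-up is simply a more explicit version of the same proof.
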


We finish by a clarifying remark concerning the momentum map.

\begin{remark}
\label{rem:C.6}
Let $B$ be the Poisson tensor on $P$, for which $\{f,h\}=B(df,dh)=\cL_{V_h}f$.
We can write $V_h=B^\sharp(dh)$ with the corresponding bundle map
$B^\sharp\colon T^\ast P\to TP$. Any $X\in\cG=T_eG=(T_{e'}G^\ast)^\ast$ extends to a
unique right-invariant 1-form $\vartheta_X$ on $G^\ast$ ($e\in G$ and $e'\in G^\ast$ are
the unit elements). With this at hand, equation \eqref{C.39} can be reformulated as
\begin{equation}
X_P=B^\sharp(\psi^\ast(\vartheta_X)),
\label{C.43}
\end{equation}
which is a slight variation of the defining equation of the momentum map found
in \cite{Lu91}.
\end{remark}

\section{On the reduced Hamiltonians}
\label{sec:C.5}

In this appendix we prove the claim, made in Section \ref{sec:3.4}, that on
the momentum surface $\Phi_+^{-1}(\mu)$ the Hamiltonians $\cH_j$, $j\in\Z^\ast$
\eqref{3.190} are linear combinations of $h_k$, $k=1,\dots,n$ \eqref{3.194}.
This will be achieved by establishing the form of the integer powers of the matrix
displayed in \eqref{3.193}, which we denote here by $\cL$, i.e.
\begin{equation}
\cL
=\begin{bmatrix}
e^{-2v}\1_n&-e^{-v}\alpha\\
e^{-v}\alpha^\dag&e^{2v}\1_n-\alpha^\dag\alpha
\end{bmatrix}.
\label{C.44}
\end{equation}

\begin{lemma}
\label{lem:C.7}
For any positive integer $j$, the $j$-th power of the $2n\times 2n$ matrix $\cL$
\eqref{C.44} reads
\begin{equation}
\cL^j=\begin{bmatrix}
\cL^j_{11}&\cL^j_{12}\\
\cL^j_{21}&\cL^j_{22}
\end{bmatrix},
\label{C.45}
\end{equation}
where $\cL^j_{11},\cL^j_{12},\cL^j_{21},\cL^j_{22}$ are $n\times n$ blocks of
the form
\begin{equation}
\begin{split}
&\cL^j_{11}=\sum_{m=1}^ja_m^{(j)}(\alpha\alpha^\dag)^{j-m},\quad
\cL^j_{12}=\alpha\sum_{m=1}^jb_m^{(j)}(\alpha^\dag\alpha)^{j-m},\\
&\cL^j_{21}=\alpha^\dag\sum_{m=1}^jc_m^{(j)}(\alpha\alpha^\dag)^{j-m},\quad
\cL^j_{22}=(-1)^j(\alpha^\dag\alpha)^j+\sum_{m=1}^jd_m^{(j)}(\alpha^\dag\alpha)^{j-m},
\end{split}
\label{C.46}
\end{equation}
with the $4j$ coefficients $a_m^{(j)},b_m^{(j)},c_m^{(j)},d_m^{(j)}$, $m=1,\dots,j$
depending only on the parameter $v$.
\end{lemma}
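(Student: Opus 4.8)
The plan is to prove Lemma~\ref{lem:C.7} by induction on $j\geq 1$, and then to harvest from it the assertion (made in Section~\ref{sec:3.4}) that on $\Phi_+^{-1}(\mu)$ every $\cH_j$ is, up to an additive constant, a linear combination of $h_1,\dots,h_n$. For the base case $j=1$ one simply reads the four coefficients off \eqref{C.44}: $a_1^{(1)}=e^{-2v}$, $b_1^{(1)}=-e^{-v}$, $c_1^{(1)}=e^{-v}$, while $\cL_{22}=e^{2v}\1_n-\alpha^\dag\alpha$ already has the prescribed form $(-1)^1(\alpha^\dag\alpha)^1+d_1^{(1)}\1_n$ with $d_1^{(1)}=e^{2v}$.

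For the inductive step I would write $\cL^{j+1}=\cL\,\cL^j$ and expand the four blocks,
\begin{align*}
\cL^{j+1}_{11}&=e^{-2v}\cL^j_{11}-e^{-v}\alpha\,\cL^j_{21}, &
\cL^{j+1}_{12}&=e^{-2v}\cL^j_{12}-e^{-v}\alpha\,\cL^j_{22},\\
\cL^{j+1}_{21}&=e^{-v}\alpha^\dag\cL^j_{11}+(e^{2v}\1_n-\alpha^\dag\alpha)\cL^j_{21}, &
\cL^{j+1}_{22}&=e^{-v}\alpha^\dag\cL^j_{12}+(e^{2v}\1_n-\alpha^\dag\alpha)\cL^j_{22},
\end{align*}
and then push all factors into the required normal form using three elementary identities: $\alpha\,(\alpha^\dag\alpha)^k=(\alpha\alpha^\dag)^k\alpha$, $\alpha^\dag(\alpha\alpha^\dag)^k=(\alpha^\dag\alpha)^k\alpha^\dag$, and $(e^{2v}\1_n-\alpha^\dag\alpha)\alpha^\dag=\alpha^\dag(e^{2v}\1_n-\alpha\alpha^\dag)$. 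With the inductive form of $\cL^j$ inserted, each product that occurs — e.g. $\alpha\,\cL^j_{21}=\alpha\alpha^\dag\sum_m c_m^{(j)}(\alpha\alpha^\dag)^{j-m}$ — collapses to a polynomial in $\alpha\alpha^\dag$ (resp. in $\alpha^\dag\alpha$), possibly left-multiplied by a single $\alpha$ or $\alpha^\dag$, with every exponent at most $j$. Matching coefficients of equal powers then gives explicit recursions for $a_m^{(j+1)},\dots,d_m^{(j+1)}$ whose coefficients are polynomials in $e^{\pm v}$, so the new coefficients again depend only on $v$; this closes the induction.

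Having Lemma~\ref{lem:C.7}, I would deduce the statement about the $\cH_j$ by taking traces. Using $\tr(\cL^j)=\tr(\cL^j_{11})+\tr(\cL^j_{22})$, the cyclicity identity $\tr\big((\alpha\alpha^\dag)^{k}\big)=\tr\big((\alpha^\dag\alpha)^{k}\big)$, and $\tr(\1_n)=n$, equation \eqref{C.46} yields
\begin{equation*}
\tr(\cL^j)=(-1)^j h_j+\sum_{m=1}^{j-1}\big(a_m^{(j)}+d_m^{(j)}\big)\,h_{j-m}+\big(a_j^{(j)}+d_j^{(j)}\big)\,n,
\end{equation*}
so $\cH_j=\frac{1}{2j}\tr(\cL^j)$ is a linear combination of $h_1,\dots,h_j$ plus a constant. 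For $j>n$ I would reduce $h_j$ to $h_1,\dots,h_n$ via Newton's identities (equivalently Cayley--Hamilton applied to $\alpha^\dag\alpha$), and for $j<0$ I would use $\det\cL=1$ (since $\det b_R=1$ and $(\det\bJ)^2=1$) so that $\cL^{-|j|}$ is a polynomial of degree $<2n$ in $\cL$; either way $\cH_j$ on $\Phi_+^{-1}(\mu)$ is a linear combination of $h_1,\dots,h_n$ up to an additive constant.

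I do not expect a genuine obstacle here — the lemma is essentially a bookkeeping exercise. The one point that needs care is the distinguished term $(-1)^j(\alpha^\dag\alpha)^j$ in the $(2,2)$-block: in $(e^{2v}\1_n-\alpha^\dag\alpha)\cL^j_{22}$ it produces $-(-1)^j(\alpha^\dag\alpha)^{j+1}=(-1)^{j+1}(\alpha^\dag\alpha)^{j+1}$, which is exactly the leading term of $\cL^{j+1}_{22}$, while all other contributions have degree $\leq j$ and are absorbed into $\sum_{m=1}^{j+1}d_m^{(j+1)}(\alpha^\dag\alpha)^{j+1-m}$; verifying that the three intertwining identities really do force $\cL^{j+1}$ back into the prescribed shape (rather than, say, producing a stray $\alpha\alpha^\dag$ between two factors of $\alpha$) is the only thing worth double-checking.
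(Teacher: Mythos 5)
Your proof of the lemma is correct and is exactly the paper's argument: induction on $j$ with base case $j=1$ read off \eqref{C.44} and inductive step obtained by expanding the block product $\cL^{j+1}=\cL\,\cL^j$, using the intertwining identities to collapse each block back into the prescribed normal form (the paper's own proof is just a terser version of this). The additional material on traces, Newton's identities and negative powers goes beyond the statement under review but is consistent with the paper's subsequent remarks, which handle $\cL^{-j}$ via the explicit form of $\cL^{-1}$ and conjugation by $\boldsymbol{C}$ rather than Cayley--Hamilton.
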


\begin{proof}
We proceed by induction on $j$. For $j=1$ the statement clearly holds,
and supposing that \eqref{C.45}-\eqref{C.46} is valid for some
fixed integer $j>0$ we simply calculate the $(j+1)$-th power
$\cL^{j+1}=\cL\cL^j$. This proves the statement.
\end{proof}

Our claim of linear expressibility follows at once, that is for any positive integer
$j$ we have
\begin{equation}
\cH_j=(-1)^jh_j
+\sum_{k=1}^{j-1}\frac{k}{j}(a_{j-k}^{(j)}+d_{j-k}^{(j)})h_k
+\frac{n}{2j}(a_j^{(j)}+d_j^{(j)}).
\label{C.47}
\end{equation}
Incidentally, one also obtains a recursion for the
coefficients $a_m^{(j)},b_m^{(j)},c_m^{(j)},d_m^{(j)}$
from the proof of Lemma \ref{lem:C.7}.
If they are required,
this should enable one to establish the values of the constants that occur in \eqref{C.47}.

As for the negative powers of $\cL$, one readily checks that the inverse of
$\cL$ is
\begin{equation}
\cL^{-1}
=\begin{bmatrix}
e^{2v}\1_n-\alpha\alpha^\dag&e^{-v}\alpha\\
-e^{-v}\alpha^\dag&e^{-2v}\1_n
\end{bmatrix},
\label{C.48}
\end{equation}
which has essentially the same form as $\cL$ does, thus the blocks of $\cL^{-j}$
$(j>0)$ can be expressed similarly as in Lemma \ref{lem:C.7}.
In fact, conjugating $\cL^{-1}$ with the $2n\times 2n$ involutory
block-matrix
\begin{equation}
\boldsymbol{C}=\begin{bmatrix}
\0_n&\1_n\\
\1_n&\0_n
\end{bmatrix},
\label{C.49}
\end{equation}
leads to the following formula
\begin{equation}
\boldsymbol{C}\cL^{-1}\boldsymbol{C}
=\begin{bmatrix}
e^{-2v}\1_n&-e^{-v}\alpha^\dag\\
e^{-v}\alpha&e^{2v}\1_n-\alpha\alpha^\dag
\end{bmatrix},
\label{C.50}
\end{equation}
which implies that the blocks of $\cL^{-j}$ are obtained from those of $\cL^j$
by reversing their order and interchanging the role of $\alpha$ and $\alpha^\dag$.
Furthermore, since $\tr((\alpha\alpha^\dag)^k)=\tr((\alpha^\dag\alpha)^k)$ we get
\begin{equation}
\cH_{-j}=-\cH_j\qquad \forall j\in\Z^\ast.
\label{C.51}
\end{equation}

\chapter{Appendices to Chapter \ref{chap:4}}
\label{chap:D}

It is clear that our results on the $2$-parameter family of hyperbolic systems \eqref{H}
open up a plethora of interesting  problems. Besides, based on our numerical calculations,
below we also wish to  discuss some possible generalizations in two further directions. 

\section{Lax matrix with spectral parameter}
\label{sec:D.1}

First, it is  a time-honoured principle that the inclusion of a spectral
parameter into the  Lax matrix of an integrable system can greatly enrich
the analysis by  borrowing techniques from complex geometry. Bearing this
fact in mind, with  the aid of the function
\be
    \Phi(x \, | \, \eta) 
    = e^{x \coth(\eta)} \left( \coth(x) - \coth(\eta) \right)
\label{Phi}
\ee
depending on the complex variables $x$ and $\eta$, over the phase space 
$P$ \eqref{P} we define the matrix valued smooth function 
$\cL = \cL(\lambda, \theta; \mu, \nu \, | \, \eta)$ with entries
\be
    \cL_{k, l}
    = \left(
        \ri \sin(\mu) F_k \bar{F}_l + \ri \sin(\mu - \nu) C_{k, l})
    \right)
    \Phi(\ri \mu + \Lambda_j - \Lambda_k \, | \, \eta)
    \qquad
    (k, l \in \bN_N).
\label{cL_matrix}
\ee
One of the outcomes of our numerical investigations is that for any values 
of $\eta$ the eigenvalues of $\cL$ provide a family of first integrals in 
involution for the van Diejen system \eqref{H}. Thinking of $\eta$ as a 
spectral parameter, let us also observe that, in the limit 
$\R \ni \eta \to \infty$, from $\cL$ we can recover our Lax matrix $L$ 
\eqref{L}; that is, $\cL \to L$. Although the spectral parameter dependent 
matrix $\cL$ does not take values in the Lie group $\UN(n, n)$ \eqref{G}, we 
find it interesting that the constituent function $\Phi$ \eqref{Phi} can 
be seen as a hyperbolic limit of the elliptic Lam\'e function, that plays 
a prominent role in the theory of the elliptic CMS and RS systems (see e.g. 
the papers \cite{Kr80,Ru87} and the monograph 
\cite{BBT03}). Therefore, it is tempting to think that an 
appropriate elliptic deformation of $\cL$ \eqref{L} may lead to a spectral 
parameter dependent Lax matrix of the elliptic van Diejen system with 
coupling parameters $\mu$ and $\nu$.

\section{Lax matrix with three couplings}
\label{sec:D.2}

In Chapter \ref{chap:4} we have studied the van Diejen system \eqref{H} with only two 
independent coupling parameters. Though a construction of a Lax matrix 
for the most general hyperbolic van Diejen system with five independent 
coupling parameters still seems to be out of reach, we can offer a plausible 
conjecture for a Lax matrix with \emph{three} independent coupling constants. 
Simply by generalizing the formulae appearing in the theory of the rational 
$\BC_n$ RSvD systems \cite{Pu13}, with the aid of an additional 
real parameter $\kappa$ let us define the real valued functions $\alpha$ 
and $\beta$ for any $x > 0$ by the formulae
\be
    \alpha(x)
    = \frac{1}{\sqrt{2}} 
        \left(
            1 + \left( 1 + \frac{\sin(\kappa)^2}{\sinh(2 x)^2} \right)^\half 
        \right)^\half
    \quad \text{and} \quad
    \beta(x) 
    = \frac{\ri}{\sqrt{2}}
        \left(
            -1 + \left( 1 + \frac{\sin(\kappa)^2}{\sinh(2 x)^2} \right)^\half
        \right)^\half.
\label{alpha_beta}
\ee
Built upon these functions, let us also introduce the Hermitian $N \times N$ 
matrix
\be
    h(\lambda) 
        = \begin{bmatrix}
        \diag(\alpha(\lambda_1), \ldots, \alpha(\lambda_n)) 
            & \diag(\beta(\lambda_1), \ldots, \beta(\lambda_n)) 
        \\
        -\diag(\beta(\lambda_1), \ldots, \beta(\lambda_n)) 
            & \diag(\alpha(\lambda_1), \ldots, \alpha(\lambda_n))
        \end{bmatrix}.
\label{h}
\ee
One can easily show that $h C h = C$, whence the matrix valued function
\be
    \tilde{L} = h^{-1} L h^{-1}
\label{L_tilde}
\ee
also takes values in the Lie group $\UN(n, n)$ \eqref{G}. Notice that the 
rational limit of matrix $\tilde{L}$ gives back the Lax matrix of the 
rational $\BC_n$ RSvD system, that first appeared in equation (4.51) of 
paper \cite{Pu12}. Moreover, upon setting
\be
    g = \mu,
    \quad
    g_0 = g_1 = \frac{\nu}{2},
    \quad
    g'_0 = g'_1 = \frac{\kappa}{2},
\label{3parameters}
\ee
for van Diejen's main Hamiltonian $H_1$ \eqref{H_vD} we get that
\be
    H_1 
    = 2 \sum_{a=1}^n
        \cosh(\theta_a) u_a 
        \left(
            1 + \frac{\sin(\kappa)^2}{\sinh(2 \lambda_a)^2}
        \right)^\half
    + 2 \sum_{a=1}^n
        \Real 
        \left( 
            z_a \frac{\sinh(\ri \kappa + 2 \lambda_a)}{\sinh(2 \lambda_a)}
        \right),
\label{H_1_mu_nu_kappa}
\ee
with the functions $z_a$ and $u_a$ defined in the equations \eqref{z_a} and 
\eqref{u_a}, respectively. The point is that, in complete analogy with 
\eqref{H1_vs_H}, one can establish the relationship
\be
    H_1 
    + 2 \cos 
        \left( 
            \nu + \kappa + (n - 1) \mu 
        \right) 
        \frac{\sin(n \mu)}{\sin(\mu)}
    = \tr(\tilde{L}).
\label{H1_vs_H_kappa}
\ee
Furthermore, based on numerical calculations for small values of $n$, 
it appears that the eigenvalues of $\tilde{L}$ \eqref{L_tilde} provide 
a commuting family of first integrals for the van Diejen system 
\eqref{H_1_mu_nu_kappa}. To sum up, we have numerous evidences that matrix 
$\tilde{L}$ \eqref{L_tilde} is a Lax matrix for the $3$-parameter family 
of van Diejen systems \eqref{H_1_mu_nu_kappa}, if the pertinent parameters 
are connected by the relationships displayed in \eqref{3parameters}.
As can be seen in \cite{Pu12}, the new parameter $\kappa$ causes
many non-trivial technical difficulties even at the level of the rational
van Diejen system. Part of the difficulties can be traced back to the fact 
that for $\sin(\kappa) \neq 0$ the matrix $\tilde{L}$ \eqref{L_tilde} does 
not belong to the symmetric space $\exp(\mfp)$ \eqref{symm_space}, whence 
the diagonalization of $\tilde{L}$ requires a less direct approach than 
that provided by the canonical form \eqref{mfp_and_mfa_and_K}. We wish to 
come back to these problems in later publications.

\chapter{Appendix to Chapter \ref{chap:5}}
\label{chap:E}

\section{Explicit form of the functions $\Lambda_{j,\ell}^y$}
\label{sec:E.1}

In this appendix, we display the building blocks \eqref{L^y+IV} of the global
elliptic Lax matrix explicitly. Below, $\xi$ varies in the closed simplex $\cA_y$
associated with a type (i) coupling $y$ \eqref{typeI-y} for fixed $p$ and $M$.
The function $\cJ$ was defined in \eqref{J2}. The trigonometric case is obtained
by simply replacing the $\ws$-function \eqref{sigma} everywhere by the sine function.

\smallskip
\noindent\underline{Special components:}
For $1\leq j\leq n-p$
$$
\Lambda_{j,j+p}^y(\xi)=-\sgn(M)\ws(y)
\frac{\big[\prod_{\substack{m=1\\(m\neq p)}}^{n-1}
\ws(\sum_{k=j}^{j+m-1}\xi_k-y)
\ws(\sum_{k=j+p}^{j+p+n-m-1}\xi_k+y)\big]^{\tfrac{1}{2}}}
{\prod_{m=1}^{n-1}\big[\ws(\sum_{k=j}^{j+m-1}\xi_k)
\ws(\sum_{k=j+p}^{j+p+m-1}\xi_k)\big]^{\tfrac{1}{2}}}.
$$
For $n-p<j\leq n$
$$
\Lambda_{j,j+p-n}^{y}(\xi)=\sgn(M)\ws(y)
\frac{\big[\prod_{\substack{m=1\\(m\neq p)}}^{n-1}
\ws(\sum_{k=j}^{j+m-1}\xi_k-y)
\ws(\sum_{k=j+p-n}^{j+p-m-1}\xi_k+y)\big]^{\tfrac{1}{2}}}
{\prod_{m=1}^{n-1}\big[\ws(\sum_{k=j}^{j+m-1}\xi_k)
\ws(\sum_{k=j+p-n}^{j+p-m-1}\xi_k)\big]^{\tfrac{1}{2}}}.
$$
\underline{Diagonal components:}
For $1\leq j=\ell\leq p$
$$
\Lambda_{j,j}^y(\xi)=\big[\cJ(|u_j|^2)\cJ(|u_{j+n-p}|^2)\big]^{\tfrac{1}{2}}
\frac{\big[\prod_{\substack{m=1\\(m\neq p)}}^{n-1}
\ws(\sum_{k=j}^{j+m-1}\xi_k-y)
\ws(\sum_{k=j}^{j+n-m-1}\xi_k+y)\big]^{\tfrac{1}{2}}}
{\prod_{m=1}^{n-1}\ws(\sum_{k=j}^{j+m-1}\xi_k)}.
$$
For $p<j=\ell\leq n$
$$
\Lambda_{j,j}^y(\xi)=\big[\cJ(|u_j|^2)\cJ(|u_{j-p}|^2)\big]^{\tfrac{1}{2}}
\frac{\big[\prod_{\substack{m=1\\(m\neq p)}}^{n-1}
\ws(\sum_{k=j}^{j+m-1}\xi_k-y)
\ws(\sum_{k=j}^{j+n-m-1}\xi_k+y)\big]^{\tfrac{1}{2}}}
{\prod_{m=1}^{n-1}\ws(\sum_{k=j}^{j+m-1}\xi_k)}.
$$
\underline{Components above the diagonal:}
For $1\leq j<\ell\leq p$
$$
\Lambda_{j,\ell}^y(\xi)
=\ws(y)\big[\cJ(|u_j|^2)\cJ(|u_{\ell+n-p}|^2)\big]^{\tfrac{1}{2}}
\frac{\big[\prod_{\substack{m=1\\(m\neq\ell-j,p)}}^{n-1}
\ws(\sum_{k=j}^{j+m-1}\xi_k-y)
\ws(\sum_{k=\ell}^{\ell+n-m-1}\xi_k+y)\big]^{\tfrac{1}{2}}}
{\prod_{m=1}^{n-1}\big[\ws(\sum_{k=j}^{j+m-1}\xi_k)
\ws(\sum_{k=\ell}^{\ell+m-1}\xi_k)\big]^{\tfrac{1}{2}}}.
$$
For $1\leq j<\ell\leq n$ with $p<\ell$ and $\ell\neq j+p$
$$
\Lambda_{j,\ell}^y(\xi)=\frac{\ws(y)\big[\cJ(|u_j|^2)
\cJ(|u_{\ell-p}|^2)\big]^{\tfrac{1}{2}}}{\sgn(j+p-\ell)}
\frac{\big[\prod_{\substack{m=1\\(m\neq\ell-j,p)}}^{n-1}
\ws(\sum_{k=j}^{j+m-1}\xi_k-y)
\ws(\sum_{k=\ell}^{\ell+n-m-1}\xi_k+y)\big]^{\tfrac{1}{2}}}
{\prod_{m=1}^{n-1}\big[\ws(\sum_{k=j}^{j+m-1}\xi_k)
\ws(\sum_{k=\ell}^{\ell+m-1}\xi_k)\big]^{\tfrac{1}{2}}}.
$$
\underline{Components below the diagonal:}
For $1\leq\ell<j\leq n$ with $\ell\leq p$ and $\ell\neq j+p-n$
$$
\Lambda_{j,\ell}^y(\xi)
=\frac{\ws(y)\big[\cJ(|u_j|^2)
\cJ(|u_{\ell+n-p}|^2)\big]^{\tfrac{1}{2}}}{\sgn(\ell+n -j- p )}
\frac{\big[\prod_{\substack{m=1\\(m\neq j-\ell,p)}}^{n-1}
\ws(\sum_{k=j}^{j+n-m-1}\xi_k-y)
\ws(\sum_{k=\ell}^{\ell+m-1}\xi_k+y)\big]^{\tfrac{1}{2}}}
{\prod_{m=1}^{n-1}\big[\ws(\sum_{k=j}^{j+m-1}\xi_k)
\ws(\sum_{k=\ell}^{\ell+m-1}\xi_k)\big]^{\tfrac{1}{2}}}.
$$
For $p<\ell<j\leq n$
$$
\Lambda_{j,\ell}^y(\xi)=\ws(y)\big[\cJ(|u_j|^2)
\cJ(|u_{\ell-p}|^2)\big]^{\tfrac{1}{2}}
\frac{\big[\prod_{\substack{m=1\\(m\neq j-\ell,p)}}^{n-1}
\ws(\sum_{k=j}^{j+n-m-1}\xi_k-y)
\ws(\sum_{k=\ell}^{\ell+m-1}\xi_k+y)\big]^{\tfrac{1}{2}}}
{\prod_{m=1}^{n-1}\big[\ws(\sum_{k=j}^{j+m-1}\xi_k)
\ws(\sum_{k=\ell}^{\ell+m-1}\xi_k)\big]^{\tfrac{1}{2}}}.
$$

\backmatter
\bookmarksetup{startatroot}

\refstepcounter{summ}
\chapter*{Summary}
\label{chap:sum}
\markboth{Summary}{}
\addcontentsline{toc}{chapter}{Summary}

Integrable many-body systems in one spatial dimension form an important class of exactly solvable Hamiltonian systems with their diverse mathematical structure and widespread applicability in physics \cite{Pe90,Po06,Su04}. Among these models, the systems of Calogero-Ruijsenaars type occupy a central position, due to their intimate relation with soliton theory \cite{Ru90}, and since many other interesting models (e.g. Toda lattice) can be obtained from them by taking various limits and analytic continuations \cite{Ru99}. Calogero-Ruijsenaars systems describe interacting particles moving on a line or circle. They come in different types called rational (I), hyperbolic (II), trigonometric (III), and elliptic (IV) depending on the functional form of their Hamiltonian. They exist in nonrelativistic and relativistic form, and both at the classical and quantum level. This already means sixteen different models (captured by Figure \ref{fig:3}). But there are other interesting extensions maintaining integrability, as is exemplified by versions attached to (non-A type) root systems \cite{OP76} or allowing internal degrees of freedom (spin) \cite{GH84}.

A fascinating aspect of Calogero-Ruijsenaars type systems is their duality relations, which first appeared in the famous paper \cite{KKS78} by Kazhdan, Kostant, and Sternberg, and was extensively explored by Ruijsenaars \cite{Ru88}. Here the word \emph{duality} refers to action-angle duality, which involves two Liouville integrable many-body Hamiltonian systems $(M,\omega,H)$ and $(\tilde{M},\tilde{\omega},\tilde{H})$ with Darboux coordinates $q,p$ and $\tilde{q},\tilde{p}$. These are said to be duals of each other if there is a global symplectomorphism $\cR\colon M\to\tilde M$ of the phase spaces, which exchanges the canonical coordinates with the action-angle variables for the Hamiltonians. Practically, this means that $H\circ\cR^{-1}$ depends only on $\tilde{q}$, while $\tilde H\circ\cR$ only on $q$. In more detail, $q$ are the particle positions for $H$ and action variables for $\tilde H$, and similarly, $\tilde{q}$ are the positions of particles modelled by the Hamiltonian $\tilde H$ and action variables for $H$. The idea that dualities can be interpreted in terms of Hamiltonian reduction can be distilled from \cite{KKS78} and was put forward explicitly in several papers in the 1990s, e.g.~\cite{FGNR00,GN95}.

During the past ten years, Feh\'{e}r and collaborators undertook the study of these dualities within the framework of reduction  \cite{FK09,FA10,AF10,FK11,FK12,Fe13,FKl14}. The list of action-angle dualities was enlarged by Pusztai \cite{Pu11,Pu11-2,Pu12,Pu13,Pu15} to dual pairs associated with non-A type root systems. The primary aim of the work presented in this thesis was to further develop these earlier developments. In this effort, our main tool was the method of Hamiltonian reduction, which belongs to the set of standard toolkits applicable to study a great variety of problems ranging from geometric mechanics to field theory and harmonic analysis \cite{MR02,Et07}. It is especially useful in the theory of integrable Hamiltonian systems \cite{BBT03}, where one of the maxims is that one should view the systems of interests as reductions of obviously solvable `free' systems \cite{OP81}. This is often advantageous, for example since the reduction produces global phase spaces on which the reduced free flows are automatically complete, which is an indispensable property of any integrable system.

The thesis is divided into two parts. Part \ref{part:1} of the thesis takes this reduction approach to Calogero-Ruijsenaars type systems. Part \ref{part:2} collects our work that were initiated by reduction results, but were obtained using direct methods.

\section*{Results}

Here we collect the main results of the thesis, going chapter by chapter.
In each title, we cite our related contribution (\ref{chap:publ})

\subsection*{Spectral coordinates of the rational Calogero-Moser system \citepalias{Go16}}

Chapter \ref{chap:1} starts by recalling the pivotal work of Kazhdan, Kostant, Sternberg \cite{KKS78}. Subsection \ref{subsec:1.1.2} is a recap of their result about the complete integrability and action-angle duality for the rational Calogero-Moser system in the context of Hamiltonian reduction. In Section \ref{sec:1.2}, we put these ideas into use, when we identify the canonical variables of \cite{FM16} in terms of the reduction picture (Lemma \ref{lem:1.1}), and prove the relation conjectured in that paper (Theorem \ref{thm:1.2}). We attain Sklyanin's formula as a corollary (Corollary \ref{cor:1.3}).

\subsection*{Action-angle duality for the trigonometric BC${}_{\textit{n}}$ Calogero-Moser-Sutherland system \citepalias{FG14,Go14,GF15}}

Chapter \ref{chap:2} is a study on the trigonometric Sutherland system attached to the $\BC_n$ root system. We start by providing a physical interpretation of the model in Section \ref{sec:2.1}. This is followed by the preparatory Section \ref{sec:2.2}, where the group-theoretic ingredients of reduction are introduced together with the unreduced Abelian Poisson algebras and the symplectic reduction to be performed. In Section \ref{sec:2.3}, we solve the momentum equations, hence obtaining the first model of the reduced phase space (Theorem \ref{thm:2.1}). In a nutshell, this first model of the reduced phase space carries the Sutherland Hamiltonian as the reduction of the free Hamiltonian governing geodesic motion on the Lie group $\UN(2n)$ (Corollary \ref{cor:2.2}). The content of this section, and even its quantum analogue, is fairly standard \cite{FP10}. The heart of the second chapter is Subsection \ref{subsec:2.3.2}, in which we first describe the reduced phase space locally (Theorem \ref{thm:2.3}), then extend this construction to a global model (Theorem \ref{thm:2.16}). This gives rise to the action-angle dual of the trigonometric $\BC_n$ Sutherland system. The main Hamiltonian of the dual system turns out to be a real form of the rational Ruijsenaars-Schneider Hamiltonian with three independent couplings. In Section \ref{sec:2.4}, we apply our duality map to various problems, such as finding the equilibrium configuration of the Sutherland model, proving the maximal superintegrability of the dual model, and connecting the Hamiltonians of the hyperbolic analogue with a family of Hamiltonians found by van Diejen.

\subsection*{A Poisson-Lie deformation of the trigonometric BC${}_{\textit{n}}$ Calogero-Moser-Sutherland system \citepalias{FG15,FG16}}

Chapter \ref{chap:3} generalises certain parts of the previous chapter as it derives a $1$-parameter deformation of the trigonometric $\BC_n$ Sutherland system by applying Hamiltonian reduction to the Heisenberg double of the Poisson-Lie group $\SU(2n)$. Here, we were also motivated by the recent work of Marshall \cite{Ma15}.
We start in Section \ref{sec:3.1} by defining the reduction of interest. In Section \ref{sec:3.2}, we observe that several technical results of \cite{FK11} can be applied for analysing the reduction at hand, and solve the momentum map constraints by taking advantage of this observation (Proposition \ref{prop:3.2}, \ref{prop:3.3}). The main result of this chapter is contained in Section \ref{sec:3.3}, where we characterize the reduced system. In Subsection \ref{subsec:3.3.1}, we prove that the reduced phase space is smooth (Theorem \ref{thm:3.9}). Then in Subsection \ref{subsec:3.3.2} we focus on a dense open submanifold on which the Hamiltonian `lives' (Theorem \ref{thm:3.11}). The demonstration of the Liouville integrability of the reduced free flows is given in Subsection \ref{subsec:3.3.3}. In particular, we prove the integrability of the completion of the pertinent system carried by the full reduced phase space. Our main result in this chapter is Theorem \ref{thm:3.14} (proved in Subsection \ref{subsec:3.3.4}), which establishes a globally valid model of the reduced phase space. In Section \ref{sec:3.4}, we complete (Theorem \ref{thm:3.22}) the recent derivation of the hyperbolic analogue by Marshall \cite{Ma15}.

\subsection*{Lax representation of the hyperbolic BC${}_{\textit{n}}$ Ruijsenaars-Schneider-van Diejen system \citepalias{PG16}}

Chapter \ref{chap:4} contains our construction of a Lax pair for the classical hyperbolic
van Diejen system with two independent coupling parameters. In Section \ref{sec:4.1}, we start with a short overview on some relevant group-theoretic facts and fix notation. In
Section \ref{sec:4.2}, we define our Lax matrix for the van Diejen system, and also investigate its main algebraic properties (Proposition \ref{PROPOSITION_L_in_G}, Lemma \ref{LEMMA_L_in_exp_p}).
These results can be used to show that the dynamics can be solved by a projection
method, which in turn allows us to initiate the study of the scattering properties. This was done by B.G.~Pusztai and is included in Subsections \ref{subsec:4.2.2}, \ref{subsec:4.3.1},
\ref{subsec:4.3.3}, \ref{subsec:4.3.4}, \ref{subsec:4.3.5}. In Subsection \ref{subsec:4.4.1}, we elaborate the link between our special $2$-parameter family of Hamiltonians and the most general $5$-parameter family of hyperbolic van Diejen systems. We affirm the equivalence between van Diejen's
commuting family of Hamiltonians and the coefficients of the characteristic 
polynomial of our Lax matrix (Lemma \ref{LEMMA_linear_relation}). Based on this technical result, we can infer that the eigenvalues of the proposed Lax matrix provide a commuting family of first integrals for the Hamiltonian system (Theorem \ref{THEOREM_commuting_eigenvalues}).

\subsection*{Trigonometric and elliptic Ruijsenaars-Schneider models on the complex projective space \citepalias{FG16-2}}

Chapter \ref{chap:5} is concerned with the compactified Ruijsenaars-Schneider systems with so-called type (i) couplings \cite{FKl14}. We reconstruct the corresponding compactification on $\CP^{n-1}$ using only direct, elementary methods (Proposition \ref{prop:5.1}, Theorem \ref{thm:5.5}). Such construction was not known previously except for special type (i) cases. By doing so, we gain a better understanding of the structure of these trigonometric systems. This part of the chapter fills Sections \ref{sec:5.1} and \ref{sec:5.2}. In Section \ref{sec:5.3}, we explain that the direct method is applicable to obtain type (i) compactifications of the elliptic Ruijsenaars-Schneider system as well (Theorem \ref{thm:5.7}). This new result extends the remarks of Ruijsenaars \cite{Ru90,Ru99}.

\selectlanguage{magyar}

\chapter*{Összefoglal\'o}
\markboth{\"Osszefoglal\'o}{}
\addcontentsline{toc}{chapter}{\"Osszefoglal\'o}

\section*{Bevezetés}

Az egydimenziós integrálható sokrészecske modellek széleskörû fizikai alkalmazásaik és gazdag matematikai hátterük okán az egzaktul megoldható hamiltoni rendszerek fontos osztályát képezik. A Calogero-Ruijsenaars típusú rendszerek központi helyet foglalnak el ezek között. Ez egyrészt a szolitonok elméletével való kapcsolatuknak, másrészt annak köszönhetõ, hogy számos más érdekes modell (pl. a Toda-molekula) származtatható belõlük, határesetekként és analitikus kiterjesztéssel. A Calogero-Ruijsenaars típusú modellek egyenesen vagy körön mozgó kölcsönható részecskéket írnak le. A kölcsönhatás jellege szerint négy típust különböztetünk meg. Ezek a racionális (I), a hiperbolikus (II), a trigonometrikus (III) és az elliptikus (IV) rendszerek. A modelleknek létezik nemrelativisztikus és relativisztikus, valamint klasszikus- és kvantummechanikai változata is. Integrálható általánosításaik közül kiemelendõk a gyökrendszereken alapuló és a belsõ szabadsági fokot is megengedõ (spin) modellek.

\section*{Tudományos el\H{o}zmények}

Tekintsük az $(M,\omega,H)$, $(\tilde{M},\tilde{\omega},\tilde{H})$ Liouville integrálható rendszereket. A két rendszer \emph{hatás-szög dualitásáról} akkor beszélünk, ha létezik a fázisterek között egy $\mathcal{R}\colon M\to\tilde{M}$ szimplektomorf leképezés, amely az $\tilde{M}$ tér valamely $(\tilde{q},\tilde{p})$ kanonikus koordinátáit a $H$ Hamilton-függvényhez tartozó rendszer hatás-szög változóiba viszi át, és fordítva, az $M$ térnek léteznek $(q,p)$ kanonikus koordinátái, amelyek a $\tilde{H}$ Hamilton-függvény rendszerének hatás-szög változói lesznek. Ekkor $\mathcal{R}$ az ún. \emph{hatás-szög leképezés}. Ezáltal $H\circ\mathcal{R}^{-1}$ kizárólag $\tilde{q}$-tól, és $\tilde H \circ \mathcal{R}$ csakis $q$-tól függ. Mindemellett az általunk vizsgált rendszerek esetén a $H$ Hamilton-függvény $(q,p)$ koordinátás alakja kölcsönható részecskék egy olyan modelljét adja, amelyben $q$ a részecske-koordináták szerepét játssza, és hasonlóan, a $\tilde{H}$ függvény a $(\tilde{q},\tilde{p})$ változókkal kifejezve $\tilde{q}$ pozíciókba elhelyezett részecskék kölcsönhatását írja le. Ezen különleges kapcsolat jelentõségét mutatja, hogy a kvantummechanikai tárgyalásban is megjelenik mint a fontos speciális függvényekkel kifejezett hullámfüggvények bispektrális tulajdonsága \cite{DG86,Ru90}.

Dualitásban álló sokrészecske rendszereket vizsgált Ruijsenaars \cite{Ru88,Ru90-2,Ru95,Ru99}, aki közvetlen úton konstruált hatás-szög leképezéseket az A$_{n-1}$ gyökrendszerhez asszociált Calogero-Ruijsenaars és Toda típusú integrálható rendszerekhez. Ezen dualitási relációk redukciós ér\-tel\-me\-zésérõl számos cikk született az 1990-es években, pl. \cite{FGNR00,GN95}. Az ezekben felmerült ötleteket továbbfejlesztve és szisztematikusan kidolgozva az elmúlt évtizedben Fehér és munkatársai ilyen kapcsolatokat vezettek le redukciós keretek között \cite{FK09,FA10,AF10,FK11,FK12,Fe13,FKl14} (ld. \emph{Alkalmazott módszerek}). Az a természetes várakozás, hogy hasonló dualitások fennállnak másfajta gyökrendszerek esetén is Pusztai munkájában nyert igazolást \cite{Pu11,Pu11-2,Pu12,Pu13,Pu15}.

Kutatásunkban során olyan új eredmények elérését t\H{u}ztük ki célul (ld. \ref{chap:publ}), amelyek ezen korábbi fejleményekhez kapcsolódnak.

\section*{Célkit\H{u}zések}

A disszertációban bemutatott doktori munka céljai az alábbi pontokba foglalhatók össze:

\begin{enumerate}
\item[I.] A racionális A$_{n-1}$ Calogero-Moser modell hatás-szög változóira vonatkozó Sklyanin-formula bizonyítása redukciós módszerrel.
\item[II.] A trigonometrikus BC$_n$ Sutherland rendszer hatás-szög duálisának részletes kidolgozása hamiltoni redukciós keretek között.
\item[III.] Az elõzõ pont eredményeit és Marshall egy korábbi munkáját általánosítva a trigonometrikus BC$_n$ Sutherland modell egy $1$-paraméteres integrálható deformációjának megalkotása.
\item[IV.] A Lax formalizmus kiterjesztése az egynél több csatolási állandót tartalmazó általánosított hiperbolikus Ruijsenaars-Schneider rendszerekre.
\item[V.] Új elliptikus A$_{n-1}$ Ruijsenaars-Schneider modellek konstruálása az $n$-dimenziós komplex projektív téren.
\end{enumerate}

\noindent
A fenti kutatási elképzeléseket sikeresen valósítottuk meg, sõt további, a kezdeti várakozásokon túlmutató elõrelépéseket is tettünk.

\section*{Alkalmazott módszerek}

A fenti célok eléréséhez az úgynevezett \emph{hamiltoni redukció} módszerét,
valamint standard matematikai eszközöket alkalmaztunk.

Dióhéjban összefoglalva, a redukciós eljárás során a levezetendõ rendszerek
részecskéinek bonyolult mozgását egy magasabb dimenziós térben mozgó, nagyfokú
szimmetriával bíró szabad részecske `ügyesen' választott vetületeként nyerjük.

Pontosabban fogalmazva, a redukció kiindulásaként egy csoportelméleti eredetû
fázisteret választunk. Ez lehet egy $X$ mátrix Lie-csoport vagy Lie-algebra
$P=T^\ast X$ koérintõnyalábja. A $P$ nyalábon természetes módon megadható $\Omega$
szimplektikus forma és egy $\mathcal{H}\colon P\to\mathbb{R}$ Hamilton-függvény megválasztása
egy $(P,\Omega,\mathcal{H})$ hamiltoni rendszert eredményez. Ha a $\mathcal{H}$ Hamilton-függvény kellõen egyszerû alakot ölt, akkor a mozgásegyenletek explicit módon megoldhatók, sõt akár $\{\mathcal{H}_j\}$ Poisson kommutáló elsõ integrálok egy egész serege felírható. Ekkor egy megfelelõen választott $G$ csoport hatása az $X$ (és ezáltal a $P$) téren, amelyre nézve a $\mathcal{H}_j$ függvények invariánsak, lehetõvé teszi az asszociált $\Phi\colon P\to\mathfrak{g}^\ast$ momentum leképezés felírását. A $\Phi$ momentum leképezés értékének $\mu\in\mathfrak{g}^\ast$ elemre történõ rögzítése egy $\Phi^{-1}(\mu)$ szintfelületet jelöl ki a $P$ fázistérben. Ez a kényszerfelület a momentum érték $G_\mu\subset G$ izotrópia-részcsoportjának pályáiból áll. Ezen pályák alkotják a $(P_\mathrm{red},\omega_\mathrm{red},H)$ redukált fázistér pontjait. A fenti konstrukciónak köszönhetõen az involúcióban álló $\{\mathcal{H}_j\}$ mozgásállandók hamiltoni folyamai  invariánsan hagyják a momentum szintfelületet és a $\{\mathcal{H}_j\}$ függvények állandók $G_\mu$ pályái mentén. Következésképpen értelmezhetõk a függvények $H_j\colon P_\mathrm{red}\to\mathbb{R}$ redukciói, amelyek Poisson zárójele továbbra eltûnik, és ily módon a származtatott $(P_\mathrm{red},\omega_\mathrm{red},H)$ hamiltoni rendszer Liouville értelemben integrálható. A gyakorlatban jellemzõen a redukált fázisteret a $G_\mu$ csoport pályáinak egy $S$ sima szelésével azonosítjuk. Ilyen szelést a $\Phi=\mu$ egyenlet megoldásával nyerünk. Két így kapott $S,\tilde{S}$ modell lehet egymás hatás-szög duálisa.

\section*{Új tudományos eredmények}

Az alábbiakban röviden ismertetem a disszertációban elért tudományos eredményeket.
A kapcsolódó publikációkat a disszertáció végén található \ref{chap:publ} lista gy\H{u}jti össze.
A közleményekre az azoknak megfelel\H{o} tézispontok címében, illetve szükség esetén a szövegben hivatkozok.

\subsection*{I. A racionális Calogero-Moser rendszer spektrális koordinátái \citepalias{Go16}}

\begin{itemize}
\item[+] A hamiltoni redukció módszerének alkalmazásával azonosítottam a racionális Calogero-Moser rendszer Falqui és Mencattini \cite{FM16} által felírt kanonikus koordinátáit.
\item[+] Bizonyítottam egy Falqui és Mencattini \cite{FM16} által megsejtett össze\-függést.
\item[+] Igazoltam Sklyanin \cite{Sk13} formuláját, amely spektrális kanonikus koordinátákat szolgáltat a racionális Calogero-Moser rendszerhez.
\end{itemize}

\subsection*{II. A trigonometrikus BC${}_{\textit{n}}$ Sutherland rendszer hatás-szög duálisa \citepalias{FG14,Go14,GF15}}

\begin{itemize}
\item[+] Hamiltoni redukció útján származtattam a trigonometrikus BC$_n$ Sutherland modell hatás-szög duálisát, amelyben a racionális BC$_n$ Ruijsenaars-Schneider rendszer egy valós formáját ismertem fel.
\item[+] Bizonyítottam, hogy a duális modell lokális leírásában használt változók kanonikus koordináta-rendszert alkotnak \citepalias{Go14}.
\item[+] Felírtam ezen duális rendszer Lax-mátrixát explicit alakban.
\item[+] Megadtam a duális modell fázisterének, valamint Lax-mátrixának globális leírását \citepalias{FG14}.
\item[+] Jellemeztem a trigonometrikus BC$_n$ Sutherland modell egyensúlyi konfigurációit a hatás-szög dualitás segítségével.
\item[+] További alkalmazásként igazoltam, hogy a duális rendszer $(n-1)$ extra mozgás\-állandóval rendelkezik, következésképp maximálisan szuperintegrálható.
\item[+] Végül bizonyítottam, hogy a hiperbolikus BC$_n$ Sutherland modell Pusztai \cite{Pu12} által konstruált involúcióban álló mozgásállandói és a van Diejen \cite{vD94} által talált Poisson kommutáló elsõ integrálok ekvivalensek, azaz ugyanazt az abeli algebrát generálják \citepalias{GF15}. A két említett függvénycsalád közötti lineáris kapcsolatot explicit formában felírtam és igazoltam.
\end{itemize}

\subsection*{III. A trigonometrikus BC${}_{\textit{n}}$ Sutherland rendszer\\Poisson-Lie deformációja \citepalias{FG15,FG16}}

\begin{itemize}
\item[+] Marshall korábbi, hiperbolikus esettel foglalkozó munkáját \cite{Ma15} általánosítva levezettem a trigonometrikus BC$_n$ Sutherland rendszer egy $1$-paraméteres integrálható deformációját a $2n\times 2n$-es egységnyi determinánsú unitér mátrixok alkotta Poisson-Lie csoport Heisenberg duplájának általánosított Marsden-Weinstein redukciójából.
\item[+] Megoldottam a momentum kényszer-egyenletet, visszavezetve azt egy Fehér és Klim\v c\'\i k \cite{FK11} által korábban már részletesen vizsgált egyenletre.
\item[+] A fejezet fõ eredményeként globálisan jellemeztem a redukált rendszert \citepalias{FG15}. Igazoltam, hogy a levezetett rendszer Liouville integrálható.
\item[+] Továbbá megmutattam, hogy a modell miként kapható meg van Diejen \cite{vD94} öt csatolási állandót tartalmazó modelljébõl. Ezáltal a levezetett modellt sikerült beilleszteni a Calogero-Ruijsenaars típusú integrálható rendszerek közé.
\item[+] Végül teljessé tettem a hiperbolikus verzió Marshall \cite{Ma15} által adott származtatását \citepalias{FG16}.
\end{itemize}

\subsection*{IV. A hiperbolikus BC${}_{\textit{n}}$ Ruijsenaars-Schneider-van Diejen rendszer Lax reprezentációja \citepalias{PG16}}

\begin{itemize}
\item[+] Igazoltam, hogy a Lax mátrix eleme az $(n,n)$-szignatúrájú `belsõ szorzással' definiált pszeudounitér mátrixok Lie-csoportjának.
\item[+] Pusztai korábbi eredményét \cite{Pu11-2} felhasználva bizonyítottam, hogy a Lax mátrix pozitív definit.
\item[+] Megmutattam a Pusztai által levezetett szóráselméleti eredmények segítségével, hogy a Lax mátrixból származó spektrális invariánsok és van Diejen \cite{vD94} öt paramétert tartalmazó Poisson kommutáló függvénycsaládjának megfelelõ specializációja ekvivalensek.
\item[+] Ennek segítségével bebizonyítottam, hogy a Lax mátrix független sajátértékei Poisson kommutáló mozgásállandók teljes rendszerét alkotják.
\end{itemize}

\subsection*{V. Trigonometrikus és elliptikus Ruijsenaars-Schneider modellek a komplex projektív téren \citepalias{FG16-2}}

\begin{itemize}
\item[+] Megvizsgáltam a Fehér és Kluck \cite{FKl14} által korábban felfedezett ún. \emph{egyes típusú} csatolási állandóval jellemzett kompaktifikált Ruijsenaars-Schneider modelleket, és közvetlen, elemi úton megmutattam, hogy a trigonometrikus esetben ezen rendszerek miként ágyazhatók be a megfelelõ komplex projektív térbe.
\item[+] A trigonometrikus esetben alkalmazott eljárást általánosítottam az elliptikus potenciálok esetére is, ezáltal új elliptikus Ruijsenaars-Schneider modelleket konstruáltam a komplex projektív téren. Ezzel kiterjesztettem Ruijsenaars korábbi eredményeit \cite{Ru90,Ru99}.
\end{itemize}

\selectlanguage{english}

\renewcommand\bibname{Publications}
\refstepcounter{publ}

\renewcommand\bibname{Bibliography}
\refstepcounter{bibl}

\end{document}